\documentclass[11pt]{article}
\usepackage{epsf}
\usepackage{amsmath}
\usepackage{amssymb}
\usepackage{amsfonts}
\usepackage{graphicx}
\usepackage{rotating}
\usepackage{bm}
\usepackage[usenames]{color}
\usepackage[scale={0.8,0.85}]{geometry}
\usepackage{framed}
\definecolor{shadecolor}{rgb}{0.9,0.9,0.9}
\definecolor{darkblue}{rgb}{0.0,0.0,0.55}
\definecolor{darkred}{rgb}{0.55,0.0,0.0}
\definecolor{electricgreen}{rgb}{0.0, 1.0,0.0}
\definecolor{forestgreen}{rgb}{0.13, 0.55,0.13}
\definecolor{crimsonglory}{rgb}{0.75,0.0,0.2}

\setlength{\evensidemargin}{0.2in}
\setlength{\oddsidemargin}{0.2in} 
\setlength{\marginparwidth}{0.6in} 
\setlength{\textwidth}{6.25in}
\setlength{\textheight}{8.5in} 
\setlength{\topmargin}{0.25in} 
\setlength{\headheight}{0.25in}
\setlength{\headsep}{0in} 
\setlength{\parskip}{\medskipamount} 
\setlength{\itemsep}{-\parsep} 
\setlength\footnotesep{0.1cm}

\newtheorem{theorem}{Theorem}[section]
\newtheorem{proposition}[theorem]{Proposition}
\newtheorem{corollary}[theorem]{Corollary}
\newtheorem{lemma}[theorem]{Lemma}
\newtheorem{claim}[theorem]{Claim}

\newtheorem{definition}{Definition}[section]
\newtheorem{observation}{Observation}[section]
\newtheorem{openproblem}{Open Problem}[section]
\newtheorem{conjecture}{Conjecture}[section]

\newcommand{\qedsymb}{\hfill{\rule{2mm}{2mm}}}
\newenvironment{proof}{\begin{trivlist}
\item[\hspace{\labelsep}{\bf\noindent Proof: }]
}{\qedsymb\end{trivlist}}

%
\newcommand{\remove}[1]{}
%

%
%


%




\begin{document}

\title{{\bf The Complexity of Computational Problems 
            about Nash Equilibria
            in Symmetric Win-Lose Games}\thanks{This work was partially supported by 
                                                                          the Italian Ministry of Education, Universities and Research
                                                                          (PRIN 2010--2011 research project ARS TechnoMedia: 
                                                                           ``Algorithmics for Social Technological Networks"),
                                                                          and by research funds
                                                at the University of Cyprus.
                                                This work extends and improves results
                                                that appeared in preliminary form
                                                in the authors' paper
                                                "The Complexity of Decision Problems 
                                                 about Nash Equilibria
                                                 in Win-Lose Games,"
                                                {\it Proceedings of the
                                                     5th International Symposium
                                                     on Algorithmic Game Theory,}
                                                pp.\ 37--48,
                                                Vol.~7615,     
                                                Lecture Notes in Computer Science,
                                                Springer-Verlag,
                                                October 2012. 
                                               }
      }

\author{{\sl Vittorio Bil\`{o}}\thanks{Department of Mathematics and Physics 
                                       "Ennio De Giorgi", 
                                       University of Salento,
                                       73100 Lecce, Italy.
                                       Part of the work of this author
                                       was done while visiting
                                       the Department of Computer Science,
                                       University of Cyprus.
                                       Email {\tt vittorio.bilo@unisalento.it}
                                      }
        \and                              
        {\sl Marios Mavronicolas}\thanks{Department of Computer Science,
                                         University of Cyprus,
                                         Nicosia CY-1678,
                                         Cyprus.
                                         Email
                                         {\tt mavronic@cs.ucy.ac.cy}
                                        }
     }

\date{{\sc (\today)}}

\maketitle
\pagenumbering{arabic}
\thispagestyle{empty}

\begin{abstract}
We revisit the complexity of deciding,
given a {\it bimatrix game,} 
whether it has a {\it Nash equilibrium}
with certain natural properties; 
such decision problems were early known 
to be ${\mathcal{NP}}$-hard~\cite{GZ89}. 
We show that
${\mathcal{NP}}$-hardness
still holds under two significant restrictions
in simultaneity:
the game is
{\it win-lose}
(that is,
all {\it utilities} 
are $0$ or $1$)
and {\it symmetric}.
To address the former restriction,
we design
win-lose {\it gadgets}
and a win-lose reduction;
to accomodate the latter restriction,
we employ and analyze 
the classical {\it ${\mathsf{GHR}}$-symmetrization}~\cite{GHR63}
in the win-lose setting.
Thus,
{\it symmetric win-lose bimatrix games} 
are as complex as general bimatrix games 
with respect to such decision problems.

As a byproduct of our techniques,
we derive hardness results
for search, counting and parity
problems
about Nash equilibria
in symmetric win-lose bimatrix games.
\end{abstract}

\newpage

\section{Introduction}

\subsection{Framework and Motivation}

\subsubsection{Nash Equilibria, Win-Lose Bimatrix Games and Symmetric Games}

\noindent
Among the most fundamental computational problems 
in {\it Algorithmic Game Theory} 
are those concerning 
the {\it Nash equilibria}~\cite{N50,N51} 
of a {\it game,} 
where no 
{\it player} could unilaterally deviate 
to increase her expected {\it utility}.
Such 
problems
have been studied extensively~\cite{AKV05,BM11,BDL08,CDT09,CTV07,CLR06,CS05,CS08,DGP09,GZ89,MT10,MT10a}
for 2-player games with rational utilities
given by a bimatrix. 
There are two prominent special cases 
of such general 
{\it bimatrix games}:
{\it win-lose} and {\it symmetric}.
\begin{itemize}

\item
Utilities are taken from $\{ 0, 1 \}$
in {\it win-lose bimatrix games,}
originally put forward 
in~\cite{CS05}.

\item
In a {\it symmetric} game~\cite{N50,N51},
players have identical strategy sets
and the utility of a player
is determined by the multiset of strategies
chosen by her and the other players,
with no discrimination.
By a classical result of Nash,
every symmetric game has a {\it symmetric} Nash equilibrium,
where all players are playing
the same {\it mixed strategy}~\cite{N50,N51}.
A {\it symmetrization}
transforms a given 
bimatrix game
into a symmetric one;
the target
is that
a Nash equilibrium
for the original bimatrix game
can be reconstructed efficiently
from a (symmetric) Nash equilibrium
for the {\it symmetric bimatrix game}
(cf.~\cite{BvN50,GKT50,GHR63}).

\end{itemize} 

\subsubsection{The Search Problem}

\noindent
The fundamental theorem of Nash~\cite{N50,N51} 
that a Nash equilibrium is guaranteed to exist 
for a finite game
makes \textcolor{crimsonglory}{its} search problem {\it total};
hence,
the search problem is not
${\mathcal{NP}}$-hard 
unless ${\mathcal{NP}}
        =
        \mbox{co-}
        {\mathcal{NP}}$~\cite[Theorem 2.1]{MP91}.
In a series of breakthrough papers
culminating in~\cite{CDT09,DGP09},
it was established that,
even for bimatrix games,
the search problem is complete
for ${\mathcal{PPAD}}$~\cite{P94}, 
a complexity class capturing the computation of discrete fixed points;
under suitable formulations,
the problem is ${\mathcal{FIXP}}$-complete
for games with more than two players~\cite[Theorem 18]{EY10}.

Abbott, Kane and Valiant~\cite{AKV05} 
presented 
a polynomial time transformation
of a general bimatrix game
into a win-lose bimatrix game,
accompanied with a polynomial time map 
returning a Nash equilibrium
for the general game 
when given one 
for the win-lose game. 
So
the search problem
is ${\mathcal{PPAD}}$-hard
for win-lose bimatrix games,
which suggests that hardness
is not due to the rational utilities.
The search problem for a symmetric Nash equilibrium
in a symmetric bimatrix game
is also ${\mathcal{PPAD}}$-hard, 
thanks to the two symmetrizations from 1950
due to Brown and von Neumann~\cite{BvN50}
and due to Gale, Kuhn and Tucker~\cite{GKT50},
respectively;  
the complexity
of the search problem
for {\em any} Nash equilibrium
in a symmetric bimatrix game
has been mentioned as an open problem
by Papadimitriou~\cite[Section 2.3.1]{P07}.

\subsubsection{Decision, Counting and Parity Problems}
\label{their decision problems}

\noindent
Decision problems
arise naturally 
by twisting the search problem
in simple ways
that deprive it from its existence guarantee
for a Nash equilibrium.
Here is a non-exhaustive list of
such natural decision problems 
(
see Section~\ref{framework decision problems} 
for formal statements): 
given a game, does it have:
\begin{itemize}

\item[{\it (i)}]
At least $k+1$ Nash equilibria for some fixed integer $k \geq 1$?
(The special case with $k=1$ was
introduced in~\cite{GZ89};
the cases with $k > 1$
are considered here for the first time.)

\item[{\it (ii)}]
A Nash equilibrium where each player 
has expected utility at least 
(resp., at most)
a given number?~\cite{GZ89}


\item[{\it (iii)}]
A Nash equilibrium where
the total expected utility of players 
is at least 
(resp., at most)
a given number?~\cite{CS08}


\item[{\it (iv)}]
A Nash equilibrium where the players' supports
contain 
(resp., are contained in)
a given set of strategies?~\cite{GZ89}

\item[{\it (v)}]
A Nash equilibrium where the players' supports
have sizes greater (resp., smaller)
than a given integer?~\cite{GZ89}

\item[{\it (vi)}]
A Nash equilibrium where
each player
is using uniform probabilities
on her support?~\cite{BDL08}

\item[{\it (vii)}]
A Nash equilibrium where
some player
is using non-uniform probabilities
on her support?

\item[{\it (viii)}]
A symmetric Nash equilibrium?

\item[{\it (ix)}]
A {\it non-symmetric} Nash equilibrium? (cf.~\cite[Section 2.3.1]{P07})

\end{itemize}
Some of these
problems 
were first shown ${\mathcal{NP}}$-hard 
for symmetric bimatrix games
in the seminal paper
by Gilboa and Zemel~\cite[Section 1.2]{GZ89}.
Later
Conitzer and Sandholm~\cite[Section 3]{CS08} 
gave a unifying polynomial time reduction
from {\sf CNF SAT}
to show in one shot
${\mathcal{NP}}$-hardness results,
encompassing those from~\cite{GZ89},
for symmetric bimatrix games;
their reduction
yields games
with Nash equilibria mirroring
satisfiability parsimoniously. 
McLennan and Tourky~\cite[Theorem 1]{MT10}
refined some of these 
${\mathcal{NP}}$-hardness results
for {\it imitation bimatrix games,}
where the utility of the {\it imitator}
is $1$
if and only if she chooses
the same strategy
as the {\it mover}~\cite{MT10a}.
The problems {\it (vii)}
and {\it (viii)}
are considered here
for the first time;
{\it (viii)} is trivial for symmetric games.

\noindent
The present authors
introduced the decision problem
about the equivalence of the sets of Nash equilibria
of two given games,
which they proved $\mbox{co-${\mathcal{NP}}$}$-hard~\cite[Theorem 1]{BM11}:
\begin{itemize}

\item[{\it (x)}]
Do the two given games
have the same sets of Nash equilibria?

\end{itemize}
\noindent
An additional decision problem,
which is trivial 
for bimatrix games,
becomes ${\mathcal{NP}}$-hard 
already for 3-player games~\cite[Theorem 2]{BM11}:
\begin{itemize}

\item[{\it (xi)}]
A Nash equilibrium where
all probabilities are rational?~\cite{BM11}

\end{itemize}
\noindent
It is natural to ask whether
these problems remain ${\cal NP}$-hard
when restricted to win-lose games. 
To the best of our knowledge,
this important question has been  addressed
only in~\cite{BDL08,CS05}.
It was shown
in~\cite[Theorem 1]{CS05}
that the decision problem {\it (i)} with $k = 1$
is ${\mathcal{NP}}$-hard
for win-lose bimatrix games;
there so is 
a variant of {\it (ii)}  
for imitation win-lose bimatrix games.
The decision problem {\it (vi)}
was shown ${\mathcal{NP}}$-hard
for imitation win-lose bimatrix games 
in~\cite[Theorem 1]{BDL08}.

The {\it counting problem}
and the {\it parity problem}
for Nash equilibria
ask for the number 
and for the parity of the number of Nash equilibria,
respectively,
for a given game.
To each decision problem
there corresponds
a counting problem
and a parity problem,
asking for the number 
and the parity of the number of Nash equilibria
with the corresponding property,
respectively.
By the parsimonious property of the reduction in~\cite{CS08},
these counting (resp., parity) problems
are $\# {\mathcal{P}}$-hard
(resp., $\oplus {\mathcal{P}}$-hard)
for general bimatrix games;
the $\# {\mathcal{P}}$-hardness
(resp., $\oplus {\mathcal{P}}$-hardness)
is inherited from the
$\# {\mathcal{P}}$-hardness~\cite{V79}
(resp.,
$\oplus {\mathcal{P}}$-hardness~\cite{PZ83})
of computing the number
(resp., the parity of the number)
of satisfying assignments
for a {\sf CNF SAT} formula.

\subsubsection{State-of-the-Art and Statement of Results}

\noindent
The polynomial time transformation
of a general bimatrix game
into a win-lose bimatrix game from~\cite{AKV05} 
gave no guarantee 
on the preservation of properties
of Nash equilibria;
so, 
it had no implication on
the 
complexity
of deciding the properties 
for win-lose bimatrix games. 
Thus, 
the composition of
a polynomial time reduction 
from an ${\mathcal{NP}}$-hard problem 
to a decision problem about Nash equilibria 
for general bimatrix games 
(cf.~\cite{CS08,GZ89})
with the polynomial time transformation from~\cite{AKV05} 
does not yield 
a polynomial time reduction 
from the ${\mathcal{NP}}$-hard problem 
to the decision problems
for win-lose bimatrix games, 
and their complexity 
remained open.

{\em In this work, 
         we settle the complexity of 
         the decision problems 
         about Nash equilibria {\em \cite{BM11,BDL08,CS05,CS08,GZ89,MT10,MT10a,MVY15}}
         for symmetric win-lose bimatrix games.}
Our main result is 
that these 
problems 
are ${\cal NP}$-hard 
for symmetric win-lose bimatrix games
(Theorems~\ref{mainextended} and~\ref{maintheorem symmetric}). 
Further,
deciding the existence
of a symmetric Nash equilibrium
is ${\mathcal{NP}}$-hard
for win-lose bimatrix games
(Theorem~\ref{second last minute theorem}),
and 
of a rational Nash equilibrium~\cite{BM11} 
is ${\mathcal{NP}}$-hard
for win-lose 3-player games
(Theorem~\ref{last minute theorem}).
As a byproduct,
we derive,
for symmetric win-lose bimatrix games,
the ${\mathcal{PPAD}}$-hardness
of the search problem
(Theorem~\ref{new result}),
the $\# {\mathcal{P}}$-hardness
of the counting problem
and
the $\oplus {\mathcal{P}}$-hardness
of the parity problem
(Theorem~\ref{sharp pi complete symmetric win-lose}),
and the $\# {\mathcal{P}}$-hardness
and the $\oplus {\mathcal{P}}$-hardness
of the counting and the parity version,
respectively,
of each decision problem
(Theorem~\ref{mainextended}).

\subsection{The Techniques}
\label{tools}

\noindent
We combine three powerful techniques:
{\sf (1)}
{\it Gadget games}
(Section~\ref{gadget games}).
{\sf (2)}
A {\it win-lose reduction}
(Section~\ref{winlose reduction}).
{\sf (3)}
The classical {\it ${\mathsf{GHR}}$-symmetrization}
~\cite{GHR63},
with a new analysis tailored to win-lose bimatrix games
(Section~\ref{winlose gkt symmetrization}).
All three techniques involve the
{\it positive utility property}:
the property is enjoyed
by the gadget games,
which form 
a key component of the
reduction,
and it is required
for the new analysis of the ${\mathsf{GHR}}$-symmetrization.
The property 
requires
that each player may,
in response to the 
choices
of the other players,
always choose a strategy
to make her utility strictly positive; 
it is strictly weaker than 
the {\it strictly positive utilities property,}
assumed 
for the analysis of the ${\mathsf{GHR}}$-symmetrization
in~\cite{JPT86}. 
We prove that
computing a Nash equilibrium
for a win-lose bimatrix game
with the 
property
is as hard as computing
a Nash equilibrium
for a win-lose bimatrix game
(Proposition~\ref{pup is not restrictive});
hence,
it is ${\mathcal{PPAD}}$-hard.
So
assuming the positive utility property
for 
symmetric win-lose bimatrix games
does not simplify the search problem.

\subsubsection{Gadget Games}
\label{introduction gadget games}

\noindent
A {\it gadget game}
is a {\em fixed} 
game with a small number of players,
which is void by design
of the property
associated with some 
decision problem
about Nash equilibria
(from 
Section~\ref{framework decision problems}).
We present 
win-lose gadget games
covering all such properties.
For example,
the win-lose 3-player irrational game
${\widehat{{\mathsf{G}}}}_{2}$
(Section~\ref{irrational game})
has a single irrational Nash equilibrium,
dismatching the 
problem {\it (x)};
the win-lose non-uniform game
${\widehat{{\mathsf{G}}}}_{3}$
(Section~\ref{nonuniform game})
has no uniform Nash equilibrium,
dismatching the 
problem {\it (vi)};
the win-lose non-symmetric game
${\widehat{{\mathsf{G}}}}_{4}$
(Section~\ref{nonsymmetric game})
has no symmetric Nash equilibrium,
dismatching the 
problem {\it (viii)};
for a given integer $k \geq 1$,
the win-lose diagonal game
${\widehat{{\mathsf{G}}}}_{5}[k]$
(Section~\ref{diagonal game})
has exactly $k$ Nash equilibria,
dismatching the 
problem {\it (i)}.

\subsubsection{The Win-Lose Reduction}
\label{introduction winlose reduction}

\noindent
The technical backbone of the complexity results
for decision, counting and parity problems
is a {\it win-lose reduction} we design,
taking a {\em fixed} win-lose gadget game
${\widehat{{\mathsf{G}}}}$
with the positive utility property
as a parameter
(Section~\ref{winlose reduction}).
The reduction transforms
a given {\sf 3SAT} formula ${\mathsf{\phi}}$
into a win-lose game
${\mathsf{G}} 
  =
  {\mathsf{G}}({\widehat{{\mathsf{G}}}},
                                          {\mathsf{\phi}})$;
${\mathsf{G}}$ and ${\widehat{{\mathsf{G}}}}$
have the same number of players.
We prove that 
each Nash equilibrium for ${\widehat{{\mathsf{G}}}}$
is always a Nash equilibrium
for ${\mathsf{G}}({\widehat{{\mathsf{G}}}},
                              {\mathsf{\phi}})$
(Lemma~\ref{f is PNE} (Condition {\sf (1)})),                         
while 
${\mathsf{G}}({\widehat{{\mathsf{G}}}},
                         {\mathsf{\phi}})$
has additional  Nash equilibria
if and only if
${\mathsf{\phi}}$
is satisfiable;
those
are related parsimoniously
to the satisfying assignments of ${\mathsf{\phi}}$.
More important,
each additional
Nash equilibrium
enjoys properties
that do not depend on
${\widehat{{\mathsf{G}}}}$
(Propositions~\ref{if unsatisfied}
and~\ref{final lemma}).
So
a decision problem
associated with some particular property
reduces to deciding the inequivalence
of a win-lose game with a {\em fixed}
win-lose gadget game
dismatching the property,
and their equivalence
is co-${\mathcal{NP}}$-hard.

\subsubsection{The ${\mathsf{GHR}}$-Symmetrization}
\label{the ghr symmetrization}

\noindent
To extend hardness results
from win-lose to symmetric win-lose bimatrix games,
we \textcolor{black}{seek}
{\it win-lose} symmetrizations
transforming a win-lose bimatrix game 
into a symmetric one. 
The {\it ${\mathsf{GHR}}$-symmetrization}~\cite{GHR63}
is the single win-lose symmetrization we know of;
for emphasis,
we shall call it the {\it win-lose ${\mathsf{GHR}}$-symmetrization.}\footnote{The symmetrization
                                                                                               due to Brown and von Neumann~\cite{BvN50}
                                                                                               involves the sum
                                                                                               of the two matrices,
                                                                                               which may increase
                                                                                               the number of utility values.
                                                                                               The 
                                                                                               symmetrization due to Gale, Kuhn and Tucker~\cite{GKT50}
                                                                                               introduces 
                                                                                               utilities $-1$.
                                                                                               So they both result in
                                                                                               more than two
                                                                                               values for the utilities,
                                                                                               and none of the symmetrizations
                                                                                               is win-lose.} 
We use tools from~\cite{JPT86}
to provide a new analysis
of the ${\mathsf{GHR}}$-symmetrization,
tailored to win-lose bimatrix games
with the positive utility property
(Section~\ref{winlose gkt symmetrization}),
which yields a tight characterization
of the Nash equilibria 
for 
the resulting symmetric win-lose bimatrix game 
${\widetilde{{\mathsf{G}}}}$:
they may only result,
albeit in a non-parsimonious way,
as {\it balanced mixtures}~\cite{JPT86}
involving Nash equilibria for ${\mathsf{G}}$
(Theorems~\ref{frombasictosymmetric}
and~\ref{fromsymmetrictobasic2}).\footnote{The balanced mixture was introduced
                                                                           in~\cite{JPT86}
                                                                           for the analysis of the
                                                                           ${\mathsf{GHR}}$-symmetrization;
                                                                           it was later used in~\cite{JJPT92},
                                                                           where it was called 
                                                                           the {\it ${\mathsf{GKT}}$-product,}
                                                                           for the analysis of the ${\mathsf{GKT}}$-symmetrization~\cite{GKT50}.}
By Propositions~\ref{if unsatisfied}
and~\ref{final lemma},
the set of balanced mixtures
is determined by
the satisfiability of ${\mathsf{\phi}}$; 
hence,
the ``cascade'' of the win-lose reduction
and the win-lose ${\mathsf{GHR}}$ symmetrization
is a non-parsimonious reduction
from {\sf 3SAT}
to decision problems
about Nash equilibria in
symmetric win-lose bimatrix games.

\subsection{Three-Steps Plan of the ${\mathcal{NP}}$-Hardness Proof}
\label{introduction complexity results}

\noindent

\begin{itemize}

\item
\underline{{\it Step 1:}}
For a property
of Nash equilibria 
for symmetric win-lose bimatrix games,
fix a gadget game:
a win-lose bimatrix game
${\widehat{{\mathsf{G}}}}$
whose Nash equilibria
dismatch the property. 

\item
\underline{{\it Step 2:}}
Apply the win-lose reduction
with parameter ${\widehat{{\mathsf{G}}}}$
on the formula ${\mathsf{\phi}}$
to get the win-lose bimatrix game
${\mathsf{G}}
  :=
  {\mathsf{G}}({\widehat{{\mathsf{G}}}}, 
                        {\mathsf{\phi}})$.
\begin{center}
\fbox{
\begin{minipage}{5.5in}
\begin{itemize}

\item
When ${\mathsf{\phi}}$
is unsatisfiable,
the Nash equilibria for 
${\mathsf{G}}({\widehat{{\mathsf{G}}}}, {\mathsf{\phi}})$
are those for ${\widehat{{\mathsf{G}}}}$
(Proposition~\ref{if unsatisfied});
hence,
they dismatch the property.

\item
When ${\mathsf{\phi}}$ is satisfiable,
there are additional Nash equilbria
for ${\mathsf{G}}({\widehat{{\mathsf{G}}}}, {\mathsf{\phi}})$,
which satisfy the property
(Proposition~\ref{final lemma}).
(It follows that
the properties of the Nash equilibria
for ${\widehat{{\mathsf{G}}}}$
and of the additional ones
for ${\mathsf{G}}({\widehat{{\mathsf{G}}}}, {\mathsf{\phi}})$
are ``conflicting''.)

\item[$\Rightarrow$]
${\mathsf{G}}$ 
has a Nash equilibrium
matching the property
if and only if
${\mathsf{\phi}}$
is satisfiable.
\begin{itemize}

\item[$\Rightarrow$]
The associated decision problem
is ${\mathcal{NP}}$-hard
for win-lose bimatrix games.

\end{itemize}

\end{itemize}
\end{minipage}
}
\end{center}
\noindent
Note that {\it Step 2}
only allows proving
the ${\mathcal{NP}}$-hardness
of decision problems
associated with properties
matched by the additional Nash equilibria
for the win-lose game
${\mathsf{G}}({\widehat{{\mathsf{G}}}},
                        {\mathsf{\phi}})$.

\item
\underline{{\it Step 3:}} 
Apply the win-lose ${\mathsf{GHR}}$-symmetrization 
on ${\mathsf{G}}$
to get the symmetric win-lose bimatrix game
${\widetilde{{\mathsf{G}}}}
  :=
  {\mathsf{GHR}}( {\mathsf{G}}({\widehat{{\mathsf{G}}}}, 
                                                            {\mathsf{\phi}})
                            )$,
whose Nash equilibria                             
may only result
as balanced mixtures
involving Nash equilibria
for ${\mathsf{G}}$.
Some balanced mixtures preserve the properties
of the Nash equilibria for ${\mathsf{G}}$,
while other may dismatch them.
This allows incorporating
properties either matched or dismatched
by the additional Nash equilibria for ${\mathsf{G}}$.
We show that
the associated decision problem
is ${\mathcal{NP}}$-hard
for symmetric win-lose bimatrix games
by establishing a suitable equivalence
to the satisfiability of ${\mathsf{\phi}}$
as follows:

\begin{center}
\fbox{
\begin{minipage}{5.5in}
There are three possible cases
in achieving
the equivalence between
the existence of a Nash equilibrium
matching the property
for ${\widetilde{{\mathsf{G}}}}$
and the satisfiablity of ${\mathsf{\phi}}$:
\begin{itemize}

\item[{\sf (1)}]
The existence
of a Nash equilibrium
matching the property
for the win-lose game ${\mathsf{G}}$
is equivalent to
the satisfiability of ${\mathsf{\phi}}$,
and two extra conditions hold:
\begin{itemize}

\item
When ${\mathsf{\phi}}$ is unsatisfiable,
every balanced mixture
dismatches the property.

\item
When $\phi$ is satisfiable,
there is a balanced mixture
matching the property.

\end{itemize}
In this case,
the equivalence
is preserved by the win-lose ${\mathsf{GHR}}$-symmetrization. 

\item[{\sf (2)}]
There is no Nash equilibrium
for ${\mathsf{G}}$
matching the property,
no matter whether ${\mathsf{\phi}}$
is satisfiable or not,
and two extra conditions hold:
\begin{itemize}

\item
When ${\mathsf{\phi}}$ is unsatisfiable,
every balanced mixture
dismatches the property.

\item
When ${\mathsf{\phi}}$ is satisfiable,
there is a balanced mixture
matching the property.

\end{itemize}
Hence,
the existence of a Nash equilibrium
matching the property
for 
${\widetilde{{\mathsf{G}}}}$
is equivalent to
the satisfiability of ${\mathsf{\phi}}$. 

\item[{\sf (3)}]
When ${\mathsf{\phi}}$ is unsatisfiable,
there is a balanced mixture
matching the property.
When ${\mathsf{\phi}}$
is satisfiable,
there is an additional  balanced mixture
matching the property.
Hence,                     
this excludes the equivalence
between
the existence
of a Nash equilibrium
matching the property
for ${\widetilde{{\mathsf{G}}}}$
and the satisfiability of ${\mathsf{\phi}}$.

We extend ${\widetilde{{\mathsf{G}}}}$ to
${\widetilde{{\mathsf{G}}}}
  \parallel
  {\widehat{{\mathsf{G}}}}_{5}[k]$
by ``embedding'' 
the symmetric win-lose gadget 
${\widehat{{\mathsf{G}}}}_{5}[k]$
as a subgame
so that the balanced mixtures
arising when ${\mathsf{\phi}}$ is unsatisfiable
are ``destroyed',
while the balanced mixtures
arising when ${\mathsf{\phi}}$
is satisfiable
are not ``destroyed''.
This induces the equivalence of
the existence
of a Nash equilibrium
matching the property for
${\widetilde{{\mathsf{G}}}}
  \parallel
  {\widehat{{\mathsf{G}}}}_{5}[k]$
and the satisfiability of ${\mathsf{\phi}}$.

\item[$\Rightarrow$]
Since equivalence 
to the satisfiability of ${\mathsf{\phi}}$
holds in all cases,
the ${\mathcal{NP}}$-hardness 
of the associated decision problem
for symmetric win-lose games
follows.

\end{itemize}

\end{minipage}
}
\end{center}
\end{itemize}


\subsection{The Complexity Results and Significance}
\label{contribution}

\subsubsection{Decision Problems}

The three-steps proof plan
from Section~\ref{introduction complexity results}
is used to yield,
as our main result,
the ${\mathcal{NP}}$-hardness
of deciding a handful of properties 
for symmetric win-lose bimatrix games 
(Theorems~\ref{mainextended} and~\ref{maintheorem symmetric}).
These complexity results imply that
symmetric win-lose bimatrix games
are as complex
as general bimatrix games
with respect to the handful of decision problems 
about Nash equilibria
considered before in~\cite{BM11,BDL08,CS05,CS08,GZ89,MT10,MT10a,MVY15}.

While the win-lose reduction applies
to games with any number $r \geq 2$
of players,
the ${\mathsf{GHR}}$-symmetrization 
is specific to bimatrix games.
Hence,
{\it Step 1} and {\it Step 2}
suffice on their own
for proving
the ${\mathcal{NP}}$-hardness of decision problems
about Nash equilibria
which either remain trivial for symmetric games
(such as deciding the existence
of a symmetric Nash equilibrium)
or become non-trivial for win-lose games
with more than two players
(such as deciding the existence
of a rational Nash equilibrium):
\begin{itemize}

\item
Choosing ${\widehat{{\mathsf{G}}}}$
as the win-lose bimatrix game ${\widehat{{\mathsf{G}}}}_{5}$
with no symmetric Nash equilibrium
yields the ${\mathcal{NP}}$-hardness
of deciding the existence
of a symmetric Nash equilibrium
for win-lose bimatrix games
(Theorem~\ref{second last minute theorem}).
So
win-lose bimatrix games
are as complex as general bimatrix games
with respect to deciding the existence
of a symmetric Nash equilibrium.

\item
Choosing ${\widehat{{\mathsf{G}}}}$
as the win-lose 3-player game ${\widehat{{\mathsf{G}}}}_{2}$
with a single irrational Nash equilibrium
yields the ${\mathcal{NP}}$-hardness
of deciding the existence 
of a rational Nash equilibrium
for win-lose 3-player games
(Theorem~\ref{last minute theorem}).
So
win-lose 3-player games
are as complex as general 3-player games
with respect to deciding the 
existence of a rational Nash equilibrium.

\end{itemize}
These results
represent an analog
of the earlier result 
that win-lose bimatrix games 
are as complex as general bimatrix games 
with respect to the search problem
for a Nash equilibrium~\cite{AKV05}.

\subsubsection{Search, Counting and Parity Problems}

We show that computing a Nash equilibrium
for a symmetric win-lose bimatrix game
is ${\mathcal{PPAD}}$-hard
(Theorem~\ref{new result});
the proof appeals to the characterization
of the Nash equilibria for the win-lose
${\mathsf{GHR}}$-symmetrization
of a win-lose bimatrix game
(Theorem~\ref{fromsymmetrictobasic2}).

Recall that
the reduction used
for the ${\mathcal{NP}}$-hardness proof
is non-parsimonious.
Hence,
the counting and parity problems
about the number and the parity of the number
of Nash equilibria
for a symmetric win-lose bimatrix game,
as well as the counting and parity versions
of decision problems
about Nash equilibria,
are not immediately $\# {\mathcal{P}}$-hard
and $\oplus {\mathcal{P}}$-hard,
respectively.
Nevertheless,
the ${\mathcal{NP}}$-hardness proof
yields $\# {\mathcal{P}}$-hardness
and $\oplus {\mathcal{P}}$-hardness
results as well:
\begin{itemize}

\item
We show that computing 
the number 
(resp., the parity of the number)
of Nash equilibria
for a symmetric win-lose bimatrix game
is $\# {\mathcal{P}}$-hard
(resp., $\oplus {\mathcal{P}}$-hard)
(Theorem~\ref{sharp pi complete symmetric win-lose}).
The proof 
draws from simple formulas
for the numbers of Nash equilibria
for ${\mathsf{G}}$
and ${\widetilde{{\mathsf{G}}}}$
in terms of the numbers of Nash equilibria for
${\widehat{{\mathsf{G}}}}$
and 
of satisfying assignments for ${\mathsf{\phi}}$,
denoted as $\# {\mathsf{\phi}}$;
the formulas follow from properties
of the gadget games,
the win-lose reduction
and the win-lose
${\mathsf{GHR}}$-symmetrization.
Solving the formula for $\# {\mathsf{\phi}}$
yields the $\# {\mathcal{P}}$-hardness; 
computing the parity of $\# {\mathsf{\phi}}$,
denoted as $\oplus {\mathsf{\phi}}$\textcolor{crimsonglory}{,}
from the formula
yields the $\oplus {\mathcal{P}}$-hardness.

\item
We examine 
the balanced mixtures
of Nash equilibria for ${\mathsf{G}}$
matching each property
to derive simple 
formulas
for the numbers of Nash equilibria
for ${\widetilde{{\mathsf{G}}}}$
matching the property.
Hence, 
the counting versions of these decision problems
are $\# {\mathcal{P}}$-hard 
(Theorem~\ref{mainextended}).

Furthermore,
all but two
of the formulas
are {\it parity-preserving}:
the parity of the number of Nash equilibria
for ${\widetilde{{\mathsf{G}}}}$
matching the property
yields $\oplus {\mathsf{\phi}}$;
this implies the $\oplus {\mathcal{P}}$-hardness 
of all but \textcolor{crimsonglory}{one} of the parity versions
(Theorem~\ref{mainextended}),
the exception made by a trivial parity problem. 
In this sense,
the ``cascade'' of the win-lose reduction
and the win-lose ${\mathsf{GHR}}$-symmetrization
is {\it parity-preserving}:
it yields the parity
of the number of satisfying assignments.
As far as we know,
this is the {\em first} non-parsimonious,
yet parity-preserving,
reduction to decision problems about Nash equilibria.

\end{itemize}

\subsection{Evaluation and Comparison to Related Work}
\label{related work and significance}

\noindent
None of the works~\cite{BM11,BDL08,CS05,CS08,GZ89,MT10,MT10a,MVY15}
on the complexity of decision and counting problems
about Nash equilibria
in bimatrix games
considered the two restrictions 
to win-lose bimatrix and symmetric bimatrix games
in simultaneity;
neither did 
the works~\cite{AKV05,CDT09,CTV07,DGP09}
on the complexity of the search problem. 
This work encompasses {\em all} of the decision problems,
together with their counting and parity versions,
in the common framework composed
of the gadget games,
the win-lose reduction
and the win-lose ${\mathsf{GHR}}$-symmetrization.
So, 
problem-specific reductions and techniques,
such as the {\it regular subgraphs} technique from~\cite{BDL08}
or the {\it good assignments} technique from~\cite{CS05},
are not necessary.

\subsubsection{Complexity Results}

Theorems~\ref{mainextended},
\ref{maintheorem symmetric}
and~\ref{last minute theorem}
improve and extend previous ${\mathcal{NP}}$-hardness
and $\mbox{co-}{\mathcal{NP}}$-hardness results
for the decision problems
from~\cite{BM11,BDL08,CS05,CS08,GZ89,MT10,MT10a,MVY15}
as follows:
\begin{itemize}

\item
The ${\mathcal{NP}}$-hardness
of the handful of decision problems
improves the results in~\cite{CS08,GZ89},
which applied to general symmetric bimatrix games,
and extends their refinements
in~\cite{MT10,MT10a},
which applied to imitation bimatrix games.

\item
The ${\mathcal{NP}}$-hardness
of deciding the existence
of a uniform Nash equilibrium
for a symmetric win-lose bimatrix game
extends~\cite[Theorem 1]{BDL08}, 
which applied to imitation win-lose bimatrix games.

\item
The ${\mathcal{NP}}$-hardness
of deciding the existence of $k+1$ Nash equilibria,
with $k \geq 1$,
for a symmetric win-lose bimatrix game
improves~\cite[Theorem 1]{CS05},
which addressed the special case $k=1$
and applied to win-lose bimatrix games.

\item
The ${\mathcal{NP}}$-hardness
of deciding the existence
of a non-symmetric Nash equilibrium
for a symmetric win-lose bimatrix game
improves~\cite[Theorem 3]{MVY15},
which applied to general symmetric bimatrix games.

\item
The $\mbox{co-}{\mathcal{NP}}$-hardness of deciding
the {\it Nash-equivalence}
of a given symmetric win-lose lose bimatrix game
with the ${\mathsf{GHR}}$-symmetrization
of a fixed win-lose gadget game
improves~\cite[Theorem 1]{BM11},
which established the Nash-equivalence
of a given general bimatrix game
with a fixed general gadget game.

\item
The ${\mathcal{NP}}$-hardness
of deciding the existence
of a rational Nash equilibrium
for a win-lose 3-player game
improves~\cite[Theorem 2]{BM11},
which applied to general 3-player games.

\end{itemize}
\noindent
Corresponding extensions and improvements follow
for the $\# {\mathcal{P}}$-hardness
of the counting versions of these decision problems.
In particular:
\begin{itemize}

\item
The $\# {\mathcal{P}}$-hardness
of counting the number of non-symmetric Nash equilibria
for a symmetric win-lose bimatrix game
improves~\cite[Theorem 5]{MVY15},
which applied to general symmetric bimatrix games.

\end{itemize}

Of particular interest is the
$\# {\mathcal{P}}$-hardness
(resp., $\oplus {\mathcal{P}}$-hardness)
of computing,
for a given symmetric win-lose bimatrix game,
the number 
(resp., the parity of the number)
of symmetric Nash equilibria;
recall that the corresponding decision problem,
asking for the existence of a symmetric Nash equilibrium
for a given symmetric game,
is trivial by the early result of Nash~\cite{N50,N51}.

To the best of our knowledge, 
the proof in~\cite{V05}
that computing the parity of
the number of satisfying assignments
for a {\it read-twice} formula,
where each variable occurs at most twice,
is $\oplus {\mathcal{P}}$-hard
is the only proof of $\oplus {\mathcal{P}}$-hardness
employing a reduction
from a corresponding ${\mathcal{NP}}$-completeness proof
which, although non-parsimonious,
is parity-preserving.

Figure~\ref{table1} tabulates
the presented complexity results
for decision problems
in comparison to those in~\cite{BM11,BDL08,CS05,CS08,GZ89,MVY15}.
Theorem~\ref{new result}
improves the ${\mathcal{PPAD}}$-hardness
of the search problem for win-lose bimatrix games
from~\cite{AKV05}
to symmetric win-lose bimatrix games.
The $\# {\mathcal{P}}$-hardness
of computing the number of Nash equilibria
for a symmetric win-lose bimatrix game
in Theorem~\ref{sharp pi complete symmetric win-lose}
improves~\cite[Corollary 12]{CS08},
which applied to general symmetric bimatrix games.

\subsubsection{The Win-Lose Reduction}

The unifying reduction from {\sf CNF SAT},
introduced in~\cite{CS08} and further developed in~\cite{BM11,MM16},
is inadequate to cover win-lose games.
(See Section~\ref{conitzer sandholm inadequacy} for a technical discussion.)
New ideas and technical constructs,
such as {\it pair variables,}
were needed for the win-lose reduction,
which thus improves vastly
in yielding a win-lose game
while still preserving the equivalence
between 
the existence
of additional Nash equilibria
for the game
${\mathsf{G}}\left( {\widehat{{\mathsf{G}}}},
                                 {\mathsf{\phi}}
                       \right)$     
and the satisfiability of ${\mathsf{\phi}}$.
We needed the finer structure
of a {\sf 3SAT} formula
in order to guarantee
that certain deviations
in the game constructed 
by the win-lose reduction
are non-profitable.
Specifically,
the proofs
for Propositions~\ref{if unsatisfied}
and~\ref{final lemma}
rely on choosing ${\mathsf{\phi}}$ as
a {\sf 3SAT} formula
in an essential way.

The win-lose reduction
generalizes the reduction from~\cite{CS08},
which yielded a bimatrix game,
to yield an $r$-player game,
with $r \geq 2$;
it is this generalization that has enabled
showing complexity results about decision problems,
such as deciding the existence of a rational Nash equilibrium
(Theorem~\ref{last minute theorem}),
which are trivial for bimatrix games
but become ${\mathcal{NP}}$-hard for $3$-player games.

We note that the reduction in~\cite{CS08}
yielded the
${\mathcal{NP}}$-hardness
of approximate versions
of some of the decision problems
over general bimatrix games.
We anticipate that the presented composition of the win-lose reduction
and the win-lose ${\mathsf{GHR}}$-symmetrization
yields corresponding ${\mathcal{NP}}$-hardness results
over symmetric win-lose bimatrix games.

\subsubsection{The ${\mathsf{GHR}}$-Symmetrization}

\noindent
Our analysis of the
${\mathsf{GHR}}$-symmetrization~\cite{GHR63}
yields the {\em first} complete characterization
of the Nash equilibria 
for the symmetric game
resulting from a win-lose symmetrization. 
For its previous analysis in~\cite{JPT86},
it was assumed that all utilities
in the original bimatrix game
were strictly positive
in order to guarantee
that the $0$ utilities
added for the ${\mathsf{GHR}}$-symmetrization
are strictly less than any utility;
thus, 
the original bimatrix game
were not a win-lose game.
Recall that shifting and scaling the utilities
does not alter the set of Nash equilibria for a game.
Thus,
a bimatrix game with at most {\em two} values 
for the (strictly positive) utilities
could be transformed into an equivalent win-lose game
by shifting and scaling the utilities.
(Note that the case where
all utilities in the original bimatrix games are equal
is degenerate.)
But then
it is no more the case that
the $0$ utilities
added for the ${\mathsf{GHR}}$-symmetrization
are strictly less
than both $0$ and $1$,
and the analysis of the ${\mathsf{GHR}}$-symmetrization from~\cite{JPT86}
is no more applicable.

To circumvent this difficulty,
we need a different assumption
on the utilities
in the original bimatrix game
which
is not too restrictive to make
the algorithmic problems easier,
while it is strong enough
to yield a tight characterization
of the Nash equilibria
for the ${\mathsf{GHR}}$-symmetrization;
this is the positive utility property.
We analyze the ${\mathsf{GHR}}$-symmetrization
for win-lose bimatrix games
(Section~\ref{winlose gkt symmetrization})\textcolor{black}{,
replacing}
the assumption of strictly positive utilities from~\cite{JPT86}
with the positive utility property. 
The resulting characterization of Nash equilibria
for the ${\mathsf{GHR}}$-symmetrization
of a win-lose bimatrix game
with the positive utility property
is similar to the characterization
of Nash equilibria
for the ${\mathsf{GHR}}$-symmetrization
of a bimatrix game
with strictly positive utilities
in~\cite{JPT86}:
Cases {\sf (C'.2)}
and {\sf (C'.3)}
from Theorem~\ref{fromsymmetrictobasic2}
correspond to the cases addressed
in~\cite[Theorem 4.1]{JPT86};
\cite[Theorem 4.2]{JPT86}
and~\cite[Theorem 4.3]{JPT86}
correspond to 
Theorems~\ref{frombasictosymmetric}
and~\ref{fromsymmetrictobasic2},
respectively.

\subsubsection{Tractable Cases}

\noindent
Bil\`{o} and Fanelli~\cite[Section 4]{BF10}
consider a Linear Programming formulation,
denoted as {\it LR,}
of Nash equilibria in bimatrix games,
which is a relaxation of a corresponding
Quadratic Programming formulation
they propose.
They show~\cite[Theorem 1]{BF10}
that any feasible solution to {\it LR}
is a Nash equilibrium when 
the bimatrix game is {\it regular}~\cite[Definition 4]{BF10}:
the sum of corresponding entries
in the two matrices
remains constant either accross rows
or across columns;
so,
regular bimatrix games
are the {\it rank-1} bimatrix games,
encompassing {\it zero-sum} bimatrix games.
As a consequence of~\cite[Theorem 1]{BF10},
a Nash equilibrium optimizing any objective function
involving the players' utilities
and meeting any set of constraints
expressible through Linear Programming
can be computed in polynomial time
(through solving {\it LR}).
Since nearly all decision problems
about Nash equilibria 
studied in this work are so expressible,
it follows that they are polynomial time solvable
when restricted to regular bimatrix games.
(A notable exception is problem {\it (v)}
                                                                                     from Section~\ref{their decision problems},
                                                                                     which remains ${\mathcal{NP}}$-hard
                                                                                     even when restricted to zero-sum games~\cite{GZ89}.) 
This positive result
stands in contrast
to the established ${\mathcal{NP}}$-hardness
for symmetric win-lose bimatrix games.
Other positive results on the search problem
for a Nash equilibrium
in {\it planar,} {\it sparse} 
and {\it $K_{3,3}$}
and 
{\it $K_{5}$ 
minor-free}
and
{\it minor-closed}
win-lose bimatrix games
appear in~\cite{AOV07},~\cite{CLR06} and~\cite{DK11},
respectively.

\subsection{Paper Organization}
\label{road map}

\noindent
Section~\ref{mathematical preliminaries}
introduces the game-theoretic framework.
Section~\ref{winlose bimatrix with pup}
considers the positive utility property
for win-lose bimatrix games.
The win-lose gadget games
are presented
in Section~\ref{gadget games}.
Section~\ref{winlose reduction}
treats the win-lose reduction
and its properties.
The win-lose ${\mathsf{GHR}}$-symmetrization
and its properties are
analyzed in Section~\ref{winlose gkt symmetrization}.
Section~\ref{complexity results}
presents the complexity results.
We conclude,
in Section~\ref{epilogue},
with a discussion of the results
and some open problems.

\section{Framework and Preliminaries}
\label{mathematical preliminaries}

\noindent
Games,
Nash equilibria
and their decision problems
are treated in
Sections~\ref{framework games},~\ref{framework equilibria}
and~\ref{framework decision problems},
respectively.

\subsection{Games}
\label{framework games}

\noindent
A {\it game}
is a triple
${\mathsf{G}}
 =
 \langle [r],
         \{ {\sf \Sigma}_{i} \}_{i \in [r]},
         \{ {\sf U}_{i} \}_{i \in [r]}
 \rangle$,
where:
{\it (i)}
$[r] = \{ 1, \ldots, r \}$
is a finite set
of {\it players}
with $r \geq 2$,
and
{\it (ii)}
for each player $i \in [r]$,
${\sf \Sigma}_{i}$ is the set of {\it strategies}
for player $i$,
and
the {\it utility function}
${\sf U}_{i}$
is a function
${\sf U}_{i}: \times_{k \in [r]}
                {\sf \Sigma}_{k}
              \rightarrow
              {\mathbb{R}}$
for player $i$.
The game ${\mathsf{G}}$
is
{\it win-lose} 
(resp., {\it general})
if for each player $i\in [r]$, 
the utility function
is a function
${\sf U}_{i}: \times_{k \in [r]}
                {\sf \Sigma}_{k}
              \rightarrow
              \{0,1\}$
(resp.,
${\sf U}_{i}: \times_{k \in [r]}
                       {\sf \Sigma}_{k}
                     \rightarrow
                     {\mathbb{Q}}$).              
For each player $i\in [r]$, denote
${\sf \Sigma}_{-i}
 =
 \times_{k \in [r]\setminus\{i\}}
   {\sf \Sigma}_{k}$;
denote
${\sf \Sigma}
 =
 \times_{k \in [r]}
   {\sf \Sigma}_{k}$.
For an integer $r \geq 2$,
$r$-${\cal G}$
is the set of
{\it $r$-player games};
so,
${\cal G}
 =
 \bigcup_{r \geq 2}
   r\mbox{-}{\cal G}$
is the set of all games.

A {\it profile}
is a tuple
${\bf s} \in {\sf \Sigma}$
of $r$ strategies,
one for each player.
For a profile ${\bf s}$,
the vector
${\sf U}({\bf s})
 =
 \left\langle {\sf U}_{1}({\bf s}),
              \ldots,
              {\sf U}_{r}({\bf s})
 \right\rangle$
is called the {\it utility vector}.
For a profile ${\bf s}$
and a strategy $t \in {\sf \Sigma}_{i}$
of player $i$,
denote as ${\bf s}_{-i} \diamond t$
the profile obtained
by substituting $t$
for $s_{i}$ in ${\bf s}$.
A {\it partial profile}
${\bf s}_{-i}
 \in
 {\sf\Sigma}_{-i}$
is a tuple of $r-1$ strategies,
one for each player other than $i$.
We define:

\begin{definition}
\label{positive utility property definition}
The game ${\mathsf{G}}$
has the {\it positive utility property} 
if for each player $i \in [r]$ 
and each partial profile 
${\bf s}_{-i}
 \in
 {\mathsf{\Sigma}}_{-i}$, 
there is a strategy $t = t({\bf s}_{-i})
                                \in
                               {\mathsf{\Sigma}}_{i}$ 
such that
${\mathsf{U}}_{i}\left( {\bf s}_{-i}
                        \diamond
                        t
                 \right)
 > 0$.
\end{definition}

\noindent
For win-lose bimatrix games, 
the positive utility property
means that
the utility matrix 
of the row (resp., column) player 
has no all-zeros column (resp., row).

A {\it bimatrix game}
is a 2-player game with
player 1,
or {\it row player,}
and
player 2,
or {\it column player,}
with ${\mathsf{\Sigma}}_{1}
      =
      {\mathsf{\Sigma}}_{2}
      =
      [n]$,\footnote{The assumption that the two players
                               have equal numbers of strategies
                               is without loss of generality
                               since equality can be achieved,
                               without altering the positive utility property,
                               by adding "dummy" strategies,
                               which are never played
                               by a utility-maximizing player.}    
which is represented
as the pair of matrices
$\left\langle {\mathsf{R}},
              {\mathsf{C}}
 \right\rangle$,
where for each profile
$\left\langle s_{1},
              s_{2}
 \right\rangle
 \in
 {\mathsf{\Sigma}}$,
${\mathsf{U}}_{1}\left( \left\langle s_{1}, s_{2}
                        \right\rangle
                 \right)
 =                
 {\mathsf{R}}[s_{1}, s_{2}]$
and
${\mathsf{U}}_{2}\left( \left\langle s_{1}, s_{2}
                        \right\rangle
                 \right)
 =
 {\mathsf{C}}[s_{1}, s_{2}]$.
The game
${\mathsf{G}}
 =
 \left\langle {\mathsf{R}},
              {\mathsf{C}}
 \right\rangle$
is {\it symmetric}
if
${\mathsf{U}}_{1}\left( \left\langle s_{1}, s_{2}
                        \right\rangle
                 \right)
 =
 {\mathsf{U}}_{2}\left( \left\langle s_{2}, s_{1}
                        \right\rangle
                 \right)$:
exchanging players and strategies
preserves utilities;
so,
${\mathsf{R}}
 =
 {\mathsf{C}}^{\mbox{{\rm T}}}$.
For a constant $c$,
the game ${\mathsf{G}}$
is {\it $c$-sum}
if for each profile
$\left\langle s_{1},
              s_{2}
 \right\rangle$,
${\mathsf{U}}_{1}\left( \left\langle s_{1}, s_{2}
                        \right\rangle
                 \right)
 +
 {\mathsf{U}}_{2}\left( \left\langle s_{1}, s_{2}
                        \right\rangle
                 \right)
  =
  c$;
so,
${\mathsf{R}}[s_{1}, s_{2}]  
 +
 {\mathsf{C}}[s_{1}, s_{2}]  
 =
 c$.
For a player $i \in [2]$
in the game ${\mathsf{G}}$,
we shall denote as 
$\overline{i}$
the player other than $i$;
so 
$\overline{i}
 \in
 [2] \setminus \{ i \}$.

A {\it mixed strategy}
for player $i \in [r]$
is a probability distribution
$\sigma_{i}$
on her strategy set ${\sf \Sigma}_{i}$:
a function
$\sigma_{i}: {\sf \Sigma}_{i} \rightarrow [0, 1]$
such that
$\sum_{s \in {\sf \Sigma}_{i}}
   \sigma_{i}(s)
 =
 1$.
Denote as ${\sf Supp}(\sigma_{i})$
the set of strategies
$s \in {\sf \Sigma}_{i}$
such that
$\sigma_{i}(s)
 >
 0$.
The mixed strategy
$\sigma_{i}: {\sf \Sigma}_{i} \rightarrow [0, 1]$
is {\it pure,}
and player $i$ is {\it pure,}
if ${\mathsf{Supp}}(\sigma_{i})
 =
 \{ s \}$ 
for some strategy
$s \in {\mathsf{\Sigma}}_{i}$;
player $i$ is {\it mixed}
if she is not pure.
The mixed strategy
$\sigma_{i}: {\sf \Sigma}_{i} \rightarrow [0, 1]$
is {\it fully mixed,}
and player $i$ is 
{\it fully mixed,}
if ${\sf Supp}(\sigma_{i})
    =
    {\sf \Sigma}_{i}$;
so,
player $i$ puts non-zero probability
on all strategies
in her set of strategies 
${\sf \Sigma}_{i}$.
The mixed strategy
$\sigma_{i}: {\sf \Sigma}_{i} \rightarrow [0, 1]$
is {\it uniform}
if for each pair of strategies
$s, t \in {\sf Supp}(\sigma_{i})$,
${\sf \sigma}_{i}(s)
 =
 {\sf \sigma}_{i}(t)$.
The mixed strategy
$\sigma_{i}$
is {\it rational}
if all values of $\sigma_{i}$
are rational numbers.

A {\it mixed profile}
$\bm{\sigma}
 =
 (\sigma_{i})_{i \in [r]}$
is a tuple of mixed strategies,
one for each player.
So, a profile is 
the degenerate case of a mixed profile 
where all probabilities are either $0$ or $1$.
A {\it partial mixed profile}
$\bm{\sigma}_{-i}$
is a tuple of $r-1$ mixed strategies,
one for each player
other than $i$.
For a mixed profile $\bm{\sigma}$
and a mixed strategy $\tau_{i}$
of player $i \in [r]$,
denote as
$\bm{\sigma}_{-i} \diamond \tau_{i}$
the mixed profile obtained
by substituting $\tau_{i}$
for $\sigma_{i}$
in $\bm{\sigma}$.
A mixed profile
is {\it uniform}
(resp., {\it fully mixed})
if all of its mixed strategies
are uniform
(resp., fully mixed);
else it is {\it non-uniform}
(resp.,
{\it non-fully mixed}).
A mixed profile
is {\it symmetric}
if all mixed strategies are identical;
else it is {\it non-symmetric}.
A mixed profile
is {\it rational}
if all of its mixed strategies
are rational;
else it is {\it irrational}.

The mixed profile $\bm{\sigma}$
induces a probability measure
${\mathbb P}_{\bm{\sigma}}$
on the set of profiles
in the natural way;
so,
for a profile ${\bf s}$,
${\mathbb P}_{\bm{\sigma}}({\bf s})
 =
 \prod_{k \in [r]}
    \sigma_{k} (s_{k})$.
Say that
the profile ${\bf s}$
is {\it supported}
in the mixed profile $\bm{\sigma}$,
and write ${\bf s}\sim{\bm{\sigma}}$,
if ${\mathbb P}_{\bm{\sigma}}({\bf s})
    >
    0$.
Under the mixed profile
$\bm{\sigma}$,
the utility
of each player becomes a random variable.
So,
associated with the mixed profile
$\bm{\sigma}$
is the
{\it expected utility}
${\sf U}_{i}(\bm{\sigma})
 =
 {\mathbb E}_{{\bf s}
              \sim
              \bm{\sigma}}
              \left( {\sf U}_{i}\left( {\bf s}
                                 \right)
              \right)$
for each player $i \in [r]$:
the expectation
according to $\mathbb{P}_{\bm{\sigma}}$
of her utility;
so,
{
\small
\begin{eqnarray*}
{\mathsf{U}}_{i}(\bm{\sigma})
& = &
 \sum_{{\bf s}
       \in
       {\sf \Sigma}}
       \left( \prod_{k \in [r]}
                 \sigma_{k} (s_{k})
       \right)
       \cdot
       {\sf U}_{i}({\bf s})\, .
\end{eqnarray*}
}
Recall that
in a $c$-sum bimatrix game ${\mathsf{G}}$,
for each mixed profile $\bm{\sigma}$,
${\mathsf{U}}_{1}\left( \bm{\sigma}
                 \right)
 +                
 {\mathsf{U}}_{2}\left( \bm{\sigma}
                 \right)
 =
 c$.
Also,
in a symmetric bimatrix game,
the {\it mixed exchangeability property} holds:
exchanging players and mixed strategies preserves expected utilities
in the sense that
for any mixed profile
$\langle \sigma_{1}, \sigma_{2}
  \rangle$,
${\mathsf{U}}_{1}(\langle s_{1}, s_{2}
                                \rangle)
  =
  {\mathsf{U}}_{2}(\langle s_{2}, s_{1}
                                \rangle)$.\footnote{Indeed,
                                                                                                           ${\mathsf{U}}_{1}(\langle \sigma_{1}, \sigma_{2}
                                                                                                                                           \rangle)
                                                                                                             =
                                                                                                             \sum_{s_{1}, s_{2} \in [n]}
                                                                                                                \sigma_{1}(s_{1})\,
                                                                                                                \sigma_{2}(s_{2})\,
                                                                                                                \cdot
                                                                                                                {\mathsf{R}}[s_{1}, s_{2}]
                                                                                                             =
                                                                                                             \sum_{s_{1}, s_{2} \in [n]}
                                                                                                                \sigma_{1}(s_{1})\,
                                                                                                                \sigma_{2}(s_{2})\,
                                                                                                                \cdot
                                                                                                                {\mathsf{C}}^{{\rm T}}[s_{1}, s_{2}]
                                                                                                              =
                                                                                                              \sum_{s_{1}, s_{2} \in [n]}
                                                                                                                \sigma_{1}(s_{1})\,
                                                                                                                \sigma_{2}(s_{2})\,
                                                                                                                \cdot
                                                                                                                {\mathsf{C}}[s_{2}, s_{1}]
                                                                                                             =
                                                                                                             {\mathsf{U}}_{2}(\langle \sigma_{2}, \sigma_{1}
                                                                                                                                           \rangle)$.}

\subsection{Nash Equilibria}
\label{framework equilibria}

\noindent
A {\it pure Nash equilibrium}
is a profile
${\bf s} \in {\mathsf{\Sigma}}$
such that
for each player
$i \in [r]$ and
for each strategy
$t
 \in
 {\sf \Sigma}_{i}$,
${\sf U}_{i}\left( {\bf s}
            \right)
 \geq
 {\sf U}_{i}\left( {\bf s}_{-i}
                   \diamond
                   t
            \right)$.
A {\it mixed Nash equilibrium,} 
or {\it Nash equilibrium} for short,
is a mixed profile
$\bm{\sigma}$
such that
for each player $i \in [r]$ and
for each mixed strategy
$\tau_{i}$,
${\mathsf{U}}_{i}(\bm{\sigma})
 \geq
 {\mathsf{U}}_{i}(\bm{\sigma}_{-i} \diamond \tau_{i})$.
Note that 
the mixed exchangeability property
of symmetric bimatrix games
implies that
(mixed) Nash equilibria are preseved in a symmetric bimatrix game:
$\langle \sigma_{1}, \sigma_{2} \rangle$
is a Nash equilibrium
if and only if
$\langle \sigma_{2}, \sigma_{1} \rangle$ is.   
Denote as 
${\mathcal{NE}}({\mathsf{G}})$
(resp., ${\mathcal{SNE}}({\mathsf{G}})$)
the set of Nash equilibria
(resp., symmetric Nash equilibria)
for the game ${\mathsf{G}}$.
For each game ${\mathsf{G}}$,
${\mathcal{NE}}({\mathsf{G}})
  \neq
  \emptyset$~\cite{N50,N51};
for each symmetric game ${\mathsf{G}}$,
${\mathcal{SNE}}({\mathsf{G}})
  \neq
  \emptyset$~\cite{N50,N51}.  
Two $r$-player games $\widehat{\sf G}$ and $\sf G$ 
are {\it Nash-equivalent}~\cite{BM11} 
if ${\mathcal{NE}}(\widehat{\mathsf{G}})
    =
    {\mathcal{NE}}({\sf G})$;
that is, 
they have the same set of Nash equilibria. 
We shall make extensive use
of the following
basic property
of Nash equilibria.

\begin{lemma}
\label{basic property of mixed nash equilibria}
A mixed profile
$\bm{\sigma}$
is a Nash equilibrium
if and only if
for each player $i \in [r]$,
{\sf (1)}
for each strategy 
$t \in {\mathsf{Supp}}(\sigma_{i})$,
${\mathsf{U}}_{i}(\bm{\sigma})
 =
 {\mathsf{U}}_{i}(\bm{\sigma}_{-i} \diamond t)$,
and
{\sf (2)}
for each strategy
$t \not\in {\mathsf{Supp}}(\sigma_{i})$,
${\mathsf{U}}_{i}(\bm{\sigma})
 \geq
 {\mathsf{U}}_{i}(\bm{\sigma}_{-i} \diamond t)$.
\end{lemma}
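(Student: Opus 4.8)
The statement to prove is Lemma~\ref{basic property of mixed nash equilibria}, the standard "support characterization" of Nash equilibria. Here is my plan.

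\textbf{Overall approach.} The proof is a routine double implication, driven by the linearity of the expected utility $\mathsf{U}_i(\bm{\sigma})$ in player $i$'s own mixed strategy $\sigma_i$. The key algebraic fact, which I would state explicitly at the outset, is that for any mixed profile $\bm{\sigma}$ and any player $i$, we have $\mathsf{U}_i(\bm{\sigma}) = \sum_{t \in \mathsf{\Sigma}_i} \sigma_i(t)\, \mathsf{U}_i(\bm{\sigma}_{-i} \diamond t)$; this is immediate from the definition of expected utility by splitting the sum over profiles according to player $i$'s strategy. Thus $\mathsf{U}_i(\bm{\sigma})$ is a convex combination of the "deviation-to-pure-$t$" payoffs $\mathsf{U}_i(\bm{\sigma}_{-i} \diamond t)$, with weights $\sigma_i(t)$ supported exactly on $\mathsf{Supp}(\sigma_i)$.

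\textbf{Forward direction.} Suppose $\bm{\sigma}$ is a Nash equilibrium. Condition {\sf (2)} is immediate: deviating to the pure strategy $t$ is a special case of deviating to a mixed strategy $\tau_i$, so $\mathsf{U}_i(\bm{\sigma}) \geq \mathsf{U}_i(\bm{\sigma}_{-i} \diamond t)$ for \emph{every} $t \in \mathsf{\Sigma}_i$, in particular for $t \notin \mathsf{Supp}(\sigma_i)$. For Condition {\sf (1)}, I would argue by contradiction: if some $t \in \mathsf{Supp}(\sigma_i)$ had $\mathsf{U}_i(\bm{\sigma}_{-i} \diamond t) \neq \mathsf{U}_i(\bm{\sigma})$, then combined with $\mathsf{U}_i(\bm{\sigma}_{-i} \diamond t) \leq \mathsf{U}_i(\bm{\sigma})$ (the equilibrium inequality applied to pure $t$) we get strict inequality $\mathsf{U}_i(\bm{\sigma}_{-i} \diamond t) < \mathsf{U}_i(\bm{\sigma})$. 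Since $t$ has positive weight $\sigma_i(t) > 0$ in the convex combination above, and every other term is at most $\mathsf{U}_i(\bm{\sigma})$, the convex combination is \emph{strictly} less than $\mathsf{U}_i(\bm{\sigma})$ — i.e., $\mathsf{U}_i(\bm{\sigma}) < \mathsf{U}_i(\bm{\sigma})$, a contradiction.

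\textbf{Reverse direction.} Suppose {\sf (1)} and {\sf (2)} hold for each player $i$. Together they give $\mathsf{U}_i(\bm{\sigma}) \geq \mathsf{U}_i(\bm{\sigma}_{-i} \diamond t)$ for every $t \in \mathsf{\Sigma}_i$ (with equality on the support). Now take an arbitrary mixed strategy $\tau_i$; applying the convex-combination identity to the profile $\bm{\sigma}_{-i} \diamond \tau_i$ gives $\mathsf{U}_i(\bm{\sigma}_{-i} \diamond \tau_i) = \sum_{t \in \mathsf{\Sigma}_i} \tau_i(t)\, \mathsf{U}_i(\bm{\sigma}_{-i} \diamond t) \leq \sum_{t} \tau_i(t)\, \mathsf{U}_i(\bm{\sigma}) = \mathsf{U}_i(\bm{\sigma})$, where the inequality uses {\sf (1)}--{\sf (2)} termwise and the last equality uses $\sum_t \tau_i(t) = 1$. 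Hence $\bm{\sigma}$ is a Nash equilibrium.

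\textbf{Main obstacle.} There is essentially no obstacle; the only "care" needed is making the convex-combination identity $\mathsf{U}_i(\bm{\sigma}) = \sum_t \sigma_i(t)\, \mathsf{U}_i(\bm{\sigma}_{-i}\diamond t)$ explicit and correctly invoking it \emph{twice} — once for $\bm{\sigma}$ itself (to rule out a profitable pure deviation inside the support) and once for $\bm{\sigma}_{-i}\diamond\tau_i$ (to reduce an arbitrary mixed deviation to pure deviations). I would present the identity as a one-line preliminary computation and then the two directions as short paragraphs as above.
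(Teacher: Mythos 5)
Your proof is correct: the convex-combination identity $\mathsf{U}_i(\bm{\sigma}) = \sum_{t \in \mathsf{\Sigma}_i} \sigma_i(t)\,\mathsf{U}_i(\bm{\sigma}_{-i}\diamond t)$, used once for $\bm{\sigma}$ itself and once for $\bm{\sigma}_{-i}\diamond\tau_i$, is exactly the standard argument for this support characterization. The paper states this lemma as a classical fact without giving a proof, so there is nothing to compare against; your write-up fills that gap correctly and in the canonical way.
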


\noindent
Given a partial mixed profile
$\bm{\sigma}_{-i}$
for some player $i \in [r]$,
a {\it best-response}
for player $i$
to $\bm{\sigma}_{-i}$
is a pure strategy
$s \in {\mathsf{\Sigma}}_{i}$
such that
${\mathsf{U}}_{i}\left( \bm{\sigma}_{-i}
                        \diamond
                        s
                 \right)
 =
 \max_{t 
            \in
           {\mathsf{\Sigma}}_{i}                 
          }
   {\mathsf{U}}_{i}\left( \bm{\sigma}_{-i}
                                      \diamond
                                      t
                              \right)$.
Lemma~\ref{basic property of mixed nash equilibria}
(Condition {\sf (2)})
immediately implies that
in a Nash equilibrium $\bm{\sigma}$,
for each player $i \in [r]$
and strategy
$s \in {\mathsf{\Sigma}}_{i}$,
$s \in {\mathsf{Supp}}(\sigma_{i})$
only if
$s$ is a best-response
for player $i$
to $\bm{\sigma}_{-i}$.                   
Lemma~\ref{basic property of mixed nash equilibria}
and the positive utility property
immediately imply:

\begin{lemma}
\label{park kafe}
Fix a win-lose game ${\mathsf{G}}$
with the positive utility property.
Then,
in a Nash equilibrium $\bm{\sigma}$,
for each player $i \in [r]$,
${\mathsf{U}}_{i}(\bm{\sigma})
 >
 0$.
\end{lemma}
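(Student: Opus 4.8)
The plan is to exhibit, for a fixed player $i \in [r]$ and a fixed Nash equilibrium $\bm{\sigma}$, a single pure strategy $t \in {\mathsf{\Sigma}}_{i}$ whose expected utility against $\bm{\sigma}_{-i}$ is strictly positive; the Nash equilibrium inequality ${\mathsf{U}}_{i}(\bm{\sigma}) \geq {\mathsf{U}}_{i}(\bm{\sigma}_{-i} \diamond t)$ then delivers ${\mathsf{U}}_{i}(\bm{\sigma}) > 0$ at once.

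First I would fix a partial profile $\widehat{{\bf s}}_{-i} \in {\mathsf{\Sigma}}_{-i}$ that is supported in $\bm{\sigma}_{-i}$, i.e.\ one with $\prod_{k \in [r] \setminus \{ i \}} \sigma_{k}(\widehat{s}_{k}) > 0$; such a partial profile exists since each $\sigma_{k}$ is a probability distribution on ${\mathsf{\Sigma}}_{k}$ and hence has nonempty support. Applying the positive utility property (Definition~\ref{positive utility property definition}) to $\widehat{{\bf s}}_{-i}$ yields a strategy $t = t(\widehat{{\bf s}}_{-i}) \in {\mathsf{\Sigma}}_{i}$ with ${\mathsf{U}}_{i}(\widehat{{\bf s}}_{-i} \diamond t) > 0$, and since ${\mathsf{G}}$ is win-lose this forces ${\mathsf{U}}_{i}(\widehat{{\bf s}}_{-i} \diamond t) = 1$.

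Next I would lower-bound the deviation payoff. Expanding ${\mathsf{U}}_{i}(\bm{\sigma}_{-i} \diamond t) = \sum_{{\bf s}_{-i} \in {\mathsf{\Sigma}}_{-i}} \left( \prod_{k \in [r] \setminus \{ i \}} \sigma_{k}(s_{k}) \right) {\mathsf{U}}_{i}({\bf s}_{-i} \diamond t)$ and using that all utilities in a win-lose game are nonnegative, every summand is $\geq 0$, so the sum is at least its single term indexed by $\widehat{{\bf s}}_{-i}$, which equals $\left( \prod_{k \in [r] \setminus \{ i \}} \sigma_{k}(\widehat{s}_{k}) \right) \cdot 1 > 0$. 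Hence ${\mathsf{U}}_{i}(\bm{\sigma}_{-i} \diamond t) > 0$, and applying the Nash equilibrium condition to the (pure) deviation $t$ gives ${\mathsf{U}}_{i}(\bm{\sigma}) \geq {\mathsf{U}}_{i}(\bm{\sigma}_{-i} \diamond t) > 0$, completing the argument.

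I do not anticipate a genuine obstacle; the only subtlety is that the strategy furnished by the positive utility property depends on the partial profile, so one cannot argue directly that a pure best response of player $i$ scores $1$ against every pure partial profile in the support of $\bm{\sigma}_{-i}$. Confining attention to a single supported partial profile and discarding the remaining (nonnegative) terms of the expectation is precisely what circumvents this, and it is also where the win-lose hypothesis is used, since it is what turns the strict positivity ${\mathsf{U}}_{i}(\widehat{{\bf s}}_{-i} \diamond t) > 0$ into the usable bound ${\mathsf{U}}_{i}(\widehat{{\bf s}}_{-i} \diamond t) = 1$.
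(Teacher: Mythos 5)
Your proof is correct and follows essentially the same route as the paper's: choose a partial profile supported in $\bm{\sigma}_{-i}$, invoke the positive utility property to get a deviation $t$ with positive (hence, by the win-lose assumption, lower-boundable) utility there, bound ${\mathsf{U}}_{i}(\bm{\sigma}_{-i}\diamond t)$ from below by that single supported term, and conclude via the Nash equilibrium best-response inequality. The only cosmetic difference is that the paper phrases it as a contradiction to ${\mathsf{U}}_{i}(\bm{\sigma})=0$, whereas you argue directly.
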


\begin{proof}
Assume,
by way of contradiction,
that ${\mathsf{U}}_{i}(\bm{\sigma})
      =
      0$.
Choose a partial profile
${\bf s}_{-i}$
supported in $\bm{\sigma}_{-i}$;
so,
${\mathbb P}_{\bm{\sigma}_{-i}}({\bf s}_{-i})
    >
    0$.
By the positive utility property,
there is a strategy
$t({\bf s}_{-i}) 
 \in
 {\mathsf{\Sigma}}_{i}$
with
${\mathsf{U}}_{i}\left( {\bf s}_{-i}
                        \diamond
                        t({\bf s}_{-i})  
                 \right)
 >
 0$.                
Since ${\mathsf{G}}$ is win-lose,
{
\small
\begin{eqnarray*}
       {\mathsf{U}}_{i}\left( \bm{\sigma}_{-i}
                              \diamond
                              t({\bf s}_{-i})
                       \right)  
\geq & {\mathbb P}_{\bm{\sigma}_{-i}}({\bf s}_{-i})
       \cdot
       {\mathsf{U}}_{i}\left( \bm{\sigma}_{-i}
                                           \diamond
                                           t({\bf s}_{-i})
                       \right)\ \ 
>\ \
0\ \ 
=\ \
{\mathsf{U}}_{i}(\bm{\sigma})\, .
\end{eqnarray*}
}
A contradiction
to Lemma~\ref{basic property of mixed nash equilibria}.
\end{proof}

\noindent
We conclude with a simple property
of Nash equilibria
for win-lose games.

\begin{lemma}
\label{if zero utility}
Fix a win-lose bimatrix game 
$\langle {\mathsf{R}},
              {\mathsf{C}}
 \rangle$ 
with a Nash equilibrium
${\bm{\sigma}}
  \in
  {\mathcal{NE}}(\langle {\mathsf{R}},
                                        {\mathsf{C}}
                            \rangle)$
such that
${\mathsf U}_{i}({\bm{\sigma}})
  =
  0$
for some player $i \in [2]$.
Then,
$\langle {\mathsf{R}},
             {\mathsf{C}}
  \rangle$
has a pure Nash equilibrium.
\end{lemma}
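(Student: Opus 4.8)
The plan is to exploit that all utilities are $0$ or $1$, so that a zero expected utility forces large all-zero regions in the payoff matrices, and then to locate a pure profile living entirely inside such a region.

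I would first reduce to the case $i = 1$, the case $i = 2$ being symmetric (transpose the game). Since every entry of $\mathsf{R}$ is nonnegative and $\mathsf{U}_{1}(\bm{\sigma}) = 0$, every profile supported in $\bm{\sigma}$ has $\mathsf{R}$-entry $0$. Applying Lemma~\ref{basic property of mixed nash equilibria} (Condition {\sf (2)}) to the row player, each $t \notin \mathsf{Supp}(\sigma_{1})$ satisfies $0 = \mathsf{U}_{1}(\bm{\sigma}) \geq \mathsf{U}_{1}(\bm{\sigma}_{-1} \diamond t) = \sum_{s_{2}} \sigma_{2}(s_{2})\, \mathsf{R}[t, s_{2}]$, so $\mathsf{R}[t, s_{2}] = 0$ for every $s_{2} \in \mathsf{Supp}(\sigma_{2})$. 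Combining the two observations, for every $s_{2} \in \mathsf{Supp}(\sigma_{2})$ the whole column $s_{2}$ of $\mathsf{R}$ is all-zeros; consequently, every pure strategy of the row player is a best-response whenever the column player plays such an $s_{2}$.

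Next I would fix an arbitrary $s_{2}^{\ast} \in \mathsf{Supp}(\sigma_{2})$ and distinguish two cases. If there is a row $s_{1}^{\ast}$ with $\mathsf{C}[s_{1}^{\ast}, s_{2}^{\ast}] = 1$, then, as utilities lie in $\{0, 1\}$, $s_{2}^{\ast}$ is a best-response of the column player to $s_{1}^{\ast}$; together with the previous paragraph, $\langle s_{1}^{\ast}, s_{2}^{\ast} \rangle$ is a pure Nash equilibrium. Otherwise the column $s_{2}^{\ast}$ of $\mathsf{C}$ is all-zeros, so $\mathsf{U}_{2}(\langle \sigma_{1}, s_{2}^{\ast} \rangle) = 0$; since $s_{2}^{\ast} \in \mathsf{Supp}(\sigma_{2})$, it is a best-response of the column player to $\sigma_{1}$, hence $\mathsf{U}_{2}(\langle \sigma_{1}, t \rangle) \leq 0$ for every $t$, which forces $\mathsf{C}[s_{1}, t] = 0$ for every $s_{1} \in \mathsf{Supp}(\sigma_{1})$ and every $t$. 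Picking any $s_{1}^{\ast} \in \mathsf{Supp}(\sigma_{1})$, row $s_{1}^{\ast}$ of $\mathsf{C}$ and column $s_{2}^{\ast}$ of $\mathsf{R}$ are both all-zeros, so both players are indifferent at $\langle s_{1}^{\ast}, s_{2}^{\ast} \rangle$, which is again a pure Nash equilibrium.

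The entry-chasing in the first steps is routine; the step I expect to be the only delicate point is the second case, where the statement ``column $s_{2}^{\ast}$ of $\mathsf{C}$ is all-zeros'' has to be upgraded to ``some row of $\mathsf{C}$ is all-zeros'' by invoking the column player's equilibrium condition against the mixed strategy $\sigma_{1}$, rather than hoping that an all-zero row of $\mathsf{C}$ exists for free.
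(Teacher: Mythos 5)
Your proof is correct, and it rests on the same two pillars as the paper's argument: the zero expected utility of player $1$, combined with Lemma~\ref{basic property of mixed nash equilibria}, forces every column of $\mathsf{R}$ indexed by ${\mathsf{Supp}}(\sigma_{2})$ to be all-zeros, so every row is a best-response to any such column, and it then remains to pick a cell in such a column at which the column player is also best-responding. Where you differ is in how that cell is produced. The paper takes the column $\overline{s} \in {\mathsf{Supp}}(\sigma_{2})$ maximizing $\max_{s_{1} \in {\mathsf{Supp}}(\sigma_{1})} {\mathsf{U}}_{2}(\langle s_{1}, s_{2} \rangle)$ and pairs it with an \emph{arbitrary} $s^{\ast} \in {\mathsf{Supp}}(\sigma_{1})$, asserting that the definition of $\overline{s}$ already prevents player $2$ from improving; you instead split on whether the chosen column of $\mathsf{C}$ contains a $1$ --- in which case you pair it with a row realizing that $1$ --- and otherwise invoke player $2$'s equilibrium conditions to get ${\mathsf{U}}_{2}(\bm{\sigma}) = 0$ and hence that every row of $\mathsf{C}$ indexed by ${\mathsf{Supp}}(\sigma_{1})$ is all-zeros, after which any supported cell works. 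This extra care buys real robustness: with an arbitrary $s^{\ast}$ the one-line justification for player $2$ is delicate (take $\mathsf{R}$ the $2\times 2$ all-zeros matrix, $\mathsf{C}$ the identity and $\bm{\sigma}$ uniform; then $\langle s^{\ast}, \overline{s} \rangle$ with $s^{\ast}$ chosen off the maximizing row of column $\overline{s}$ is not an equilibrium, although pure equilibria do exist), whereas your choice of the row realizing the $1$, respectively your Case-2 upgrade from an all-zero column of $\mathsf{C}$ to an all-zero row of $\mathsf{C}$, closes exactly this point. One harmless difference: in your first case $s_{1}^{\ast}$ need not lie in ${\mathsf{Supp}}(\sigma_{1})$, which is fine precisely because every row best-responds to a column in ${\mathsf{Supp}}(\sigma_{2})$.
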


\begin{proof}
For simpler notation,
assume, 
without loss of generality, 
that $i=1$. 
Denote 
\begin{eqnarray*}
\overline{s}
&  := &
  \textrm{argmax}_{s_{2}
                                 \in
                                 {\mathsf{Supp}}(\sigma_{2})}
 \max_{s_{1}                                 
            \in
            {\mathsf{Supp}}(\sigma_{1}) }
  \left\{ {\mathsf{U}}_{2}
                            (\langle s_{1}, s_{2}
                             \rangle)
  \right\}\, ;
\end{eqnarray*}  
so,
$\overline{s}$
is the strategy in ${\mathsf{Supp}}(\sigma_{2})$
that incurs the maximum possible utility to player $2$
over all strategies $s_{1}$ from ${\mathsf{Supp}}(\sigma_{1})$
chosen by player $1$.  
Choose an arbitrary strategy
$s^{\ast}
  \in
  {\mathsf{Supp}}(\sigma_{1})$.
We shall prove that
the profile $\langle s^{\ast},
                               \overline{s}
                   \rangle$ 
is a pure Nash equilibrium for 
$\langle {\mathsf{R}},
              {\mathsf{C}}
  \rangle$.
Since 
${\bm{\sigma}}$
is a Nash equilibrium
for 
$\langle {\mathsf{R}},
              {\mathsf{C}}
  \rangle$,
Lemma~\ref{basic property of mixed nash equilibria} (Condition {\sf (2)})
implies that
for each strategy 
$s \in {\mathsf{\Sigma}}_{1}$, 
${\mathsf{U}}_{1}
                 ({\bm{\sigma}}_{-1} \diamond s)
  \leq
  {\mathsf{U}}_{1}({\bm{\sigma}}) = 0$.
Since 
$\langle {\mathsf{R}},
              {\mathsf{C}}
  \rangle$ is win-lose, 
it follows that
${\mathsf{U}}_{1}
                 ({\bm{\sigma}}_{-1} \diamond s)
  =
  0$,
which implies that 
for each strategy 
$s \in {\mathsf{\Sigma}}_{1}$,
${\mathsf{U}}_{1}(\langle s,  s_{2}
                               \rangle)
  =
  0$
for all strategies
$s_{2}
  \in
  {\mathsf{Supp}}(\sigma_{2})$. 
Since 
$\overline{s}
  \in
  {\mathsf{Supp}}(\sigma_{2})$,
it follows that
for each strategy 
$s \in {\mathsf{\Sigma}}_{1}$, 
${\mathsf U}_{1}(\langle s, \overline{s}
                             \rangle)
  =
  0$.
In particular,
${\mathsf{U}}_{1}
                (\langle s^{\ast}, \overline{s}
                \rangle)
  = 0$,
so that player $1$ cannot improve 
by deviating from $s^{\ast}$. 
By the definition of $\overline{s}$, 
player $2$ cannot improve either 
by deviating from $\overline{s}$.
Hence,
$\langle s^{\ast}, \overline{s}
  \rangle$
is a pure Nash equilibrium 
for $\langle {\mathsf{R}},
                   {\mathsf{C}}
       \rangle$.
\end{proof}


\subsection{Decision Problems about Nash Equilibria}
\label{framework decision problems}

\noindent
We shall assume
some basic familiarity
of the reader
with the basic notions
of ${\cal NP}$-completeness,
as outlined, for example,
in~\cite{GJ79}. 
All decision problems
will be stated 
in the style of~\cite{GJ79},
where {\sc I.} and {\sc Q.}
stand for {\sc Instance} and {\sc Question},
respectively.
They are categorized
in six groups.

\noindent
\underline{{\it Group I} ---}
Question
about cardinality:

\vspace{0.2cm}
\noindent
{\sf $\exists$ $k+1$ NASH} (with $k \geq 1$)

\begin{tabular}{lp{13.7cm}l}
\hline
{\sc I.:} & A game ${\sf G}$.  \\
{\sc Q.:} & Does $\sf G$ have at least $k+1$ Nash equilibria?  \\
\hline
\end{tabular}

\noindent
\underline{{\it Group II} ---}
Questions about the expected utilities:

\vspace{0.2cm}
\noindent
{\sf $\exists$ NASH WITH LARGE UTILITIES}

\begin{tabular}{lp{13.7cm}l}
\hline
{\sc I.:} & A game ${\sf G}$ and a number $u$.                                    \\
{\sc Q.:} & Is there a Nash equilibrium $\bm\sigma$ 
            such that for each player $i\in [r]$, 
            ${\sf U}_i({\bm\sigma})\geq u$?                                        \\
\hline
\end{tabular}

\vspace{0.2cm}
\noindent
{\sf $\exists$ NASH WITH SMALL UTILITIES}

\begin{tabular}{lp{13.7cm}l}
\hline
{\sc I.:} & A game ${\sf G}$ and a number $u$.                          \\
{\sc Q.:} & Is there a Nash equilibrium $\bm\sigma$ 
            such that
            for each player $i\in [r]$, 
            ${\sf U}_i({\bm\sigma})\leq u$?                              \\
\hline
\end{tabular}

\vspace{0.2cm}
\noindent
{\sf $\exists$ NASH WITH LARGE TOTAL UTILITY}

\begin{tabular}{lp{13.7cm}l}
\hline
{\sc I.:} & A game ${\sf G}$ and a number $u$.             \\
{\sc Q.:} & Is there a Nash equilibrium $\bm\sigma$ 
            such that
            $\sum_{i\in [r]}{\sf U}_i({\bm\sigma})\geq u$?  \\
\hline
\end{tabular}

\vspace{0.2cm}
\noindent
{\sf $\exists$ NASH WITH SMALL TOTAL UTILITY}

\begin{tabular}{lp{13.7cm}l}
\hline
{\sc I.:} & A game ${\sf G}$ and a number $u$.\\
{\sc Q.:} & Is there a Nash equilibrium $\bm\sigma$ 
            such that
            $\sum_{i\in [r]}{\sf U}_i({\bm\sigma})\leq u$?  \\
\hline
\end{tabular}

\noindent
\underline{{\it Group III} ---}
Questions about the supports:

\vspace{0.2cm}
\noindent
{\sf $\exists$ NASH WITH LARGE SUPPORTS}

\begin{tabular}{lp{13.7cm}l}
\hline
{\sc I.:} & A game ${\sf G}$ and an integer $k\geq 1$.\\
{\sc Q.:} & Is there a Nash equilibrium $\bm\sigma$ 
            such that for each player $i\in [r]$, 
            $|{\sf Supp}(\sigma_i)|\geq k$?  \\
\hline
\end{tabular}

\vspace{0.2cm}
\noindent
{\sf $\exists$ NASH WITH SMALL SUPPORTS}

\begin{tabular}{lp{13.7cm}l}
\hline
{\sc I.:} & A game ${\sf G}$ and an integer $k\geq 1$.\\
{\sc Q.:} & Is there a Nash equilibrium $\bm\sigma$ 
            such that
            for each player $i\in [r]$,
            $|{\sf Supp}(\sigma_i)|\leq k$?            \\
\hline
\end{tabular}

\vspace{0.2cm}
\noindent
{\sf $\exists$ NASH WITH RESTRICTING SUPPORTS}

\begin{tabular}{lp{13.7cm}l}
\hline
{\sc I.:} & A game ${\sf G}$ 
            and a subset of strategies 
            ${\sf T}_i\subseteq{\sf\Sigma}_i$ 
            for each player $i\in [r]$.                 \\
{\sc Q.:} & Is there a Nash equilibrium $\bm\sigma$ 
            such that
            for each player $i\in [r]$, 
            ${\sf T}_i\subseteq{\sf Supp}(\sigma_i)$?  \\
\hline
\end{tabular}

\vspace{0.2cm}
\noindent
{\sf $\exists$ NASH WITH RESTRICTED SUPPORTS}

\begin{tabular}{lp{13.7cm}l}
\hline
{\sc I.:} & A game ${\sf G}$ 
            and a subset of strategies 
            ${\sf T}_i\subseteq{\sf\Sigma}_i$ 
            for each player $i\in [r]$.                \\
{\sc Q.:} & Is there a Nash equilibrium $\bm\sigma$ 
            such that
            for each player $i\in [r]$, 
            ${\sf Supp}(\sigma_i)\subseteq{\sf T}_i$?  \\
\hline
\end{tabular}

\noindent
\underline{{\it Group IV} ---}
Questions about refinements of Nash equilibrium:

\noindent
A Nash equilibrium $\bm\sigma$ is 
{\it Pareto-Optimal} 
if for each mixed profile
$\widehat{\bm\sigma}$
where there is a player $i \in [r]$
with
${\sf U}_{i}(\widehat{\bm\sigma})
 >
 {\sf U}_{i}({\bm\sigma})$,
there is a player $j \in [r]$
such that 
${\sf U}_{j}(\widehat{\bm\sigma})
 <
 {\sf U}_{j}({\bm\sigma})$;
so,
loosely speaking,
there is no other mixed profile
where
at least one player
is strictly better off
and every player
is at least as well off.
Denote as 
${\mathsf{Diff}} \left( \bm{\sigma},
                        \widehat{\bm{\sigma}}
                 \right)       
 :=
 \{ i \in [r]:
    \sigma_i \neq \widehat{\sigma}_i
 \}$ 
the set of players 
with different mixed strategies
in the mixed profiles ${\bm\sigma}$ and $\widehat{\bm\sigma}$.
A Nash equilibrium $\bm\sigma$ is 
{\it Strongly Pareto-Optimal} 
if for each mixed profile $\widehat{\bm\sigma}$ 
with a player $i \in [r]$
such that
${\sf U}_{i}(\widehat{\bm\sigma})
 >
 {\sf U}_{i}({\bm\sigma})$, 
there is a player
$j \in
  {\mathsf{Diff}}({\bm\sigma},\widehat{\bm\sigma})$
such that 
${\sf U}_{j}(\widehat{\bm\sigma})
 \leq
 {\sf U}_j({\bm\sigma})$;
so,
loosely speaking,
there is no other mixed profile
where
at least one player
is strictly better off
and every player
with a different mixed strategy
is strictly better off.

\vspace{0.2cm}
\noindent
{\sf $\exists$ PARETO-OPTIMAL NASH}

\begin{tabular}{lp{13.7cm}l}
\hline
{\sc I.:} & A game ${\sf G}$.\\
{\sc Q.:} & Is there a Pareto-Optimal Nash equilibrium?  \\
\hline
\end{tabular}

\vspace{0.2cm}
\noindent
{\sf $\exists$ $\neg$ PARETO-OPTIMAL NASH}

\begin{tabular}{lp{13.7cm}l}
\hline
{\sc I.:} & A game ${\sf G}$.\\
{\sc Q.:} & Is there a Nash equilibrium which is not Pareto-Optimal?  \\
\hline
\end{tabular}

\vspace{0.2cm}
\noindent
{\sf $\exists$ STRONGLY PARETO-OPTIMAL NASH}

\begin{tabular}{lp{13.7cm}l}
\hline
{\sc I.:} & A game ${\sf G}$.\\
{\sc Q.:} & Is there a Strongly Pareto-Optimal Nash equilibrium?  \\
\hline
\end{tabular}

\vspace{0.2cm}
\noindent
{\sf $\exists$ $\neg$ STRONGLY PARETO-OPTIMAL NASH}

\begin{tabular}{lp{13.7cm}l}
\hline
{\sc I.:} & A game ${\sf G}$.\\
{\sc Q.:} & Is there a Nash equilibrium which is not Strongly Pareto-Optimal?  \\
\hline
\end{tabular}

\noindent
\underline{{\it Group V} ---}
Questions about the probabilities:

\vspace{0.2cm}
\noindent
{\sf $\exists$ NASH WITH SMALL PROBABILITIES}

\begin{tabular}{lp{13.7cm}l}
\hline
{\sc I.:} & A game ${\mathsf{G}}$.                                 \\
{\sc Q.:} & Is there a Nash equilibrium $\bm{\sigma}$
            such that 
            for each player $i \in [r]$, 
            $\max_{s \in {\mathsf{\Sigma}}_{i}}
                 \sigma_{i} (s) 
              \leq
              \frac{\textstyle 1}
                      {\textstyle 2}$?                                  \\
\hline
\end{tabular}

\vspace{0.2cm}
\noindent
{\sf $\exists$ UNIFORM NASH}

\begin{tabular}{lp{13.7cm}l}
\hline
{\sc I.:} & A game ${\mathsf{G}}$.                                \\
{\sc Q.:} & Does ${\mathsf{G}}$ have a uniform Nash equilibrium?  \\
\hline
\end{tabular}

\vspace{0.2cm}
\noindent
{\sf $\exists$ $\neg$ UNIFORM NASH}

\begin{tabular}{lp{13.7cm}l}
\hline
{\sc I.:} & A game ${\mathsf{G}}$.                                \\
{\sc Q.:} & Does ${\mathsf{G}}$ have a non-uniform Nash equilibrium?  \\
\hline
\end{tabular}

\vspace{0.2cm}
\noindent
{\sf $\exists$ SYMMETRIC NASH}

\begin{tabular}{lp{13.7cm}l}
\hline
{\sc I.:} & A game ${\mathsf{G}}$.                                \\
{\sc Q.:} & Does ${\mathsf{G}}$ have a symmetric Nash equilibrium?  \\
\hline
\end{tabular}

\vspace{0.2cm}
\noindent
{\sf $\exists$ $\neg$ SYMMETRIC NASH}

\begin{tabular}{lp{13.7cm}l}
\hline
{\sc I.:} & A game ${\mathsf{G}}$.                                \\
{\sc Q.:} & Does ${\mathsf{G}}$ have a non-symmetric Nash equilibrium?  \\
\hline
\end{tabular}

\vspace{0.2cm}
\noindent
{\sf $\exists$ RATIONAL NASH}

\begin{tabular}{lp{13.7cm}l}
\hline
{\sc I.:} & A game ${\mathsf{G}}$.                                 \\
{\sc Q.:} & Does ${\mathsf{G}}$ have a rational Nash equilibrium?  \\
\hline
\end{tabular}

\noindent
Restricted to general bimatrix games, 
all previous decision problems 
belong to $\cal NP$:
given the polynomial-length supports
for a mixed profile,
it is polynomial time verifiable
that it is a Nash equilibrium
satisfying the property
associated with the decision problem.
{\sf $\exists$ RATIONAL NASH}
is trivial
for general bimatrix games,
and
{\sf $\exists$ SYMMETRIC NASH}
is trivial for symmetric games.
Note that
{\sf $\exists$ UNIFORM NASH}
and
{\sf $\exists$ RATIONAL NASH}
belong to ${\mathcal{NP}}$
for any number of players. 

Each of the previous decision problems
has a corresponding {\it cardinality} or {\it counting} version:
the problem
of computing the number of Nash equilibria
witnessing the validity of {\sc Question}.\footnote{{\sf $\exists$ $k+1$ NASH}
                                                                      has no counting version
                                                                      as it is already defined with a 
                                                                      cardinality question.} 
Note that for general bimatrix games,
the counting versions of
all previous decision problems
belong to $\# {\mathcal{P}}$
(thanks again to the efficient verifiability property
of Nash equilibria).
Furthermore,
each of the decision problems
has a {\it parity} version:
the problem of computing the parity
of the number of Nash equilibria
witnessing the validity of {\sc Question}.
The parity versions of all previous decision problems
belong to $\oplus {\mathcal{P}}$.\footnote{$\oplus {\mathcal{P}}$, 
                                                                               read as {\it Parity $\oplus {\mathcal{P}}$,}
                                                                               is the complexity class formalizing the question of the parity
                                                                               of the number of solutions
                                                                               to a combinatorial problem.
                                                                               Formally,
                                                                               $\oplus {\mathcal{P}}$
                                                                               is the class of sets $S$ such that
                                                                               there is a nondeterministic Turing machine 
                                                                               which, on input $x$,
                                                                               has an odd number of accepting computations
                                                                               if and only if $x \in S$.
                                                                               We shall adopt a definition
                                                                               of $\oplus {\mathcal{P}}$-completeness
                                                                               using polynomial-time many-to-one reductions.
                                                                               The development of the theory
                                                                               of $\oplus {\mathcal{P}}$-complete parity problems
                                                                               is rather limited --- see the discussion in~\cite{V05}.}
We shall use $\#$ (resp., $\oplus$)
in the place of $\exists$
to denote the counting (resp., parity) versions;
for example,
{\sf $\#$ RATIONAL NASH}
(resp.,
{\sf $\oplus$ RATIONAL NASH})
denotes the counting (resp., parity)
version of
{\sf $\exists$ RATIONAL NASH}. 
Clearly,
                                        by the mixed exchangeability property of symmetric bimatrix games,
                                        the number of non-symmetric Nash equilibria
                                        in a symmetric bimatrix game
                                        is even;
                                        this is because
                                        $\langle \sigma_{1}, \sigma_{2} 
                                          \rangle$
                                        is a non-symmetric Nash equilibrium
                                        if and only if
                                        $\langle \sigma_{2}, \sigma_{1} 
                                          \rangle$ is.    
                                        Hence,
                                        {\sf $\oplus \neg$ SYMMETRIC NASH}
                                        for symmetric bimatrix games
                                        is in ${\mathcal{P}}$.

\noindent
\underline{{\it Group VI} ---}
Equivalence property:

\vspace{0.2cm}
\noindent
{\sf NASH-EQUIVALENCE}

\begin{tabular}{lp{13.7cm}l}
\hline
{\sc I.:} & Two games
            $\widehat{{\sf G}}$
            and
            ${\sf G}$
            from $r$-${\cal G}$,
            for some integer $r \geq 2$.                                             \\
{\sc Q.:} & Are $\widehat{{\sf G}}$
            and
            ${\sf G}$
            Nash-equivalent?                                                         \\
\hline
\end{tabular}

\noindent
For a fixed game $\widehat{{\sf G}}$, 
called the {\it gadget game,} 
a parameterized restriction of {\sf NASH-EQUIVALENCE} 
with a single input (the game ${\mathsf{G}}$) results.

\vspace{0.2cm}
\noindent
{\sf NASH-EQUIVALENCE}($\widehat{{\sf G}}$)

\begin{tabular}{lp{13.7cm}l}
\hline
{\sc I.:} & A game
            ${\sf G}$
            from $r$-${\cal G}$ (where $\widehat{\sf G}$ is
            from $r$-${\cal G}$).                                                    \\
{\sc Q.:} & Are $\widehat{{\sf G}}$
            and ${\sf G}$
            Nash-equivalent?                                                         \\
\hline
\end{tabular}

\noindent
Restricted to general bimatrix games,
{\sf NASH-EQUIVALENCE}
and
{\sf NASH-EQUIVALENCE}$({\widehat{{\mathsf{G}}}})$
belong to co-${\mathcal{NP}}$:
given the polynomial-length supports
for a mixed profile,
it is polynomial time verifiable that
it is a Nash equilibrium
for exactly one of the two games.

\noindent
Matching ${\mathcal{NP}}$-
and co-${\mathcal{NP}}$-hardness results
for the decision problems above
are \textcolor{black}{later} summarized in
Figure~\ref{table1}.


\section{Win-Lose Bimatrix Games with the Positive Utility Property}
\label{winlose bimatrix with pup}

\noindent
We now prove that
assuming the positive utility property
does not simplify
the search problem
for a win-lose bimatrix game.
Fix a win-lose bimatrix game 
$\langle {\mathsf{R}},
              {\mathsf{C}}
  \rangle$.
We start with a preliminary definition.   
For a player $i \in [2]$,  
a strategy 
$\overline{s}
  \in
  {\mathsf{\Sigma}}_{\overline{i}}$ 
is an {\it all-zeros counter-strategy} 
against player $i$ 
if for every profile
$\langle s_{1}, s_{2}
 \rangle
 \in
 {\mathsf{\Sigma}}$
with
$s_{\overline{i}}=\overline{s}$, 
${\mathsf U}_{i}(\langle s_{1}, s_{2}
                            \rangle)
  =
  0$;
that is, 
an all-zeros counter-strategy against the row (resp., column) player
is a column (resp., row) 
of ${\mathsf{R}}$ 
(resp., ${\mathsf{C}}$) 
made up only of zero entries.
Clearly,
the positive utility property
excludes all-zeros counter-strategies.
We prove:

\begin{proposition}
\label{pup is not restrictive}
Fix a win-lose bimatrix game 
$\langle {\mathsf{R}},
             {\mathsf{C}}
  \rangle$. 
Then, 
one of two conditions holds:
\begin{enumerate}

\item[{\sf (C.1)}]
$\langle{\mathsf{R}},
            {\mathsf{C}}
  \rangle$
has a pure Nash equilibrium.

\item[{\sf (C.2)}]
There is a polynomial time constructible
win-lose bimatrix game
\textcolor{black}{$\langle   {\bar{{\mathsf{R}}}},
                                                       {\bar{{\mathsf{C}}}}
                                          \rangle$}
with the positive utility property 
such that 
${\mathcal{NE}}(\langle {\bar{{\mathsf{R}}}},
                                        {\bar{{\mathsf{C}}}}
                            \rangle)
  =
  {\mathcal{NE}}(\langle {\mathsf{R}},
                                        {\mathsf{C}}
                           \rangle)$.
\end{enumerate}
\end{proposition}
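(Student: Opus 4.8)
The plan is to reduce to the case where $\langle\mathsf{R},\mathsf{C}\rangle$ has no pure Nash equilibrium --- otherwise {\sf (C.1)} already holds --- and then to show that in that case the all-zeros columns of $\mathsf{R}$ and the all-zeros rows of $\mathsf{C}$ are ``inert'' and may be overwritten with $1$'s without disturbing the set of Nash equilibria. Write $J$ for the set of all-zeros columns of $\mathsf{R}$ and $I$ for the set of all-zeros rows of $\mathsf{C}$; by the remark following Definition~\ref{positive utility property definition}, $\langle\mathsf{R},\mathsf{C}\rangle$ already has the positive utility property precisely when $I=J=\emptyset$, in which case {\sf (C.2)} holds trivially with $\langle\bar{\mathsf{R}},\bar{\mathsf{C}}\rangle:=\langle\mathsf{R},\mathsf{C}\rangle$. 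So I may assume {\sf (C.1)} fails and, say, $J\neq\emptyset$; the case $I\neq\emptyset$ is handled symmetrically, by transposing the roles of the two players.

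First I would record two structural facts, each proved by exhibiting a pure Nash equilibrium whenever it fails. \emph{(A)} For every $j\in J$, column $j$ of $\mathsf{C}$ is all-zeros as well: otherwise, choosing $i^{\ast}$ with $\mathsf{C}[i^{\ast},j]=1$ makes $\langle i^{\ast},j\rangle$ a pure Nash equilibrium, since the row player is pinned at utility $0$ throughout column $j$ and the column player already attains the maximum value $1$. \emph{(B)} $I=\emptyset$: otherwise, for any $i_{0}\in I$ and any $j_{0}\in J$ the profile $\langle i_{0},j_{0}\rangle$ is a pure Nash equilibrium, both players being pinned at utility $0$ with no row or column to which a deviation is profitable.

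Next I would define $\langle\bar{\mathsf{R}},\bar{\mathsf{C}}\rangle$ by $\bar{\mathsf{C}}:=\mathsf{C}$ and $\bar{\mathsf{R}}[i,j]:=1$ for all $i$ when $j\in J$, and $\bar{\mathsf{R}}[i,j]:=\mathsf{R}[i,j]$ otherwise; this is polynomial-time constructible. No column of $\bar{\mathsf{R}}$ is all-zeros (the columns in $J$ are all-ones, the others were already not all-zeros) and, by \emph{(B)}, no row of $\bar{\mathsf{C}}=\mathsf{C}$ is all-zeros, so $\langle\bar{\mathsf{R}},\bar{\mathsf{C}}\rangle$ has the positive utility property. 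I would then check that $\langle\bar{\mathsf{R}},\bar{\mathsf{C}}\rangle$ still has no pure Nash equilibrium: for a candidate $\langle i^{\ast},j^{\ast}\rangle$, if $j^{\ast}\notin J$ then column $j^{\ast}$ of $\bar{\mathsf{R}}$ and all of $\bar{\mathsf{C}}$ coincide with those of $\langle\mathsf{R},\mathsf{C}\rangle$, so it would be a pure Nash equilibrium of $\langle\mathsf{R},\mathsf{C}\rangle$, a contradiction; and if $j^{\ast}\in J$ then $\bar{\mathsf{C}}[i^{\ast},j^{\ast}]=\mathsf{C}[i^{\ast},j^{\ast}]=0$ by \emph{(A)}, so the column player is at a best response only if row $i^{\ast}$ of $\mathsf{C}$ is all-zeros, contradicting \emph{(B)}. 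Hence, by the contrapositive of Lemma~\ref{if zero utility}, in every Nash equilibrium of \emph{either} game both players have strictly positive utility.

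Finally I would establish the equivalence $\mathcal{NE}(\langle\bar{\mathsf{R}},\bar{\mathsf{C}}\rangle)=\mathcal{NE}(\langle\mathsf{R},\mathsf{C}\rangle)$, whose crux is the claim that in any Nash equilibrium $\bm{\sigma}$ of either game the column player puts no probability on $J$: if $\sigma_{2}(j)>0$ for some $j\in J$ then $j$ is a best response for the column player, forcing her utility to equal $\sum_{i}\sigma_{1}(i)\,\mathsf{C}[i,j]=0$ by \emph{(A)} (using $\bar{\mathsf{C}}=\mathsf{C}$), which contradicts the strict positivity just established. Once $\sigma_{2}$ gives zero weight to $J$, the entries in which $\bar{\mathsf{R}}$ differs from $\mathsf{R}$ --- all of them inside columns indexed by $J$ --- contribute nothing to the row player's expected utility under $\bm{\sigma}$, nor to the utility of any of her pure deviations, while $\mathsf{C}$ is untouched; so the two games satisfy the conditions of Lemma~\ref{basic property of mixed nash equilibria} at $\bm{\sigma}$ simultaneously, which yields both inclusions. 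The part that needs the most care is facts \emph{(A)} and \emph{(B)}: they are exactly what forbids an all-zeros column of $\mathsf{R}$ from ever being played in a Nash equilibrium when there is no pure one, and what makes the repair one-sided; everything afterwards is bookkeeping with Lemmas~\ref{basic property of mixed nash equilibria} and~\ref{if zero utility}.
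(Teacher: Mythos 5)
Your proof is correct, but it takes a genuinely different route from the paper's. The paper keeps the original matrices intact and \emph{augments} the game: it adds one fresh strategy $n+1$ to each player, with utilities chosen so that the new strategy supplies the missing positive responses against all-zeros counter-strategies while never being profitable at (or creating) an equilibrium; the equivalence of the equilibrium sets is then proved in two lemmas (no equilibrium is destroyed, none is created), and the argument needs no structural analysis of how the all-zeros rows and columns of ${\mathsf{R}}$ and ${\mathsf{C}}$ interact. You instead exploit the absence of a pure Nash equilibrium to prove two structural facts --- every all-zeros column of ${\mathsf{R}}$ is also an all-zeros column of ${\mathsf{C}}$, and ${\mathsf{C}}$ then has no all-zeros row --- which let you \emph{overwrite} the offending columns of ${\mathsf{R}}$ with $1$'s (symmetrically for rows of ${\mathsf{C}}$) without adding strategies; the equivalence then follows from the single observation that no Nash equilibrium of either game can place probability on the overwritten columns, since (by the contrapositive of Lemma~\ref{if zero utility}) all equilibrium utilities are strictly positive while those columns pay the column player zero. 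Both constructions are polynomial time. Your version buys a smaller game (same strategy sets) and a shorter equivalence argument, at the cost of the preliminary case facts {\em (A)} and {\em (B)} and a case split between ``column-side'' and ``row-side'' repair; the paper's version is more uniform, handling all-zeros rows and columns simultaneously with one added strategy and no structural preprocessing. I checked your facts {\em (A)} and {\em (B)}, the positive utility property of $\langle \bar{{\mathsf{R}}}, \bar{{\mathsf{C}}} \rangle$, the absence of a pure Nash equilibrium in the modified game, and both inclusions of the equilibrium sets; all steps go through.
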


\begin{proof}
If Condition {\sf (C.1)} holds,
then we are done.
So assume that
Condition {\sf (C.1)}
does not hold. 
Note that if at least one
of ${\mathsf{R}}$ and ${\mathsf{C}}$
is the null matrix,
then
$\langle {\mathsf{R}},
              {\mathsf{C}}
  \rangle$
has a pure Nash equilibrium
and Condition {\sf (C.1)} holds.
It follows that none of
${\mathsf{R}}$ and ${\mathsf{C}}$
is the null matrix.
Denote as $c\geq 0$ and $r \geq 0$
the number of all-zeros counter-strategies 
against the row and the column player, respectively,
in the game
$\langle {\mathsf{R}},
              {\mathsf{C}}
  \rangle$. 
Note that
$\langle {\mathsf{R}},
              {\mathsf{C}}
  \rangle$
has the positive utility property
if and only if 
$r = c = 0$.
Renumber now the players' strategies 
so that the first $r$ strategies for player $1$ 
are the all-zeros counter-strategies against the column player,
and the first $c$ strategies for player $2$ 
are the all-zeros counter-strategies 
against the row player.
Construct from
$\langle {\mathsf{R}},
              {\mathsf{C}}
  \rangle$
a win-lose bimatrix game
$\langle {\bar{{\mathsf{R}}}},
              {\bar{{\mathsf{C}}}}
  \rangle$
as follows:

\begin{center}
\fbox{
\begin{minipage}{6.0in}
\begin{itemize}

\item[{\sf (0)}]
If $r = c = 0$,
then
$\langle {\bar{{\mathsf{R}}}},
              {\bar{{\mathsf{C}}}}
  \rangle
  :=
  \langle {\mathsf{R}},
              {\mathsf{C}}
  \rangle$.

\item[{\sf (1)}]
Otherwise,
add a new strategy $n+1$ 
to the strategy set of each player
and set:
\begin{itemize}

\item[{\sf (1.0)}] 
$\textcolor{black}{{\bar{{\mathsf{U}}}}}(\langle n+1, 
                                  n+1
                     \rangle)
  =
  \langle 0, 0
  \rangle$.

\item[{\sf (1.1)}] 
$\textcolor{black}{{\bar{{\mathsf{U}}}}}(\langle n+1, s_{2}
                        \rangle)
  =
  \langle 1, 0
  \rangle$ 
for $s_{2} \leq c$,
or
$\langle 0, 1 \rangle$
for
$c+1 \leq s_{2} \leq n$.

\item[{\sf (1.2)}] 
$\textcolor{black}{{\bar{{\mathsf{U}}}}}(\langle s_{1}, n+1
                        \rangle)
  =
  \langle 0,1
  \rangle$ 
for $s_{1} \leq r$,
or
$\langle 1, 0 \rangle$
for $r+1 \leq s_{1} \leq n$.
\end{itemize}
\end{itemize}
\end{minipage}
}
\end{center}

\noindent
Clearly,
this is a polynomial time construction.
We prove:

\begin{lemma}
\label{ictcs lecce}
The game
$\langle {\bar{{\mathsf{R}}}},
                                                      {\bar{{\mathsf{C}}}}
  \rangle$
has the positive utility property.
\end{lemma}

\begin{proof}
Note that
in Case {\sf (0)},
the game 
$\langle {\bar{{\mathsf{R}}}},
              {\bar{{\mathsf{C}}}}
  \rangle$
has the positive utility property
by definition.
So assume we are in Case {\sf (1)}.
Consider the row matrix
${\bar{{\mathsf{R}}}}$;
we need to prove that
it does not contain
an all-zeros column.
By Step {\sf (1.2)}
in the construction,
the column $n+1$ of
${\bar{{\mathsf{R}}}}$
has an entry equal to $1$
if and only if
$r+1 \leq n$;
the latter holds since
${\mathsf{R}}$ is not the null matrix.
So it remains to consider
the columns of
${\bar{{\mathsf{R}}}}$
that are inherited from
${\mathsf{R}}$.
By the definition of $c$,
every column $j$ of ${\mathsf{R}}$
with
$c+1 \leq j \leq n$
is not all-zeros;
for a column $j \in [c]$
of ${\mathsf{R}}$,
${\bar{{\mathsf{U}}}}_{1}(n+1, j)
  =
  1$
by Step (1.1)
in the construction,
and we are done.  
We use corresponding arguments
to prove that
the column matrix
${\bar{{\mathsf{C}}}}$
does not contain
an all-zeros row.
\end{proof}

\noindent
Furthermore,
Condition {\sf (C.2)}
holds vacuously in Case {\sf (0)}.
Note also that
in Case {\sf (1)},
$\langle {\bar{{\mathsf{R}}}},
              {\bar{{\mathsf{C}}}}
  \rangle$
has no pure Nash equilibrium 
in which some player chooses strategy $n+1$. 
We proceed by case analysis.
Assume first
that there is
a Nash equilibrium 
${\bar{{\bm{\sigma}}}}$ for
$\langle {\bar{{\mathsf{R}}}},
              {\bar{{\mathsf{C}}}}
  \rangle$
such that
at least one player 
$\overline{i} \in [2]$ plays
with positive probability 
an all-zeros counter-strategy $\overline{s}$
against player $i$. 
By Lemma~\ref{basic property of mixed nash equilibria},
choosing the strategy $\overline{s}$
with probability $1$
is a best-response of player $\overline{i}$
to ${\bar{{\bm{\sigma}}}}_{-\overline{i}}$;
so, 
player $\overline{i}$
cannot improve her utility
when choosing the strategy $\overline{s}$ with probability $1$.
Now,
by the definition of an all-zeros counter-strategy,
player $i$ gets utility $0$
in the mixed profile 
${\bar{{\bm{\sigma}}}}_{-\overline{i}} \diamond \overline{s}$,
which she could not improve by switching to another
mixed strategy.  
Hence,
${\bar{{\bm{\sigma}}}}_{-\overline{i}} \diamond \overline{s}$
is a Nash equilibrium with
${\bar{{\mathsf U}}}_{i}({\bar{{\bm{\sigma}}}}_{-\overline{i}} \diamond \overline{s})
  =
  0$. 
By Lemma~\ref{if zero utility}, 
it follows that
$\langle {\bar{{\mathsf{R}}}},
              {\bar{{\mathsf{C}}}}
  \rangle$
has a pure Nash equilibrium ${\bf s}$. 
Since
$\langle {\bar{{\mathsf{R}}}},
              {\bar{{\mathsf{C}}}}
  \rangle$
has no pure Nash equilibrium
in which some player plays strategy $n+1$, 
${\bf s}$ is also a pure Nash equilibrium for
$\langle {\mathsf{R}},
              {\mathsf{C}}
  \rangle$.
A contradiction.

Assume now that 
in each Nash equilibrium
${\bar{{\bm{\sigma}}}}$ for
$\langle {\bar{{\mathsf{R}}}},
              {\bar{{\mathsf{C}}}}
  \rangle$, 
no player $\overline{i}$
plays an all-zeros counter-strategy against player $i$;
that is, 
${\mathsf{Supp}}({\bar{\sigma}}_{1})
  \subseteq
  \{ r+1, \ldots, n+1 \}$
and
${\mathsf{Supp}}({\bar{\sigma}}_{2})
  \subseteq
  \{ c+1,\ldots ,n+1 \}$. 
We first prove that
no Nash equilibrium for
$\langle {\mathsf{R}},
              {\mathsf{C}}
  \rangle$
is ``destroyed''
due to adding
strategy $n+1$.

\begin{lemma}
\label{august 15}
It holds that
${\mathcal{NE}}( \langle {\mathsf{R}},
                                        {\mathsf{C}}
                                    \rangle
                          )          
  \subseteq
  {\mathcal{NE}}( \langle {\bar{{\mathsf{R}}}},
                                         {\bar{{\mathsf{C}}}}
                             \rangle
                          )$.
\end{lemma}

\begin{proof}
Assume,
by way of contradiction,
that there is a Nash equilibrium
${\bm{\sigma}}
  \in
  {\mathcal{NE}}\left( \langle {\mathsf{R}},
                                               {\mathsf{C}}
                                    \rangle
                          \right)$ 
such that
${\bm{\sigma}}
  \notin
  {\mathcal{NE}}( \langle {\bar{{\mathsf{R}}}},
                                        {\bar{{\mathsf{C}}}}
                            \rangle
                          $.
By the construction of the game
$\langle {\bar{{\mathsf{R}}}},
              {\bar{{\mathsf{C}}}}
 \rangle$,
this implies that 
there is player $i \in [2]$
with 
${\bar{{\mathsf{U}}}}_{i}({\bm{\sigma}}_{-i}
                              \diamond 
                              (n+1))
   >
   {\mathsf{U}}_{i}({\bm{\sigma}})$. 
Since the game
$\langle {\bar{{\mathsf{R}}}},
              {\bar{{\mathsf{C}}}}
 \rangle$
is win-lose, 
this implies that
${\bar{{\mathsf{U}}}}_{i}({\bm{\sigma}}_{-i}
                              \diamond 
                              (n+1))
   >
  0$.
Since
${\mathsf{Supp}}(\sigma_{1})
  \subseteq
  \{ r+1, \ldots, n \}$
and 
${\mathsf{Supp}}(\sigma_{2})
  \subseteq
  \{ c+1, \ldots, n \}$,
it follows that
any profile supported in 
${\bm{\sigma}}_{-i}\diamond (n+1)$ 
falls into either Case ({\sf 1.1}) with $i=1$
and $c+1 \leq s_{2} \leq n$ 
or into Case ({\sf 1.2}) with $i=2$
and $r+1 \leq s_{1} \leq n$.
By construction,
this implies that
${\bar{{\mathsf{U}}}}_{i}({\bm{\sigma}}_{-i} \diamond (n+1))
  =
  0$ 
for each player $i \in [2]$. 
A contradiction.
\end{proof}

\noindent
We now prove that
no new Nash equilibrium for
\textcolor{black}{$\langle {\bar{{\mathsf{R}}}},
              {\bar{{\mathsf{C}}}}
 \rangle$}
is ``created''
due to adding strategy $n+1$.

\begin{lemma}
\label{august 16}
It holds that
${\mathcal{NE}}( \langle {\bar{{\mathsf{R}}}},
                                         {\bar{{\mathsf{C}}}}
                             \rangle
                          )
  \setminus
  {\mathcal{NE}}( \langle {\mathsf{R}},
                                         {\mathsf{C}}
                             \rangle
                          )          
  =
  \emptyset$.
\end{lemma}

\begin{proof}
Assume, 
by way of contradiction, 
that there is a Nash equilibrium
${\bar{{\bm{\sigma}}}}
  \in
 {\mathcal{NE}}( \langle {\bar{{\mathsf{R}}}},
                                        {\bar{{\mathsf{C}}}}
                                   \rangle
                          )
  \setminus
  {\mathcal{NE}}( \langle {\mathsf{R}},
                                               {\mathsf{C}}
                                    \rangle
                          )$.            
Since
${\bar{{\bm{\sigma}}}}
  \not\in
  {\mathcal{NE}}\left( \langle {\mathsf{R}},
                                               {\mathsf{C}}
                                    \rangle
                          \right)$,  
it must be that 
${\bar{\sigma}}_{1} (n+1)
                                          +
                                          {\bar{\sigma}}_{2} (n+1) > 0$. 
Assume
that
${\bar{\sigma}}_{1}(n+1)>0$
and consider player $1$ switching to strategy $n+1$. 
Since ${\mathsf{Supp}}({\bar{{\sigma}}}_{2})
            \subseteq
            \{ c+1, \ldots, n+1 \}$,
it follows that
only profiles
falling into Case {\sf (0)}
or Case {\sf (1.1)}
with $c+1 \leq s_{2} \leq n$
are supported by 
${\bar{{\bm{\sigma}}}}_{-1} \diamond (n+1)$,
which implies,
by the construction,
that
${\bar{{\mathsf U}}}_{1}\left( {\bar{{\bm{\sigma}}}}_{-1} \diamond (n+1)
                            \right)   
  =
  0$.
Since ${\bar{{\bm{\sigma}}}}$ is a Nash equilibrium,
Lemma~\ref{basic property of mixed nash equilibria}
(Condition {\sf (1)})
implies that
${\bar{{\mathsf{U}}}}_{1}({\bar{{\bm\sigma}}}) = 0$. 
Hence,
Lemma \ref{if zero utility} implies that 
$\langle {\bar{{\mathsf{R}}}},
              {\bar{{\mathsf{C}}}}
  \rangle$
has a pure Nash equilibrium $\langle s_{1}, s_{2}
                                              \rangle$. 
Since
$\langle {\bar{{\mathsf{R}}}},
              {\bar{{\mathsf{C}}}}
  \rangle$
has no pure Nash equilibrium
where some player chooses
the strategy $n+1$, 
it follows that
$s_{1}, s_{2} \in [n]$.
Hence,
$\langle s_{1}, s_{2}
  \rangle$
is a pure Nash equilibrium for
$\langle {\mathsf{R}},
              {\mathsf{C}}
  \rangle$.
A contradiction.
A corresponding argument applies 
to yield a contradiction
when
$\bar{\sigma}_{2} (n+1)>0$.
\end{proof}

\noindent
By Lemmas~\ref{august 15} and~\ref{august 16},
it follows that
${\mathcal{NE}}(\langle {\bar{{\mathsf{R}}}},
                                        {\bar{{\mathsf{C}}}}
                            \rangle)
  =
  {\mathcal{NE}}(\langle {\mathsf{R}},
                                       {\mathsf{C}}
                           \rangle)$,
which 
completes the proof of
Condition {\sf (C.2)}.
\end{proof}

\noindent
Since polynomial time
suffices for deciding the existence of
and computing a pure Nash equilibrium
for a bimatrix game, 
Proposition~\ref{pup is not restrictive}
implies that computing a Nash equilibrium
for a win-lose bimatrix game with the positive utility property
is as hard
as computing a Nash equilibrium
for a win-lose bimatrix game.
Hence,
computing a Nash equilibrium
for a win-lose bimatrix game
with the positive utility property
is ${\mathcal{PPAD}}$-hard.
(The ${\mathcal{PPAD}}$-hardness
will be extended later
to symmetric win-lose bimatrix games
with the positive utility property.)

\section{Win-Lose Gadgets}
\label{gadget games}

\noindent
Here we collect together
the win-lose games
that will be used as gadgets
in later proofs.


\subsection{The Cyclic Game $\widehat{{\mathsf{G}}}_{1}[h]$}
\label{cyclic game}

\noindent
For an integer $h \geq 1$,
define the win-lose bimatrix game
$\widehat{{\mathsf{G}}}_{1}[h]
 :=
 \left\langle [2],
              \{ {\widehat{{\mathsf{\Sigma}}}}_{i}
              \}_{i \in [2]},
              \{ {\widehat{{\mathsf{U}}}}_{i}
              \}_{i \in [2]}
 \right\rangle$
with
${\widehat{{\mathsf{\Sigma}}}}_{1}
 =
 {\widehat{{\mathsf{\Sigma}}}}_{2}
 =
 \{ s_{0}, \ldots, s_{h-1}
 \}$;
${\widehat{{\mathsf{U}}}}_{1}({\bf s})
 =
 1$
if and only if
${\bf s} = \langle s_{i}, s_{i}
           \rangle$
and
${\widehat{{\mathsf{U}}}}_{2}({\bf s})
 =
 1$
if and only if
${\bf s} = \langle s_{i}, s_{i+1}
           \rangle$
(with addition taken modulo $h$).
So,
roughly speaking,
player $1$ wins if and only if
the two players concur, 
while player $2$ wins 
if and only if
the two players choose successive strategies
with player $2$ following.
Clearly,
$\widehat{{\mathsf{G}}}_{1}[h]$
has the positive utility property.
We prove:

\begin{proposition}
\label{gadget1}
$\widehat{\sf G}_{1}[h]$ has a single 
Nash equilibrium
$\bm{\sigma}$,
which is fully mixed and uniform
with
${\widehat{{\mathsf{U}}}}_{1}(\bm{\sigma})
 =
 {\widehat{{\mathsf{U}}}}_{2}(\bm{\sigma})
 =
 \frac{\textstyle{1}}
      {\textstyle{h}}$;
for $h=1$,
$\bm{\sigma}$
is Pareto-Optimal
and
Strongly Pareto-Optimal.      
\end{proposition}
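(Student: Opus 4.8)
The plan is to first check that the uniform fully mixed profile is a Nash equilibrium, then to prove it is the only one by tracking supports, and finally to dispose of the case $h = 1$ by triviality.

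For existence I would take $\bm{\sigma} = \langle \sigma_{1}, \sigma_{2}\rangle$ with $\sigma_{1}(s_{i}) = \sigma_{2}(s_{i}) = \frac{1}{h}$ for every $i$ and compute that, against $\sigma_{2}$, the pure strategy $s_{i}$ gives player $1$ the payoff $\sigma_{2}(s_{i}) = \frac{1}{h}$, while against $\sigma_{1}$ the pure strategy $s_{j}$ gives player $2$ the payoff $\sigma_{1}(s_{j-1}) = \frac{1}{h}$ (indices taken modulo $h$), both independent of the strategy chosen. Hence every pure strategy is a best-response for each player, so Lemma~\ref{basic property of mixed nash equilibria} makes $\bm{\sigma}$ a Nash equilibrium with $\widehat{{\mathsf{U}}}_{1}(\bm{\sigma}) = \widehat{{\mathsf{U}}}_{2}(\bm{\sigma}) = \frac{1}{h}$.

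For uniqueness, let $\langle p, q\rangle$ be any Nash equilibrium. Since $\widehat{{\mathsf{G}}}_{1}[h]$ has the positive utility property, Lemma~\ref{park kafe} gives $\widehat{{\mathsf{U}}}_{1}(\langle p,q\rangle) > 0$ and $\widehat{{\mathsf{U}}}_{2}(\langle p,q\rangle) > 0$. As above, against $q$ the payoff of $s_{i}$ to player $1$ is $q(s_{i})$ and against $p$ the payoff of $s_{j}$ to player $2$ is $p(s_{j-1})$. By Lemma~\ref{basic property of mixed nash equilibria}, every $s_{i} \in {\mathsf{Supp}}(p)$ satisfies $q(s_{i}) = \widehat{{\mathsf{U}}}_{1}(\langle p,q\rangle) > 0$, so ${\mathsf{Supp}}(p) \subseteq {\mathsf{Supp}}(q)$; symmetrically, every $s_{j} \in {\mathsf{Supp}}(q)$ satisfies $p(s_{j-1}) > 0$, so ${\mathsf{Supp}}(q)$ is contained in the cyclic shift of ${\mathsf{Supp}}(p)$ by one position. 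The resulting chain of inclusions has first and last members of equal cardinality, so all members coincide and ${\mathsf{Supp}}(p)$ is invariant under the cyclic shift; a non-empty shift-invariant subset of $\{s_{0}, \ldots, s_{h-1}\}$ is the whole set, so $p$ and $q$ are fully mixed. The best-response conditions then force all the values $q(s_{i})$ to be equal and all the values $p(s_{i})$ to be equal, so $p = q$ is the uniform distribution and $\langle p, q\rangle = \bm{\sigma}$.

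For $h = 1$ each player has a single strategy, so $\bm{\sigma} = \langle s_{0}, s_{0}\rangle$ is pure with $\widehat{{\mathsf{U}}}_{1}(\bm{\sigma}) = \widehat{{\mathsf{U}}}_{2}(\bm{\sigma}) = 1$, the largest value a utility can take in a win-lose game; hence no mixed profile makes any player strictly better off, and the defining conditions of Pareto-Optimality and of Strong Pareto-Optimality are satisfied vacuously. The crux of the whole argument is the uniqueness step, where one must see that the two support inclusions — the direct one coming from the identity structure of player $1$'s matrix and the shifted one coming from the cyclic structure of player $2$'s matrix — combine to force both supports to be the entire strategy set; the positive utility property (via Lemma~\ref{park kafe}) is exactly what supplies the strict positivity needed for those ``only-if'' implications, and everything else is a routine check.
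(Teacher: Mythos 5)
Your proof is correct and follows essentially the same route as the paper: both rest on Lemma~\ref{park kafe} (strictly positive expected utilities) and Lemma~\ref{basic property of mixed nash equilibria} to force full supports first and uniformity second, with the $h=1$ case handled trivially at the end. The only cosmetic differences are that you verify existence of the uniform equilibrium explicitly and package the full-support step as a support-inclusion/cardinality (shift-invariance) argument, where the paper argues pointwise by contradiction on $\sigma_{1}(s-1)=0$; these are the same idea.
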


\begin{proof}
Fix a Nash equilibrium $\bm{\sigma}$.
We first prove that
$\bm{\sigma}$ is fully mixed.
Consider first player $1$
and a strategy
$s \in {\widehat{{\mathsf{\Sigma}}}}_{1}$
with
$\sigma_{1} (s)
 >
 0$.
Then,
by Lemma~\ref{basic property of mixed nash equilibria}
(Condition {\sf (1)}),
it follows that
 ${\widehat{{\mathsf{U}}}}_{1}({\bm\sigma})
 =
 {\widehat{{\mathsf{U}}}}_{1}({\bm\sigma}_{-1}\diamond s)$.
Assume, by way of contradiction,
that
$\sigma_{1} (s-1)
 =
 0$.
If $\sigma_{2}(s)=0$,
then
${\widehat{{\mathsf{U}}}}_{1}({\bm\sigma}_{-1}\diamond s)
 =
 0$,
so that
${\widehat{{\mathsf{U}}}}_{1}({\bm\sigma})
 = 
 0$,
a contradiction to Lemma~\ref{park kafe}. 
Hence,
$\sigma_{2}(s) > 0$.
By Lemma~\ref{basic property of mixed nash equilibria}
(Condition {\sf (1)}), 
it follows that
${\widehat{{\sf U}}}_{2}({\bm\sigma})
 =
 {\widehat{{\sf U}}}_{2}({\bm\sigma}_{-2}\diamond s)
 =
 0$,
a contradiction to Lemma~\ref{park kafe}.
A similar argument proves that 
$\sigma_{2}(s)>0$ for each $s \in {\widehat{{\mathsf{\Sigma}}}}_{2}$.

To prove that
$\bm{\sigma}$ is uniform,
assume, by way of contradiction,
that there are indices
$j,k \in {\widehat{{\mathsf{\Sigma}}}}_{1}$
with
$\sigma_{1}(j)
 \neq
 \sigma_{1}(k)$. 
Since ${\mathsf{Supp}}(\sigma_{2}) 
           =
           {\widehat{{\mathsf{\Sigma}}}}_{2}$,
it follows,
by Lemma~\ref{basic property of mixed nash equilibria}
(Condition {\sf (1)}),
that
${\widehat{{\sf U}}}_{2}({\bm\sigma})
 =
 {\widehat{{\sf U}}}_{2}({\bm\sigma}_{-2}\diamond (j+1))
 =
 {\widehat{{\sf U}}}_{2}({\bm\sigma}_{-2}\diamond (k+1))$,
so that
$\sigma_{1}(j)
 =
 \sigma_{1}(k)$,
a contradiction. 
A similar argument 
proves that for all indices
$j, k \in {\widehat{{\mathsf{\Sigma}}}}_{2}$,
$\sigma_{2}(j)
 =
 \sigma_{2}(k)$,
and this implies that there is
a single Nash equilibrium.
Thus, 
${\widehat{{\mathsf{U}}}}_{1}(\bm{\sigma})
 =
 {\widehat{{\mathsf{U}}}}_{2}(\bm{\sigma})
 =
 1
 \cdot
 h
 \cdot
 \frac{\textstyle{1}}
      {\textstyle{h}}
 \cdot
 \frac{\textstyle{1}}
      {\textstyle{h}}     
 =
 \frac{\textstyle{1}}
      {\textstyle{h}}$.     
For $h=1$,
$|{\widehat{{\mathsf{\Sigma}}}}_{1}|
 =
 |{\widehat{{\mathsf{\Sigma}}}}_{2}|
 =
 1$
and
$\bm{\sigma}$ is pure;
clearly,
${\bm{\sigma}}$
is also
both Pareto-Optimal
and
Strongly Pareto-Optimal.       
\end{proof}

\noindent
By Proposition~\ref{gadget1},
the win-lose game ${\widehat{{\mathsf{G}}}}_{1}[h]$
is a negative instance for
{\sf $\exists$ UNIFORM NASH};
when $h=1$,
it is also a negative instance
for
{\sf $\exists$ $\neg$ PARETO-OPTIMAL NASH}
and for
{\sf $\exists$ $\neg$ STRONGLY PARETO-OPTIMAL NASH}.

\subsection{The Irrational Game $\widehat{{\mathsf{G}}}_{2}$}
\label{irrational game}

\noindent
Define the win-lose bimatrix game
$\widehat{\mathsf{G}}_2
 =
 \left\langle [3],
              \{ {\widehat{{\mathsf{\Sigma}}}}_{i}
              \}_{i \in [3]},
              \{ {\widehat{{\mathsf{U}}}}_{i}
              \}_{i \in [3]}
 \right\rangle$
with
${\widehat{{\mathsf{\Sigma}}}}_{1}
 =
 {\widehat{{\mathsf{\Sigma}}}}_{2}
 =
 \{ 0, 1
 \}$
and
${\widehat{{\mathsf{\Sigma}}}}_{3}
 =
 \{ 0, 1, 2
 \}$;
the utility functions are depicted below:
\begin{center}
\begin{small}
\begin{tabular}{||c|c||c|c||c|c||}
\hline
  {\sf Profile} ${\bf s}$  & {\sf Utility vector} $\widehat{{\sf U}}({\bf s})$  
& {\sf Profile} ${\bf s}$  & {\sf Utility vector} ${\widehat{{\mathsf{U}}}}({\bf s})$ 
& {\sf Profile} ${\bf s}$  &  {\sf Utility vector} ${\widehat{{\mathsf{U}}}}({\bf s})$ \\
\hline
\hline
$\langle 0, 0, 0
 \rangle$                & $\langle 1, 0, 1
                                    \rangle$            &  $\langle 0, 1, 1
                                                                    \rangle$                & $\langle 1, 0, 0
                                                                                                      \rangle$                  & $\langle 1, 0, 2
                                                                                                                                           \rangle$                 & $\langle 1, 0, 0
                                                                                                                                                                              \rangle$                                                                      \\
\hline
$\langle 0, 0, 1
 \rangle$                & $\langle 1, 1, 0
                                   \rangle$               & $\langle 0, 1, 2
                                                                     \rangle$                & $\langle 0, 0, 1
                                                                                                        \rangle$               &  $\langle 1, 1, 0
                                                                                                                                          \rangle$                 & $\langle 1, 1, 0
                                                                                                                                                                              \rangle$                                                                   \\
\hline
$\langle 0, 0, 2
 \rangle$                & $\langle 0, 1, 0
                                    \rangle$             & $\langle 1, 0, 0
                                                                    \rangle$                & $\langle 0, 0, 1
                                                                                                      \rangle$                 & $\langle 1, 1, 1
                                                                                                                                          \rangle$                  & $\langle 0, 0, 1
                                                                                                                                                                               \rangle$                                                                    \\
\hline
$\langle 0, 1, 0
 \rangle$                & $\langle 0, 1, 0
                                    \rangle$              & $\langle 1, 0, 1
                                                                     \rangle$                & $\langle 0, 1, 1
                                                                                                        \rangle$              & $\langle 1, 1, 2
                                                                                                                                         \rangle$                   & $\langle 1, 1, 0
                                                                                                                                                                               \rangle$                                                                  \\
\hline
\hline
\end{tabular}
\end{small}
\end{center}

\noindent
Clearly,
$\widehat{\mathsf{G}}_{2}$
has the positive utility property.
We prove:

\begin{proposition}
\label{gadget2}
$\widehat{\mathsf{G}}_{2}$
has a single Nash equilibrium,
which is irrational.
\end{proposition}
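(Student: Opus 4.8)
The plan is to work directly with an arbitrary Nash equilibrium $\bm{\sigma}$, parametrized by $x:=\sigma_1(1)$, $y:=\sigma_2(1)$ and $(z_0,z_1,z_2):=(\sigma_3(0),\sigma_3(1),\sigma_3(2))$, and to whittle the supports down until only one profile survives. Reading the utility table off, one computes the conditional expectations
\[
\widehat{{\mathsf{U}}}_1(\bm{\sigma}_{-1}\diamond 0)=(1-y)z_0+z_1,\qquad
\widehat{{\mathsf{U}}}_1(\bm{\sigma}_{-1}\diamond 1)=yz_0+z_2,
\]
\[
\widehat{{\mathsf{U}}}_2(\bm{\sigma}_{-2}\diamond 0)=(1-x)z_2+z_1,\qquad
\widehat{{\mathsf{U}}}_2(\bm{\sigma}_{-2}\diamond 1)=z_0+xz_2,
\]
\[
\widehat{{\mathsf{U}}}_3(\bm{\sigma}_{-3}\diamond 0)=1-y,\qquad
\widehat{{\mathsf{U}}}_3(\bm{\sigma}_{-3}\diamond 1)=x,\qquad
\widehat{{\mathsf{U}}}_3(\bm{\sigma}_{-3}\diamond 2)=(1-x)y .
\]
Note that player $3$'s best responses depend only on $x$ and $y$, and that Lemma~\ref{park kafe} forces every player's expected utility to be strictly positive; both facts will be used repeatedly.

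First I would show that players $1$ and $2$ are fully mixed in every Nash equilibrium. Suppose player $1$ is pure. If $x=0$, then $\widehat{{\mathsf{U}}}_3(\bm{\sigma}_{-3}\diamond 1)=0$, so by Lemma~\ref{park kafe} player $3$ never uses strategy $1$; she plays $0$ with probability $1$ when $y<\tfrac12$, plays $2$ with probability $1$ when $y>\tfrac12$, and in the boundary case $y=\tfrac12$ player $1$'s best-response condition forces $z_2=0$, hence $z_0=1$. In the first situation $\widehat{{\mathsf{U}}}_2(\bm{\sigma}_{-2}\diamond 1)=1>0=\widehat{{\mathsf{U}}}_2(\bm{\sigma}_{-2}\diamond 0)$ makes $y=1$ the only best response, contradicting $y<\tfrac12$; in the second, $\widehat{{\mathsf{U}}}_1(\bm{\sigma})=\widehat{{\mathsf{U}}}_1(\bm{\sigma}_{-1}\diamond 0)=0$ contradicts Lemma~\ref{park kafe}. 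The case $x=1$ is handled the same way: player $3$ is then forced onto strategy $1$ and player $1$ again earns utility $0$. Once player $1$ is known to be fully mixed, an analogous case split on $y\in\{0,1\}$ (now freely using $x\in(0,1)$) rules out a pure player $2$.

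Next, with players $1$ and $2$ fully mixed, Lemma~\ref{basic property of mixed nash equilibria}(Condition~{\sf (1)}) gives the two indifference equations $(1-y)z_0+z_1=yz_0+z_2$ and $(1-x)z_2+z_1=z_0+xz_2$, i.e.\ $z_1-z_2=(2y-1)z_0$ and $z_1+(1-2x)z_2=z_0$. I would then enumerate the seven possible supports of player $3$. The three singletons and the three pairs $\{0,1\},\{0,2\},\{1,2\}$ each collapse: e.g.\ $z_1=1$ makes player $1$ strictly prefer $0$; $z_2=1$ makes $\widehat{{\mathsf{U}}}_1(\bm{\sigma}_{-1}\diamond 0)=0$; the support $\{0,1\}$ forces $z_1=z_0$ and $z_1=(2y-1)z_0$, hence $y=1$; the support $\{0,2\}$ forces $(1-2x)(1-2y)=1$, impossible for $x,y\in(0,1)$; and so on. Hence player $3$ is fully mixed, so her three best-response values agree: $1-y=x=(1-x)y$. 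This gives $x=1-y$ and $y^2+y-1=0$, whence $y=\tfrac{\sqrt5-1}{2}$ and $x=\tfrac{3-\sqrt5}{2}$ are irrational; substituting back into the two linear equations together with $z_0+z_1+z_2=1$ determines $(z_0,z_1,z_2)$ uniquely, and one checks $z_2=\tfrac14$, $z_0=\tfrac{1+\sqrt5}{8}$, $z_1=\tfrac{5-\sqrt5}{8}$, all strictly between $0$ and $1$. Conversely this profile satisfies every indifference condition, so it is a Nash equilibrium, and by the preceding case analysis the only one; being irrational, it witnesses the claim.

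I expect the support enumeration in the third paragraph — six impossible configurations plus the fully mixed one, each needing the right pairing of an indifference equation, a best-response inequality for player $3$, and Lemma~\ref{park kafe} — to be the main obstacle in terms of sheer bookkeeping; the conceptual heart is the collapse of player $3$'s triple indifference to the golden-ratio equation $y^2+y-1=0$, which is exactly what injects the irrationality. One point to keep straight is the order of the argument: player $1$ must be shown fully mixed before the pure-player-$2$ analysis, since the latter relies on $x\in(0,1)$.
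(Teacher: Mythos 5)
Your proposal is correct, and it reaches exactly the paper's equilibrium: your conditional utilities agree with the paper's after the change of variables $x=1-\sigma_{1}(0)$, $y=1-\sigma_{2}(0)$, every support case you list does collapse as claimed, and the fully mixed endgame gives $x=\frac{3-\sqrt5}{2}$, $y=\frac{\sqrt5-1}{2}$, $(z_0,z_1,z_2)=\left(\frac{1+\sqrt5}{8},\frac{5-\sqrt5}{8},\frac14\right)$, i.e.\ the paper's values. The substance is the same --- case analysis driven by Lemma~\ref{basic property of mixed nash equilibria} and Lemma~\ref{park kafe}, ending in player $3$'s triple indifference and the golden-ratio equation --- but your decomposition differs: the paper first rules out pure equilibria by a table and then enumerates which subset of players is mixed (seven cases), whereas you first prove, as a standalone step, that players $1$ and $2$ are mixed in every equilibrium and then enumerate the seven supports of player $3$ against the two linear indifference relations. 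Your organization buys a tidier endgame (e.g.\ the support $\{0,2\}$ dies in one line via $(1-2x)(1-2y)=1$, where the paper's corresponding sub-case $\sigma_3^{1}=0$ grinds through to $\sigma_1=2$), at the price that the front end absorbs several of the paper's cases, and two of your compressed steps need the explicit bookkeeping when written out: in the $x=1$ branch with $y=0$, player $3$ is not forced onto strategy $1$ (her best responses are $\{0,1\}$), but $z_2=0$ is still forced, so player $1$'s utility $yz_0+z_2=0$ and Lemma~\ref{park kafe} finishes; and the pure-player-$2$ branch $y=1$ is not symmetric to $y=0$ --- it funnels to the candidate $x=\frac12$, $(z_0,z_1,z_2)=\left(0,\frac12,\frac12\right)$, which must be killed by player $2$'s own deviation to strategy $0$ (utility $\frac34$ versus $\frac14$), which is precisely the paper's ``only players 1 and 3 are mixed'' computation ending in $\frac14\leq-\frac14$.
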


\begin{proof}
The following table 
establishes that
$\widehat{{\mathsf{G}}}_{2}$
has no pure Nash equilibrium:

\begin{center}
\begin{small}
\begin{tabular}{||c|c|c||c|c|c||}
\hline
{\sf Profile} ${\bf s}$  & Player: & Improves by switching to: &
{\sf Profile} ${\bf s}$  & Player: & Improves by switching to: \\
\hline
\hline
$\langle 0, 0, 0
 \rangle$                & $2$       & $1$                       &
$\langle 1, 0, 0
 \rangle$                & $1$       & $0$                       \\      
\hline
$\langle 0, 0, 1
 \rangle$                & $3$       & $0$                       &
$\langle 1, 0, 1
 \rangle$                & $1$       & $0$                       \\
\hline
$\langle 0, 0, 2
 \rangle$                & $1$       & $1$                       &
$\langle 1, 0, 2
 \rangle$                & $3$       & $1$                       \\
\hline
$\langle 0, 1, 0
 \rangle$                & $3$       & $2$                       &
$\langle 1, 1, 0
 \rangle$                & $3$       & $1$                       \\   
\hline
$\langle 0, 1, 1
 \rangle$                & $3$       & $2$                       &
$\langle 1, 1, 1
 \rangle$                & $1$       & $0$                       \\ 
\hline
$\langle 0, 1, 2
 \rangle$                & $1$       & $1$                       &
$\langle 1, 1, 2
 \rangle$                & $3$       & $1$                       \\                          
\hline
\hline
\end{tabular}
\end{small}
\end{center}

\noindent
We now turn to a mixed Nash equilibrium
$\bm{\sigma}$,
which shall be represented
with the vector of probabilities
$\langle \sigma_{1}, \sigma_{2}, \sigma^0_{3}, \sigma^1_{3} \rangle$,
where
for each player $i \in [2]$,
$\sigma_{i} := \sigma_{i} (0)$,
while
$\sigma_{3}^{s}
 :=
 \sigma_{3} (s)$
for $s \in \{ 0, 1 \}$;
thus,
for each player $i \in [2]$,
$\sigma_{i} (1) = 1-\sigma_{i}$,
while
$\sigma_{3} (2)
 =
 1 - 
 \sigma_{3}^{0}
 - 
 \sigma_{3}^{1}$.
We 
calculate
the conditional expected utilities
${\widehat{{\mathsf{U}}}}_{i}\left( {\bf s}_{-i}
                        \diamond
                        s
                 \right)$
for each player $i \in [3]$
and strategy
$s \in \textcolor{black}{{\widehat{{\mathsf{\Sigma}}}}}_{i}$: 
\begin{eqnarray*}
           {\widehat{{\sf U}}}_{1}(\bm{\sigma}_{-1}\diamond 0)
& = & \sigma_2\cdot\sigma_3^0
      +
      \sigma_2 \cdot \sigma_3^1
      +
      (1-\sigma_2)\cdot\sigma_3^1                               \\
& = & \sigma_2\cdot\sigma_3^0
      +
      \sigma_3^1\, ,                                                                 \\
          {\widehat{{\sf U}}}_{1}(\bm{\sigma}_{-1} \diamond 1)
& = & \sigma_2\cdot(1-\sigma_3^0-\sigma_3^1)
      +
      (1-\sigma_2)\cdot\sigma_3^0
      +
      (1-\sigma_2)\cdot (1-\sigma_3^0-\sigma_3^1)               \\
& = & 1-\sigma_3^1 -\sigma_2\cdot\sigma_3^0\, ,                 \\
      {\widehat{{\sf U}}}_{2}(\bm{\sigma}_{-2}\diamond 0)
& = & \sigma_1\cdot\sigma_3^1
      +
      \sigma_1\cdot(1-\sigma_3^0-\sigma_3^1)
      +
      (1-\sigma_1)\cdot\sigma_3^1                               \\
& = & \sigma_1\cdot(1-\sigma_3^0-\sigma_3^1)+\sigma_3^1\, ,     \\
      {\widehat{{\sf U}}}_{2}(\bm{\sigma}_{-2}\diamond 1)
& = & \sigma_1\cdot\sigma_3^0
      +
      (1-\sigma_1)\cdot\sigma_3^0
      +
      (1-\sigma_1)\cdot (1-\sigma_3^0-\sigma_3^1)               \\
& = & \sigma_3^0
      +(1-\sigma_1)\cdot (1-\sigma_3^0-\sigma_3^1)\, ,          \\
      {\widehat{{\sf U}}}_{3}(\bm{\sigma}_{-3}\diamond 0)
& = & \sigma_1\cdot\sigma_2
      +
      (1-\sigma_1)\cdot\sigma_2                                 \\
& = & \sigma_2\, ,                                              \\
      {\widehat{{\sf U}}}_{3}(\bm{\sigma}_{-3}\diamond 1)
& = & (1-\sigma_1)\cdot\sigma_2
      +
      (1-\sigma_1)\cdot (1-\sigma_2)                            \\
& = & 1-\sigma_1\, ,                                            \\
      {\widehat{{\sf U}}}_{3}(\bm{\sigma}_{-3}\diamond 2)
& = & \sigma_1\cdot (1-\sigma_2)\, .
\end{eqnarray*}
We now examine all possible cases for $\bm{\sigma}$. 
We shall derive a contradiction
for all but one case,
where we shall establish
that $\bm{\sigma}$ is irrational.

\begin{itemize}

\item
\underline{{\it Only player 1 is mixed.}}
Then, 
$\sigma_{2}
 \in
 \{ 0, 1 \}$
and
$\left( \sigma_{3}^{0},
        \sigma_{3}^{1}
 \right)
 \in
 \{ (0,0),(0,1),(1,0)
 \}$.
Since $\bm{\sigma}$
is a Nash equilibrium
where player 1 is mixed,
Lemma~\ref{basic property of mixed nash equilibria}
(Condition {\sf (1)})
implies that
${\widehat{{\mathsf{U}}}}_{1}\left( \bm{\sigma}_{-1}
                        \diamond
                        0
                 \right)
 =
 {\widehat{{\mathsf{U}}}}_{1}\left( \bm{\sigma}_{-1}
                        \diamond
                        1
                 \right)$,
or
$\sigma_{2} \cdot \sigma_{3}^{0}
 +
 \sigma_{3}^{1}
 =
 1
 - \sigma_{3}^{1} 
 - \sigma_{2} \cdot \sigma_{3}^{0}$,
or
$2 \cdot \left( \sigma_{2} \cdot \sigma_{3}^{0}
                +
                \sigma_{3}^{1}
         \right)
 =
 1$.
Hence,
for $\sigma_{2} = 0$
(resp., $\sigma_{2} = 1$),
$\sigma_{3}^{1}
 =
 \frac{\textstyle 1}
      {\textstyle 2}$
(resp.,
$\sigma_{3}^{0}
 +
 \sigma_{3}^{1}
 =
 \frac{\textstyle 1}
      {\textstyle 2}$).
A contradiction.

\item
\underline{{\it Only player 2 is mixed.}}
Then, 
$\sigma_{1}
 \in
 \{ 0, 1 \}$
and
$\left( \sigma_{3}^{0},
        \sigma_{3}^{1}
 \right)
 \in
 \{ (0,0),(0,1),(1,0)
 \}$.
Since $\bm{\sigma}$
is a Nash equilibrium where
player $2$ is mixed,
Lemma~\ref{basic property of mixed nash equilibria}
(Condition {\sf (1)})
implies that
${\widehat{{\mathsf{U}}}}_{2}\left( \bm{\sigma}_{-2}
                        \diamond
                        0
                 \right)
 =
 {\widehat{{\mathsf{U}}}}_{2}\left( \bm{\sigma}_{-2}
                        \diamond
                        1
                 \right)$,
or
$\left( 2 \sigma_{1} - 1
 \right)
 \cdot
 \left( 1 - \sigma_{3}^{0} - \sigma_{3}^{1}
 \right)
 =
 \sigma_{3}^{0} - \sigma_{3}^{1}$.
Hence,
for $\sigma_{1} = 0$
(resp., $\sigma_{1} = 1$),
$1 - \sigma_{3}^{0} - \sigma_{3}^{1} 
 =
 \sigma_{3}^{1} - \sigma_{3}^{0}$
(resp.,
$1 - \sigma_{3}^{0} - \sigma_{3}^{1} 
 =
 \sigma_{3}^{0} - \sigma_{3}^{1}$),
or
$\sigma_{3}^{1}
 =
 \frac{\textstyle 1}
      {\textstyle 2}$  
(resp.,
$\sigma_{3}^{0}
 =
 \frac{\textstyle 1}
      {\textstyle 2}$).
A contradiction.

\item
\underline{{\it Only player 3 is mixed.}}
Then, 
$\left( \sigma_{1},
        \sigma_{2}
 \right)
 \in
 \{ (0,0),(0,1),(1,0),(1,1)
 \}$.
Since $\bm{\sigma}$ is a Nash equilibrium
where player $3$ is mixed,
Lemma~\ref{basic property of mixed nash equilibria}
(Condition {\sf (1)})
implies that
there are strategies
$s, t \in \{ 0, 1, 2 \}$
such that 
${\widehat{{\mathsf{U}}}}_{3}\left( \bm{\sigma}_{-3}
                                       \diamond
                                        s
                               \right)
 =
 {\widehat{{\mathsf{U}}}}_{3}\left( \bm{\sigma}_{-3}
                                      \diamond
                                     t
                             \right)$.
By Lemma~\ref{park kafe},
it follows that
${\widehat{{\mathsf{U}}}}_{3}\left( \bm{\sigma}_{-3}
                                       \diamond
                                        s
                               \right)
 =
 {\widehat{{\mathsf{U}}}}_{3}\left( \bm{\sigma}_{-3}
                                      \diamond
                                     t
                             \right)
 >
 0$.
Inspecting the following table
yields that
$\sigma_{1} = 0$,
$\sigma_{2} = 1$
and
$\sigma_{3}^{2} = 0$:
\begin{center}
\begin{small}
\begin{tabular}{||c|c|c|c|c||}
\hline
\hline
$\sigma_{1}$                                         &
$\sigma_{2}$                                         & 
${\widehat{{\mathsf{U}}}}_{3}\left( \bm{\sigma}_{-3}
                        \diamond
                        0
                 \right)
 =
 \sigma_{2}$                                         &
${\widehat{{\mathsf{U}}}}_{3}\left( \bm{\sigma}_{-3}
                        \diamond
                        1
                 \right)
 =
 1 - \sigma_{1}$                                     &
${\widehat{{\mathsf{U}}}}_{3}\left( \bm{\sigma}_{-3}
                        \diamond
                        2
                 \right)
 =
 \sigma_{1} \cdot
 \left( 1 - \sigma_{2}
 \right)$                                            \\
\hline
\hline
0 & 0 & 0 & 1 & 0                                    \\
\hline
0 & 1 & 1 & 1 & 0                                    \\
\hline
1 & 0 & 0 & 0 & 1                                    \\
\hline
1 & 1 & 1 & 0 & 0                                    \\
\hline
\hline
\end{tabular}
\end{small}
\end{center}                        
Hence,
$\sigma_{3}^{0} +
 \sigma_{3}^{1}
 =
 1$.
Since $\bm{\sigma}$
is a Nash equilibrium
and $\sigma_{1} = 0$,
Lemma~\ref{basic property of mixed nash equilibria}
(Condition {\sf (2)})
implies that
${\widehat{{\mathsf{U}}}}_{1}\left( \bm{\sigma}_{-1}
                        \diamond
                        1
                 \right) 
 \geq                
 \textcolor{black}{{\widehat{{\mathsf{U}}}}}_{1}\left( \bm{\sigma}_{-1}
                        \diamond
                        0
                 \right)$.
Since $\sigma_{2} = 1$,
it follows that
$1 - \sigma_{3}^{1} - \sigma_{3}^{0}
 \geq
 \sigma_{3}^{0} + \sigma_{3}^{1}$,
or
$\sigma_{3}^{0} + \sigma_{3}^{1} 
 \leq
 \frac{\textstyle 1}
      {\textstyle 2}$.
A contradiction.

\item
\underline{{\it Only players 2 and 3 are mixed.}}
Since player $1$ is pure, 
$\sigma_{1} 
 \in
 \{ 0, 1 \}$. 
Assume, by way of contradiction,
that $\sigma_{1} = 0$.
Then,
${\widehat{{\mathsf{U}}}}_{3}\left( \bm{\sigma}_{-3}
                        \diamond
                        0
                 \right) 
 =
 \sigma_{2}$,
${\widehat{{\mathsf{U}}}}_{3}\left( \bm{\sigma}_{-1}
                        \diamond
                        1
                 \right)
 =
 1$
and
${\widehat{{\mathsf{U}}}}_{3}\left( \bm{\sigma}_{-1}
                        \diamond
                        2
                 \right)
 =
 0$.
Since $\bm{\sigma}$ is a Nash equilibrium,
Lemma~\ref{basic property of mixed nash equilibria}
(Condition {\sf (2)})
implies that
$2 \not\in {\mathsf{Supp}}(\sigma_{3})$.
Since $\bm{\sigma}$ is a Nash equilibrium
where player $3$ is mixed,
Lemma~\ref{basic property of mixed nash equilibria}
(Condition {\sf (1)})
implies that
${\widehat{{\mathsf{U}}}}_{3}\left( \bm{\sigma}_{-3}
                        \diamond
                        0
                 \right) 
 =
 {\widehat{{\mathsf{U}}}}_{3}\left( \bm{\sigma}_{-1}
                        \diamond
                        1
                 \right)$,
which implies that
$\sigma_{2} = 1$,
so that player $2$ is pure.
A contradiction.
It follows that
$\sigma_{1} = 1$.                                  
Then,
${\widehat{{\mathsf{U}}}}_{2}\left( \bm{\sigma}_{-2}
                        \diamond
                        0
                 \right) 
 =
 1 - \sigma_{3}^{0}$
and
${\widehat{{\mathsf{U}}}}_{2}\left( \bm{\sigma}_{-2}
                        \diamond
                        1
                 \right)
 =
 \sigma_{3}^{0}$.

Since $\bm{\sigma}$ is a Nash equilibrium
where player $2$ is mixed,
Lemma~\ref{basic property of mixed nash equilibria}
(Condition {\sf (1)})
implies that
${\widehat{{\mathsf{U}}}}_{2}\left( \bm{\sigma}_{-2}
                        \diamond
                        0
                 \right) 
 =
{\widehat{{\mathsf{U}}}}_{2}\left( \bm{\sigma}_{-2}
                        \diamond
                        1
                 \right)$,
or
$\sigma_{3}^{0}
 =
 \frac{\textstyle 1}
      {\textstyle 2}$.
Also,
${\widehat{{\mathsf{U}}}}_{3}\left( \bm{\sigma}_{-3}
                        \diamond
                        0
                 \right) 
 =
 \sigma_{2}$,
${\widehat{{\mathsf{U}}}}_{3}\left( \bm{\sigma}_{-3}
                        \diamond
                        1
                 \right)
 =
 0$
and
${\widehat{{\mathsf{U}}}}_{3}\left( \bm{\sigma}_{-3}
                        \diamond
                        2
                 \right)
 =
 1 - \sigma_{2}$.
Since player $2$ is mixed,
$\sigma_{2} > 0$.
Since $\bm{\sigma}$ is a Nash equilibrium,
Lemma~\ref{basic property of mixed nash equilibria}
(Condition {\sf (2)})
implies that
$1 \not\in {\mathsf{Supp}}(\sigma_{3})$,
or
$\sigma_{3}^{1}
 =
 0$.
Since $\bm{\sigma}$ is a Nash equilibrium
where player $3$ is mixed,
Lemma~\ref{basic property of mixed nash equilibria}
(Condition {\sf (1)})
implies that
${\widehat{{\mathsf{U}}}}_{3}\left(  \bm{\sigma}_{-3}
                                        \diamond
                                        0
                              \right)
 =
 {\widehat{{\mathsf{U}}}}_{3}\left( \bm{\sigma}_{-3}
                                      \diamond
                                      2
                              \right)$.
It follows that
$\sigma_{2} = \frac{\textstyle 1}
                                {\textstyle 2}$.

Since $\sigma_{1} = 1$
and $\bm{\sigma}$ is a Nash equilibrium,
Lemma~\ref{basic property of mixed nash equilibria}
(Condition {\sf (2)})
implies that
${\widehat{{\mathsf{U}}}}_{1}\left( \bm{\sigma}_{-1}
                        \diamond
                        0
                 \right)
 \geq
 {\widehat{{\mathsf{U}}}}_{1}\left( \bm{\sigma}_{-1}
                        \diamond
                        1
                 \right)$,
or
$\sigma_{2} \cdot \sigma_{3}^{0} + \sigma_{3}^{1}
 \geq
 1 - \sigma_{3}^{1} - \sigma_{2} \cdot \sigma_{3}^{1}$.
Hence,
$\frac{\textstyle 1}
      {\textstyle 2}                  
 \cdot
 \frac{\textstyle 1}
      {\textstyle 2}  
 +
 0
 \geq
 1 - 0 - \frac{\textstyle 1}
              {\textstyle 2}                  
         \cdot
         \frac{\textstyle 1}
              {\textstyle 2}$.
A contradiction.

\item
\underline{{\it Only players 1 and 3 are mixed.}}
Since player $2$ is pure,
$\sigma_{2} 
 \in 
 \{ 0, 1 \}$. 
Assume,
by way of contradiction,
that $\sigma_{2} = 1$.
Then,
\textcolor{black}{${\widehat{{\mathsf{U}}}}_{3}\left( \bm{\sigma}_{-3}
                        \diamond
                        0
                 \right) 
 =
 1$,}
\textcolor{black}{${\widehat{{\mathsf{U}}}}_{3}\left( \bm{\sigma}_{-3}
                        \diamond
                        1
                 \right)
 =
 1 - \sigma_{1}$}
and
$\textcolor{black}{{\widehat{{\mathsf{U}}}}}_{3}\left( \bm{\sigma}_{-3}
                        \diamond
                        2
                 \right)
 =
 0$.
Since $\bm{\sigma}$ is a Nash equilibrium,
Lemma~\ref{basic property of mixed nash equilibria}
(Condition {\sf (2)})
implies that
$2 \not\in {\mathsf{Supp}}(\sigma_{3})$,
or
$\sigma_{3}^{2}
 =
 0$.
Since $\bm{\sigma}$ is a Nash equilibrium
where player $3$ is mixed,
Lemma~\ref{basic property of mixed nash equilibria}
(Condition {\sf (1)})
implies that
\textcolor{black}{${\widehat{{\mathsf{U}}}}_{3}\left( \bm{\sigma}_{-3}
                        \diamond
                        0
                 \right)
 =                 
 {\widehat{{\mathsf{U}}}}_{3}\left( \bm{\sigma}_{-3}
                        \diamond
                        1
                 \right)$,}
which implies that
$\sigma_{1}
 =
 0$,
so that 
player $1$ is pure.
A contradiction.
It follows that
$\sigma_{2} = 0$.

Then,
\textcolor{black}{${\widehat{{\mathsf{U}}}}_{3}\left( \bm{\sigma}_{-3}
                        \diamond
                        0
                 \right)
 =
 0$,}                 
\textcolor{black}{${\widehat{{\mathsf{U}}}}_{3}\left( \bm{\sigma}_{-3}
                        \diamond
                        1
                 \right)
 =
 1 - \sigma_{1}$}
and                 
\textcolor{black}{${\widehat{{\mathsf{U}}}}_{3}\left( \bm{\sigma}_{-3}
                        \diamond
                        2
                 \right)
 =                 
 \sigma_{1}$}.
Since player $1$ is mixed,
$\sigma_{1} > 0$.
Since $\bm{\sigma}$ is a Nash equilibrium,
Lemma~\ref{basic property of mixed nash equilibria}
(Condition {\sf (2)})
implies that
$0 \not\in {\mathsf{Supp}}(\sigma_{3})$,
or
$\sigma_{3}(0) = 0$.

Since $\bm{\sigma}$ is a Nash equilibrium
where player $1$ is mixed,
Lemma~\ref{basic property of mixed nash equilibria}
(Condition {\sf (1)})
implies that
\textcolor{black}{${\widehat{{\mathsf{U}}}}_{1}\left( \bm{\sigma}_{-1}
                        \diamond
                        0
                 \right)
 =                 
 {\widehat{{\mathsf{U}}}}_{1}\left( \bm{\sigma}_{-1}
                        \diamond
                        1
                 \right)$,}
or
$\sigma_{3}^{1} 
 =
 1 - \sigma_{3}^{1}$,
so that
$\sigma_{3}^{1}
 =
 \frac{\textstyle 1}
      {\textstyle 2}$.

Since $\bm{\sigma}$ is a Nash equilibrium where
player $3$ is mixed,
Lemma~\ref{basic property of mixed nash equilibria}
(Condition {\sf (1)})
implies that
\textcolor{black}{${\widehat{{\mathsf{U}}}}_{3}\left( \bm{\sigma}_{-3}
                        \diamond
                        1
                 \right)
 =                
 {\widehat{{\mathsf{U}}}}_{3}\left( \bm{\sigma}_{-3}
                        \diamond
                        2
                 \right)$,}
or
$ 1 - \sigma_{1}
 =
 \sigma_{1}$,
or
$\sigma_{1} 
 =
 \frac{\textstyle 1}
      {\textstyle 2}$.
Since $\sigma_{2} = 0$
and $\bm{\sigma}$
is a Nash equilibrium,
Lemma~\ref{basic property of mixed nash equilibria}
(Condition {\sf (2)})
implies that
\textcolor{black}{${\widehat{{\mathsf{U}}}}_{2}\left( \bm{\sigma}_{-2}
                        \diamond
                        0
                 \right)
 \leq
 {\widehat{{\mathsf{U}}}}_{2}\left( \bm{\sigma}_{-2}
                        \diamond
                        1
                 \right)$,}
or
$\frac{\textstyle 1}
      {\textstyle 2}
 \cdot
 \left( 1 - 0 - \frac{\textstyle 1}
                     {\textstyle 2}
 \right)
 +
 \frac{1}
      {2}
 \leq
 0
 +
 \frac{\textstyle 1}
      {\textstyle 2}
 \cdot
 \left( 1 - 0 - \frac{\textstyle 1}
                     {\textstyle 2}
 \right)$
or
$\frac{\textstyle 1}
      {\textstyle 4}
 \leq
 - \frac{\textstyle 1}
        {\textstyle 4}$.
A contradiction.

\item
\underline{{\it Only players 1 and 2 are mixed.}}
Since player $3$ is pure,
$\left( \sigma_{3}^{0},
        \sigma_{3}^{1}
 \right)
 \in
 \{ (0,0),(0,1),(1,0)
 \}$.
Since $\bm{\sigma}$ is a Nash equilibrium
where player $1$ is mixed,
Lemma~\ref{basic property of mixed nash equilibria}
(Condition {\sf (1)})
implies that
\textcolor{black}{${\widehat{{\mathsf{U}}}}_{1}\left( \bm{\sigma}_{-1}
                        \diamond
                        0
                 \right)
=
 {\widehat{{\mathsf{U}}}}_{1}\left( \bm{\sigma}_{-1}
                        \diamond
                        1
                 \right)$,}
or
$\sigma_{2} \cdot \sigma_{3}^{0}
 +
 \sigma_{3}^{1}
 =
 1
 - \sigma_{3}^{1}
 - \sigma_{2} \cdot \sigma_{3}^{0}$
or
$2 \cdot \left( \sigma_{2} \cdot \sigma_{3}^{0}
                +
                \sigma_{3}^{1}
         \right)
 =
 1$.
This implies that
$\left( \sigma_{3}^{0},
        \sigma_{3}^{1}
 \right)
 \not\in
 \{ (0,0),(0,1)
 \}$;
so,
$\left( \sigma_{3}^{0},
        \sigma_{3}^{1}
 \right) 
 =
 (1, 0)$.

Since $\bm{\sigma}$ is a Nash equilibrium
where player $2$ is mixed,
Lemma~\ref{basic property of mixed nash equilibria}
(Condition {\sf (1)})
implies that
$\textcolor{black}{{\widehat{{\mathsf{U}}}}}_{2}\left( \bm{\sigma}_{-2}
                        \diamond
                        0
                 \right)
 =
 \textcolor{black}{{\widehat{{\mathsf{U}}}}}_{2}\left( \bm{\sigma}_{-2}
                        \diamond
                        1
                 \right)$,
or
$\sigma_{1} \cdot
 \left( 1 - \sigma_{3}^{0} - \sigma_{3}^{1}
 \right)
 +
 \sigma_{3}^{1}
 =
 \sigma_{3}^{0}
 +
 \left( 1 - \sigma_{1}
 \right)
 \cdot
 \left( 1 - \sigma_{3}^{0} - \sigma_{3}^{1}
 \right)$
or
$\sigma_{1} \cdot 0
 +
 0
 =
 1 +
 \left( 1 - \sigma_{1}
 \right)
 \cdot
 0$
or $0 = 1$.
A contradiction.

\item
\underline{{\it All players are mixed.}}
We proceed by case analysis.
Assume first that
$\sigma_{3}^{0} = 0$.
Since $\bm{\sigma}$ is a Nash equilibrium
where player $1$ is mixed,
Lemma~\ref{basic property of mixed nash equilibria}
(Condition {\sf (1)})
implies that
\textcolor{black}{${\widehat{{\mathsf{U}}}}_{1}\left( \bm{\sigma}_{-1}
                        \diamond
                        0
                 \right)
 =
 {\widehat{{\mathsf{U}}}}_{1}\left( \bm{\sigma}_{-1}
                        \diamond
                        1
                 \right)$,}
or
$\sigma_{3}^{1}
 =
 1 - \sigma_{3}^{1}$,
so that
$\sigma_{3}^{1}
 =
 \frac{\textstyle 1}
      {\textstyle 2}$.

Since $\bm{\sigma}$ is a Nash equilibrium
where player $2$ is mixed,
Lemma~\ref{basic property of mixed nash equilibria}
(Condition {\sf (1)})
implies that
\textcolor{black}{${\widehat{{\mathsf{U}}}}_{2}\left( \bm{\sigma}_{-2}
                        \diamond
                        0
                 \right)
 =
 {\widehat{{\mathsf{U}}}}_{2}\left( \bm{\sigma}_{-2}
                        \diamond
                        1
                 \right)$,}
or
$\sigma_{1}
 \cdot
 \frac{\textstyle 1}
      {\textstyle 2}
 +
 \frac{\textstyle 1}
      {\textstyle 2}
 =
 \left( 1 - \sigma_{1}
 \right)
 \cdot
 \frac{\textstyle 1}
      {\textstyle 2}$
or
$\sigma_{1} = 0$,
so that
player $1$ is pure.                
A contradiction.

Assume now that
$\sigma_{3}^{1} = 0$.
Since $\bm{\sigma}$ is a Nash equilibrium
where player $3$ is mixed,
Lemma~\ref{basic property of mixed nash equilibria}
(Condition {\sf (1)})
implies that
\textcolor{black}{${\widehat{{\mathsf{U}}}}_{3}\left( \bm{\sigma}_{-3}
                        \diamond
                        0
                 \right)
 =
 {\widehat{{\mathsf{U}}}}_{3}\left( \bm{\sigma}_{-3}
                        \diamond
                        2
                 \right)$,}
or
$\sigma_{2}
 =
 \sigma_{1}
 \cdot
 \left( 1 - \sigma_{2}
 \right)$.
Since $\bm{\sigma}$ is a Nash equilibrium
where player $1$ is mixed,
Lemma~\ref{basic property of mixed nash equilibria}
(Condition {\sf (1)})
implies that
\textcolor{black}{${\widehat{{\mathsf{U}}}}_{1}\left( \bm{\sigma}_{-1}
                        \diamond
                        0
                 \right)
 =
 {\widehat{{\mathsf{U}}}}_{1}\left( \bm{\sigma}_{-1}
                        \diamond
                        1
                 \right)$,}
or 
$\sigma_{2} \cdot \sigma_{3}^{0}
 =
 1 - \sigma_{2} \cdot \sigma_{3}^{0}$
so that
$2\, \sigma_{2} \cdot \sigma_{3}^{0} = 1$.
Since $\bm{\sigma}$ is a Nash equilibrium
where player $2$ is mixed,
Lemma~\ref{basic property of mixed nash equilibria}
(Condition {\sf (1)})
implies that
\textcolor{black}{${\widehat{{\mathsf{U}}}}_{2}\left( \bm{\sigma}_{-2}
                        \diamond
                        0
                 \right)
 =
 {\widehat{{\mathsf{U}}}}_{2}\left( \bm{\sigma}_{-2}
                        \diamond
                        1
                 \right)$, }
or
$\sigma_{1} \cdot
 \left( 1 - \sigma_{3}^{0}
 \right)
 =
 \sigma_{3}^{0}
 +
 \left( 1 - \sigma_{1}
 \right)
 \cdot
 \left( 1 - \sigma_{3}^{0}
 \right)$
or
$2 \sigma_{1} - 1
 =
 \frac{\textstyle \sigma_{3}^{0}}
      {\textstyle 1 - \sigma_{3}^{0}}$.
Hence,
it follows that
$\left( 2 \sigma_{1} - 1      
 \right)
 \cdot
 \left( 2 \sigma_{2} - 1
 \right)
 =
 1$.
Thus,
$\left( 2 \frac{\textstyle \sigma_{2}}
               {\textstyle \sigma_{2} - 1}
        -
        1
 \right)                    
 \cdot
 \left( 2 \sigma_{2} - 1
 \right)
 =
 1$,
or
$\left( -1 + 3 \sigma_{2}
 \right)
 \cdot
 \left( 2 \sigma_{2} - 1
 \right)
 =
 1 - \sigma_{2}$,
which yields
$\sigma_{2} = \frac{\textstyle 2}
                   {\textstyle 3}$.
Hence,
$\sigma_{1}
 =
 \frac{\textstyle \sigma_{2}}
      {\textstyle 1 - \sigma_{2}}
 =
 2$.
A contradiction.

Assume finally that
$\sigma_{3}^{2}
 =
 0$,
so that
$\sigma_{3}^{0} +
 \sigma_{3}^{1}
 =
 1$.
Since $\bm{\sigma}$ is a Nash equilibrium
where player $1$ is mixed,
Lemma~\ref{basic property of mixed nash equilibria}
(Condition {\sf (1)})
implies that
\textcolor{black}{${\widehat{{\mathsf{U}}}}_{1}\left( \bm{\sigma}_{-1}
                        \diamond
                        0
                 \right)
 =
 {\widehat{{\mathsf{U}}}}_{1}\left( \bm{\sigma}_{-1}
                        \diamond
                        1
                 \right)$,}
or   
$\sigma_{2}
 \cdot
 \sigma_{3}^{0}
 +
 \sigma_{3}^{1}
 =
 1 - \sigma_{3}^{1}
 -
 \sigma_{2} \cdot \sigma_{3}^{0}$
or
$2\,
 \sigma_{2}
 \sigma_{3}^{0}
 +
 2
 \cdot
 \left( 1 - \sigma_{3}^{0}
 \right)
 =
 1$,
so that
$2\, \sigma_{3}^{0}
 \cdot
 \left( 1 - \sigma_{2}
 \right)
 =
 1$. 
Since $\bm{\sigma}$ is a Nash equilibrium
where player $2$ is mixed,
Lemma~\ref{basic property of mixed nash equilibria}
(Condition {\sf (1)})
implies that
\textcolor{black}{${\widehat{{\mathsf{U}}}}_{2}\left( \bm{\sigma}_{-2}
                        \diamond
                        0
                 \right)
 =
 {\widehat{{\mathsf{U}}}}_{2}\left( \bm{\sigma}_{-2}
                        \diamond
                        1
                 \right)$,}
or   
$\sigma_{3}^{1}
 =
 \sigma_{3}^{0}$.
It follows that
$\sigma_{3}^{0}
 =
 \frac{\textstyle 1}
      {\textstyle 2}$,
so that
$\sigma_{2} = 0$
and player $2$ is pure.
A contradiction.
It follows from the case analysis
that
$0 < \sigma_{3}^{0}, \sigma_{3}^{1}, \sigma_{3}^{2}
   < 1$,
and player $3$ is fully mixed.

Since  $\bm{\sigma}$ is a Nash equilibrium
where player $3$ is fully mixed,
Lemma~\ref{basic property of mixed nash equilibria}
(Condition {\sf (1)})
implies that
\textcolor{black}{${\widehat{{\mathsf{U}}}}_{3}\left( \bm{\sigma}_{-3}
                        \diamond
                        0
                 \right)
 =
 {\widehat{{\mathsf{U}}}}_{3}\left( \bm{\sigma}_{-3}
                        \diamond
                        1
                 \right)
 =
 {\widehat{{\mathsf{U}}}}_{3}\left( \bm{\sigma}_{-3}
                        \diamond
                        2
                 \right)$,}
or
$\sigma_{2}
 =
 1 - \sigma_{1}
 =
 \sigma_{1}
 \cdot
 \left( 1 - \sigma_{2}
 \right)$.
Hence,
$\left( 1 - \sigma_{2}
 \right)^{2}
 =
 \sigma_{2}$,
so that
$\sigma_{2}
 =
 \frac{\textstyle 3 - \sqrt{5}}
      {\textstyle 2}$
and
$\sigma_{1}
 =
 \frac{\textstyle \sqrt{5} - 1}
      {\textstyle 2}$.
Since $\bm{\sigma}$ is a Nash equilibrium
where player $1$ is mixed,
Lemma~\ref{basic property of mixed nash equilibria}
(Condition {\sf (1)})
implies that
\textcolor{black}{${\widehat{{\mathsf{U}}}}_{1}\left( \bm{\sigma}_{-1}
                        \diamond
                        0
                 \right)
 =
 {\widehat{{\mathsf{U}}}}_{1}\left( \bm{\sigma}_{-1}
                        \diamond
                        1
                 \right)$,}
or   
$2\, \sigma_{2}
 \cdot
 \sigma_{3}^{0}
 +
 2\,
 \sigma_{3}^{1}
 =
 1$.
Since $\bm{\sigma}$ is a Nash equilibrium
where player $2$ is mixed,
Lemma~\ref{basic property of mixed nash equilibria}
(Condition {\sf (1)})
implies that
\textcolor{black}{${\widehat{{\mathsf{U}}}}_{2}\left( \bm{\sigma}_{-2}
                        \diamond
                        0
                 \right)
 =
 {\widehat{{\mathsf{U}}}}_{2}\left( \bm{\sigma}_{-2}
                        \diamond
                        1
                 \right)$,}
or   
$\left( 2 \sigma_{1} - 1
 \right)
 \cdot
 \left( 1 - \sigma_{3}^{0} - \sigma_{3}^{1}
 \right)
 =
 \sigma_{3}^{0}
 -
 \sigma_{3}^{1}$
or
$2\, \left( 1 - \sigma_{1}
     \right)
 \cdot
 \sigma_{3}^{1}
 -
 2\, \sigma_{1} \cdot 
 \sigma_{3}^{0}
 =
 1 - 2 \sigma_{1}$.     
Combining we obtain
$\left( 2\, \sigma_{1}
        +
        2\, (1 - \sigma_{1})
            \cdot
            \sigma_{2}
 \right)
 \cdot
 \sigma_{3}^{1}
 =
 \left( 1 - 2 \sigma_{1}
 \right)
 \cdot
 \sigma_{2}
 +
 \sigma_{1}$,
which yields
$\sigma_{3}^{1}
 =
 \frac{\textstyle 5 - \sqrt{5}}
      {\textstyle 8}$.
Now,
$\sigma_{3}^{0}
 =
 \frac{\textstyle 1 - 2 \sigma_{3}^{1}}
      {\textstyle 2 \sigma_{2}}
 =
 \frac{\textstyle 1 + \sqrt{5}}
      {\textstyle 8}$,
and
$\sigma_{3}^{2}
 =
 1 - \sigma_{3}^{0} - \sigma_{3}^{1}
 =
 \frac{\textstyle 1}
      {\textstyle 4}$.

\end{itemize}
\noindent
The claim now follows.
\end{proof}

\noindent
By Proposition~\ref{gadget2},
the win-lose 3-player game
${\widehat{{\mathsf{G}}}}_{2}$
is a negative instance for
{\sf $\exists$ RATIONAL NASH}.
Although ${\widehat{{\mathsf{G}}}}_{2}$
is also a negative instance for
{\sf $\exists$ UNIFORM NASH}
and for
{\sf $\exists$ SYMMETRIC NASH}
we \textcolor{black}{shall} present \textcolor{black}{next}
the win-lose bimatrix games
${\widehat{{\mathsf{G}}}}_{3}$
and
${\widehat{{\mathsf{G}}}}_{4}$
as negative instances for
{\sf $\exists$ UNIFORM NASH}
and for
{\sf $\exists$ SYMMETRIC NASH},
respectively,
since we \textcolor{black}{seek to establish}
that these
two decision problems
are ${\mathcal{NP}}$-hard
for win-lose bimatrix games.

\subsection{The Non-Uniform Game $\widehat{{\mathsf{G}}}_{3}$}
\label{nonuniform game}

\noindent
Define the win-lose game
\textcolor{black}{$\widehat{{\sf G}}_{3}
 =
 \langle
   [2],
   \{ {\widehat{{\sf\Sigma}}}_i\}_{i \in [2]},
   \{ {\widehat{{\sf U}}}_i\}_{i \in [2]}
 \rangle$}
with
\textcolor{black}{${\widehat{{\mathsf{\Sigma}}}}_{1}
 =
 {\widehat{{\mathsf{\Sigma}}}}_{2}
 =
 [4]$};
the utility functions are depicted
below:
\begin{center}
\begin{small}
\begin{tabular}{||c|c||c|c||c|c||c|c||}
\hline
{\sf Profile} ${\bf s}$  & $\widehat{{\mathsf{U}}}({\bf s})$  &
{\sf Profile} ${\bf s}$  & $\widehat{{\mathsf{U}}}({\bf s})$  &
{\sf Profile} ${\bf s}$  & $\widehat{{\mathsf{U}}}({\bf s})$  &
{\sf Profile} ${\bf s}$  & $\widehat{{\mathsf{U}}}({\bf s})$  \\
\hline
\hline
$\langle 1, 1 \rangle$ & $\langle 0, 1 \rangle$ & 
$\langle 1, 2 \rangle$ & $\langle 0, 1 \rangle$ & 
$\langle 1, 3 \rangle$ & $\langle 1, 0 \rangle$ &
$\langle 1, 4 \rangle$ & $\langle 1, 0 \rangle$ \\
\hline
$\langle 2, 1 \rangle$ & $\langle 0, 1 \rangle$ & 
$\langle 2, 2 \rangle$ & $\langle 1, 0 \rangle$ & 
$\langle 2, 3 \rangle$ & $\langle 1, 0 \rangle$ &
$\langle 2, 4 \rangle$ & $\langle 0, 1 \rangle$ \\
\hline
$\langle 3, 1 \rangle$ & $\langle 0, 1 \rangle$ & 
$\langle 3, 2 \rangle$ & $\langle 1, 0 \rangle$ & 
$\langle 3, 3 \rangle$ & $\langle 0, 1 \rangle$ &
$\langle 3, 4 \rangle$ & $\langle 1, 0 \rangle$ \\
\hline
$\langle 4, 1 \rangle$ & $\langle 1, 0 \rangle$ & 
$\langle 4, 2 \rangle$ & $\langle 0, 1 \rangle$ & 
$\langle 4, 3 \rangle$ & $\langle 0, 1 \rangle$ &
$\langle 4, 4 \rangle$ & $\langle 0, 1 \rangle$ \\
\hline
\hline
\end{tabular}
\end{small}
\end{center}
Clearly,
$\widehat{{\mathsf{G}}}_3$
has the positive utility property.
Note that
for each profile
${\bf s}$,
exactly one of
\textcolor{black}{${\widehat{{\mathsf{U}}}}_{1}\left( {\bf s}
                              \right)
 =
 1$}
and
\textcolor{black}{${\widehat{{\mathsf{U}}}}_{2}\left( {\bf s}
                              \right)
 =
 1$}
holds,
and $\widehat{{\mathsf{G}}}_3$
is $1$-sum.
Define the {\it shape}
of a profile ${\bf s}$ to be
{\it square}
(resp., 
{\it circle})
when 
${\widehat{{\mathsf{U}}}}_{1}\left( {\bf s}
                              \right)
 =
 1$
(resp.,
${\widehat{{\mathsf{U}}}}_{2}\left( {\bf s}
                              \right)
 =
 1$).
Figure~\ref{shapes} presents
a bimatricial representation
of the set of profiles for
$\widehat{{\mathsf{G}}}_{3}$,
where each profile
is drawn as either a square or a circle;
rows and columns
correspond to strategies of player $1$ and $2$,
respectively.

\begin{figure}[tp]
\begin{center}
\setlength{\unitlength}{1cm}
\begin{small}
\begin{picture}(5.5,5.5)(0,0)
\put(0,0){\framebox(1,1)}
\put(2,0.5){\circle{1}}
\put(3.5,0.5){\circle{1}}
\put(5,0.5){\circle{1}}

\put(0.5,2){\circle{1}}
\put(1.5,1.5){\framebox(1,1)}
\put(3.5,2){\circle{1}}
\put(4.5,1.5){\framebox(1,1)}

\put(0.5,3.5){\circle{1}}
\put(1.5,3){\framebox(1,1)}
\put(3,3){\framebox(1,1)}
\put(5,3.5){\circle{1}}

\put(0.5,5){\circle{1}}
\put(2,5){\circle{1}}
\put(3,4.5){\framebox(1,1)}
\put(4.5,4.5){\framebox(1,1)}

\put(0.16,0.43){$\langle 4,1\rangle$}
\put(1.66,0.43){$\langle 4,2\rangle$}
\put(3.16,0.43){$\langle 4,3\rangle$}
\put(4.66,0.43){$\langle 4,4\rangle$}

\put(0.16,1.93){$\langle 3,1\rangle$}
\put(1.66,1.93){$\langle 3,2\rangle$}
\put(3.16,1.93){$\langle 3,3\rangle$}
\put(4.66,1.93){$\langle 3,4\rangle$}

\put(0.16,3.43){$\langle 2,1\rangle$}
\put(1.66,3.43){$\langle 2,2\rangle$}
\put(3.16,3.43){$\langle 2,3\rangle$}
\put(4.66,3.43){$\langle 2,4\rangle$}

\put(0.16,4.93){$\langle 1,1\rangle$}
\put(1.66,4.93){$\langle 1,2\rangle$}
\put(3.16,4.93){$\langle 1,3\rangle$}
\put(4.66,4.93){$\langle 1,4\rangle$}

\end{picture}
\end{small}
\caption{A bimatricial representation of all profiles
         in $\widehat{{\sf G}}_3$
         as shapes.}
\label{shapes}         
\end{center}
\end{figure}
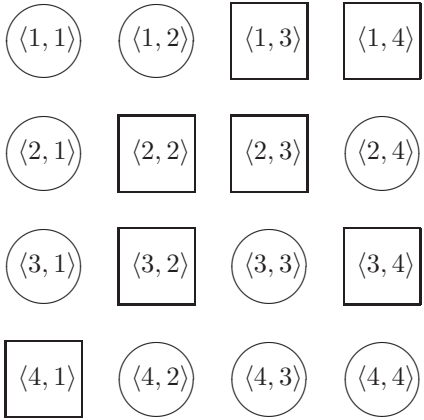
\noindent

\noindent
Note that
in a uniform Nash equilibrium $\bm{\sigma}$,
a best-response of player $2$
(resp., player $1$)
to $\sigma_{1}$
(resp., $\sigma_{2}$)
is a strategy
\textcolor{black}{$s \in {\widehat{{\mathsf{\Sigma}}}}_{2}$}
(resp.,
\textcolor{black}{$s \in {\widehat{{\mathsf{\Sigma}}}}_{1}$})
whose corresponding column
(resp., row),
restricted to the rows
(resp., the columns)
corresponding to the strategies
in ${\mathsf{Supp}}(\sigma_{1})$
(resp.,
in ${\mathsf{Supp}}(\sigma_{2})$),
contains a maximum number of circles
(resp., squares).
In particular,
all columns
(resp., rows)
corresponding to strategies
that are best-responses of player $2$
(resp., player $1$)
to $\sigma_{1}$
(resp., $\sigma_{2}$),
restricted to the rows
(resp., the columns)
corresponding to the strategies
in ${\mathsf{Supp}}(\sigma_{1})$
(resp.,
in ${\mathsf{Supp}}(\sigma_{2})$),
contain the same number of circles
(resp., squares).
We prove:

\begin{proposition}
\label{barrage}
$\widehat{{\sf G}}_3$ 
has no uniform Nash equilibrium.
\end{proposition}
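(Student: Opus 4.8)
The plan is a proof by contradiction, exploiting the best-response characterisation for uniform equilibria recalled just before the statement together with the sparsity of two lines of the payoff pattern of $\widehat{{\mathsf{G}}}_3$. Suppose $\bm{\sigma}$ were a uniform Nash equilibrium, put $A := {\mathsf{Supp}}(\sigma_1)$, $B := {\mathsf{Supp}}(\sigma_2)$, $a := |A|$, $b := |B|$, and write $r_{ij} := \widehat{{\mathsf{U}}}_1(\langle i,j \rangle) \in \{0,1\}$, so a profile $\langle i,j \rangle$ is a square iff $r_{ij} = 1$ and, since $\widehat{{\mathsf{G}}}_3$ is $1$-sum, $\widehat{{\mathsf{U}}}_2(\langle i,j \rangle) = 1 - r_{ij}$. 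For each column $j$ set $R_j := \sum_{i \in A} r_{ij}$ (the number of squares it contains among the rows of $A$) and for each row $i$ set $\rho_i := \sum_{j \in B} r_{ij}$. The observation preceding the statement says that all columns of $B$ carry the same, maximal, number of circles against the uniform $\sigma_1$; equivalently there is an integer $d$ with $R_j = d$ for all $j \in B$ and $R_j \ge d$ for all $j \notin B$, and symmetrically there is an integer $s$ with $\rho_i = s$ for all $i \in A$ and $\rho_i \le s$ for all $i \notin A$. Counting $\sum_{i \in A,\, j \in B} r_{ij}$ in two ways gives $b\,d = a\,s$, and $\widehat{{\mathsf{U}}}_1(\bm{\sigma}) = \frac{1}{ab} \sum_{i \in A,\, j \in B} r_{ij} = \frac{d}{a} = \frac{s}{b}$. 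Since $\widehat{{\mathsf{G}}}_3$ is win-lose with the positive utility property and $1$-sum, Lemma~\ref{park kafe} forces $0 < \widehat{{\mathsf{U}}}_1(\bm{\sigma}) < 1$, hence $1 \le d \le a-1$ and $1 \le s \le b-1$; in particular $a, b \ge 2$.

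The crucial reduction comes from two degenerate lines of the pattern $(r_{ij})$: column $1$ carries a single square (at row $4$) and row $4$ carries a single square (at column $1$). Thus $R_1 \le 1$, and combined with $R_1 \ge d \ge 1$ this forces $d = 1$ and $4 \in A$. With $d = 1$ the relation $b = a\,s$ and the bounds $a, b \le 4$ leave only $(a,b,s) \in \{(2,2,1),\,(2,4,2),\,(3,3,1),\,(4,4,1)\}$. I would then eliminate the two cases with $b = 4$ by parity: if $B = \widehat{{\mathsf{\Sigma}}}_2$ then $\sum_{j} R_j = 4d = 4$, while $\sum_j R_j = \sum_{i \in A} \sum_j r_{ij}$ equals the sum of the row-sums of the rows in $A$, which is $2(a-1) + 1 = 2a-1$ because $4 \in A$ and rows $1,2,3$ have row-sum $2$ while row $4$ has row-sum $1$; and $2a-1 \ne 4$.

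Two cases survive, $(a,b) = (2,2)$ and $(a,b) = (3,3)$, both with $d = s = 1$ and $4 \in A$. If $(a,b) = (2,2)$, write $A = \{4, k\}$ with $k \in \{1,2,3\}$; since $r_{4j} = 0$ for $j \in \{2,3,4\}$, the requirement $R_j \ge 1$ for these $j$ would force $r_{kj} = 1$ for all $j \in \{2,3,4\}$, i.e.\ row $k$ has row-sum $\ge 3$, impossible. If $(a,b) = (3,3)$, write $A = \{4\} \cup A'$ with $A'$ a two-element subset of $\{1,2,3\}$ and let $k$ be the third element of $\{1,2,3\}$; here I would use that the restrictions of rows $1,2,3$ to columns $\{2,3,4\}$ are exactly the three two-element subsets of $\{2,3,4\}$. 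A short count gives $R_1 = 1$ and $R_2 + R_3 + R_4 = 2\,|A'| = 4$ with each term $\ge 1$, so exactly one column $j_0 \in \{2,3,4\}$ has $R_{j_0} = 2$ and the other two have $R_j = 1$, forcing $B = \widehat{{\mathsf{\Sigma}}}_2 \setminus \{j_0\}$; moreover $j_0$ is the single column in which the two rows making up $A'$ both have a square, which is precisely the column missing from the support of row $k$, so $r_{k j_0} = 0$ and therefore $\rho_k = 2 - r_{k j_0} = 2 > 1 = s$ with $k \notin A$ --- contradicting $\rho_i \le s$ for $i \notin A$. (The three choices of $A'$ can alternatively be verified by direct inspection.) All cases having led to a contradiction, $\widehat{{\mathsf{G}}}_3$ has no uniform Nash equilibrium.

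The hard part is the last case, $(a,b) = (3,3)$ with $d = s = 1$: there the weak consequence ``$R_j \ge 1$ for every $j$'' is satisfiable for all three admissible choices of $A$, so no contradiction arises from the inequality constraints alone. One must use the equality parts of the best-response conditions on both sides simultaneously --- first to pin down which three columns actually form $B$ (not merely their square-counts), and then to exhibit the profitable unilateral deviation of player $1$ to the ``leftover'' row $k$. Everything else reduces, after normalising $d = 1$, to the arithmetic of $b = as$ with $a, b \le 4$ and to elementary counting inside the $4 \times 4$ pattern $(r_{ij})$.
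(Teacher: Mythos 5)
Your proof is correct, and it reaches the conclusion by a genuinely different organization than the paper's. The paper proceeds by a case analysis on $|{\mathsf{Supp}}(\sigma_1)| \in \{2,3,4\}$ supported by bespoke structural lemmas: Lemma~\ref{barrage 1} (no pure player, via Lemma~\ref{park kafe} and the $1$-sum property), Lemma~\ref{barrage 5} (support size $4$ forces column $1$ to be the unique best response), Lemma~\ref{barrage structure} (no column meeting all supported rows only in circles), Lemma~\ref{barrage 2} (the ``collision'' argument for support size $2$), and Lemma~\ref{barrage 3} (inspection of the three/four subcases for support size $3$). You instead set up the square-counting functions $R_j$, $\rho_i$, the double-counting identity $bd = as$, and the expression ${\widehat{{\mathsf{U}}}}_1(\bm{\sigma}) = d/a = s/b$, and then use Lemma~\ref{park kafe} plus the $1$-sum structure to force $1 \leq d \leq a-1$, $1 \leq s \leq b-1$ (this subsumes both Lemma~\ref{barrage 1} and, via $R_j \geq d \geq 1$, Lemma~\ref{barrage structure}); the degenerate column $1$ then pins $d = 1$ and $4 \in A$, the Diophantine constraint $b = as$ with $a, b \leq 4$ leaves four triples, the two with $b = 4$ die by a parity count using the single square of row $4$ (this replaces, and is broader than, Lemma~\ref{barrage 5}, which only treats $a = 4$), and the surviving cases $(2,2)$ and $(3,3)$ are dispatched by counting arguments whose content essentially compresses Lemmas~\ref{barrage 2} and~\ref{barrage 3} --- in the $(3,3)$ case your identification of $B = {\widehat{{\mathsf{\Sigma}}}}_2 \setminus \{j_0\}$ and of the profitable deviation to the leftover row $k$ (giving $\rho_k/b = 2/3 > 1/3$) matches the paper's computations exactly. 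What your route buys is a shorter, more systematic treatment that constrains both players' support sizes simultaneously and replaces ad hoc inspection by arithmetic; what the paper's route buys is that each step is a purely local inspection of the shape diagram, with no bookkeeping of the quantities $d, s, a, b$. I verified the pattern-specific facts you rely on (column $1$ and row $4$ each contain a single square; rows $1,2,3$ have row-sum $2$ and restrict to the three distinct $2$-subsets of $\{2,3,4\}$), and each best-response condition is applied with the correct direction of inequality, so there is no gap.
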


\begin{proof}
Note first that
$\widehat{{\mathsf{G}}}_{3}$
has no pure Nash equilibrium.
We now prove:

\begin{lemma}[No Pure Player]
\label{barrage 1}
Fix a Nash equilibrium 
$\bm{\sigma}$.
Then,
for each player $i \in [2]$,
$|{\mathsf{Supp}}(\sigma_{i})|
 \geq
 2$.
\end{lemma}

\begin{proof}
Assume,
by way of contradiction, 
that there is
a player $i \in [2]$
with $|{\mathsf{Supp}}(\sigma_{i})|
      =
      1$.
Since $\bm{\sigma}$ is mixed,
it follows that      
player 
$\overline{i}
  \in
  [2] \setminus \{ i \}$ is mixed.
Since $\bm{\sigma}$ is a Nash equilibrium
where player $\overline{i}$ is mixed,
Lemma~\ref{basic property of mixed nash equilibria}
(Condition {\sf (1)})
implies that
for each strategy 
$s \in {\mathsf{Supp}}(\sigma_{\overline{i}})$,
\textcolor{black}{${\widehat{{\sf U}}}_{\overline{i}}
      ({\bm\sigma})
 =     
 {\widehat{{\mathsf{U}}}}_{\overline{i}}
          \left( \bm{\sigma}_{- \overline{i}} 
                   \diamond
                   s
          \right)$}.
So,
all profiles
supported in
$\bm{\sigma}_{- \overline{i}}
  \diamond
  s$
have the same shape. 
Since
$\widehat{\mathsf{G}}_{3}$
is a 1-sum, win-lose game, 
either 
\textcolor{black}{${\widehat{{\mathsf{U}}}}_{i}({\bm{\sigma}})
 =
 0$}
or
\textcolor{black}{${\widehat{{\mathsf{U}}}}_{\overline{i}}
          ({\bm{\sigma}})
 =
 0$}.
Since
$\widehat{{\mathsf{G}}}_{3}$ 
has the positive utility property,
Lemma~\ref{park kafe}
implies that both
\textcolor{black}{${\widehat{{\mathsf{U}}}}_{i}({\bm{\sigma}})
 >
 0$}
and
\textcolor{black}{${\widehat{{\mathsf{U}}}}_{\overline{i}}
          ({\bm{\sigma}})
 >
 0$}.
A contradiction. 
\end{proof}

\noindent
To complete the proof,
we shall establish that
there is no uniform Nash equilibrium $\bm\sigma$
with $2 \leq |{\mathsf{Supp}}(\sigma_{1})|
        \leq 4$.
We first prove:

\begin{lemma}
\label{barrage 5}
Fix a uniform Nash equilibrium 
$\bm{\sigma}$.
Then,
$|{\mathsf{Supp}}(\sigma_{1})|
 \neq
 4$.
\end{lemma}

\begin{proof}
Assume,
by way of contradiction,
that
$|{\mathsf{Supp}}(\sigma_{1})|
 =
 4$.
In this case, 
strategy $1$ is
the unique
best-response
of player 2
to $\sigma_{1}$
since column 1
is the only column
containing the maximum number of circles,
which is 3.
So,
${\mathsf{Supp}}(\sigma_{2})
 =
 \{ 1 \}$.
A contradiction
to Lemma~\ref{barrage 1}.
\end{proof}

\noindent
We now prove a simple structural property:

\begin{lemma}[No Column Crossing All Supported Rows Only in Circles]  
\label{barrage structure}
Fix a Nash equilibrium $\bm{\sigma}$.
Then,
there is no strategy
\textcolor{black}{$s \in {\widehat{{\mathsf{\Sigma}}}}_{2}$}
such that
all profiles supported in
$\bm{\sigma}_{-2}
 \diamond 
 s$
are circles.
\end{lemma}

\begin{proof}
Assume, 
by way of contradiction,
that there is
a strategy 
\textcolor{black}{$s \in {\widehat{{\mathsf{\Sigma}}}}_{2}$}
such that
all profiles supported in 
${\bm\sigma}_{-2}
 \diamond
 s$
are circles.
This implies that
\textcolor{black}{${\widehat{{\sf U}}}_{2}\left( \bm{\sigma}_{-2}
                   \diamond
                   s
            \right)
 =
 1$}.
Since $\bm{\sigma}$ is a Nash equilibrium
where player $2$ is mixed, 
Lemma~\ref{basic property of mixed nash equilibria}
(Condition {\sf (2)})
implies that
\textcolor{black}{${\widehat{{\sf U}}}_{2}(\bm{\sigma})
 \geq
 {\widehat{{\sf U}}}_{2}\left( \bm{\sigma}_{-2}
                   \diamond
                   s
            \right)$}.
Since $\widehat{{\mathsf{G}}}_{3}$
is win-lose,
this implies that
\textcolor{black}{${\widehat{{\sf U}}}_{2}(\bm{\sigma})
 =            
 1$}.
Since $\widehat{{\mathsf{G}}}_{3}$
is $1$-sum,
it follows that
\textcolor{black}{${\widehat{{\mathsf{U}}}}_{1}(\bm{\sigma})
 =            
 0$}.
Since $\widehat{{\mathsf{G}}}_{3}$
has the positive utility property,
Lemma~\ref{park kafe} implies that
\textcolor{black}{${\widehat{{\mathsf{U}}}}_{1}(\bm{\sigma})
 >            
 0$}.
A contradiction.
\end{proof}

\noindent
We next prove:

\begin{lemma}
\label{barrage 2}
Fix a uniform Nash equilibrium 
$\bm{\sigma}$.
Then,
$|{\mathsf{Supp}}(\sigma_{1})|
 \neq
 2$.
\end{lemma}

\begin{proof}
Assume,
by way of contradiction,
that
$|{\mathsf{Supp}}(\sigma_{1})|
 =
 2$.
Say that two distinct strategies
\textcolor{black}{$s, s^{\prime}
 \in
 {\widehat{{\mathsf{\Sigma}}}}_{1}$}
for player 1
{\em collide on}
strategy 
\textcolor{black}{$t \in {\widehat{{\mathsf{\Sigma}}}}_{2}$}
if both profiles 
$\left\langle s, t
 \right\rangle$
and
$\left\langle s^{\prime}, t
 \right\rangle$
are circles.
Lemma~\ref{barrage structure}
implies that
the two strategies in 
${\mathsf{Supp}}(\sigma_{1})$
do not collide
on any strategy in 
\textcolor{black}{${\widehat{{\mathsf{\Sigma}}}}_{2}$}.
Since column 1 contains only circles
except for
the entry corresponding to row $4$, 
Lemma~\ref{barrage structure}
implies that
$4 
 \in
 {\mathsf{Supp}}(\sigma_{1})$.
So,
three cases need to be considered,
corresponding to which the other strategy 
in ${\mathsf{Supp}}(\sigma_{1})$ is.
By inspection,
strategies $1$ and $4$ collide on column $2$;
strategies $2$ and $4$ collide on column $4$;
strategies $3$ and $4$ collide on column $3$.
A contradiction.
\end{proof}

\noindent
We finally prove:

\begin{lemma}
\label{barrage 3}
Fix a uniform Nash equilibrium 
$\bm{\sigma}$. 
Then,
$|{\mathsf{Supp}}(\sigma_{1})|
 \neq
 3$.
\end{lemma}

\begin{proof}
Assume,
by way of contradiction,
that
$|{\mathsf{Supp}}(\sigma_{1})|
 =
 3$.
There are four possibilities
corresponding to which strategy $s$
among the four strategies 
in \textcolor{black}{${\widehat{{\mathsf{\Sigma}}}}_{1}$}
is excluded from
${\mathsf{Supp}}(\sigma_{1})$.
We proceed by case analysis.
\begin{enumerate}

\item
\underline{$s=1$ so that
           ${\mathsf{Supp}}\left( \sigma_{1}
                           \right)
            =
            \{ 2, 3, 4 \}$:}
Since $\bm{\sigma}$ is uniform,
strategies in $\{ 1, 3, 4 \}$
are the only three
best-responses of player $2$
to $\sigma_{1}$ 
since each of their corresponding columns,
restricted to rows corresponding
to the strategies in
${\mathsf{Supp}}(\sigma_{1})$,
contains the maximum number of circles,
which is $2$.
By Lemma~\ref{basic property of mixed nash equilibria},
it follows that
${\mathsf{Supp}}(\sigma_{2})
  \subseteq
  \{ 1, 3, 4 \}$.
Observe that the only square profiles
supported in ${\bm{\sigma}}$
whenever
${\mathsf{Supp}}\left( \sigma_{2}
                \right)
 =
 \{ 1, 3, 4 \}$
are 
$\langle 4, 1 \rangle$,
$\langle 2, 3 \rangle$
and
$\langle 3, 4 \rangle$;
so,
there is exactly one square profile
for each possible best-response
of player 2
to $\sigma_{1}$.
This implies that
\textcolor{black}{${\widehat{{\mathsf{U}}}}_{1}\left( \bm{\sigma}
                 \right)
 =
 \frac{\textstyle 1}
         {\textstyle 3}$,}
regardless of ${\mathsf{Supp}}(\sigma_{2})$.          
If $|{\mathsf{Supp}}(\sigma_{2})|
     <
     3$,
then
for each strategy
$s \in {\mathsf{Supp}}(\sigma_{1})$,
\textcolor{black}{${\widehat{{\mathsf{U}}}}_{1}\left( \bm{\sigma}_{-1}
                        \diamond
                        s
               \right)
 \geq
 \frac{\textstyle 1}
         {\textstyle 2}$,}
a contradiction to Lemma~\ref{basic property of mixed nash equilibria}
(Condition {\sf (1)}).              
So,
${\mathsf{Supp}}\left( \sigma_{2}
                             \right)
      =
      \{ 1, 3, 4 \}$.
But then 
\textcolor{black}{${\widehat{{\mathsf{U}}}}_{1}\left( \bm{\sigma}_{-1}
                        \diamond
                        1
               \right)
 =
 \frac{\textstyle 2}
         {\textstyle 3}$,}
a contradiction to Lemma~\ref{basic property of mixed nash equilibria}
(Condition {\sf (1)}).

\item
\underline{$s = 2$
           so that
           ${\mathsf{Supp}}\left( \sigma_{1}
                           \right)
            =
            \{ 1, 3, 4 \}$:}
Since $\bm{\sigma}$ is uniform,
strategies in $\{ 1, 2, 3 \}$
are the only three
best-responses for player $2$
to $\sigma_{1}$ 
since each of their corresponding columns,
restricted to rows corresponding
to the strategies in
${\mathsf{Supp}}(\sigma_{1})$,
contains the maximum number of circles,
which is $2$.
By Lemma~\ref{basic property of mixed nash equilibria},
it follows that
${\mathsf{Supp}}(\sigma_{2})
  \subseteq
  \{ 1, 2, 3 \}$.
Observe that the only square profiles
supported in ${\bm{\sigma}}$
whenever 
${\mathsf{Supp}}\left( \sigma_{2}
                \right)
 =
 \{ 1, 2, 3 \}$
are 
$\langle 4, 1 \rangle$,
$\langle 3, 2 \rangle$;
and
$\langle 1, 3 \rangle$;
so,
there is exactly one square profile
for each possible best-response
of player 2
to $\sigma_{1}$.
This implies that
\textcolor{black}{${\widehat{{\mathsf{U}}}}_{1}\left( \bm{\sigma}
                 \right)
 =
 \frac{\textstyle 1}
         {\textstyle 3}$,}
regardless of ${\mathsf{Supp}}(\sigma_{2})$.          
If $|{\mathsf{Supp}}(\sigma_{2})|
     <
     3$,
then
for each strategy
$s \in {\mathsf{Supp}}(\sigma_{1})$,
\textcolor{black}{${\widehat{{\mathsf{U}}}}_{1}\left( \bm{\sigma}_{-1}
                        \diamond
                        s
               \right)
 \geq
 \frac{\textstyle 1}
         {\textstyle 2}$,}
a contradiction to Lemma~\ref{basic property of mixed nash equilibria}
(Condition {\sf (1)}).              
So,
${\mathsf{Supp}}\left( \sigma_{2}
                             \right)
      =
      \{ 1, 2, 3 \}$.
But then 
\textcolor{black}{${\widehat{{\mathsf{U}}}}_{1}\left( \bm{\sigma}_{-1}
                        \diamond
                        2
               \right)
 =
 \frac{\textstyle 2}
         {\textstyle 3}$,}
a contradiction to Lemma~\ref{basic property of mixed nash equilibria}
(Condition {\sf (1)}).

\item
\underline{$s = 3$
           so that
           ${\mathsf{Supp}}\left( \sigma_{1}
                           \right)
            =
            \{ 1, 2, 4 \}$:}
Since $\bm{\sigma}$ is uniform,
strategies in $\{ 1, 2, 4 \}$
are the only three
best-responses for player $2$
to $\sigma_{1}$ 
since each of their corresponding columns,
restricted to rows corresponding
to the strategies in
${\mathsf{Supp}}(\sigma_{1})$,
contains the maximum number of circles,
which is $2$.
By Lemma~\ref{basic property of mixed nash equilibria},
it follows that
${\mathsf{Supp}}(\sigma_{2})
  \subseteq
  \{ 1, 2, 4 \}$.
Observe that the only square profiles
enabled in ${\bm{\sigma}}$
whenever
${\mathsf{Supp}}\left( \sigma_{2}
                \right)
 =
 \{ 1, 2, 4 \}$
are 
$\langle 4, 1 \rangle$,
$\langle 2, 2 \rangle$
and
$\langle 1, 4\rangle$;
so,
there is exactly one square profile
for each possible best-response
of player 2
to $\sigma_{1}$. 
This implies that
\textcolor{black}{${\widehat{{\mathsf{U}}}}_{1}\left( \bm{\sigma}
                 \right)
 =
 \frac{\textstyle 1}
         {\textstyle 3}$,}
regardless of ${\mathsf{Supp}}(\sigma_{2})$.          
If $|{\mathsf{Supp}}(\sigma_{2})|
     <
     3$,
then
for each strategy
$s \in {\mathsf{Supp}}(\sigma_{1})$,
\textcolor{black}{${\widehat{{\mathsf{U}}}}_{1}\left( \bm{\sigma}_{-1}
                        \diamond
                        s
               \right)
 \geq
 \frac{\textstyle 1}
         {\textstyle 2}$,}
a contradiction to Lemma~\ref{basic property of mixed nash equilibria}
(Condition {\sf (1)}).              
So,
${\mathsf{Supp}}\left( \sigma_{2}
                             \right)
      =
      \{ 1, 2, 4 \}$.
But then 
\textcolor{black}{${\widehat{{\mathsf{U}}}}_{1}\left( \bm{\sigma}_{-1}
                        \diamond
                        3
               \right)
 =
 \frac{\textstyle 2}
         {\textstyle 3}$,}
a contradiction to Lemma~\ref{basic property of mixed nash equilibria}
(Condition {\sf (1)}).

\item
\underline{$s = 4$
           so that
           ${\mathsf{Supp}}\left( \sigma_{1}
                           \right)
            =
            \{ 1, 2, 3 \}$:}
Then,
all profiles supported in
$\bm{\sigma}_{-2}
 \diamond
 1$
are circles.
A contradiction
to Lemma~\ref{barrage structure}.

\end{enumerate}
The proof is now complete.
\end{proof}
\noindent
The claim follows now from
Lemmas~\ref{barrage 1},
\ref{barrage 5},
\ref{barrage 2}
and~\ref{barrage 3}.
\end{proof}

\noindent
By Proposition~\ref{barrage},
the win-lose game
${\widehat{{\mathsf{G}}}}_{3}$
is a negative instance for
{\sf $\exists$ UNIFORM NASH}.

\subsection{The Non-Symmetric Game $\widehat{{\mathsf{G}}}_{4}$}
\label{nonsymmetric game}

\noindent
Define the win-lose bimatrix game
\textcolor{black}{$\widehat{\mathsf{G}}_{4}
 =
 \left\langle [2],
              \{ {\widehat{{\mathsf{\Sigma}}}}_{i}
              \}_{i \in [2]},
              \{ {\widehat{{\mathsf{U}}}}_{i}
              \}_{i \in [2]}
 \right\rangle$}
with
\textcolor{black}{${\widehat{{\mathsf{\Sigma}}}}_{1}
 =
 {\widehat{{\mathsf{\Sigma}}}}_{2}
 =
 \{ 1, 2, 3
 \}$;}
the utility functions are depicted below:
\begin{center}
\begin{small}
\begin{tabular}{||c|c||c|c||c|c||}
\hline
  {\sf Profile} ${\bf s}$  & $\widehat{{\sf U}}({\bf s})$
& {\sf Profile} ${\bf s}$  & $\widehat{{\sf U}}({\bf s})$
& {\sf Profile} ${\bf s}$  & $\widehat{{\sf U}}({\bf s})$ \\
\hline
\hline
$\langle 1, 1
 \rangle$        & $\langle 1, 0
                           \rangle$        &  $\langle 1, 2
                                                      \rangle$        & $\langle 0, 1
                                                                                \rangle$        &  $\langle 1, 3
                                                                                                           \rangle$       & $\langle 1, 0
                                                                                                                                     \rangle$                                                                \\
\hline
$\langle 2, 1
  \rangle$      & $\langle 0, 1
                           \rangle$        &  $\langle 2, 2
                                                       \rangle$       & $\langle 0, 0
                                                                                \rangle$        &  $\langle 2, 3
                                                                                                           \rangle$       & $\langle 1, 0
                                                                                                                                    \rangle$                                                               \\
\hline
$\langle 3, 1
 \rangle$      & $\langle 0, 1
                         \rangle$         &  $\langle 3, 2
                                                     \rangle$         & $\langle 1, 0
                                                                                 \rangle$       &  $\langle 3, 3
                                                                                                           \rangle$       & $\langle 0, 1
                                                                                                                                     \rangle$                                                               \\
\hline
\hline
\end{tabular}
\end{small}
\end{center}
\noindent
Clearly,
$\widehat{\mathsf{G}}_{4}$
has the positive utility property.
We prove:

\begin{proposition}
\label{gadget5}
${\widehat{{\mathsf{G}}}}_{4}$
has no symmetric Nash equilibrium.
\end{proposition}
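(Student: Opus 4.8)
The plan is to assume, for contradiction, that $\langle \sigma, \sigma \rangle$ is a symmetric Nash equilibrium of $\widehat{\mathsf{G}}_{4}$, writing $p$, $q$, $r$ for the probabilities $\sigma$ assigns to strategies $1$, $2$, $3$ (so $p+q+r=1$), and then to pin down $\mathsf{Supp}(\sigma)$ until only a single pure profile survives, which itself fails. The key tool in every case is Lemma~\ref{park kafe}: since $\widehat{\mathsf{G}}_{4}$ has the positive utility property, any Nash equilibrium, in particular $\langle\sigma,\sigma\rangle$, satisfies $\widehat{{\mathsf{U}}}_{1}(\langle\sigma,\sigma\rangle)>0$ and $\widehat{{\mathsf{U}}}_{2}(\langle\sigma,\sigma\rangle)>0$. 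As a preliminary bookkeeping step I would read off from the utility table the conditional expected utilities against $\sigma$: a deviation by player $1$ to strategy $1$, $2$, $3$ yields $p+r$, $r$, $q$ respectively, while a deviation by player $2$ to strategy $1$, $2$, $3$ yields $q+r$, $p$, $r$ respectively.

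Next I would show $2\notin\mathsf{Supp}(\sigma)$. If $2\in\mathsf{Supp}(\sigma)$, then player $1$ plays $2$ with positive probability, so by Lemma~\ref{basic property of mixed nash equilibria}(1) we have $\widehat{{\mathsf{U}}}_{1}(\langle\sigma,\sigma\rangle)=r$, and by Condition~(2) applied to the deviation to strategy $1$, $r=\widehat{{\mathsf{U}}}_{1}(\langle\sigma,\sigma\rangle)\geq p+r$, forcing $p=0$. But in a symmetric profile player $2$ also plays $2$ with positive probability, so Lemma~\ref{basic property of mixed nash equilibria}(1) gives $\widehat{{\mathsf{U}}}_{2}(\langle\sigma,\sigma\rangle)=p=0$, contradicting Lemma~\ref{park kafe}. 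Hence $q=0$.

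Then I would show $3\notin\mathsf{Supp}(\sigma)$: if $3\in\mathsf{Supp}(\sigma)$, then player $1$ plays $3$ with positive probability, so $\widehat{{\mathsf{U}}}_{1}(\langle\sigma,\sigma\rangle)=q=0$ by Lemma~\ref{basic property of mixed nash equilibria}(1), again contradicting Lemma~\ref{park kafe}. Thus $r=0$ as well, so $p=1$ and $\langle\sigma,\sigma\rangle=\langle 1,1\rangle$; but $\widehat{{\mathsf{U}}}_{2}(\langle 1,1\rangle)=0$, one final contradiction to Lemma~\ref{park kafe}. Since $\mathsf{Supp}(\sigma)$ is a nonempty subset of $\{1,2,3\}$ and we have excluded $2$ and $3$, this exhausts all possibilities and completes the proof.

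I do not expect a real obstacle — the game is $3\times 3$ and the positive utility property is strong — but the step needing the most care is the elimination of strategy $2$, where one must combine the support-equality part of Lemma~\ref{basic property of mixed nash equilibria} for one player with the non-support-inequality part for a deviation, and then switch to the \emph{other} player to extract the contradiction; keeping the direction of the inequality $r\geq p+r$ correct is where an error could slip in.
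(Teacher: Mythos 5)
Your proof is correct and follows essentially the same route as the paper: assume a symmetric Nash equilibrium, and use Lemma~\ref{park kafe} (via the positive utility property) together with Lemma~\ref{basic property of mixed nash equilibria} to rule out every possible support — you merely organize it as sequential elimination of strategies $2$ and $3$ followed by the pure profile $\langle 1,1\rangle$, where the paper instead rules out pure equilibria and then checks the four candidate supports case by case. One cosmetic remark: in eliminating strategy $2$ you cite Condition~{\sf (2)} for the deviation to strategy $1$ even though strategy $1$ may lie in the support; the inequality $\widehat{{\mathsf{U}}}_{1}(\langle\sigma,\sigma\rangle)\geq p+r$ is still valid there (directly from the definition of Nash equilibrium, or from Condition~{\sf (1)} as an equality), so this is only a labeling imprecision, not a gap.
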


\begin{proof}
Assume, 
by way of contradiction, 
that $\widehat{\mathsf{G}}_{4}$ has
a symmetric Nash equilibrium $\bm\sigma$.
Clearly,
${\widehat{{\mathsf{G}}}}_{4}$
has no pure Nash equilibrium.
Hence,
four cases \textcolor{black}{need to be examined}:
\begin{itemize}

\item[{\sf (1)}]
\underline{For each player $i \in [2]$,
                 ${\mathsf{Supp}}(\sigma_{i})
                  =
                  \{ 1, 2 \}$:}
Then,
${\widehat{{\mathsf{U}}}}_{1}({\bm{\sigma}}_{-1}
                                                     \diamond 
                                                     2)
  =
  0$.
Since 
$2 \in {\mathsf{Supp}}(\sigma_{1})$, 
Lemma~\ref{basic property of mixed nash equilibria}
(Condition (1)) 
implies that
${\widehat{{\mathsf{U}}}}_{1}({\bm{\sigma}})
  =
  0$.
A contradiction to Lemma~\ref{park kafe}.

\item[{\sf (2)}]
\underline{For each player $i \in [2]$,
                 ${\mathsf{Supp}}(\sigma_{i})
                  =
                  \{ 2, 3 \}$:}
Then, 
${\widehat{{\mathsf{U}}}}_{2}({\bm{\sigma}}_{-2}
                                                     \diamond
                                                     2)
  =0$.
Since 
$2 \in {\mathsf{Supp}}(\sigma_{2})$,
Lemma~\ref{basic property of mixed nash equilibria}
(Condition (1)) 
implies that
${\widehat{{\mathsf{U}}}}_{2}({\bm{\sigma}})
  =
  0$.
A contradiction to Lemma~\ref{park kafe}.

\item[{\sf (3)}]
\underline{For each player $i \in [2]$,
                 ${\mathsf{Supp}}(\sigma_{i})
                  =
                  \{ 1, 3 \}$:}
Then,
${\widehat{{\mathsf{U}}}}_{1}({\bm{\sigma}}_{-1}
                                                     \diamond
                                                     3)
  = 0$.
Since
$3 \in {\mathsf{Supp}}(\sigma_{1})$, 
Lemma~\ref{basic property of mixed nash equilibria}
(Condition (1))
implies that
${\widehat{{\mathsf{U}}}}_{1}({\bm{\sigma}})
  =
  0$.
A contradiction to Lemma~\ref{park kafe}.

\item[{\sf (4)}]
\underline{For each player $i \in [2]$,
                 ${\mathsf{Supp}}(\sigma_{i})
                  =
                  \{ 1, 2, 3 \}$:}
Then,
${\widehat{{\mathsf{U}}}}_{1}({\bm{\sigma}}_{-1}\diamond 1)
  =
  \sigma_{2}(1) + \sigma_{2}(3)$
and
${\widehat{{\mathsf{U}}}}_{1}({\bm\sigma}_{-1}\diamond 2)
  =
  \sigma_{2}(3)$.
Since $\sigma_{2}(1) > 0$,
it follows that
${\widehat{{\mathsf{U}}}}_{1}({\bm{\sigma}}_{-1} \diamond 1)
  >
  {\widehat{{\mathsf{U}}}}_{1}({\bm{\sigma}}_{-1} \diamond 2)$.
Since
$1, 2 \in {\mathsf{Supp}}(\sigma_{1})$, 
Lemma~\ref{basic property of mixed nash equilibria} (Condition (1))
implies that
${\widehat{{\mathsf{U}}}}_{1}({\bm{\sigma}}_{-1} \diamond 1)
  =
  {\widehat{{\mathsf{U}}}}_{1}({\bm{\sigma}}_{-1} \diamond 2)$.
A contradiction.

\end{itemize}
The proof is complete.
\end{proof}

\noindent
\textcolor{black}{By Proposition~\ref{gadget5},
the win-lose game
${\widehat{{\mathsf{G}}}}_{4}$
is a negative instance for
{\sf $\exists$ SYMMETRIC NASH}.}

\subsection{The Diagonal Game ${\widehat{{\mathsf{G}}}}_{5}[k]$}
\label{diagonal game}

\noindent
Fix an integer $k \geq 1$.
Denote as ${\mathsf{D}}_{k}$ 
the $k \times k$ win-lose matrix
such that
${\mathsf{D}}_{k}[i,j]=1$ 
if and only if
$i \leq j$,
with $1 \leq i, j \leq k$;
so,
${\mathsf{D}}_k$ is an upper-diagonal matrix. 
Then,
${\mathsf{D}}_k^{{\rm T}}$ 
is a $k \times k$ win-lose matrix
such that 
${\mathsf{R}}_{k}^{{\rm T}}[i, j] =1$
if and only if 
$j \leq i$,
with $1 \leq i, j \leq k$;
so,
${\mathsf{D}}_{k}^{{\rm T}}$ 
is a lower-diagonal matrix.
Define the win-lose bimatrix game
${\widehat{{\mathsf{G}}}}_{5}[k]
  :=
  \left\langle{\mathsf{D}}_{k},
                   {\mathsf{D}}_{k}^{{\rm T}}
  \right\rangle$,
which is symmetric by construction.  
Clearly,
${\widehat{{\mathsf{G}}}}_{5}[k]$
has the positive utility property.
We prove:

\begin{proposition}
\label{very recent}
Fix an integer $k \geq 1$. 
Then,
${\widehat{{\mathsf{G}}}}_{5}[k]$
has exactly $k$ Nash equilibria,
\textcolor{black}{which are Pareto-Optimal,
Strongly Pareto-Optimal
and symmetric,
each yielding utility $1$
to each player.}
\end{proposition}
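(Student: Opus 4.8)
The plan is to pin down the set of Nash equilibria exactly and then read off the remaining properties. First, observe that for every $i\in[k]$ the pure profile $\langle i,i\rangle$ is a Nash equilibrium: since ${\mathsf{D}}_{k}[i,i]=1$ and ${\mathsf{D}}_{k}^{{\rm T}}[i,i]={\mathsf{D}}_{k}[i,i]=1$, both players already obtain utility $1$, the largest value attainable in a win-lose game, so no unilateral deviation can be profitable and $\langle i,i\rangle\in{\mathcal{NE}}({\widehat{{\mathsf{G}}}}_{5}[k])$. This already produces $k$ distinct symmetric Nash equilibria, each yielding utility $1$ to each player. The crux is to show there are no others.

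The key structural fact is that strategy $1$ is a \emph{guaranteed win} for each player, because ${\mathsf{D}}_{k}[1,j]=1$ and ${\mathsf{D}}_{k}^{{\rm T}}[j,1]={\mathsf{D}}_{k}[1,j]=1$ for all $j\in[k]$. So fix any Nash equilibrium $\bm{\sigma}$. By Lemma~\ref{basic property of mixed nash equilibria}, for each player $i\in[2]$ and each pure strategy $t$ we have ${\mathsf{U}}_{i}(\bm{\sigma})\ge{\mathsf{U}}_{i}(\bm{\sigma}_{-i}\diamond t)$; applying this with $t$ equal to strategy $1$ and using the guaranteed-win fact gives ${\mathsf{U}}_{i}(\bm{\sigma})\ge{\mathsf{U}}_{i}(\bm{\sigma}_{-i}\diamond 1)=1$, and since expected utilities in a win-lose game never exceed $1$ it follows that ${\mathsf{U}}_{1}(\bm{\sigma})={\mathsf{U}}_{2}(\bm{\sigma})=1$. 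Next, upgrade this to purity: because the game is win-lose and ${\mathsf{U}}_{1}(\bm{\sigma})=1$, every profile $\langle s_{1},s_{2}\rangle$ supported in $\bm{\sigma}$ must have ${\mathsf{U}}_{1}(\langle s_{1},s_{2}\rangle)=1$, i.e.\ $s_{1}\le s_{2}$; symmetrically ${\mathsf{U}}_{2}(\bm{\sigma})=1$ forces $s_{2}\le s_{1}$ on every supported profile. Hence $s_{1}=s_{2}$ for all $s_{1}\in{\mathsf{Supp}}(\sigma_{1})$ and $s_{2}\in{\mathsf{Supp}}(\sigma_{2})$, which, both supports being nonempty, is possible only if ${\mathsf{Supp}}(\sigma_{1})={\mathsf{Supp}}(\sigma_{2})=\{i\}$ for a common $i$, i.e.\ $\bm{\sigma}=\langle i,i\rangle$. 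Combined with the first step, ${\mathcal{NE}}({\widehat{{\mathsf{G}}}}_{5}[k])=\{\langle i,i\rangle:i\in[k]\}$, which has exactly $k$ elements, all symmetric, each giving utility $1$ to each player.

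Pareto-optimality and strong Pareto-optimality then come for free: each such equilibrium already assigns the maximal utility $1$ to both players, so there is no mixed profile $\widehat{\bm{\sigma}}$ in which some player strictly improves, and the defining implications of both notions hold vacuously. The main obstacle here is essentially nonexistent; the one point requiring a moment of thought is to recognize that strategy $1$ secures utility $1$ and then to exploit the win-lose structure to turn the value identity ${\mathsf{U}}_{i}(\bm{\sigma})=1$ into the statement that every supported profile lies on the diagonal, which collapses $\bm{\sigma}$ to a single pure profile.
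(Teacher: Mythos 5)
Your proof is correct, and for the uniqueness part it takes a genuinely different (and arguably cleaner) route than the paper. The paper first asserts that the $k$ diagonal profiles are the only \emph{pure} Nash equilibria and then rules out mixed equilibria by a best-response analysis: for a mixed equilibrium it sets $s^{\ast}_{i}:=\min{\mathsf{Supp}}(\sigma_{i})$, observes that the best-response sets are the initial segments $[s^{\ast}_{2}]$ and $[s^{\ast}_{1}]$, and derives from the mutual support inclusions ${\mathsf{Supp}}(\sigma_{1})\subseteq[s^{\ast}_{2}]$, ${\mathsf{Supp}}(\sigma_{2})\subseteq[s^{\ast}_{1}]$ that both supports collapse to the singleton $\{s^{\ast}_{1}\}=\{s^{\ast}_{2}\}$, a contradiction with mixedness. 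You instead exploit that strategy $1$ is a security strategy worth exactly $1$ for each player, so in any Nash equilibrium both expected utilities equal $1$; the win-lose structure then forces every supported profile to lie on the diagonal, and the product structure of the mixed profile forces both supports to be a common singleton. Your argument buys a small bonus: it derives in one stroke that \emph{no} off-diagonal profile, pure or mixed, can be an equilibrium, whereas the paper's claim about the pure equilibria is only asserted; the paper's version, on the other hand, makes the best-response correspondence of the diagonal game explicit, which is the structural fact it reuses later (e.g.\ in the analysis of ${\widetilde{{\mathsf{G}}}}\parallel{\widehat{{\mathsf{G}}}}_{5}[k]$). Your treatment of Pareto-Optimality and Strong Pareto-Optimality as holding vacuously at utility vector $\langle 1,1\rangle$ matches the paper's definitions and is fine.
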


\begin{proof}
Note that
${\widehat{{\mathsf{G}}}}_{5}[k]$
has exactly $k$ pure Nash equilibria ${\bf s }^{s}$,
with $s \in [k]$,
where each player 
$i \in [2]$
chooses strategy $s$
in ${\bf s}^{s}$
to get utility $1$. 
Note that each of the $k$ pure equilibria
is Pareto-Optimal,
Strongly Pareto-Optimal
and symmetric. 
Hence,
it remains to prove that
${\widehat{{\mathsf{G}}}}_{5}[k]$
has no mixed Nash equilibrium.
Towards this end,
assume,
by way of contradiction,
that ${\widehat{{\mathsf{G}}}}_{5}[k]$
has a mixed Nash equilibrium ${\bm{\sigma}}$. 
For each player $i \in [2]$, 
set 
$s^{\ast}_{i}
  :=
  \min {\sf Supp}(\sigma_i)$;
so,
$s^{\ast}_{1}$
(resp., $s^{\ast}_{2}$)
is the row (resp., column) with least index
played by the row (resp., column) player
in ${\bm{\sigma}}$.
By the definitions of ${\mathsf{D}}_{k}$ 
and ${\mathsf{D}}^{{\rm T}}_{k}$, 
it follows that the set of best-response strategies 
for the row (resp., column) player
is $[s^{\ast}_{2}]$
(resp., $[s^{\ast}_{1}]$);
thus,
by  Lemma~\ref{basic property of mixed nash equilibria},
${\mathsf{Supp}}(\sigma_{1})
  \subseteq 
  [s^{\ast}_{2}]$
and
${\mathsf{Supp}}(\sigma_{2})
  \subseteq
  [s^{\ast}_{1}]$. 
These imply together that
$s^{\ast}_{1}
  \leq
  s^{\ast}_{2}$ 
and 
$s^{\ast}_2
  \leq
  s^{\ast}_{1}$,
so that
$s^{\ast}_{1}
  =
  s^{\ast}_{2}$. 
 Hence, 
 for each player $i \in [2]$,
 ${\sf Supp}(\sigma_{i})
   \subseteq
   [s^{\ast}_{i}]
   =
   [\min {\mathsf{Supp}}(\sigma_{i})]$,
which implies that
${\mathsf{Supp}}(\sigma_{i})
  =
  \{ s^{\ast}_{i} \}$.
Hence,
${\bm{\sigma}}$ is a pure Nash equilibrium.
A contradiction.
\end{proof}

\noindent
By Proposition~\ref{very recent},
the win-lose 
game
${\widehat{{\mathsf{G}}}}_{5}[k]$
is a negative instance for
{\sf $\exists$ $k+1$ NASH},
with $k \geq 1$,
for {\sf $\exists$ $\neg$ PARETO-OPTIMAL NASH},
for
{\sf $\exists$ $\neg$ STRONGLY PARETO-OPTIMAL NASH}
and for
{\sf $\exists$ $\neg$ SYMMETRIC NASH}.

\section{The Win-Lose Reduction}
\label{winlose reduction}

\noindent
We start with
some preliminary material.
Denote
$I_{n}
 :=
 \{0, 1, \ldots, n-1\}$;
for all arithmetic operations on $I_{n}$,
we shall 
assume a cyclic ordering
on its elements;
so,
$(n-1) + 1 = 0$
and for a given pair
$i, j  \in I_{n}$,
$j-i$
is the number of elements of $I_{n}$
encountered when moving from $j$ (excluded)
to $i$ (included)
in the anticlockwise direction.
\textcolor{black}{A {\sf CNF SAT} formula is}
a boolean formula
${\sf \phi}
 =
 \bigwedge_{i \in [k]}
   \bigvee_{j \in [l_{i}]}
      \ell_{ij}$
in {\it Conjunctive Normal Form,}
where $\ell_{ij}$
is a {\it literal}
(a boolean {\it variable}
or its negation),
we shall denote as
{\it (i)}
${\mathsf{\cal{C}}}
 =
 {\mathsf{\cal{C}}}({\mathsf{\phi}})
 =
 \{ \bigvee_{j \in [l_i]}
      \ell_{ij}
    \mid
    i \in [k]
 \}$,
the set of {\it clauses},
and
{\it (ii)}
${\mathsf{Var}}
 =
 {\mathsf{Var}}({\mathsf{\phi}})
 =
 \{ v_{0}, \ldots, v_{n-1}
 \}$
and
${\mathsf{L}}
 =
 {\mathsf{L}}({\mathsf{\phi}})
 =
 \{ \ell_{0},\overline{\ell}_{0},
    \ldots,
    \ell_{n-1},\overline{\ell}_{n-1}
 \}$
the sets of variables and
literals,
respectively,
with $|{\sf Var}({\sf \phi})|
      =
      n$
and $|{\sf L}({\sf \phi})|
      =
      2n$.
We shall use $\mathsf{c}$
to denote a clause 
from ${\mathsf{\cal{C}}}({\sf\phi})$,
and either $\ell$ or $\overline\ell$ 
to denote a literal 
from ${\mathsf{L}}({\mathsf{\phi}})$.
${\mathsf{\phi}}$ is a {\sf 3SAT} formula
if each clause ${\mathsf{c}}$
has $|{\mathsf{c}}| = 3$.
An {\it assignment}
is a function 
${\mathsf{\gamma}}:
 {\sf Var}({\sf \phi})
 \rightarrow
 \{ 0, 1 \}$;
so,
${\mathsf{\gamma}}$
is represented
by the $n$-tuple of literals
made true.
For a literal
$\ell \in {\mathsf{L}}({\mathsf{\phi}})$,
denote as $I (\ell)$
the index $j \in I_{n}$
such that
$\ell
  \in
  \{ \ell_{j}, \overline{\ell}_{j} \}$.
Denote as ${\mathsf{\phi}}({\mathsf{\gamma}})$
the value of ${\mathsf{\phi}}$
when each variable $v$
takes the value ${\mathsf{\gamma}}(v)$.
${\mathsf{\gamma}}$ is a {\it satisfying assignment}
if ${\mathsf{\phi}}({\mathsf{\gamma}}) = 1$,
and ${\mathsf{\phi}}$ is satisfiable
if it has a satisfying assignment;
deciding the satisfiability of a 
\textcolor{black}{{\sf 3SAT} formula}
${\mathsf{\phi}}$
is ${\mathcal{NP}}$-complete.
Denote as $\# {\mathsf{\phi}}$
\textcolor{black}{and $\oplus {\mathsf{\phi}}$}
the number \textcolor{black}{and the parity of the number}
of satisfying assignments of ${\mathsf{\phi}}$,
\textcolor{black}{respectively;}
\textcolor{black}{when ${\mathsf{\phi}}$
is a {\sf 3SAT} formula,
computing $\# {\mathsf{\phi}}$
is $\# {\mathcal{P}}$-complete~\cite{V79}
and
computing $\oplus {\mathsf{\phi}}$
is $\oplus {\mathcal{P}}$-complete~\cite{PZ83}.}

\subsection{\textcolor{black}{The Reduction}}
\label{reduction definition}

\textcolor{black}{
We shall construct,
given
an $r$-player win-lose gadget game
${\widehat{{\mathsf{G}}}}$,
for an arbitrary integer $r \geq 2$,
and an instance ${\mathsf{\phi}}$ of {\sf 3SAT}
with $n=|{\sf Var}({\sf\phi})|\geq 4$,
the $r$-player win-lose game
${\mathsf{G}}
 =
 {\mathsf{G}}({\widehat{{\mathsf{G}}}},
                        {\sf \phi})
 = \left\langle [r],
                \left\{ {\mathsf{\Sigma}}_{i}
                \right\}_{i \in [r]},
                \left\{ {\mathsf{U}}_{i}
                \right\}_{i \in [r]}
   \right\rangle$.
Players $1$ and $2$ are called
{\it special}.}
\textcolor{black}{For a player $i \in [2]$,
$\overline{i}
 \in
 [2] \setminus \{ i \}$
denotes the special player other than $i$.} 

\subsubsection{\textcolor{black}{Inadequacy of the Reduction in~\cite{CS08} and Getting Around it}}
\label{conitzer sandholm inadequacy}

\textcolor{black}{A crucial step 
in both the win-lose
reduction 
and the reduction 
in~\cite{CS08}
is to guarantee two significant properties:}
\begin{itemize}

\item[\textcolor{black}{{\sf (P.1)}}]
\textcolor{black}{A mixed profile in which 
some special player 
chooses some literals with positive probability 
is a Nash equilibrium 
for the constructed game ${\mathsf{G}}$ 
if and only if 
$\phi$ is satisfiable.}

\item[\textcolor{black}{{\sf (P.2)}}]    
\textcolor{black}{
Each satisfying assignment
induces a Nash equilibrium
where both special players 
are choosing the literals
from the satisfying assignment 
with uniform probabilities.}

\end{itemize}
\noindent       
\textcolor{black}{To guarantee these properties, 
both the win-lose reduction and the reduction in~\cite{CS08} 
have to rule out 
the existence of a Nash equilibrium 
in which either
some special player plays literals
with non-uniform probabilities 
or some pair of complementary literals
($\ell$ and $\overline{\ell}$) 
\textcolor{black}{is not played by some special player}.} 
\textcolor{black}{The reduction
in~\cite{CS08}
achieves these properties
with the following approach:}
\begin{itemize}

\item
\textcolor{black}{By assigning a non-zero utility
to both special players 
when they are choosing
a pair of non-complementary literals.}

\item
\textcolor{black}{By using $n$ auxiliary 
\textcolor{black}{strategies,
called {\it variables,}}
${\mathsf{v}}_{0},
  \ldots,
  {\mathsf{v}}_{n-1}$, 
where, for each index
$j \in I_n$, 
${\mathsf{v}}_j$ 
is associated with 
the pair of complementary literals 
$\{ \ell_j,
     \overline{\ell}_j
  \}$,
as follows:  
When a special player 
$i \in [2]$ 
chooses a literal 
$\ell \in \{\ell_{j}, \overline{\ell}_{j}
             \}$, 
the utility given to player 
$\overline{i}$ 
when choosing 
\textcolor{black}{a variable other than ${\mathsf{v}}_j$} 
is more than the utility
she would get
when choosing any literal 
from ${\sf L}(\phi)$.}

\end{itemize}
\textcolor{black}{Thus,
this approach requires using
at least two different non-zero utilities.
Hence,
the reduction in~\cite{CS08}
is inadequate for win-lose games. 
We resolve this inadequacy
by combining two new ideas
in the win-lose reduction:}

\begin{itemize}

\item
\textcolor{black}{Instead
of giving non-zero utility
to both special players  
when they are choosing
non-complementary literals,
only one of them gets non-zero utility.
This idea is implemented by using
a cyclic ordering among the literals
and extending the construction of
the cyclic game 
${\widehat{{\sf G}}}_{1}[n]$. 
(See Cases {\sf (1)}
through {\sf (4)}
in Figure \ref{table reduction}.) 
We shall prove that this suffices
to rule out the existence of Nash equilibria
in which literals in the support of both special players
are played with non-uniform probabilities.}

\item
\textcolor{black}{To guarantee that 
every pair of complementary literals are chosen
with strictly positive probability, 
we introduce a new kind of auxiliary strategies, 
\textcolor{black}{namely the {\it pair variables}}
${\sf v}_{j,k}$,
for each pair $j, k\in I_n$
with $j\neq k$;
thus,
there are $n(n-1)$ of them.
The \textcolor{black}{pair variables}
provide the special player $\overline{i}$ 
with an improving deviation 
whenever there is a pair of complementary literals
not chosen by the special player $i$;
at the same time,
we define the utility functions
so that
no special player could improve
by switching to a \textcolor{black}{pair variable}
when both special players
are playing the same satisfying assignment
with uniform probabilities. 
(See Cases {\sf (5/a)}
and {\sf (5/b)}
in Figure \ref{table reduction}.)} 
\textcolor{black}{This is the point where the structure of a {\sf 3SAT} formula
shall be needed.}

\end{itemize}
\noindent
\textcolor{black}{The proof of Lemma~\ref{same probability on pairs of literals}
demonstrates how the two new ideas 
combine together.
We now proceed to formally define
the win-lose reduction.}

\subsubsection{\textcolor{black}{Formal Definition}}

\noindent
For each player $i \in [2]$,
${\sf \Sigma}_{i} := \widehat{\mathsf{\Sigma}}_{i} 
                     \cup {\mathsf{L}}({\mathsf{\phi}})
                     \cup
                     {\mathsf{\cal{C}}}({\mathsf{\phi}})
                     \cup
                     {\mathsf{V}}({\mathsf{\phi}})$;
for each player $i \in [r] \setminus [2]$,
${\mathsf{\Sigma}}_{i} := \widehat{\mathsf{\Sigma}}_{i}
                          \cup
                          \{ {\mathsf{\delta}} 
                          \}$,
where 
${\mathsf{V}}({\mathsf{\phi}})
 =
 \{ {\mathsf{v}}_{i,j}
    \mid
    0 \leq i,j < n,
    i \neq j
 \}$
and ${\mathsf{\delta}}$
is a {\it special} strategy.
\textcolor{black}{The strategies in
${\mathsf{V}}({\mathsf{\phi}})$
are called {\it pair variables}.}
\textcolor{black}{The two special players 
$1$ and $2$ are 
the only players
whose sets of strategies are determined
by the formula ${\mathsf{\phi}}$.}
Note that for each player $i \in [2]$,
$|{\mathsf{\Sigma}}_{i}|
 =
 |\widehat{{\mathsf{\Sigma}}}_{i}|
 +
 2n
 +
 |{\mathcal{C}}({\mathsf{\phi}})|
 +
 n(n+1)$,
and for each player $i \in [r] \setminus [2]$,
$|{\mathsf{\Sigma}}_{i}|
  =
  |\widehat{{\mathsf{\Sigma}}}_{i}|
  +
  1$;
so,
the size of the strategy sets in 
${\mathsf{G}}({\widehat{{\mathsf{G}}}},
                        {\mathsf{\phi}})$
is polynomial in the sizes of ${\widehat{{\mathsf{G}}}}$
and ${\mathsf{\phi}}$. 
In the following,
we shall write
${\mathsf{L}}$,
${\mathcal{C}}$
and
${\mathsf{V}}$
for
${\mathsf{L}}({\mathsf{\phi}})$,
${\mathsf{\cal{C}}}({\mathsf{\phi}})$,
and
${\mathsf{V}}({\mathsf{\phi}})$,
respectively.

\noindent
Fix a profile
${\bf s}
 =
 \langle s_{1}, \ldots, s_{r}
 \rangle$
from
${\mathsf{\Sigma}} = {\mathsf{\Sigma}}_{1}
                     \times
                     \ldots
                     \times
                     {\mathsf{\Sigma}}_{r}$.
Use ${\bf s}$
to partition $[r]$ into
$\widehat{\cal P}({\bf s})
 =
 \{ i \in [r]
    \mid
    s_{i} \in {\widehat{{\mathsf{\Sigma}}}}_{i}
 \}$
and
${\cal P}({\bf s})
 =
 \{ i \in [r]
    \mid
    s_{i} \not\in {\widehat{{\mathsf{\Sigma}}}}_{i}
 \}$;
so,
$\widehat{\cal P}({\bf s})$
and
${\cal P}({\bf s})$
are the sets of players
choosing and not choosing,
respectively,
strategies from ${\widehat{{\mathsf{G}}}}$.
The utility functions
are depicted
in Figure~\ref{table reduction}.

\begin{figure}[ht]
\begin{center}
\begin{tabular}{||l|l|l||}
\hline
\hline
{\sf Case}                                                                                                     &
{\sf Condition on the
     profile ${\bf s}
              =
              \langle s_{1},
                      s_{2},\ldots,s_{r}
              \rangle$}                                                                                        &
{\sf Utility vector} ${\mathsf{U}}({\bf s})$                                                                   \\
\hline
\hline
{\sf (1)}                                                                                                      &
${\bf s}
 =
 \langle \ell,
         \overline{\ell},
         {\mathsf{\delta}},
         \ldots,
         {\mathsf{\delta}}
 \rangle$                                                                                                      &
$\langle 0,
         0,1,\ldots,1
 \rangle$                                                                                                      \\
\hline
{\sf (2)}                                                                                                      &
${\bf s}
 =
 \langle \ell,
         \ell^{\prime},
         \delta,\ldots,\delta
 \rangle$
with $I(\ell^{\prime}) - I(\ell) 
          \in \{ 0, 1 \}$
and $\ell \neq \overline{\ell}$                                                                    &
$\langle 1,0,1,\ldots,1
 \rangle$                                                                                                      \\
\hline
{\sf (3)}                                                                                                      &
${\bf s}
 =
 \langle \ell,
             \ell^{\prime},
         {\mathsf{\delta}},
         \ldots,
         {\mathsf{\delta}}
 \rangle$
with $I(\ell^{\prime})
          -
          I(\ell)
          \in
          \{ 2, 3 \}$                                                                                        &
$\langle 0,1,1,\ldots,1
 \rangle$                                                                                                      \\
\hline
{\sf (4)}                                                                                                      &
${\bf s}
 =
 \langle \ell,
             \ell^{\prime},
             {\mathsf{\delta}},
             \ldots,
             {\mathsf{\delta}}
 \rangle$
with $I(\ell^{\prime}) 
          -
          I (\ell)
          > 3$                                                                                                  &
$\langle 0,0,1,\ldots,1
 \rangle$                                                                                                      \\
\hline
\hline
{\sf (5/a)}                                                                                                      &
${\bf s}
 =
 \langle {\sf v}_{ij},
         \ell_k,\delta,\ldots,\delta
 \rangle$
with $k\in\{i,j\}$                                                                                             &
$\langle 1,0,1,\ldots,1
 \rangle$                                                                                                      \\
\hline
\hline
{\sf (5/b)}                                                                                                      &
${\bf s}
 =
 \langle \ell_{k},
             {\sf v}_{ij},
             \delta,\ldots,\delta
 \rangle$
with $k \in \{i,j\}$                                                                                         &
$\langle 0,1,1,\ldots,1
 \rangle$                                                                                                      \\
\hline
\hline
{\sf (6/a)}                                                                                                     &
${\bf s}
 =
 \langle {\sf c},
            \ell,
         {\mathsf{\delta}},
         \ldots,
         {\mathsf{\delta}}
 \rangle$
with
$\overline{\ell}\in {\mathsf{c}}$                                                                              &
$\langle 1, 0, 1,\ldots,1
 \rangle$                                                                                                      \\
\hline
\hline
{\sf (6/b)}                                                                                                     &
${\bf s}
 =
 \langle \ell,
            {\sf c},
         {\mathsf{\delta}},
         \ldots,
         {\mathsf{\delta}}
 \rangle$
with
$\overline{\ell}\in {\mathsf{c}}$                                                                              &
$\langle 0, 1,1,\ldots,1
 \rangle$                                                                                                      \\
\hline
\hline
{\sf (7)}                                                                                                      &
For each $i \in [r]$, 
$s_{i} \in {\widehat{{\mathsf{\Sigma}}}}_{i}$.
(So, $|\widehat{{\cal P}}({\bf s})|
      =
      r$.)                                                                                                      &
${\widehat{{\mathsf{U}}}}(\langle s_1,\ldots,s_r
                          \rangle)$                                                                            \\
\hline
\hline
{\sf (8)}                                                                                                      &
$|{\cal \widehat{P}}({\bf s})|>1$ 
and 
$|{\cal P}({\bf s})|>0$                                                                                        &
${\sf U}_i({\bf s})=1$ 
iff $i\in{\cal \widehat{P}}({\bf s})$                                                                          \\
\hline
{\sf (9)}                                                                                                      &
${\cal \widehat{P}}({\bf s})
 =
 \{ i \}$ 
with $i\in [r]\setminus [2]$                                                                                   &
${\sf U}_{j}({\bf s})
 =
 {\mathsf{\delta}}_{ji}$                                                                                       \\
\hline
{\sf (10)}                                                                                                     &
${\cal \widehat{P}}({\bf s})=\{i\}$ with $i\in [2]$                                                            
\&
$s_{[2]\setminus\{i\}}
 \in
 \{ \ell_0,\overline{\ell}_0,\ell_1,\overline{\ell}_1\}
 \cup
 {\mathcal{C}}
 \cup
 {\mathsf{V}}$                                                                                                 &
${\mathsf{U}}_{j}({\bf s})
 =
 {\mathsf{\delta}}_{ji}$                                                                                       \\
\hline
\hline
{\sf (11)}                                                                                                     &
None of the above.                                                                                               &
$\langle 0,\ldots, 0 \rangle$                                                                                  \\
\hline
\hline
\end{tabular}
\caption{The utility functions for the game ${\mathsf{G}} = {\mathsf{G}}({\widehat{{\mathsf{G}}}},
                                                                                                                       {\mathsf{\phi}}
                                                                                                                      )$.
              ($\mathsf{\delta}_{ij}$ denotes the {\it Kronecker delta}.)
             }
\label{table reduction}
\end{center}
\end{figure}

\subsubsection{\textcolor{black}{Notation}}

\noindent
For a mixed profile $\bm\sigma$
and a special player $i \in [2]$, 
$\sigma_{i}(S)$
denotes the total probability
assigned by $\sigma_{i}$
to strategies
from $S \subseteq {\mathsf{\Sigma}}_{i}$; 
so,
in particular,
$\sigma_{i}({\mathsf{L}})$,
$\sigma_{i}({\cal C}
            \cup
            {\mathsf{V}})$ 
and $\sigma_{i}( \widehat{\mathsf{\Sigma}}_{i}
                         )$
are the total probabilities 
assigned by $\sigma_{i}$ 
to strategies from
${\mathsf{L}}$, 
${\mathcal{C}}
 \cup
 {\mathsf{V}}$
and 
$\widehat{\mathsf{\Sigma}}_i$, 
respectively,
with
$\sigma_{i}({\mathsf{L}})
 +
 \sigma_{i}({\mathcal{C}}
            \cup
            {\mathsf{V}})
 +
 \sigma_{i}(\widehat{\mathsf{\Sigma}}_{i})
 = 1$.
For a player 
$i \in [r] \setminus [2]$,
$\sigma_{i} ( {\widehat{{\mathsf{\Sigma}}}}_{i}
                   )$
and $\sigma_{i}({\mathsf{\delta}})$,
with
$\sigma_{i} ( {\widehat{{\mathsf{\Sigma}}}}_{i}
                   )
  +                  
  \sigma_{i}({\mathsf{\delta}})
  = 1$,
are set correspondingly. 
Sometimes
a strategy from ${\widehat{{\mathsf{G}}}}$
will be called a {\it gadget strategy}. 

\subsection{Properties of Nash Equilibria for ${\mathsf{G}}$}
\label{properties of the game}

\noindent
We now present
a collection of properties
for a Nash equilibrium
$\bm{\sigma}
 \in
 {\mathcal{NE}}({\mathsf{G}})$.
The first property is that
either all players 
are exclusively playing
gadget strategies 
or none \textcolor{black}{is}.

\begin{lemma}\label{first property of best-responses}
Fix a Nash equilibrium
$\bm{\sigma}
 \in
 {\mathcal{NE}}({\mathsf{G}})$
with 
${\mathsf{Supp}}\left( \sigma_{i}
               \right)
 \subseteq
 {\widehat{\mathsf{\Sigma}}}_{i}$
for some player $i \in [r]$.
Then,
for each player
$j \in [r] \setminus \{ i \}$,
${\mathsf{Supp}}\left( \sigma_{j}
               \right)
 \subseteq
 {\widehat{\mathsf{\Sigma}}}_{j}$.
\end{lemma}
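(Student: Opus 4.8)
The plan is to argue by contradiction. Suppose that in the Nash equilibrium $\bm{\sigma}$ player $i$ plays only gadget strategies, i.e. ${\mathsf{Supp}}(\sigma_{i}) \subseteq {\widehat{\mathsf{\Sigma}}}_{i}$, yet some player $j \in [r] \setminus \{i\}$ places positive probability on a non-gadget strategy $t^{\ast} \in {\mathsf{\Sigma}}_{j} \setminus {\widehat{\mathsf{\Sigma}}}_{j}$.

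The first step is to show this forces ${\mathsf{U}}_{j}(\bm{\sigma}) = 0$. Consider any profile ${\bf s} \sim \bm{\sigma}$ with $s_{j} = t^{\ast}$: then $i \in \widehat{\cal P}({\bf s})$ (player $i$ plays a gadget strategy) while $j \in {\cal P}({\bf s})$ (player $j$ plays $t^{\ast}$). Reading off Figure~\ref{table reduction}, Cases {\sf (1)}--{\sf (6/b)} each require every player in $[r] \setminus [2]$ to play $\delta$, hence $\widehat{\cal P}({\bf s}) = \emptyset$, so none of them can apply; Case {\sf (7)} requires all players to play gadget strategies, which is excluded since $s_{j} = t^{\ast}$. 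So one of Cases {\sf (8)}--{\sf (11)} applies, and in each of them player $j$ --- which lies in ${\cal P}({\bf s})$, and is distinct from the unique gadget player $i$ of Cases {\sf (9)} and {\sf (10)} --- receives ${\mathsf{U}}_{j}({\bf s}) = 0$. Averaging over ${\bf s}_{-j} \sim \bm{\sigma}_{-j}$ yields ${\mathsf{U}}_{j}(\bm{\sigma}_{-j} \diamond t^{\ast}) = 0$, so by Lemma~\ref{basic property of mixed nash equilibria} (Condition {\sf (1)}), applied to $t^{\ast} \in {\mathsf{Supp}}(\sigma_{j})$, we get ${\mathsf{U}}_{j}(\bm{\sigma}) = 0$.

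The second step is to produce a deviation of player $j$ to a gadget strategy with strictly positive payoff, contradicting ${\mathsf{U}}_{j}(\bm{\sigma}) = 0$ together with the Nash property. I would distinguish two cases. If some player $k \in [r] \setminus \{i, j\}$ plays a non-gadget strategy with positive probability, fix a partial profile ${\bf s}_{-j} \sim \bm{\sigma}_{-j}$ with $s_{k} \notin {\widehat{\mathsf{\Sigma}}}_{k}$; then for any gadget strategy $\widehat{s}_{j} \in {\widehat{\mathsf{\Sigma}}}_{j}$ the profile ${\bf s}_{-j} \diamond \widehat{s}_{j}$ has $\{i, j\} \subseteq \widehat{\cal P}$ and $k \in {\cal P}$, so Case {\sf (8)} gives ${\mathsf{U}}_{j}({\bf s}_{-j} \diamond \widehat{s}_{j}) = 1$, and since all utilities are nonnegative, ${\mathsf{U}}_{j}(\bm{\sigma}_{-j} \diamond \widehat{s}_{j}) \geq {\mathbb{P}}_{\bm{\sigma}_{-j}}({\bf s}_{-j}) > 0$. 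Otherwise every player other than $j$ plays only gadget strategies, so $\bm{\sigma}_{-j}$ is supported on ${\widehat{\mathsf{\Sigma}}}_{-j}$; I would pick any ${\bf t}_{-j} \sim \bm{\sigma}_{-j}$ and, by the positive utility property of ${\widehat{\mathsf{G}}}$, a gadget strategy $\widehat{s}_{j} = t({\bf t}_{-j}) \in {\widehat{\mathsf{\Sigma}}}_{j}$ with ${\widehat{\mathsf{U}}}_{j}({\bf t}_{-j} \diamond \widehat{s}_{j}) = 1$; now every profile ${\bf s}_{-j} \diamond \widehat{s}_{j}$ with ${\bf s}_{-j} \sim \bm{\sigma}_{-j}$ consists only of gadget strategies, so Case {\sf (7)} applies and ${\mathsf{U}}_{j}(\bm{\sigma}_{-j} \diamond \widehat{s}_{j})$ equals the expected payoff of player $j$ in ${\widehat{\mathsf{G}}}$ under that mixed profile, which is at least ${\mathbb{P}}_{\bm{\sigma}_{-j}}({\bf t}_{-j}) > 0$. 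In both cases ${\mathsf{U}}_{j}(\bm{\sigma}_{-j} \diamond \widehat{s}_{j}) > 0 = {\mathsf{U}}_{j}(\bm{\sigma})$, contradicting that $\bm{\sigma}$ is a Nash equilibrium.

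The hard part will be the bookkeeping in the first step: verifying that Cases {\sf (1)}--{\sf (11)} are exhaustive and that, among profiles in which at least one player is on a gadget strategy and at least one is not, the only case where a gadget player gets $1$ and a non-gadget player might conceivably also get $1$ is Case {\sf (8)}, where every non-gadget player provably gets $0$. A secondary subtlety is the split $r = 2$ versus $r \geq 3$: for $r = 2$ the first sub-case of the deviation step is vacuous and the argument rests entirely on the positive utility property of ${\widehat{\mathsf{G}}}$, whereas for $r \geq 3$ both sub-cases may occur.
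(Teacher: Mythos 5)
Your proposal is correct and follows essentially the same route as the paper's proof: restricting to Cases {\sf (7)}--{\sf (11)} of the utility table shows a non-gadget strategy of player $j$ earns $0$ when player $i$ plays only gadget strategies, and then a gadget deviation with strictly positive payoff is exhibited, using the positive utility property of $\widehat{\mathsf{G}}$ in one branch and Case {\sf (8)} in the other. The only difference is cosmetic: you argue by contradiction and split on whether some third player uses a non-gadget strategy with positive probability, whereas the paper argues directly about best responses and splits on whether some supported partial profile lies entirely in $\widehat{\mathsf{\Sigma}}_{-j}$ --- complementary phrasings of the same dichotomy, resting on the same two ingredients.
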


\begin{proof}
Fix a player
$j \in [r] \setminus \{ i \}$
and a strategy
$s \in {\mathsf{Supp}}(\sigma_{j})$.
Since $\bm{\sigma}$
is a Nash equilibrium,
$s$ is a best-response
to $\bm{\sigma}_{-j}$.
We shall establish that
$s 
 \in
 {\widehat{{\mathsf{\Sigma}}}}_{j}$.
Consider the mixed profile
$\bm{\sigma}_{-j}
 \diamond
 s$.
Since
${\mathsf{Supp}}\left( \sigma_{i}
               \right)
 \subseteq
 {\widehat{\mathsf{\Sigma}}}_{i}$,
each profile
${\bf s}_{-j}
 \diamond
 s$
supported in 
$\bm{\sigma}_{-j}
 \diamond
 s$
may only come
from Cases
{\sf (7)},
{\sf (8)},
{\sf (9)},
{\sf (10)}
and
{\sf (11)}.
For any such profile
${\bf s}_{-j}
 \diamond
 s$,
it follows,
by the utility functions,
that if
$s \not\in \widehat{{\mathsf{\Sigma}}}_{j}$,
then
${\mathsf{U}}_{j}\left( {\bf s}_{-j}
                        \diamond
                        s            
                 \right)
 =
 0$,
which implies that
${\mathsf{U}}_{j}\left( \bm{\sigma}_{-j}
                        \diamond
                        s            
                 \right)
 =
 0$.
So,
to establish the claim,
it suffices to determine a strategy 
$s \in \widehat{{\mathsf{\Sigma}}}_{j}$                          
with
${\mathsf{U}}_{j}\left( \bm{\sigma}_{-j}
                        \diamond
                        s
                 \right)                                    
 >
 0$. 
We proceed by case analysis.
\begin{itemize}

\item
Assume first that
there is a profile
${\bf s}_{-j}$
supported in $\bm{\sigma}_{-j}$
with
${\bf s}_{-j}
 \in
 \widehat{{\mathsf{\Sigma}}}_{-j}$;
hence,
by the positive utility property
of $\widehat{{\mathsf{G}}}$,
there is a strategy
$t = t({\bf s}_{-j})
 \in
 {\widehat{{\mathsf{\Sigma}}}}_{j}$
with
$\widehat{{\mathsf{U}}}_{j}\left( {\bf s}_{-j}
                                  \diamond
                                  t
                           \right)
 >
 0$.
Set
$s := t$. 
Then,
{
\small
\begin{eqnarray*}                      
         \lefteqn{{\mathsf{U}}_{j}\left( \bm{\sigma}_{-j}
                                                            \diamond
                                                             s
                         \right)}                                                                                                                                                   \\      
\geq & {\mathbb P}_{{\bf s}_{-j}
                                   \sim
                                   \bm{\sigma}_{-j}
                                  }
                  \left( {\bf s}_{-j}
                         \diamond
                         s 
                  \right)
           \cdot
           {\mathsf{U}}_{j}
                  \left( {\bf s}_{-j}
                         \diamond
                         s 
                  \right)      
       &  \mbox{(since ${\mathsf{G}}$ is win-lose)}                                                                                                       \\
=    & \underbrace{{\mathbb P}_{{\bf s}_{-j}
                                  \sim
                                  \bm{\sigma}_{-j}
                                 }
                              \left( {\bf s}_{-j}
                                     \diamond
                                     s 
                              \right)}_{>0}
         \cdot
         \underbrace{\widehat{{\mathsf{U}}}_{j}
                                       \left( {\bf s}_{-j}
                                              \diamond
                                              s 
                                       \right)
                    }_{> 0}
     & \mbox{(by the utility functions)}                                                                                                                  \\            
>  & 0\, .
    &                                   
\end{eqnarray*}
}

\item
Assume now that
there is no profile
${\bf s}_{-j}$
supported in $\bm{\sigma}_{-j}$
with
${\bf s}_{-j}
 \in
 \widehat{{\mathsf{\Sigma}}}_{-j}$.
Set $s$ to any strategy
from
$\widehat{{\mathsf{\Sigma}}}_{j}$.
Thus,
$\{ i, j \}
 \subseteq
 \widehat{{\cal P}}\left( {\bf s}_{-j}
                          \diamond
                          s
                   \right)$
for each profile
${\bf s}_{-j}
 \diamond
 s$
supported in
$\bm{\sigma}_{-j}
 \diamond
 s$;                  
so,
by the utility functions,
only profiles
from Case {\sf (8)}
may be supported in
$\bm{\sigma}_{-j}
 \diamond
 s$.                     
Since
$s \in \widehat{{\mathsf{\Sigma}}}_{j}$,
$j \in \widehat{{\cal P}}\left( {\bf s}_{-j}
                                \diamond
                                s
                         \right)$.
Hence,
by the utility functions,
${\mathsf{U}}_{j}
          \left( {\bf s}_{-j}
                 \diamond
                 s 
          \right)
 =
 1$,
which implies that
${\mathsf{U}}_{j}
          \left( \bm{\sigma}_{-j}
                 \diamond
                 s 
          \right)
 =
 1
 >
 0$.

\end{itemize}
The claim follows.
\end{proof}

\noindent
The next property concerns the
two special players:
if one of them is assigning probability $0$ on literals,
then the other is exclusively playing
gadget strategies.  

\begin{lemma}
\label{second property of best-responses}
Fix a Nash equilibrium
$\bm\sigma
 \in
 {\mathcal{NE}}({\mathsf{G}})$
with
${\mathsf{Supp}}(\sigma_{i})
 \subseteq
 \widehat{\mathsf{\Sigma}}_{i}
 \cup
 {\mathcal{C}}
 \cup
 {\mathsf{V}}$
for some special player 
$i \in [2]$. 
Then,
${\mathsf{Supp}}(\sigma_{\overline{i}})
 \subseteq
 {\widehat{{\mathsf{\Sigma}}}}_{\overline{i}}$.
\end{lemma}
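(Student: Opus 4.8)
The plan is to argue by contradiction: suppose $\mathrm{Supp}(\sigma_i)\subseteq\widehat{\mathsf{\Sigma}}_i\cup\mathcal{C}\cup\mathsf{V}$ (so player $i$ plays no literal), yet player $\overline i$ plays some strategy outside $\widehat{\mathsf{\Sigma}}_{\overline i}$. By Lemma~\ref{first property of best-responses}, if $\overline i$ ever plays a non-gadget strategy then no player can be confined to gadget strategies, so in particular $i$ itself must play some non-gadget strategy; hence player $i$ plays, with positive probability, some strategy from $\mathcal{C}\cup\mathsf{V}$. The contradiction will come from showing that player $i$ can profitably deviate to a gadget strategy, contradicting Lemma~\ref{basic property of mixed nash equilibria}. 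The key computation is: when player $i$ switches to any fixed $t\in\widehat{\mathsf{\Sigma}}_i$, every profile supported in $\bm\sigma_{-i}\diamond t$ has $i\in\widehat{\mathcal P}$, so only Cases {\sf (7)}, {\sf (8)}, {\sf (9)}, {\sf (10)}, {\sf (11)} of Figure~\ref{table reduction} can apply; by the positive utility property of $\widehat{\mathsf G}$ (Case {\sf (7)}) or directly (Case {\sf (8)}), one can choose $t$ making $\mathsf U_i(\bm\sigma_{-i}\diamond t)>0$, exactly as in the proof of Lemma~\ref{first property of best-responses}.

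The heart of the argument is therefore to show that $\mathsf U_i(\bm\sigma)=0$ under the hypothesis, so that the positive deviation above is genuinely profitable. For this I would inspect which cases of Figure~\ref{table reduction} can contribute positive utility to special player $i$. Looking down the ``Utility vector'' column, the entries giving $\mathsf U_i=1$ for $i\in[2]$ are precisely: Case {\sf (2)} and Case {\sf (5/a)} and Case {\sf (6/a)} for $i=1$; Case {\sf (3)}, Case {\sf (5/b)}, Case {\sf (6/b)} for $i=2$; plus Case {\sf (7)} (a gadget profile) and Cases {\sf (8)}, {\sf (9)}, {\sf (10)}. Under the hypothesis that player $i$ plays only strategies from $\widehat{\mathsf{\Sigma}}_i\cup\mathcal C\cup\mathsf V$, one checks case by case: Cases {\sf (1)}--{\sf (4)} require $s_i$ to be a literal, so they give $i$ zero probability weight; Cases {\sf (5/a)},{\sf (6/a)} require $s_1\in\mathsf V$ resp.\ $s_1\in\mathcal C$ \emph{and} $s_2$ a literal --- but the other special player $\overline i$, by the first paragraph combined with Lemma~\ref{first property of best-responses}, is not confined to gadget strategies; still I must rule out that $\overline i$ plays a literal. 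Here is where I expect the work: I would show that if $i$ plays no literal, then $\overline i$ playing a literal is never a best response (intuitively, the literal rows/columns of $\overline i$ only ``win against'' literals of $i$, via Cases {\sf (1)}--{\sf (6)}, together with the threshold Case {\sf (10)} which only rewards literals $\ell_0,\overline\ell_0,\ell_1,\overline\ell_1$ when $i$ is a \emph{pure gadget} player), and switching $\overline i$ to $\delta$-analogue or to a pair variable is at least as good. Consequently no Case among {\sf (2)},{\sf (3)},{\sf (5/a)},{\sf (5/b)},{\sf (6/a)},{\sf (6/b)} is supported; Case {\sf (7)} needs all players on gadget strategies, contradicted; Cases {\sf (8)},{\sf (9)},{\sf (10)} all require $i\in\widehat{\mathcal P}$, i.e.\ $s_i\in\widehat{\mathsf{\Sigma}}_i$, whereas we assumed $i$ puts positive weight on $\mathcal C\cup\mathsf V$ --- so on that part of its support $i$ collects $0$. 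By Lemma~\ref{basic property of mixed nash equilibria} (Condition {\sf (1)}), since $i$ plays some $s\in\mathcal C\cup\mathsf V$ with $\mathsf U_i(\bm\sigma_{-i}\diamond s)=0$, we get $\mathsf U_i(\bm\sigma)=0$, contradicting the profitable gadget deviation.

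The main obstacle, as flagged, is the sub-claim that when special player $i$ assigns probability $0$ to literals, special player $\overline i$ gains nothing by playing a literal. This is the only place where the ``response structure'' of the literal block (the extended cyclic game, Cases {\sf (1)}--{\sf (4)}) and the clause/pair-variable interactions (Cases {\sf (5)},{\sf (6)}) must be combed through carefully; I would handle it by a short best-response comparison for $\overline i$ against the fixed partial profile $\bm\sigma_{-\overline i}$, noting that every literal of $\overline i$ yields $\overline i$ positive utility only through profiles where $s_i$ is itself a literal (Cases {\sf (1)}--{\sf (6)} on the $\overline i$-winning side) or through Case {\sf (10)} which is inapplicable unless $i$ is purely a gadget player, and comparing this to the utility $\overline i$ obtains by deviating to $\delta$ (for non-special players) or, for the special opponent, to a pair variable ${\sf v}_{jk}$ chosen to exploit any complementary pair that $i$ fails to play --- but under our hypothesis $i$ plays no literal at all, so a pair variable beats every literal response, forcing literals out of $\mathrm{Supp}(\sigma_{\overline i})$. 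Everything else is the routine case enumeration already rehearsed in the proof of Lemma~\ref{first property of best-responses}.
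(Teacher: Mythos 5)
There is a genuine gap, and it sits exactly where you flagged the ``heart'' of your argument: your mechanism for forcing literals out of ${\mathsf{Supp}}(\sigma_{\overline{i}})$ does not work. You propose to beat a literal response of $\overline{i}$ by a pair variable ${\mathsf{v}}_{jk}$, but by Cases {\sf (5/a)}/{\sf (5/b)} a pair variable pays player $\overline{i}$ only when the \emph{other} special player $i$ chooses a literal $\ell_{m}$ with $m \in \{j,k\}$ --- and your hypothesis is precisely that $i$ plays no literal at all. So the pair variable yields ${\mathsf{U}}_{\overline{i}}(\bm{\sigma}_{-\overline{i}} \diamond {\mathsf{v}}_{jk}) = 0$, the same as the literal, not more; a tie at $0$ cannot exclude the literal from the support. (The ``deviation to $\delta$'' is also unavailable to $\overline{i}$, who is special and has no $\delta$ strategy.) The correct comparator --- and the one the paper uses --- is a \emph{gadget} strategy of $\overline{i}$: it gives strictly positive utility either via the positive utility property of ${\widehat{{\mathsf{G}}}}$ when some supported partial profile lies in ${\widehat{{\mathsf{\Sigma}}}}_{-\overline{i}}$, or via Cases {\sf (8)}/{\sf (10)} otherwise, where Case {\sf (10)} pays $\overline{i}$ exactly because $s_{i} \in {\mathcal{C}} \cup {\mathsf{V}}$ under your hypothesis. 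Once this comparison is made, it excludes not only literals but \emph{all} non-gadget strategies of $\overline{i}$ (each of them pays $\overline{i}$ only against a literal of $i$, hence $0$), which is the lemma itself --- so your subsequent detour through ${\mathsf{U}}_{i}(\bm{\sigma})=0$ and a profitable deviation for player $i$ becomes unnecessary.

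A second, related soft spot: your opening claim that player $i$ can profitably deviate to a gadget strategy ``exactly as in the proof of Lemma~\ref{first property of best-responses}'' is not available before literals of $\overline{i}$ are ruled out. In that earlier lemma the deviating player always lands with $|\widehat{{\cal P}}| \geq 2$, so Case {\sf (8)} applies; here, if $\overline{i}$ puts its weight on literals outside $\{\ell_{0},\overline{\ell}_{0},\ell_{1},\overline{\ell}_{1}\}$ and the non-special players play $\delta$, player $i$'s gadget deviation falls into Case {\sf (11)} and earns $0$. So the order of your steps is circular as written: step (c) needs step (a), and step (a), as argued, fails. The repair is to drop the player-$i$ detour altogether and run the paper's direct argument on the deviating player $\overline{i}$.
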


\begin{proof}
Fix a strategy
$s 
 \in
 {\mathsf{Supp}}(\sigma_{\overline{i}})$.
Since $\bm{\sigma}$
is a Nash equilibrium,
$s$ is a best-response
to $\bm{\sigma}_{-\overline{ i}}$.
We shall establish that
$s
 \in
 {\widehat{{\mathsf{\Sigma}}}}_{\overline{i}}$.
Consider the mixed profile
$\bm{\sigma}_{- \overline{i}}
 \diamond
 s$.
Since
${\mathsf{Supp}}(\sigma_{i})
 \subseteq
 \widehat{\mathsf{\Sigma}}_{i}
 \cup
 {\mathcal{C}}
 \cup
 {\mathsf{V}}$,
or
${\mathsf{Supp}}(\sigma_{i})
 \cap
 {\mathsf{L}}
 =
 \emptyset$, 
each profile
${\bf s}_{- \overline{i}}
 \diamond
 s$
supported in
$\bm{\sigma}_{- \overline{i}}
 \diamond
 s$ 
may only come from
Cases 
{\sf (5)},
{\sf (6)},
{\sf (7)},
{\sf (8)},
{\sf (9)},
{\sf (10)}
and
{\sf (11)}.
For any such profile
${\bf s}_{- \overline{i}}
 \diamond
 s$,
it follows,
by the utility functions,
that if
$s
 \notin
 \widehat{\mathsf{\Sigma}}_{\overline{i}}$,
then
${\mathsf{U}}_{\overline{i}}
          \left( {\bf s}_{- \overline{i}}
                 \diamond
                 s  
          \right)
 =
 0$,
which implies that
${\mathsf{U}}_{\overline{i}}
          \left( \bm{\sigma}_{- \overline{i}}
                 \diamond
                 s  
          \right)
 =
 0$.
(Note that 
Cases {\sf (5)} and {\sf (6)}
occur if and only if
player $\overline{i}$
chooses a strategy
from ${\mathsf{L}}$.)
So,
to establish the claim,
it suffices to determine a strategy 
$s 
 \in
 \widehat{\mathsf{\Sigma}}_{\overline{i}}$ 
with
${\mathsf{U}}_{\overline{i}}
          \left( {\bm{\sigma}}_{- \overline{i}}
                 \diamond s
          \right)
 >
 0$.
We proceed by case analysis.

\begin{itemize}

\item
Assume that
there is a profile
${\bf s}$
supported in $\bm{\sigma}$ with
${\bf s}_{- \overline{i}}
 \in
 {\widehat{{\mathsf{\Sigma}}}}_{- \overline{i}}$.
By the positive utility property
of $\widehat{{\mathsf{G}}}$,
there is a strategy
$t = t({\bf s}_{- \overline{i}})
  \in
  {\widehat{{\mathsf{\Sigma}}}}_{{\overline{i}}}$
with
${\mathsf{U}}_{\overline{i}}
          \left( {\bf s}_{- \overline{i}}
                 \diamond
                 t  
          \right)
 >
 0$.
Set
$s
 :=
 t$.
Then,
{
\small
\begin{eqnarray*}                      
         \lefteqn{{\mathsf{U}}_{\overline{i}}\left( \bm{\sigma}_{- \overline{i}}
                                                                             \diamond
                                                                             s
                                                                    \right)}                                                                                                \\       
\geq & {\mathbb P}_{{\bf s}_{- \overline{i}}
                                    \sim
                                    \bm{\sigma}_{- \overline{i}}
                                  }
                  \left( {\bf s}_{- \overline{i}}
                           \diamond
                            s 
                  \right)
           \cdot
           {\mathsf{U}}_{\overline{i}}
                    \left( {\bf s}_{- \overline{i}}
                             \diamond
                             s 
                    \right)
       &  \mbox{(since ${\mathsf{G}}$ is win-lose)}                                                                                               \\
=    & \underbrace{{\mathbb P}_{{\bf s}_{- \overline{i}}
                                                       \sim
                                                       \bm{\sigma}_{- \overline{i}}
                                                      }
                              \left( {\bf s}_{- \overline{i}}
                                     \diamond
                                     s 
                              \right)}_{> 0}
         \cdot
         \underbrace{\widehat{{\mathsf{U}}}_{\overline{i}}
                                       \left( {\bf s}_{- \overline{i}}
                                              \diamond
                                              s 
                                       \right)}_{> 0}
       & \mbox{(by the utility functions)}                                                                                                        \\
>    & 0\, .
       &
\end{eqnarray*}
}

\item
Assume now that
there is no profile
${\bf s}$
supported in $\bm{\sigma}$ with
${\bf s}_{- \overline{i}}
 \in
 {\widehat{{\mathsf{\Sigma}}}}_{- \overline{i}}$.
Define $s$
to be any strategy
from
${\widehat{{\mathsf{\Sigma}}}}_{\overline{i}}$.
So,
$\overline{i}
 \in
 {\widehat{\mathcal{P}}}
                    \left( {\bf s}_{- \overline{i}}
                           \diamond
                           s
                    \right)$
for each profile
${\bf s}_{- \overline{i}}$
supported in
$\bm{\sigma}_{- \overline{i}}
 \diamond
 s$;
it follows that
only profiles 
from Cases 
{\sf (8)}
and
{\sf (10)}
may be supported in
$\bm{\sigma}_{- \overline{i}}
 \diamond
 s$.
Hence, 
by the utility functions,
${\mathsf{U}}_{\overline{i}}
          \left( {\bf s}_{- \overline{i}}
                 \diamond
                 s 
          \right)
 =
 1$,
which implies that
${\mathsf{U}}_{\overline{i}}
          \left( \bm{\sigma}_{- \overline{i}}
                 \diamond
                 s 
          \right)
 =
 1
 >
 0$.

\end{itemize}
The claim follows.
\end{proof}

\noindent
We now prove
that the only Nash equilibria
for ${\mathsf{G}}$ 
where some player
is exclusively playing
gadget strategies 
are the Nash equilibria
for the gadget game
$\widehat{{\mathsf{G}}}$.

\begin{lemma}
\label{f is PNE}
The following conditions hold:
\begin{enumerate}

\item[{\sf (C.1)}]
${\mathcal{NE}}({\widehat{{\mathsf{G}}}})
 \subseteq
 {\mathcal{NE}}({\mathsf{G}})$.

\item[{\sf (C.2)}]
There is no Nash equilibrium 
${\bm{\sigma}}
 \in
 {\mathcal{NE}}({\mathsf{G}})
 \setminus
 {\mathcal{NE}}({\widehat{{\mathsf{G}}}})$
with
${\mathsf{Supp}}(\sigma_{i})
 \subseteq
 {\widehat{{\mathsf{\Sigma}}}}_{i}$,
for some player $i\in [r]$.

\end{enumerate}
\end{lemma}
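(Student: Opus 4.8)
The plan is to prove the two conditions separately, in both cases exploiting that as soon as every player picks a gadget strategy the game ${\mathsf{G}}$ reduces, through Case {\sf (7)} of Figure~\ref{table reduction}, exactly to ${\widehat{{\mathsf{G}}}}$; hence on every such profile ${\mathsf{U}}_{j}$ and ${\widehat{{\mathsf{U}}}}_{j}$ coincide for all $j\in[r]$. This single observation, together with Lemma~\ref{first property of best-responses} and Lemma~\ref{basic property of mixed nash equilibria}, drives everything.

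For {\sf (C.1)} I would take $\bm{\sigma}\in{\mathcal{NE}}({\widehat{{\mathsf{G}}}})$, regard it as a mixed profile for ${\mathsf{G}}$ (legitimate since ${\widehat{{\mathsf{\Sigma}}}}_{j}\subseteq{\mathsf{\Sigma}}_{j}$ for every $j$), and verify the Nash condition via Lemma~\ref{basic property of mixed nash equilibria}. All profiles supported in $\bm{\sigma}$ fall into Case {\sf (7)}, so ${\mathsf{U}}_{j}(\bm{\sigma})={\widehat{{\mathsf{U}}}}_{j}(\bm{\sigma})$. For a deviation of player $j$ to $t\in{\widehat{{\mathsf{\Sigma}}}}_{j}$, Case {\sf (7)} again governs $\bm{\sigma}_{-j}\diamond t$, so the deviation is weakly unprofitable precisely because $\bm{\sigma}$ is a Nash equilibrium for ${\widehat{{\mathsf{G}}}}$. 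For a deviation to $t\notin{\widehat{{\mathsf{\Sigma}}}}_{j}$, every profile supported in $\bm{\sigma}_{-j}\diamond t$ has all players but $j$ choosing gadget strategies; Cases {\sf (1)}--{\sf (6)} are then impossible (they require a special player to be playing a literal), and a short split on $r$ — Case {\sf (8)} when $r\ge 3$ (since $|\widehat{{\cal P}}|=r-1>1$ and $|{\cal P}|=1>0$) and Cases {\sf (10)}/{\sf (11)} when $r=2$, according to the kind of strategy $t$ is — shows ${\mathsf{U}}_{j}(\bm{\sigma}_{-j}\diamond t)=0\le{\mathsf{U}}_{j}(\bm{\sigma})$. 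Hence $\bm{\sigma}\in{\mathcal{NE}}({\mathsf{G}})$.

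For {\sf (C.2)} I would take $\bm{\sigma}\in{\mathcal{NE}}({\mathsf{G}})$ with ${\mathsf{Supp}}(\sigma_{i})\subseteq{\widehat{{\mathsf{\Sigma}}}}_{i}$ for some player $i$ and show $\bm{\sigma}\in{\mathcal{NE}}({\widehat{{\mathsf{G}}}})$, which is the claimed nonexistence in contrapositive form. Lemma~\ref{first property of best-responses} promotes the hypothesis to ${\mathsf{Supp}}(\sigma_{j})\subseteq{\widehat{{\mathsf{\Sigma}}}}_{j}$ for \emph{every} $j\in[r]$, so $\bm{\sigma}$ is a mixed profile for ${\widehat{{\mathsf{G}}}}$. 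Every profile supported in $\bm{\sigma}$, and every profile supported in each $\bm{\sigma}_{-j}\diamond t$ with $t\in{\widehat{{\mathsf{\Sigma}}}}_{j}$, lies in Case {\sf (7)}; thus ${\mathsf{U}}_{j}$ equals ${\widehat{{\mathsf{U}}}}_{j}$ on all of these, and since $\bm{\sigma}$ is a Nash equilibrium for ${\mathsf{G}}$ we get ${\widehat{{\mathsf{U}}}}_{j}(\bm{\sigma})\ge{\widehat{{\mathsf{U}}}}_{j}(\bm{\sigma}_{-j}\diamond t)$ for every $j$ and every $t\in{\widehat{{\mathsf{\Sigma}}}}_{j}$, i.e.\ $\bm{\sigma}\in{\mathcal{NE}}({\widehat{{\mathsf{G}}}})$. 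The only place requiring genuine care is the ``$t\notin{\widehat{{\mathsf{\Sigma}}}}_{i}$'' subcase of {\sf (C.1)}: one must walk down Figure~\ref{table reduction} and confirm that the unique applicable case always assigns utility $0$ to the deviator, uniformly over whether the deviator is special or auxiliary and over $t$ being a literal, a clause, a pair variable, or ${\mathsf{\delta}}$. Everything else is bookkeeping.
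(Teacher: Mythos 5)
Your proof is correct and takes essentially the same route as the paper's: for {\sf (C.1)} you check, exactly as the paper does, that a deviation to a strategy outside ${\widehat{{\mathsf{\Sigma}}}}_{i}$ only produces profiles from the non-gadget cases of Figure~\ref{table reduction} in which the deviator earns $0$ (your explicit split between Case {\sf (8)} for $r\geq 3$ and Cases {\sf (10)}/{\sf (11)} for $r=2$ is just a sharper version of the paper's listing of Cases {\sf (8)}--{\sf (11)}), and for {\sf (C.2)} you invoke Lemma~\ref{first property of best-responses} to conclude that ${\bm{\sigma}}$ is a mixed profile for ${\widehat{{\mathsf{G}}}}$ and hence, since utilities coincide on Case-{\sf (7)} profiles, a Nash equilibrium for ${\widehat{{\mathsf{G}}}}$ — the paper argues by contradiction, you by contraposition, which is immaterial. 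One small remark: your parenthetical reason for discarding Cases {\sf (1)}--{\sf (6)} (``they require a special player to be playing a literal'') is not the operative fact, since the deviator could itself be a special player choosing a literal; what actually rules these cases out is that they require \emph{both} special players to choose non-gadget strategies, whereas in ${\bm{\sigma}}_{-j}\diamond t$ every player other than $j$ is choosing a gadget strategy.
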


\begin{proof}
For Condition {\sf (C.1)},
fix a Nash equilibrium
$\bm{\sigma}
 \in
 {\mathcal{NE}}({\widehat{{\mathsf{G}}}})$;
so,
no player $i \in [r]$
can improve by switching
to a strategy 
$s \in {\widehat{{\mathsf{\Sigma}}}}_{i}$. 
Consider now player $i \in [r]$
switching to a strategy
$s 
 \notin
 {\widehat{{\mathsf{\Sigma}}}}_{i}$.
Then,
a profile ${\bf s}$
supported in
${\bm{\sigma}}_{-i}\diamond s$ 
may only come from Cases 
{\sf (8)},
{\sf (9)},
{\sf (10)} 
and {\sf (11)}, 
so that
$i \in {\mathcal{P}}({\bf s})$
and
${\mathsf{U}}_{i}({\bf s}) = 0$,
which implies that
${\mathsf{U}}_{i}
          ({\bm{\sigma}}_{-i}
           \diamond s)
 =
 0$.
Since ${\mathsf{G}}$ is win-lose,
${\mathsf{U}}_{i}({\bm{\sigma}}) \geq 0$. 
It follows that
${\mathsf{U}}_{i}
          ({\bm{\sigma}}_{-i}
           \diamond s)
 \leq
 {\mathsf{U}}_{i}({\bm{\sigma}})$,
and player $i$ cannot improve
by switching to $s$.
So,
$\bm{\sigma}$
is a Nash equilibrium
for ${\mathsf{G}}$,
so that
${\mathcal{NE}}({\widehat{{\mathsf{G}}}})
 \subseteq
 {\mathcal{NE}}({\mathsf{G}})$.

For Condition {\sf (C.2)},
assume,
by way of contradiction,
that there is a Nash equilibrium 
${\bm{\sigma}}
 \in
 {\mathcal{NE}}({\mathsf{G}})
 \setminus
 {\mathcal{NE}}({\widehat{{\mathsf{G}}}})$
with
${\mathsf{Supp}}(\sigma_{i})
 \subseteq
 {\widehat{{\mathsf{\Sigma}}}}_{i}$
for some player $i \in [r]$. 
By Lemma~\ref{first property of best-responses},
for each player
$j \in [r]
       \setminus
       \{ i \}$,
${\mathsf{Supp}}(\sigma_{j})
 \subseteq
 {\widehat{\mathsf{\Sigma}}}_{j}$.
Hence,
$\bm{\sigma}$
is a mixed profile for ${\widehat{{\mathsf{G}}}}$,
which implies that
$\bm{\sigma}
 \in
 {\mathcal{NE}}({\widehat{{\mathsf{G}}}})$.  
A contradiction.
\end{proof}

\noindent
A Nash equilibrium for
${\mathsf{G}}$
\textcolor{black}{coming}
from ${\widehat{\mathsf{G}}}$
will be called
a {\it gadget equilibrium}.
The following properties
characterize
a Nash equilibrium
${\bm{\sigma}}
 \in
 {\mathcal{NE}}({\mathsf{G}})
 \setminus
 {\mathcal{NE}}({\widehat{\mathsf{G}}})$.
Lemmas~\ref{literals are always played},
\ref{same probability on pairs of literals},
\ref{only literals are played}
and~\ref{no negations are played}
concern the two special players;
Lemma~\ref{delta is always played}
concerns the non-special players. 
We first prove that  
each special player
is playing some literal with positive probability.

\begin{lemma}
\label{literals are always played}
Fix a Nash equilibrium
${\bm{\sigma}}
 \in
 {\mathcal{NE}}({\mathsf{G}})
 \setminus
 {\mathcal{NE}}({\widehat{{\mathsf{G}}}})$.
Then,
$\sigma_{1}({\mathsf{L}})
 \cdot
 \sigma_{2}({\mathsf{L}})
 >
 0$.
\end{lemma}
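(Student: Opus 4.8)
The statement to prove is Lemma~\ref{literals are always played}: for a Nash equilibrium $\bm\sigma \in \mathcal{NE}(\mathsf{G}) \setminus \mathcal{NE}(\widehat{\mathsf{G}})$, we have $\sigma_1(\mathsf{L}) \cdot \sigma_2(\mathsf{L}) > 0$, i.e.\ both special players play some literal with positive probability. I would argue by contradiction: suppose $\sigma_i(\mathsf{L}) = 0$ for some special player $i \in [2]$. The strategy is to show that this forces $\bm\sigma$ into one of two situations, both contradicting the hypotheses: either $\bm\sigma$ is a gadget equilibrium (contradicting $\bm\sigma \notin \mathcal{NE}(\widehat{\mathsf{G}})$), or some player gets zero utility in $\bm\sigma$ (contradicting Lemma~\ref{park kafe}, since $\mathsf{G}$ has the positive utility property --- which should be verified/recalled, as it is inherited from that of $\widehat{\mathsf{G}}$ via the construction).

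First I would invoke Lemma~\ref{second property of best-responses}: since $\sigma_i(\mathsf{L}) = 0$ means $\mathrm{Supp}(\sigma_i) \subseteq \widehat{\mathsf{\Sigma}}_i \cup \mathcal{C} \cup \mathsf{V}$, that lemma gives $\mathrm{Supp}(\sigma_{\overline{i}}) \subseteq \widehat{\mathsf{\Sigma}}_{\overline{i}}$. So the other special player plays only gadget strategies. Then Lemma~\ref{first property of best-responses} (applied with the player $\overline{i}$) forces \emph{every} player $j \in [r]$ to have $\mathrm{Supp}(\sigma_j) \subseteq \widehat{\mathsf{\Sigma}}_j$ --- in particular player $i$ as well. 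But then, by Lemma~\ref{f is PNE} (Condition {\sf (C.2)}), there is no such Nash equilibrium in $\mathcal{NE}(\mathsf{G}) \setminus \mathcal{NE}(\widehat{\mathsf{G}})$: any Nash equilibrium of $\mathsf{G}$ in which some player plays only gadget strategies is in fact a gadget equilibrium. This directly contradicts $\bm\sigma \notin \mathcal{NE}(\widehat{\mathsf{G}})$. Hence $\sigma_i(\mathsf{L}) > 0$ for both $i \in [2]$, which is exactly the claim (since the product of two positive reals is positive).

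The argument is therefore a short chaining of three already-established lemmas. The only point requiring care --- and the likely main obstacle if any --- is making sure the hypothesis of Lemma~\ref{second property of best-responses} is met: we need $\mathrm{Supp}(\sigma_i)$ to avoid all of $\mathsf{L}$, which is precisely $\sigma_i(\mathsf{L}) = 0$, so this is immediate; but one should double-check the edge case where $i$'s support could intersect $\{\delta\}$-type strategies --- however $\delta$ belongs only to the non-special players' strategy sets, so for a special player $\mathrm{Supp}(\sigma_i) \subseteq \widehat{\mathsf{\Sigma}}_i \cup \mathsf{L} \cup \mathcal{C} \cup \mathsf{V}$ and removing $\mathsf{L}$ leaves exactly $\widehat{\mathsf{\Sigma}}_i \cup \mathcal{C} \cup \mathsf{V}$, as needed. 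I would also note explicitly that the implication uses the symmetry of the roles of players $1$ and $2$ in the relevant cases of Figure~\ref{table reduction} (or simply apply the lemmas as stated, which already handle both $i$ and $\overline{i}$), so no separate case analysis on which special player has $\sigma_i(\mathsf{L}) = 0$ is needed. With these checks in place the proof is complete in a few lines.
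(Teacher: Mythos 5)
Your proposal is correct and follows essentially the same route as the paper: assume $\sigma_i(\mathsf{L})=0$ for some special player $i$, apply Lemma~\ref{second property of best-responses} to get ${\mathsf{Supp}}(\sigma_{\overline{i}})\subseteq{\widehat{\mathsf{\Sigma}}}_{\overline{i}}$, and conclude with a contradiction to Lemma~\ref{f is PNE} (Condition {\sf (C.2)}). Your extra invocation of Lemma~\ref{first property of best-responses} to extend the gadget-support property to all players is superfluous (Condition {\sf (C.2)} only needs it for one player), but harmless.
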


\begin{proof}
Assume,
by way of contradiction,
that there is a player
$i \in [2]$
with
$\sigma_{i}({\mathsf{L}})
 =
 0$.
So,
${\mathsf{Supp}}(\sigma_{i})
 \subseteq
 {\widehat{\mathsf{\Sigma}}}_{i}
 \cup
 {\mathcal{C}}
 \cup
 {\mathsf{V}}$.
Then,   
Lemma~\ref{second property of best-responses}
implies that
${\mathsf{Supp}}(\sigma_{\overline{i}})
 \subseteq
 {\widehat{\mathsf{\Sigma}}}_{\overline{i}}$. 
A contradiction
to Lemma~\ref{f is PNE} (Condition {\sf (2)}).
\end{proof}

\noindent 
We now prove that in a Nash equilibrium
${\bm\sigma}
 \in
 {\mathcal{NE}}({\mathsf{G}})
 \setminus
 {\mathcal{NE}}({\widehat{\mathsf{G}}})$,
each special player
assigns the same positive probability
to each pair of literals 
$\{ \ell, \overline{\ell} \}$.

\begin{lemma}
\label{same probability on pairs of literals}
Fix a Nash equilibrium
${\bm{\sigma}}
 \in
 {\mathcal{NE}}({\mathsf{G}})
 \setminus{\mathcal{NE}}({\widehat{{\mathsf{G}}}})$
and a special player $i \in [2]$. 
Then,
for each literal
$\ell \in {\mathsf{L}}$,
      $\sigma_{i}\left( \left\{ \ell, 
                                          \overline{\ell}
                               \right\}           
                      \right)
= 
\frac{\textstyle \sigma_{i}({\sf L})}
                  {\textstyle n}$. 
\end{lemma}

\begin{proof}
We shall establish
a sequence of claims.
We first prove:

\begin{claim}
\label{5 6 1}
For each special player $i \in [2]$,
there is no index $j \in I_{n}$
with
$\sigma_{i}({\mathsf{L}})
 =
 \sigma_{i}\left( \left\{ \ell_{j},
                                     \overline{\ell}_{j}
                  \right\}
           \right)$.
\end{claim}

\begin{proof}
Assume,
by way of contradiction,
that there is an index $j \in I_{n}$
with
$\sigma_{i}({\mathsf{L}})
 =
 \sigma_{i}\left( \left\{ \ell_{j},
                          \overline{\ell}_{j}
                  \right\}
           \right)$.
Set $i := 1$.
Consider now player $2$
switching to a literal
$\ell$.
By the utility functions
(Cases {\sf (3)} and {\sf (4)}),
player $2$ gets utility $1$
if and only if
$\ell \in {\mathsf{L}}^{\prime}
          :=
          \{ \ell_{j+2},
             \overline{\ell}_{j+2},
             \ell_{j+3},
             \overline{\ell}_{j+3}
          \}$
and
all non-special players
choose ${\mathsf{\delta}}$.
By Lemma~\ref{f is PNE} (Condition {\sf (2)}),
for each non-special player $i^{\prime} \in [r] \setminus [2]$,
${\mathsf{\delta}}
  \in
  {\mathsf{Supp}}(\sigma_{i^{\prime}})$.
Hence,
there is at least one profile from Case {\sf (3)}
supported in
$\bm{\sigma}_{-2} 
  \diamond
  \ell$.  
It follows that
${\mathsf{U}}_{2}\left( \bm{\sigma}_{-2}
                        \diamond
                        \ell_{k}
                 \right)
 >
 {\mathsf{U}}_{2}\left( \bm{\sigma}_{-2}
                        \diamond
                        \ell_{h}
                 \right)$
for any pair of literals
$\ell_{k} \in {\mathsf{L}}^{\prime}$
and
$\ell_{h} \not\in {\mathsf{L}}^{\prime}$.
By Lemma~\ref{basic property of mixed nash equilibria}
(Condition {\sf (2)}),
this implies that
${\mathsf{Supp}}(\sigma_{2})
 \cap
 {\mathsf{L}}
 \subseteq
 {\mathsf{L}}^{\prime}$.

Now,
taking that
${\mathsf{Supp}}(\sigma_{2})
 \cap
 {\mathsf{L}}
 \subseteq
 {\mathsf{L}}^{\prime}$,
consider player $1$ switching to
a literal $\ell$.
By the utility functions
(Cases {\sf (1)}, {\sf (2)} and {\sf (4)}),
if
$\ell \not\in 
          {\mathsf{L}}^{\prime\prime}
          :=
          \{ \ell_{j+1},
             \overline{\ell}_{j+1},
             \ell_{j+2},
             \overline{\ell}_{j+2},
             \ell_{j+3},
             \overline{\ell}_{j+3}
          \}$
then
player $1$ gets utility $0$;
if she chooses some literal
$\ell_{k} \in {\mathsf{L}}^{\prime\prime}$          
and all non-special players
choose ${\mathsf{\delta}}$
then she gets utility $1$          
By Lemma~\ref{f is PNE} (Condition {\sf (2)}),
for each non-special player 
$i^{\prime} \in [r] \setminus [2]$,
${\mathsf{\delta}}
  \in
  {\mathsf{Supp}}(\sigma_{i^{\prime}})$.
Hence,
there is at least one profile
from Case {\sf (2)}
supported in $\bm{\sigma}_{-1} \diamond \ell$.
It follows that  
${\mathsf{U}}_{1}\left( \bm{\sigma}_{-1}
                        \diamond
                        \ell_{k}
                 \right)
 >
 {\mathsf{U}}_{1}\left( \bm{\sigma}_{-1}
                        \diamond
                        \ell_{h}
                 \right)$
for any literal
$\ell_{h} \not\in {\mathsf{L}}^{\prime\prime}$.                 
By Lemma~\ref{basic property of mixed nash equilibria} (Condition {\sf (2)}),
this implies that
${\mathsf{Supp}}(\sigma_{1})
 \cap
 {\mathsf{L}} 
 \subseteq
 {\mathsf{L}}^{\prime\prime}$.                           
Note that
with $n \geq 4$,
this implies 
$\ell_{j}
 \not\in
 {\mathsf{Supp}}(\sigma_{1})
 \cap
 {\mathsf{L}}$.
A contradiction.
We use a corresponding argument
to establish the claim
for player $2$.
\end{proof}

\noindent
We continue to prove:

\begin{claim}
\label{5 6 2}
The following conditions hold
for each index $k \in I_{n}$:
\begin{enumerate}

\item[{\sf (C.1)}]
If 
$\sigma_{1}\left( \left\{ \ell_{k},
                          \overline{\ell}_{k}
                  \right\}
           \right)        
 >
 0$
then
$\sigma_{2}\left( \left\{ \ell_{k},
                          \overline{\ell}_{k}
                  \right\}
           \right)               
 \cdot 
 \sigma_{2}\left( \left\{ \ell_{k+1},
                          \overline{\ell}_{k+1}
                  \right\}
           \right)               
 >
 0$.

\item[{\sf (C.2)}]
If 
$\sigma_{2}\left( \left\{ \ell_{k},
                          \overline{\ell}_{k}
                  \right\}
           \right)               
 >
 0$
then
$\sigma_{1}\left( \left\{ \ell_{k-3},
                          \overline{\ell}_{k-3}
                  \right\}
           \right)               
 \cdot
 \sigma_{1}\left( \left\{ \ell_{k-2},
                          \overline{\ell}_{k-2}
                  \right\}
           \right)               
 >
 0$.

\end{enumerate}
\end{claim}

\begin{proof}
We start with Condition {\sf (C.1)}.
Claim~\ref{5 6 1}
implies that there are
at least two distinct indices
$j, h \in I_{n}$
with
$\sigma_{2}\left( \left\{ \ell_{j},
                          \overline{\ell}_{j}
                  \right\}        
           \right)
 >
 0$
and 
$\sigma_{2}\left( \left\{ \ell_{h},
                          \overline{\ell}_{h}
                  \right\}
           \right)               
 >
 0$.
If $\{ k, k+1 \}
      =
      \{ j, h \}$,
then we are done.
So assume that
$\{ k, k+1 \}
  \neq
  \{ j, h \}$.
This implies that either
$k \neq j$ and $k+1 \neq j$
or $k \neq h$ and $k+1 \neq h$.
Choose
$k \neq h$ and $k+1 \neq h$.

Since
$\sigma_{1}\left( \left\{ \ell_{k},
                          \overline{\ell}_{k}
                  \right\}
           \right)                
 >
 0$,
either $\ell_{k}$
or $\overline{\ell}_{k}$
is a best-response
for player $1$
to $\bm{\sigma}_{-1}$; 
assume,
without loss of generality,
that $\ell_{k}$
is a best-response
for player $1$
to $\bm{\sigma}_{-1}$.
Consider player $1$
switching to $\ell_{k}$.
By the utility functions
(Cases {\sf (1)}, {\sf (2)} and {\sf (4)}),
player $1$ gets utility $1$
if and only if
player $2$
chooses a literal
from
$\{ \ell_{k},
    \ell_{k+1},
    \overline{\ell}_{k+1}
 \}$
and
all non-special players
choose ${\mathsf{\delta}}$.     
Hence,
{
\small
\begin{eqnarray*}
      {\mathsf{U}}_{1}
               \left( \bm{\sigma}_{-1}
                      \diamond
                      \ell_{k}
               \right)
& = & \sigma_{2} \left( \left\{ \ell_{k},
                                \ell_{k+1},
                                \overline{\ell}_{k+1}
                        \right\}
                 \right)
      \cdot
      \prod_{i \in [r]
                  \setminus
                  [2]}                           
        \sigma_{i}\left( {\mathsf{\delta}}
                  \right)\, .
\end{eqnarray*}
}

Consider player $1$
switching to ${\sf v}_{h,k}$.
By the utility functions
(Case {\sf (5)}),
player $1$ gets utility $1$
if and only if
player $2$
chooses a literal
from
$\{ \ell_{h},
    \overline{\ell}_{h}, 
    \ell_{k},
    \overline{\ell}_{k}
 \}$
and
all non-special players
choose ${\mathsf{\delta}}$.    
Hence,
{
\small
\begin{eqnarray*}
      {\mathsf{U}}_{1}
               \left( \bm{\sigma}_{-1}
                      \diamond
                      {\mathsf{v}}_{h,k}
               \right)
& = & \sigma_{2} \left( \left\{ \ell_{h},
                                \overline{\ell}_{h},
                                \ell_{k},
                                \overline{\ell}_{k}
                        \right\}
                 \right)
      \cdot
      \prod_{i \in [r]
                  \setminus
                  [2]}                           
        \sigma_{i}\left( {\mathsf{\delta}}
                  \right)\, .
\end{eqnarray*}
}
Since $\ell_{k}$ is a best-response
for player $1$
to $\bm{\sigma}_{-1}$,
it follows that
{
\small
\begin{eqnarray*}
      \sigma_{2} \left( \left\{ \ell_{k},
                                \ell_{k+1},
                                \overline{\ell}_{k+1}
                        \right\}
                 \right)
      \cdot
      \prod_{i \in [r]
                  \setminus
                  [2]}                           
        \sigma_{i}\left( {\mathsf{\delta}}
                  \right)
&\geq& \sigma_{2} \left( \left\{ \ell_{h},
                                \overline{\ell}_{h},
                                \ell_{k},
                                \overline{\ell}_{k}
                        \right\}
                 \right)
      \cdot
      \prod_{i \in [r]
                  \setminus
                  [2]}                           
        \sigma_{i}\left( {\mathsf{\delta}}
                  \right)\, .
\end{eqnarray*}
}
By Lemma~\ref{f is PNE}
(Condition {\sf (2)}),
$\prod_{i \in [r]
              \setminus
              [2]}                           
   \sigma_{i}\left( {\mathsf{\delta}}
             \right)
 \neq
 0$;
thus, 
it follows that
{
\small
\begin{eqnarray*}
\sigma_{2} \left( \left\{ \ell_{k},
                           \ell_{k+1},
                           \overline{\ell}_{k+1}
                   \right\}
            \right)
& \geq &
 \sigma_{2} \left( \left\{ \ell_{h},
                           \overline{\ell}_{h},
                           \ell_{k},
                           \overline{\ell}_{k}
                   \right\}
            \right)\, ,
\end{eqnarray*} 
}           
which implies that 
{
\small 
\begin{eqnarray*}                      
\sigma_{2} \left( \left\{ \ell_{k+1},
                           \overline{\ell}_{k+1}
                   \right\}
            \right)
& \geq &
 \sigma_{2} \left( \left\{ \ell_{h},
                           \overline{\ell}_{h},
                           \overline{\ell}_{k}
                   \right\}
            \right)\, .
\end{eqnarray*}
}
\noindent
Since
$\sigma_{2} \left( \left\{ \ell_{h},
                           \overline{\ell}_{h},
                   \right\}
            \right)            
 >
 0$,
it follows that
$\sigma_{2} \left( \left\{ \ell_{k+1},
                           \overline{\ell}_{k+1}
                   \right\}
            \right)
 >
 0$.

Consider now player $1$
switching to ${\sf v}_{h,k+1}$.
By the utility functions
(Case {\sf (5)}),
player $1$ gets utility $1$
if and only if
player $2$
chooses a literal
from
$\{ \ell_{h},
    \overline{\ell}_{h}, 
    \ell_{k+1},
    \overline{\ell}_{k+1}
 \}$
and
all non-special players
choose ${\mathsf{\delta}}$.    
Hence,
{
\small
\begin{eqnarray*}
      {\mathsf{U}}_{1}
               \left( \bm{\sigma}_{-1}
                      \diamond
                      {\mathsf{v}}_{h,k+1}
               \right)
& = & \sigma_{2} \left( \left\{ \ell_{h},
                                \overline{\ell}_{h},
                                \ell_{k+1},
                                \overline{\ell}_{k+1}
                        \right\}
                 \right)
      \cdot
      \prod_{i \in [r]
                  \setminus
                  [2]}                           
        \sigma_{i}\left( {\mathsf{\delta}}
                  \right)\, .
\end{eqnarray*}
}
Since $\ell_{k}$ is a best-response
for player $1$
to $\bm{\sigma}_{-1}$,
it follows that
{
\small
\begin{eqnarray*}
      \sigma_{2} \left( \left\{ \ell_{k},
                                \ell_{k+1},
                                \overline{\ell}_{k+1}
                        \right\}
                 \right)
      \cdot
      \prod_{i \in [r]
                  \setminus
                  [2]}                           
        \sigma_{i}\left( {\mathsf{\delta}}
                  \right)
&\geq& \sigma_{2} \left( \left\{ \ell_{h},
                                \overline{\ell}_{h},
                                \ell_{k+1},
                                \overline{\ell}_{k+1}
                        \right\}
                 \right)
      \cdot
      \prod_{i \in [r]
                  \setminus
                  [2]}                           
        \sigma_{i}\left( {\mathsf{\delta}}
                  \right)\, .
\end{eqnarray*}
}
\noindent
By Lemma~\ref{f is PNE}
(Condition {\sf (2)}),
$\prod_{i \in [r]
              \setminus
              [2]}                           
   \sigma_{i}\left( {\mathsf{\delta}}
             \right)
 \neq
 0$;
thus, 
it follows that
{
\small
\begin{eqnarray*}
\sigma_{2} \left( \left\{ \ell_{k},
                           \ell_{k+1},
                           \overline{\ell}_{k+1}
                   \right\}
            \right)
& \geq &
 \sigma_{2} \left( \left\{ \ell_{h},
                           \overline{\ell}_{h},
                           \ell_{k+1},
                           \overline{\ell}_{k+1}
                   \right\}
            \right)\, ,
\end{eqnarray*}
}
\noindent            
which implies that
{
\small
\textcolor{black}{
\begin{eqnarray*}  
\sigma_{2} \left( \left\{ \ell_{k},
                                      \overline{\ell}_{k}
                   \right\}
            \right)
& \geq &                      
\sigma_{2} \left( \left\{ \ell_{k}
                   \right\}
            \right)\ \
 \geq\ \
 \sigma_{2} \left( \left\{ \ell_{h},
                           \overline{\ell}_{h}
                   \right\}
            \right)\, .
\end{eqnarray*}
}
}            
Since
$\sigma_{2} \left( \left\{ \ell_{h},
                           \overline{\ell}_{h}
                   \right\}
            \right)            
 >
 0$,
it follows that
$\sigma_{2} \left( \left\{ \ell_{k},
                             \overline{\ell}_{k}
                   \right\}
            \right)
 >
 0$.
This completes the proof
of Condition {\sf (1)}.

We continue with Condition {\sf (C.2)}.
Claim~\ref{5 6 1}
implies that there are
at least two distinct indices
$j, h \in I_{n}$
with
$\sigma_{1}\left( \left\{ \ell_{j},
                             \overline{\ell}_{j}
                  \right\}        
           \right)
 >
 0$
and 
$\sigma_{1}\left( \left\{ \ell_{h},
                            \overline{\ell}_{h}
                  \right\}
           \right)               
 >
 0$.
If $\{ k-3, k-2 \}
      =
      \{ j, h \}$,
then we are done.
So assume that
$\{ k-3, k-2 \}
  \neq
  \{ j, h \}$.
This implies that either
$k-3 \neq j$ and $k-2 \neq j$
or $k-3 \neq h$ and $k-2 \neq h$.
Choose
$k-3 \neq h$ and $k-2 \neq h$.

Since
$\sigma_{2}\left( \left\{ \ell_{k},
                                       \overline{\ell}_{k}
                            \right\}
                   \right)                
 >
 0$,
either $\ell_{k}$
or $\overline{\ell}_{k}$
is a best-response
for player $2$
to $\bm{\sigma}_{-2}$; 
assume,
without loss of generality,
that $\ell_{k}$
is a best-response
for player $2$
to $\bm{\sigma}_{-2}$.
Consider player $2$
switching to $\ell_{k}$.
By the utility functions
(Cases {\sf (1)}, {\sf (2)} and {\sf (4)}),
player $2$ gets utility $1$
if and only if
player $1$
chooses a literal
from
$\{ \ell_{k-2},
      \overline{\ell}_{k-2},
      \ell_{k-3},
      \overline{\ell}_{k-3}
 \}$
and
all non-special players
choose ${\mathsf{\delta}}$.     
Hence,
{
\small
\begin{eqnarray*}
      {\mathsf{U}}_{2}
               \left( \bm{\sigma}_{-2}
                      \diamond
                      \ell_{k}
               \right)
& = & \sigma_{1} \left( \left\{ \ell_{k-2},
                                                 \overline{\ell}_{k-2},
                                                 \ell_{k-3},
                                                 \overline{\ell}_{k-3}
                             \right\}
                 \right)
      \cdot
      \prod_{i \in [r]
                  \setminus
                  [2]}                           
        \sigma_{i}\left( {\mathsf{\delta}}
                  \right)\, .
\end{eqnarray*}
}

Consider player $2$
switching to ${\sf v}_{h,k-2}$.
By the utility functions
(Case {\sf (5)}),
player $2$ gets utility $1$
if and only if
player $1$
chooses a literal
from
$\{ \ell_{h},
    \overline{\ell}_{h}, 
    \ell_{k-2},
    \overline{\ell}_{k-2}
 \}$
and
all non-special players
choose ${\mathsf{\delta}}$.    
Hence,
{
\small
\begin{eqnarray*}
      {\mathsf{U}}_{2}
               \left( \bm{\sigma}_{-2}
                      \diamond
                      {\mathsf{v}}_{h,k-2}
               \right)
& = & \sigma_{1} \left( \left\{ \ell_{h},
                                                \overline{\ell}_{h},
                                                \ell_{k-2},
                                                \overline{\ell}_{k-2}
                        \right\}
                 \right)
      \cdot
      \prod_{i \in [r]
                  \setminus
                  [2]}                           
        \sigma_{i}\left( {\mathsf{\delta}}
                  \right)\, .
\end{eqnarray*}
}
\noindent
Since $\ell_{k}$ is a best-response
for player $2$
to $\bm{\sigma}_{-2}$,
it follows that
{
\small
\begin{eqnarray*}
      \sigma_{1} \left( \left\{ \ell_{k-2},
                                             \overline{\ell}_{k-2},
                                             \ell_{k-3},
                                             \overline{\ell}_{k-3}
                        \right\}
                 \right)
      \cdot
      \prod_{i \in [r]
                  \setminus
                  [2]}                           
        \sigma_{i}\left( {\mathsf{\delta}}
                  \right)
&\geq& \sigma_{1} \left( \left\{ \ell_{h},
                                                    \overline{\ell}_{h},
                                                    \ell_{k-2},
                                                    \overline{\ell}_{k-2}
                                         \right\}
                                \right)
      \cdot
      \prod_{i \in [r]
                  \setminus
                  [2]}                           
        \sigma_{i}\left( {\mathsf{\delta}}
                  \right)\, .
\end{eqnarray*}
}
\noindent
By Lemma~\ref{f is PNE}
(Condition {\sf (2)}),
$\prod_{i \in [r]
              \setminus
              [2]}                           
   \sigma_{i}\left( {\mathsf{\delta}}
             \right)
 \neq
 0$;
thus, 
it follows that
{
\small
\begin{eqnarray*}
\sigma_{1} \left( \left\{ \ell_{k-2},
                                         \overline{\ell}_{k-2},
                                         \ell_{k-3},
                                         \overline{\ell}_{k-3}
                   \right\}
            \right)
& \geq &
 \sigma_{1} \left( \left\{ \ell_{h},
                                        \overline{\ell}_{h},
                                        \ell_{k-2},
                                        \overline{\ell}_{k-2}
                             \right\}
            \right)\, ,
\end{eqnarray*}
}
\noindent            
which implies that
{
\small
\begin{eqnarray*}                        
\sigma_{1} \left( \left\{ \ell_{k-3},
                                        \overline{\ell}_{k-3}
                   \right\}
            \right)
& \geq &
 \sigma_{1} \left( \left\{ \ell_{h},
                           \overline{\ell}_{h}
                   \right\}
            \right)\, .
\end{eqnarray*}
}            
Since
$\sigma_{1} \left( \left\{ \ell_{h},
                                        \overline{\ell}_{h},
                     \right\}
            \right)            
 >
 0$,
it follows that
$\sigma_{1} \left( \left\{ \ell_{k-3},
                                        \overline{\ell}_{k-3}
                     \right\}
            \right)
 >
 0$.

Consider now player $2$
switching to ${\sf v}_{h,k-3}$.
By the utility functions
(Case {\sf (5)}),
player $2$ gets utility $1$
if and only if
player $1$
chooses a literal
from
$\{ \ell_{h},
    \overline{\ell}_{h}, 
    \ell_{k-3},
    \overline{\ell}_{k-3}
 \}$
and
all non-special players
choose ${\mathsf{\delta}}$.    
Hence,
{
\small
\begin{eqnarray*}
      {\mathsf{U}}_{2}
               \left( \bm{\sigma}_{-2}
                      \diamond
                      {\mathsf{v}}_{h,k-3}
               \right)
& = & \sigma_{1} \left( \left\{ \ell_{h},
                                                \overline{\ell}_{h},
                                                \ell_{k-3},
                                                \overline{\ell}_{k-3}
                        \right\}
                 \right)
      \cdot
      \prod_{i \in [r]
                  \setminus
                  [2]}                           
        \sigma_{i}\left( {\mathsf{\delta}}
                  \right)\, .
\end{eqnarray*}
}
\noindent
Since $\ell_{k}$ is a best-response
for player $2$
to $\bm{\sigma}_{-2}$,
it follows that
{
\small
\begin{eqnarray*}
      \sigma_{1} \left( \left\{ \ell_{k-2},
                                             \overline{\ell}_{k-2},
                                             \ell_{k-3},
                                             \overline{\ell}_{k-3}
                        \right\}
                 \right)
      \cdot
      \prod_{i \in [r]
                  \setminus
                  [2]}                           
        \sigma_{i}\left( {\mathsf{\delta}}
                  \right)
&\geq& \sigma_{1} \left( \left\{ \ell_{h},
                                                    \overline{\ell}_{h},
                                                    \ell_{k-3},
                                                    \overline{\ell}_{k-3}
                                         \right\}
                                \right)
      \cdot
      \prod_{i \in [r]
                  \setminus
                  [2]}                           
        \sigma_{i}\left( {\mathsf{\delta}}
                  \right)\, .
\end{eqnarray*}
}
\noindent
By Lemma~\ref{f is PNE}
(Condition {\sf (2)}),
$\prod_{i \in [r]
              \setminus
              [2]}                           
   \sigma_{i}\left( {\mathsf{\delta}}
             \right)
 \neq
 0$;
thus, 
it follows that
{
\small
\begin{eqnarray*}
\sigma_{1} \left( \left\{ \ell_{k-2},
                                         \overline{\ell}_{k-2},
                                         \ell_{k-3},
                                         \overline{\ell}_{k-3}
                   \right\}
            \right)
& \geq &
 \sigma_{1} \left( \left\{ \ell_{h},
                                        \overline{\ell}_{h},
                                        \ell_{k-3},
                                        \overline{\ell}_{k-3}
                             \right\}
            \right)\, ,
\end{eqnarray*}
}            
which implies that
{
\small
\begin{eqnarray*}                        
\sigma_{1} \left( \left\{ \ell_{k-2},
                                        \overline{\ell}_{k-2}
                   \right\}
            \right)
& \geq &
 \sigma_{1} \left( \left\{ \ell_{h},
                           \overline{\ell}_{h}
                   \right\}
            \right)\, .
\end{eqnarray*}
}            
Since
$\sigma_{1} \left( \left\{ \ell_{h},
                                        \overline{\ell}_{h},
                     \right\}
            \right)            
 >
 0$,
it follows that
$\sigma_{1} \left( \left\{ \ell_{k-2},
                                        \overline{\ell}_{k-2}
                     \right\}
            \right)
 >
 0$,
and this completes the proof. 
\end{proof}

\noindent
We continue to prove:

\begin{claim}
\label{5 6 3}
Fix a player $i \in [2]$.
Then,
for each index $j \in I_{n}$,
$\sigma_{i}\left( \left\{ \ell_{j},
                                      \overline{\ell}_{j}
                          \right\}
                 \right)
 >
 0$.
\end{claim}

\begin{proof} 
Assume, 
by way of contradiction,
that there is an index $j \in I_{n}$
with
$\sigma_{i}\left( \left\{ \ell_{j},
                                      \overline{\ell}_{j}
                            \right\}
                   \right)
 =
 0$.
Without loss of generality, 
take that 
$\sigma_{i}\left( \left\{ \ell_{j+1},
                                      \overline{\ell}_{j+1}
                            \right\}
                   \right)
 >
 0$.
 There are two cases:
\underline{{\sf (1)} $i = 1$.}
By Claim~\ref{5 6 2} (Condition {\sf (C.1)}),
$\sigma_{2}\left( \left\{ \ell_{j+2},
                                       \overline{\ell}_{j+2}
                            \right\}
                   \right)
  >
  0$; 
hence,      
by Claim~\ref{5 6 2} (Condition {\sf (C.2)}),
$\sigma_{1}\left( \left\{ \ell_{j},
                                       \overline{\ell}_{j}
                            \right\}
                   \right)
 >
 0$.
A contradiction.
\underline{{\sf (2)} $i = 2$.}
By Claim~\ref{5 6 2} (Condition {\sf (C.2)}),
$\sigma_{1}\left( \left\{ \ell_{j-1},
                                       \overline{\ell}_{j-1}
                            \right\}
                   \right)
  >
  0$;
hence,       
by Claim~\ref{5 6 2} (Condition {\sf (C.1)}),
$\sigma_{2}\left( \left\{ \ell_{j},
                                       \overline{\ell}_{j}
                            \right\}
                   \right)
 >
 0$.
A contradiction.
\end{proof}

\noindent
To prove that
for each index
$j \in I_{n}$,
$\sigma_{i}\left( \left\{ \ell_{j},
                                     \overline{\ell}_{j}
                           \right\}
                 \right)
 =
 \frac{\textstyle \sigma_{i}({\mathsf{L}})}
         {\textstyle n}$,
assume,
by way of contradiction, 
that there is an index 
$j \in I_{n}$
with 
{
\small
\begin{eqnarray*}
\sigma_{i}\left( \left\{ \ell_{j},
                                       \overline{\ell}_{j}
                            \right\}
                   \right)
& > &
 \sigma_{i}\left( \left\{ \ell_{j+1},
                                      \overline{\ell}_{j+1}
                           \right\}
                  \right)\, .
\end{eqnarray*}
}                  
Fix now $i := 2$.
We proceed by case analysis.
\begin{enumerate}

\item 
Assume first
that there is an index 
$k \in I_{n} 
         \setminus
         \{ j, j+1 \}$ 
with
{
\small
\begin{eqnarray*} 
\sigma_{1} \left( \left\{ \ell_{k},
                                        \overline{\ell}_{k}
                             \right\}
                    \right)
&  > &
 \sigma_{1}\left( \left\{ \ell_{j+1},
                                      \overline{\ell}_{j+1}
                            \right\}
                  \right)\, .
\end{eqnarray*} 
}
\noindent                 
By Claim~\ref{5 6 3},
either $\ell_{j+3}$
or
$\overline{\ell}_{j+3}$
is a best-response for player $2$
to $\bm{\sigma}_{-2}$;
assume,
without loss of generality,
that $\ell_{j+3}$
is such a best-response.
Consider first player $2$
switching to $\ell_{j+3}$.
By the utility functions
(Cases {\sf (3)} and {\sf (4)}),
player $2$ gets utility $1$
if and only if
player $1$ chooses a literal from
$\left\{ \ell_{j},
             \overline{\ell}_{j},
             \ell_{j+1},
             \overline{\ell}_{j+1}
  \right\}$
and all non-special players
choose ${\mathsf{\delta}}$.
Hence,
{
\small
\begin{eqnarray*}
           {\mathsf{U}}_{2}\left( \bm{\sigma}_{-2}
                                                \diamond
                                                \ell_{j+3}
                                       \right)
& = & \sigma_{1}\left( \left\{ \ell_{j},
                                                \overline{\ell}_{j},
                                                \ell_{j+1},
                                                \overline{\ell}_{j+1}
                                     \right\}
                           \right)
          \cdot
          \prod_{i \in [r] \setminus [2]}
              \sigma_{i}({\mathsf{\delta}})\, .
\end{eqnarray*}
}
Consider now player $2$
switching to the variable
${\mathsf{v}}_{j,k}$.
By the utility functions
(Case {\sf (5/b)}),
player $2$
gets utility $1$
if and only if
player $1$ chooses a literal from
$\left\{ \ell_{j},
             \overline{\ell}_{j},
             \ell_{k},
             \overline{\ell}_{k}
  \right\}$
and all non-special players
choose ${\mathsf{\delta}}$.
Hence,
{
\small
\begin{eqnarray*}
           {\mathsf{U}}_{2}\left( \bm{\sigma}_{-2}
                                                \diamond
                                                {\mathsf{v}}_{j,k}
                                       \right)
& = & \sigma_{1}\left( \left\{ \ell_{j},
                                                \overline{\ell}_{j},
                                                \ell_{k},
                                                \overline{\ell}_{k}
                                     \right\}
                           \right)
          \cdot
          \prod_{i \in [r] \setminus [2]}
              \sigma_{i}({\mathsf{\delta}})\, .
\end{eqnarray*}
}
Since $\ell_{j+3}$ is a best-response for player $2$
to $\bm{\sigma}_{-2}$,
it follows that
{
\small
\begin{eqnarray*}
         \sigma_{1}\left( \left\{ \ell_{j},
                                                \overline{\ell}_{j},
                                                \ell_{j+1},
                                                \overline{\ell}_{j+1}
                                     \right\}
                           \right)
          \cdot
          \prod_{i \in [r] \setminus [2]}
              \sigma_{i}({\mathsf{\delta}})
&\geq& \sigma_{1}\left( \left\{ \ell_{j},
                                                \overline{\ell}_{j},
                                                \ell_{k},
                                                \overline{\ell}_{k}
                                     \right\}
                           \right)
          \cdot
          \prod_{i \in [r] \setminus [2]}
              \sigma_{i}({\mathsf{\delta}}).
\end{eqnarray*}
}
By Lemma~\ref{f is PNE}
(Condition {\sf (2)}),
$\prod_{i \in [r] \setminus [2]}
      \sigma_{i}({\mathsf{\delta}})
  \neq
  0$;
thus,  
it follows that
{
\small
\begin{eqnarray*}
\sigma_{1}\left( \left\{ \ell_{j},
                                        \overline{\ell}_{j},
                                        \ell_{j+1},
                                        \overline{\ell}_{j+1}
                            \right\}
                   \right)
&  \geq &
  \sigma_{1}\left( \left\{ \ell_{j},
                                       \overline{\ell}_{j},
                                       \ell_{k},
                                       \overline{\ell}_{k}
                             \right\}
                   \right)\, ,
\end{eqnarray*}
}                   
which implies that
{
\small
\begin{eqnarray*}
\sigma_{1}\left( \left\{  \ell_{j+1},
                                        \overline{\ell}_{j+1}
                            \right\}
                   \right)
&  \geq &
  \sigma_{1}\left( \left\{ \ell_{k},
                                       \overline{\ell}_{k}
                             \right\}
                   \right)\, .
\end{eqnarray*}
}                   
A contradiction.

\item 
Assume now
that there is no index 
$k \in I_{n} 
         \setminus
         \{ j, j+1 \}$ 
with
{
\small
\begin{eqnarray*} 
\sigma_{1} \left( \left\{ \ell_{k},
                                        \overline{\ell}_{k}
                             \right\}
                    \right)
&  > &
 \sigma_{1}\left( \left\{ \ell_{j+1},
                                      \overline{\ell}_{j+1}
                            \right\}
                  \right)\, .
\end{eqnarray*}
}                  
It follows that for each index
$k \in I_{n}
         \setminus
         \{ j, j+1 \}$,
{
\small
\begin{eqnarray*}         
\sigma_{1} \left( \left\{ \ell_{k},
                                        \overline{\ell}_{k}
                             \right\}
                    \right)
& \leq &
 \sigma_{1} \left( \left\{ \ell_{j+1},
                                        \overline{\ell}_{j+1}
                             \right\}
                    \right)\, .
\end{eqnarray*}
}
\noindent                    
Choose $k := j+2$;
so,
{
\small
\begin{eqnarray*}
\sigma_{1} \left( \left\{ \ell_{j+2},
                                        \overline{\ell}_{j+2}
                             \right\}
                    \right)
& \leq &
 \sigma_{1} \left( \left\{ \ell_{j+1},
                                        \overline{\ell}_{j+1}
                             \right\}
                    \right)\, .
\end{eqnarray*}
}
\noindent                    
Consider first player $2$
switching to $\ell_{j+4}$.
By the utility functions
(Cases {\sf (3)} and {\sf (4)}),
player $2$ gets utility $1$
if and only if
player $1$ chooses a literal from
$\left\{ \ell_{j+1},
             \overline{\ell}_{j+1},
             \ell_{j+2},
             \overline{\ell}_{j+2}
  \right\}$
and all non-special players
choose ${\mathsf{\delta}}$.
Hence,
{
\small
\begin{eqnarray*}
           {\mathsf{U}}_{2}\left( \bm{\sigma}_{-2}
                                                \diamond
                                                \ell_{j+4}
                                       \right)
& = & \sigma_{1}\left( \left\{ \ell_{j+1},
                                                \overline{\ell}_{j+1},
                                                \ell_{j+2},
                                                \overline{\ell}_{j+2}
                                     \right\}
                           \right)
          \cdot
          \prod_{i \in [r] \setminus [2]}
              \sigma_{i}({\mathsf{\delta}})\, .
\end{eqnarray*}
}
\noindent
Consider now player $2$
switching to the variable
${\mathsf{v}}_{j,j+1}$.
By the utility functions
(Case {\sf (5)}),
player $2$
gets utility $1$
if and only if
player $1$ chooses a literal from
$\left\{ \ell_{j},
             \overline{\ell}_{j},
             \ell_{j+1},
             \overline{\ell}_{j+1}
  \right\}$.
and all non-special players
choose ${\mathsf{\delta}}$.
Hence,
{
\small
\begin{eqnarray*}
           {\mathsf{U}}_{2}\left( \bm{\sigma}_{-2}
                                                \diamond
                                                {\mathsf{v}}_{j,j+1}
                                       \right)
& = & \sigma_{1}\left( \left\{ \ell_{j},
                                                \overline{\ell}_{j},
                                                \ell_{j+1},
                                                \overline{\ell}_{j+1}
                                     \right\}
                           \right)
          \cdot
          \prod_{i \in [r] \setminus [2]}
              \sigma_{i}({\mathsf{\delta}})\, .
\end{eqnarray*}
}
Since $\ell_{j}$ is a best-response
for player $1$ to $\bm{\sigma}_{-1}$,
it follows that
{
\small
\begin{eqnarray*}
         \sigma_{1}\left( \left\{ \ell_{j+1},
                                                \overline{\ell}_{j+1},
                                                \ell_{j+2},
                                                \overline{\ell}_{j+2}
                                     \right\}
                           \right)
          \cdot
          \prod_{i \in [r] \setminus [2]}
              \sigma_{i}({\mathsf{\delta}})
&\geq& \sigma_{1}\left( \left\{ \ell_{j},
                                                \overline{\ell}_{j},
                                                \ell_{J+1},
                                                \overline{\ell}_{j+1}
                                     \right\}
                           \right)
          \cdot
          \prod_{i \in [r] \setminus [2]}
              \sigma_{i}({\mathsf{\delta}}).
\end{eqnarray*}
}
By Lemma~\ref{f is PNE}
(Condition {\sf (2)}),
$\prod_{i \in [r] \setminus [2]}
      \sigma_{i}({\mathsf{\delta}})
  \neq
  0$;
hence,  
it follows that
{
\small
\begin{eqnarray*}
\sigma_{1}\left( \left\{ \ell_{j+1},
                                        \overline{\ell}_{+1j},
                                        \ell_{j+2},
                                        \overline{\ell}_{j+2}
                            \right\}
                   \right)
&  \geq &
  \sigma_{1}\left( \left\{ \ell_{j},
                                       \overline{\ell}_{j},
                                       \ell_{j+1},
                                       \overline{\ell}_{j+1}
                             \right\}
                   \right)\, ,
\end{eqnarray*}
}                   
which implies that
{
\small
\begin{eqnarray*}
\sigma_{1}\left( \left\{  \ell_{j+2},
                                        \overline{\ell}_{j+2}
                            \right\}
                   \right)
& \geq &
  \sigma_{1}\left( \left\{ \ell_{j},
                                       \overline{\ell}_{j}
                             \right\}
                   \right)\, .
\end{eqnarray*}
}
\noindent                   
It follows that
{
\small
\begin{eqnarray*}
\sigma_{1}\left( \left\{  \ell_{j+1},
                                        \overline{\ell}_{j+1}
                            \right\}
                   \right)
&  \geq  &
  \sigma_{1}\left( \left\{ \ell_{j},
                                       \overline{\ell}_{j}
                             \right\}
                   \right)\, .
\end{eqnarray*}
}                   
A contradiction.

\end{enumerate}
\noindent
We use a corresponding argument
to establish the claim
when
$i = 1$.
\end{proof}

\noindent
We prove that in a Nash equilibrium 
from ${\mathcal{NE}}({\mathsf{G}})
    \setminus
    {\mathcal{NE}}({\widehat{\mathsf{G}}})$, 
each non-special player
chooses $\delta$.

\begin{lemma}
\label{delta is always played}
Fix a Nash equilibrium 
${\bm{\sigma}}
 \in
 {\mathcal{NE}}({\mathsf{G}})
 \setminus
 {\mathcal{NE}}({\widehat{{\mathsf{G}}}})$.
Then,
for each player
$i \in [r] \setminus [2]$, 
$\sigma_{i}({\mathsf{\delta}})
 =
 1$.
\end{lemma}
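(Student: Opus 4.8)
The plan is a proof by contradiction. Suppose some non-special player $i_{0}\in[r]\setminus[2]$ has $\sigma_{i_{0}}({\mathsf{\delta}})<1$, equivalently $\sigma_{i_{0}}({\widehat{{\mathsf{\Sigma}}}}_{i_{0}})>0$. First I would record two facts used throughout: (a) since $\bm{\sigma}\notin{\mathcal{NE}}({\widehat{{\mathsf{G}}}})$, Lemma~\ref{f is PNE} (Condition {\sf (C.2)}) forces ${\mathsf{Supp}}(\sigma_{i})\not\subseteq{\widehat{{\mathsf{\Sigma}}}}_{i}$ for \emph{every} player $i$, so in particular each non-special player plays ${\mathsf{\delta}}$ with positive probability; (b) by Lemma~\ref{literals are always played}, $\sigma_{1}({\mathsf{L}})\cdot\sigma_{2}({\mathsf{L}})>0$. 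The contradiction will be obtained by exhibiting a profitable unilateral deviation, and the argument splits according to whether $r\geq 4$ or $r=3$. The recurring reason a deviation exists is that, once literals sit in a special player's support, \emph{choosing a gadget strategy throws the profile into Case {\sf (8)} or {\sf (9)} of Figure~\ref{table reduction}}, which is worth utility $1$ to every player choosing a gadget strategy, more than ${\mathsf{\delta}}$ typically delivers.

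For $r\geq 4$ there is a second non-special player $j\neq i_{0}$, which by (a) has ${\mathsf{\delta}}\in{\mathsf{Supp}}(\sigma_{j})$, so ${\mathsf{U}}_{j}(\bm{\sigma})={\mathsf{U}}_{j}(\bm{\sigma}_{-j}\diamond{\mathsf{\delta}})$ by Lemma~\ref{basic property of mixed nash equilibria}. Inspecting Figure~\ref{table reduction}, a profile in which $j$ plays ${\mathsf{\delta}}$ gives $j$ utility $1$ only in Cases {\sf (1)}--{\sf (6)}, each of which requires all non-special players to play ${\mathsf{\delta}}$; hence ${\mathsf{U}}_{j}(\bm{\sigma}_{-j}\diamond{\mathsf{\delta}})\leq{\mathbb{P}}_{\bm{\sigma}}\!\left[\text{all non-special play }{\mathsf{\delta}}\right]$. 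Conversely, for any gadget strategy $t\in{\widehat{{\mathsf{\Sigma}}}}_{j}$ (nonempty since ${\widehat{{\mathsf{G}}}}$ is a game), whenever $j$ plays $t$ and some player other than $j$ plays a non-gadget strategy, the profile lies in Case {\sf (8)} or {\sf (9)} with $j\in{\widehat{{\mathcal{P}}}}$, so $j$ gets utility $1$; hence ${\mathsf{U}}_{j}(\bm{\sigma}_{-j}\diamond t)\geq{\mathbb{P}}_{\bm{\sigma}}\!\left[\text{some player}\neq j\text{ plays non-gadget}\right]$. The events ``all players $\neq j$ play gadget'' and ``all non-special players play ${\mathsf{\delta}}$'' are disjoint (the first forces $i_{0}$ to play a gadget strategy, the second forces $i_{0}$ to play ${\mathsf{\delta}}$), while the positive-probability event ``player $1$ plays a literal and $i_{0}$ plays a gadget strategy'' lies in neither; so the two probabilities sum to strictly less than $1$, which yields ${\mathsf{U}}_{j}(\bm{\sigma}_{-j}\diamond t)>{\mathsf{U}}_{j}(\bm{\sigma})$, contradicting that $\bm{\sigma}$ is a Nash equilibrium.

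For $r=3$ the unique non-special player is $i_{0}=3$, with $\sigma_{3}({\widehat{{\mathsf{\Sigma}}}}_{3})>0$ and (by (a)) $\sigma_{3}({\mathsf{\delta}})>0$. Here I would fix any $s_{1}\in{\widehat{{\mathsf{\Sigma}}}}_{1}$ and any literal $\ell\in{\mathsf{Supp}}(\sigma_{1})\cap{\mathsf{L}}$ (nonempty by (b)), and show that player $1$ strictly prefers $s_{1}$ to $\ell$, contradicting that $\ell$ is a best-response. Partition the choices $(s_{2},s_{3})$ of the other two players into the four events determined by whether $s_{2}$ is a gadget strategy and whether $s_{3}={\mathsf{\delta}}$, and compare, event by event, player $1$'s conditional utility at $s_{1}$ versus $\ell$ by reading off Figure~\ref{table reduction}. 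On ``$s_{2}$ non-gadget, $s_{3}$ gadget'' the profile at $s_{1}$ is in Case {\sf (8)} (utility $1$) while at $\ell$ it is in Case {\sf (9)} (utility $0$), a \emph{strict} gain of probability $\sigma_{2}({\mathsf{L}}\cup{\mathcal{C}}\cup{\mathsf{V}})\cdot\sigma_{3}({\widehat{{\mathsf{\Sigma}}}}_{3})>0$. On ``$s_{2}$ gadget, $s_{3}={\mathsf{\delta}}$'' the profile at $s_{1}$ is in Case {\sf (8)} while at $\ell$ it is in Case {\sf (10)}/{\sf (11)} with utility $0$, and on ``$s_{2}$ gadget, $s_{3}$ gadget'' the profile at $s_{1}$ is in Case {\sf (7)} with nonnegative utility while at $\ell$ it is in Case {\sf (8)} with utility $0$: both give nonnegative gains. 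The only potentially adverse event is ``$s_{2}$ non-gadget, $s_{3}={\mathsf{\delta}}$''; there the literals of $s_{2}$ that make $\ell$ pay off have total $\sigma_{2}$-probability at most $2\sigma_{2}({\mathsf{L}})/n$ (evaluated via Lemma~\ref{same probability on pairs of literals}), whereas the choices of $s_{2}$ of index $0$ or $1$ together with all of ${\mathcal{C}}\cup{\mathsf{V}}$ make $s_{1}$ pay off, so the net gain on this event is at least $\sigma_{3}({\mathsf{\delta}})\cdot\sigma_{2}({\mathcal{C}}\cup{\mathsf{V}})\geq 0$. Summing the four events gives ${\mathsf{U}}_{1}(\bm{\sigma}_{-1}\diamond s_{1})>{\mathsf{U}}_{1}(\bm{\sigma}_{-1}\diamond\ell)$, the desired contradiction.

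I expect the case $r=3$ to be the main obstacle: with no ``spare'' non-special player available, one must instead exploit a deviation of a \emph{special} player to a gadget strategy, and the sign of the net gain over the adverse event ``$s_{2}$ non-gadget, $s_{3}={\mathsf{\delta}}$'' is not a priori clear. Pinning it down is precisely where the uniform-on-pairs structure of Lemma~\ref{same probability on pairs of literals} is used, and where one benefits from ${\mathsf{\phi}}$ being a {\sf 3SAT} formula; the case-by-case bookkeeping over the eleven rows of Figure~\ref{table reduction} is routine but needs to be carried out carefully.
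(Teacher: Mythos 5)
Your proof is correct, but it is organized differently from the paper's. The paper gives a single unified argument for all $r$: it takes a literal $\ell_{j}\in{\mathsf{Supp}}(\sigma_{1})$ (Lemma~\ref{literals are always played}) and compares player $1$'s deviations to $\ell_{j}$ and to a gadget strategy $s\in{\widehat{{\mathsf{\Sigma}}}}_{1}$ directly: since a literal can pay off for player $1$ only via Case~{\sf (2)}, which requires \emph{every} non-special player to play ${\mathsf{\delta}}$, the payoff from $\ell_{j}$ carries the factor $\prod_{i'}\sigma_{i'}({\mathsf{\delta}})$, hence is at most $\sigma_{i}({\mathsf{\delta}})\cdot 2\sigma_{2}({\mathsf{L}})/n<2\sigma_{2}({\mathsf{L}})/n$ by Lemma~\ref{same probability on pairs of literals} and the assumption $\sigma_{i}({\mathsf{\delta}})<1$, while the gadget deviation earns at least $\sigma_{2}(\{\ell_{0},\overline{\ell}_{0},\ell_{1},\overline{\ell}_{1}\})=2\sigma_{2}({\mathsf{L}})/n$ via Case~{\sf (10)} (or {\sf (8)}), with no such penalty. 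Your $r=3$ branch is essentially this same deviation, just bookkept event by event, and your identification of the adverse event and its resolution via Lemma~\ref{same probability on pairs of literals} matches the paper's accounting. Your $r\geq 4$ branch is genuinely different: you let a \emph{second} non-special player $j$ deviate from ${\mathsf{\delta}}$ to a gadget strategy and conclude by a purely qualitative disjointness-of-events argument, which notably does not need Lemma~\ref{same probability on pairs of literals} at all — a nice economy. The trade-off is that your case split is unnecessary: the literal payoffs of the special players require \emph{all} non-special players to choose ${\mathsf{\delta}}$, so the player-$1$ deviation argument (yours for $r=3$, the paper's in general) already works uniformly for every $r\geq 3$, making the paper's proof shorter; conversely, your $r\geq 4$ branch shows the conclusion for those $r$ under weaker quantitative information about $\bm{\sigma}$.
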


\begin{proof}
Assume,
by way of contradiction,
that there is a player
$i \in [r] \setminus [2]$
with $\sigma_{i} (\delta) <1$.
Since 
${\mathsf{\Sigma}}_{i}
  =
  \widehat{{\mathsf{\Sigma}}}_{i} \cup \{ {\mathsf{\delta}} \}$,
this implies that
$\sigma_{i} (\widehat{\mathsf{\Sigma}}_{i})
 >
 0$.
Lemma~\ref{f is PNE}
(Condition {\sf (2)})
implies that
$\sigma_{i} (\widehat{\mathsf{\Sigma}}_{i})
 <
 1$.
Since
${\mathsf{\Sigma}}_{i}
  =
  {\widehat{{\mathsf{\Sigma}}}}_{i}
   \cup
   \{ {\mathsf{\delta}} \}$,
it follows that
$\sigma_{i} (\delta)
 >
 0$.

By Lemma~\ref{literals are always played},
there is a literal
$\ell_{j}
 \in
 {\mathsf{Supp}}(\sigma_{1})$.
Consider player $1$
switching to $\ell_{j}$.
By the utility functions,
player $1$ gets utility $1$
if and only if
player $2$ chooses a literal
from $\{ \ell_{j},
         \ell_{j+1},
         \overline{\ell}_{j+1}
      \}$
and each non-special player chooses ${\mathsf{\delta}}$
(Cases {\sf (1)}, {\sf (2)} and {\sf (4)}).
Hence,
{
\small
\begin{eqnarray*}
    \lefteqn{{\mathsf{U}}_{1}\left( {\bm{\sigma}}_{-1}
                                    \diamond
                                    \ell_{j}
                             \right)}                                               \\          
= & \prod_{i' \notin [2] \cup \{ i \}}
       \sigma_{i'} ({\mathsf{\delta}})
    \cdot
    \sigma_{i}({\mathsf{\delta}})
    \cdot
    \sigma_{2}\left( \left\{ \ell_{j},
                             \ell_{j+1},
                             \overline{\ell}_{j+1}
                     \right\}
              \right)
  &                                                                                \\
\leq& \sigma_{i}({\mathsf{\delta}})
      \cdot
      2
      \cdot
      \frac{\textstyle \sigma_{2}({\mathsf{L}})}
           {\textstyle n}
    & \mbox{(by Lemma~\ref{same probability on pairs of literals})}               \\
<   & 2
      \cdot
      \frac{\textstyle \sigma_{2}({\mathsf{L}})}
           {\textstyle n}                  
    & \mbox{(since $\sigma_{i}({\mathsf{\delta}}) < 1$)\, .}                                              
\end{eqnarray*}
}

Consider now player $1$
switching to a strategy
$s 
 \in
 \widehat{{\mathsf{\Sigma}}}_{1}$.
By the utility functions,
player $1$ gets utility $1$
if player $2$
chooses a literal from
$\{ \ell_{0},
    \overline{\ell}_{0},
    \ell_{1},
    \overline{\ell}_{1}
 \}$
(Case {\sf (10)}). 
Then,
by Lemma~\ref{same probability on pairs of literals}, 
{
\small
\begin{eqnarray*} 
         {\sf U}_1({\bm\sigma}_{-1}\diamond s)
& \geq & 2
         \cdot
         \frac{\textstyle \sigma_{2}({\mathsf{L}})}
              {\textstyle n}\, .
\end{eqnarray*}
} 
By Lemma~\ref{literals are always played},
$\sigma_{2}({\mathsf{L}})
 >
 0$. 
This implies that
{
\small
\begin{eqnarray*}
      {\mathsf{U}}_{1}
               ({\bm{\sigma}}_{-1}
                \diamond
                s)
& > & {\mathsf{U}}_{1}
               ({\bm{\sigma}}_{-1}
                \diamond
                \ell_{j})\, .
\end{eqnarray*}
}                
Since
$\bm{\sigma}$ is a Nash equilibrium
and
$\ell_{j}
 \in
 {\mathsf{Supp}}(\sigma_{1})$,
Lemma~\ref{basic property of mixed nash equilibria}
(Condition {\sf (2)})
implies that
{
\small
\begin{eqnarray*}
      {\mathsf{U}}_{1}
               ({\bm{\sigma}}_{-1}
                \diamond
                s)
&\leq& {\mathsf{U}}_{1}
               ({\bm{\sigma}}_{-1}
                \diamond
                \ell_{j})\, .
\end{eqnarray*}     
}
A contradiction.
\end{proof}

\noindent
Lemma~\ref{delta is always played}
implies that
for a Nash equilibrium
${\bm{\sigma}}
  \in  
 {\mathcal{NE}}({\sf G})
 \setminus
 {\mathcal{NE}}({\widehat{{\mathsf{G}}}})$,
the utility of each special player $i \in [2]$
in either ${\bm{\sigma}}$
or ${\bm{\sigma}}_{-i} \diamond s$,
where $s \in {\mathsf{\Sigma}}_{i}$,
is solely determined by the strategies
chosen by the two special players. 
We prove that
each special player
is exclusively playing literals.

\begin{lemma}[Special Players Only Play Literals]
\label{only literals are played}
Fix a Nash equilibrium
${\bm\sigma}
 \in
 {\cal NE}({\mathsf{G}})
 \setminus
 {\cal NE}({\widehat{\mathsf{G}}})$
and a special player $i \in [2]$.
Then,
$\sigma_{i}({\mathsf{L}})
 =
 1$.
\end{lemma}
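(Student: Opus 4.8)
The plan is to prove the stronger statement that $\sigma_{1}({\mathsf{L}}) = \sigma_{2}({\mathsf{L}}) = 1$, which covers both choices of the special player $i$; the two halves are mirror images of each other under interchanging the two special players. Throughout I would use, without further mention: Lemma~\ref{literals are always played}, so $\sigma_{1}({\mathsf{L}})\cdot\sigma_{2}({\mathsf{L}})>0$ and each special player has a literal in its support; Lemma~\ref{delta is always played}, so every non-special player plays ${\mathsf{\delta}}$ with probability $1$, whence the utility of a special player in ${\bm{\sigma}}$ or in any ${\bm{\sigma}}_{-i}\diamond s$ depends only on the strategies of the two special players; and Lemma~\ref{same probability on pairs of literals}, so $\sigma_{i}(\{\ell_{j},\overline{\ell}_{j}\})=\sigma_{i}({\mathsf{L}})/n$ for each special player $i\in[2]$ and index $j\in I_{n}$.

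The first step is to pin down the equilibrium utilities: I claim ${\mathsf{U}}_{1}({\bm{\sigma}})=\frac{2\sigma_{2}({\mathsf{L}})}{n}$ and ${\mathsf{U}}_{2}({\bm{\sigma}})=\frac{2\sigma_{1}({\mathsf{L}})}{n}$. For the upper bound, fix $\ell^{\ast}\in{\mathsf{Supp}}(\sigma_{1})\cap{\mathsf{L}}$; a case check on Figure~\ref{table reduction} (using Lemma~\ref{delta is always played}) shows that ${\mathsf{U}}_{1}({\bm{\sigma}}_{-1}\diamond\ell^{\ast})=\sigma_{2}(\{\ell^{\ast}\})+\sigma_{2}(\{\ell_{I(\ell^{\ast})+1},\overline{\ell}_{I(\ell^{\ast})+1}\})$, which by Lemma~\ref{same probability on pairs of literals} is at most $\frac{2\sigma_{2}({\mathsf{L}})}{n}$; since $\ell^{\ast}\in{\mathsf{Supp}}(\sigma_{1})$, Lemma~\ref{basic property of mixed nash equilibria} (Condition~{\sf (1)}) gives ${\mathsf{U}}_{1}({\bm{\sigma}})\le\frac{2\sigma_{2}({\mathsf{L}})}{n}$. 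For the matching lower bound, pick any pair variable ${\mathsf{v}}_{ab}$ (possible since $n\ge 4$): by Case~{\sf (5/a)}, player $1$ wins under ${\bm{\sigma}}_{-1}\diamond{\mathsf{v}}_{ab}$ exactly when player $2$ chooses a literal of index $a$ or $b$, so ${\mathsf{U}}_{1}({\bm{\sigma}}_{-1}\diamond{\mathsf{v}}_{ab})=\sigma_{2}(\{\ell_{a},\overline{\ell}_{a},\ell_{b},\overline{\ell}_{b}\})=\frac{2\sigma_{2}({\mathsf{L}})}{n}$, and ${\mathsf{U}}_{1}({\bm{\sigma}})\ge{\mathsf{U}}_{1}({\bm{\sigma}}_{-1}\diamond{\mathsf{v}}_{ab})$ as ${\bm{\sigma}}$ is a Nash equilibrium. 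The computation for player $2$ is analogous, using Case~{\sf (3)} for a literal in ${\mathsf{Supp}}(\sigma_{2})$ (note that Case~{\sf (6/a)} gives player $2$ utility $0$, so clauses played by player $1$ do not contribute) and Case~{\sf (5/b)} for the pair-variable deviation.

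The second step derives $\sigma_{2}({\mathsf{L}})=1$; the derivation of $\sigma_{1}({\mathsf{L}})=1$ is symmetric (interchanging players $1$ and $2$). Fix an arbitrary $u'\in{\widehat{{\mathsf{\Sigma}}}}_{1}$ and consider player $1$ deviating to $u'$. Inspecting the utility functions (Case~{\sf (10)} with $i=1$ when player $2$ plays a literal of index $0$ or $1$, a clause, or a pair variable; Case~{\sf (11)} when player $2$ plays a literal of index $\ge 2$; and Case~{\sf (8)} if $r\ge 3$, respectively Case~{\sf (7)} if $r=2$, when player $2$ also plays a gadget strategy), one gets, with Lemma~\ref{same probability on pairs of literals},
\begin{eqnarray*}
{\mathsf{U}}_{1}({\bm{\sigma}}_{-1}\diamond u')
& = &
\frac{2\sigma_{2}({\mathsf{L}})}{n}+\sigma_{2}({\mathcal{C}})+\sigma_{2}({\mathsf{V}})+\sum_{u\in{\widehat{{\mathsf{\Sigma}}}}_{2}}\sigma_{2}(u)\cdot\xi_{u'}(u)\, ,
\end{eqnarray*}
where $\xi_{u'}(u)=1$ if $r\ge 3$ and $\xi_{u'}(u)={\widehat{{\mathsf{U}}}}_{1}(\langle u',u\rangle)\ge 0$ if $r=2$. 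Since ${\mathsf{U}}_{1}({\bm{\sigma}})=\frac{2\sigma_{2}({\mathsf{L}})}{n}$ and ${\bm{\sigma}}$ is a Nash equilibrium, every term above must vanish: $\sigma_{2}({\mathcal{C}})=\sigma_{2}({\mathsf{V}})=0$ and $\sum_{u}\sigma_{2}(u)\cdot\xi_{u'}(u)=0$ for each $u'\in{\widehat{{\mathsf{\Sigma}}}}_{1}$. If $\sigma_{2}({\widehat{{\mathsf{\Sigma}}}}_{2})>0$, fix $u^{\ast}\in{\mathsf{Supp}}(\sigma_{2})\cap{\widehat{{\mathsf{\Sigma}}}}_{2}$: for $r\ge 3$ the last sum is $\ge\sigma_{2}(u^{\ast})>0$ already; for $r=2$ the positive utility property of ${\widehat{{\mathsf{G}}}}$ yields $u'\in{\widehat{{\mathsf{\Sigma}}}}_{1}$ with ${\widehat{{\mathsf{U}}}}_{1}(\langle u',u^{\ast}\rangle)=1$, so the sum is again $\ge\sigma_{2}(u^{\ast})>0$ — a contradiction in both cases. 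Hence $\sigma_{2}({\widehat{{\mathsf{\Sigma}}}}_{2})=0$, so $\sigma_{2}({\mathsf{L}})=1-\sigma_{2}({\mathcal{C}})-\sigma_{2}({\mathsf{V}})-\sigma_{2}({\widehat{{\mathsf{\Sigma}}}}_{2})=1$. The mirror argument (player $2$ deviating to $u\in{\widehat{{\mathsf{\Sigma}}}}_{2}$, using ${\mathsf{U}}_{2}({\bm{\sigma}})=\frac{2\sigma_{1}({\mathsf{L}})}{n}$ and the positive utility property of ${\widehat{{\mathsf{G}}}}$ for player $2$) gives $\sigma_{1}({\mathsf{L}})=1$.

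The main obstacle is not conceptual but the bookkeeping in the utility-function case analysis: for every combination of strategy types of the two special players (a literal of a prescribed index, a clause, a pair variable, a gadget strategy) one must identify the unique applicable row of Figure~\ref{table reduction} and read off the payoff, uniformly for $r=2$ (two gadget strategies fall under Case~{\sf (7)}) and $r\ge 3$ (they fall under Case~{\sf (8)}). The one point needing genuine care is in the first step: the pair-variable deviation must give \emph{exactly} $\frac{2\sigma_{\overline{i}}({\mathsf{L}})}{n}$ — no less, by Lemma~\ref{same probability on pairs of literals}, and no more, since no other case of Figure~\ref{table reduction} contributes — so that it simultaneously lower-bounds ${\mathsf{U}}_{i}({\bm{\sigma}})$ while a literal in the support upper-bounds it by the same quantity, forcing equality.
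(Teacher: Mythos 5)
Your proof is correct, and its engine is the same as the paper's: a literal in the support of a special player yields utility at most $\frac{\textstyle 2\sigma_{\overline{i}}({\mathsf{L}})}{\textstyle n}$ (via Lemma~\ref{same probability on pairs of literals} and Lemma~\ref{delta is always played}), while a deviation to a gadget strategy collects, through Case {\sf (10)}, the full mass $\sigma_{\overline{i}}(\{\ell_{0},\overline{\ell}_{0},\ell_{1},\overline{\ell}_{1}\})=\frac{\textstyle 2\sigma_{\overline{i}}({\mathsf{L}})}{\textstyle n}$ plus whatever mass $\sigma_{\overline{i}}$ puts on ${\mathcal{C}}\cup{\mathsf{V}}\cup{\widehat{{\mathsf{\Sigma}}}}_{\overline{i}}$, forcing that extra mass to vanish; the paper runs exactly this comparison, splitting it into the two contradiction arguments for $\sigma_{i}({\mathcal{C}}\cup{\mathsf{V}})>0$ and $\sigma_{i}({\widehat{{\mathsf{\Sigma}}}}_{i})>0$. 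Two differences are worth noting. First, your preliminary step pinning down ${\mathsf{U}}_{i}({\bm{\sigma}})$ exactly via a pair-variable deviation is harmless but not needed: your own second step only uses the upper bound ${\mathsf{U}}_{1}({\bm{\sigma}})\leq\frac{\textstyle 2\sigma_{2}({\mathsf{L}})}{\textstyle n}$, which is what the paper works with. Second, for the gadget-strategy mass the paper simply cites Cases {\sf (8)} and {\sf (10)}, treating a gadget-versus-gadget profile as paying $1$ to the deviator; that reading is literally valid only for $r\geq 3$ (for $r=2$ the applicable row is Case {\sf (7)}, whose payoff is the gadget payoff and may be $0$ against the particular deviation strategy). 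You handle this subcase explicitly, choosing the deviation strategy via the positive utility property of ${\widehat{{\mathsf{G}}}}$ against a supported gadget strategy of the other special player, which restores strict positivity and the same contradiction. So your write-up is a slightly more careful rendering of the paper's argument rather than a different route, and it buys a cleanly uniform treatment of the bimatrix and multi-player cases.
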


\begin{proof}
By Lemma~\ref{literals are always played},
there is a literal
$\ell_{j}
 \in
 {\mathsf{Supp}}(\sigma_{\overline{i}})$.
Consider player $\overline{i}$
switching to $\ell_{j}$.
By the utility functions,
Lemma~\ref{delta is always played} 
implies that
player $\overline{i}$
gets utility $1$
if and only if
player $i$ chooses a literal from
$\left\{ \ell_{j}, 
             \ell_{j+1},
             \overline{\ell}_{j+1}
 \right\}$
(Cases {\sf (1)}, {\sf (2)} and {\sf (4)}, 
with $i=2$) 
(resp.,
player $i$
chooses a literal from
$\left\{ \ell_{j-2},
            \overline{\ell}_{j-2},
            \ell_{j-3},
            \overline{\ell}_{j-3}
 \right\}$
(Cases {\sf (3)} and {\sf (4)}, 
with $i=1$)).         
Hence,
for $i=2$,
{
\small
\begin{eqnarray*}
      {\mathsf{U}}_{\overline{i}}
               ({\bm{\sigma}}_{- \overline{i}}
                \diamond
                \ell_{j})   
& = & \sigma_{i}\left( \left\{ \ell_{j},
                                              \ell_{j+1},
                                              \overline{\ell}_{j+1}
                                   \right\}
                          \right)\, ,
\end{eqnarray*}
}
while for $i=1$,
{
\small
\begin{eqnarray*}
      {\mathsf{U}}_{\overline{i}}
               ({\bm{\sigma}}_{- \overline{i}}
                \diamond
                \ell_{j})   
& = & \sigma_{i}\left( \left\{ \ell_{j-2},
                                              \overline{\ell}_{j-2},
                                              \ell_{j-3},
                                              \overline{\ell}_{j-3}
                                   \right\}
                \right)\, ,
\end{eqnarray*}
}
By Lemma~\ref{same probability on pairs of literals},
these imply that
{
\small
\begin{eqnarray*}
         {\mathsf{U}}_{\overline{i}}
                  ({\bm{\sigma}}_{- \overline{i}}
                   \diamond
                   \ell_{j})   
& \leq & 2
         \cdot
         \frac{\textstyle \sigma_{i}({\mathsf{L}})}
              {\textstyle n}\, . 
\end{eqnarray*}
}
Consider now player $\overline{i}$
switching to a strategy
$s 
 \in
 \widehat{{\mathsf{\Sigma}}}_{\overline{i}}$.

Assume,
by way of contradiction,
that
$\sigma_{i}
 \left( {\mathcal{C}}
          \cup
          {\mathsf{V}}
 \right)
 >
 0$.
By the utility functions,
player $\overline{i}$
gets utility $1$
if player $i$ chooses a strategy
$s
 \in
 \left\{ \ell_{0}, 
            \overline{\ell}_{0},
            \ell_{1},
            \overline{\ell}_{1}
 \right\}
 \cup
 {\mathcal{C}}
 \cup
 {\mathsf{V}}$
(Case {\sf (10)}). 
Then,
{
\small
\begin{eqnarray*} 
         {\sf U}_{\overline{i}}
              ({\bm\sigma}_{-i}\diamond s)
\geq & \sigma_{i}\left( \left\{ \ell_{0}, 
                                               \overline{\ell}_{0},
                                               \ell_{1},
                                               \overline{\ell}_{1}
                                     \right\}
                          \right)       
       +
       \sigma_{i}\left( {\mathcal{C}}
                                 \cup
                                {\mathsf{V}}          
                       \right)
     &                                                                    \\
>    & \sigma_{i}\left( \left\{ \ell_{0}, 
                                              \overline{\ell}_{0},
                                              \ell_{1},
                                              \overline{\ell}_{1}
                                   \right\}
                 \right)   
     & \mbox{(since $\sigma_{i}\left( {\mathcal{C}},
                                      {\mathsf{V}} 
                               \right)
                     >
                     0$)}                                                \\             
\geq & 2
       \cdot
       \frac{\textstyle \sigma_{2}({\mathsf{L}})}
            {\textstyle n}\, .
     & \mbox{(by Lemma~\ref{same probability on pairs of literals})}\, .
\end{eqnarray*}
} 
By Lemma~\ref{literals are always played},
$\sigma_{2}({\mathsf{L}})
 >
 0$. 
This implies that
{
\small
\begin{eqnarray*}
      {\mathsf{U}}_{\overline{i}}
               ({\bm{\sigma}}_{- \overline{i}}
                \diamond
                s)
& > & {\mathsf{U}}_{\overline{i}}
               ({\bm{\sigma}}_{- \overline{i}}
                \diamond
                \ell_{j})\, .
\end{eqnarray*}
}                
Since
$\bm{\sigma}$ is a Nash equilibrium
and
$\ell_{j}
 \in
 {\mathsf{Supp}}(\sigma_{\overline{i}})$,
Lemma~\ref{basic property of mixed nash equilibria}
(Condition {\sf (2)})
implies that
{
\small
\begin{eqnarray*}
      {\mathsf{U}}_{\overline{i}}
               ({\bm{\sigma}}_{- \overline{i}}
                \diamond
                s)
&\leq& {\mathsf{U}}_{\overline{i}}
               ({\bm{\sigma}}_{- \overline{i}}
                \diamond
                \ell_{j})\, .
\end{eqnarray*}     
}
A contradiction.     
It follows that
$\sigma_{i}
 \left( {\mathcal{C}}
          \cup
         {\mathsf{V}}
 \right)
 =
 0$.

Assume, 
by way of contradiction, 
that 
$\sigma_{i}
 (\widehat{{\mathsf{\Sigma}}}_{i})
 >
 0$.
By the utility functions,
player $\overline{i}$
gets utility $1$
if player $i$ chooses a strategy
$s
 \in
 \{ \ell_{0}, 
    \overline{\ell}_{0},
    \ell_{1},
    \overline{\ell}_{1}
 \}
 \cup
 {\widehat{{\mathsf{\Sigma}}}}_{i}$
(Cases {\sf (8)} and {\sf (10)}). 
Then,
{
\small
\begin{eqnarray*} 
         {\sf U}_{\overline{i}}
              ({\bm\sigma}_{-i}\diamond s)
\geq & \sigma_{i}\left( \left\{ \ell_{0}, 
                                \overline{\ell}_{0},
                                \ell_{1},
                                \overline{\ell}_{1}
                        \right\}
                 \right)       
       +
       \sigma_{i}\left( {\widehat{{\mathsf{\Sigma}}}_{i}}          
                       \right)
     &                                                                    \\
>    & \sigma_{i}\left( \left\{ \ell_{0}, 
                                \overline{\ell}_{0},
                                \ell_{1},
                                \overline{\ell}_{1}
                        \right\}
                 \right)   
     & \mbox{(since $\sigma_{i}\left( {\widehat{{\mathsf{\Sigma}}}}_{i} 
                                                \right)
                     >
                     0$)}                                                \\             
\geq & 2
       \cdot
       \frac{\textstyle \sigma_{2}({\mathsf{L}})}
            {\textstyle n}\, .
     & \mbox{(by Lemma~\ref{same probability on pairs of literals})}\, .
\end{eqnarray*}
} 
By Lemma~\ref{literals are always played},
$\sigma_{2}({\mathsf{L}})
 >
 0$. 
This implies that
{
\small
\begin{eqnarray*}
      {\mathsf{U}}_{\overline{i}}
               ({\bm{\sigma}}_{- \overline{i}}
                \diamond
                s)
& > & {\mathsf{U}}_{\overline{i}}
               ({\bm{\sigma}}_{- \overline{i}}
                \diamond
                \ell_{j})\, .
\end{eqnarray*}
}                
Since
$\bm{\sigma}$ is a Nash equilibrium
and
$\ell_{j}
 \in
 {\mathsf{Supp}}(\sigma_{\overline{i}})$,
Lemma~\ref{basic property of mixed nash equilibria}
(Condition {\sf (2)})
implies that
{
\small
\begin{eqnarray*}
      {\mathsf{U}}_{\overline{i}}
               ({\bm{\sigma}}_{- \overline{i}}
                \diamond
                s)
&\leq& {\mathsf{U}}_{\overline{i}}
               ({\bm{\sigma}}_{- \overline{i}}
                \diamond
                \ell_{j})\, .
\end{eqnarray*}     
}
A contradiction.     
It \textcolor{black}{follows} that
$\sigma_{i}
 (\widehat{{\mathsf{\Sigma}}}_{i})
 =
 0$.

Since
$\sigma_{i}
 ({\mathcal{C}}
  \cup
  {\mathsf{V}})
 =
 \sigma_{i}
 (\widehat{{\mathsf{\Sigma}}}_{i})
 =
 0$,
it follows that
$\sigma_{i}
 ({\mathsf{L}})
 =
 1$,
and we are done.  
\end{proof}

\noindent
In view of Lemma~\ref{only literals are played},
a Nash equilibrium
${\bm\sigma}
\in
{\mathcal{NE}}({\sf G}) 
\setminus
{\mathcal{NE}}({\widehat{\sf G}})$
will be called a {\it literal equilibrium}.
The next property of such literal equilibria
is that
the two special players are playing literals
in a consistent way:
for each literal,
the two special players
are both playing
either the literal or its negation;
this will imply
that a literal equilibrium
induces an assignment
for ${\mathsf{\phi}}$.

\begin{lemma}
\label{no negations are played}
Fix a Nash equilibrium 
${\bm\sigma}
\in
{\cal NE}({\sf G}) 
\setminus
{\cal NE}({\widehat{\sf G}})$
and a literal $\ell \in {\mathsf{L}}$.
Then,
$\sigma_{1} (\ell)
 \cdot
 \sigma_{2}(\overline{\ell})
 =
 0$.
\end{lemma}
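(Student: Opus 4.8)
The plan is to argue by cases on $\sigma_1(\ell)$: if $\sigma_1(\ell)=0$ the product vanishes, so assume $\sigma_1(\ell)>0$ and deduce $\sigma_2(\overline\ell)=0$; write $j:=I(\ell)$. All the structure of a literal equilibrium is already at hand: by Lemma~\ref{only literals are played}, $\sigma_1({\mathsf{L}})=\sigma_2({\mathsf{L}})=1$; by Lemma~\ref{delta is always played}, every non-special player plays $\delta$ with probability~$1$; and by Lemma~\ref{same probability on pairs of literals}, each special player assigns probability exactly $\tfrac1n$ to every pair $\{\ell_k,\overline\ell_k\}$. Under these standing facts, a deviation of a special player to a literal or to a pair variable is matched only by profiles falling into Cases~{\sf (1)}--{\sf (5)} of Figure~\ref{table reduction}, so the induced conditional expected utility becomes a sum of $\sigma_2$-masses on complementary pairs (or on a single literal), hence immediately computable.

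First I would pin down the equilibrium payoff of player~$1$. Since $\ell\in{\mathsf{Supp}}(\sigma_1)$, Lemma~\ref{basic property of mixed nash equilibria} (Condition~{\sf (1)}) gives ${\mathsf{U}}_1(\bm{\sigma})={\mathsf{U}}_1(\bm{\sigma}_{-1}\diamond\ell)$. Inspecting Cases~{\sf (1)}--{\sf (4)}, player~$1$ playing $\ell$ against all non-special players choosing $\delta$ gets utility~$1$ exactly when player~$2$ chooses a literal from $\{\ell\}\cup\{\ell_{j+1},\overline\ell_{j+1}\}$, so, by Lemma~\ref{same probability on pairs of literals},
\[
{\mathsf{U}}_1(\bm{\sigma})
 =
 \sigma_2(\ell)+\sigma_2(\{\ell_{j+1},\overline\ell_{j+1}\})
 =
 \sigma_2(\ell)+\frac{1}{n}\, .
\]

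Next I would use a deviation of player~$1$ to a pair variable as a benchmark. Fix any index $i\in I_n$ with $i\neq j$, which exists since $n\geq 4$, and consider player~$1$ switching to ${\mathsf{v}}_{i,j}$. By Case~{\sf (5/a)} together with Lemma~\ref{delta is always played}, player~$1$ then obtains utility~$1$ exactly when player~$2$ plays a literal of index $i$ or $j$, whence ${\mathsf{U}}_1(\bm{\sigma}_{-1}\diamond{\mathsf{v}}_{i,j})=\sigma_2(\{\ell_i,\overline\ell_i\})+\sigma_2(\{\ell_j,\overline\ell_j\})=\tfrac2n$ by Lemma~\ref{same probability on pairs of literals}. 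As $\bm{\sigma}$ is a Nash equilibrium, ${\mathsf{U}}_1(\bm{\sigma})\geq{\mathsf{U}}_1(\bm{\sigma}_{-1}\diamond{\mathsf{v}}_{i,j})=\tfrac2n$. Combining this with the displayed identity yields $\sigma_2(\ell)\geq\tfrac1n$; but Lemma~\ref{same probability on pairs of literals} also gives $\sigma_2(\ell)+\sigma_2(\overline\ell)=\sigma_2(\{\ell,\overline\ell\})=\tfrac1n$, forcing $\sigma_2(\overline\ell)=0$, as required.

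No step is a serious obstacle: the argument collapses to a short computation once Lemmas~\ref{only literals are played}, \ref{delta is always played} and~\ref{same probability on pairs of literals} are in place. The only point needing care is the bookkeeping of which rows of the utility table in Figure~\ref{table reduction} are activated by each considered deviation under the literal-equilibrium hypotheses, together with the observation that ${\mathsf{v}}_{i,j}$ is exactly the right deviation to compare against: it secures player~$1$ the full mass $\tfrac2n$ of two complementary pairs, which is attainable only if the index-$j$ pair's mass at player~$2$ is concentrated entirely on $\ell$.
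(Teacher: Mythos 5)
Your proof is correct and follows essentially the same route as the paper's: both use Lemma~\ref{basic property of mixed nash equilibria} (Condition {\sf (1)}) at the supported literal to get ${\mathsf{U}}_{1}(\bm{\sigma})=\sigma_{2}(\ell)+\frac{1}{n}$ (via Lemmas~\ref{delta is always played}, \ref{only literals are played} and~\ref{same probability on pairs of literals}) and then compare against a deviation of player~$1$ guaranteeing $\frac{2}{n}$, which forces $\sigma_{2}(\overline{\ell})=0$. The only difference is the choice of benchmark deviation — the paper switches player~$1$ to a gadget strategy $s\in\widehat{{\mathsf{\Sigma}}}_{1}$ (Case {\sf (10)}, worth at least $\sigma_{2}(\{\ell_{0},\overline{\ell}_{0},\ell_{1},\overline{\ell}_{1}\})=\frac{2}{n}$), while you switch to the pair variable ${\mathsf{v}}_{i,j}$ (Case {\sf (5/a)}, worth exactly $\frac{2}{n}$) — and both choices are valid.
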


\begin{proof}
Lemmas~\ref{same probability on pairs of literals}
and~\ref{only literals are played}
imply together that
for each player $i\in [2]$
and for each literal
$\ell_{j}
 \in
 {\mathsf{L}}$,
$\sigma_{i}\left( \left\{ \ell_{j},
                                      \overline{\ell}_{j}
                           \right\}
                 \right)                     
 =
 \frac{\textstyle 1}
         {\textstyle n}$.
Assume, 
by way of contradiction,
that
$\sigma_{1}(\ell_{j})
 \cdot
 \sigma_{2}(\overline{\ell}_{j})
 >
 0$.
This implies that
$\sigma_{2}(\ell_{j})
 <
 \frac{\textstyle 1}
      {\textstyle n}$.
Hence,
by the utility functions
(Cases {\sf (1)}, {\sf (2)} and {\sf (4)}), 
{
\small
\begin{eqnarray*}
    \lefteqn{{\mathsf{U}}_{1}({\bm{\sigma}})}                                                   \\
= & {\mathsf{U}}_{1}({\bm{\sigma}}
                     \diamond
                     \ell_{j})
  & \mbox{(by Lemma~\ref{basic property of mixed nash equilibria} (Condition {\sf (1)}))}       \\
= & \sigma_{2}(\ell_{j}) 
    + 
    \sigma_{2}\left( \left\{ \ell_{j+1},
                             \overline{\ell}_{j+1}
                     \right\}
              \right)
  &                                                                                             \\              
< & \frac{\textstyle 2}
           {\textstyle n}\, .
  & 
\end{eqnarray*}
}
Consider now player $1$
switching to
a strategy 
$s
 \in
 \widehat{{\sf\Sigma}}_{1}$.
Then,
by the utility functions
(Case {\sf (10)}),
player $1$ gets utility $1$
if player $2$
chooses a literal from
$\left\{ \ell_{0},
         \overline{\ell}_{0},
         \ell_{1},
         \overline{\ell}_{1}
 \right\}$.
So,      
{
\small
\begin{eqnarray*}
         {\mathsf{U}}_{1}
                  \left( {\bm{\sigma}}_{-1}
                         \diamond
                         s
                  \right)
& \geq & \sigma_{2} \left( \left\{ \ell_{0},
                                                     \overline{\ell}_{0},
                                                     \ell_{1},
                                                    \overline{\ell}_{1}
                           \right\}        
                  \right)\ \
 =\ \ 
 \frac{\textstyle 2}
         {\textstyle n}\, .
\end{eqnarray*}
}
A contradiction.
\end{proof}

\noindent
We finally conclude:

\begin{corollary}
\label{nash is assignment}
Fix a Nash equilibrium 
${\bm\sigma}
\in
{\cal NE}({\sf G}) 
\setminus
{\cal NE}({\widehat{\sf G}})$.
Then,
the following conditions hold:
\begin{enumerate}

\item[{\sf (C.1)}]
$\bm{\sigma}$
induces an assignment
${\mathsf{\gamma}}(\bm{\sigma})$
for ${\mathsf{\phi}}$.

\item[{\sf (C.2)}]
For each player $i \in [2]$,
$\sigma_{i}$
is a uniform distribution
on ${\mathsf{\gamma}}(\bm{\sigma})$.

\item[{\sf (C.3)}]
${\mathsf{U}}_{1}({\bm{\sigma}})
 =
 {\mathsf{U}}_{2}({\bm{\sigma}})
 =
 \frac{\textstyle 2}
      {\textstyle n}$.

\end{enumerate}
\end{corollary}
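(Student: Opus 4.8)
The plan is to assemble Corollary~\ref{nash is assignment} directly from the structural lemmas already established in this section, since each of the three conditions is essentially a repackaging of what has been proven. Fix a Nash equilibrium ${\bm\sigma} \in {\mathcal{NE}}({\mathsf{G}}) \setminus {\mathcal{NE}}({\widehat{\mathsf{G}}})$ throughout.

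For Condition~{\sf (C.1)}, I would first invoke Lemma~\ref{only literals are played} to get that for each special player $i \in [2]$, $\sigma_{i}({\mathsf{L}}) = 1$; that is, each special player puts all her probability on literals. Combining this with Lemma~\ref{same probability on pairs of literals} gives $\sigma_{i}(\{\ell_{j}, \overline{\ell}_{j}\}) = \frac{1}{n}$ for every index $j \in I_{n}$ and every $i \in [2]$, so in particular each pair $\{\ell_{j}, \overline{\ell}_{j}\}$ receives strictly positive probability from each special player. Then Lemma~\ref{no negations are played} says $\sigma_{1}(\ell) \cdot \sigma_{2}(\overline{\ell}) = 0$ for every literal $\ell$: the two special players never play complementary literals. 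Putting these together, for each index $j$ exactly one of $\ell_{j}, \overline{\ell}_{j}$ lies in ${\mathsf{Supp}}(\sigma_{1})$ and the same one lies in ${\mathsf{Supp}}(\sigma_{2})$ --- if, say, $\ell_{j}$ and $\overline{\ell}_{j}$ were both in ${\mathsf{Supp}}(\sigma_{1})$, then since $\sigma_{2}(\{\ell_{j}, \overline{\ell}_{j}\}) > 0$ one of $\sigma_{2}(\ell_{j}), \sigma_{2}(\overline{\ell}_{j})$ is positive, contradicting Lemma~\ref{no negations are played} in one direction or the other. So define ${\mathsf{\gamma}}(\bm{\sigma})$ to be the assignment whose set of true literals is $\{\ell \in {\mathsf{L}} : \sigma_{1}(\ell) > 0\}$; the argument just given shows this is a well-defined assignment (one literal chosen per variable) and coincides with $\{\ell : \sigma_{2}(\ell) > 0\}$.

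Condition~{\sf (C.2)} is then immediate: by Lemma~\ref{same probability on pairs of literals} and the fact that exactly one literal from each pair is supported by $\sigma_{i}$, we get $\sigma_{i}(\ell) = \frac{1}{n}$ for each of the $n$ literals $\ell \in {\mathsf{\gamma}}(\bm{\sigma})$ and $\sigma_{i}(\overline{\ell}) = 0$ for the complements, so $\sigma_{i}$ is the uniform distribution on ${\mathsf{\gamma}}(\bm{\sigma})$. For Condition~{\sf (C.3)}, I would use Lemma~\ref{delta is always played} (each non-special player plays ${\mathsf{\delta}}$ with probability $1$) together with the utility functions of Figure~\ref{table reduction}: conditioned on all non-special players choosing ${\mathsf{\delta}}$, a computation of ${\mathsf{U}}_{1}({\bm{\sigma}}_{-1} \diamond \ell_{j})$ for a supported literal $\ell_{j}$ --- which equals ${\mathsf{U}}_{1}(\bm{\sigma})$ by Lemma~\ref{basic property of mixed nash equilibria} (Condition~{\sf (1)}) --- yields $\sigma_{2}(\{\ell_{j}, \ell_{j+1}, \overline{\ell}_{j+1}\})$ via Cases~{\sf (1)}, {\sf (2)}, {\sf (4)}. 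Since $\sigma_{2}$ is uniform on ${\mathsf{\gamma}}(\bm{\sigma})$ with $\sigma_{2}(\ell_{j}) = \frac{1}{n}$ (as $\ell_{j}$ is the true literal of variable $v_j$, it being in the common support) and $\sigma_{2}(\{\ell_{j+1}, \overline{\ell}_{j+1}\}) = \frac{1}{n}$, this gives ${\mathsf{U}}_{1}(\bm{\sigma}) = \frac{2}{n}$; the symmetric computation for player $2$ using Cases~{\sf (3)}, {\sf (4)} with the pair $\{\ell_{j-2}, \overline{\ell}_{j-2}, \ell_{j-3}, \overline{\ell}_{j-3}\}$ gives ${\mathsf{U}}_{2}(\bm{\sigma}) = \frac{2}{n}$ as well. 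The main obstacle, such as it is, is not conceptual but bookkeeping: one must be careful that the literal $\ell_{j}$ picked to compute player $1$'s utility is indeed the one in the support (not its complement) and that the three-element set $\{\ell_{j}, \ell_{j+1}, \overline{\ell}_{j+1}\}$ contributes exactly probability $\frac{1}{n} + \frac{1}{n}$ --- i.e.\ that $\ell_{j}$ itself contributes $\frac{1}{n}$ and the adjacent pair contributes $\frac{1}{n}$, with no overlap and no double counting. All of the genuine difficulty was already absorbed into Lemmas~\ref{same probability on pairs of literals}, \ref{only literals are played}, \ref{no negations are played} and~\ref{delta is always played}, so this corollary is a short synthesis.
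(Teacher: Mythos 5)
Your proposal is correct and follows essentially the same route as the paper: the paper derives {\sf (C.1)} from Lemma~\ref{no negations are played}, {\sf (C.2)} from {\sf (C.1)} together with Lemma~\ref{same probability on pairs of literals}, and {\sf (C.3)} from {\sf (C.2)} and the utility functions (Cases {\sf (2)}, {\sf (3)}, {\sf (4)}), exactly the synthesis you carry out. Your version merely spells out the support-coincidence argument and the expected-utility computation in more detail, which the paper leaves implicit.
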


\begin{proof}
Condition {\sf (C.1)}
follows from Lemma~\ref{no negations are played}.
Condition {\sf (C.2)}
follows from
Condition {\sf (C.1)}
and
Lemma~\ref{same probability on pairs of literals}.
Condition {\sf (C.3)}
follows from
Condition {\sf (C.2)}
and the utility functions
(Cases {\sf (2)}, {\sf (3)} and {\sf (4)}).
\end{proof}

\subsection{Proof of the Reduction}
\label{reduction proof}

\noindent
We are now ready 
to prove:

\begin{proposition}
\label{if unsatisfied}
Assume that 
${\mathsf{\phi}}$
is unsatisfiable.
Then,
${\mathcal{NE}}({\mathsf{G}})
 =
 {\mathcal{NE}}({\widehat{\mathsf{G}}})$.
\end{proposition}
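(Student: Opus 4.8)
The plan is to prove the two inclusions separately. The inclusion $\mathcal{NE}(\widehat{\mathsf{G}}) \subseteq \mathcal{NE}(\mathsf{G})$ is exactly Lemma~\ref{f is PNE} (Condition~{\sf (C.1)}) and holds independently of satisfiability. For the reverse inclusion I would argue by contradiction: suppose there is a Nash equilibrium $\bm{\sigma} \in \mathcal{NE}(\mathsf{G}) \setminus \mathcal{NE}(\widehat{\mathsf{G}})$. Then the whole chain of structural results from Section~\ref{properties of the game} applies to $\bm\sigma$. By Lemma~\ref{only literals are played}, $\bm\sigma$ is a literal equilibrium, so $\sigma_i(\mathsf{L}) = 1$ for each special player $i \in [2]$; by Lemma~\ref{delta is always played}, each non-special player plays $\delta$ with probability $1$; and by Corollary~\ref{nash is assignment}, $\bm\sigma$ induces an assignment $\gamma := \gamma(\bm\sigma)$ for $\phi$, each $\sigma_i$ ($i\in[2]$) is the uniform distribution on the $n$ literals making up $\gamma$, and $\mathsf{U}_1(\bm\sigma) = \mathsf{U}_2(\bm\sigma) = \frac{2}{n}$.

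Since $\phi$ is unsatisfiable, $\gamma$ does not satisfy $\phi$, so there is a clause $\mathsf{c} \in \mathcal{C}$ all three of whose literals are falsified by $\gamma$. In particular $\mathsf{c}$ contains no complementary pair, so the set $\{\ell \in \mathsf{L} : \overline{\ell} \in \mathsf{c}\}$ consists of three \emph{distinct} literals, and each of them, being the negation of a literal falsified by $\gamma$, belongs to $\gamma$. Now consider the special player $1$ deviating to the clause strategy $\mathsf{c} \in \Sigma_1$; this is a strategy outside $\mathrm{Supp}(\sigma_1)$ since $\sigma_1(\mathsf{L}) = 1$. Because every non-special player plays $\delta$ and player $2$ plays only literals from $\gamma$, the profiles supported in $\bm\sigma_{-1} \diamond \mathsf{c}$ are precisely those of the form $\langle \mathsf{c}, \ell, \delta, \ldots, \delta \rangle$ with $\ell \in \gamma$; by the utility functions, such a profile falls under Case~{\sf (6/a)} and gives player $1$ utility $1$ exactly when $\overline{\ell} \in \mathsf{c}$, and otherwise falls under Case~{\sf (11)} and gives utility $0$. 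Hence $\mathsf{U}_1(\bm\sigma_{-1} \diamond \mathsf{c}) = \sigma_2\bigl(\{\ell \in \mathsf{L} : \overline{\ell} \in \mathsf{c}\}\bigr) = \frac{3}{n}$, using that $\sigma_2$ is uniform on the $n$ literals of $\gamma$ and that the three literals in question are distinct elements of $\gamma$.

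Since $\frac{3}{n} > \frac{2}{n} = \mathsf{U}_1(\bm\sigma)$, player $1$ strictly improves by switching to $\mathsf{c}$, contradicting Lemma~\ref{basic property of mixed nash equilibria} (Condition~{\sf (2)}) applied to the Nash equilibrium $\bm\sigma$. Therefore no $\bm\sigma \in \mathcal{NE}(\mathsf{G}) \setminus \mathcal{NE}(\widehat{\mathsf{G}})$ exists, so $\mathcal{NE}(\mathsf{G}) \subseteq \mathcal{NE}(\widehat{\mathsf{G}})$, and together with the first inclusion this yields $\mathcal{NE}(\mathsf{G}) = \mathcal{NE}(\widehat{\mathsf{G}})$. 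The substantive content has already been front-loaded into the lemmas of Section~\ref{properties of the game}; the only thing that needs care here is checking that the deviation to $\mathsf{c}$ indeed lands in Case~{\sf (6/a)} for exactly the three ``complementary'' literals and in Case~{\sf (11)} otherwise, and noting that it is precisely the bound $3 > 2$ that forces $\phi$ to be a {\sf 3SAT} (rather than {\sf 2SAT}) formula — with $2$-clauses the deviation would only recover utility $\frac{2}{n}$ and the argument would break.
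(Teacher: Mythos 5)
Your proposal is correct and takes essentially the same route as the paper: both inclusions are handled exactly as you do, with $\mathcal{NE}(\widehat{\mathsf{G}}) \subseteq \mathcal{NE}(\mathsf{G})$ from Lemma~\ref{f is PNE} (Condition {\sf (C.1)}) and the reverse inclusion by contradiction via Corollary~\ref{nash is assignment}, deviating a special player to the unsatisfied clause $\mathsf{c}$ to obtain utility $\frac{3}{n} > \frac{2}{n}$, contradicting Lemma~\ref{basic property of mixed nash equilibria} (Condition {\sf (2)}). Your extra bookkeeping (distinctness of the three complementary literals and the Case {\sf (6/a)} versus Case {\sf (11)} check) just makes explicit what the paper's proof leaves implicit.
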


\noindent
We prepare the reader that
the property that each clause ${\mathsf{c}}$ of ${\mathsf{\phi}}$
contains $K \geq 3$ literals
is necessary for the proof
in order to render the deviation to ${\mathsf{c}}$
profitable in a Nash equilibrium.
The assumption that ${\mathsf{\phi}}$
is a {\sf 3SAT} formula
guarantees this property.

\begin{proof}
By Lemma~\ref{f is PNE}
(Condition {\sf (C.1)}), 
${\mathcal{NE}}({\widehat{\mathsf{G}}})
 \subseteq
 {\mathcal{NE}}({\mathsf{G}})$.
So,
assume, by way of contradiction,
that there is a Nash equilibrium 
${\bm\sigma}
 \in
 {\mathcal{NE}}({\mathsf{G}})
 \setminus
 {\mathcal{NE}}({\widehat{\mathsf{G}}})$.
By Corollary~\ref{nash is assignment}
(Conditions {\sf (C.1)} and {\sf (C.3)}),
$\bm{\sigma}$ induces an assignment ${\mathsf{\gamma}}$
for ${\mathsf{\phi}}$,
and
for each player $i \in [2]$,
${\mathsf{U}}_{i}
 ({\bm{\sigma}})
 =
 \frac{\textstyle 2}
         {\textstyle n}$.
Denote as ${\mathsf{c}}$
the clause not satisfied by ${\mathsf{\gamma}}$.
Consider a special player $i \in [2]$
switching to ${\mathsf{c}}$.
Since ${\mathsf{\phi}}$
is a {\sf 3SAT} formula,
there are three literals 
in ${\mathsf{c}}$; 
for each such literal $\ell$,
Corollary~\ref{nash is assignment}
(Condition {\sf (C.2)})
implies that
$\sigma_{\overline{i}}(\overline{\ell})
 =
 \frac{\textstyle 1}
      {\textstyle n}$.
Thus,
${\mathsf{U}}_{i}(\bm{\sigma}_{-i}
                  \diamond
                  {\mathsf{c}})
 =
 \frac{\textstyle 3}
      {\textstyle n}$.
So,
${\mathsf{U}}_{i}(\bm{\sigma}_{-i}
                  \diamond
                  {\mathsf{c}})
 >
 {\mathsf{U}}_{i}({\bm{\sigma}})$.
A contradiction to 
Lemma~\ref{basic property of mixed nash equilibria}
(Condition {\sf (2)}).
\end{proof}

\noindent
We continue to prove:

\begin{proposition}
\label{final lemma}
Assume that ${\mathsf{\phi}}$
is satisfiable.
Then,
for each satisfying assignment
${\mathsf{\gamma}}$,
${\mathsf{G}}$
has a symmetric
Nash equilibrium
${\bm{\sigma}}
  =
  {\bm{\sigma}}({\mathsf{\gamma}})
 \in
 {\mathcal{NE}}({\mathsf{G}})
 \setminus
 {\mathcal{NE}}({\widehat{\mathsf{G}}})$
such that
for each player
$i \in [2]$:
\begin{enumerate}

\item[{\sf (C.1)}]
$|{\mathsf{Supp}}(\sigma_i)| = n$.

\item[{\sf (C.2)}]
${\mathsf{U}}_{i}({\bm{\sigma}})
 =
 \frac{\textstyle 2}
         {\textstyle n}$.

\item[{\sf (C.3)}]
${\mathsf{Supp}}(\sigma_{i})
 \cap
 \widehat{\mathsf{\Sigma}}_{i}
 =
 \emptyset$.

\item[{\sf (C.4)}]
For each literal
$\ell
 \in
 {\mathsf{L}}$
set to true by
${\mathsf{\gamma}}$, 
$\ell 
 \in
 {\mathsf{Supp}}(\sigma_{i})$
with
$\sigma_{i}(\ell)
 =
 \frac{\textstyle 1}
      {\textstyle n}$.

\end{enumerate}
\end{proposition}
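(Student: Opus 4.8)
The strategy is constructive: given a satisfying assignment $\mathsf{\gamma}$, I will define the mixed profile $\bm{\sigma} = \bm{\sigma}(\mathsf{\gamma})$ explicitly and verify it is a Nash equilibrium of $\mathsf{G}$ using Lemma~\ref{basic property of mixed nash equilibria}. Concretely, for each special player $i \in [2]$, let $\sigma_i$ be the uniform distribution on the $n$ literals made true by $\mathsf{\gamma}$ (one per variable), and for each non-special player $i \in [r] \setminus [2]$, let $\sigma_i$ put probability $1$ on $\mathsf{\delta}$. This immediately gives Conditions {\sf (C.1)}, {\sf (C.3)} and {\sf (C.4)} by construction, and it is symmetric in the two special players; moreover $\bm\sigma \notin {\mathcal{NE}}(\widehat{\mathsf{G}})$ since the special players play no gadget strategies. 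The content is showing $\bm\sigma \in {\mathcal{NE}}(\mathsf{G})$, from which {\sf (C.2)} also follows by a direct computation of the utilities.

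The verification proceeds player by player. For a non-special player $i \in [r]\setminus[2]$: when all other non-special players play $\mathsf{\delta}$ and the special players play literals, the relevant profiles fall under Cases {\sf (1)}--{\sf (6)}, where every non-special player gets utility $1$; switching $i$ to a gadget strategy lands in Case {\sf (9)} or similar and yields utility $0$ (for $r \geq 3$ the other non-special players are still playing $\mathsf{\delta}$, so $|\widehat{\mathcal{P}}| = 1$ with $i$ special-index out of $[2]$, hence $\mathsf{U}_i = \mathsf{\delta}_{ji} = 0$ for $j \neq i$; one checks the Kronecker bookkeeping). So $\mathsf{\delta}$ is a best response. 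For a special player $i \in [2]$: since $\sigma_{\overline{i}}$ is uniform on the true literals and all non-special players play $\mathsf{\delta}$, I compute $\mathsf{U}_i(\bm\sigma_{-i} \diamond t)$ for every possible deviation target $t \in \mathsf{\Sigma}_i$ — a literal (true or false), a clause, a pair variable, or a gadget strategy — and check each is at most $\frac{2}{n}$, while a true literal achieves exactly $\frac{2}{n}$ (via Cases {\sf (2)},{\sf (3)},{\sf (4)}: a fixed true literal $\ell$ beats the $\overline{i}$-literals lying in the appropriate window of the cyclic order, which contains exactly $2$ of the true literals). The crucial deviation to rule out is switching to a clause $\mathsf{c}$: under Case {\sf (6)}, player $i$ gets utility $\frac{1}{n} \cdot |\{\ell \in \mathsf{c} : \overline{\ell} \text{ is true under } \mathsf\gamma\}|$, and since $\mathsf\gamma$ \emph{satisfies} $\mathsf{c}$, at least one literal of $\mathsf{c}$ is true, so at most $|\mathsf{c}| - 1 = 2$ of them have true negation — giving utility $\leq \frac{2}{n}$. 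This is exactly where $|\mathsf{c}| = 3$ is used in the forward direction: a clause of size $2$ with both negations true would give $\frac{2}{n}$ but so would a satisfied clause of size $\geq 4$ give up to $\frac{3}{n}$, spoiling the argument; the {\sf 3SAT} restriction keeps the clause deviation non-profitable precisely when the clause is satisfied. The deviations to pair variables (Case {\sf (5)}) give $\frac{1}{n}\cdot|\{\ell : \ell \in \{\ell_i,\overline\ell_i,\ell_j,\overline\ell_j\}, \ell \text{ true}\}| = \frac{2}{n}$ at most, and gadget-strategy deviations give $0$ or $\frac{2}{n}$ via Case {\sf (10)} (player $\overline{i}$ plays literals only in $\{\ell_0,\overline\ell_0,\ell_1,\overline\ell_1\}$ with probability $\frac{2}{n}$).

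The main obstacle I anticipate is the careful case bookkeeping for the special-player deviations: one must enumerate all strategy types in $\mathsf{\Sigma}_i$ and, for each, identify exactly which cases of Figure~\ref{table reduction} are triggered when the opponent plays the uniform true-literal distribution and the non-special players play $\mathsf{\delta}$, then sum the corresponding $\frac{1}{n}$-weights. The cyclic-order windows ($I(\ell') - I(\ell) \in \{0,1\}$ for Case {\sf (2)}, etc.) require tracking which true literals fall in a length-$2$ or length-$3$ arc, and one must confirm that an arc of length $2$ in the variable-index cyclic order $I_n$ always contains exactly the images of $2$ distinct variables, hence contributes total probability $\frac{2}{n}$ (using $n \geq 4$ so no wraparound collision). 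Once these weights are tabulated, every deviation is seen to be bounded by $\frac{2}{n}$ with equality attained at true literals, so Lemma~\ref{basic property of mixed nash equilibria} certifies $\bm\sigma$ is a Nash equilibrium and {\sf (C.2)} reads off as the common value $\frac{2}{n}$.
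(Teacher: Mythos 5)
Your proposal is correct and follows essentially the same route as the paper: the same explicit construction (each special player uniform with weight $\frac{\textstyle 1}{\textstyle n}$ on the literals made true by ${\mathsf{\gamma}}$, each non-special player pure on ${\mathsf{\delta}}$), the same deviation-by-deviation verification via Lemma~\ref{basic property of mixed nash equilibria}, and the same use of the {\sf 3SAT} bound $|{\mathsf{c}}| \leq 3$ to cap the clause deviation at $\frac{\textstyle 2}{\textstyle n}$. One small correction: in Case {\sf (9)} the deviating non-special player is the unique member of $\widehat{{\cal P}}({\bf s})$ and so receives utility $\delta_{ii}=1$, not $0$; this is harmless, since she already earns the maximal utility $1$ under $\bm{\sigma}$ in the win-lose game, which alone makes ${\mathsf{\delta}}$ a best response.
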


\noindent
We prepare the reader that
the property that each clause ${\mathsf{c}}$ of ${\mathsf{\phi}}$
contains $K \leq 3$ literals
is necessary for the proof
in order to render the deviation 
of a player to ${\mathsf{c}}$
non-profitable in a Nash equilibrium.
The assumption that ${\mathsf{\phi}}$
is a {\sf 3SAT} formula
guarantees this property.
Clearly,
the assumption that ${\mathsf{\phi}}$
is a {\sf 3SAT} formula
is the unique assumption 
that simultaneously guarantees
both this property
and the property needed for
the proof of Proposition~\ref{if unsatisfied}.

\begin{proof}
Construct the mixed profile
$\bm{\sigma}$
where,
for each special player
$i \in [2]$,
for each literal
$\ell
 \in
 {\mathsf{L}}$
set to true by
${\mathsf{\gamma}}$, 
$\sigma_{i}(\ell)
 :=
 \frac{\textstyle 1}
      {\textstyle n}$;
for each non-special player
$i \in [r]
       \setminus
       [2]$,
$\sigma_{i}({\mathsf{\delta}})
 :=
 1$.             
Clearly,
Conditions
{\sf (C.1)},
{\sf (C.3)}
and
{\sf (C.4)}
hold by construction;
Condition {\sf (C.2)}
holds by the utility functions
(Cases {\sf (2)}, {\sf (3)} and {\sf (4)}). 
It remains to prove that
$\bm{\sigma}$ is a Nash equilibrium.
There are four possible deviations
for a special player $i \in [2]$:
\begin{itemize}

\item
\underline{To a strategy 
           $s \in
            {\widehat{{\mathsf{\Sigma}}}}$:}
By the utility functions
(Case {\sf (10)}),
player $i$
gets utility $1$
for two possible choices
of literals 
by player $\overline{i}$:
$\ell_{0}$ (resp., $\overline{\ell}_{0}$)
and
$\ell_{1}$ (resp., $\overline{\ell}_{1}$).
Hence,
${\mathsf{U}}_{i}\left( \bm{\sigma}_{-i}
                        \diamond
                        s
                 \right)       
 =
 \frac{\textstyle 2}
        {\textstyle n}$.

\item
\underline{To a literal 
           $\ell \in {\mathsf{L}}$:}
By the utility functions
(Cases {\sf (1)}, {\sf (2)}, {\sf (3)} and {\sf (4)}),
player $i$
gets utility $1$
for at most two possible choices
of literals
by player $\overline{i}$.
Hence,
${\mathsf{U}}_{i}\left( \bm{\sigma}_{-i}
                        \diamond
                        s
                 \right)       
 \leq
 \frac{\textstyle 2}
      {\textstyle n}$.

\item
\underline{To a strategy
           ${\mathsf{v}}_{j,k}
            \in
            {\mathsf{V}}$:}
By the utility functions
(Case {\sf (5)}),
player $i$ gets utility $1$
for two possible choices
of literals
by player $\overline{i}$:
$\ell_{j}$ or $\overline{\ell}_{j}$,
and $\ell_{k}$
or $\overline{\ell}_{k}$.
Hence,
${\mathsf{U}}_{i}\left( \bm{\sigma}_{-i}
                        \diamond
                        {\mathsf{v}}_{j,k}
                 \right)       
 =
 \frac{\textstyle 2}
      {\textstyle n}$.

\item
\underline{To a clause
           ${\mathsf{c}}
            =
            \{ \ell, 
               \ell^{\prime},
               \ell^{\prime\prime}
            \}
            \in
            {\mathcal{C}}$:} 
Since ${\mathsf{\gamma}}$
is a satisfying assignment,
${\mathsf{\gamma}}$
satisfies ${\mathsf{c}}$.
So,
at least one of the literals
from
$\{ \ell, 
    \ell^{\prime},
    \ell^{\prime\prime}
 \}$
is in ${\mathsf{\gamma}}(\bm{\sigma})$.
Since ${\mathsf{\phi}}$
is a {\sf 3SAT} formula,
this implies that
at most two of the literals
from
$\left\{  \overline{\ell}, 
              \overline{\ell}^{\prime},
              \overline{\ell}^{\prime\prime}
 \right\}$ 
are not in ${\mathsf{\gamma}}(\bm{\sigma})$. 
Hence,
by construction,
$\sigma_{\overline{i}}
 \left( \left\{ \overline{\ell}, 
                \overline{\ell}^{\prime},
                \overline{\ell}^{\prime\prime}
        \right\}
 \right)
 \leq
 \frac{\textstyle 2}
      {\textstyle n}$.
By the utility functions
(Case {\sf (6)}),
player $i$ gets utility $1$
if and only if
player $\overline{i}$
chooses a literal from
$\left\{ \overline{\ell}, 
                \overline{\ell}^{\prime},
                \overline{\ell}^{\prime\prime}
        \right\}$.      
It follows that
${\mathsf{U}}_{i}\left( \bm{\sigma}_{-i}
                        \diamond
                        {\mathsf{c}}
                 \right)       
 \leq
 \frac{\textstyle 2}
      {\textstyle n}$.

\end{itemize}
Hence,
player $i \in [2]$
cannot improve by switching.
Consider now a non-special player
$i \in [r] \setminus [2]$.
By construction,
$\sigma_{i}({\mathsf{\delta}})
 =
 1$;
hence, 
it follows,
by the utility functions
(Cases {\sf (2)}, {\sf (3)} and {\sf (4)}),
that
${\mathsf{U}}_{i}\left( \bm{\sigma}
                 \right)
 =
 1$,
and player $i$
cannot improve by switching.                  
So 
$\bm{\sigma}$
is a Nash equilibrium.
\end{proof}

\noindent
Fix a win-lose game $\widehat{\mathsf{G}}$
with the positive utility property.
By Propositions~\ref{if unsatisfied} and~\ref{final lemma},
${\mathsf{G}}({\widehat{{\mathsf{G}}}},
                         {\mathsf{\phi}})$
and
${\widehat{\mathsf{G}}}$
are Nash-equivalent
if and only if
${\mathsf{\phi}}$ is unsatisfiable.
So 
${\mathsf{3SAT}}$
reduces in polynomial time to
$\overline{\mbox{{\sf{NASH-EQUIVALENCE}}$(\widehat{\mathsf{G}})$}}$,
so that,
restricted to win-lose games,
{\sf NASH-EQUIVALENCE}$(\widehat{\mathsf{G}})$
is co-${\mathcal{NP}}$-hard.
We shall later \textcolor{black}{strengthen}
this  co-${\mathcal{NP}}$-hardness result
to the more restricted class
of symmetric win-lose bimatrix games
(Theorem~\ref{maintheorem symmetric}).
By Propositions~\ref{if unsatisfied}
and~\ref{final lemma},
{
\small
\begin{eqnarray*}
               \left| {\mathcal{NE}} \left( {\mathsf{G}}
                                                 \right)
               \right|
&  = &    \left| {\mathcal{NE}} \left( {\widehat{{\mathsf{G}}}}
                                                 \right)
               \right| 
               +
               \# {\mathsf{\phi}}\, .
\end{eqnarray*}
}
Note that
\textcolor{black}{$| {\mathcal{NE}} ( {\widehat{{\mathsf{G}}}}
                              )
 |$}
is a fixed constant
when ${\widehat{{\mathsf{G}}}}$
is independent of ${\mathsf{\phi}}$.   
Since computing $\# {\mathsf{\phi}}$
\textcolor{black}{(resp., $\oplus {\mathsf{\phi}}$)}
for a {\sf 3SAT} formula ${\mathsf{\phi}}$
is $\# {\mathcal{P}}$-hard~\cite{V79}
\textcolor{black}{(resp., $\oplus {\mathcal{P}}$-hard~\cite{PZ83}),}
it follows that
\textcolor{black}{computing} the number 
\textcolor{black}{(resp., the parity of the number)}
of Nash equilibria for
a win-lose bimatrix game
is $\# {\mathcal{P}}$-hard
\textcolor{black}{(resp., $\oplus {\mathcal{P}}$-hard).} 
We shall later \textcolor{black}{strengthen} 
the $\# {\mathcal{P}}$-hardness
\textcolor{black}{and the $\oplus {\mathcal{P}}$-hardness}
to the more restricted class
of symmetric win-lose bimatrix games
(Theorem~\ref{sharp pi complete symmetric win-lose}).

\section{The Win-Lose ${\mathsf{GHR}}$-Symmetrization}
\label{winlose gkt symmetrization}

\noindent
For a given integer $n \geq 2$, 
denote as
${\mathsf{0}}^n$
the $n$-dimensional null function
${\mathsf{0}}^{n}:
  [n]
  \rightarrow
  \{ 0 \}$;
it is represented as the $n$-dimensional null vector
in the natural way.
Denote as ${\mathsf{0}}^{n\times n}$
the $n\times n$ null matrix.
For a positive $n$-dimensional vector
$\bm{\rho} \in [0, 1]^{n}$,
denote as $\bm{\varrho}$
the {\it normalization}
of $\bm{\rho}$
to a probability vector;
so,
for each index $j \in [n]$,
$\varrho (j)
 =
 \frac{\textstyle \rho(j)}
      {\textstyle \sum_{k \in [n]}
                    \rho (k)}$.

\noindent                    
Given the bimatrix game 
${\mathsf{G}}
 =
 \left\langle {\mathsf{R}},
              {\mathsf{C}}
 \right\rangle$
and two functions
${\mathsf{f}}, {\mathsf{g}}: [n]
                             \rightarrow
                             \mathbb{R}$,
define
\begin{eqnarray*}
\label{U hat 1}
       \textcolor{black}{\overline{{\mathsf{U}}}}_{1}({\mathsf{f}}, {\mathsf{g}})
& := & \sum_{s_{1}, s_{2}
             \in
             [n]}
         {\mathsf{f}}(s_{1})
         \cdot
         {\mathsf{g}}(s_{2})
         \cdot
         {\mathsf{R}}[s_{1}, s_{2}]
\end{eqnarray*}
and
\begin{eqnarray*}
\label{U hat 2}
       \textcolor{black}{\overline{{\mathsf{U}}}}_{2}({\mathsf{f}}, {\mathsf{g}})
& := & \sum_{s_{1}, s_{2}
             \in
             [n]}
         {\mathsf{f}}(s_{1})
         \cdot
         {\mathsf{g}}(s_{2})
         \cdot
         {\mathsf{C}}[s_{1}, s_{2}]\, .
\end{eqnarray*}
So,
$\textcolor{black}{\overline{{\mathsf{U}}}}_{1}({\mathsf{f}}, {\mathsf{g}})$
and
$\textcolor{black}{\overline{{\mathsf{U}}}}_{2}({\mathsf{f}}, {\mathsf{g}})$
become the expected utilities
of players 1 and 2,
respectively,
in 
${\mathsf{G}}$
when
${\mathsf{f}}$ and ${\mathsf{g}}$ are mixed strategies.

\subsection{Definition}
\label{the gkt symmetrization}

\begin{definition}
The \textcolor{black}{\emph{${\mathsf{GHR}}$-symmetrization~\cite{GHR63}}}
of the bimatrix game
$\left\langle {\mathsf{R}},
              {\mathsf{C}}
 \right\rangle$,
\textcolor{black}{denoted as
${\mathsf{GHR}}\left(  \left\langle {\mathsf{R}},
                                                        {\mathsf{C}}
                                      \right\rangle
                           \right)$,}  
is the symmetric bimatrix game
\textcolor{black}{$\langle {\mathsf{S}},
                    {\mathsf{S}}^{\mbox{{\rm T}}}
 \rangle$,}
where ${\mathsf{S}}$
is the $2n \times 2n$ matrix
{
\small
\begin{eqnarray*}
      {\mathsf{S}}
& = & \left( \begin{array}{cc}
               {\mathsf{0}}^{n \times n}     & {\mathsf{R}}              \\
               {\mathsf{C}}^{\mbox{{\rm T}}} & {\mathsf{0}}^{n \times n}   
             \end{array}
      \right)\, ;
\end{eqnarray*}
}
so,
{
\small
\begin{eqnarray*}
      {\mathsf{S}}^{\mbox{{\rm T}}}
& = & \left( \begin{array}{cc}
               {\mathsf{0}}^{n \times n}     & {\mathsf{C}}              \\
               {\mathsf{R}}^{\mbox{{\rm T}}} & {\mathsf{0}}^{n \times n}   
             \end{array}
      \right)\, .
\end{eqnarray*}
}
\end{definition}

\noindent
\textcolor{black}{Note that the ${\mathsf{GHR}}$-symmetrization
of
$\left\langle {\mathsf{R}},
                    {\mathsf{C}}
 \right\rangle$ 
is polynomial time computable. 
For shorter notation,
denote
${\widetilde{{\mathsf{G}}}}
  :=
 {\mathsf{GHR}}\left(  \left\langle {\mathsf{R}},
                                                        {\mathsf{C}}
                                     \right\rangle
                           \right)$.}   
\textcolor{black}{Clearly,
when
$\left\langle {\mathsf{R}},
                    {\mathsf{C}}
 \right\rangle$
is a win-lose game,
so also is
${\mathsf{GHR}}\left(  \left\langle {\mathsf{R}},
                                                        {\mathsf{C}}
                                     \right\rangle
                           \right)$.
To emphasize this property,
we shall term
the ${\mathsf{GHR}}$-symmetrization
as the {\it  win-lose ${\mathsf{GHR}}$-symmetrization.}}    
We further observe:

\begin{lemma}
\label{little park kafe}
\textcolor{black}{The win-lose ${\mathsf{GHR}}$-symmetrization
preserves the 
positive utility property.}
\end{lemma}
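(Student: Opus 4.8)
The plan is to show that if the bimatrix game $\langle {\mathsf{R}}, {\mathsf{C}} \rangle$ has the positive utility property, then so does its ${\mathsf{GHR}}$-symmetrization ${\widetilde{{\mathsf{G}}}} = \langle {\mathsf{S}}, {\mathsf{S}}^{\mbox{{\rm T}}} \rangle$. Recall from the remark following Definition~\ref{positive utility property definition} that, for a win-lose bimatrix game, the positive utility property is equivalent to the statement that the row player's matrix has no all-zeros column and the column player's matrix has no all-zeros row. Since ${\widetilde{{\mathsf{G}}}}$ is symmetric with $2n$ strategies per player and its column matrix is ${\mathsf{S}}^{\mbox{{\rm T}}}$, it suffices to check that ${\mathsf{S}}$ has no all-zeros column; the no-all-zeros-row condition for ${\mathsf{S}}^{\mbox{{\rm T}}}$ is then automatic by transposition.

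First I would write out the block structure explicitly:
\[
{\mathsf{S}}
 =
 \left( \begin{array}{cc}
   {\mathsf{0}}^{n \times n} & {\mathsf{R}} \\
   {\mathsf{C}}^{\mbox{{\rm T}}} & {\mathsf{0}}^{n \times n}
 \end{array} \right).
\]
A column of ${\mathsf{S}}$ indexed by $j \in [2n]$ splits into two cases. If $j \in [n]$, the column consists of the $j$-th column of ${\mathsf{0}}^{n\times n}$ stacked on top of the $j$-th column of ${\mathsf{C}}^{\mbox{{\rm T}}}$; the latter is the $j$-th row of ${\mathsf{C}}$, which is not all-zeros because $\langle {\mathsf{R}}, {\mathsf{C}} \rangle$ has the positive utility property (no all-zeros row in ${\mathsf{C}}$). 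If $j \in \{n+1, \ldots, 2n\}$, say $j = n + j'$ with $j' \in [n]$, the column consists of the $j'$-th column of ${\mathsf{R}}$ stacked on top of the $j'$-th column of ${\mathsf{0}}^{n\times n}$; the former is not all-zeros because $\langle {\mathsf{R}}, {\mathsf{C}} \rangle$ has the positive utility property (no all-zeros column in ${\mathsf{R}}$). In either case the column of ${\mathsf{S}}$ contains a nonzero entry, so ${\mathsf{S}}$ has no all-zeros column.

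By the symmetry ${\widetilde{{\mathsf{G}}}} = \langle {\mathsf{S}}, {\mathsf{S}}^{\mbox{{\rm T}}} \rangle$, the column player's matrix is ${\mathsf{S}}^{\mbox{{\rm T}}}$, whose rows are exactly the columns of ${\mathsf{S}}$; hence ${\mathsf{S}}^{\mbox{{\rm T}}}$ has no all-zeros row. Applying the equivalence from Definition~\ref{positive utility property definition} once more (now in the direction: no all-zeros column in the row matrix and no all-zeros row in the column matrix imply the positive utility property), we conclude that ${\widetilde{{\mathsf{G}}}}$ has the positive utility property. I do not expect any real obstacle here: the only subtlety is bookkeeping the two halves of the strategy set and correctly tracking that a column of ${\mathsf{C}}^{\mbox{{\rm T}}}$ is a row of ${\mathsf{C}}$, which is where the ``no all-zeros row of ${\mathsf{C}}$'' half of the hypothesis gets used (the ``no all-zeros column of ${\mathsf{R}}$'' half handling the other block of columns).
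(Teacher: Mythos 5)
Your proof is correct, and it fills in exactly the block-wise check that the paper leaves implicit: the lemma is stated there as an immediate observation without a written proof, the intended justification being precisely that each column of ${\mathsf{S}}$ contains either a column of ${\mathsf{R}}$ or a row of ${\mathsf{C}}$ (via ${\mathsf{C}}^{\mbox{{\rm T}}}$), neither of which can be all-zeros under the positive utility property. Your handling of the column player via transposition of ${\mathsf{S}}$ is also the right (and only needed) bookkeeping, so nothing further is required.
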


\noindent
Given a mixed profile
$\bm{\sigma}
 =
 \left\langle \sigma_{1},
              \sigma_{2}
 \right\rangle$
for
\textcolor{black}{${\widetilde{{\mathsf{G}}}}$,}
denote,
for each player $i \in [2]$,
as $\overrightarrow{\sigma_{i}}$
(resp., $\overleftarrow{\sigma_{i}}$)
the restriction of $\sigma_{i}$
to the last (resp., first) $n$ strategies
of player $i$,
so that for each strategy $j \in [n]$,
$\overleftarrow{\sigma_{i}}(j)
  =
  \sigma_{i}(j)$
and
$\overrightarrow{\sigma_{i}}(j)
  =
  \sigma_{i}(n+j)$.
Thus,
$\sigma_{i}
  =
  \overleftarrow{\sigma_{i}}
  \circ
  \overrightarrow{\sigma_{i}}$ is
the {\it concatenation} of   
$\overleftarrow{\sigma_{i}}$
and
$\overrightarrow{\sigma_{i}}$,
\textcolor{black}{which are its first and second {\it component,}
respectively;}
$\bm{\sigma}
 =
 \left\langle  \overleftarrow{\sigma_{1}}
                    \circ
                    \overrightarrow{\sigma_{1}},
                    \overleftarrow{\sigma_{2}}
                    \circ
                    \overrightarrow{\sigma_{2}}
  \right\rangle$,
\textcolor{black}{where
$\overleftarrow{\sigma_{1}}
  \circ
  \overrightarrow{\sigma_{1}}$
and  
$\overleftarrow{\sigma_{2}}
  \circ
  \overrightarrow{\sigma_{2}}$
are the first and second {\it entry,}
respectively,
of ${\bm{\sigma}}$.}                                                      
See Figure~\ref{vittorio}
for a pictorial representation
of \textcolor{black}{the components
$\overleftarrow{\sigma_{1}}$ and
$\overrightarrow{\sigma_{1}}$
(resp., 
$\overleftarrow{\sigma_{2}}$
and
$\overrightarrow{\sigma_{2}}$)}
and their association with
the blocks
${\mathsf{R}}$ and
${\mathsf{C}}^{\mbox{\rm T}}$
(resp.,
 ${\mathsf{C}}$ and
${\mathsf{R}}^{\mbox{\rm T}}$)
of ${\mathsf{S}}$
(resp., ${\mathsf{S}}^{\mbox{\rm T}}$).
We observe:

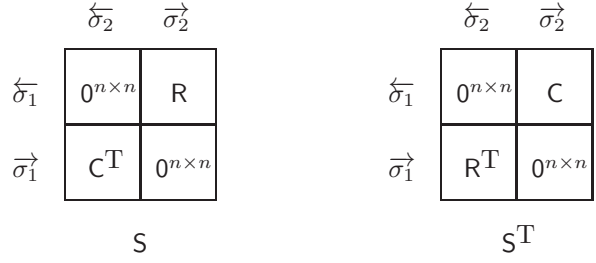
\begin{figure}[tp]
\begin{center}
\setlength{\unitlength}{1cm}
\begin{small}
\begin{picture}(8,4)(0,0)

\put(1,1){\framebox(1,1)}
\put(2,1){\framebox(1,1)}
\put(6,1){\framebox(1,1)}
\put(7,1){\framebox(1,1)}
\put(1,2){\framebox(1,1)}
\put(2,2){\framebox(1,1)}
\put(6,2){\framebox(1,1)}
\put(7,2){\framebox(1,1)}

\put(1.9,0.3){${\mathsf{S}}$}
\put(6.8,0.3){${\mathsf{S}}^{\mbox{\rm{T}}}$}
\put(0.3,1.3){$\overrightarrow{\sigma_{1}}$}
\put(0.3,2.3){$\overleftarrow{\sigma_{1}}$}
\put(5.3,1.3){$\overrightarrow{\sigma_{1}}$}
\put(5.3,2.3){$\overleftarrow{\sigma_{1}}$}
\put(2.3,3.3){$\overrightarrow{\sigma_{2}}$}
\put(1.3,3.3){$\overleftarrow{\sigma_{2}}$}
\put(7.3,3.3){$\overrightarrow{\sigma_{2}}$}
\put(6.3,3.3){$\overleftarrow{\sigma_{2}}$}
\put(1.3,1.3){${\mathsf{C}}^{\mbox{\rm T}}$}
\put(2.2,1.3){${\mathsf{0}}^{n\times n}$}
\put(6.3,1.3){${\mathsf{R}}^{\mbox{\rm T}}$}
\put(7.2,1.3){${\mathsf{0}}^{n\times n}$}
\put(1.2,2.3){${\mathsf{0}}^{n\times n}$}
\put(2.4,2.3){${\mathsf{R}}$}
\put(6.2,2.3){${\mathsf{0}}^{n\times n}$}
\put(7.4,2.3){${\mathsf{C}}$}
\end{picture}
\end{small}
\caption{The bimatrix game
         \textcolor{black}{${\widetilde{{\mathsf{G}}}}
                                                    =
                                                    \langle {\mathsf{S}},
                                                                {\mathsf{S}}^{\mbox{\rm T}}
                                                    \rangle$}
         with the mixed profile
         $\bm{\sigma}$.}
\label{vittorio}
\end{center}
\end{figure}


\begin{observation}
\label{vittorio 1}
The expected utilities
\textcolor{black}{${\widetilde{{\mathsf{U}}}}_{1}({\bm\sigma})$}
and
\textcolor{black}{${\widetilde{{\mathsf{U}}}}_{2}({\bm\sigma})$}  
are:
{
\small
\begin{eqnarray*}
          \mbox{{\sf (Eq. 1)}}\ \ \
          \textcolor{black}{{\widetilde{{\mathsf{U}}}}}_{1}({\bm\sigma}) 
& = & \sum_{s_{1}, s_{2} \in [n]}
             \overleftarrow{\sigma_{1}}(s_{1})\,
             \overrightarrow{\sigma_{2}}(s_{2})
             \cdot
            {\mathsf{R}}[s_{1}, s_{2}]
           +
           \sum_{s_{1}, s_{2} \in [n]}
             \overrightarrow{\sigma_{1}}(s_{1})\,
             \overleftarrow{\sigma_{2}}(s_{2})
             \cdot
             {\mathsf{C}}[s_{2}, s_{1}]\, .
\end{eqnarray*}
}
{
\small
\begin{eqnarray*}
          \mbox{{\sf (Eq. 2)}}\ \ \
          \textcolor{black}{{\widetilde{{\mathsf{U}}}}}_{2}({\bm\sigma}) 
& = & \sum_{s_{1}, s_{2} \in [n]}
        \overleftarrow{\sigma_{1}}(s_{1})\,
        \overrightarrow{\sigma_{2}}(s_{2})
        \cdot
        {\mathsf{C}}[s_{1},s_{2}]
      +
      \sum_{s_{1}, s_{2} \in [n]}
        \overrightarrow{\sigma_{1}}(s_{1})\,
        \overleftarrow{\sigma_{2}}(s_{2})
        \cdot
        {\sf R}[s_{2},s_{1}]\, . 
\end{eqnarray*}
}
\end{observation}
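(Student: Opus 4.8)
\textbf{Proof plan for Observation~\ref{vittorio 1}.}
The statement is purely a matter of expanding the definition of expected utility for the symmetric bimatrix game $\widetilde{{\mathsf{G}}} = \langle {\mathsf{S}}, {\mathsf{S}}^{\mbox{\rm T}}\rangle$ and reading off the block structure of ${\mathsf{S}}$. The plan is to start from the general formula
${\widetilde{{\mathsf{U}}}}_{1}({\bm\sigma}) = \sum_{p, q \in [2n]} \sigma_{1}(p)\,\sigma_{2}(q)\cdot {\mathsf{S}}[p,q]$
and split each of the index ranges $[2n]$ into the ``first $n$'' block $\{1,\ldots,n\}$ and the ``second $n$'' block $\{n+1,\ldots,2n\}$, giving four double sums. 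By the definition of ${\mathsf{S}}$, the top-left and bottom-right blocks are ${\mathsf{0}}^{n\times n}$, so the two corresponding sums vanish; only the top-right block (equal to ${\mathsf{R}}$) and the bottom-left block (equal to ${\mathsf{C}}^{\mbox{\rm T}}$) survive. Rewriting $\sigma_1$ and $\sigma_2$ in terms of their components via $\overleftarrow{\sigma_{i}}(j)=\sigma_i(j)$ and $\overrightarrow{\sigma_{i}}(j)=\sigma_i(n+j)$, and using ${\mathsf{C}}^{\mbox{\rm T}}[s_1,s_2] = {\mathsf{C}}[s_2,s_1]$ to convert the transpose block into an entry of ${\mathsf{C}}$, yields exactly {\sf (Eq.\ 1)}. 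The computation for ${\widetilde{{\mathsf{U}}}}_{2}({\bm\sigma})$ is identical with ${\mathsf{S}}$ replaced by ${\mathsf{S}}^{\mbox{\rm T}}$: now the top-right block is ${\mathsf{C}}$ and the bottom-left block is ${\mathsf{R}}^{\mbox{\rm T}}$, and the same relabelling and transpose identity produce {\sf (Eq.\ 2)}.

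The key steps, in order, are: (1) write out the definition of $\widetilde{{\mathsf{U}}}_i(\bm\sigma)$ as a sum over all pairs of pure strategies in $[2n]\times[2n]$; (2) partition the sum according to whether each strategy lies in the first or second half of $[2n]$, obtaining four sub-sums; (3) discard the two sub-sums whose block of ${\mathsf{S}}$ (resp.\ ${\mathsf{S}}^{\mbox{\rm T}}$) is the null matrix; (4) substitute the component notation $\overleftarrow{\sigma_i}, \overrightarrow{\sigma_i}$ for the surviving sums; (5) apply the elementary identity ${\mathsf{M}}^{\mbox{\rm T}}[a,b]={\mathsf{M}}[b,a]$ to remove the transpose on the off-diagonal block. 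There is no genuine obstacle here — the statement is a bookkeeping lemma — so I would keep the justification short: the only thing one must be careful about is the index conventions (that $\overrightarrow{\sigma_{i}}$ corresponds to strategies $n+1,\ldots,2n$, i.e.\ to the rows/columns indexing the ${\mathsf{C}}^{\mbox{\rm T}}$ and ${\mathsf{R}}^{\mbox{\rm T}}$ blocks) and that the transpose swap is applied in the correct direction so that the second summand in {\sf (Eq.\ 1)} reads ${\mathsf{C}}[s_2,s_1]$ rather than ${\mathsf{C}}[s_1,s_2]$. Since this is stated as an Observation rather than a Lemma, I would present it with a one- or two-sentence proof (or even leave it as an immediate consequence of the block decomposition pictured in Figure~\ref{vittorio}), rather than displaying all four intermediate double sums.
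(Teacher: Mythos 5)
Your proposal is correct and follows essentially the same route as the paper: both expand ${\widetilde{{\mathsf{U}}}}_{i}({\bm\sigma})$ by the block structure of ${\mathsf{S}}$ (resp.\ ${\mathsf{S}}^{\mbox{\rm T}}$), discard the two null blocks, and use ${\mathsf{C}}^{\mbox{\rm T}}[s_1,s_2]={\mathsf{C}}[s_2,s_1]$ (resp.\ ${\mathsf{R}}^{\mbox{\rm T}}[s_1,s_2]={\mathsf{R}}[s_2,s_1]$) to rewrite the off-diagonal blocks in component notation. No gaps; your bookkeeping on the index conventions matches the paper's.
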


\begin{proof}
It is evident from Figure~\ref{vittorio}
that
{
\small
\begin{eqnarray*}
      \textcolor{black}{{\widetilde{{\mathsf{U}}}}}_{1}({\bm\sigma}) 
& = & \mathbb{E}_{(s_{1},s_{2}) 
                  \sim
                  {\bm\sigma}}
              {\mathsf{S}}[s_{1},s_{2}]                              \\
& = & \mathbb{E}_{(s_1,s_2)
                  \sim
                  (\overleftarrow{\sigma_{1}},
                   \overrightarrow{\sigma_{2}}
                  )
                 }
              {\mathsf R}[s_{1},s_{2}]
      +
      \mathbb{E}_{(s_{1},s_{2})
                  \sim
                  (\overrightarrow{\sigma_{1}},
                   \overleftarrow{\sigma_{2}}
                  )
                 }
              {\mathsf{C}}^{\mbox{\rm{T}}}[s_1,s_2]                  \\
& = & \mathbb{E}_{(s_{1},s_{2})
                  \sim
                  (\overleftarrow{\sigma_{1}},
                   \overrightarrow{\sigma_{2}}
                  )
                 }
              {\mathsf R}[s_{1},s_{2}]
      +
      \mathbb{E}_{(s_{1},s_{2})
                  \sim
                  (\overrightarrow{\sigma_{1}},
                   \overleftarrow{\sigma_{2}}
                  )
                 }
              {\sf C}[s_{2},s_{1}]                                  \\
& = & \sum_{s_{1}, s_{2} \in [n]}
        \overleftarrow{\sigma_{1}}(s_{1})\,
        \overrightarrow{\sigma_{2}}(s_{2})
        \cdot
        {\mathsf{R}}[s_{1}, s_{2}]
      +
      \sum_{s_{1}, s_{2} \in [n]}
        \overrightarrow{\sigma_{1}}(s_{1})\,
        \overleftarrow{\sigma_{2}}(s_{2})
        \cdot
        {\mathsf{C}}[s_{2}, s_{1}]\, ,
\end{eqnarray*}
}
and
{
\small
\begin{eqnarray*}
      \textcolor{black}{{\widetilde{{\sf U}}}}_{2}({\bm\sigma}) 
& = & \mathbb{E}_{(s_{1},s_{2}) 
                  \sim
                  {\bm\sigma}}
              {\mathsf{S}}^{\mbox{\rm{T}}}[s_{1},s_{2}]              \\
& = & \mathbb{E}_{(s_1,s_2)
                  \sim
                  (\overleftarrow{\sigma_{1}},
                   \overrightarrow{\sigma_{2}}
                  )
                 }
              {\mathsf C}[s_{1},s_{2}]
      +
      \mathbb{E}_{(s_{1},s_{2})
                  \sim
                  (\overrightarrow{\sigma_{1}},
                   \overleftarrow{\sigma_{2}}
                  )
                 }
              {\mathsf{R}}^{\mbox{\rm{T}}}[s_1,s_2]                  \\
& = & \mathbb{E}_{(s_{1},s_{2})
                  \sim
                  (\overleftarrow{\sigma_{1}},
                   \overrightarrow{\sigma_{2}}
                  )
                 }
              {\mathsf C}[s_{1},s_{2}]
      +
      \mathbb{E}_{(s_{1},s_{2})
                  \sim
                  (\overrightarrow{\sigma_{1}},
                   \overleftarrow{\sigma_{2}}
                  )
                 }
              {\sf R}[s_{2},s_{1}]                                   \\
& = & \sum_{s_{1}, s_{2} \in [n]}
        \overleftarrow{\sigma_{1}}(s_{1})\,
        \overrightarrow{\sigma_{2}}(s_{2})
        \cdot
        {\mathsf{C}}[s_{1},s_{2}]
      +
      \sum_{s_{1}, s_{2} \in [n]}
        \overrightarrow{\sigma_{1}}(s_{1})\,
        \overleftarrow{\sigma_{2}}(s_{2})
        \cdot
        {\sf R}[s_{2},s_{1}]\, , 
\end{eqnarray*}
}
as needed.
\end{proof}

\noindent
By Observation~\ref{vittorio 1},
it immediately follows:

\begin{observation}               
\label{vittorio 2}
For a given strategy $s \in [2n]$,
the conditional expected utilities are:
{
\small
\begin{eqnarray*}
           \mbox{{\sf (Eq. 1)}}\ \ \
          \textcolor{black}{{\widetilde{{\mathsf{U}}}}}_{1}({\bm\sigma}_{-1} \diamond s) 
& = & \left\{ \begin{array}{ll}
                \sum_{s_{2} \in [n]}
                  \overrightarrow{\sigma_{2}}
                                   (s_{2}) 
                  \cdot
                  {\mathsf{R}}[s, s_{2}]\, ,                 & s \in [n]                     \\
                \sum_{s_{2} \in [n]}
                  \overleftarrow{\sigma_{2}}
                                  (s_{2})
                  \cdot
                  {\mathsf{C}}[s_{2}, s-n]\, ,              & s \in [2n] \setminus [n]
              \end{array}
      \right.\, ,                                                         
\end{eqnarray*}
}
{
\small
\begin{eqnarray*}
       \mbox{{\sf (Eq. 2)}}\ \ \
      \textcolor{black}{{\widetilde{{\mathsf{U}}}}}_{2}({\bm\sigma}_{-2} \diamond s) 
& = & \left\{ \begin{array}{ll}
                \sum_{s_{1} \in [n]}
                  \overrightarrow{\sigma_{1}}
                        (s_{1})
                  \cdot
                  {\mathsf{R}}[s, s_{1}]\, ,              & s \in [n]                      \\
                \sum_{s_{1} \in [n]}
                  \overleftarrow{\sigma_{1}}
                        (s_{1}) 
                  \cdot
                  {\mathsf{C}}[s_{1}, s-n]\, ,                & s \in [2n] \setminus [n]  
              \end{array}
      \right.\, .                      
\end{eqnarray*}
}
\end{observation}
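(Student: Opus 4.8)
The plan is to obtain Observation~\ref{vittorio 2} as an immediate specialization of Observation~\ref{vittorio 1}: a deviation of a player to a pure strategy $s$ amounts to replacing that player's mixed strategy by a point mass on $s$, and the block anti-diagonal shape of ${\mathsf{S}}$ then kills one of the two summands in the expected-utility formula.

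Concretely, first I would fix player $1$ and a strategy $s \in [2n]$, and let $\tau$ be the mixed strategy of player $1$ assigning probability $1$ to $s$, so that $\bm{\sigma}_{-1} \diamond s = \langle \tau, \sigma_{2} \rangle$ and hence $\textcolor{black}{{\widetilde{{\mathsf{U}}}}}_{1}(\bm{\sigma}_{-1} \diamond s) = \textcolor{black}{{\widetilde{{\mathsf{U}}}}}_{1}(\langle \tau, \sigma_{2} \rangle)$. I would then apply (Eq.~1) of Observation~\ref{vittorio 1} with $\sigma_{1}$ replaced by $\tau$, using the case split on whether $s \in [n]$ or $s \in [2n] \setminus [n]$. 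In the first case the component $\overleftarrow{\tau}$ is the indicator of $\{ s \}$ and $\overrightarrow{\tau} \equiv {\mathsf{0}}^{n}$, so only the ${\mathsf{R}}$-summand survives and collapsing the sum over $s_{1}$ against the indicator yields $\sum_{s_{2} \in [n]} \overrightarrow{\sigma_{2}}(s_{2}) \cdot {\mathsf{R}}[s, s_{2}]$; in the second case $\overleftarrow{\tau} \equiv {\mathsf{0}}^{n}$ and $\overrightarrow{\tau}$ is the indicator of $\{ s-n \}$, so only the ${\mathsf{C}}$-summand survives and collapsing it gives $\sum_{s_{2} \in [n]} \overleftarrow{\sigma_{2}}(s_{2}) \cdot {\mathsf{C}}[s_{2}, s-n]$. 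This is exactly the claimed two-branch formula for player $1$. The player-$2$ branches are entirely symmetric: I would repeat the argument with (Eq.~2) of Observation~\ref{vittorio 1}, a point mass $\tau$ for player $2$, and the same split on $s$, which again annihilates one of the two summands and leaves the stated expressions.

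There is essentially no obstacle here, since the statement is flagged in the text as following ``immediately'' from Observation~\ref{vittorio 1}; the only care needed is the bookkeeping of which block of ${\mathsf{S}}$ (one of ${\mathsf{R}}$, ${\mathsf{C}}^{\mbox{\rm T}}$, ${\mathsf{C}}$, ${\mathsf{R}}^{\mbox{\rm T}}$) pairs with which component of the profile, together with the index shift $s \mapsto s-n$ for a deviating strategy in the second half $[2n] \setminus [n]$ — and all of this has already been absorbed into Observation~\ref{vittorio 1}. If one preferred a self-contained argument, one could instead read it straight off Figure~\ref{vittorio}: a pure strategy in $[n]$ selects a row of ${\mathsf{S}}$ in its top band (blocks ${\mathsf{0}}^{n\times n}$ and ${\mathsf{R}}$), so only $\overrightarrow{\sigma_{2}}$ and the entries of ${\mathsf{R}}$ matter, while a pure strategy in $[2n] \setminus [n]$ selects a row in the bottom band (blocks ${\mathsf{C}}^{\mbox{\rm T}}$ and ${\mathsf{0}}^{n\times n}$), so only $\overleftarrow{\sigma_{2}}$ and the entries ${\mathsf{C}}^{\mbox{\rm T}}[s-n, \cdot] = {\mathsf{C}}[\cdot, s-n]$ matter; the player-$2$ branches follow the same way from the rows of ${\mathsf{S}}^{\mbox{\rm T}}$.
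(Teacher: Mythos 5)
Your proposal is correct and matches the paper's own treatment: the paper states that Observation~\ref{vittorio 2} follows immediately from Observation~\ref{vittorio 1}, and your argument is exactly that specialization — replacing the deviating player's mixed strategy by a point mass on $s$, splitting on $s \in [n]$ versus $s \in [2n] \setminus [n]$, and noting that the block anti-diagonal structure annihilates one of the two summands while the indicator collapses the other. The bookkeeping of blocks and the index shift $s \mapsto s-n$ is handled correctly in both branches for both players.
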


\noindent
We now prove a characterization of
the supports of a Nash equilibrium
for the win-lose ${\mathsf{GHR}}$-symmetrization
of a bimatrix game
with the positive utility property.

\begin{lemma}
\label{noemptystrategies}
Fix a win-lose bimatrix game
\textcolor{black}{${\mathsf{G}}
  =
  \left\langle {\mathsf{R}},
              {\mathsf{C}}
 \right\rangle$}
with the positive utility property,
and consider
its win-lose ${\mathsf{GHR}}$-symmetrization
\textcolor{black}{${\widetilde{{\mathsf{G}}}}$}
with a Nash equilibrium $\bm{\sigma}$.
Then,
there are only three possible cases:
\begin{enumerate}

\item[{\sf (C.1)}]
$\overleftarrow{\sigma_{1}},
 \overrightarrow{\sigma_{1}},
 \overleftarrow{\sigma_{2}},
 \overrightarrow{\sigma_{2}}
 \neq
 {\mathsf{0}}^n$.

\item[{\sf (C.2)}]
$\overleftarrow{\sigma_{1}},
 \overrightarrow{\sigma_{2}}
 =
 {\mathsf{0}}^n$
and
$\overrightarrow{\sigma_{1}},
 \overleftarrow{\sigma_{2}}
 \neq
 {\mathsf{0}}^n$.

\item[{\sf (C.3)}]
$\overrightarrow{\sigma_{1}},
 \overleftarrow{\sigma_{2}}
 =
 {\mathsf{0}}^n$
and
$\overleftarrow{\sigma_{1}},
 \overrightarrow{\sigma_{2}}
 \neq
 {\mathsf{0}}^n$.

\end{enumerate}
\end{lemma}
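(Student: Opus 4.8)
The plan is to reduce everything to a short best-response bookkeeping on the block structure of $\mathsf{S}$. First note that, by Lemma~\ref{little park kafe}, the win-lose ${\mathsf{GHR}}$-symmetrization ${\widetilde{{\mathsf{G}}}}$ has the positive utility property, so Lemma~\ref{park kafe} applies to the Nash equilibrium $\bm{\sigma}$ and gives ${\widetilde{{\mathsf{U}}}}_{1}(\bm{\sigma}) > 0$ and ${\widetilde{{\mathsf{U}}}}_{2}(\bm{\sigma}) > 0$. Since each $\sigma_{i}$ is a probability distribution on $[2n]$, for each player $i \in [2]$ at least one of the two components $\overleftarrow{\sigma_{i}}, \overrightarrow{\sigma_{i}}$ is not ${\mathsf{0}}^{n}$. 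The three listed cases are exactly the vanishing patterns of the four components $\overleftarrow{\sigma_{1}}, \overrightarrow{\sigma_{1}}, \overleftarrow{\sigma_{2}}, \overrightarrow{\sigma_{2}}$ that survive the two biconditionals established below, so it suffices to prove those biconditionals.

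I would establish four implications, all by the same device: locate a pure strategy in the support of one special player whose conditional utility, read off from the appropriate block of $\mathsf{S}$ via Observation~\ref{vittorio 2}, is a sum over a single component of the other player's mixed strategy; if that component equals ${\mathsf{0}}^{n}$, the conditional utility is $0$, hence by Lemma~\ref{basic property of mixed nash equilibria} (Condition~{\sf (1)}) the equilibrium utility of that player is $0$, contradicting Lemma~\ref{park kafe}. For instance, if $\overleftarrow{\sigma_{1}} = {\mathsf{0}}^{n}$ but $\overrightarrow{\sigma_{2}} \neq {\mathsf{0}}^{n}$, choose $t \in [n]$ with $n+t \in {\mathsf{Supp}}(\sigma_{2})$; by Observation~\ref{vittorio 2} (Eq.~2), ${\widetilde{{\mathsf{U}}}}_{2}(\bm{\sigma}_{-2} \diamond (n+t)) = \sum_{s_{1} \in [n]} \overleftarrow{\sigma_{1}}(s_{1}) \cdot {\mathsf{C}}[s_{1}, t] = 0$, a contradiction. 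The analogous deviations --- player $1$ to a supported strategy in $[n]$, player $1$ to a supported strategy in $[2n] \setminus [n]$, player $2$ to a supported strategy in $[n]$, and (as above) player $2$ to a supported strategy in $[2n] \setminus [n]$ --- yield, respectively, that $\overleftarrow{\sigma_{1}} \neq {\mathsf{0}}^{n}$ implies $\overrightarrow{\sigma_{2}} \neq {\mathsf{0}}^{n}$, that $\overrightarrow{\sigma_{1}} \neq {\mathsf{0}}^{n}$ implies $\overleftarrow{\sigma_{2}} \neq {\mathsf{0}}^{n}$, that $\overleftarrow{\sigma_{2}} \neq {\mathsf{0}}^{n}$ implies $\overrightarrow{\sigma_{1}} \neq {\mathsf{0}}^{n}$, and that $\overrightarrow{\sigma_{2}} \neq {\mathsf{0}}^{n}$ implies $\overleftarrow{\sigma_{1}} \neq {\mathsf{0}}^{n}$. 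Pairing the first with the fourth and the second with the third gives the two biconditionals $\overleftarrow{\sigma_{1}} = {\mathsf{0}}^{n} \Leftrightarrow \overrightarrow{\sigma_{2}} = {\mathsf{0}}^{n}$ and $\overrightarrow{\sigma_{1}} = {\mathsf{0}}^{n} \Leftrightarrow \overleftarrow{\sigma_{2}} = {\mathsf{0}}^{n}$.

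To finish, I would split on $\sigma_{1}$: if $\overleftarrow{\sigma_{1}} = {\mathsf{0}}^{n}$ then $\overrightarrow{\sigma_{1}} \neq {\mathsf{0}}^{n}$, the first biconditional gives $\overrightarrow{\sigma_{2}} = {\mathsf{0}}^{n}$, and since $\sigma_{2}$ is a distribution $\overleftarrow{\sigma_{2}} \neq {\mathsf{0}}^{n}$ --- this is {\sf (C.2)}; if $\overrightarrow{\sigma_{1}} = {\mathsf{0}}^{n}$ then symmetrically $\overleftarrow{\sigma_{1}} \neq {\mathsf{0}}^{n}$, $\overleftarrow{\sigma_{2}} = {\mathsf{0}}^{n}$ and $\overrightarrow{\sigma_{2}} \neq {\mathsf{0}}^{n}$ --- this is {\sf (C.3)}; and otherwise both components of $\sigma_{1}$ are nonzero, whence by the two biconditionals both components of $\sigma_{2}$ are nonzero as well --- this is {\sf (C.1)}. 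The three cases are mutually exclusive because $\overleftarrow{\sigma_{1}}$ and $\overrightarrow{\sigma_{1}}$ cannot both vanish. The only place that needs real care is the second paragraph: matching each of the four deviations to the correct block of $\mathsf{S}$, i.e.\ to the correct line of Observation~\ref{vittorio 2}, and hence to the correct component of the opponent's strategy; everything else is immediate. (As a sanity check, Observation~\ref{vittorio 1} already forces ${\widetilde{{\mathsf{U}}}}_{1}(\bm{\sigma}) > 0$ to imply $(\overleftarrow{\sigma_{1}} \neq {\mathsf{0}}^{n} \wedge \overrightarrow{\sigma_{2}} \neq {\mathsf{0}}^{n})$ or $(\overrightarrow{\sigma_{1}} \neq {\mathsf{0}}^{n} \wedge \overleftarrow{\sigma_{2}} \neq {\mathsf{0}}^{n})$, a disjunction consistent with all three cases.)
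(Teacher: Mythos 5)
Your proof is correct and follows essentially the same route as the paper's: the same use of Lemmas~\ref{little park kafe}, \ref{park kafe} and \ref{basic property of mixed nash equilibria} together with Observation~\ref{vittorio 2} to show that a supported strategy in one block of a player forces the opposite block of the opponent to be nonzero, which is exactly the paper's implications {\sf (I.9)}--{\sf (I.12)} in contrapositive form. Your version is a mild streamlining, since you observe that the paper's auxiliary implications {\sf (I.5)}--{\sf (I.8)} (proved there via Observation~\ref{vittorio 1}) are not needed once the two biconditionals and the trivial ``distribution'' implications are in hand.
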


\begin{proof}
Since $\sigma_{1}$ and
$\sigma_{2}$
are mixed strategies,
the following four implications
hold trivially:
\underline{{\sf{(I.1)}}
$\overleftarrow{\sigma_{1}}
 =
 {\mathsf{0}}^{n}$
$\Rightarrow$
$\overrightarrow{\sigma_{1}}
 \neq
 {\mathsf{0}}^{n}$};
\underline{{\sf{(I.2)}}
$\overrightarrow{\sigma_{1}}
 =
 {\mathsf{0}}^{n}$
$\Rightarrow$
$\overleftarrow{\sigma_{1}}
 \neq
 {\mathsf{0}}^{n}$};
\underline{{\sf{(I.3)}}
$\overleftarrow{\sigma_{2}}
 =
 {\mathsf{0}}^{n}$
$\Rightarrow$
$\overrightarrow{\sigma_{2}}
 \neq
 {\mathsf{0}}^{n}$};
\underline{{\sf{(I.4)}}
$\overrightarrow{\sigma_{2}}
 =
 {\mathsf{0}}^{n}$
$\Rightarrow$
$\overleftarrow{\sigma_{2}}
 \neq
 {\mathsf{0}}^{n}$}.

We now prove:
\underline{{\sf{(I.5)}}
$\overleftarrow{\sigma_{1}}
 =
 {\mathsf{0}}^{n}$
$\Rightarrow$
$\overleftarrow{\sigma_{2}}
 \neq
 {\mathsf{0}}^{n}$:}
Assume,
by way of contradiction,
that
$\overleftarrow{\sigma_{1}}
 =
 {\mathsf{0}}^{n}$
and
$\overleftarrow{\sigma_{2}}
 =
 {\mathsf{0}}^{n}$.
Then,
no profile 
$\left\langle s_{1}, s_{2}
 \right\rangle$
is supported in
$\left\langle \overleftarrow{\sigma_{1}},
              \overrightarrow{\sigma_{2}}
 \right\rangle$
and
no profile 
$\left\langle s_{1}, s_{2}
 \right\rangle$
is supported in
$\left\langle \overrightarrow{\sigma_{1}},
              \overleftarrow{\sigma_{2}}
 \right\rangle$.
By the formula for
$\textcolor{black}{{\widetilde{{\mathsf{U}}}}}_{1}(\bm{\sigma})$
in Observation~\ref{vittorio 1} {\sf (Eq.\,1)},
these imply that
$\textcolor{black}{{\widetilde{{\mathsf{U}}}}}_{1}(\bm{\sigma})
 =
 0$.
By Lemmas~\ref{park kafe}
and~\ref{little park kafe},
$\textcolor{black}{\widetilde{{{\mathsf{U}}}}}_{1}(\bm{\sigma})
 >
 0$.
A contradiction.
In a corresponding way,
the following three implications are proved:
\underline{{\sf{(I.6)}}
$\overrightarrow{\sigma_{1}}
 =
 {\mathsf{0}}^{n}$
$\Rightarrow$
$\overrightarrow{\sigma_{2}}
 \neq
 {\mathsf{0}}^{n}$};
\underline{{\sf{(I.7)}}
$\overleftarrow{\sigma_{2}}
 =
 {\mathsf{0}}^{n}$
$\Rightarrow$
$\overleftarrow{\sigma_{1}}
 \neq
 {\mathsf{0}}^{n}$}; 
\underline{{\sf{(I.8)}}
$\overrightarrow{\sigma_{2}}
 =
 {\mathsf{0}}^{n}$
$\Rightarrow$
$\overrightarrow{\sigma_{1}}
 \neq
 {\mathsf{0}}^{n}$}.

We finally prove:
\underline{{\sf{(I.9)}}
$\overleftarrow{\sigma_{1}}
 =
 {\mathsf{0}}^{n}$
$\Rightarrow$
$\overrightarrow{\sigma_{2}}
 =
 {\mathsf{0}}^{n}$}.
Assume,
by way of contradiction,
that
$\overleftarrow{\sigma_{1}}
 =
 {\mathsf{0}}^{n}$
and
$\overrightarrow{\sigma_{2}}
 \neq
 {\mathsf{0}}^{n}$.
Then,
there is a strategy
$s_{2} \in [2n] \setminus [n]$
with
$\overrightarrow{\sigma_{2}}(s_{2} - n)
 >
 0$;
so,
$s_{2}
 \in
 {\mathsf{Supp}}(\sigma_{2})$. 
Since
$\overleftarrow{\sigma_{1}}
 =
 {\mathsf{0}}^{n}$,
it follows from
the formula for
$\textcolor{black}{{\widetilde{{\mathsf{U}}}}}_{2}\left( \bm{\sigma}
                 \right)$
in Observation~\ref{vittorio 2} {\sf (Eq. 2)}
with $\bm{\sigma} =
          \bm{\sigma}_{-2} \diamond s_{2}$                 
that
$\textcolor{black}{{\widetilde{{\mathsf{U}}}}}_{2}\left( \bm{\sigma}_{-2}
                        \diamond
                        s_{2}
                 \right)
 =
 0$.
Since $\bm{\sigma}$ is a Nash equilibrium
and
$s_{2}
 \in
 {\mathsf{Supp}}(\sigma_{2})$,
Lemma~\ref{basic property of mixed nash equilibria}
(Condition {\sf (1)})
implies that
$\textcolor{black}{{\widetilde{{\mathsf{U}}}}}_{2}\left( \bm{\sigma}
                 \right)
 =
 0$.
By Lemmas~\ref{park kafe} and~\ref{little park kafe},
$\textcolor{black}{{\widetilde{{\mathsf{U}}}}}_{2}\left( \bm{\sigma}
                 \right)
 >
 0$.
A contradiction.
In a corresponding way,
the following three implications are proved:
\underline{{\sf{(I.10)}}
$\overrightarrow{\sigma_{1}}
 =
 {\mathsf{0}}^{n}$
$\Rightarrow$
$\overleftarrow{\sigma_{2}}
 =
 {\mathsf{0}}^{n}$};
\underline{{\sf{(I.11)}}
$\overleftarrow{\sigma_{2}}
 =
 {\mathsf{0}}^{n}$
$\Rightarrow$
$\overrightarrow{\sigma_{1}}
 =
 {\mathsf{0}}^{n}$}; 
\underline{{\sf{(I.12)}}
$\overrightarrow{\sigma_{2}}
 =
 {\mathsf{0}}^{n}$
$\Rightarrow$
$\overleftarrow{\sigma_{1}}
 =
 {\mathsf{0}}^{n}$}.

\noindent
The claim follows now
from the twelve implications.
\end{proof}

\subsection{The Balanced Mixture}
\label{the gkt product}

\noindent
Fix a win-lose bimatrix game
$\left\langle {\mathsf{R}},
                    {\mathsf{C}}
  \right\rangle$
with the positive utility property. 
We revisit
a composition
among either
{\it (i)} 
a pair of Nash equilibria
for $\left\langle {\mathsf{R}},
                  {\mathsf{C}}
     \right\rangle$
or
{\it (ii)}
a Nash equilibrium
for $\left\langle {\mathsf{R}},
                  {\mathsf{C}}
     \right\rangle$
and the tuple of null vectors
$\langle {\mathsf{0}}^{n},
              {\mathsf{0}}^{n}
 \rangle$
from~\cite{JPT86}. 

\begin{center}
\fbox{
\begin{minipage}{6.0in}
\begin{definition}
\label{gkt product}
Given
${\bm{\rho}}
 =
 \left\langle \rho_{1},
                   \rho_{2}
 \right\rangle
 \in
 {\cal NE}(\left\langle {\mathsf{R}},
                        {\mathsf{C}}
           \right\rangle)
 \cup
 \{ \langle {\mathsf{0}}^{n},
                 {\mathsf{0}}^{n}
    \rangle
 \}$
and
\textcolor{black}{${\bm{\tau}}
 =
 \left\langle \tau_{1},
                   \tau_{2}
 \right\rangle                  
 \in
 {\cal NE}\left( \left\langle {\mathsf{R}},
                              {\mathsf{C}}
                 \right\rangle
          \right)$,}
the \textcolor{black}{{\em {\emph{\textbf{balanced mixture}}}}}
of $\bm{\rho}$
and 
\textcolor{black}{$\bm{\tau}$}
is the pair
{
\small
\textcolor{black}{
\begin{eqnarray*}
           {\bm\rho} \ast {\bm\tau}
& := & \left\langle \frac{\overline{{\mathsf{U}}}_{1}({\bm{\tau}})
                                      }
                                     {\overline{{\mathsf{U}}}_{1}({\bm{\tau}})
                                       +
                                       \overline{{\mathsf{U}}}_{2}({\bm{\rho}})
                                      }\,
                              \rho_{1}
                              \circ
                              \frac{\overline{{\mathsf{U}}}_{2}({\bm{\rho}})
                                      }
                                     {\overline{{\mathsf{U}}}_{1}({\bm{\tau}})
                                      +
                                       \overline{{\mathsf{U}}}_{2}({\bm{\rho}})
                                      }\, 
                              \tau_{2}\, ,     
                              \frac{\overline{{\mathsf{U}}}_{1}({\bm{\rho}})
                                      }
                                     {\overline{{\mathsf{U}}}_{1}({\bm{\rho}})
                                      +
                                       \overline{{\mathsf{U}}}_{2}({\bm{\tau}})
                                      }\,
                              \tau_{1}
                              \circ     
                              \frac{\overline{{\mathsf{U}}}_{2}({\bm{\tau}})
                                      }
                                      {\overline{{\mathsf{U}}}_{1}({\bm{\rho}})
                                       +
                                       \overline{{\mathsf{U}}}_{2}({\bm{\tau}})
                                      }\,
                              \rho_{2}     
            \right\rangle\, ,
\end{eqnarray*}
}
}
and
{
\small
\textcolor{black}{
\begin{eqnarray*}
           {\bm\tau} \ast {\bm\rho}
& := & \left\langle \frac{\overline{{\mathsf{U}}}_{1}({\bm{\rho}})
                                      }
                                      {\overline{{\mathsf{U}}}_{1}({\bm{\rho}})
                                       +
                                       \overline{{\mathsf{U}}}_{2}({\bm{\tau}})
                                      }\,
                              \tau_{1}
                              \circ  
                              \frac{\overline{{\mathsf{U}}}_{2}({\bm{\tau}})
                                      }
                                     {\overline{{\mathsf{U}}}_{1}({\bm{\rho}})
                                      +
                                      \overline{{\mathsf{U}}}_{2}({\bm{\tau}})
                                     }\, 
                             \rho_{2}\, ,     
                            \frac{\overline{{\mathsf{U}}}_{1}({\bm{\tau}})
                                    }
                                   {\overline{{\mathsf{U}}}_{1}({\bm{\tau}})
                                    +
                                    \overline{{\mathsf{U}}}_{2}({\bm{\rho}})
                                   }\,
                            \rho_{1}
                            \circ     
                            \frac{\overline{{\mathsf{U}}}_{2}({\bm{\rho}})
                                    }
                                   {\overline{{\mathsf{U}}}_{1}({\bm{\tau}})
                                    +
                                     \overline{{\mathsf{U}}}_{2}({\bm{\rho}})
                                    }\,
                           \tau_{2}     
       \right\rangle\, .
\end{eqnarray*}
}
}
\end{definition}
\end{minipage}
}
\end{center}

\noindent
Roughly speaking,
the balanced mixture
doubles the strategy set of each player
by concatenating it to itself;
it uses 
a suitable \textcolor{black}{combination}
of the probabilities
in ${\bm{\rho}}$ and ${\bm{\tau}}$
and concatenates together
the resulting mixed strategies
without modifying the players' supports
in ${\bm{\rho}}$ and ${\bm{\tau}}$.
Since
$\left\langle {\mathsf{R}},
                    {\mathsf{C}}
 \right\rangle$
has the positive utility property,
Lemma~\ref{park kafe} implies that
both
$\textcolor{black}{\overline{{\mathsf{U}}}}_{1}({\bm{\tau}})
 >
 0$
and
$\textcolor{black}{\overline{{\mathsf{U}}}}_{2}({\bm{\tau}})
 >
 0$;
since
$\left\langle {\mathsf{R}},
                     {\mathsf{C}}
 \right\rangle$ 
is win-lose,
both
$\textcolor{black}{\overline{{\mathsf{U}}}}_{1}({\bm{\rho}})
 \geq
 0$
and
$\textcolor{black}{\overline{{\mathsf{U}}}}_{2}({\bm{\rho}})
 \geq
 0$.
It follows that both
${\bm\rho} \ast{\bm\tau}$
and
${\bm\tau} \ast {\bm\rho}$
are well-defined.
It is straightforward to see that
${\bm\rho} \ast{\bm\tau}$
and
${\bm\tau} \ast {\bm\rho}$
are mixed profiles
for the 
\textcolor{black}{win-lose ${\mathsf{GHR}}$-symmetrization.} 
Thus,
a balanced mixture maps 
\textcolor{black}{the pair
of ${\bm{\rho}}$
and
${\bm{\tau}}$}
to the pair of mixed profiles      
${\bm{\rho}} \ast {\bm{\tau}}$
and
${\bm{\tau}} \ast {\bm{\rho}}$
\textcolor{black}{for the win-lose ${\mathsf{GHR}}$-symmetrization.} 
(We shall provide an inverse map
in Theorem~\ref{fromsymmetrictobasic2}.)

Note that
the balanced mixture is an injective map
as long as the supports 
(but not the probabilities)
are concerned.
Note that when
${\bm{\rho}} = \left\langle 0^{n}, 0^{n}
                          \right\rangle$,
then
${\bm{\rho}} \ast {\bm{\tau}}
  =
  \left\langle 0^{n} \circ \tau_{2},
                    \tau_{1} \circ 0^{n}
  \right\rangle$
and
${\bm{\tau}} \ast {\bm{\rho}}
  =
  \left\langle \tau_{1} \circ 0^{n},
                    0^{n} \circ \tau_{2}
  \right\rangle$;
so,
in 
${\bm{\rho}} \ast {\bm{\tau}}$
and
${\bm{\tau}} \ast {\bm{\rho}}$,
the mixed strategies $\tau_{1}$ and $\tau_{2}$
are swapped between the two players,
while each is concatenated
with $0^{n}$
from left and right,
respectively.
By abuse of notation,
we shall often refer to each of
${\bm\rho} \ast{\bm\tau}$
and
${\bm\tau} \ast {\bm\rho}$
as a balanced mixture.
Note that                                              
$\bm{\rho} \ast \bm{\tau}$ 
together with
$\bm{\tau} \ast \bm{\rho}$
form a {\it symmetric} pair of mixed profiles:
setting
$\bm{\phi} := \bm{\rho}
                       \ast
                       \bm{\tau}$
with $\bm{\phi} =
         \langle \phi_{1}, \phi_{2}
         \rangle$
yields that
$\bm{\tau} 
  \ast
  \bm{\rho}
  =
  \langle \phi_{2}, \phi_{1}
  \rangle$.
Clearly,
${\bm{\rho}}
  \ast
  {\bm{\tau}}$
is uniform
(resp., symmetric)
if and only if
${\bm{\tau}}
  \ast
  {\bm{\rho}}$ is.
We observe a simple uniformity property
of the balanced mixture: 

\begin{observation}
\label{facebook observation}
If 
\textcolor{black}{{\sf (1)}
${\bm{\rho}}
  =
  \langle {\mathsf{0}}^{n},
              {\mathsf{0}}^{n}
  \rangle$,
or
{\sf (2)}}              
either 
${\bm{\rho}}$
or ${\bm{\tau}}$
is non-uniform,
then both
${\bm{\rho}} \ast {\bm{\tau}}$
and
${\bm{\tau}} \ast {\bm{\rho}}$
are non-uniform;
thus,
if ${\bm{\rho}} \ast {\bm{\tau}}$
and
${\bm{\tau}} \ast {\bm{\rho}}$
are uniform,
then both 
${\bm{\rho}}$
and
${\bm{\tau}}$
are uniform.
\end{observation}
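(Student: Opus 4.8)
The plan is to read off both claims directly from the explicit formulas in Definition~\ref{gkt product}, by tracking which probabilities appear in the four components of ${\bm{\rho}}\ast{\bm{\tau}}$ and ${\bm{\tau}}\ast{\bm{\rho}}$. Recall that each of these is a pair of concatenations of the form $(\lambda\,\rho_1)\circ(\mu\,\tau_2)$, $(\lambda'\,\tau_1)\circ(\mu'\,\rho_2)$, etc., where $\lambda,\mu,\lambda',\mu'$ are the positive scalars built from $\overline{{\mathsf{U}}}_1,\overline{{\mathsf{U}}}_2$ applied to ${\bm{\rho}}$ and ${\bm{\tau}}$. The key structural observation I would make first: a concatenation $f\circ g$ of two scaled mixed strategies is uniform only if, on its support, all the nonzero entries (coming from both blocks) are equal; in particular, if the two blocks $f$ and $g$ are both nonzero, then uniformity of $f\circ g$ forces $f$ to be uniform on its support and $g$ to be uniform on its support (with the same common value across the two blocks). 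This is the elementary fact that drives everything.

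For case {\sf (1)}, where ${\bm{\rho}}=\langle{\mathsf{0}}^n,{\mathsf{0}}^n\rangle$: here ${\bm{\rho}}\ast{\bm{\tau}}=\langle {\mathsf{0}}^n\circ\tau_2,\ \tau_1\circ{\mathsf{0}}^n\rangle$ (as noted in the text right after Definition~\ref{gkt product}). The first player's mixed strategy ${\mathsf{0}}^n\circ\tau_2$ puts probability zero on its first $n$ strategies and probability $\tau_2(s)>0$ on exactly the strategies $n+s$ with $s\in{\mathsf{Supp}}(\tau_2)$. By Lemma~\ref{park kafe} (and Lemma~\ref{little park kafe}, since $\langle{\mathsf{R}},{\mathsf{C}}\rangle$ has the positive utility property, so $\tau_2\neq{\mathsf{0}}^n$), the support of $\tau_2$ is nonempty; so there is at least one strategy in $[n]$ outside the support and at least one in $[2n]\setminus[n]$ inside it, hence the mixed strategy is \emph{non-fully-mixed}, and a non-fully-mixed mixed strategy with support of size $<2n$ cannot be uniform unless — wait, the right statement is: a mixed strategy is uniform if all supported probabilities are equal, and ${\mathsf{0}}^n\circ\tau_2$ would be uniform iff $\tau_2$ is uniform; but the text already commits to ``non-uniform'' meaning that the \emph{mixed profile} ${\bm{\rho}}\ast{\bm{\tau}}$ is declared non-uniform whenever ${\bm{\rho}}=\langle{\mathsf{0}}^n,{\mathsf{0}}^n\rangle$. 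The correct justification is therefore that a mixed strategy built by concatenating ${\mathsf{0}}^n$ with anything is, by convention/definition in this paper, treated as non-uniform because it has a zero entry interleaved in a nontrivial way — more precisely, I will argue that since ${\bm{\tau}}$ is a genuine Nash equilibrium of a game with the positive utility property, $\tau_2\neq{\mathsf{0}}^n$ forces ${\mathsf{Supp}}(\tau_2)\neq\emptyset$, and then I appeal to the observation that ${\mathsf{0}}^n\circ\tau_2$ having support contained in $[2n]\setminus[n]$ cannot coincide with the uniform distribution on its own support \emph{and simultaneously} be consistent with uniformity of the whole profile together with $\tau_1\circ{\mathsf{0}}^n$, whose support lies in $[n]$ — so at least I would check whether the intended reading is that \emph{each component} must be uniform, in which case I simply note $\tau_2$ uniform is not assumed and thus need not hold; but the cleanest route is to observe that in case {\sf (1)} the resulting strategies are $\tau_i$ padded by zeros, and a padded distribution is uniform (in the paper's sense) iff the underlying $\tau_i$ is uniform, which is not guaranteed — actually I will just invoke that the text's Observation follows because ${\mathsf{0}}^n\circ\tau_2$ is uniform iff $\tau_2$ is, and then handle this together with case {\sf (2)}. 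I would state case {\sf (1)} as: if $\tau_2$ (resp.\ $\tau_1$) fails to be uniform, done; if it is uniform, then ${\mathsf{0}}^n\circ\tau_2$ still need not make the profile uniform unless $\tau_1\circ{\mathsf{0}}^n$ matches — and since the two live on disjoint index blocks with a common value, the profile \emph{is} uniform exactly when both $\tau_i$ are uniform with equal common probability; the Observation as literally stated (``if {\sf (1)}, then both are non-uniform'') must therefore rest on an additional degeneracy, which I suspect is that when ${\bm{\rho}}=\langle{\mathsf{0}}^n,{\mathsf{0}}^n\rangle$ the scalars in Definition~\ref{gkt product} degenerate; I will flag this as the one point to double-check against the authors' conventions.

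For case {\sf (2)}, suppose without loss of generality ${\bm{\rho}}$ is non-uniform, say $\rho_1$ is non-uniform: there exist $s,t\in{\mathsf{Supp}}(\rho_1)$ with $\rho_1(s)\neq\rho_1(t)$. In ${\bm{\rho}}\ast{\bm{\tau}}$ the first player's strategy is $\dfrac{\overline{{\mathsf{U}}}_1({\bm{\tau}})}{\overline{{\mathsf{U}}}_1({\bm{\tau}})+\overline{{\mathsf{U}}}_2({\bm{\rho}})}\,\rho_1\circ\dfrac{\overline{{\mathsf{U}}}_2({\bm{\rho}})}{\overline{{\mathsf{U}}}_1({\bm{\tau}})+\overline{{\mathsf{U}}}_2({\bm{\rho}})}\,\tau_2$; the first block is a positive multiple of $\rho_1$ (the scalar is positive since $\overline{{\mathsf{U}}}_1({\bm{\tau}})>0$ by Lemma~\ref{park kafe} and $\overline{{\mathsf{U}}}_2({\bm{\rho}})\geq0$), so it still assigns unequal probabilities to $s$ and $t$, both of which are in the support; hence this mixed strategy is non-uniform, so ${\bm{\rho}}\ast{\bm{\tau}}$ is non-uniform, and ${\bm{\tau}}\ast{\bm{\rho}}$ is then non-uniform by the symmetric-pair property recorded just before the Observation. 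If instead it is $\rho_2$, $\tau_1$, or $\tau_2$ that is non-uniform, the same argument applies to whichever component it lands in (each appears, scaled by a positive constant, as a block of one of the two entries of ${\bm{\rho}}\ast{\bm{\tau}}$). The contrapositive — if ${\bm{\rho}}\ast{\bm{\tau}}$ and ${\bm{\tau}}\ast{\bm{\rho}}$ are uniform then ${\bm{\rho}}$ and ${\bm{\tau}}$ are uniform — is immediate. The main obstacle is purely bookkeeping: being careful that every one of $\rho_1,\rho_2,\tau_1,\tau_2$ really does surface, up to a strictly positive rescaling that preserves support and preserves equality/inequality of probabilities, as a full block of some concatenation in ${\bm{\rho}}\ast{\bm{\tau}}\cup{\bm{\tau}}\ast{\bm{\rho}}$; once that correspondence is laid out explicitly (a four-line table), both implications fall out, and the only genuine subtlety is reconciling the literal phrasing of case {\sf (1)} with the definition's degenerate scalars.
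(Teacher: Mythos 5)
Your case {\sf (2)} argument is essentially the right (and the only available) justification: the paper states the Observation without proof, intending it to be read off Definition~\ref{gkt product}, and your reading is the correct one — each of $\rho_{1},\rho_{2},\tau_{1},\tau_{2}$ occurs as a whole block of some entry of ${\bm{\rho}}\ast{\bm{\tau}}$, and when both ${\bm{\rho}}$ and ${\bm{\tau}}$ are Nash equilibria Lemma~\ref{park kafe} makes all four scalars strictly positive, so a positive rescaling preserves the support and preserves the inequality of two supported probabilities; the ``thus'' part is then the contrapositive, together with the symmetric-pair remark for ${\bm{\tau}}\ast{\bm{\rho}}$. One bookkeeping caveat you gloss over: your blanket claim that every component is ``scaled by a positive constant'' fails when ${\bm{\rho}}=\langle {\mathsf{0}}^{n},{\mathsf{0}}^{n}\rangle$, since then $\overline{{\mathsf{U}}}_{1}({\bm{\rho}})=\overline{{\mathsf{U}}}_{2}({\bm{\rho}})=0$ and the scalars in front of $\tau_{1}$ and $\tau_{2}$ vanish; in that subcase you must argue from the paper's explicit form $\langle {\mathsf{0}}^{n}\circ\tau_{2},\ \tau_{1}\circ {\mathsf{0}}^{n}\rangle$ instead of from the printed coefficients. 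With that fix, case {\sf (2)} and the contrapositive are complete.

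The genuine gap is case {\sf (1)}: you never prove it — you end by ``flagging'' it, after a long self-retracting discussion that commits to nothing. A proof must either establish the claim or state precisely which additional convention it rests on; your paragraph does neither. And the difficulty you sense is real: under the paper's definition of uniformity (equal probabilities on the \emph{support}), a zero-padded copy of a uniform strategy is still uniform, so $\langle {\mathsf{0}}^{n}\circ\tau_{2},\ \tau_{1}\circ {\mathsf{0}}^{n}\rangle$ is uniform whenever ${\bm{\tau}}$ is, and case {\sf (1)} does not follow from Definition~\ref{gkt product} plus the stated notion of uniformity; taking the printed scalars literally does not rescue it either, since with ${\bm{\rho}}=\langle {\mathsf{0}}^{n},{\mathsf{0}}^{n}\rangle$ they degenerate and produce all-zero vectors, not a mixed profile. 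So the missing piece cannot be obtained by ``just computing''; it requires an extra convention (e.g., declaring the mixtures with the null pair non-uniform by fiat, or reading uniformity over the full doubled strategy set), which you would have to identify and adopt explicitly. This is not a cosmetic omission: case {\sf (1)} is exactly what the paper later invokes (for instance, to declare ${\bm{\sigma}}^{5}$ and ${\bm{\sigma}}^{6}$ non-uniform in the satisfiable branch of {\it Group II} in Theorem~\ref{mainextended}), whereas the unsatisfiable branch only needs your case {\sf (2)} contrapositive. As submitted, the proposal proves the second half of the Observation but leaves its first half unestablished.
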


We further prove
a simple symmetry property
of the balanced mixture:

\begin{lemma}
\label{symmetry observation}
A balanced mixture is symmetric
if and only if
it is the balanced mixture
${\bm{\tau}} \ast {\bm{\tau}}$
for some
Nash equilibrium
${\bm{\tau}}
  \in
 {\cal NE}(\left\langle {\mathsf{R}},
                                   {\mathsf{C}}
           \right\rangle)$.
\end{lemma}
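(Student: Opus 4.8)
The plan is to prove both directions directly from Definition~\ref{gkt product}, exploiting the explicit formula for the two entries of a balanced mixture together with the symmetric-pair relation noted just before the lemma: if $\bm{\phi} := \bm{\rho} \ast \bm{\tau}$ with $\bm{\phi} = \langle \phi_{1}, \phi_{2}\rangle$, then $\bm{\tau} \ast \bm{\rho} = \langle \phi_{2}, \phi_{1}\rangle$. First I would dispose of the easy (``if'') direction: when $\bm{\rho} = \bm{\tau}$, the two coefficient-pairs appearing in the first and second entry of $\bm{\rho} \ast \bm{\tau}$ coincide, because $\overline{{\mathsf{U}}}_{1}(\bm{\tau})$ and $\overline{{\mathsf{U}}}_{2}(\bm{\tau})$ play symmetric roles; writing out the definition with $\bm{\rho}$ replaced by $\bm{\tau}$ shows that the first entry equals $\frac{\overline{{\mathsf{U}}}_{1}(\bm{\tau})}{\overline{{\mathsf{U}}}_{1}(\bm{\tau}) + \overline{{\mathsf{U}}}_{2}(\bm{\tau})}\,\tau_{1} \circ \frac{\overline{{\mathsf{U}}}_{2}(\bm{\tau})}{\overline{{\mathsf{U}}}_{1}(\bm{\tau}) + \overline{{\mathsf{U}}}_{2}(\bm{\tau})}\,\tau_{2}$ and the second entry equals exactly the same concatenation, so $\bm{\tau} \ast \bm{\tau}$ is symmetric by inspection.

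For the (``only if'') direction I would start from a symmetric balanced mixture, say $\bm{\rho} \ast \bm{\tau} = \langle \psi, \psi\rangle$ for some mixed strategy $\psi$ on $[2n]$, and compare this with the explicit formula. Equating the first entry $\frac{\overline{{\mathsf{U}}}_{1}(\bm{\tau})}{\overline{{\mathsf{U}}}_{1}(\bm{\tau}) + \overline{{\mathsf{U}}}_{2}(\bm{\rho})}\,\rho_{1} \circ \frac{\overline{{\mathsf{U}}}_{2}(\bm{\rho})}{\overline{{\mathsf{U}}}_{1}(\bm{\tau}) + \overline{{\mathsf{U}}}_{2}(\bm{\rho})}\,\tau_{2}$ with the second entry $\frac{\overline{{\mathsf{U}}}_{1}(\bm{\rho})}{\overline{{\mathsf{U}}}_{1}(\bm{\rho}) + \overline{{\mathsf{U}}}_{2}(\bm{\tau})}\,\tau_{1} \circ \frac{\overline{{\mathsf{U}}}_{2}(\bm{\tau})}{\overline{{\mathsf{U}}}_{1}(\bm{\rho}) + \overline{{\mathsf{U}}}_{2}(\bm{\tau})}\,\rho_{2}$ componentwise: the ``$\overleftarrow{\cdot}$'' components give that a positive scalar multiple of $\rho_{1}$ equals a positive scalar multiple of $\tau_{1}$, hence $\rho_{1} = \tau_{1}$ (both being probability vectors, the scalars must be equal and the vectors identical); similarly the ``$\overrightarrow{\cdot}$'' components give $\tau_{2} = \rho_{2}$. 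Thus $\bm{\rho} = \bm{\tau}$, and the symmetric balanced mixture is $\bm{\tau} \ast \bm{\tau}$. I also need to handle the degenerate subcase $\bm{\rho} = \langle {\mathsf{0}}^{n}, {\mathsf{0}}^{n}\rangle$ separately: there $\bm{\rho} \ast \bm{\tau} = \langle {\mathsf{0}}^{n} \circ \tau_{2}, \tau_{1} \circ {\mathsf{0}}^{n}\rangle$, whose two entries have disjoint supports (one lives in the last $n$ coordinates, the other in the first $n$), so they cannot be equal — no symmetric balanced mixture arises this way, consistent with the claimed characterization.

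The only mildly delicate point — and the step I expect to require the most care — is the inference ``$\alpha\,\rho_{1} = \beta\,\tau_{1}$ with $\alpha,\beta>0$ and $\rho_{1},\tau_{1}$ probability vectors implies $\rho_{1}=\tau_{1}$.'' This needs the positivity of all the coefficients, which is guaranteed exactly by the positive utility property of $\langle {\mathsf{R}},{\mathsf{C}}\rangle$ via Lemma~\ref{park kafe} (for the $\bm{\tau}$-utilities, which are strictly positive) and by $\langle {\mathsf{R}},{\mathsf{C}}\rangle$ being win-lose (for the $\bm{\rho}$-utilities, which are $\geq 0$); so I must first record that in the non-degenerate case $\bm{\rho} \in {\cal NE}(\langle {\mathsf{R}},{\mathsf{C}}\rangle)$ all four of $\overline{{\mathsf{U}}}_{1}(\bm{\rho}), \overline{{\mathsf{U}}}_{2}(\bm{\rho}), \overline{{\mathsf{U}}}_{1}(\bm{\tau}), \overline{{\mathsf{U}}}_{2}(\bm{\tau})$ are strictly positive, making every scalar appearing in the formula strictly positive, after which summing coordinates of $\alpha\,\rho_{1} = \beta\,\tau_{1}$ forces $\alpha = \beta$ and hence $\rho_{1} = \tau_{1}$. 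With that lemma in hand the rest is routine bookkeeping with the definition.
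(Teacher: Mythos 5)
Your proposal is correct and follows essentially the same route as the paper's proof: equate the left and right components of the two entries of a symmetric balanced mixture, use the strict positivity of all four expected utilities (from the positive utility property via Lemma~\ref{park kafe}) to see that the scalars are positive, and conclude $\rho_{1}=\tau_{1}$, $\rho_{2}=\tau_{2}$ because probability vectors force the scalars to coincide. Your explicit disjoint-supports treatment of the degenerate case $\bm{\rho}=\langle {\mathsf{0}}^{n},{\mathsf{0}}^{n}\rangle$ is in fact a slightly more careful rendering of the paper's terse claim that $\rho_{1},\rho_{2}\neq{\mathsf{0}}^{n}$, so there is no substantive difference.
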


\begin{proof}
By Definition~\ref{gkt product},
the balanced mixture
${\bm{\tau}}
  \ast
  {\bm{\tau}}$
is symmetric
for
${\bm{\tau}}
  \in
  {\mathcal{NE}}(\left\langle {\mathsf{R}},
                                              {\mathsf{C}}
                            \right\rangle)$.
In the \textcolor{black}{other} direction,
consider the symmetric balanced mixture
${\bm{\rho}}
  \ast
  {\bm{\tau}}$,
with
${\bm{\rho}}
  \in
  {\cal NE}(\left\langle {\mathsf{R}},
                                    {\mathsf{C}}
                 \right\rangle)
  \cup
  \left\{ \left\langle {\mathsf{0}}^{n},
                               {\mathsf{0}}^{n}
            \right\rangle
  \right\}$                             
and
${\bm{\tau}}
  \in
  {\cal NE}(\left\langle {\mathsf{R}},
                                    {\mathsf{C}}
                 \right\rangle)$;  
we shall prove that
${\bm{\rho}}
  =
  {\bm{\tau}}
  \in
  {\cal NE}(\left\langle {\mathsf{R}},
                                    {\mathsf{C}}
                 \right\rangle)$.
Since ${\bm{\rho}}
            \ast
            {\bm{\tau}}$
is symmetric,
we have that
{
\small
\textcolor{black}{
\begin{eqnarray*}
\frac{\textstyle \overline{{\mathsf{U}}}_{1}({\bm{\tau}})
          }
         {\textstyle \overline{{\mathsf{U}}}_{1}({\bm{\tau}})
                           +
                            \overline{{\mathsf{U}}}_{2}({\bm{\rho}})
         }\,
  \rho_{1}
&  = &
  \frac{\textstyle \overline{{\mathsf{U}}}_{1}({\bm{\rho}})
          }
         {\textstyle \overline{{\mathsf{U}}}_{1}({\bm{\rho}})
                           +
                            \overline{{\mathsf{U}}}_{2}({\bm{\tau}})
          }\,
  \tau_{1}
\end{eqnarray*}  
}
}
and
{
\small
\textcolor{black}{
\begin{eqnarray*}  
\frac{\textstyle \overline{{\mathsf{U}}}_{2}({\bm{\rho}})
          }
          {\textstyle \overline{{\mathsf{U}}}_{1}({\bm{\tau}})
                           +
                           \overline{{\mathsf{U}}}_{2}({\bm{\rho}})
           }\, 
 \tau_{2}     
& = &                             
 \frac{\textstyle \overline{{\mathsf{U}}}_{2}({\bm{\tau}})
         }
        {\textstyle \overline{{\mathsf{U}}}_{1}({\bm{\rho}})
                          +
                          \overline{{\mathsf{U}}}_{2}({\bm{\tau}})
         }\,
 \rho_{2}\, .            
\end{eqnarray*}     
}
}
This implies that
$\rho_{1} \neq {\mathsf{0}}^{n}$
and
$\rho_{2} \neq {\mathsf{0}}^{n}$,
so that
${\bm{\rho}}
  \neq
  \left\langle {\mathsf{0}}^{n},
                    {\mathsf{0}}^{n}
  \right\rangle$,
which implies that
${\bm{\rho}}
  \in
  {\cal NE}(\left\langle {\mathsf{R}},
                                    {\mathsf{C}}
                 \right\rangle)$.  
Since $\rho_{1}$, $\rho_{2}$, $\tau_{1}$ and $\tau_{2}$
are mixed strategies
with each summing up its probabilities to $1$,
it follows that
$\rho_{1} = \tau_{1}$
and
$\rho_{2} = \tau_{2}$;
so,
${\bm{\rho}}
  =
  {\bm{\tau}}$,
as needed.
An identical argument applies to
the balanced mixture
${\bm{\tau}}
  \ast
  {\bm{\rho}}$.  
\end{proof}

\noindent
\textcolor{black}{Finally, we observe:}

\begin{lemma}
\label{vittorio simplification for group iv}
\textcolor{black}{
Consider the pair of Nash equilibria
${\bm{\tau}},
  {\bm{\rho}}
  \in
  {\mathcal{NE}}\left( \left\langle {\mathsf{R}},
                              {\mathsf{C}}
                 \right\rangle
          \right)$.
Then,          
for each player $i \in [2]$,
\textcolor{black}{${\widetilde{{\mathsf{U}}}}_{i}\left( {\bm{\tau}}
                                       \ast
                                      {\bm{\rho}}
                             \right)
  <
  \max \{ {\overline{{\mathsf{U}}}}_{1}({\bm{\rho}}),
               {\overline{{\mathsf{U}}}}_{1}({\bm{\tau}}),
               {\overline{{\mathsf{U}}}}_{2}({\bm{\rho}}),
               {\overline{{\mathsf{U}}}}_{2}({\bm{\tau}})
         \}$.
}}                                         
\end{lemma}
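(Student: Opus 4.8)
The plan is to prove the bound by computing the two expected utilities ${\widetilde{{\mathsf{U}}}}_{1}({\bm{\tau}} \ast {\bm{\rho}})$ and ${\widetilde{{\mathsf{U}}}}_{2}({\bm{\tau}} \ast {\bm{\rho}})$ in closed form and then comparing them with the four quantities on the right-hand side. First I would spell out the balanced mixture ${\bm{\sigma}} := {\bm{\tau}} \ast {\bm{\rho}}$ according to Definition~\ref{gkt product}: its first entry has components $\overleftarrow{\sigma_{1}} = \frac{\overline{{\mathsf{U}}}_{1}({\bm{\rho}})}{\overline{{\mathsf{U}}}_{1}({\bm{\rho}}) + \overline{{\mathsf{U}}}_{2}({\bm{\tau}})}\,\tau_{1}$ and $\overrightarrow{\sigma_{1}} = \frac{\overline{{\mathsf{U}}}_{2}({\bm{\tau}})}{\overline{{\mathsf{U}}}_{1}({\bm{\rho}}) + \overline{{\mathsf{U}}}_{2}({\bm{\tau}})}\,\rho_{2}$, while its second entry has components $\overleftarrow{\sigma_{2}} = \frac{\overline{{\mathsf{U}}}_{1}({\bm{\tau}})}{\overline{{\mathsf{U}}}_{1}({\bm{\tau}}) + \overline{{\mathsf{U}}}_{2}({\bm{\rho}})}\,\rho_{1}$ and $\overrightarrow{\sigma_{2}} = \frac{\overline{{\mathsf{U}}}_{2}({\bm{\rho}})}{\overline{{\mathsf{U}}}_{1}({\bm{\tau}}) + \overline{{\mathsf{U}}}_{2}({\bm{\rho}})}\,\tau_{2}$.

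Next I would substitute these components into the two formulas of Observation~\ref{vittorio 1}. Each of the two double sums in those formulas factors out the relevant pair of normalizing scalars, leaving $\sum_{s_{1}, s_{2}} \tau_{1}(s_{1})\,\tau_{2}(s_{2})\,{\mathsf{R}}[s_{1}, s_{2}] = \overline{{\mathsf{U}}}_{1}({\bm{\tau}})$, $\sum_{s_{1}, s_{2}} \rho_{2}(s_{1})\,\rho_{1}(s_{2})\,{\mathsf{C}}[s_{2}, s_{1}] = \overline{{\mathsf{U}}}_{2}({\bm{\rho}})$ (the last identity by renaming the summation indices), and the two mirror identities involving ${\mathsf{C}}$ and ${\mathsf{R}}^{\mathrm{T}}$ for ${\widetilde{{\mathsf{U}}}}_{2}$. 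Setting $a = \overline{{\mathsf{U}}}_{1}({\bm{\tau}})$, $b = \overline{{\mathsf{U}}}_{2}({\bm{\rho}})$, $c = \overline{{\mathsf{U}}}_{1}({\bm{\rho}})$, $d = \overline{{\mathsf{U}}}_{2}({\bm{\tau}})$, the resulting expression for ${\widetilde{{\mathsf{U}}}}_{1}({\bm{\tau}} \ast {\bm{\rho}})$ is $\frac{cb}{(c+d)(a+b)}\,a + \frac{da}{(c+d)(a+b)}\,b = \frac{ab(c+d)}{(c+d)(a+b)} = \frac{ab}{a+b}$; symmetrically ${\widetilde{{\mathsf{U}}}}_{2}({\bm{\tau}} \ast {\bm{\rho}}) = \frac{cd}{c+d}$. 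That is, ${\widetilde{{\mathsf{U}}}}_{1}({\bm{\tau}} \ast {\bm{\rho}}) = \frac{\overline{{\mathsf{U}}}_{1}({\bm{\tau}})\,\overline{{\mathsf{U}}}_{2}({\bm{\rho}})}{\overline{{\mathsf{U}}}_{1}({\bm{\tau}}) + \overline{{\mathsf{U}}}_{2}({\bm{\rho}})}$ and ${\widetilde{{\mathsf{U}}}}_{2}({\bm{\tau}} \ast {\bm{\rho}}) = \frac{\overline{{\mathsf{U}}}_{1}({\bm{\rho}})\,\overline{{\mathsf{U}}}_{2}({\bm{\tau}})}{\overline{{\mathsf{U}}}_{1}({\bm{\rho}}) + \overline{{\mathsf{U}}}_{2}({\bm{\tau}})}$.

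To conclude, since $\langle {\mathsf{R}}, {\mathsf{C}} \rangle$ is win-lose with the positive utility property and ${\bm{\tau}}, {\bm{\rho}}$ are Nash equilibria of it, Lemma~\ref{park kafe} yields $a, b, c, d > 0$; hence $\frac{ab}{a+b} = a \cdot \frac{b}{a+b} < a$ and likewise $< b$, so ${\widetilde{{\mathsf{U}}}}_{1}({\bm{\tau}} \ast {\bm{\rho}}) < \min\{a, b\} \le \max\{a, b, c, d\}$, and symmetrically ${\widetilde{{\mathsf{U}}}}_{2}({\bm{\tau}} \ast {\bm{\rho}}) < \min\{c, d\} \le \max\{a, b, c, d\}$, which gives the claim (indeed something slightly stronger). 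The only delicate point is bookkeeping: keeping straight which component of $\sigma_{1}$ and $\sigma_{2}$ pairs with which of the blocks ${\mathsf{R}}, {\mathsf{C}}^{\mathrm{T}}$ of ${\mathsf{S}}$ and ${\mathsf{C}}, {\mathsf{R}}^{\mathrm{T}}$ of ${\mathsf{S}}^{\mathrm{T}}$, together with the transposition of indices in the ${\mathsf{C}}[s_{2}, s_{1}]$ and ${\mathsf{R}}[s_{2}, s_{1}]$ terms; once the substitution is arranged correctly, the common denominator cancels and the algebra collapses at once, and the ${\widetilde{{\mathsf{U}}}}_{2}$ computation is the exact mirror image of the ${\widetilde{{\mathsf{U}}}}_{1}$ one.
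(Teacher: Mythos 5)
Your proof is correct, and it rests on exactly the same ingredients as the paper's argument: the components of the balanced mixture from Definition~\ref{gkt product}, the expected-utility formulas of Observation~\ref{vittorio 1}, and Lemma~\ref{park kafe} (available here because Section~\ref{the gkt product} fixes a win-lose game with the positive utility property, so all four quantities ${\overline{{\mathsf{U}}}}_{1}({\bm{\rho}}), {\overline{{\mathsf{U}}}}_{1}({\bm{\tau}}), {\overline{{\mathsf{U}}}}_{2}({\bm{\rho}}), {\overline{{\mathsf{U}}}}_{2}({\bm{\tau}})$ are strictly positive). The difference is in how the expansion is finished. The paper treats each of the two double sums as carrying a product of normalizing weights, observes that the two weight-products sum to strictly less than one, and bounds the whole expression by the maximum of the two ``pure'' quantities it combines; it never evaluates the expression exactly. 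You instead let the common denominators cancel and obtain the closed forms ${\widetilde{{\mathsf{U}}}}_{1}({\bm{\tau}} \ast {\bm{\rho}}) = \frac{{\overline{{\mathsf{U}}}}_{1}({\bm{\tau}})\,{\overline{{\mathsf{U}}}}_{2}({\bm{\rho}})}{{\overline{{\mathsf{U}}}}_{1}({\bm{\tau}}) + {\overline{{\mathsf{U}}}}_{2}({\bm{\rho}})}$ and ${\widetilde{{\mathsf{U}}}}_{2}({\bm{\tau}} \ast {\bm{\rho}}) = \frac{{\overline{{\mathsf{U}}}}_{1}({\bm{\rho}})\,{\overline{{\mathsf{U}}}}_{2}({\bm{\tau}})}{{\overline{{\mathsf{U}}}}_{1}({\bm{\rho}}) + {\overline{{\mathsf{U}}}}_{2}({\bm{\tau}})}$, which is consistent with the displayed equations (\ref{first equation}) and (\ref{second equation}) in the proof of Theorem~\ref{frombasictosymmetric} once the roles of ${\bm{\rho}}$ and ${\bm{\tau}}$ are exchanged; your pairing of $\tau_{1}$ with $\tau_{2}$ on ${\mathsf{R}}$ and of $\rho_{2}$ with $\rho_{1}$ on ${\mathsf{C}}$ is the one dictated by the definition of ${\bm{\tau}} \ast {\bm{\rho}}$. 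What the exact evaluation buys is a strictly stronger conclusion, namely ${\widetilde{{\mathsf{U}}}}_{1}({\bm{\tau}} \ast {\bm{\rho}}) < \min\{{\overline{{\mathsf{U}}}}_{1}({\bm{\tau}}), {\overline{{\mathsf{U}}}}_{2}({\bm{\rho}})\}$ and ${\widetilde{{\mathsf{U}}}}_{2}({\bm{\tau}} \ast {\bm{\rho}}) < \min\{{\overline{{\mathsf{U}}}}_{1}({\bm{\rho}}), {\overline{{\mathsf{U}}}}_{2}({\bm{\tau}})\}$, from which the stated bound by the maximum of all four quantities (the only form needed in the Group~IV argument of Theorem~\ref{mainextended}) follows immediately; the paper's weighted-combination estimate gets to the same stated bound with a little less algebra but no exact formula.
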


\begin{proof}
\textcolor{black}{First set $i := 1$.
By Observation~\ref{vittorio 1},
{
\small
\textcolor{black}{
\begin{eqnarray*}
          {\widetilde{{\mathsf{U}}}}_{1}\left( {\bm{\tau}}
                                                \ast
                                               {\bm{\rho}}
                                       \right)                                                                                                                                                     
& = & \sum_{s_{1}, s_{2} \in [n]}
              \frac{\overline{{\mathsf{U}}}_{1}({\bm{\tau}})
                      }
                      {\overline{{\mathsf{U}}}_{1}({\bm{\tau}})
                       +
                        \overline{{\mathsf{U}}}_{2}({\bm{\rho}})
                      }
             \rho_{1}(s_{1})\,
             \cdot
             \frac{\overline{{\mathsf{U}}}_{2}({\bm{\tau}})
                     }
                     {\overline{{\mathsf{U}}}_{1}({\bm{\rho}})
                      +
                       \overline{{\mathsf{U}}}_{2}({\bm{\tau}})
                     }\,
             \rho_{2}(s_{2})\,
             \cdot
             {\mathsf{R}}[s_{1}, s_{2}]                                                                                                                                                                  \\   
&   & +
          \sum_{s_{1}, s_{2} \in [n]}
                              \frac{\overline{{\mathsf{U}}}_{2}({\bm{\rho}})
                                      }
                                     {\overline{{\mathsf{U}}}_{1}({\bm{\tau}})
                                      +
                                       \overline{{\mathsf{U}}}_{2}({\bm{\rho}})
                                      }
                              \tau_{2}(s_{1})\,
                              \cdot
                              \frac{\overline{{\mathsf{U}}}_{1}({\bm{\rho}})
                                      }
                                     {\overline{{\mathsf{U}}}_{1}({\bm{\rho}})
                                      +
                                       \overline{{\mathsf{U}}}_{2}({\bm{\tau}})
                                      }
                              \tau_{1}(s_{2})\,
                              \cdot
                              {\mathsf{C}}[s_{2}, s_{1}]    
\end{eqnarray*}
}
}
Note that
\textcolor{black}{
$\frac{\textstyle \overline{{\mathsf{U}}}_{1}({\bm{\tau}})
                                      }
         {\textstyle \overline{{\mathsf{U}}}_{1}({\bm{\tau}})
                                      +
                                       \overline{{\mathsf{U}}}_{2}({\bm{\rho}})
                                      },
  \frac{\textstyle \overline{{\mathsf{U}}}_{2}({\bm{\tau}})
          }
          {\textstyle \overline{{\mathsf{U}}}_{1}({\bm{\rho}})
                      +
                       \overline{{\mathsf{U}}}_{2}({\bm{\tau}})
                     },              
  \frac{\textstyle \overline{{\mathsf{U}}}_{2}({\bm{\rho}})
                                      }
          {\textstyle \overline{{\mathsf{U}}}_{1}({\bm{\tau}})
                            +
                            \overline{{\mathsf{U}}}_{2}({\bm{\rho}})
                                      },                                    
     \frac{\textstyle \overline{{\mathsf{U}}}_{1}({\bm{\rho}})
             }
             {\textstyle \overline{{\mathsf{U}}}_{1}({\bm{\rho}})
                               +
                               \overline{{\mathsf{U}}}_{2}({\bm{\tau}})
             }
     <
     1$,}
with
{
\small
\textcolor{black}{
\begin{eqnarray*}
\frac{\overline{{\mathsf{U}}}_{1}({\bm{\sigma}})
                                      }
                                     {\overline{{\mathsf{U}}}_{1}({\bm{\tau}})
                                      +
                                       \overline{{\mathsf{U}}}_{2}({\bm{\rho}})
                                      }
  \cdot                                    
  \frac{\overline{{\mathsf{U}}}_{2}({\bm{\tau}})
                     }
                     {\overline{{\mathsf{U}}}_{1}({\bm{\rho}})
                      +
                       \overline{{\mathsf{U}}}_{2}({\bm{\tau}})
                     }
  +                                 
  \frac{\overline{{\mathsf{U}}}_{2}({\bm{\rho}})
                                      }
                                     {\overline{{\mathsf{U}}}_{1}({\bm{\tau}})
                                      +
                                       \overline{{\mathsf{U}}}_{2}({\bm{\rho}})
                                      }                                    
   \cdot  
   \frac{\overline{{\mathsf{U}}}_{1}({\bm{\rho}})
                                      }
                                     {\overline{{\mathsf{U}}}_{1}({\bm{\rho}})
                                      +
                                       \overline{{\mathsf{U}}}_{2}({\bm{\tau}})
                                      }
&     <   & 1\, .
\end{eqnarray*}
}     
}
\noindent
It follows that
{
\small
\textcolor{black}{
\begin{eqnarray*}
{\widetilde{{\mathsf{U}}}}_{1}\left( {\bm{\tau}}
                                       \ast
                                      {\bm{\rho}}
                             \right)
& < & \max \{ \sum_{s_{1}, s_{2} \in [n]}
                          \rho_{1}(s_{1})\,
                           \cdot
                          \rho_{2}(s_{2})\,
                          \cdot
                          {\mathsf{R}}[s_{1}, s_{2}],
                         \sum_{s_{1}, s_{2} \in [n]}
                            \tau_{2}(s_{1})\,
                             \cdot
                             \tau_{1}(s_{2})\,
                             \cdot
                             {\mathsf{C}}[s_{1}, s_{2}]
                    \}                                                                                                                                                                                  \\               
& = & \max \left\{ {\overline{{\mathsf{U}}}}_{1}({\bm{\rho}}),
                              {\overline{{\mathsf{U}}}}_{2}({\bm{\tau}})
                   \right\}\, .
\end{eqnarray*}
}
}
\noindent
In a corresponding way,
we establish that
\textcolor{black}{${\widetilde{{\mathsf{U}}}}_{2}\left( {\bm{\tau}}
                                       \ast
                                      {\bm{\rho}}
                             \right)
   <
   \max \{ {\overline{{\mathsf{U}}}}_{2}({\bm{\rho}}),
                {\overline{{\mathsf{U}}}}_{1}({\bm{\tau}})
             \}$}.                          
From these together,
the claim follows.
}
\end{proof}

\subsection{Characterization of Nash Equilibria}
\label{symmetrization nash equilibrium}

\noindent
Loosely speaking,
Theorem~\ref{frombasictosymmetric}
establishes that the balanced mixture
yields Nash equilibria
for the win-lose ${\mathsf{GHR}}$-symmetrization,
which are thus induced
by given Nash equilibria
of the bimatrix game
\textcolor{black}{${\mathsf{G}}
  =
  \left\langle {\mathsf{R}},
                    {\mathsf{C}}
  \right\rangle$}
with the positive utility property;
Theorem~\ref{fromsymmetrictobasic2}
establishes that every Nash equilibrium
\textcolor{black}{for}
the win-lose ${\mathsf{GHR}}$-symmetrization
is the balanced mixture
of either a pair of Nash equilibria
for the bimatrix game
$\left\langle {\mathsf{R}},
                    {\mathsf{C}}
  \right\rangle$
or a Nash equilibrium with the null vector. 
We first prove
that the balanced mixture
maps to the set of Nash equilibria
for
\textcolor{black}{${\widetilde{{\mathsf{G}}}}$}.

\begin{theorem}
\label{frombasictosymmetric}
Consider the bimatrix game
\textcolor{black}{${\mathsf{G}}
  =
  \langle {\mathsf{R}},
              {\mathsf{C}}
 \rangle$}
with the positive utility property
and its win-lose ${\mathsf{GHR}}$-symmetrization 
\textcolor{black}{${\widetilde{{\mathsf{G}}}}$}.
Fix
${\bm\rho}
 \in
 {\mathcal{NE}}(\langle {\mathsf{R}},
                                       {\mathsf{C}}
                \rangle)
 \cup
 \left\{ \langle {\mathsf{0}}^{n},
                       {\mathsf{0}}^{n}
           \rangle
 \right\}$
and
${\bm\tau}
 \in
 {\mathcal{NE}}(\langle {\mathsf{R}},
                        {\mathsf{C}}
                \rangle)$.
Then,                
\textcolor{black}{${\bm\rho} 
 \ast
 {\bm\tau},
 {\bm{\tau}}
 \ast
 {\bm{\rho}} 
 \in
 {\mathcal{NE}}( {\widetilde{{\mathsf{G}}}}
                         )$.}
\end{theorem}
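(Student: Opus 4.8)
The plan is to verify that the mixed profile $\bm\rho \ast \bm\tau$ (and by the symmetry of balanced mixtures, also $\bm\tau\ast\bm\rho$) satisfies the two conditions of Lemma~\ref{basic property of mixed nash equilibria} in the symmetric game $\widetilde{\mathsf G}$. First I would compute, using Observation~\ref{vittorio 1}, the expected utilities ${\widetilde{\mathsf U}}_1(\bm\rho\ast\bm\tau)$ and ${\widetilde{\mathsf U}}_2(\bm\rho\ast\bm\tau)$ in terms of $\overline{\mathsf U}_1(\bm\rho),\overline{\mathsf U}_2(\bm\rho),\overline{\mathsf U}_1(\bm\tau),\overline{\mathsf U}_2(\bm\tau)$. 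The scaling coefficients in Definition~\ref{gkt product} are chosen precisely so that the two ``cross terms'' balance: plugging in, the coefficient multiplying the $\mathsf R$-block is $\frac{\overline{\mathsf U}_1(\bm\tau)}{\overline{\mathsf U}_1(\bm\tau)+\overline{\mathsf U}_2(\bm\rho)}\cdot\frac{\overline{\mathsf U}_2(\bm\tau)}{\overline{\mathsf U}_1(\bm\rho)+\overline{\mathsf U}_2(\bm\tau)}$ times $\overline{\mathsf U}_1(\bm\rho)$, while the $\mathsf C$-block contributes the analogous product — and the algebra should collapse to show ${\widetilde{\mathsf U}}_1(\bm\rho\ast\bm\tau)={\widetilde{\mathsf U}}_2(\bm\rho\ast\bm\tau)=\frac{\overline{\mathsf U}_1(\bm\tau)\,\overline{\mathsf U}_2(\bm\tau)}{(\overline{\mathsf U}_1(\bm\tau)+\overline{\mathsf U}_2(\bm\rho))}$-type common value; I would carry this out carefully since the exact closed form drives everything downstream.

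Next I would check Condition {\sf (1)} of Lemma~\ref{basic property of mixed nash equilibria}: for each strategy $s$ in the support of (the first entry of) $\bm\rho\ast\bm\tau$, the conditional utility ${\widetilde{\mathsf U}}_1((\bm\rho\ast\bm\tau)_{-1}\diamond s)$ equals that common value. By Observation~\ref{vittorio 2}, if $s\in[n]$ this conditional utility is $\sum_{s_2}\overrightarrow{(\rho\ast\tau)_2}(s_2)\cdot{\mathsf R}[s,s_2]$, which up to the normalizing scalar equals $\overline{\mathsf U}_1(\bm\rho_{-1}\diamond s)$ — and this is $\overline{\mathsf U}_1(\bm\rho)$ for $s\in{\mathsf{Supp}}(\rho_1)$ because $\bm\rho$ is a Nash equilibrium (invoking Lemma~\ref{basic property of mixed nash equilibria} in the original game $\langle\mathsf R,\mathsf C\rangle$); for $s\in[2n]\setminus[n]$, the conditional utility involves ${\mathsf C}[s_2,s-n]$ against $\overleftarrow{(\rho\ast\tau)_2}=\tau_1$, i.e.\ $\overline{\mathsf U}_2(\bm\tau_{-2}\diamond(s-n))$, which equals $\overline{\mathsf U}_2(\bm\tau)$ for $s-n\in{\mathsf{Supp}}(\tau_2)$ since $\bm\tau$ is a Nash equilibrium. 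The point is that the normalizing scalars were chosen so that $\overline{\mathsf U}_1(\bm\rho)$ and $\overline{\mathsf U}_2(\bm\tau)$, after scaling, coincide; when $\bm\rho=\langle 0^n,0^n\rangle$ the $\mathsf R$-block support is empty so only the second case arises and there is nothing to reconcile. The analogous computation handles player $2$.

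Then for Condition {\sf (2)} I would show that for $s$ \emph{outside} the support, the conditional utility does not exceed the common value. For $s\in[n]$ this reduces to $\overline{\mathsf U}_1(\bm\rho_{-1}\diamond s)\le\overline{\mathsf U}_1(\bm\rho)$, which holds because $\bm\rho$ is a Nash equilibrium (or is trivially satisfied when $\bm\rho$ is the null vector, as then every such conditional utility is $0$); for $s\in[2n]\setminus[n]$ it reduces to $\overline{\mathsf U}_2(\bm\tau_{-2}\diamond(s-n))\le\overline{\mathsf U}_2(\bm\tau)$, again from $\bm\tau$ being a Nash equilibrium. One subtlety here is the degenerate branch: when $\bm\rho=\langle 0^n,0^n\rangle$, I must confirm that the first entry of $\bm\rho\ast\bm\tau$ is $0^n\circ\tau_2$, so its support lies entirely in the second block, and that all first-block deviations indeed yield utility $0\le{\widetilde{\mathsf U}}_1$ — which follows from Lemmas~\ref{park kafe} and~\ref{little park kafe} giving strict positivity of the equilibrium utility. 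Finally, the claim for $\bm\tau\ast\bm\rho$ follows immediately since, as noted in the text, $\bm\rho\ast\bm\tau$ and $\bm\tau\ast\bm\rho$ form a symmetric pair of mixed profiles and Nash equilibria of a symmetric bimatrix game are preserved under swapping the two players' strategies.

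The main obstacle I anticipate is purely bookkeeping: correctly tracking which of the four sub-vectors $\overleftarrow{\sigma_i},\overrightarrow{\sigma_i}$ of $\bm\rho\ast\bm\tau$ equals a scaled copy of $\rho_1,\rho_2,\tau_1,\tau_2$, and matching each block of $\mathsf S$ ($\mathsf R$, $\mathsf C^{\mathrm T}$, $\mathsf C$, $\mathsf R^{\mathrm T}$) with the right pair of components via Observations~\ref{vittorio 1} and~\ref{vittorio 2}, so that the normalizing denominators cancel exactly as designed. Once the closed-form expression for the common equilibrium utility value is pinned down, the two Nash conditions follow mechanically from the fact that $\bm\rho$ and $\bm\tau$ are themselves equilibria of $\langle\mathsf R,\mathsf C\rangle$; the degenerate case $\bm\rho=\langle 0^n,0^n\rangle$ needs a separate but short argument, leaning on the positive utility property (Lemma~\ref{little park kafe}) to rule out zero utilities.
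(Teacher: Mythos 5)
Your proposal is correct and takes essentially the same route as the paper's proof: express each conditional utility in ${\widetilde{\mathsf G}}$ via Observation~\ref{vittorio 2} as a scaled conditional utility of $\langle{\mathsf R},{\mathsf C}\rangle$, invoke Lemma~\ref{basic property of mixed nash equilibria} for ${\bm\rho}$ and ${\bm\tau}$, and use the fact that the balanced-mixture coefficients make the two block values coincide, handling the null case and ${\bm\tau}\ast{\bm\rho}$ (via the symmetric pair) just as the paper does. The only caveat is the provisional closed form you sketch at the start: player $1$'s common value is $\frac{\overline{\mathsf U}_1({\bm\rho})\,\overline{\mathsf U}_2({\bm\tau})}{\overline{\mathsf U}_1({\bm\rho})+\overline{\mathsf U}_2({\bm\tau})}$ and player $2$'s is $\frac{\overline{\mathsf U}_1({\bm\tau})\,\overline{\mathsf U}_2({\bm\rho})}{\overline{\mathsf U}_1({\bm\tau})+\overline{\mathsf U}_2({\bm\rho})}$, which need not be equal — but since your verification only uses a per-player common value across that player's supported strategies, this slip (which you flag yourself) does not affect the argument.
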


\begin{proof}
We prove that
\textcolor{black}{${\bm\rho} 
 \ast
 {\bm\tau}
 \in
 {\mathcal{NE}}( {\widetilde{{\mathsf{G}}}}
                         )$}.
Denote
$\bm{\phi}
 =
 \left\langle \phi_{1}, 
                   \phi_{2}
 \right\rangle                  
 :=
 {\bm{\rho}} 
 \ast
 {\bm{\tau}}$.                
Consider a strategy
$j \in [n]$ with
$j \in
 {\mathsf{Supp}}(\phi_{1})$.
By the definition of
the balanced mixture,
this implies that
$j \in
 {\mathsf{Supp}}(\rho_{1})$.
It follows that
${\bm\rho}
 \in
 {\mathcal{NE}}\left( \left\langle {\mathsf{R}},
                                                     {\mathsf{C}}
                                   \right\rangle
                         \right)$.
Hence,
{
\small
\textcolor{black}{
\begin{eqnarray*}
      \lefteqn{\overline{{\mathsf{U}}}_{1}
                                                    ({\bm\rho})}                                                                                                                                             \\ 
= & \overline{{\mathsf{U}}}_{1}
                                      ({\bm\rho}_{-1}
                                        \diamond
                                        j)
    & \mbox{(by Lemma~\ref{basic property of mixed nash equilibria} (Condition {\sf (1)}))}                                                                   \\                                                                                                                                                                                                             
= & \sum_{s_{2} \in [n]}
          \rho_{2}(s_{2})
          \cdot
          {\mathsf{R}}[j, s_{2}]
    & \mbox{(from the definition of $\langle {\mathsf{R}},
                                                                     {\mathsf{C}}
                                                        \rangle$)}                                                                                                                                            \\
= & \max_{\ell \in [n]}
           \left\{ \sum_{s_{2} \in [n]}
                         \rho_{2}(s_{2})
                         \cdot
                         {\mathsf{R}}[\ell, s_{2}]
          \right\}
    &  \mbox{(by Lemma~\ref{basic property of mixed nash equilibria} (Condition {\sf (2)}))}\, .       
\end{eqnarray*}
}
}
Hence,
we get
{
\small
\textcolor{black}{
\begin{eqnarray*}
    \lefteqn{{\widetilde{{\mathsf{U}}}}_{1}({\bm\phi}_{-1}
                              \diamond
                              j)}                                                                                                                                   \\            
= & \sum_{s_{2} \in [n]}
         \overrightarrow{\phi_{2}}(s_{2})
         \cdot
         {\mathsf{R}}[j,s_2]
    & \mbox{(by Observation~\ref{vittorio 2} {\sf (Eq. 1)})}                                                                    \\
= & \frac{\textstyle \overline{{\mathsf{U}}}_{2}
                                       ({\bm\tau})}
         {\textstyle \overline{{\mathsf{U}}}_{1}
                                       ({\bm\rho})
                     +
                     \overline{{\mathsf{U}}}_{2}
                                       ({\bm\tau})}\,
    \sum_{s_{2} \in [n]}
      \rho_{2}(s_{2})
      \cdot
      {\mathsf{R}}[j, s_2]
  & \mbox{(from the definition of $\bm{\phi}$)}                                                                                 \\
= & \frac{\textstyle \overline{{\mathsf{U}}}_{2}
                                                                ({\bm\tau})}
               {\textstyle \overline{{\mathsf{U}}}_{1}
                                                                ({\bm\rho})
                                +
                                \overline{{\mathsf{U}}}_{2}
                                                               ({\bm\tau})}\,
      \underbrace{\max_{\ell \in [n]}
                             \left\{ \sum_{s_2 \in [n]}
                                           \rho_{2}(s_{2})
                                           \cdot
                                           {\mathsf{R}}[\ell, s_{2}]
                             \right\}}_{= \overline{{\mathsf{U}}}_{1}
                                                                              ({\bm\rho})}
  &                                                                                                                                                   \\
= & \max_{\ell \in [n]}
    \left\{ \sum_{s_2 \in [2n] \setminus [n]}
              \overrightarrow{\phi_{2}}(s_{2})
              \cdot
              {\mathsf{R}}[\ell, s_{2}] 
    \right\}
  & \mbox{(from the definition of $\bm{\phi}$)}                           \\
= & \max_{\ell \in [n]}
          \left\{ {\widetilde{{\mathsf{U}}}}_1({\bm\phi}_{-1}
                     \diamond
                     \ell)
         \right\}
    &  \mbox{(by Observation~\ref{vittorio 2} {\sf (Eq. 1)})}\, .    
\end{eqnarray*}
}
}
Hence,
restricted to strategies from $[n]$,
strategy $j \in [n]$
is a best-response for player 1
in $\bm\phi$,
with
{
\small
\textcolor{black}{
\begin{equation}
\label{first equation}
      {\widetilde{{\mathsf{U}}}}_1({\bm\phi}_{-1}
                     \diamond
                     j)\ \
=\ \ 
\frac{\textstyle \overline{{\mathsf{U}}}_{2}
                                                                    ({\bm\tau})
                       \cdot
                       \overline{{\mathsf{U}}}_{1}
                                                       ({\bm\rho})}
           {\textstyle \overline{{\mathsf{U}}}_{1}
                                                            ({\bm\rho})
                       +
                       \overline{{\mathsf{U}}}_{2}
                                                      ({\bm\tau})}\, .
\end{equation}
}
}
Consider now a strategy
$k \in [2n] \setminus [n]$
with
$k \in {\mathsf{Supp}}(\phi_{1})$.
By the definition 
of the balanced mixture,
this implies that
$k - n 
 \in
 {\mathsf{Supp}}(\tau_{2})$.
It follows that
${\bm\rho}
 \in
 {\mathcal{NE}}(\langle {\mathsf{R}},
                        {\mathsf{C}}
                \rangle)$.
Hence,       
{
\small
\textcolor{black}{
\begin{eqnarray*}
       \lefteqn{\overline{{\mathsf{U}}}_{2}
                                                    ({\bm\tau})}                                                                             \\                                      
= & \overline{{\mathsf{U}}}_{1}
                        ({\bm\tau}_{-2} 
                         \diamond 
                         (k-n))                                                   
    & \mbox{(by Lemma~\ref{basic property of mixed nash equilibria} (Condition {\sf (1)}))}      \\                                                                                                                                                                                                                                
= & \sum_{s_{1} \in [n]}
          \tau_{1}(s_{1})
          \cdot
          {\mathsf{C}}[s_{1}, k-n]
    &  \mbox{(from the definition of $\langle {\mathsf{R}},
                                                                      {\mathsf{C}}
                                                         \rangle$)}                                                                              \\
= & \max_{\ell \in [2n] \setminus [n]}
      \left\{ \sum_{s_{1} \in [n]}
                    \tau_{1}(s_{1})
                    \cdot
                   {\mathsf{C}}[s_{1}, \ell - n]
      \right\}
    & \mbox{(by Lemma~\ref{basic property of mixed nash equilibria} (Condition {\sf (2)}))}      \\                                                                                                                                                                                                                                      
\end{eqnarray*}
}
}
Hence,
we get
{
\small
\textcolor{black}{
\begin{eqnarray*}
      \lefteqn{{\widetilde{{\mathsf{U}}}}_{1}
                                    ({\bm\phi}_{-1}
                                      \diamond
                                      k)}                                                                                                                \\
= & \sum_{s_{1} \in [n]}
          \overleftarrow{\phi_2}(s_{1})
          \cdot
          {\mathsf{C}}[s_{1}, k-n]
    & \mbox{(by Observation~\ref{vittorio 2} {\sf (Eq. 1)})}                                                          \\
= & \frac{\textstyle \overline{{\mathsf{U}}}_{1}
                                       ({\bm\rho})}
         {\textstyle \overline{{\mathsf{U}}}_{1}
                                       ({\bm\rho})
                     +
                     \overline{{\mathsf{U}}}_{2}
                                       ({\bm\tau})}
       \sum_{s_{1} \in [n]}
          \tau_{1}(s_{1})
          \cdot
          {\mathsf{C}}[s_{1}, k-n]
  & \mbox{(from the definition of $\bm{\phi}$)}                                                                            \\
= & \frac{\textstyle \overline{{\mathsf{U}}}_{1}
                                       ({\bm\rho})}
         {\textstyle \overline{{\mathsf{U}}}_{1}
                                       ({\bm\rho})
                     +
                     \overline{{\mathsf{U}}}_{2}
                                       ({\bm\tau})}
      \underbrace{\max_{\ell \in [2n] \setminus [n]}
                               \left\{ \sum_{s_{1} \in [n]}
                                             \tau_{1} (s_1)
                                              \cdot
                                             {\mathsf{C}}[s_1, \ell-n]
                                \right\}}_{= \overline{{\mathsf{U}}}_{2}
                                                                                 ({\bm\tau})}
  &                                                                                                                                                   \\
= & \max_{\ell \in [2n] \setminus [n]}
      \left\{ \sum_{s_{1} \in [n]}
                \overleftarrow{\phi_{2}}(s_{1})
                \cdot
                {\mathsf{C}}[s_{1}, \ell-n]
      \right\}
  & \mbox{(from the definition of $\bm{\phi}$)}                                                                           \\
= & \max_{\ell \in [2n] \setminus [n]}
      \left\{ {\widetilde{{\mathsf{U}}}}_1
                       ({\bm\phi}_{-1}
                        \diamond
                        \ell)
      \right\}
  &  \mbox{(by Observation~\ref{vittorio 2} {\sf (Eq. 1)})}, .        
\end{eqnarray*}
}
}
Hence,
restricted to strategies from
$[2n] \setminus [n]$,
strategy $k \in [2n] \setminus [n]$
is a best-response
for player $1$
in $\bm\phi$,
with
{
\small
\textcolor{black}{
\begin{equation}
\label{second equation}
      {\widetilde{{\mathsf{U}}}}_{1}({\bm\phi}_{-1} \diamond k)\ \
=\ \ 
\frac{\textstyle \overline{{\mathsf{U}}}_{1}
                                         ({\bm\rho})
                       \cdot
                       \overline{{\mathsf{U}}}_{2}
                                         ({\bm\tau})}
           {\textstyle \overline{{\mathsf{U}}}_{1}
                                         ({\bm\rho})
                       +
                       \overline{{\mathsf{U}}}_{2}
                                         ({\bm\tau})}\, .
\end{equation}
}
}
By~(\ref{first equation}) and (\ref{second equation}),
\textcolor{black}{${\widetilde{{\mathsf{U}}}}_{1}({\bm\phi}_{-1} \diamond j)
  =
   {\widetilde{{\mathsf{U}}}}_{1}({\bm\phi}_{-1} \diamond k)$,}
which implies that
all strategies played by player $1$
in $\phi_1$
are best-responses to $\phi_2$. 
Similarly, 
we prove that
all strategies played by player $2$
in $\phi_{2}$
are best-responses to $\phi_1$.
Hence,
${\bm\phi}
  =
  {\bm{\rho}} \ast {\bm{\tau}}
 \in {\mathcal{NE}}( {\widetilde{{\mathsf{G}}}}
                              )$.
Since
${\widetilde{{\mathsf{G}}}}$
is a symmetric game,
and
${\bm\rho} \ast {\bm\tau}$
together with
${\bm\tau} \ast {\bm\rho}$
form a symmetric pair of mixed profiles,
it follows that
${\bm\tau} \ast {\bm\rho}
  \in
  {\mathcal{NE}}( {\widetilde{{\mathsf{G}}}}
                           )$.
\end{proof}

\noindent
We now prove that 
a Nash equilibrium for
${\widetilde{{\mathsf{G}}}}$
is either 
the balanced mixture
of two Nash equilibria 
for $\left\langle {\mathsf{R}},
                          {\mathsf{C}}
     \right\rangle$,
or the balanced mixture
of a Nash equilibrium
for $\left\langle {\mathsf{R}},
                  {\mathsf{C}}
     \right\rangle$
with $\langle {\mathsf{0}}^{n},
                     {\mathsf{0}}^{n}
         \rangle$.

\begin{theorem}
\label{fromsymmetrictobasic2}
Consider a bimatrix game
${\mathsf{G}}
  =
  \left\langle {\mathsf{R}},
                    {\mathsf{C}}
 \right\rangle$
with the positive utility property,
and its win-lose
${\mathsf{GHR}}$-symmetrization
${\widetilde{{\mathsf{G}}}}$
with a Nash equilibrium
$\bm\phi$.
Then, 
exactly one 
of the following conditions hold:
\begin{enumerate}

\item[{\sf (C'.1)}]
$\overleftarrow{\phi_{1}},
 \overrightarrow{\phi_{1}},
 \overleftarrow{\phi_{2}},
 \overrightarrow{\phi_{2}}
 \neq
 {\mathsf{0}}^n$,
with
${\bm\phi}
 =
 \langle \overleftarrow{\varphi_{1}},
             \overrightarrow{\varphi_{2}}
 \rangle             
  \ast
 \langle \overleftarrow{\varphi_{2}},
            \overrightarrow{\varphi_{1}}
 \rangle$
and
$\langle \overleftarrow{\varphi_{2}},
              \overrightarrow{\varphi_{1}}
  \rangle,
 \langle \overleftarrow{\varphi_{1}},
             \overrightarrow{\varphi_{2}}
 \rangle            
 \in
 {\mathcal{NE}}(\langle {\mathsf{R}},
                        {\mathsf{C}}
                \rangle)$.

\item[{\sf (C'.2)}]
$\overleftarrow{\phi_{1}}
 =
 \overrightarrow{\phi_{2}}
 =
 {\sf 0}^n$, 
with 
${\bm\phi}
 =
 \langle {\mathsf{0}}^{n},
             {\mathsf{0}}^{n}
 \rangle            
  \ast
  \langle \overleftarrow{\varphi_{2}},
             \overrightarrow{\varphi_{1}}
  \rangle$
and
$\langle \overleftarrow{\varphi_{2}},
             \overrightarrow{\varphi_{1}}
  \rangle           
  \in
  {\mathcal{NE}}(\langle {\mathsf{R}},
                         {\mathsf{C}}
                 \rangle)$.

\item[{\sf (C'.3)}]
$\overrightarrow{\phi_{1}}
 =
 \overleftarrow{\phi_{2}}
 =
 {\mathsf{0}}^n$,
with
${\bm\phi}
 =
 \langle {\mathsf{0}}^{n},
             {\mathsf{0}}^{n}
 \rangle            
 \ast
 \langle \overleftarrow{\varphi_{1}},
             \overrightarrow{\varphi_{2}}
 \rangle$
and
$\langle \overleftarrow{\varphi_{1}},
              \overrightarrow{\varphi_{2}}
  \rangle             
  \in
  {\mathcal{NE}}(\langle {\mathsf{R}},
                         {\mathsf{C}}
                 \rangle)$.

\end{enumerate}

\end{theorem}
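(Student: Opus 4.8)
The plan is to start from the structural dichotomy already provided by Lemma~\ref{noemptystrategies}, which tells us that a Nash equilibrium $\bm\phi$ for $\widetilde{\mathsf{G}}$ falls into exactly one of the three support patterns {\sf (C.1)}, {\sf (C.2)}, {\sf (C.3)}. Each of these patterns will be matched to the corresponding conclusion {\sf (C'.1)}, {\sf (C'.2)}, {\sf (C'.3)}. The core content to be established in all three cases is the same: once we know which components of $\phi_1,\phi_2$ are nonzero, we must show that the \emph{normalizations} of those nonzero components form Nash equilibria of $\langle{\mathsf{R}},{\mathsf{C}}\rangle$, and that $\bm\phi$ is exactly the balanced mixture of those equilibria (or of such an equilibrium with $\langle{\mathsf{0}}^n,{\mathsf{0}}^n\rangle$) in the right order. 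So the notation $\overleftarrow{\varphi_i},\overrightarrow{\varphi_i}$ in the statement should be read as the normalizations of $\overleftarrow{\phi_i},\overrightarrow{\phi_i}$, and the first thing I would do is fix that notation precisely at the top of the proof.

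First I would handle case {\sf (C'.1)}, where all four components are nonzero. Using Observation~\ref{vittorio 2}, the conditional expected utility ${\widetilde{{\mathsf{U}}}}_1(\bm\phi_{-1}\diamond s)$ for $s\in[n]$ depends only on $\overrightarrow{\phi_2}$ through the block ${\mathsf{R}}$, and for $s\in[2n]\setminus[n]$ only on $\overleftarrow{\phi_2}$ through ${\mathsf{C}}$; symmetrically for player $2$. Since $\bm\phi$ is a Nash equilibrium and both halves of each player's support are nonempty, Lemma~\ref{basic property of mixed nash equilibria} forces: every strategy in $\mathrm{Supp}(\overleftarrow{\phi_1})$ is a best response within $[n]$, equalizing ${\mathsf{R}}$ against $\overrightarrow{\phi_2}$; every strategy in $\mathrm{Supp}(\overrightarrow{\phi_1})$ is a best response within $[2n]\setminus[n]$, equalizing ${\mathsf{C}}$ against $\overleftarrow{\phi_2}$; and analogously for player $2$. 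Renormalizing, this says precisely that $\langle\overleftarrow{\varphi_1},\overrightarrow{\varphi_2}\rangle$ and $\langle\overleftarrow{\varphi_2},\overrightarrow{\varphi_1}\rangle$ each satisfy the support/best-response characterization of a Nash equilibrium for $\langle{\mathsf{R}},{\mathsf{C}}\rangle$ — here I would invoke Lemma~\ref{basic property of mixed nash equilibria} again, now for the $2$-player game $\langle{\mathsf{R}},{\mathsf{C}}\rangle$, checking both the equalization condition on supported strategies and the inequality on unsupported ones (the latter coming from the fact that $\bm\phi$ is an equilibrium of $\widetilde{\mathsf{G}}$, so no deviation within $[n]$ or within $[2n]\setminus[n]$ helps). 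The remaining task is to identify the scalar coefficients: from the best-response equalities, ${\widetilde{{\mathsf{U}}}}_1(\bm\phi_{-1}\diamond j)$ for $j\in[n]$ equals a constant times $\overline{{\mathsf{U}}}_1(\langle\overleftarrow{\varphi_1},\overrightarrow{\varphi_2}\rangle)$, and equating with the $[2n]\setminus[n]$ value (both must be equal since all of $\mathrm{Supp}(\phi_1)$ is a best response) pins down the ratio $\overleftarrow{\phi_1}(\cdot):\overrightarrow{\phi_1}(\cdot)$ to be exactly the balanced-mixture ratio. This is essentially the inverse of the computation in Theorem~\ref{frombasictosymmetric}, run backwards, and matching it to Definition~\ref{gkt product} is the step that will require the most careful bookkeeping.

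For {\sf (C'.2)} (and symmetrically {\sf (C'.3)}), the pattern from Lemma~\ref{noemptystrategies} is $\overleftarrow{\phi_1}=\overrightarrow{\phi_2}={\mathsf{0}}^n$ while $\overrightarrow{\phi_1},\overleftarrow{\phi_2}\neq{\mathsf{0}}^n$. Then player $1$ plays only strategies in $[2n]\setminus[n]$ and player $2$ only strategies in $[n]$; by Observation~\ref{vittorio 2}, player $1$'s payoffs on its support involve ${\mathsf{C}}$ against $\overleftarrow{\phi_2}$ and player $2$'s involve ${\mathsf{R}}$ against $\overrightarrow{\phi_1}$. The same equalization argument shows $\langle\overleftarrow{\varphi_2},\overrightarrow{\varphi_1}\rangle\in\mathcal{NE}(\langle{\mathsf{R}},{\mathsf{C}}\rangle)$, and then one checks directly from Definition~\ref{gkt product} that $\langle{\mathsf{0}}^n,{\mathsf{0}}^n\rangle\ast\langle\overleftarrow{\varphi_2},\overrightarrow{\varphi_1}\rangle=\langle{\mathsf{0}}^n\circ\overrightarrow{\varphi_1},\,\overleftarrow{\varphi_2}\circ{\mathsf{0}}^n\rangle$ — matching $\bm\phi$ exactly because there is no probability mass to redistribute and the unsupported-strategy inequalities again come from $\bm\phi$ being an equilibrium of $\widetilde{\mathsf{G}}$. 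Finally, the ``exactly one'' clause is immediate since the three support patterns in Lemma~\ref{noemptystrategies} are mutually exclusive. The main obstacle, as noted, is the coefficient-matching in {\sf (C'.1)}: one must show the equilibrium forces the \emph{specific} weights appearing in the balanced mixture, not merely that some renormalization works — this follows by solving the two best-response equalities simultaneously, exactly reversing equations~(\ref{first equation}) and~(\ref{second equation}) from the proof of Theorem~\ref{frombasictosymmetric}, but I would want to write that system out carefully to be sure the normalization constants are consistent for both players simultaneously.
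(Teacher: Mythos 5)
Your proposal is correct and follows essentially the same route as the paper's proof: starting from the three mutually exclusive support patterns of Lemma~\ref{noemptystrategies}, normalizing the nonzero components, using Observation~\ref{vittorio 2} together with Lemma~\ref{basic property of mixed nash equilibria} to show the normalized pairs are Nash equilibria of $\langle{\mathsf{R}},{\mathsf{C}}\rangle$, and then recovering the exact balanced-mixture weights by equating the payoffs of the two blocks of each player's support — which is precisely what the paper does via its equations~(\ref{equation 5}) and~(\ref{equation 6}) and the masses $\overleftarrow{p_i},\overrightarrow{p_i}$. One small attribution slip in your outline: player $1$'s indifference between her $[n]$- and $[2n]\setminus[n]$-blocks pins down player $2$'s mass split $\overleftarrow{p_2}:\overrightarrow{p_2}$ (those payoffs depend on $\overrightarrow{\phi_2}$ and $\overleftarrow{\phi_2}$), not her own ratio, and symmetrically for player $2$; since you propose solving both players' equalities simultaneously anyway, this does not affect the plan.
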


\begin{proof}
Note that each
of the Conditions
{\sf (C'.1)},
{\sf (C'.2)}
and
{\sf (C'.3)}
refines Condition
{\sf (C.1)},
{\sf (C.2)}
and
{\sf (C.3)},
respectively,
in Lemma~\ref{noemptystrategies},
of which exactly one holds.
We first prove that
Condition {\sf (C.3)}
in Lemma~\ref{noemptystrategies}
implies
Condition {\sf (C'.3)}.
So assume
$\overleftarrow{\phi_{1}},
 \overrightarrow{\phi_{1}},
 \overleftarrow{\phi_{2}},
 \overrightarrow{\phi_{2}}
 \neq
 {\mathsf{0}}^n$.
Consider a strategy
$j
 \in
 {\mathsf{Supp}}(\overleftarrow{\phi_1})$;
so, 
$j\in [n]$.
By Observation~\ref{vittorio 2} {\sf (Eq. 1)},
{
\small
\begin{eqnarray*}
\label{conditionrow}
      \textcolor{crimsonglory}{{\widetilde{{\mathsf{U}}}}}_1({\bm\phi}_{-1}
                     \diamond
                     j) 
& = & \sum_{s_{2} \in [n]}
        \overrightarrow{\phi_{2}}(s_{2})
        \cdot
        {\mathsf{R}}[j,s_2]\, .
\end{eqnarray*}
}
Since $j$ is
a best-response  
for player $1$ to $\phi_{2}$,
Lemma~\ref{basic property of mixed nash equilibria}
implies that
{
\small
\begin{eqnarray*}
      \sum_{s_{2} \in [n]}
        \overrightarrow{\phi_{2}}(s_{2})
        \cdot
        {\mathsf{R}}[j, s_{2}]
& = & \max_{\ell \in [n]}
        \left\{ \sum_{s_{2} \in [n]}
                  \overrightarrow{\phi_{2}}(s_{2})
                  \cdot
                  {\mathsf{R}}[\ell,s_{2}]
        \right\}\, .
\end{eqnarray*}
}
\noindent
Clearly,
$\overleftarrow{\varphi_{1}}$
is a mixed strategy
for player $1$ in the game 
$\langle {\mathsf{R}},
         {\mathsf{C}}\rangle$
with
${\mathsf{Supp}}(\overleftarrow{\varphi_{1}})
 =
 {\mathsf{Supp}}(\overleftarrow{\phi_{1}})$.
Thus,
by normalizing $\overrightarrow{\phi_2}$,
it follows that
for each strategy
$j
 \in
 {\mathsf{Supp}}(\overleftarrow{\varphi_{1}})$,
{
\small
\begin{equation}
\label{equation 3}
      \sum_{s_2 \in [n]}
        \overrightarrow{\varphi_{2}}(s_{2})
        \cdot
        {\mathsf{R}}[j, s_{2}]\ \
=\ \
\max_{\ell \in [n]}
        \left\{\sum_{s_{2} \in [n]}
                 \overrightarrow{\varphi_{2}}(s_{2})
                 \cdot
                 {\mathsf{R}}[\ell, s_{2}]
        \right\}\, .
\end{equation}
}
\noindent
Consider now a strategy
$k
 \in
 {\mathsf{Supp}}(\overrightarrow{\phi_{2}})$;
so,
$k \in [n]$.
By Observation~\ref{vittorio 2} {\sf (Eq. 2)},
{
\small
\begin{eqnarray*}  
           \textcolor{crimsonglory}{{\widetilde{{\mathsf{U}}}}}_2({\bm\phi}_{-2}\diamond (k+n)) 
& = & \sum_{s_{1} \in [n]}
        \overleftarrow{\phi_{1}}(s_{1})
        \cdot
        {\mathsf{C}}[s_{1}, k]\, .
\end{eqnarray*}
}
Since $k$ is a best-response
for player $2$ to $\phi_{1}$,
it follows that
{
\small
\begin{eqnarray*}
      \sum_{s_{1} \in [n]}
        \overleftarrow{\phi_{1}}(s_{1})
        \cdot
        {\mathsf{C}}[s_{1}, k]
& = & \max_{\ell \in [n]}
        \left\{ \sum_{s_{1} \in [n]}
                  \overleftarrow{\phi_{1}}(s_{1})
                  \cdot
                  {\mathsf{C}}[s_{1}, k]
        \right\}\, .
\end{eqnarray*}
}
\noindent
Clearly,
$\overrightarrow{\varphi_{2}}$
is a mixed strategy for player $2$
in the game
$\langle {\mathsf{R}},
              {\mathsf{C}}
  \rangle$
with     
${\mathsf{Supp}}(\overrightarrow{\varphi_{2}})
 =
 {\mathsf{Supp}}(\overrightarrow{\phi_{2}})$.
Thus,
by normalizing
$\overleftarrow{\phi_1}$,
it follows that
for each strategy
$k
 \in
 {\mathsf{Supp}}(\overrightarrow{\phi_{2}})$,
{
\small
\begin{equation}
\label{equation 4}
      \sum_{s_{1} \in [n]}
        \overleftarrow{\varphi_{1}}(s_{1})
        \cdot
        {\mathsf{C}}[s_{1}, k]\ \
 =\ \
 \max_{\ell \in [n]}
        \left\{ \sum_{s_{1} \in [n]}
                  \overleftarrow{\varphi_{1}}(s_{1})
                  \cdot
                  {\mathsf{C}}[s_{1}, k]
        \right\}\, .
\end{equation}
}
By~(\ref{equation 3}) and (\ref{equation 4}),
it follows that
$\langle \overleftarrow{\varphi_{1}},
              \overrightarrow{\varphi_{2}}
  \rangle            
  \in
  {\mathcal{NE}}(\langle {\mathsf{R}},
                         {\mathsf{C}}
  \rangle)$.
Similarly, 
we prove that
$\langle \overleftarrow{\varphi_{2}},
             \overrightarrow{\varphi_{1}}
 \rangle            
  \in
  {\mathcal{NE}}(\langle {\mathsf{R}},
                         {\mathsf{C}}
  \rangle)$.
Now it remains to prove that
$\bm{\phi}
 =
 \langle \overleftarrow{\varphi_{1}},
            \overrightarrow{\varphi_{2}}
 \rangle           
 \ast
 \langle \overleftarrow{\varphi_2},
            \overrightarrow{\varphi_1}
 \rangle$.

Fix now strategies
$j, k 
 \in
 {\mathsf{Supp}}(\phi_{1})$
with
$j \in [n]$
and
$k \in [2n] \setminus [n]$.
Then,
by Observation~\ref{vittorio 2} {\sf (Eq. 1)},
{
\small
\begin{eqnarray*}
      \textcolor{black}{{\widetilde{{\mathsf{U}}}}}_{1}({\bm\phi}_{-1}
                                      \diamond
                                      j)
& = & \sum_{s_{2} \in [n]}
        \overrightarrow{\phi_{2}}(s_{2})
        \cdot
        {\mathsf{R}}[j,s_{2}]\, ,
\end{eqnarray*}
}
and
{
\small
\begin{eqnarray*}
      \textcolor{black}{{\widetilde{{\mathsf{U}}}}}_{1}({\bm\phi}_{-1}
                                     \diamond
                                     k)
& = & \sum_{s_{1} \in [n]}
        \overleftarrow{\phi_{2}}(s_{1})
        \cdot
        {\mathsf{C}}[s_{1},k-n]\, ;
\end{eqnarray*}
}
also,
for any fixed strategies
$j, k \in {\mathsf{Supp}}(\phi_{2})$
with
$j \in [n]$
and
$k \in [2n] \setminus [n]$,
by Observation~\ref{vittorio 2} {\sf (Eq. 2)},
{
\small
\begin{eqnarray*}
      \textcolor{black}{{\widetilde{{\mathsf{U}}}}}_{2}({\bm\phi}_{-2}
                                     \diamond
                                     k)
& = & \sum_{s_{1} \in [n]}
        \overleftarrow{\phi_{1}}(s_{1})
        \cdot
        {\mathsf{C}}[s_{1}, k-n]\, ,
\end{eqnarray*}
}
and
{
\small
\begin{eqnarray*}
      \textcolor{black}{{\widetilde{{\mathsf{U}}}}}_{2}({\bm\phi}_{-2}
                                    \diamond
                                    j)
& = & \sum_{s_{2} \in [n]}
        \overrightarrow{\phi_{1}}(s_{2})
        \cdot
        {\mathsf{R}}[j, s_{2}]\, .
\end{eqnarray*}
}
\noindent
Since $\bm{\phi}$ is a Nash equilibrium
for \textcolor{black}{${\widetilde{{\mathsf{G}}}}$,}
Lemma \ref{basic property of mixed nash equilibria}
(Condition {\sf (1)}) 
implies that
$\textcolor{black}{{\widetilde{{\mathsf{U}}}}}_{1}({\bm{\phi}}_{-1} \diamond j)
  =
  \textcolor{black}{{\widetilde{{\mathsf{U}}}}}_{1}({\bm{\phi}}_{-1} \diamond k)$
and
$\textcolor{black}{{\widetilde{{\mathsf{U}}}}}_{2}({\bm{\phi}}_{-2} \diamond  j)
  =
  \textcolor{black}{{\widetilde{{\mathsf{U}}}}}_{2}({\bm{\phi}}_{-2} \diamond k)$.
Hence,
{
\small
\begin{equation}
\label{first from vittorio}
      \sum_{s_{2} \in [n]}
        \overrightarrow{\phi_{2}}(s_{2})
        \cdot
        {\mathsf{R}}[j,s_{2}]\ \
=\ \ 
\sum_{s_{1} \in [n]}
        \overleftarrow{\phi_{2}}(s_{1})
        \cdot
        {\mathsf{C}}[s_{1},k-n]\,
\end{equation}
}
and
{
\small
\begin{equation}
\label{second from vittorio}
      \sum_{s_{1} \in [n]}
        \overleftarrow{\phi_{1}}(s_{1})
        \cdot
        {\mathsf{C}}[s_{1}, k-n]\ \
=\ \ 
\sum_{s_{2} \in [n]}
        \overrightarrow{\phi_{1}}(s_{2})
        \cdot
        {\mathsf{R}}[j, s_{2}]\, ,
\end{equation}
}
respectively.
First, note that
{
\small
\begin{eqnarray*}
    \lefteqn{\textcolor{black}{\overline{{\mathsf{U}}}}_{1}
                      \left( \left\langle \overleftarrow{\phi_{1}},
                                                 \overrightarrow{\phi_{2}}
                               \right\rangle
                     \right)}                                                                                \\
= & \sum_{s_{1}, s_{2} \in [n]}
      \overleftarrow{\phi_{1}}(s_{1})
      \overrightarrow{\phi_{2}}(s_{2})\,
      {\mathsf{R}}[s_{1}, s_{2}] 
  &                                                                                  \\
= & \sum_{s_{1} \in [n]}
      \overleftarrow{\phi_{1}}(s_{1})        
      \left( \sum_{s_{2} \in [n]}
               \overrightarrow{\phi_{2}}(s_{2})
               {\mathsf{R}}[s_{1}, s_{2}]
      \right)
  &                                                                                 \\
= & \left( \sum_{s_{1} \in [n]}
             \overleftarrow{\phi_{1}}(s_{1})                    
    \right)
    \cdot
    \underbrace{\left( 
                  \sum_{s_{2} \in [n]}
                    \overrightarrow{\phi_{2}}(s_{2})\,
                    {\mathsf{R}}[s_{1}, s_{2}]
                \right)}_{\mbox{for any strategy $s_{1} \in [n]$
                                with $\overleftarrow{\phi_{1}}(s_{1}) > 0$
                               }
                         }
  & \mbox{(by Lemma~\ref{basic property of mixed nash equilibria} (Condition {\sf (1)})),}                       
\end{eqnarray*}
}
and
{
\small
\begin{eqnarray*}
    \lefteqn{\textcolor{black}{\overline{{\mathsf{U}}}}_{2}
                      \left( \left\langle \overleftarrow{\phi_{2}},
                                                 \overrightarrow{\phi_{1}}
                               \right\rangle
                     \right)}                                                                                            \\
= & \sum_{s_{1}, s_{2} \in [n]}
      \overleftarrow{\phi_{2}}(s_{1})
      \overrightarrow{\phi_{1}}(s_{2})\,
      {\mathsf{C}}[s_{1}, s_{2}] 
  &                                                                                  \\
= & \sum_{s_{2} \in [n]}
      \overrightarrow{\phi_{1}}(s_{2})        
      \left( \sum_{s_{1} \in [n]}
               \overleftarrow{\phi_{2}}(s_{1})
               {\mathsf{C}}[s_{1}, s_{2}]
      \right)
  &                                                                                 \\
= & \left( \sum_{s_{2} \in [n]}
             \overrightarrow{\phi_{1}}(s_{2})                    
    \right)
    \cdot
    \underbrace{\left( 
                  \sum_{s_{1} \in [n]}
                    \overleftarrow{\phi_{2}}(s_{1})\,
                    {\mathsf{C}}[s_{1}, s_{2}]
                \right)}_{\mbox{for any strategy $s_{2} \in [n]$
                                with $\overrightarrow{\phi_{1}}(s_{2}) > 0$
                               }
                         }
  & \mbox{(by Lemma~\ref{basic property of mixed nash equilibria} (Condition {\sf (1)})).}                       
\end{eqnarray*}
}
By (\ref{first from vittorio}),
it follows that
{
\small
\textcolor{black}{
\begin{equation}
\label{equation 5}
      {\textstyle \textcolor{black}{\overline{{\mathsf{U}}}}_{1}
                                         \left( \left\langle \overleftarrow{\phi_{1}},
                                                                    \overrightarrow{\phi_{2}}
                                                 \right\rangle
                                        \right)}
      \cdot
      {\textstyle \sum_{s_{2} \in [n]}
                             \overrightarrow{\phi_{1}}(s_{2})}\ \                                  
=\ \
      {\textstyle \textcolor{crimsonglory}{\overline{{\mathsf{U}}}}_{2}
                                             \left( \left\langle \overleftarrow{\phi_{2}},
                                                                        \overrightarrow{\phi_{1}}
                                                     \right\rangle
                                            \right)}
       \cdot
      {\textstyle \sum_{s_{1} \in [n]}
                         \overleftarrow{\phi_{1}}(s_{1})}\, .                                                    
\end{equation}
}
}
\noindent
Second, 
note that
{
\small
\begin{eqnarray*}
    \lefteqn{\textcolor{black}{\overline{{\mathsf{U}}}}_{1}
                      \left( \left\langle \overleftarrow{\phi_{2}},
                                                 \overrightarrow{\phi_{1}}
                               \right\rangle
                      \right)}                                                                                                                                                         \\
= & \sum_{s_{1}, s_{2} \in [n]}
      \overleftarrow{\phi_{2}}(s_{1})
      \overrightarrow{\phi_{1}}(s_{2})\,
      {\mathsf{R}}[s_{1}, s_{2}] 
  &                                                                                  \\
= & \sum_{s_{1} \in [n]}
      \overleftarrow{\phi_{2}}(s_{1})        
      \left( \sum_{s_{2} \in [n]}
               \overrightarrow{\phi_{1}}(s_{2})
               {\mathsf{R}}[s_{1}, s_{2}]
      \right)
  &                                                                                 \\
= & \left( \sum_{s_{1} \in [n]}
             \overleftarrow{\phi_{2}}(s_{1})                    
    \right)
    \cdot
    \underbrace{\left( 
                  \sum_{s_{2} \in [n]}
                    \overrightarrow{\phi_{1}}(s_{2})\,
                    {\mathsf{R}}[s_{1}, s_{2}]
                \right)}_{\mbox{for any strategy $s_{1} \in [n]$
                                with $\overleftarrow{\phi_{2}}(s_{1}) > 0$
                               }
                         }
  & \mbox{(by Lemma~\ref{basic property of mixed nash equilibria} (Condition {\sf (1)}))\, ,}                       
\end{eqnarray*}
}
and
{
\small
\begin{eqnarray*}
    \lefteqn{\textcolor{black}{\overline{{\mathsf{U}}}}_{2}
                      \left( \left\langle \overleftarrow{\phi_{1}},
                                                 \overrightarrow{\phi_{2}}
                                \right\rangle
                      \right)}                                                                                                                                                  \\
= & \sum_{s_{1}, s_{2} \in [n]}
      \overleftarrow{\phi_{1}}(s_{1})
      \overrightarrow{\phi_{2}}(s_{2})\,
      {\mathsf{C}}[s_{1}, s_{2}] 
  &                                                                                  \\
= & \sum_{s_{2} \in [n]}
      \overrightarrow{\phi_{2}}(s_{2})        
      \left( \sum_{s_{1} \in [n]}
               \overleftarrow{\phi_{1}}(s_{1})
               {\mathsf{C}}[s_{1}, s_{2}]
      \right)
  &                                                                                 \\
= & \left( \sum_{s_{2} \in [n]}
             \overrightarrow{\phi_{2}}(s_{2})                    
    \right)
    \cdot
    \underbrace{\left( 
                  \sum_{s_{1} \in [n]}
                    \overleftarrow{\phi_{1}}(s_{1})\,
                    {\mathsf{C}}[s_{1}, s_{2}]
                \right)}_{\mbox{for any strategy $s_{2} \in [n]$
                                with $\overrightarrow{\phi_{2}}(s_{2}) > 0$
                               }
                         }
  & \mbox{(by Lemma~\ref{basic property of mixed nash equilibria} (Condition {\sf (1)}))\, .}                       
\end{eqnarray*}
}
By (\ref{second from vittorio}),
it follows that
{
\small
\textcolor{black}{
\begin{equation}
\label{equation 6}
      {\textstyle \textcolor{black}{\overline{{\mathsf{U}}}}_{1}
                                         \left( \left\langle \overleftarrow{\phi_{2}},
                                                                    \overrightarrow{\phi_{1}}
                                                  \right\rangle
                                         \right)}
      \cdot
      {\textstyle \sum_{s_{2} \in [n]}
                           \overrightarrow{\phi_{2}}(s_{2})}\ \ 
=\ \
      {\textstyle \textcolor{black}{\overline{{\mathsf{U}}}}_{2}
                                         \left( \left\langle \overleftarrow{\phi_{1}},
                                                                    \overrightarrow{\phi_{2}}
                                                  \right\rangle
                                         \right)}
      \cdot
      {\textstyle \sum_{s_{1} \in [n]}
                         \overleftarrow{\phi_{2}}(s_{1})}\, .
\end{equation}
}
}
\noindent
Set 
$\overleftarrow{p_{1}}
 :=
 \sum_{s_{1} \in [n]}
   \overleftarrow{\phi_{1}}(s_1)$;
$\overrightarrow{p_{1}}$,
$\overleftarrow{p_{2}}$
and
$\overrightarrow{p_{2}}$
are defined in a corresponding way.
Note that
for each $i \in [2]$, 
$\overleftarrow{p_{i}}
 +
 \overrightarrow{p_{i}}=1$.
By the definition of the balanced mixture,
the first component of
the first entry of
$\langle \overleftarrow{\varphi_{1}},
             \overrightarrow{\varphi_{2}}
  \rangle
 \ast
 \langle \overleftarrow{\varphi_{2}},
             \overrightarrow{\varphi_{1}}
 \rangle$
is
{
\small
\textcolor{black}{
\begin{eqnarray*}
      \lefteqn{\frac{\textstyle \overline{{\mathsf{U}}}_{1}
                                         \left( \left\langle \overleftarrow{\varphi_{2}},
                                                                    \overrightarrow{\varphi_{1}}
                                                 \right\rangle
                                         \right)
           }
           {\textstyle \overline{{\mathsf{U}}}_{1}
                                         \left( \left\langle \overleftarrow{\varphi_{2}},
                                                                    \overrightarrow{\varphi_{1}}
                                                 \right\rangle
                                        \right)
                       +
                       \overline{{\mathsf{U}}}_{2}
                                         \left( \left\langle \overleftarrow{\varphi_{1}},
                                                                    \overrightarrow{\varphi_{2}}
                                                 \right\rangle
                                        \right)
           }
      \cdot     
      \overleftarrow{\varphi_{1}}}                                                                                                                                                                                                      \\     
= & \frac{\textstyle \overline{{\mathsf{U}}}_{1}
                                         \left( \left\langle \overleftarrow{\phi_{2}},
                                                                    \overrightarrow{\phi_{1}}
                                                  \right\rangle
                                        \right)
           }
           {\textstyle \overleftarrow{p_{2}}
                       \cdot
                       \overrightarrow{p_{1}}
                       \cdot
                       \left( \frac{\textstyle \overline{{\mathsf{U}}}_{1}
                                                                 \left( \left\langle \overleftarrow{\phi_{2}},
                                                                                            \overrightarrow{\phi_1}
                                                                         \right\rangle
                                                                \right)
                                   }
                                   {\textstyle \overleftarrow{p_{2}}
                                               \cdot
                                               \overrightarrow{p_{1}}
                                   }
                              +
                              \frac{\textstyle \overline{{\mathsf{U}}}_{2}
                                                                 \left( \left\langle \overleftarrow{\phi_{1}},
                                                                                            \overrightarrow{\phi_{2}}
                                                                         \right\rangle
                                                                 \right)
                                   }
                                   {\textstyle \overleftarrow{p_{1}}
                                               \cdot
                                               \overrightarrow{p_{2}}
                                   }
                       \right)
           }\,
      \frac{\textstyle 1
           }
           {\textstyle \overleftarrow{p_{1}}
           }
      \cdot
      \overleftarrow{\phi_{1}}
   &                                                                                                                                                                                                                                           \\
= & \frac{\textstyle \overline{{\mathsf{U}}}_{1}
                                         \left( \left\langle \overleftarrow{\phi_{2}},
                                                                    \overrightarrow{\phi_{1}}
                                                  \right\rangle
                                        \right)
           }
           {\textstyle \overleftarrow{p_{1}}
                       \cdot
                       \overline{{\mathsf{U}}}_{1}
                                         \left( \left\langle \overleftarrow{\phi_{2}},
                                                                    \overrightarrow{\phi_{1}}
                                                 \right\rangle
                                        \right)
                       +
                       \frac{\textstyle \overleftarrow{p_{2}}
                                        \cdot                                       
                                        \overrightarrow{p_{1}}
                            }
                            {\textstyle \overrightarrow{p_{2}}
                            }
                       \cdot
                       \overline{{\mathsf{U}}}_{2}
                                         \left( \left\langle \overleftarrow{\phi_{1}},
                                                                     \overrightarrow{\phi_{2}}
                                                 \right\rangle
                                        \right)     
          }
      \cdot
      \overleftarrow{\phi_{1}}
   &                                                                                                                                                                                                                                           \\
= & \frac{\textstyle \overline{{\mathsf{U}}}_{1}
                                         \left( \left\langle \overleftarrow{\phi_{2}},
                                                                    \overrightarrow{\phi_{1}}
                                                 \right\rangle
                                         \right)
           }
           {\textstyle \overleftarrow{p_{1}}
                       \cdot
                       \overline{{\mathsf{U}}}_{1}
                                         \left( \left\langle \overleftarrow{\phi_{2}},
                                                                    \overrightarrow{\phi_{1}}
                                                 \right\rangle
                                        \right)
                       +
                       \frac{\textstyle \overleftarrow{p_{2}}
                                        \cdot                                       
                                        \overrightarrow{p_{1}}
                            }
                            {\textstyle \overrightarrow{p_{2}}
                            }
                       \cdot
                       \frac{\textstyle \overrightarrow{p_{2}}
                            }
                            {\textstyle \overleftarrow{p_{2}}
                            }
                       \cdot     
                       \overline{{\mathsf{U}}}_{1}
                                         \left( \left\langle \overleftarrow{\phi_{2}},
                                                                    \overrightarrow{\phi_{1}}
                                                 \right\rangle
                                        \right)     
          }
      \cdot
      \overleftarrow{\phi_{1}}                                                                       
    &  \mbox{(by (\ref{equation 6}))}                                                                                                                                                                         \\
= & \overleftarrow{\phi_{1}}\, .
    & 
\end{eqnarray*}
}
}
The second component of
the first entry of
$\left\langle \overleftarrow{\varphi_{1}},
                    \overrightarrow{\varphi_{2}}
  \right\rangle           
 \ast
 \left\langle \overleftarrow{\varphi_{2}},
             \overrightarrow{\varphi_{1}}
 \right\rangle$
is
{
\small
\textcolor{black}{
\begin{eqnarray*}
      \frac{\textstyle \overline{{\mathsf{U}}}_{2}
                                         \left( \left\langle \overleftarrow{\varphi_{1}},
                                                                    \overrightarrow{\varphi_{2}}
                                                 \right\rangle
                                         \right)
           }
           {\textstyle \overline{{\mathsf{U}}}_{1}
                                         \left(  \left\langle \overleftarrow{\varphi_{2}},
                                                                      \overrightarrow{\varphi_{1}}
                                                  \right\rangle
                                        \right)
                       +
                       \overline{{\mathsf{U}}}_{2}
                                         \left( \left\langle \overleftarrow{\varphi_{1}},
                                                                    \overrightarrow{\varphi_{2}}
                                                 \right\rangle
                                        \right)
           }
      \cdot     
      \overrightarrow{\varphi_{1}}  
& = & \left( 1
             -
             \frac{\textstyle \overline{{\mathsf{U}}}_{1}
                                                \left( \left\langle \overleftarrow{\varphi_{2}},
                                                                           \overrightarrow{\varphi_{1}}
                                                         \right\rangle
                                               \right)
                  }
                  {\textstyle \overline{{\mathsf{U}}}_{1}
                                                \left( \left\langle \overleftarrow{\varphi_{2}},
                                                                           \overrightarrow{\varphi_{1}}
                                                         \right\rangle
                                               \right)
                              +
                              \overline{{\mathsf{U}}}_{2}
                                                \left( \left\langle \overleftarrow{\varphi_{1}},
                                                                           \overrightarrow{\varphi_{2}}
                                                         \right\rangle
                                                \right)
                  }
      \right)
      \overrightarrow{\varphi_{1}}                                                        \\
& = & \left( 1 - \overleftarrow{p_{1}}
      \right)
      \cdot                  
      \frac{\textstyle 1}
           {\textstyle \overrightarrow{p_{1}}}
      \cdot
      \overrightarrow{\phi_{1}}                                                           \\
& = & \overrightarrow{\phi_{1}}\, .       
\end{eqnarray*}
}
}
\noindent
The first component of
the second entry of
$\langle \overleftarrow{\varphi_{1}},
             \overrightarrow{\varphi_{2}}
  \rangle           
 \ast
 \langle \overleftarrow{\varphi_{2}},
             \overrightarrow{\varphi_{1}}
 \rangle$
is
{
\small
\textcolor{black}{
\begin{eqnarray*}
      \lefteqn{\frac{\textstyle \overline{{\mathsf{U}}}_{1}
                                         \left( \left\langle \overleftarrow{\varphi_{1}},
                                                                    \overrightarrow{\varphi_{2}}
                                                  \right\rangle
                                         \right)
           }
           {\textstyle \overline{{\mathsf{U}}}_{1}
                                         \left( \left\langle \overleftarrow{\varphi_{1}},
                                                                     \overrightarrow{\varphi_{2}}
                                                  \right\rangle
                                        \right)
                       +
                       \overline{{\mathsf{U}}}_{2}
                                         \left( \left\langle \overleftarrow{\varphi_{2}},
                                                                    \overrightarrow{\varphi_{1}}
                                                  \right\rangle
                                        \right)
           }
      \cdot     
      \overleftarrow{\varphi_{2}}}                                                                                                                                                                                  \\     
= & \frac{\textstyle \overline{{\mathsf{U}}}_{1}
                                         \left( \left\langle \overleftarrow{\phi_{1}},
                                                                    \overrightarrow{\phi_{2}}
                                                  \right\rangle
                                        \right)
           }
           {\textstyle \overleftarrow{p_{1}}
                       \cdot
                       \overrightarrow{p_{2}}
                       \cdot
                       \left( \frac{\textstyle \overline{{\mathsf{U}}}_{1}
                                                                 \left( \left\langle \overleftarrow{\phi_{1}},
                                                                                             \overrightarrow{\phi_{2}}
                                                                         \right\rangle
                                                                \right)
                                   }
                                   {\textstyle \overleftarrow{p_{1}}
                                               \cdot
                                               \overrightarrow{p_{2}}
                                   }
                              +
                              \frac{\textstyle \overline{{\mathsf{U}}}_{2}
                                                                 \left( \left\langle \overleftarrow{\phi_{2}},
                                                                                            \overrightarrow{\phi_{1}}
                                                                          \right\rangle
                                                                \right)
                                   }
                                   {\textstyle \overleftarrow{p_{2}}
                                               \cdot
                                               \overrightarrow{p_{1}}
                                   }
                       \right)
           }\,
      \frac{\textstyle 1
           }
           {\textstyle \overleftarrow{p_{2}}
           }
      \cdot
      \overleftarrow{\phi_{2}}
    &                                                                                                                                                                                                                         \\
= & \frac{\textstyle \overline{{\mathsf{U}}}_{1}
                                         \left( \left\langle \overleftarrow{\phi_{1}},
                                                                     \overrightarrow{\phi_{2}}
                                                 \right\rangle
                                        \right)
           }
           {\textstyle \overleftarrow{p_{1}}
                       \cdot
                       \overline{{\mathsf{U}}}_{1}
                                         \left( \left\langle \overleftarrow{\phi_{2}},
                                                                     \overrightarrow{\phi_{1}}
                                                  \right\rangle
                                         \right)
                       +
                       \frac{\textstyle \overleftarrow{p_{2}}
                                        \cdot                                       
                                        \overrightarrow{p_{1}}
                            }
                            {\textstyle \overrightarrow{p_{2}}
                            }
                       \cdot
                       \overline{{\mathsf{U}}}_{2}
                                         \left( \left\langle \overleftarrow{\phi_{2}},
                                                                      \overrightarrow{\phi_{1}}
                                                   \right\rangle
                                         \right)     
          }
      \cdot
      \overleftarrow{\phi_{2}}
    &                                                                                                                                                                                                                     \\
= & \frac{\textstyle \overline{{\mathsf{U}}}_{1}
                                         \left( \left\langle \overleftarrow{\phi_{1}},
                                                                     \overrightarrow{\phi_{2}}
                                                 \right\rangle
                                        \right)
           }
           {\textstyle \overleftarrow{p_{1}}
                       \cdot
                       \widehat{{\mathsf{U}}}_{1}
                                         \left( \left\langle \overleftarrow{\phi_{2}},
                                                                    \overrightarrow{\phi_{1}}
                                                 \right\rangle
                                         \right)
                       +
                       \frac{\textstyle \overleftarrow{p_{2}}
                                        \cdot                                       
                                        \overrightarrow{p_{1}}
                            }
                            {\textstyle \overrightarrow{p_{2}}
                            }
                       \cdot
                       \frac{\textstyle \overrightarrow{p_{2}}
                            }
                            {\textstyle \overleftarrow{p_{2}}
                            }
                       \cdot     
                       \overline{{\mathsf{U}}}_{1}
                                         \left( \left\langle \overleftarrow{\phi_{1}},
                                                                    \overrightarrow{\phi_{2}}
                                                 \right\rangle
                                        \right)     
          }
      \cdot
      \overleftarrow{\phi_{2}} 
    & \mbox{(by (\ref{equation 5}))}                                                                                                                                                          \\
= & \overleftarrow{\phi_{2}}\, .
    & 
\end{eqnarray*}
}
}
The second component of
the second entry of
$\langle \overleftarrow{\varphi_{1}},
             \overrightarrow{\varphi_{2}}
  \rangle           
 \ast
 \langle \overleftarrow{\varphi_{2}},
            \overrightarrow{\varphi_{1}}
 \rangle$
is
{
\small
\textcolor{black}{
\begin{eqnarray*}
      \frac{\textstyle \overline{{\mathsf{U}}}_{2}
                                         (\overleftarrow{\varphi_{2}},
                                          \overrightarrow{\varphi_{1}})
           }
           {\textstyle \overline{{\mathsf{U}}}_{1}
                                         (\overleftarrow{\varphi_{1}},
                                          \overrightarrow{\varphi_{2}})
                       +
                       \overline{{\mathsf{U}}}_{2}
                                         (\overleftarrow{\varphi_{2}},
                                          \overrightarrow{\varphi_{1}})
           }
      \cdot     
      \overrightarrow{\varphi_{2}}  
& = & \left( 1
             -
             \frac{\textstyle \overline{{\mathsf{U}}}_{1}
                                                (\overleftarrow{\varphi_{1}},
                                                 \overrightarrow{\varphi_{2}})
                  }
                  {\textstyle \overline{{\mathsf{U}}}_{1}
                                                (\overleftarrow{\varphi_{1}},
                                                 \overrightarrow{\varphi_{2}})
                              +
                              \overline{{\mathsf{U}}}_{2}
                                                (\overleftarrow{\varphi_{2}},
                                                 \overrightarrow{\varphi_{1}})
                  }
      \right)
      \overrightarrow{\varphi_{2}}                                                        \\
& = & \left( 1 - \overleftarrow{p_{2}}
      \right)
      \cdot                  
      \frac{\textstyle 1}
           {\textstyle \overrightarrow{p_{2}}}
      \cdot
      \overrightarrow{\phi_{2}}                                                           \\
& = & \overrightarrow{\phi_{2}}\, .       
\end{eqnarray*}
}
}
\noindent
The proofs that Condition {\sf (C.2)}
(resp., Condition {\sf (C.3)})
implies Condition {\sf (C'.2)}
(resp., Condition {\sf (C'.3)})
are corresponding.
\end{proof}

\noindent
Recall that
the balanced mixture is an injective map
as long as the supports (but not the probabilities)
are concerned,
Hence,
Proposition~\ref{frombasictosymmetric}
implies that
{
\small
\textcolor{black}{
\begin{eqnarray*}
|{\mathcal{NE}}\left( {\mathsf{GHR}}({\mathsf{G}})
                            \right)|
&  \geq &
  |{\mathcal{NE}}{\mathsf{G}}|^{2}                        
  +
  2 \cdot
  |{\mathcal{NE}}\left( {\mathsf{G}}
                            \right)|\, .
\end{eqnarray*}
}
}  
\noindent
\textcolor{black}{(The squared term comes from
forming \textcolor{black}{the pair}
of balanced mixtures
${\bm{\rho}}
  \ast
  {\bm{\sigma}}$
and
${\bm{\sigma}}
  \ast
  {\bm{\rho}}$,
\textcolor{black}{for each pair}
${\bm{\rho}},
  {\bm{\sigma}}
  \in
  {\mathcal{NE}}\left( \left\langle {\mathsf{S}},
                                                      {\mathsf{S}}^{{\rm{T}}}
                                   \right\rangle
                          \right)$;
the linear term
comes from forming \textcolor{black}{the pair}
of balanced mixtures
${\bm{\rho}}
  \ast
  \left\langle {\mathsf{0}}^{n},
                     {\mathsf{0}}^{n}
  \right\rangle$
and
$\left\langle {\mathsf{0}}^{n},
                     {\mathsf{0}}^{n}
  \right\rangle
  \ast
  {\bm{\rho}}$,
for each
${\bm{\rho}}
  \in
  {\mathcal{NE}}\left( \left\langle {\mathsf{S}},
                                                      {\mathsf{S}}^{{\rm{T}}}
                                   \right\rangle
                          \right)$.)}                                
Proposition~\ref{fromsymmetrictobasic2}
implies that
a Nash equilibrium for
\textcolor{black}{$\langle {\mathsf{S}},
                                           {\mathsf{S}}^{\mbox{\rm{T}}}
 \rangle$}
is induced via the balanced mixture
either by a single Nash equilibrium
(Cases {\sf (C'.2)} and {\sf (C'.3)})
or by a pair of Nash equilibria 
(Case {\sf (C'.1)})
for ${\mathsf{G}}$.
Hence,
Proposition~\ref{fromsymmetrictobasic2}
establishes that
the balanced mixture is a surjective map,
so that
{
\small
\textcolor{black}{
\begin{eqnarray*}
|{\mathcal{NE}}\left( {\mathsf{GHR}}({\mathsf{G}})
                            \right)|
& = &
  |{\mathcal{NE}}\left( {\mathsf{G}}
                                     \right)|^{2}                                    
  +
  2 \cdot
  |{\mathcal{NE}}\left( {\mathsf{G}}
                            \right)|                                                                      \\
& = &
|{\mathcal{NE}}\left( {\mathsf{G}}
                                 \right)|
\cdot                                 
\left( |{\mathcal{NE}}\left( {\mathsf{G}}
                                 \right)|
          +
          2
\right)\, .                 
\end{eqnarray*}
}
} 
\noindent       
Thus,
\textcolor{black}{Theorems}~\ref{frombasictosymmetric}
and~\ref{fromsymmetrictobasic2}
provide together
a complete characterization 
of the Nash equilibria
for the ${\mathsf{GHR}}$-symmetrization
\textcolor{black}{$\langle {\mathsf{S}},
                    {\mathsf{S}}^{\mbox{\rm{T}}}
 \rangle$}
in terms of those for
$\left\langle {\mathsf{R}},
                    {\mathsf{C}}
 \right\rangle$;
so
computing a Nash equilibrium
for a win-lose bimatrix game
and 
computing a Nash equilibrium
for a symmetric win-lose bimatrix game
are polynomially equivalent problems.


\subsection{Complexity of \textcolor{black}{the} Search Problem}
\label{complexity of search problem}

\noindent
Here is an algorithm
to compute a Nash equilibrium
for a win-lose bimatrix game ${\mathsf{G}}$
with the positive utility property,
with a single invocation
of an algorithm
to compute a Nash equilibrium
for a symmetric win-lose bimatrix game
with the positive utility property:

{
\small
\begin{center}
\fbox{
\begin{minipage}{6.0in}
\begin{enumerate}

\item[{\sf (1)}]
Construct the win-lose
${\mathsf{GHR}}$-symmetrization
${\mathsf{GHR}}({\mathsf{G}})$.

\item[{\sf (2)}]
Compute a Nash equilibrium ${\bm{\phi}}$
for the symmetric win-lose game 
${\mathsf{GHR}}({\mathsf{G}})$.

\item[{\sf (3)}]
Recover a Nash equilibrium
for ${\mathsf{G}}$
as follows:
\begin{enumerate}

\item[{\sf (3.1)}]
If
$\overleftarrow{\phi_{1}},
 \overrightarrow{\phi_{1}},
 \overleftarrow{\phi_{2}},
 \overrightarrow{\phi_{2}}
 \neq
 {\mathsf{0}}^n$,
then output
$\langle \overleftarrow{\varphi_{2}},
              \overrightarrow{\varphi_{1}}
  \rangle$
(or  
$\langle \overleftarrow{\varphi_{1}},
             \overrightarrow{\varphi_{2}}
 \rangle$).

\item[{\sf (3.2)}]
If
$\overleftarrow{\phi_{1}}
 =
 \overrightarrow{\phi_{2}}
 =
 {\sf 0}^n$, 
then output
$\langle \overleftarrow{\varphi_{2}},
             \overrightarrow{\varphi_{1}}
  \rangle$.

\item[{\sf (3.3)}]
If
$\overrightarrow{\phi_{1}}
 =
 \overleftarrow{\phi_{2}}
 =
 {\mathsf{0}}^n$,
then output
$\langle \overleftarrow{\varphi_{1}},
              \overrightarrow{\varphi_{2}}
  \rangle$.

\end{enumerate}

\end{enumerate}
\end{minipage}
}
\end{center}
}

\noindent
Correctness 
follows from
Theorem~\ref{fromsymmetrictobasic2}.
Since computing a Nash equilirium
for a win-lose bimatrix game
with the positive utility property
is ${\mathcal{PPAD}}$-hard
(Section~\ref{winlose bimatrix with pup}), 
it immediately follows:

\begin{theorem}
\label{new result}
Computing a Nash equilirium
for a symmetric win-lose bimatrix game
with the positive utility property
is ${\mathcal{PPAD}}$-complete. 
\end{theorem}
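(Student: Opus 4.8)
The plan is to establish the two halves of the completeness claim separately: membership in $\mathcal{PPAD}$, and $\mathcal{PPAD}$-hardness. Membership is immediate and requires no new work: computing a Nash equilibrium for an arbitrary bimatrix game is in $\mathcal{PPAD}$ by the result underlying~\cite{CDT09,DGP09}, and a symmetric win-lose bimatrix game is a special case of a bimatrix game; since every game has a Nash equilibrium by Nash~\cite{N50,N51}, the search problem is total and the standard $\mathcal{PPAD}$ reduction applies verbatim. So the entire burden is on the hardness direction.

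For hardness, the approach I would take is exactly the one foreshadowed by the algorithm stated just above Theorem~\ref{new result}: reduce the search problem for a win-lose bimatrix game with the positive utility property to the search problem for a symmetric win-lose bimatrix game with the positive utility property, and then invoke the $\mathcal{PPAD}$-hardness of the former from Section~\ref{winlose bimatrix with pup} (Proposition~\ref{pup is not restrictive} and the remarks following it). Concretely: given a win-lose bimatrix game ${\mathsf{G}}$ with the positive utility property, form its win-lose ${\mathsf{GHR}}$-symmetrization ${\mathsf{GHR}}({\mathsf{G}})$ in polynomial time; this is again win-lose, and by Lemma~\ref{little park kafe} it again has the positive utility property, so it is a legitimate input to the assumed oracle. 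The oracle returns a Nash equilibrium $\bm{\phi}$ for ${\mathsf{GHR}}({\mathsf{G}})$. By Theorem~\ref{fromsymmetrictobasic2}, $\bm{\phi}$ falls into exactly one of Cases {\sf (C'.1)}, {\sf (C'.2)}, {\sf (C'.3)}; in each case the theorem exhibits a Nash equilibrium for ${\mathsf{G}}$ — namely $\langle \overleftarrow{\varphi_{2}}, \overrightarrow{\varphi_{1}} \rangle$ or $\langle \overleftarrow{\varphi_{1}}, \overrightarrow{\varphi_{2}} \rangle$ — obtained by splitting $\bm{\phi}$ into its $n$-dimensional components and renormalizing. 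Checking which of the three cases holds amounts to testing which of the four components $\overleftarrow{\phi_{1}}, \overrightarrow{\phi_{1}}, \overleftarrow{\phi_{2}}, \overrightarrow{\phi_{2}}$ are the zero vector, which is trivially polynomial time. Correctness of the recovery step is precisely the content of Theorem~\ref{fromsymmetrictobasic2}, so nothing further needs to be argued there.

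Putting the pieces together: the map ${\mathsf{G}} \mapsto {\mathsf{GHR}}({\mathsf{G}})$ together with the three-way extraction rule is a polynomial-time many-one reduction from the ($\mathcal{PPAD}$-hard) search problem for win-lose bimatrix games with the positive utility property to the search problem for symmetric win-lose bimatrix games with the positive utility property. Hence the latter is $\mathcal{PPAD}$-hard, and combined with membership it is $\mathcal{PPAD}$-complete. I do not anticipate a serious obstacle: all the heavy lifting — that ${\mathsf{GHR}}$ preserves the win-lose property and the positive utility property, and the exact correspondence between Nash equilibria of ${\mathsf{G}}$ and of ${\mathsf{GHR}}({\mathsf{G}})$ via balanced mixtures — has already been done in Lemma~\ref{little park kafe} and Theorems~\ref{frombasictosymmetric} and~\ref{fromsymmetrictobasic2}. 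The only point deserving a sentence of care is that Proposition~\ref{pup is not restrictive} gives $\mathcal{PPAD}$-hardness for the positive-utility-property class only after one disposes of the trivial case {\sf (C.1)} (a pure Nash equilibrium, computable in polynomial time), so the reduction should be phrased as: first check for a pure equilibrium of the original win-lose game, and only invoke the symmetric oracle on the instance produced by Proposition~\ref{pup is not restrictive} in the remaining case.
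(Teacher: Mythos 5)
Your proposal is correct and follows essentially the same route as the paper: the paper's proof is precisely the algorithm stated just before the theorem, namely the polynomial-time reduction ${\mathsf{G}} \mapsto {\mathsf{GHR}}({\mathsf{G}})$ followed by the three-case recovery of a Nash equilibrium for ${\mathsf{G}}$ via Theorem~\ref{fromsymmetrictobasic2}, with ${\mathcal{PPAD}}$-hardness inherited from Section~\ref{winlose bimatrix with pup}. Your added remarks on ${\mathcal{PPAD}}$ membership and on first disposing of the pure-equilibrium case {\sf (C.1)} of Proposition~\ref{pup is not restrictive} are correct refinements of details the paper leaves implicit, not a different argument.
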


\subsection{Complexity of \textcolor{black}{the} Counting Problem}
\label{complexity of counting problem}

\noindent
It was a consequence
of Propositions~\ref{if unsatisfied}
and~\ref{final lemma}
that \textcolor{black}{computing} the number of Nash equilibria
for a win-lose bimatrix game
is $\# {\mathcal{P}}$-hard.
We now \textcolor{black}{extend this result
to symmetric win-lose bimatrix games}.
We show:

\begin{theorem}
\label{sharp pi complete symmetric win-lose}
\textcolor{black}{Computing} the number 
\textcolor{black}{(resp., the parity of the number)}
of Nash equilibria
for a symmetric win-lose bimatrix game
is $\# {\mathcal{P}}$-complete
\textcolor{black}{(resp., $\oplus {\mathcal{P}}$-complete)}.
\end{theorem}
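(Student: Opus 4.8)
The plan is to compose the win-lose reduction of Section~\ref{winlose reduction} with the win-lose ${\mathsf{GHR}}$-symmetrization of Section~\ref{winlose gkt symmetrization}, using a \emph{fixed} gadget game with an \emph{even} number of Nash equilibria, and then to arithmetically invert the resulting closed-form count. Membership in $\# {\mathcal{P}}$ (resp.\ $\oplus {\mathcal{P}}$) follows exactly as for general bimatrix games (Section~\ref{framework decision problems}): a Nash equilibrium is a polynomially-sized, polynomial-time verifiable witness, and it suffices to count Nash equilibria with pairwise distinct supports, a finite set; in the games produced by our reduction this set is finite in any case. So the substance is the hardness direction.

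First I would fix ${\widehat{{\mathsf{G}}}} := {\widehat{{\mathsf{G}}}}_{5}[2]$, which by Proposition~\ref{very recent} is a symmetric win-lose bimatrix game with the positive utility property and exactly $2$ Nash equilibria; write $c := |{\mathcal{NE}}({\widehat{{\mathsf{G}}}})| = 2$, a fixed constant independent of the input. Given a {\sf 3SAT} formula ${\mathsf{\phi}}$---after padding with dummy variables so that $n = |{\mathsf{Var}}({\mathsf{\phi}})| \geq 4$, which alters neither $\# {\mathsf{\phi}}$ nor $\oplus {\mathsf{\phi}}$---I would build in polynomial time the win-lose bimatrix game ${\mathsf{G}} := {\mathsf{G}}({\widehat{{\mathsf{G}}}}, {\mathsf{\phi}})$ and then its win-lose ${\mathsf{GHR}}$-symmetrization ${\widetilde{{\mathsf{G}}}} := {\mathsf{GHR}}({\mathsf{G}})$, a symmetric win-lose bimatrix game. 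By Propositions~\ref{if unsatisfied} and~\ref{final lemma} (as summarized at the end of Section~\ref{reduction proof}), $|{\mathcal{NE}}({\mathsf{G}})| = c + \# {\mathsf{\phi}}$; and by Theorems~\ref{frombasictosymmetric} and~\ref{fromsymmetrictobasic2}, $|{\mathcal{NE}}({\widetilde{{\mathsf{G}}}})| = |{\mathcal{NE}}({\mathsf{G}})| \cdot (|{\mathcal{NE}}({\mathsf{G}})| + 2) = (2 + \# {\mathsf{\phi}})(4 + \# {\mathsf{\phi}})$.

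For the counting version, observe that $|{\mathcal{NE}}({\widetilde{{\mathsf{G}}}})| + 1 = (\# {\mathsf{\phi}})^{2} + 6\,\# {\mathsf{\phi}} + 9 = (\# {\mathsf{\phi}} + 3)^{2}$, so $\# {\mathsf{\phi}} = \sqrt{|{\mathcal{NE}}({\widetilde{{\mathsf{G}}}})| + 1} - 3$ is recoverable from $|{\mathcal{NE}}({\widetilde{{\mathsf{G}}}})|$ by an integer square root in polynomial time; since computing $\# {\mathsf{\phi}}$ for a {\sf 3SAT} formula is $\# {\mathcal{P}}$-hard~\cite{V79}, this yields $\# {\mathcal{P}}$-hardness. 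For the parity version, note that $2 + \# {\mathsf{\phi}}$ and $4 + \# {\mathsf{\phi}}$ have the same parity as $\# {\mathsf{\phi}}$: if $\# {\mathsf{\phi}}$ is even, both factors are even and their product is $\equiv 0 \pmod{4}$, whereas if $\# {\mathsf{\phi}}$ is odd, both factors are odd and so is the product. Hence $|{\mathcal{NE}}({\widetilde{{\mathsf{G}}}})|$ is odd if and only if $\# {\mathsf{\phi}}$ is odd, i.e.\ $\oplus {\mathsf{\phi}} = |{\mathcal{NE}}({\widetilde{{\mathsf{G}}}})| \bmod 2$; since computing $\oplus {\mathsf{\phi}}$ for a {\sf 3SAT} formula is $\oplus {\mathcal{P}}$-hard~\cite{PZ83}, this gives $\oplus {\mathcal{P}}$-hardness. (This is the step where the choice of a gadget with an \emph{even} equilibrium count matters: it makes the whole ``cascade'' parity-preserving outright, rather than parity-preserving only up to complementation.)

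The main obstacle is not located in this proof but is entirely discharged by the earlier machinery: the exact formulas $|{\mathcal{NE}}({\mathsf{G}})| = c + \# {\mathsf{\phi}}$ and $|{\mathcal{NE}}({\widetilde{{\mathsf{G}}}})| = |{\mathcal{NE}}({\mathsf{G}})|(|{\mathcal{NE}}({\mathsf{G}})| + 2)$ depend on the delicate analysis of the win-lose reduction and on the balanced-mixture characterization of the Nash equilibria of the ${\mathsf{GHR}}$-symmetrization. Granting those, the remaining care is bookkeeping: checking that $c$ is a constant independent of ${\mathsf{\phi}}$ (so the arithmetic inversion runs in polynomial time), that all equilibria counted on ${\widetilde{{\mathsf{G}}}}$---the gadget equilibria, the literal equilibria in parsimonious bijection with satisfying assignments, and their balanced mixtures---have pairwise distinct supports, so that ``number of Nash equilibria'' agrees with ``number of support-distinct Nash equilibria'' there, and that ${\widehat{{\mathsf{G}}}}_{5}[2]$ does satisfy the positive utility property needed to invoke Theorems~\ref{frombasictosymmetric} and~\ref{fromsymmetrictobasic2}.
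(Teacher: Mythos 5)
Your proposal is correct and follows essentially the same route as the paper's own proof: it composes the win-lose reduction with the win-lose ${\mathsf{GHR}}$-symmetrization, uses the formulas $|{\mathcal{NE}}({\mathsf{G}})| = |{\mathcal{NE}}({\widehat{{\mathsf{G}}}})| + \# {\mathsf{\phi}}$ and $|{\mathcal{NE}}({\widetilde{{\mathsf{G}}}})| = |{\mathcal{NE}}({\mathsf{G}})| \cdot (|{\mathcal{NE}}({\mathsf{G}})| + 2)$, solves for $\# {\mathsf{\phi}}$, and reads off $\oplus {\mathsf{\phi}}$ from the parity of the product, the only (immaterial) differences being that the paper keeps the gadget generic with constant $|{\mathcal{NE}}({\widehat{{\mathsf{G}}}})|$ while you instantiate ${\widehat{{\mathsf{G}}}}_{5}[2]$ and make the inversion explicit via a perfect square. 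One small slip worth fixing: padding with unconstrained dummy variables does \emph{not} preserve $\# {\mathsf{\phi}}$ (each free variable doubles the count), but this is harmless since counting and parity of satisfying assignments remain $\# {\mathcal{P}}$-hard and $\oplus {\mathcal{P}}$-hard when restricted to {\sf 3SAT} formulas with at least four variables.
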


\begin{proof}
\textcolor{black}{Fix a win-lose gadget game
${\widehat{{\mathsf{G}}}}$
\textcolor{black}{and} a {\sf 3SAT} formula ${\mathsf{\phi}}$,
\textcolor{black}{inducing} the win-lose bimatrix game
\textcolor{black}{${\mathsf{G}}
  =
  {\mathsf{G}} ( {\widehat{{\mathsf{G}}}},
                           {\mathsf{\phi}}
                        )$}
and the symmetric win-lose bimatrix game
${\widetilde{{\mathsf{G}}}}
  =
  {\mathsf{GHR}}({\mathsf{G}})$.                           
By Lemma~\ref{f is PNE}
(Condition {\sf (1)})
and Propositions~\ref{if unsatisfied}
and~\ref{final lemma},
{
\small
\textcolor{black}{
\begin{eqnarray*}
          | {\mathcal{NE}}( {\mathsf{G}} ( {\widehat{{\mathsf{G}}}},
                                                                                   {\mathsf{\phi}}
                                                                    )         
                                            )
          |                                        
& = & | {\mathcal{NE}}( {\widehat{{\mathsf{G}}}}
                                            )
          |
          +
          \# {\mathsf{\phi}}\, .       
\end{eqnarray*}
}
}
\noindent
Hence,
by Theorems~\ref{frombasictosymmetric}
and~\ref{fromsymmetrictobasic2},
{
\small
\textcolor{black}{
\begin{eqnarray*}
&     &      | {\mathcal{NE}} ( {\widetilde{{\mathsf{G}}}}
                                            )
                |                                                                                                                                                                                                                  \\
& = & \underbrace{(    | {\mathcal{NE}}( {\widehat{{\mathsf{G}}}}
                                                                )
                                     |
                                     +
                                     \# {\mathsf{\phi}}
                               )^{2}}_{\mbox{balanced mixtures of two Nash equilibria}}
          +
          \underbrace{2
          \cdot
          (    | {\mathcal{NE}}( {\widehat{{\mathsf{G}}}}
                                          )
                |
                +
                \# {\mathsf{\phi}}
          )}_{\mbox{balanced mixtures of a Nash equilibrium and $\left\langle 0^{n}, 0^{n} \right\rangle$}}                                               \\
& = & (    | {\mathcal{NE}}( {\widehat{{\mathsf{G}}}}
                                           )
                |
                +
                \# {\mathsf{\phi}}
          )
          \cdot
          ( | {\mathcal{NE}}( {\widehat{{\mathsf{G}}}}
                                       )
             |
             +
             \# {\mathsf{\phi}}
             +
             2
          )\, ,
\end{eqnarray*}
}
}
which can be solved for $\# {\mathsf{\phi}}$.
Since computing $\# {\mathsf{\phi}}$
is $\# {\mathcal{P}}$-hard~\cite{V79},
the $\# {\mathcal{P}}$-hardness follows.
Note that
\textcolor{black}{
$\oplus  | {\mathcal{NE}}( {\widetilde{{\mathsf{G}}}}
                                         )
             | 
  =
  \oplus (  | {\mathcal{NE}}( {\widehat{{\mathsf{G}}}}
                                           )
                |
                +
                \# {\mathsf{\phi}}
           )$,}
from which
$\oplus {\mathsf{\phi}}$
can be computed.
Since computing $\oplus {\mathsf{\phi}}$
is $\oplus {\mathcal{P}}$-hard~\cite{PZ83},
the $\oplus {\mathcal{P}}$-hardness follows.                                                                          
}
\end{proof}

\section{Complexity \textcolor{black}{Results}}
\label{complexity results}

\noindent
Symmetric win-lose bimatrix games,
win-lose bimatrix games
and win-lose three-player games
are considered in Sections~\ref{symmetric winlose games},
\ref{winlose bimatrix games}
and~\ref{three player games},
respectively.
We first recall an informal summary
of the proof technique
from Section~\ref{introduction complexity results}.
The win-lose reduction
is used
to obtain, 
given a suitable win-lose gadget game
${\widehat{{\mathsf{G}}}}$
and a formula $\phi$, 
a win-lose game
${\mathsf{G}}
  =
  {\mathsf{G}}({\widehat{{\mathsf{G}}}},
                        \phi)$;
Propositions~\ref{if unsatisfied}
and~\ref{final lemma}
are used to relate the properties
of the Nash equilibria for ${\mathsf{G}}$
to the satisfiablity of ${\mathsf{\phi}}$.
For the case of symmetric win-lose bimatrix games,
the ${\mathsf{GHR}}$-symmetrization
from Section~\ref{winlose gkt symmetrization}
is used to obtain from ${\mathsf{G}}$
the symmetric win-lose game
${\widetilde{{\mathsf{G}}}}
  =
  {\mathsf{GHR}}({\mathsf{G}})$;
Theorems~\ref{frombasictosymmetric}
and~\ref{fromsymmetrictobasic2}
are used to relate the properties
of the Nash equilibria
for ${\widetilde{{\mathsf{G}}}}$
to those of the Nash equilibria
for ${\mathsf{G}}$. 
In turn,
this \textcolor{black}{results in} relating
the properties
of the Nash equilibria
for ${\widetilde{{\mathsf{G}}}}$
to the satisfiability of ${\mathsf{\phi}}$. 

\noindent
Recall that
for bimatrix games,
the win-lose reduction sets that
the strategy set of each player 
$i \in [2]$ in
${\mathsf{G}}$ 
is
${\mathsf{\Sigma}}_{i}({\mathsf{G}})
  =
  {\widehat{{\mathsf{\Sigma}}}}_{i}
  \cup
  {\mathsf{L}}
  \cup
  {\mathsf{V}}
  \cup
  {\mathcal{C}}$. 

\subsection{Symmetric Win-Lose Bimatrix Games}
\label{symmetric winlose games}

\noindent
Recall that the \textcolor{black}{win-lose} ${\mathsf{GHR}}$-symmetrization
\textcolor{black}{mirrors}
the strategies of each player
in a bimatrix game.
So,
for each player $i \in [2]$, 
${\mathsf{\Sigma}}_{i}({\widetilde{{\mathsf{G}}}})
  =
  {\widehat{{\mathsf{\Sigma}}}}_{i}
  \cup
  {\mathsf{L}}
  \cup
  {\mathsf{V}}
  \cup
  {\mathcal{C}}
  \cup
  {\widehat{{\mathsf{\Sigma}}}}^{\prime}_{i}
  \cup
  {\mathsf{L}}^{\prime}
  \cup
  {\mathsf{V}}^{\prime}
  \cup
  {\mathcal{C}}^{\prime}$, 
where
the strategy sets 
${\widehat{{\mathsf{\Sigma}}}}^{\prime}_{i}$, 
${\mathsf{L}}^{\prime}$, 
${\mathsf{V}}^{\prime}$ 
and
${\mathcal{C}}^{\prime}$ 
are \textcolor{black}{mirrors of
the strategy sets}
${\widehat{{\mathsf{\Sigma}}}}_{i}$, 
${\mathsf{L}}$, 
${\mathsf{V}}$ 
and
${\mathcal{C}}$, 
respectively. 
We show:

\begin{theorem}
\label{mainextended}
Restricted to symmetric win-lose bimatrix games, 
the following decision problems are ${\mathcal{NP}}$-complete:
\begin{center}
\begin{small}
\begin{tabular}{|l|l|}
\hline
\hline 
  Group I                                                                              & Group III                                                                                                    \\
 \hline
 {\sf $\exists$ NASH WITH SMALL UTILITIES}                    & {\sf $\exists$ NASH WITH LARGE UTILITIES}                                         \\
 {\sf $\exists$ NASH WITH SMALL TOTAL UTILITY}            & {\sf $\exists$ NASH WITH LARGE TOTAL UTILITY}                                 \\
 {\sf $\exists$ NASH WITH LARGE SUPPORTS}                  & {\sf $\exists$ NASH WITH SMALL SUPPORTS}                                        \\
 \cline{2-2}
 \cline{2-2}
 {\sf $\exists$ NASH WITH RESTRICTING SUPPORTS}      &  Group IV                                                                                                   \\
 \cline{2-2}
 {\sf $\exists$ NASH WITH RESTRICTED SUPPORTS}        & {\sf $\exists$ $k+1$ NASH} (with $k \geq 1$)                                        \\
 {\sf $\exists$ NASH WITH SMALL PROBABILITIES}          & {\sf $\exists$ $\neg$ PARETO-OPTIMAL NASH}                                     \\
  {\sf $\exists$ $\neg$ UNIFORM NASH}                           & {\sf $\exists$ $\neg$ STRONGLY PARETO-OPTIMAL NASH}                   \\   
 \cline{1-1}
 Group II                                                                             &  {\sf $\exists$ $\neg$ SYMMETRIC NASH}                                              \\
 \hline
 {\sf $\exists$ UNIFORM NASH}                                        &                                                                                                                  \\
\hline
\hline
\end{tabular}
\end{small}
\end{center}
Furthermore,
their counting versions are
$\# {\mathcal{P}}$-complete;
\textcolor{black}{so is
{\sf $\#$ SYMMETRIC NASH}.}
\textcolor{black}{Except for {\sf $\oplus$ $\neg$ UNIFORM NASH}
(whose $\oplus {\mathcal{P}}$-hardness remains open)
and for {\sf $\oplus$ $\neg$ SYMMETRIC NASH}
(which is in ${\mathcal{P}}$),
their parity versions are $\oplus {\mathcal{P}}$-complete.}       
\end{theorem}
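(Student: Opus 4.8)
\textbf{Proof plan for Theorem~\ref{mainextended}.}
The plan is to run the three-step program from Section~\ref{introduction complexity results} for each property $\Pi$ in the list, always reducing from {\sf 3SAT}. Membership in ${\mathcal{NP}}$ (resp.\ $\#{\mathcal{P}}$, resp.\ $\oplus{\mathcal{P}}$) was already observed in Section~\ref{framework decision problems} for all these problems over general bimatrix games, so only hardness requires work. For each $\Pi$, the first task is to pick a fixed win-lose gadget game ${\widehat{{\mathsf{G}}}}$, with the positive utility property, whose Nash equilibria all \emph{dismatch} $\Pi$ in a way that is preserved after applying the ${\mathsf{GHR}}$-symmetrization; the gadgets of Section~\ref{gadget games} (and their combinations) supply these. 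For example, for {\sf $\exists$ UNIFORM NASH} take ${\widehat{{\mathsf{G}}}}={\widehat{{\mathsf{G}}}}_3$ (no uniform Nash equilibrium), for {\sf $\exists$ $\neg$ SYMMETRIC NASH} and {\sf $\exists$ $k+1$ NASH} take ${\widehat{{\mathsf{G}}}}_5[k]$, for {\sf $\exists$ $\neg$ UNIFORM NASH} take ${\widehat{{\mathsf{G}}}}_1[h]$ with $h\ge 2$, and for the Pareto-variants take ${\widehat{{\mathsf{G}}}}_1[1]$ or ${\widehat{{\mathsf{G}}}}_5[1]$. Then form ${\mathsf{G}}={\mathsf{G}}({\widehat{{\mathsf{G}}}},{\mathsf{\phi}})$ and ${\widetilde{{\mathsf{G}}}}={\mathsf{GHR}}({\mathsf{G}})$.

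The core of the argument is to pin down exactly which Nash equilibria of ${\widetilde{{\mathsf{G}}}}$ match $\Pi$. By Theorems~\ref{frombasictosymmetric} and~\ref{fromsymmetrictobasic2}, every Nash equilibrium of ${\widetilde{{\mathsf{G}}}}$ is a balanced mixture ${\bm{\rho}}\ast{\bm{\tau}}$ or ${\bm{\tau}}\ast{\bm{\rho}}$ where ${\bm{\tau}}\in{\mathcal{NE}}({\mathsf{G}})$ and ${\bm{\rho}}\in{\mathcal{NE}}({\mathsf{G}})\cup\{\langle 0^n,0^n\rangle\}$. By Lemma~\ref{f is PNE}, Corollary~\ref{nash is assignment} and Propositions~\ref{if unsatisfied} and~\ref{final lemma}, the Nash equilibria of ${\mathsf{G}}$ split into the fixed gadget equilibria (inherited from ${\widehat{{\mathsf{G}}}}$) and, when ${\mathsf{\phi}}$ is satisfiable, the literal equilibria ${\bm{\sigma}}({\mathsf{\gamma}})$, which are symmetric, have support size exactly $n$, disjoint from ${\widehat{{\mathsf{\Sigma}}}}_i$, uniform on ${\mathsf{\gamma}}$, and give each special player utility $2/n$. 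The plan is to check, property by property, two implications: (a) when ${\mathsf{\phi}}$ is unsatisfiable, every balanced mixture built only from gadget equilibria dismatches $\Pi$ (this uses the uniformity properties of Observation~\ref{facebook observation}, the symmetry property of Lemma~\ref{symmetry observation}, the utility bound of Lemma~\ref{vittorio simplification for group iv}, and the explicit utility/support data of the gadget of Section~\ref{gadget games} chosen for $\Pi$); (b) when ${\mathsf{\phi}}$ is satisfiable, some balanced mixture involving a literal equilibrium matches $\Pi$ --- typically the symmetric balanced mixture ${\bm{\sigma}}({\mathsf{\gamma}})\ast{\bm{\sigma}}({\mathsf{\gamma}})$, whose component probabilities and supports are computable from Definition~\ref{gkt product} and (\ref{first equation})--(\ref{second equation}). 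For the $\neg$-properties (Group IV), observe additionally that balanced mixtures formed from two \emph{distinct} literal equilibria, or from a literal equilibrium with $\langle 0^n,0^n\rangle$, are non-symmetric / non-Pareto-optimal, supplying matches only in the satisfiable case. This gives $\Pi$-matching Nash equilibrium for ${\widetilde{{\mathsf{G}}}}$ iff ${\mathsf{\phi}}$ is satisfiable, hence ${\mathcal{NP}}$-hardness.

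For the counting versions, the plan is to write, for each $\Pi$, a closed-form expression $N_\Pi(|{\mathcal{NE}}({\widehat{{\mathsf{G}}}})|,\#{\mathsf{\phi}})$ for the number of $\Pi$-matching Nash equilibria of ${\widetilde{{\mathsf{G}}}}$, exactly as in the proof of Theorem~\ref{sharp pi complete symmetric win-lose}: count the $\Pi$-matching balanced mixtures of type {\sf (C'.1)} (a quadratic count in the number of $\Pi$-matching or $\Pi$-relevant equilibria of ${\mathsf{G}}$, with the diagonal pairs ${\bm{\sigma}}\ast{\bm{\sigma}}$ contributing the symmetric ones) and of types {\sf (C'.2)},{\sf (C'.3)} (a linear count). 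Since $|{\mathcal{NE}}({\widehat{{\mathsf{G}}}})|$ is a fixed constant independent of ${\mathsf{\phi}}$, each such formula is a fixed low-degree polynomial in $\#{\mathsf{\phi}}$ with known coefficients; solving it recovers $\#{\mathsf{\phi}}$, whose computation is $\#{\mathcal{P}}$-hard by~\cite{V79}, giving $\#{\mathcal{P}}$-hardness of {\sf $\#\,\Pi$}. For {\sf $\#$ SYMMETRIC NASH} the relevant count is $|{\mathcal{NE}}({\mathsf{G}})|=|{\mathcal{NE}}({\widehat{{\mathsf{G}}}})|+\#{\mathsf{\phi}}$ by Lemma~\ref{symmetry observation}, which is already linear. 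For the parity versions I would reduce each formula mod $2$: for all properties except the two exceptions the formula is parity-preserving in the sense that the parity of the $\Pi$-count determines $\oplus{\mathsf{\phi}}$, which is $\oplus{\mathcal{P}}$-hard by~\cite{PZ83}; for {\sf $\oplus$ $\neg$ SYMMETRIC NASH} the number of non-symmetric Nash equilibria of a symmetric bimatrix game is always even (mixed exchangeability), so it is in ${\mathcal{P}}$, and for {\sf $\oplus$ $\neg$ UNIFORM NASH} the quadratic term in its count is not parity-determined, so its parity hardness is left open.

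\textbf{Main obstacle.} The delicate part is step (a)--(b) for the few properties where the gadget equilibria are themselves uniform or symmetric (e.g.\ ${\widehat{{\mathsf{G}}}}_1[h]$ for {\sf $\exists$ UNIFORM NASH}): one must verify that the balanced mixture of uniform/symmetric gadget equilibria is \emph{not} uniform/symmetric in ${\widetilde{{\mathsf{G}}}}$ (Observation~\ref{facebook observation} and Lemma~\ref{symmetry observation} are tailored for exactly this), and, conversely, that the balanced mixture arising from a satisfying assignment \emph{does} have the matching structure --- which requires computing the normalizing factors $\overline{{\mathsf{U}}}_i$ of the literal equilibrium and checking they produce equal block-masses so that the concatenated strategies remain uniform. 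Equally delicate is engineering, for each $\Pi$ in Groups I--III, the numerical threshold (the utility value $u$, or the support-size $k$, or the sets ${\mathsf{T}}_i$) in the problem instance so that precisely the satisfiable-case balanced mixtures clear it; here Lemma~\ref{vittorio simplification for group iv} and Corollary~\ref{nash is assignment}(C.3) supply the separating utility gap, and the known support structure supplies the separating support-size gap. These are all finite case checks, but getting the thresholds simultaneously right for the decision, counting, and parity versions is where the bookkeeping is heaviest.
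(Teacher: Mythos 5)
Your blueprint for Groups I--III (gadget choice, balanced-mixture characterization via Theorems~\ref{frombasictosymmetric} and~\ref{fromsymmetrictobasic2}, threshold engineering from the utility/support data of the literal equilibria, and the counting/parity formulas inverted for $\#{\mathsf{\phi}}$ and $\oplus{\mathsf{\phi}}$) is essentially the paper's proof, including your handling of {\sf $\exists$ UNIFORM NASH} with ${\widehat{{\mathsf{G}}}}_{3}$ and of {\sf $\exists$ $\neg$ UNIFORM NASH} via a non-uniform mixture of a gadget equilibrium with a literal equilibrium. But for the four Group IV problems your step (a) cannot be carried out for \emph{any} choice of fixed gadget, and this is a genuine gap. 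By Theorem~\ref{fromsymmetrictobasic2}, even when ${\mathsf{\phi}}$ is unsatisfiable the plain symmetrization ${\widetilde{{\mathsf{G}}}}$ always contains, besides the diagonal mixtures ${\widehat{{\bm{\sigma}}}}\ast{\widehat{{\bm{\sigma}}}}$, the two mixtures ${\widehat{{\bm{\sigma}}}}\ast\langle{\mathsf{0}}^{\kappa},{\mathsf{0}}^{\kappa}\rangle$ and $\langle{\mathsf{0}}^{\kappa},{\mathsf{0}}^{\kappa}\rangle\ast{\widehat{{\bm{\sigma}}}}$. By Lemma~\ref{symmetry observation} these are non-symmetric, so {\sf $\exists$ $\neg$ SYMMETRIC NASH} is a yes-instance irrespective of satisfiability; there are always at least three Nash equilibria, so {\sf $\exists$ $k{+}1$ NASH} with $k\in\{1,2\}$ is always a yes-instance (the paper states this explicitly at the start of its Group IV argument); and since ${\widetilde{{\mathsf{U}}}}_{i}({\widehat{{\bm{\sigma}}}}\ast{\widehat{{\bm{\sigma}}}})=\frac{{\mathsf{U}}_{1}{\mathsf{U}}_{2}}{{\mathsf{U}}_{1}+{\mathsf{U}}_{2}}$ is strictly below the utilities both players get in ${\widehat{{\bm{\sigma}}}}\ast\langle{\mathsf{0}}^{\kappa},{\mathsf{0}}^{\kappa}\rangle$, a non-(Strongly-)Pareto-Optimal Nash equilibrium also always exists. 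Your proposed fix of taking ${\widehat{{\mathsf{G}}}}_{5}[k]$ as the gadget makes things worse: it already contributes $k^{2}+2k\geq k+1$ equilibria, including non-symmetric and non-Pareto-Optimal ones, in the unsatisfiable case.

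What is missing is the paper's extra construction for Case {\sf (3)} of Step 3 (Section~\ref{introduction complexity results}): the gadget for Group IV is ${\widehat{{\mathsf{G}}}}_{1}[2]$, and ${\widehat{{\mathsf{G}}}}_{5}[k]$ is instead \emph{embedded as a subgame} of the symmetrization, yielding ${\widetilde{{\mathsf{G}}}}\parallel{\widehat{{\mathsf{G}}}}_{5}[k]$. Claims~\ref{claim1} and~\ref{claim2} and Lemma~\ref{group3.1} show that when ${\mathsf{\phi}}$ is unsatisfiable all gadget-derived balanced mixtures are destroyed --- the key point being that in every such mixture some player gets utility strictly below $1$, so deviating to an embedded strategy in ${\mathsf{T}}$ (which pays $1$ against everything outside ${\mathsf{L}}\cup{\mathsf{L}}^{\prime}$) is profitable --- leaving exactly the $k$ symmetric, Pareto-Optimal and Strongly Pareto-Optimal equilibria of ${\widehat{{\mathsf{G}}}}_{5}[k]$; Lemma~\ref{group3.2} shows the literal-derived mixtures survive when ${\mathsf{\phi}}$ is satisfiable because embedded strategies pay $0$ against literals. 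The decision equivalences, and the counts $k+\#{\mathsf{\phi}}$ (for {\sf $\#$ SYMMETRIC NASH}), $\#{\mathsf{\phi}}\cdot(\#{\mathsf{\phi}}+1)$ and $\#{\mathsf{\phi}}\cdot(\#{\mathsf{\phi}}+4)$, are read off this embedded game, not off the plain ${\widetilde{{\mathsf{G}}}}$. (Your route does recover the $\#{\mathcal{P}}$-hardness of {\sf $\#$ SYMMETRIC NASH} from the plain symmetrization via Lemma~\ref{symmetry observation}, but without the embedding none of the four Group IV \emph{decision} problems is reached.)
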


\noindent
For each of the four Groups,
we fix a different gadget game ${\widehat{{\mathsf{G}}}}$.
For {\it Group I,}
$\widehat{{\mathsf{G}}}$ is fixed
to the win-lose cyclic game 
$\widehat{{\sf G}}_{1}[1]$
(Section~\ref{cyclic game}).
For {\it Group II},
$\widehat{{\mathsf{G}}}$
is fixed to the win-lose non-uniform game
$\widehat{{\mathsf{G}}}_{3}$
(Section~\ref{nonuniform game}).
For {\it Group III},
$\widehat{{\mathsf{G}}}$
is fixed to the win-lose cyclic game
$\widehat{{\mathsf{G}}}_{1}[h]$,
for any integer $h$
with $h > \frac{\textstyle n}
                        {\textstyle 2}$
(Section~\ref{cyclic game}).                        
For {\it Group IV,}
${\widehat{{\mathsf{G}}}}$
is fixed to the win-lose cyclic game
${\widehat{{\mathsf{G}}}}_{1}[2]$
(Section~\ref{cyclic game});
we also use
the win-lose diagonal game
${\widehat{{\mathsf{G}}}}_{5}[k]$
as a subgame
in the last stage of the reduction.
Regarding {\it Step 3}
in the \textcolor{black}{three-steps proof plan outlined}
in Section~\ref{introduction complexity results},
the decision problems
in {\it Group I} and {\it Group III}
fall under Case {\sf (1)};
the single decision problem
in {\it Group II}
falls under Case {\sf (2)};
the decision problems
in {\it Group IV}
fall under Case {\sf (3)}.

\noindent
Notation-wise,
we shall use,
for the proof of Theorem~\ref{mainextended}, 
the symbols
{\sf P} and {\sf Q}
to denote properties of \textcolor{black}{Nash equilibria
for} the game
\textcolor{black}{${\mathsf{GHR}}( {\mathsf{G}}( {\widehat{{\mathsf{G}}}},
                                                                    \phi
                                                    )
                           )$}                                           
in the cases where $\phi$
is unsatisfiable and satisfiable,
respectively. 
\textcolor{black}{Loosely speaking,
we shall use,
to establish ${\mathcal{NP}}$-hardness,
a {\sf Q} property
and a \textcolor{black}{contradictory}
{\sf P} property,
\textcolor{black}{which together disentangle satisfiability;}
\textcolor{black}{that is,}
the {\sf Q} property holds
if and only if ${\mathsf{\phi}}$
is satisfiable,
and this implies the ${\mathcal{NP}}$-hardness
of deciding the corresponding property.
To establish $\# {\mathcal{P}}$-hardness,
we shall express the number of Nash equilibria
for \textcolor{black}{${\mathsf{GHR}}( {\mathsf{G}}( {\widehat{{\mathsf{G}}}},
                                                                         \phi
                                                         )
                           )$} 
fulfilling the property {\sf Q}                   
as a function of $\# {\mathsf{\phi}}$;
inverting \textcolor{black}{the formula}
yields $\# {\mathsf{\phi}}$,
and $\# {\mathcal{P}}$-hardness follows.
\textcolor{black}{To establish $\oplus {\mathcal{P}}$-hardness,
we shall express the parity of
the number of Nash equilibria for
\textcolor{black}{${\mathsf{GHR}}( {\mathsf{G}}( {\widehat{{\mathsf{G}}}},
                                                                         \phi
                                                                )
                           )$} 
fulfilling the property {\sf Q}                   
as a function of $\oplus {\mathsf{\phi}}$;
inverting the expression
yields $\oplus {\mathsf{\phi}}$
and $\oplus {\mathcal{P}}$-hardness follows.
We shall use ${\mathsf{U}}$
and ${\widetilde{{\mathsf{U}}}}$
to denote (expected) utilities
for the games 
\textcolor{black}{${\mathsf{G}}
  =
  {\mathsf{G}}( {\widehat{{\mathsf{G}}}},
                                \phi
                      )$}
and
\textcolor{black}{
${\widetilde{{\mathsf{G}}}}
  =
  {\mathsf{GHR}}( {\mathsf{G}}( {\widehat{{\mathsf{G}}}},
                                                                         \phi
                                                  )
                           )$,} 
respectively.        
}                                                            
For a game ${\mathsf{G}}$,
denote as $\kappa ({\mathsf{G}})$
the maximum number of strategies
for each player in the game ${\mathsf{G}}$.}     

\begin{proof}
Consider a {\sf 3SAT} formula $\phi$
with
\textcolor{black}{$n := |{\mathsf{Var}}({\mathsf{\phi}})| \geq 5$}
\textcolor{black}{and}
a win-lose gadget game $\widehat{{\mathsf{G}}}$,
\textcolor{black}{and \textcolor{black}{their} induced}
win-lose game ${\mathsf{G}}
                                =
                                {\mathsf{G}}({\widehat{{\mathsf{G}}}},
                                                      \phi)$
constructed by the win-lose reduction
(Section~\ref{winlose reduction}),
and \textcolor{black}{the} win-lose 
${\mathsf{GHR}}$-symmetrization
$\widetilde{{\mathsf{G}}}
 :=
 {\mathsf{GHR}}\left( {\mathsf{G}}
                           \right)$
(Section~\ref{winlose gkt symmetrization}).
\textcolor{black}{Clearly,
\textcolor{black}{$\kappa ( {\mathsf{G}}({\widehat{{\mathsf{G}}}},
                                              \phi)
              )
  =
  \kappa ( {\widehat{{\mathsf{G}}}}
              )
  +
  2n + |{\mathcal{C}}(\phi)| + n(n+1)$}.
Hence,  
${\mathsf{G}} =
  {\mathsf{G}}({\widehat{{\mathsf{G}}}},
                         \phi)$
has size polynomial                                       
in the size of ${\mathsf{\phi}}$
if and only if 
${\widehat{{\mathsf{G}}}}$
has size polynomial
in the size of ${\mathsf{\phi}}$.
It follows,
by the win-lose ${\mathsf{GHR}}$-symmetrization,
that
$\widetilde{{\mathsf{G}}}$
has size polynomial                                       
in the size of ${\mathsf{\phi}}$
if and only if 
${\widehat{{\mathsf{G}}}}$
has size polynomial
in the size of ${\mathsf{\phi}}$.}

\textcolor{black}{Assume first that
$\phi$
is unsatisfiable.
Proposition~\ref{if unsatisfied} implies that
\textcolor{black}{${\mathcal{NE}}( {\mathsf{G}}({\widehat{{\mathsf{G}}}},
                                                           \phi)
                          )
  =
  {\mathcal{NE}}(  {\widehat{{\mathsf{G}}}}
                          )$};
call each Nash equilibrium
for ${\mathsf{G}}({\widehat{{\mathsf{G}}}},
                               \phi)$
coming from ${\widehat{{\mathsf{G}}}}$
a {\it gadget equilibrium}.}
\textcolor{black}{So there are
$|{\mathcal{NE}}({\widehat{{\mathsf{G}}}})|$
gadget equilibria for ${\mathsf{G}}$.}
\textcolor{black}{By Theorem~\ref{frombasictosymmetric}, 
${\widetilde{{\mathsf{G}}}}
  =
  {\mathsf{GHR}}\left( {\mathsf{G}}
                            \right)$
has the following Nash equilibria,}
\textcolor{black}{which are balanced mixtures 
of the gadget equilibria
for ${\mathsf{G}}$:}
\begin{enumerate}

\item[{\sf (1)}]                                                                                                                                                                                                                                                                                                                                                                                     
\textcolor{black}{${\bm\sigma}^{1}
  =
  {\widehat{\bm{\sigma}}}
  \ast
  {\widehat{\bm{\tau}}}$,
for each ordered pair 
$\langle {\widehat{\bm{\sigma}}},
              {\widehat{\bm{\tau}}}
  \rangle$
of the
gadget equilibria
${\widehat{\bm{\sigma}}}$
and
${\widehat{\bm{\tau}}}$
for ${\mathsf{G}}$.}

\item[{\sf (2)}]
\textcolor{black}{${\bm\sigma}^{2}
  =
  {\widehat{{\bm{\sigma}}}}
  \ast
  \left\langle 0^{\kappa},
                    0^{\kappa}
  \right\rangle
  =
  \left\langle {\widehat{\sigma}}_{1} 
                    \circ
                    {\mathsf{0}}^{\kappa},
                    {\mathsf{0}}^{\kappa} 
                    \circ 
                    {\widehat{\sigma}}_{2}
  \right\rangle$
and
${\bm\sigma}^{3}
  =
  \left\langle {\mathsf{0}}^{\kappa},
                    {\mathsf{0}}^{\kappa}
  \right\rangle                   
  \ast
  {\widehat{{\bm{\sigma}}}}
  =  
  \left\langle {\mathsf{0}}^{\kappa} \circ {\widehat{\sigma}}_{2},
                    {\widehat{\sigma}}_{1} \circ {\mathsf{0}}^{\kappa}
  \right\rangle$,
for each gadget equilibrium
${\widehat{{\bm{\sigma}}}}$
for ${\mathsf{G}}$.}

\end{enumerate}
\noindent
\textcolor{black}{By Theorem~\ref{fromsymmetrictobasic2},
${\widetilde{{\mathsf{G}}}}$
has no other Nash equilibrium.
Hence,
when ${\mathsf{\phi}}$ is unsatisfiable,
there are 
\textcolor{black}{$|{\mathcal{NE}}({\widehat{{\mathsf{G}}}})|^{2}
  +
  2 \cdot |{\mathcal{NE}}({\widehat{{\mathsf{G}}}})|$}
Nash equilibria for ${\widetilde{{\mathsf{G}}}}$.}

\textcolor{black}{Assume now that
${\mathsf{\phi}}$
is satisfiable.                                                                                                                                                                                                                                                                                                                                                                  
Then,
in addition to the gadget equilibria,
it follows,
by Proposition~\ref{final lemma},
that
${\mathsf{G}}({\widehat{{\mathsf{G}}}},
                         \phi)$
has,
for each satisfying assignment ${\mathsf{\gamma}}$
of ${\mathsf{\phi}}$,
a Nash equilibrium
${\bm{\sigma}}
  =
  {\bm{\sigma}}({\mathsf{\gamma}})$;
call ${\bm{\sigma}}$
a {\it literal equilibrium}.}                                                                                                                                                                                                                                                                                                                                                                                                                                                                                                                   
\textcolor{black}{So there are
$\# {\mathsf{\phi}}$
literal equilibria
for ${\mathsf{G}}$.}
\textcolor{black}{By Theorem~\ref{frombasictosymmetric}, 
${\widetilde{{\mathsf{G}}}}
  =
  {\mathsf{GHR}}\left( {\mathsf{G}}
                            \right)$
has,
in addition to the balanced mixtures
of gadget equilibria,}
\textcolor{black}{the following Nash equilibria,
called {\it additional,}
which are either balanced mixtures of
literal equilibria for ${\mathsf{G}}$
or balanced mixtures
of a literal and a gadget equilibrium for ${\mathsf{G}}$:}
\begin{enumerate}

\item[{\sf (1)}]
\textcolor{black}{${\bm\sigma}^{4}
  =
  {\bm{\sigma}}
  \ast
  {\bm{\tau}}$,}
\textcolor{black}{for each ordered pair
$\left\langle {\bm{\sigma}},
                     {\bm{\tau}}
  \right\rangle$
of the literal equilibria
${\bm{\sigma}}$
and
${\bm{\tau}}$
for ${\mathsf{G}}$.}

\item[{\sf (2)}]  
\textcolor{black}{${\bm\sigma}^{5}
  =
  {\bm{\sigma}}
  \ast
  \left\langle 0^{\kappa}, 0^{\kappa}
  \right\rangle$
and
${\bm\sigma}^{6}
  =
  \left\langle 0^{\kappa}, 0^{\kappa}
  \right\rangle
  \ast
  {\bm{\sigma}}$,}
\textcolor{black}{for each literal equilibrium
${\bm{\sigma}}$                                                                                                                                                                                                                                                                                                                                                                                                                                                                                                                                                                                                              
for ${\mathsf{G}}$.}

\item[{\sf (3)}]
\textcolor{black}{${\bm{\sigma}}^{7}
  =
  {\widehat{{\bm{\sigma}}}}
  \ast
  {\bm{\sigma}}$
and
${\bm{\sigma}}^{8}
  =
  {\bm{\sigma}}
  \ast
  {\widehat{{\bm{\sigma}}}}$,}
\textcolor{black}{for each pair  
of a gadget equilibrium ${\widehat{{\bm{\sigma}}}}$
and a literal
\textcolor{black}{equilibrium} ${\bm{\sigma}}$
for ${\mathsf{G}}$.}

\end{enumerate}
\textcolor{black}{Since there are
$|{\mathcal{NE}}({\widehat{{\mathsf{G}}}})|$
gadget equilibria
and
$\# {\mathsf{\phi}}$
literal equilibria,
it follows that
there are
$\left( \# {\mathsf{\phi}} \right)^{2}
  +
  2 \cdot \# {\mathsf{\phi}}
  =
  \# {\mathsf{\phi}}
  \cdot
  \left( \# {\mathsf{\phi}} + 2
  \right)$
balanced mixtures
of literal equilibria
and
$2
  \cdot
  |{\mathcal{NE}}({\widehat{{\mathsf{G}}}})|
  \cdot
  \# {\mathsf{\phi}}$
balanced mixtures of
a gadget equilibrium
and a literal equilibrium,
respectively.}
\textcolor{black}{By Theorem~\ref{fromsymmetrictobasic2},
${\widetilde{{\mathsf{G}}}}$
has no other Nash equilibrium.}

\textcolor{black}{We proceed to establish properties
of the Nash equilibria
for  ${\mathsf{G}}$ and ${\widetilde{{\mathsf{G}}}}$,
respectively.}
\textcolor{black}{Loosely speaking,
these shall be properties disentangling the satisfiability
of ${\mathsf{\phi}}$;
thus, ${\mathcal{NP}}$-hardness follows.} 
\textcolor{black}{We split the proof into four parts,
one for each Group.}

\noindent
\underline{{\it Group I:}}
\textcolor{black}{We start with a particular remark about 
{\sf $\exists$ $\neg$ UNIFORM NASH}.} 
\textcolor{black}{Proposition~\ref{final lemma} ensures that
when ${\mathsf{\phi}}$
is satisfiable,
the literal Nash equilibria
for ${\mathsf{G}}({\widehat{\mathsf{G}}}, {\mathsf{\phi}})$
dismatch non-uniformity
as they are uniform
no matter how ${\widehat{{\mathsf{G}}}}$
were chosen.
So
the win-lose reduction
is inadequate on its own
to ensure the equivalence
of non-uniformity
to the satisfiability of ${\mathsf{\phi}}$.
Although it might seem that
techniques based on the win-lose reduction
could not be adequate for showing
the ${\mathcal{NP}}$-hardness
of deciding the existence of a non-uniform 
Nash equilibrium,
we shall establish
that this is not the case.
In {\it Step 1,}
choose the gadget game ${\widehat{{\mathsf{G}}}}$
to have a single uniform Nash equilibrium.
For {\it Step 2,}
Proposition~\ref{if unsatisfied}
ensures that
${\mathsf{G}}({\widehat{\mathsf{G}}}, {\mathsf{\phi}})$
has no non-uniform
Nash equilibrium
when ${\mathsf{\phi}}$
is unsatisfiable.
\textcolor{black}{Hence,
Theorems~\ref{frombasictosymmetric}
and~\ref{fromsymmetrictobasic2}
ensure that
${\widetilde{{\mathsf{G}}}}$
has no non-uniform
Nash equilibrium
when ${\mathsf{\phi}}$ is unsatisfiable.}
But \textcolor{black}{when ${\mathsf{\phi}}$
is satisfiable,}
in {\it Step 3,}
non-uniform Nash equilibria
are created for the win-lose ${\mathsf{GHR}}$-symmetrization
of ${\mathsf{G}}({\widehat{\mathsf{G}}}, {\mathsf{\phi}})$
as balanced mixtures
of the single uniform Nash equilibrium \textcolor{black}{for} ${\mathsf{G}}$
(coming from ${\widehat{{\mathsf{G}}}}$)
with some \textcolor{black}{literal}
Nash equilibrium
for ${\mathsf{G}}({\widehat{\mathsf{G}}}, {\mathsf{\phi}})$.
So
the win-lose ${\mathsf{GHR}}$-symmetrization
has a non-uniform Nash equilibrium
if and only if ${\mathsf{\phi}}$ is satisfiable,
and ${\mathcal{NP}}$-hardness follows.
\textcolor{black}{We stress that
the ${\mathcal{NP}}$-hardness
of  {\sf $\exists$ $\neg$ UNIFORM NASH}
for symmetric win-lose bimatrix games
exploits the possibility of creating
non-uniform Nash equilibria
as balanced mixtures
of uniform Nash equilibria.
Note that
this result is subsuming the ${\mathcal{NP}}$-hardness
of {\sf $\exists$ $\neg$ UNIFORM NASH}
for the more general class
of win-lose bimatrix games,
for which the win-lose reduction
could accomodate no direct proof.}
We now continue with the formal proof.}

Fix ${\widehat{{\mathsf{G}}}}
        :=
        {\widehat{{\mathsf{G}}}}_{1}[1]$.
\textcolor{black}{Since $\kappa \left( {\widehat{{\mathsf{G}}}}_{1}[1]
                       \right)
           =
           1$,
it follows that                            
each of
${\mathsf{G}} =
  {\mathsf{G}}({\widehat{{\mathsf{G}}}},
                         \phi)$
and
$\widetilde{{\mathsf{G}}}
 =
 {\mathsf{GHR}}\left( {\mathsf{G}}
                           \right)$
has size polynomial                                       
in the size of ${\mathsf{\phi}}$.}   
\textcolor{black}{By Proposition~\ref{gadget1},}
$\widehat{{\sf G}}_{1}[1]$
has a unique Nash equilibrium
$\widehat{\bm{\sigma}}$. 
Assume first that $\phi$
is unsatisfiable.
Since
${\mathcal{NE}}({\mathsf{G}}({\widehat{{\mathsf{G}}}}_{1}[1],
                                                   {\mathsf{\phi}}))
   =
   {\mathcal{NE}}(  {\widehat{{\mathsf{G}}}}_{1}[1])$,
it follows that
${\mathsf{G}}({\widehat{{\mathsf{G}}}}_{1}[1],
                         \phi)$
has a unique 
Nash equilibrium,
the gadget equilibrium
$\widehat{\bm{\sigma}}$,
which has,
by Proposition~\ref{gadget1},
the following properties:
\begin{quote}
\textcolor{black}{For each player $i \in [2]$:
${\mathsf{U}}_{i}({\widehat{\bm{\sigma}}})
 =
 1$,
so that 
$\sum_{i \in [2]}
    {\mathsf{U}}_{i}({\widehat{\bm{\sigma}}})
 =
 2$;
${\mathsf{Supp}}\left( {\widehat{\sigma}}_{i} \right)
  =
  {\widehat{{\mathsf{\Sigma}}}}_{i}$,
with   
$|{\mathsf{Supp}}\left( {\widehat{\sigma}}_{i}
                 \right)|
 =
 1$,
so that
for each strategy
$s \in {\mathsf{Supp}}(\sigma_{i})$,
${\widehat{\sigma}}_{i}(s) = 1$;
${\bm{\sigma}}$
is uniform.}
\end{quote}
\noindent
Hence,
${\widetilde{{\mathsf{G}}}}$
has exactly three Nash equilibria,
${\bm{\sigma}}^{1}$,
${\bm{\sigma}}^{2}$
and
${\bm{\sigma}}^{3}$,
\textcolor{black}{such that:} 
\begin{quote}
\textcolor{black}{For each player $i \in [2]$:
${\widetilde{{\mathsf{U}}}}_{i}(\bm{\sigma}^{1})
 =
 \frac{\textstyle 1}
         {\textstyle 2}$,
so that
$\sum_{i \in [2]}
    {\widetilde{{\mathsf{U}}}}_{i}(\bm{\sigma}^{1})
 =
 1$;
${\mathsf{Supp}}\left( \sigma_{i}^{1} \right)
  =
  {\widehat{{\mathsf{\Sigma}}}}_{i}
  \cup
  {\widehat{{\mathsf{\Sigma}}}}_{i}^{\prime}$,
with   
$|{\mathsf{Supp}}\left( {\widehat{\sigma}}_{i}^{1}
                 \right)|
 =
 2$;
for each strategy
$s \in {\mathsf{Supp}}(\sigma_{i}^{1})$,
${\widehat{\sigma}}_{i}^{1}(s) 
  =
  \frac{\textstyle 1}
          {\textstyle 2}$;
${\bm{\sigma}}^{1}$
is uniform.}
\end{quote}
\begin{quote}
\textcolor{black}{For each 
${\bm{\sigma}} 
   \in
   \left\{ {\bm{\sigma}}^{2},
              {\bm{\sigma}}^{3}
   \right\}$:
For each player $i \in [2]$:
${\widetilde{{\mathsf{U}}}}_{i}(\bm{\sigma})
 =
 1$,
so that
$\sum_{i \in [2]}
    {\widetilde{{\mathsf{U}}}}_{i}(\bm{\sigma})
 =
 2$;
either
${\mathsf{Supp}}\left( \sigma_{i} \right)
  =
  {\widehat{{\mathsf{\Sigma}}}}_{i}$
or
${\mathsf{Supp}}\left( \sigma_{i} \right)
  =
  {\widehat{{\mathsf{\Sigma}}}}_{i}^{\prime}$,
with    
$|{\mathsf{Supp}}\left( {\widehat{\sigma}}_{i}
                 \right)|
 =
 1$;
for each strategy
$s \in {\mathsf{Supp}}(\sigma_{i})$,
${\widehat{\sigma}}_{i}(s) 
  =
  1$;
${\bm{\sigma}}$
is uniform.}
\end{quote}

\noindent
Hence,
when ${\mathsf{\phi}}$
is unsatisfiable,
each Nash equilibrium
${\bm{\sigma}}$
for ${\widetilde{{\mathsf{G}}}}$
has the following properties:
\begin{quote}
\textcolor{black}{For each player $i \in [2]$:
{\sf (P.1)}
${\widetilde{{\mathsf{U}}}}_{i}(\bm{\sigma})
 \geq
 \frac{\textstyle 1}
         {\textstyle 2}$,
so that
{\sf (P.2)}
$\sum_{i \in [2]}
    {\widetilde{{\mathsf{U}}}}_{i}(\bm{\sigma})
 \geq
 1$;
{\sf (P.3)}
${\mathsf{Supp}}\left( \sigma_{i} 
                            \right)
  \subseteq
  {\widehat{{\mathsf{\Sigma}}}}_{i}
  \cup
  {\widehat{{\mathsf{\Sigma}}}}_{i}^{\prime}$,
with
{\sf (P.4)}
$|{\mathsf{Supp}}\left( \sigma_{i} \right)|
  \leq
  2$;
{\sf (P.5)}  
for each strategy
$s \in {\mathsf{Supp}}\left( \sigma_{i} \right)$,
$\sigma_{i}(s)
  \geq
  \frac{\textstyle 1}
          {\textstyle 2}$;
{\sf (P.6)}
${\bm{\sigma}}$ is uniform.}          
\end{quote}

\noindent
Assume now that
$\phi$ is satisfiable.
Then,
by Proposition~\ref{final lemma},
the literal equilibrium
${\bm{\sigma}} = {\bm{\sigma}}({\mathsf{\gamma}})$,
for a satisfying assignement 
${\mathsf{\gamma}}$ of ${\mathsf{\phi}}$,
has the following properties:
\begin{quote}
\textcolor{black}{
For each player $i \in [2]$:
${\mathsf{U}}_{i}(\bm{\sigma})
 =
 \frac{\textstyle 2}
         {\textstyle n}$,
so that
$\sum_{i \in [2]}
    {\mathsf{U}}_{i}(\bm{\sigma})
 =
 \frac{\textstyle 4}
         {\textstyle n}$;
${\mathsf{Supp}}\left( \sigma_{i}
                \right)
 =
 {\mathsf{L}}$,
with
$|{\mathsf{Supp}}\left( \sigma_{i}
                 \right)|
 =
 n$;
for each strategy
$s \in {\mathsf{Supp}}(\sigma_{i})$,
$\sigma_{i}(s)
  =
  \frac{\textstyle 1}
          {\textstyle n}$;  
${\bm{\sigma}}$
is uniform.}        
\end{quote}          
Here are the additional Nash equilibria
for ${\widetilde{{\mathsf{G}}}}$
and their properties:
\begin{itemize}

\item
\underline{The $\left( \# {\mathsf{\phi}} \right)^{2}$ balanced mixtures
of literal equilibria for ${\mathsf{G}}$:}
\begin{quote}
\textcolor{black}{
For each player $i \in [2]$:
${\widetilde{{\mathsf{U}}}}_{i}(\bm{\sigma}^{4})
 =
 \frac{\textstyle 1}
         {\textstyle n}$,
so that         
$\sum_{i \in [2]}
    {\widetilde{{\mathsf{U}}}}_{i}(\bm{\sigma}^{4})
 =
 \frac{\textstyle 2}
         {\textstyle n}$;
${\mathsf{Supp}}\left( \sigma_{i}^{4}
                \right)
 \subset
 {\mathsf{L}}
 \cup
 {\mathsf{L}}^{\prime}$,
with
$|{\mathsf{Supp}}\left( \sigma_{i}^{4}
                              \right)|
 =
 2n$;
for each strategy
$s \in {\mathsf{Supp}}(\sigma_{i}^{4})$,
$\sigma_{i}^{4}(s)
  =
  \frac{\textstyle 1}
          {\textstyle 2n}$;  
${\bm{\sigma}}^{4}$
is uniform.}
\end{quote}          
\begin{quote}
\textcolor{black}{For each
${\bm{\sigma}}
  \in 
  \{ {\bm{\sigma}}^{5},
      {\bm{\sigma}}^{6}
  \}$:
For each player $i \in [2]$:
${\widetilde{{\mathsf{U}}}}_{i}(\bm{\sigma})
 =
 \frac{\textstyle 2}
         {\textstyle n}$,
so that
$\sum_{i \in [2]}
    {\widetilde{{\mathsf{U}}}}_{i}(\bm{\sigma})
 =
 \frac{\textstyle 4}
         {\textstyle n}$;
${\mathsf{Supp}}\left( \sigma_{i}
                             \right)
 \subset
 {\mathsf{L}}
 \cup
 {\mathsf{L}}^{\prime}$,
with
$|{\mathsf{Supp}}\left( \sigma_{i}
                 \right)|
 =
 n$;
for each strategy
$s \in {\mathsf{Supp}}(\sigma_{i})$,
$\sigma_{i}(s)
  =
  \frac{\textstyle 1}
          {\textstyle n}$;
${\bm{\sigma}}$
is uniform.}
\end{quote}

\item
\underline{\textcolor{black}{The $2 \cdot \# {\mathsf{\phi}}$ balanced mixtures of
a gadget equilibrium
and a literal equilibrium for ${\mathsf{G}}$:}}
\textcolor{black}{Note that}
{
\small
\begin{eqnarray*}
           {\bm\sigma}^{7}                                                                                                   
&  = & \left\langle \frac{\frac{\textstyle 2}
                            {\textstyle n}}
                            {\textstyle 1 + \frac{\textstyle 2}
                                                            {\textstyle n}}
                    {\widehat{\sigma}}_{1} 
                    \circ
                    \frac{\textstyle 1}
                            {\textstyle 1 + \frac{\textstyle 2}
                                                             {\textstyle n}}
                    \sigma_{2},\,
                     \frac{\textstyle 1}
                            {\textstyle 1 + \frac{\textstyle 2}
                                                             {\textstyle n}}
                    \sigma_{1} 
                    \circ
                    \frac{\frac{\textstyle 2}
                            {\textstyle n}}
                            {\textstyle 1 + \frac{\textstyle 2}
                                                            {\textstyle n}}
                    {\widehat{\sigma}}_{2}
            \right\rangle                                                                                                                                         \\
& = & \left\langle \frac{\textstyle 2}
                                      {\textstyle n + 2}
                             {\widehat{\sigma}}_{1} 
                             \circ
                             \frac{\textstyle n}
                                     {\textstyle n+2}
                             \sigma_{2},\,
                             \frac{\textstyle n}
                                     {\textstyle n+2}
                             \sigma_{1} 
                             \circ
                             \frac{\textstyle 2}
                                    {\textstyle n+2}
                            {\widehat{\sigma}}_{2}
           \right\rangle\, ,       
\end{eqnarray*}
}
and
{
\small
\begin{eqnarray*}
            {\bm\sigma}^{8}
&  = &  \left\langle \frac{\textstyle n}
                                     {\textstyle n+2}
                               \sigma_{1} 
                               \circ
                               \frac{\textstyle 2}
                                       {\textstyle n+2}
                              {\widehat{\sigma}}_{2},\,
                              \frac{\textstyle 2}
                                      {\textstyle n + 2}
                             {\widehat{\sigma}}_{1} 
                             \circ
                             \frac{\textstyle n}
                                     {\textstyle n+2}
                             \sigma_{2}
            \right\rangle\, ,     
\end{eqnarray*}
}
respectively.
\textcolor{black}{
Consider first
${\bm{\sigma}}^{7} = {\bm{\sigma}}^{7}({\mathsf{\gamma}})$,
for a satisfying assignment ${\mathsf{\gamma}}$
of ${\mathsf{\phi}}$.
Then,
by the definition of
the win-lose ${\mathsf{GHR}}$-symmetrization,
the row (resp., column) player
may get utility $1$
only in two cases:}
\begin{itemize}

\item
\textcolor{black}{
\underline{The row player plays ${\mathsf{L}}^{\prime}$
and the column player
plays ${\mathsf{L}}$.}
By Proposition~\ref{final lemma}
(Condition {\sf (C.4)}),
for each literal $\ell \in {\mathsf{\gamma}}$,
$\sigma_{1}(\ell)
  =
  \sigma_{2}(\ell)
  =
  \frac{\textstyle 1}
          {\textstyle n}$.
Thus,
by the balanced mixture ${\bm{\sigma}}^{7}$,
for each player $i \in [2]$,
for each literal
$\ell \in {\mathsf{\gamma}}$,
{
\small
\begin{eqnarray*}
           \sigma_{i}^{7}(\ell)
& = &
  \frac{\textstyle n}
          {\textstyle n+2} 
  \cdot
  \sigma_{\overline{i}}(\ell)\ \
  =\ \
  \frac{\textstyle n}
          {\textstyle n+2}
  \cdot
  \frac{\textstyle 1}
          {\textstyle n}\ \
  =\ \
  \frac{\textstyle 1}
          {\textstyle n+2}\, .
\end{eqnarray*}
}
\noindent          
By the utility functions,
there are two subcases
in which the row (resp., column) player
gets utility $1$:
The row player plays $\ell$,
the column player plays $\ell'$
with $\ell' \neq \overline{\ell}$,
and
$I(\ell') - I(\ell) \in \{ 0, 1 \}$
(resp,
$I(\ell') - I(\ell) \in \{ 2, 3 \}$).
The two subcases occur with probability
$\sum_{\ell, \ell^{\prime} \in {\mathsf{\gamma}} 
             \mid
             I(\ell') - I(\ell) \in \{ 0, 1 \}
            }
     \sigma_{1}(\ell)
     \cdot
     \sigma_{2}(\ell)
  =
  2n
  \cdot
  \left( \frac{\textstyle 1}
                   {\textstyle n+2}
   \right)^{2}$                          
(resp.,
$\sum_{\ell, \ell^{\prime} \in {\mathsf{\gamma}} 
             \mid
             I(\ell') - I(\ell) \in \{ 2, 3 \}
            }
     \sigma_{1}(\ell)
     \cdot
     \sigma_{2}(\ell)
  =
  2n
  \cdot
  \left( \frac{\textstyle 1}
                   {\textstyle n+2}
   \right)^{2}$).}

\item
\textcolor{black}{
\underline{The row player plays ${\widehat{{\mathsf{\Sigma}}}}_{1}$
and the column player plays
${\widehat{{\mathsf{\Sigma}}}}_{2}^{\prime}$.}
Since
each of 
${\widehat{{\mathsf{\Sigma}}}}_{1}$
and
${\widehat{{\mathsf{\Sigma}}}}_{2}^{\prime}$
consists of a single strategy $s$,
it follows, 
by Proposition~\ref{gadget1},
that
${\widehat{\sigma}}_{1}(s)
  =
  {\widehat{\sigma}}_{2}(s)
  =
  1$.
Thus,
for each player $i \in [2]$,
{
\small
\begin{eqnarray*}
\sigma_{i}^{7}(s)
&  = &
  \frac{\textstyle 2}
          {\textstyle n+2}\,
  {\widehat{\sigma}}_{\overline{i}}(s)\ \
  =\ \
  \frac{\textstyle 2}
           {\textstyle n+2}\, .
\end{eqnarray*}
}            
By the utility functions,
there is a single subcase in which
the row (resp., column) player
gets utility $1$:
Both players choose $s$.
The subcase occurs with probability
$\left(  \frac{\textstyle 2}
                    {\textstyle n+2}
  \right)^{2}$.                                   
}

\end{itemize}
\noindent
\textcolor{black}{Hence,
for each player $i \in [2]$,}
{
\small
\textcolor{black}{
\begin{eqnarray*}
             {\widetilde{{\mathsf{U}}}}_{i}({\bm{\sigma}}^{7})
&  = &
  2n
  \cdot
  \left( \frac{\textstyle 1}
                   {\textstyle n+2}
   \right)^{2}
   +
  \left(  \frac{\textstyle 2}
                    {\textstyle n+2}
  \right)^{2}\ \
  =\ \
  \frac{\textstyle 2}
          {\textstyle n+2}\, .
\end{eqnarray*} 
}
}         
\textcolor{black}{Since ${\bm{\sigma}}^{7}$
and ${\bm{\sigma}}^{8}$
form a symmetric pair of mixed profiles
for the win-lose ${\mathsf{GHR}}$-symmetrization ${\widetilde{{\mathsf{G}}}}$,
it follows that
for each player $i \in [2]$,
${\widetilde{{\mathsf{U}}}}_{i}({\bm{\sigma}}^{8})
  =
  \frac{\textstyle 2}
          {\textstyle n+2}$.        
Thus, we have:}

\begin{quote}
\textcolor{black}{For each
${\bm{\sigma}} \in \left\{ {\bm\sigma}^{7},
                                          {\bm\sigma}^{8}
                               \right\}$:           
For each player $i \in [2]$:
${\widetilde{{\mathsf{U}}}}_{i}(\bm{\sigma})
 =
 \frac{\textstyle 2}
         {\textstyle n+2}$,
so that
$\sum_{i \in [2]}
    {\widetilde{{\mathsf{U}}}}_{i}(\bm{\sigma})
 =
 \frac{\textstyle 4}
         {\textstyle n+2}$;
either
${\mathsf{Supp}}(\sigma_{i})
  \subset
  {\widehat{\mathsf{\Sigma}}}_{i}
  \cup
  {\mathsf{L}}
  \cup
  {\mathsf{L}}^{\prime}$
or
${\mathsf{Supp}}(\sigma_{i})
  \subset
  {\widehat{\mathsf{\Sigma}}}_{i}^{\prime}
  \cup
  {\mathsf{L}}
  \cup
  {\mathsf{L}}^{\prime}$,
with  
$|{\mathsf{Supp}}(\sigma_{i})|
  =
  n+1$;
for each strategy
$s \in {\mathsf{Supp}}(\sigma_{i})$,
$\sigma_{i}(s)
  \leq
  \frac{\textstyle 2}
          {\textstyle n+2}$;
${\bm{\sigma}}$
is non-uniform.
}
\end{quote}

\end{itemize}
\noindent
\textcolor{black}{Hence,
when ${\mathsf{\phi}}$
is satisfiable,
we have:}
\begin{enumerate}

\item[{\sf (1)}]
\textcolor{black}{Each additional Nash equilibrium ${\bm{\sigma}}$
among the
$\left( \# {\mathsf{\phi}} \right)^{2}$
balanced mixtures
of literal equilibria for ${\mathsf{G}}$
has the following properties:}
\begin{quote}
\textcolor{black}{
For each player $i \in [2]$:
{\sf (Q.1)}
${\widetilde{{\mathsf{U}}}}_{i}(\bm{\sigma})
  \leq
 \frac{\textstyle 2}
         {\textstyle n}
 \leq
 \frac{\textstyle 2}
         {\textstyle 5}$,
so that
{\sf (Q.2)}
$\sum_{i \in [2]}
    {\widetilde{{\mathsf{U}}}}_{i}(\bm{\sigma})
 \leq
 \frac{\textstyle 4}
         {\textstyle n}
 \leq
 \frac{\textstyle 4}
         {\textstyle 5}$;
{\sf (Q.3)}         
${\mathsf{Supp}}\left( \sigma_{i}
                             \right)
 \subset
 {\mathsf{L}}
 \cup
 {\mathsf{L}}^{\prime}$,
with
{\sf (Q.4)}
$|{\mathsf{Supp}}\left( \sigma_{i}
                 \right)|
 \geq
 n
 \geq
 5$;
{\sf (Q.5)}
for each strategy
$s \in {\mathsf{Supp}}(\sigma_{i})$,
$\sigma_{i}(s)
  <
  \frac{\textstyle 2}
          {\textstyle n+2}
  \leq
  \frac{\textstyle 2}
          {\textstyle 7}$;
{\sf (Q.6)}          
${\bm{\sigma}}$
is uniform.}
\end{quote}

\item[{\sf (2)}]
\textcolor{black}{Each additional Nash equilibrium ${\bm{\sigma}}$
among the
$2 \cdot \# {\mathsf{\phi}}$
balanced mixtures
of a gadget equilibrium
and a literal equilibrium for ${\mathsf{G}}$
has the following properties:}
\begin{quote}
\textcolor{black}{
For each player $i \in [2]$:
{\sf (Q.7)}
${\widetilde{{\mathsf{U}}}}_{i}(\bm{\sigma})
  <
 \frac{\textstyle 2}
         {\textstyle n}
  \leq
  \frac{\textstyle 2}
          {\textstyle 5}$,
so that
{\sf (Q.8)}
$\sum_{i \in [2]}
    {\widetilde{{\mathsf{U}}}}_{i}(\bm{\sigma})
 <
 \frac{\textstyle 4}
         {\textstyle n}
  \leq
  \frac{\textstyle 4}
          {\textstyle 5}$;
{\sf (Q.9)}         
${\mathsf{Supp}}\left( \sigma_{i}
                             \right)
 \subset
 {\mathsf{L}}
 \cup
 {\mathsf{L}}^{\prime}
 \cup
 {\widehat{{\mathsf{\Sigma}}}}_{i}
 \cup
 {\widehat{{\mathsf{\Sigma}}}}_{i}^{\prime}$,
with
{\sf (Q.10)}
$|{\mathsf{Supp}}\left( \sigma_{i}
                 \right)|
 >
 n
 \geq
 5$;
{\sf (Q.11)}
for each strategy
$s \in {\mathsf{Supp}}(\sigma_{i})$,
$\sigma_{i}(s)
  \leq
  \frac{\textstyle 2}
          {\textstyle n+2}
  \leq
  \frac{\textstyle 2}
          {\textstyle 7}$;
{\sf (Q.12)}          
${\bm{\sigma}}$
is non-uniform.}
\end{quote}

\end{enumerate}

\noindent
\textcolor{black}{Hence,
we derive ${\mathcal{NP}}$-hardness
from the following table:}
\begin{center}
\begin{small}
\begin{tabular}{|l|l|l|l|}
\hline
\hline
${\mathcal{NP}}$-hard decision problem:                           & \multicolumn{3}{l|} {By properties disentangling the satisfiability of ${\mathsf{\phi}}$:}                                                        \\
\cline{2-4}
                                                                                             & Unsat.:                                                       & \multicolumn{2}{l|}{Sat.:}                                                                                    \\
\hline
\hline
{\sf $\exists$ NASH WITH SMALL UTILITIES}, 
                                                                                            & \textcolor{black}{{\sf (P.1)}}                        & \textcolor{black}{{\sf (Q.1)}, {\sf (Q.7)}}      
                                                                                                                                                                     & \textcolor{black}{$\# {\mathsf{\phi}}
                                                                                                                                                                             \cdot
                                                                                                                                                                             \left( \# {\mathsf{\phi}} + 2
                                                                                                                                                                             \right)$}                                                                                  \\
\textcolor{black}{~with $\frac{\textstyle 2}
                                             {\textstyle 5} 
         \leq
         u
         <
         \frac{\textstyle 1}
                 {\textstyle 2}$}                                                 &                                                                      &                 
                                                                                                                                                                    &                                                                                                        \\
\hline
{\sf $\exists$ NASH WITH SMALL TOTAL UTILITY},          & \textcolor{black}{{\sf (P.2)}}                        & \textcolor{black}{{\sf (Q.2)}, {\sf (Q.8)}}
                                                                                                                                                                   & \textcolor{black}{$\# {\mathsf{\phi}}
                                                                                                                                                                             \cdot
                                                                                                                                                                             \left( \# {\mathsf{\phi}} + 2
                                                                                                                                                                             \right)$}                                                                                  \\
\textcolor{black}{~with $\frac{\textstyle 4}
                                             {\textstyle 5} 
                                     \leq
                                     u
                                    <
                                    1$}                                                &                                                                      &
                                                                                                                                                                  &                                                                                                        \\
\hline
{\sf $\exists$ NASH WITH LARGE SUPPORTS},              & \textcolor{black}{{\sf (P.4)}}                        & \textcolor{black}{{\sf (Q.4)}, {\sf (Q.10)}}
                                                                                                                                                                 & \textcolor{black}{$\# {\mathsf{\phi}}
                                                                                                                                                                             \cdot
                                                                                                                                                                             \left( \# {\mathsf{\phi}} + 2
                                                                                                                                                                             \right)$}                                                                                 \\
\textcolor{black}{~with $2 < k \leq 5$}                              &                                                                     &
                                                                                                                                                                &                                                                                                          \\ 
\hline
{\sf $\exists$ NASH WITH RESTRICTING SUPPORTS},     &  \textcolor{black}{{\sf (P.3)}}                       & \textcolor{black}{{\sf (Q.3)}, {\sf (Q.9)}}
                                                                                                                                                                    & \textcolor{black}{$\# {\mathsf{\phi}}
                                                                                                                                                                             \cdot
                                                                                                                                                                             \left( \# {\mathsf{\phi}} + 2
                                                                                                                                                                             \right)$}                                                                                 \\
\textcolor{black}{~with
$T_{i}
  =
  \{ \ell
  \}$
and
$T_{i}
  =
  \{ \overline{\ell} \}$  , $i \in [n]$,
for any $\ell \in {\mathsf{L}}$}                                         &                                                                      &
                                                                                                                                                                   &                                                                                                          \\
\hline
{\sf $\exists$ NASH WITH RESTRICTED SUPPORTS},     & \textcolor{black}{{\sf (P.3)}}                         & \textcolor{black}{{\sf (Q.3)}}
                                                                                                                                                                   & \textcolor{black}{$\left( \# {\mathsf{\phi}}
                                                                                                                                                                             \right)^{2}$}                                                                          \\
\textcolor{black}{~with 
${\mathsf{T}}_{i}
  =
  {\mathsf{L}}
  \cup
  {\mathsf{L}}^{\prime}$,
  $i \in [n]$}                                                                       &                                                                        &
                                                                                                                                                                    &                                                                                                      \\  
\hline
{\sf $\exists$ NASH WITH SMALL PROBABILITIES}         &  \textcolor{black}{{\sf (P.5)}}                         & \textcolor{black}{{\sf (Q.5)}, {\sf (Q.11)}}
                                                                                                                                                                      & \textcolor{black}{$\# {\mathsf{\phi}}
                                                                                                                                                                             \cdot
                                                                                                                                                                             \left( \# {\mathsf{\phi}} + 2
                                                                                                                                                                             \right)$}                                                                                 \\
\hline
{\sf $\exists$ $\neg$ UNIFORM NASH}                           &  \textcolor{black}{{\sf (P.6)}}                         & \textcolor{black}{{\sf (Q.12)}}
                                                                                                                                                                      & \textcolor{black}{$2 \cdot \# {\mathsf{\phi}}$}                      \\ 
\hline
\hline
\end{tabular}
\end{small}
\end{center}
\noindent
\textcolor{black}{The formulas in the rightmost column
can be solved for $\# {\mathsf{\phi}}$;
by the $\# {\mathcal{P}}$-hardness
of computing $\# {\mathsf{\phi}}$~\cite{V79},
this yields the $\# {\mathcal{P}}$-hardness
of the seven counting problems.
Furthermore,
except for $2 \cdot \# {\mathsf{\phi}}$,
these formulas preserve the parity 
$\oplus {\mathsf{\phi}}$;
by the $\oplus {\mathcal{P}}$-hardness
of computing $\oplus {\mathsf{\phi}}$~\cite{PZ83},
this yields the $\oplus {\mathcal{P}}$-hardness
of the six parity problems
other than
{\sf $\oplus$ $\neg$ UNIFORM NASH}.    
}

\noindent
\underline{{\it Group II}:}
Fix ${\widehat{{\mathsf{G}}}}
        :=
        {\widehat{\mathsf{G}}}_{3}$. 
\textcolor{black}{Since $\kappa ({\widehat{{\mathsf{G}}}}_{3})
           =
           4$,
it follows that
each of
${\mathsf{G}}\left( {\widehat{\mathsf{G}}}_{3},
                                {\mathsf{\phi}}
                       \right)$
and
${\mathsf{GHR}}({\mathsf{G}})$                       
has size polynomial
in the size of ${\mathsf{\phi}}$.} 
\textcolor{black}{By Proposition~\ref{barrage},
${\widehat{{\mathsf{G}}}}_{3}$
has no uniform Nash equilibrium.}                                           
Assume first that ${\mathsf{\phi}}$
is unsatisfiable.
\textcolor{black}{Since
${\mathcal{NE}}({\mathsf{G}}({\widehat{{\mathsf{G}}}}_{3},
                                                   {\mathsf{\phi}}))
   =
   {\mathcal{NE}}(  {\widehat{{\mathsf{G}}}}_{3})$,
it follows that}
\textcolor{black}{${\mathsf{G}}( {\widehat{\mathsf{G}}}_{3},
                                {\mathsf{\phi}}
                       )$}
has no uniform Nash equilibrium.
Assume,
by way of contradiction,
that
\textcolor{black}{${\widetilde{{\mathsf{G}}}}
  =
  {\mathsf{GHR}}({\mathsf{G}})$}
has a uniform Nash equilibrium
${\bm{\sigma}}$.
By Proposition~\ref{fromsymmetrictobasic2},
exactly one of the following conditions holds
for ${\bm{\sigma}}$:
\begin{enumerate}

\item[{\sf (C'.1)}]
$\overleftarrow{\sigma_{1}},
 \overrightarrow{\sigma_{1}},
 \overleftarrow{\sigma_{2}},
 \overrightarrow{\sigma_{2}}
 \neq
 {\mathsf{0}}^n$,
with
${\bm\sigma}
 =
 \langle \overleftarrow{\varsigma_{1}},
             \overrightarrow{\varsigma_{2}}
 \rangle             
  \ast
  \langle \overleftarrow{\varsigma_{2}},
              \overrightarrow{\varsigma_{1}}
  \rangle$
and
$\langle \overleftarrow{\varsigma_{2}},
              \overrightarrow{\varsigma_{1}}
  \rangle,
 \langle \overleftarrow{\varsigma_{1}},
             \overrightarrow{\varsigma_{2}}
 \rangle            
 \in
 {\mathcal{NE}}({\mathsf{G}})$.

\item[{\sf (C'.2)}]
$\overleftarrow{\sigma_{1}}
 =
 \overrightarrow{\sigma_{2}}
 =
 {\sf 0}^n$, 
with 
${\bm\sigma}
 =
 \langle {\mathsf{0}}^{n},
             {\mathsf{0}}^{n}
 \rangle            
  \ast
  \langle \overleftarrow{\varsigma_{2}},
             \overrightarrow{\varsigma_{1}}
  \rangle$
and
$\langle \overleftarrow{\varsigma_{2}},
             \overrightarrow{\varsigma_{1}}
  \rangle           
  \in
  {\mathcal{NE}}({\mathsf{G}})$.

\item[{\sf (C'.3)}]
$\overrightarrow{\sigma_{1}}
 =
 \overleftarrow{\sigma_{2}}
 =
 {\mathsf{0}}^n$,
with
${\bm\sigma}
 =
 \langle {\mathsf{0}}^{n},
             {\mathsf{0}}^{n}
 \rangle            
 \ast
 \langle \overleftarrow{\varsigma_{1}},
             \overrightarrow{\varsigma_{2}}
 \rangle$
and
$\langle \overleftarrow{\varsigma_{1}},
              \overrightarrow{\varsigma_{2}}
  \rangle             
  \in
  {\mathcal{NE}}({\mathsf{G}})$.

\end{enumerate}
\noindent
By Observation~\ref{facebook observation},
the uniformity of ${\bm{\sigma}}$
implies that
exactly one of these conditions holds:
\begin{enumerate}

\item[{\sf (C''.1)}]
$\overleftarrow{\sigma_{1}},
 \overrightarrow{\sigma_{1}},
 \overleftarrow{\sigma_{2}},
 \overrightarrow{\sigma_{2}}
 \neq
 {\mathsf{0}}^n$,
and both
$\langle \overleftarrow{\varsigma_{2}},
              \overrightarrow{\varsigma_{1}}
  \rangle$
and  
$\langle \overleftarrow{\varsigma_{1}},
             \overrightarrow{\varsigma_{2}}
 \rangle$
are uniform Nash equilibria
for ${\mathsf{G}}$.

\item[{\sf (C''.2)}]
$\overleftarrow{\sigma_{1}}
 =
 \overrightarrow{\sigma_{2}}
 =
 {\sf 0}^n$, 
and
$\langle \overleftarrow{\varsigma_{2}},
             \overrightarrow{\varsigma_{1}}
  \rangle$           
is a uniform Nash equilibrium for
${\mathsf{G}}$.

\item[{\sf (C''.3)}]
$\overrightarrow{\sigma_{1}}
 =
 \overleftarrow{\sigma_{2}}
 =
 {\mathsf{0}}^n$,
and
$\langle \overleftarrow{\varsigma_{1}},
              \overrightarrow{\varsigma_{2}}
  \rangle$             
is a uniform Nash equilibrium for
${\mathsf{G}}$.

\end{enumerate}
\noindent
So,
in all cases,
${\mathsf{G}}$
has a uniform Nash equilibrium.
A contradiction.
\textcolor{black}{This implies the following property:
{\sf (P.1)}
${\widetilde{{\mathsf{G}}}}$
has no uniform Nash equilibrium.}

Assume now that ${\mathsf{\phi}}$
is  satisfiable.
\textcolor{black}{Then,
by Proposition~\ref{final lemma}  
(Condition {\sf (4)}),
the literal equilibrium
${\bm{\sigma}}
  =
  {\bm{\sigma}}({\mathsf{\gamma}})$,
for a satisfying assignment ${\mathsf{\gamma}}$
of ${\mathsf{\phi}}$,
is uniform.}  
Here are the additional Nash equilibria for ${\widetilde{{\mathsf{G}}}}$
and their properties:
\begin{itemize}

\item
\underline{\textcolor{black}{The \textcolor{black}{$\left( \# {\mathsf{\phi}} \right)^{2}$} 
                                           balanced mixtures of literal equilibria
                                           for ${\mathsf{G}}$:}}
${\bm{\sigma}}^{4}$
is uniform.
By Observation~\ref{facebook observation},
${\bm{\sigma}}^{5}$
and
${\bm{\sigma}}^{6}$
are non-uniform.

\item
\underline{\textcolor{black}{The $2 \cdot \# {\mathsf{\phi}}$
                  balanced mixtures
of a gadget equilibrium
and a literal equilibrium
for ${\mathsf{G}}$:}} 
\textcolor{black}{Since
${\widehat{{\bm{\sigma}}}}$
is non-uniform,
it follows,
by Observation~\ref{facebook observation},
that
${\bm{\sigma}}^{7}$
and
${\bm{\sigma}}^{8}$
are non-uniform.}

\end{itemize}
\noindent
\textcolor{black}{Hence,
when ${\mathsf{\phi}}$
is satisfiable,
we have:}
\begin{enumerate}

\item[{\sf (1)}]
\textcolor{black}{Each additional Nash equilibrium
${\bm{\sigma}}$
among the $\left( {\mathsf{\phi}} \right)^{2}$
balanced mixtures of literal equilibria
for ${\mathsf{G}}$
has the following property:
{\sf (Q.1)}
${\bm{\sigma}}$ is uniform.}

\item[{\sf (2)}]
\textcolor{black}{Each additional Nash equilibrium
${\bm{\sigma}}$
among the $2 \cdot {\mathsf{\phi}}$
balanced mixtures of 
a gadget equilibrium
and a literal equilibrium
for ${\mathsf{G}}$
has the following property:
{\sf (Q.2)}
${\bm{\sigma}}$ is non-uniform.}

\end{enumerate}
\noindent
\textcolor{black}{Hence,
we derive ${\mathcal{NP}}$-hardness
from the following table:}
\begin{center}
\begin{small}
\begin{tabular}{|l|l|l|l|}
\hline
\hline
${\mathcal{NP}}$-hard decision problem:                           & \multicolumn{3}{l|} {By properties disentangling the satisfiability of ${\mathsf{\phi}}$:}                                \\
\cline{2-4}
                                                                                             & Unsat.:            & \multicolumn{2}{l|}{Sat.:}                                                                                                        \\
\hline
\hline
{\sf $\exists$ UNIFORM NASH}                                           & \textcolor{black}{{\sf (P.1)}}                        & \textcolor{black}{{\sf (Q.1)}}      
                                                                                                                                                                     & \textcolor{black}{$\left( \# {\mathsf{\phi}}
                                                                                                                                                                                                     \right)^{2}$}                                                  \\                                                                                                                                                                                                                                                                                                                                                                       \hline
\hline
\end{tabular}
\end{small}
\end{center}
\noindent
\textcolor{black}{The formula
$\left( \# {\mathsf{\phi}} \right)^{2}$
in the rightmost column
can be solved for $\# {\mathsf{\phi}}$;
by the $\# {\mathcal{P}}$-hardness
of computing $\# {\mathsf{\phi}}$~\cite{V79},
this yields the $\# {\mathcal{P}}$-hardness
of the counting problem.
Furthermore,
the formula $\left( \# {\mathsf{\phi}} \right)^{2}$
preserves the parity $\oplus {\mathsf{\phi}}$;
by the $\oplus {\mathcal{P}}$-hardness
of computing $\oplus {\mathsf{\phi}}$~\cite{PZ83},
this yields the $\oplus {\mathcal{P}}$-hardness
of the parity problem.
}

\noindent
\underline{{\it Group III}:}
Fix
${\widehat{{\mathsf{G}}}}
  :=
  {\widehat{{\mathsf{G}}}}_{1}[h]$, 
where $h$ is polynomial in $n$ with
\textcolor{black}{$h > 2n$.}
\textcolor{black}{Since $\kappa \left( {\mathsf{G}}_{1}[h]
                       \right)
           =
           h$,}
where
$h$ is polynomial in $n$,                                         
it follows that each of
${\mathsf{G}}
 =
  {\mathsf{G}}({\widehat{{\mathsf{G}}}}_{1}[h],
                         \phi)$ 
and 
${\widetilde{{\mathsf{G}}}}
 =
  {\mathsf{GHR}}({\mathsf{G}})$ 
has size polynomial
in the size of ${\mathsf{\phi}}$. 
\textcolor{black}{By Proposition~\ref{gadget1},
${\widehat{{\mathsf{G}}}}_{1}[h]$
has a unique Nash equilibrium
$\widehat{\bm{\sigma}}$.} 
Assume first that
$\phi$ is unsatisfiable.
\textcolor{black}{Since
${\mathcal{NE}}({\mathsf{G}}({\widehat{{\mathsf{G}}}}_{1}[h],
                                                   {\mathsf{\phi}}))
   =
   {\mathcal{NE}}(  {\widehat{{\mathsf{G}}}}_{1}[h])$,
it follows that}
${\mathsf{G}}
     ({\widehat{{\mathsf{G}}}}_{1}[h],
                        \phi)$
has a unique 
Nash equilibrium,
\textcolor{black}{the gadget equilibrium $\widehat{\bm{\sigma}}$,}
which has,
by Proposition~\ref{gadget1},
the following properties:
\begin{quote}
\textcolor{black}{For each player $i \in [2]$:
${\mathsf{U}}_{i}({\widehat{\bm{\sigma}}})
 =
 \frac{\textstyle 1}
         {\textstyle h}$,
so that
$\sum_{i \in [2]}
    {\mathsf{U}}_{i}({\widehat{\bm{\sigma}}})
 =
 \frac{\textstyle 2}
         {\textstyle h}$;
$|{\mathsf{Supp}}\left( {\widehat{\sigma}}_{i}
                 \right)|
 =
 h$.}
\end{quote}
\noindent
\textcolor{black}{Hence,
${\widetilde{{\mathsf{G}}}}$
has exactly three Nash equilbria,
${\bm\sigma}^{1}$,
${\bm\sigma}^{2}$
and
${\bm\sigma}^{3}$,
such that:}
\begin{quote}
\textcolor{black}{For each player $i \in [2]$:
${\widetilde{{\mathsf{U}}}}_{i}(\bm{\sigma}^{1})
 =
 \frac{\textstyle 1}
         {\textstyle 2h}$,
so that
$\sum_{i \in [2]}
    {\widetilde{{\mathsf{U}}}}_{i}(\bm{\sigma}^{1})
 =
 \frac{\textstyle 1}
         {\textstyle h}$;
$|{\mathsf{Supp}}\left( \sigma_{i}^{1}
                 \right)|
 =
 2h$.}
\end{quote}
\begin{quote}
\textcolor{black}{For each ${\bm{\sigma}}
                                            \in
                                            \left\{ {\bm{\sigma}}^{2},
                                                       {\bm{\sigma}}^{3}
                                            \right\}$:            
For each player $i \in [2]$:
${\widetilde{{\mathsf{U}}}}_{i}(\bm{\sigma})
 =
 \frac{\textstyle 1}
         {\textstyle h}$,
so that        
$\sum_{i \in [2]}
    {\widetilde{{\mathsf{U}}}}_{i}(\bm{\sigma})
 =
 \frac{\textstyle 2}
         {\textstyle h}$;
$|{\mathsf{Supp}}\left( \sigma_{i}
                              \right)|
 =
 h$.}
\end{quote}
\noindent
\textcolor{black}{Hence,
when ${\mathsf{\phi}}$ is unsatisfiable,
each Nash equilibrium
${\bm{\sigma}}$
for ${\widetilde{{\mathsf{G}}}}$
has the following properties:}
\begin{quote}
\textcolor{black}{For each player $i \in [2]$:
{\sf (P.1)}
${\widetilde{{\mathsf{U}}}}_{i}(\bm{\sigma})
 \leq
 \frac{\textstyle 1}
         {\textstyle h}
 <
 \frac{\textstyle 1}
         {\textstyle 2n}$,
so that
{\sf (P.2)}        
$\sum_{i \in [2]}
    {\widetilde{{\mathsf{U}}}}_{i}(\bm{\sigma})
 \leq
 \frac{\textstyle 2}
         {\textstyle h}
 <
 \frac{\textstyle 1}
         {\textstyle n}$;
{\sf (P.3)}         
$|{\mathsf{Supp}}\left( \sigma_{i}
                              \right)|
 \geq
 h
 > 2n$.}
\end{quote}

Assume now that $\phi$ is satisfiable.
Then,
by Proposition~\ref{final lemma},
\textcolor{black}{the literal equilibrium
\textcolor{black}{$\bm{\sigma} = \bm{\sigma}({\mathsf{\gamma}})$,}
for a satisfying assignment ${\mathsf{\gamma}}$
of ${\mathsf{\phi}}$,
has the following properties:} 
\begin{quote}
\textcolor{black}{For each player $i \in [2]$:
${\mathsf{U}}_{i}(\bm{\sigma})
 =
 \frac{\textstyle 2}
         {\textstyle n}$,
so that
$\sum_{i \in [2]}
   {\mathsf{U}}_{i}(\bm{\sigma})
 =
 \frac{\textstyle 4}
         {\textstyle n}$;
$|{\mathsf{Supp}}\left( \sigma_{i}
                 \right)|
 =
 n$.}
\end{quote}
\noindent
\textcolor{black}{Here are the additional Nash equilibria
of ${\widetilde{{\mathsf{G}}}}$
and their properties:}
\begin{itemize}

\item
\underline{\textcolor{black}{The \textcolor{black}{$\left( \# {\mathsf{\phi}} \right)^{2}$}
                                            balanced mixtures
of literal equilibria for ${\mathsf{G}}$:}}                                                                                                                                                       
\begin{quote}
\textcolor{black}{
For each player $i \in [2]$:
${\widetilde{{\mathsf{U}}}}_{i}(\bm{\sigma}^{4})
 =
 \frac{\textstyle 1}
         {\textstyle n}$,
so that
$\sum_{i \in [2]}
   {\widetilde{{\mathsf{U}}}}_{i}(\bm{\sigma}^{4})
 =
 \frac{\textstyle 2}
         {\textstyle n}$;
$|{\mathsf{Supp}}\left( \sigma^{4}_{i}
                              \right)|
 =
 2n$.}
\end{quote}                                                                                                                                                                                                                                                                                                                                                     
\begin{quote}
\textcolor{black}{
For each
${\bm{\sigma}}
  \in
  \left\{ {\bm{\sigma}}^{5},
             {\bm{\sigma}}^{6}
  \right\}$:           
For each player $i \in [2]$:
${\widetilde{{\mathsf{U}}}}_{i}(\bm{\sigma})
 =
 \frac{\textstyle 2}
         {\textstyle n}$,
so that        
$\sum_{i \in [2]}
   {\widetilde{{\mathsf{U}}}}_{i}(\bm{\sigma})
 =
 \frac{\textstyle 4}
        {\textstyle n}$;
$|{\mathsf{Supp}}\left( \sigma_{i}
                 \right)|
 =
 n$.}
\end{quote}

\item
\underline{\textcolor{black}{The $2 \cdot \# {\mathsf{\phi}}$
balanced mixtures
of a gadget equilibrium
and a literal equilibrium for ${\mathsf{G}}$:}}
\textcolor{black}{Note that}
{
\small
\textcolor{black}{
\begin{eqnarray*}
         {\bm{\sigma}}^{7}
& = & \left\langle \frac{\textstyle \frac{\textstyle 2}
                                                               {\textstyle n}}
                                     {\textstyle \frac{\textstyle 1}
                                                               {\textstyle h}
                                                        +
                                                        \frac{\textstyle 2}
                                                                {\textstyle n}}
                             {\widehat{\sigma}}_{1}
                             \circ
                             \frac{\textstyle \frac{\textstyle 1}
                                                               {\textstyle h}}
                                     {\textstyle \frac{\textstyle 1}
                                                               {\textstyle h}
                                                        +
                                                        \frac{\textstyle 2}
                                                                {\textstyle n}}
                             \sigma_{2},
                             \frac{\textstyle \frac{\textstyle 1}
                                                               {\textstyle h}}
                                     {\textstyle \frac{\textstyle 1}
                                                               {\textstyle h}
                                                        +
                                                        \frac{\textstyle 2}
                                                                {\textstyle n}}
                             \sigma_{1}
                             \circ
                             \frac{\textstyle \frac{\textstyle 2}
                                                               {\textstyle n}}
                                     {\textstyle \frac{\textstyle 1}
                                                               {\textstyle h}
                                                        +
                                                        \frac{\textstyle 2}
                                                                {\textstyle n}}
                             {\widehat{\sigma}}_{2}                                                                     
          \right\rangle                                                                                                      \\
& = &   \left\langle \frac{\textstyle 2h}
                                       {\textstyle n+2h}
                              {\widehat{\sigma}}_{1}
                             \circ
                             \frac{\textstyle n}
                                     {\textstyle n+2h}
                             \sigma_{2},
                             \frac{\textstyle n}
                                     {\textstyle n+2h}
                             \sigma_{1}
                             \circ
                             \frac{\textstyle 2h}
                                     {\textstyle n+2h}
                            {\widehat{\sigma}}_{2}                                                                                             
          \right\rangle\, ,                      
\end{eqnarray*}
}
}
\textcolor{black}{and}
{
\small
\textcolor{black}{
\begin{eqnarray*}                                                      
           {\bm{\sigma}}^{8}           
& = &    \left\langle \frac{\textstyle n}
                                       {\textstyle n+2h}
                              \sigma_{1}
                             \circ
                             \frac{\textstyle 2h}
                                     {\textstyle n+2h}
                             {\widehat{\sigma}}_{2},
                             \frac{\textstyle 2h}
                                     {\textstyle n+2h}
                             {\widehat{\sigma}}_{1}
                             \circ
                             \frac{\textstyle n}
                                     {\textstyle n+2h}
                            \sigma_{2}                                                                                             
          \right\rangle\, ,                                                                                                           
\end{eqnarray*}
}
}
\textcolor{black}{respectively.
Consider first ${\bm{\sigma}}^{7}$,
with ${\bm{\sigma}}^{7} = {\bm{\sigma}}^{7}({\mathsf{\gamma}})$
for a satisfying assignment ${\mathsf{\gamma}}$
of ${\mathsf{\phi}}$.
Then,
by the definition of
the win-lose ${\mathsf{GHR}}$-symmetrization,
the row (resp., column) player
may get utility $1$
only in the following two cases:}
\begin{itemize}

\item
\textcolor{black}{
\underline{The row player plays ${\mathsf{L}}^{\prime}$
and the column player
plays ${\mathsf{L}}$.}
By Proposition~\ref{final lemma}
(Condition {\sf (C.4)}),
for each literal $\ell \in {\mathsf{\gamma}}$,
$\sigma_{1}(\ell)
  =
  \sigma_{2}(\ell)
  =
  \frac{\textstyle 1}
          {\textstyle n}$.
Thus,
by the balanced mixture ${\bm{\sigma}}^{7}$,
for each player $i \in [2]$,
for each literal
$\ell \in {\mathsf{\gamma}}$,}
{
\small
\textcolor{black}{
\begin{eqnarray*}
\sigma_{i}^{7}(\ell)
&  = &
  \frac{\textstyle n}
          {\textstyle n+2h} 
  \cdot
  \sigma_{\overline{i}}(\ell)
  =\ \
  \frac{\textstyle n}
          {\textstyle n+2h}
  \cdot
  \frac{\textstyle 1}
          {\textstyle n}\ \
  =\ \
  \frac{\textstyle 1}
          {\textstyle n+2h}\, .
\end{eqnarray*}
}
}
\noindent          
\textcolor{black}{By the utility functions,
there are two subcases
in which the row (resp., column) player
gets utility $1$:
The row player chooses $\ell$,
the column player chooses $\ell'$
with $\ell' \neq \overline{\ell}$,
and
$I(\ell') - I(\ell) \in \{ 0, 1 \}$
(resp,
$I(\ell') - I(\ell) \in \{ 2, 3 \}$).
The two subcases occur with probability
$\sum_{\ell, \ell^{\prime} \in {\mathsf{\gamma}} 
             \mid
             I(\ell') - I(\ell) \in \{ 0, 1 \}
            }
     \sigma_{1}(\ell)
     \cdot
     \sigma_{2}(\ell)
  =
  2n
  \cdot
  \left( \frac{\textstyle 1}
                   {\textstyle n+2h}
   \right)^{2}$.                          
(resp.,
$\sum_{\ell, \ell^{\prime} \in {\mathsf{\gamma}} 
             \mid
             I(\ell') - I(\ell) \in \{ 2, 3 \}
            }
     \sigma_{1}(\ell)
     \cdot
     \sigma_{2}(\ell)
  =
  2n
  \cdot
  \left( \frac{\textstyle 1}
                   {\textstyle n+2h}
   \right)^{2}$).}

\item
\textcolor{black}{
\underline{The row player plays ${\widehat{{\mathsf{\Sigma}}}}_{1}$
and the column player plays
${\widehat{{\mathsf{\Sigma}}}}_{2}^{\prime}$.}
By Proposition~\ref{gadget1},
it follows that
for each strategy
$s \in [h]$,
${\widehat{\sigma}}_{1}(s)
  =
  {\widehat{\sigma}}_{2}(s)
  =
  \frac{\textstyle 1}
          {\textstyle h}$.
Thus,
by the balanced mixture
${\bm{\sigma}}^{7}$,
for each player $i \in [2]$,
for each strategy
$s \in [h]$,
$\sigma_{i}^{7}(s)
  =
  \frac{\textstyle 2h}
          {\textstyle n+2h}\,
  {\widehat{\sigma}}_{\overline{i}}(s)
  =
   \frac{\textstyle 2}
           {\textstyle n+2h}$. 
By the utility functions,
there is a single subcase in which
the row (resp., column) player
gets utility $1$:
The row player chooses $s_{l}$
and the column player chooses
$s_{l}$
(resp., 
the row player chooses $s_{l}$
and the column player chooses
$s_{l+1}$).
The subcase occurs with probability
$\sum_{s_{l} \in [h]}
    \sigma_{1}^{7}(s_{l})
    \cdot
    \sigma_{2}^{7}(s_{l})
  =
  h
  \cdot
  \left( \frac{\textstyle 2}
                   {\textstyle n+2h}
  \right)^{2}$
(resp.,
$\sum_{s_{l} \in [h]}
    \sigma_{1}^{7}(s_{l})
    \cdot
    \sigma_{2}^{7}(s_{l+1})
  =
  h
  \cdot
  \left( \frac{\textstyle 2}
                   {\textstyle n+2h}
  \right)^{2}$).}

\end{itemize}
\noindent
\textcolor{black}{It follows that
for each player $i \in [2]$,}
{
\small
\textcolor{black}{
\begin{eqnarray*}
{\widetilde{{\mathsf{U}}}}_{i}({\bm{\sigma}}^{7})
&  =  &
  2n
  \cdot
  \left( \frac{\textstyle 1}
                   {\textstyle n+2h}
   \right)^{2}
   +
   h
   \cdot
  \left(  \frac{\textstyle 2}
                    {\textstyle n+2h}
  \right)^{2}\ \
  =\ \
  \frac{\textstyle 2}
          {\textstyle n+2h}\, .
\end{eqnarray*}
}
}          
\textcolor{black}{Since ${\bm{\sigma}}^{7}$
and ${\bm{\sigma}}^{8}$
form a symmetric pair of mixed profiles
for the win-lose ${\mathsf{GHR}}$-symmetrization ${\widetilde{{\mathsf{G}}}}$,
it follows that
for each player $i \in [2]$,
${\widetilde{{\mathsf{U}}}}_{i}({\bm{\sigma}}^{8})
  =
  \frac{\textstyle 2}
          {\textstyle n+2h}$.        
Thus, 
it follows:}                                                                                                                                                     
\begin{quote}
\textcolor{black}{For each
${\bm{\sigma}}
  \in
  \left\{ {\bm{\sigma}}^{7},
             {\bm{\sigma}}^{8}
  \right\}$:                                                                                                                                         
For each player $i \in [2]$:}
\textcolor{black}{${\widetilde{{\mathsf{U}}}}_{i}\left( {\bm{\sigma}}
                             \right)
  =
  \frac{\textstyle 2}
          {\textstyle n+2h}$,
so that          
$\sum_{i \in [2]}
    {\widetilde{{\mathsf{U}}}}_{i}({\bm{\sigma}})
    =
     \frac{\textstyle 4}
             {\textstyle n+2h}$;
$|{\mathsf{Supp}} (\sigma_{i})|
  =
  n+h$.}
\end{quote}                                                                                                                                              
\noindent                                                                                                                                                                                                  
                                                                                                                                                                                                                                                                                                                                                                                                   \end{itemize}
\noindent
\textcolor{black}{Hence,
when ${\mathsf{\phi}}$
is satisfiable,
we have:}
\begin{itemize}

\item[\textcolor{black}{{\sf (1)}}]
\textcolor{black}{Each additional Nash equilibrium
among the $\left( \# {\mathsf{\phi}}
                    \right)^{2}$
balanced mixtures
of literal equilibria
for ${\mathsf{G}}$
has the following properties:}                    
\begin{quote}
\textcolor{black}{For each player $i \in [2]$:
{\sf (Q.1)}
${\widetilde{{\mathsf{U}}}}_{i}({\bm{\sigma}})
  \geq
  \frac{\textstyle 1}
          {\textstyle n}$,
so that
{\sf (Q.2)}
$\sum_{i \in [2]}
    {\widetilde{{\mathsf{U}}}}_{i}({\bm{\sigma}})
  \geq
  \frac{\textstyle 2}
          {\textstyle n}$;
{\sf (Q.3)}          
${\mathsf{Supp}}(\sigma_{i})
  \leq
  2n
  <
  h$.}                          
\end{quote}

\item[\textcolor{black}{{\sf (2)}}]
\textcolor{black}{Each additional Nash equilibrium
among the $2 \cdot \# {\mathsf{\phi}}$
balanced mixtures
of a gadget equilibrium
and a literal equilibrium
for ${\mathsf{G}}$
has the following properties:}  
\begin{quote}
\textcolor{black}{For each player $i \in [2]$:
{\sf (Q.4)}
${\widetilde{{\mathsf{U}}}}_{i}({\bm{\sigma}})
  =
  \frac{\textstyle 2}
          {\textstyle n+2h}$,
so that
{\sf (Q.5)}
$\sum_{i \in [2]}
    {\widetilde{{\mathsf{U}}}}_{i}({\bm{\sigma}})
  =
  \frac{\textstyle 4}
          {\textstyle n+2h}$;
{\sf (Q.6)}          
${\mathsf{Supp}}(\sigma_{i})
  =
  n+h$.}          
\end{quote}
\noindent
\textcolor{black}{Note that
the choice
$h > 2n$
implies that
$\frac{\textstyle 2}
          {\textstyle n+2h}
  <
  \frac{\textstyle 1}
          {\textstyle 2h}$        
and
$h < n+h < 2h$.
Hence,
the properties
{\sf (Q.4)},
{\sf (Q.5)}
and
{\sf (Q.6)}
are not contradictory to the properties
{\sf (P.1)},
{\sf (P.2)}
and
{\sf (P.3)},
respectively,
of Nash equilibria,
holding when ${\mathsf{\phi}}$
is unsatisfiable.}

\end{itemize}

\noindent
\textcolor{black}{Hence,
we derive ${\mathcal{NP}}$-hardness
from the following table:}
\begin{center}
\begin{small}
\begin{tabular}{|l|l|l|l|}
\hline
\hline
${\mathcal{NP}}$-hard decision problem:                           & \multicolumn{3}{l|} {By properties disentangling the satisfiability of ${\mathsf{\phi}}$:}                                                       \\
\cline{2-4}
                                                                                             & Unsat.:                                                       & \multicolumn{2}{l|}{Sat.:}                                                                                   \\
\hline
\hline  
{\sf $\exists$ NASH WITH LARGE UTILITIES},                     & {\sf (P.1)}                                                   & \textcolor{black}{{\sf (Q.1)}}                    &  $\left( \# {\mathsf{\phi}}
                                                                                                                                                                                                                                             \right)^{2}$                                  \\
~\textcolor{black}{with $\frac{\textstyle 1}   
                    {\textstyle 2n}
            \leq
            u
            <
            \frac{\textstyle 1}
                    {\textstyle n}$}                                              &                                                                    &                                                                   &                                                       \\        
\hline
{\sf $\exists$ NASH WITH LARGE TOTAL UTILITY},           & {\sf (P.2)}                                                   & \textcolor{black}{{\sf (Q.2)}}                     & $\left( \# {\mathsf{\phi}}
                                                                                                                                                                                                                                           \right)^{2}$                                  \\                                                                                                                                                                                                                                       
~\textcolor{black}{with $\frac{\textstyle 1}   
                    {\textstyle n}
            \leq
            u
            <
            \frac{\textstyle 2}
                    {\textstyle n}$}                                              &                                                                    &                                                                   &                                                       \\        
\hline
{\sf $\exists$ NASH WITH SMALL SUPPORTS},                  & {\sf (P.3)}                                                   & \textcolor{black}{{\sf (Q.3)}}                    & $\left( \# {\mathsf{\phi}}
                                                                                                                                                                                                                                           \right)^{2}$                              \\
~\textcolor{black}{with $2n \leq k < h$}                           &                                                                    &                                                                   &                                                    \\
\hline
\hline
\end{tabular}
\end{small}
\end{center}
\noindent
\textcolor{black}{The formula
                           $( \# {\mathsf{\phi}}
                             )^{2}$}
in the rightmost column
can be solved for $\# {\mathsf{\phi}}$;
by the $\# {\mathcal{P}}$-hardness
of computing $\# {\mathsf{\phi}}$~\cite{V79},
this yields the $\# {\mathcal{P}}$-hardness
of the three counting problems.
Furthermore,
this formula preserves the parity $\oplus {\mathsf{\phi}}$;
by the $\oplus {\mathcal{P}}$-hardness
of computing $\oplus {\mathsf{\phi}}$~\cite{PZ83},
this yields the $\oplus {\mathcal{P}}$-hardness
of the three parity problems.

\noindent
\underline{{\it Group IV}:}
We start with an informal outline of the proof.
First note that
the problems
{\sf $\exists$ 2 NASH}
and {\sf $\exists$ 3 NASH}
``escape'' the technique
using the ${\mathsf{GHR}}$-symmetrization
of ${\mathsf{G}}
       =
       {\mathsf{G}} ( {\widehat{{\mathsf{G}}}},
                                       {\mathsf{\phi}}
                             )$ since,
by Theorem~\ref{fromsymmetrictobasic2},
each Nash equilibrium of ${\mathsf{G}}$
\textcolor{black}{gives rise}
to exactly three, distinct Nash equilibria
of ${\widetilde{{\mathsf{G}}}}=\mathsf{GHR}(\mathsf{G})$;
thus, 
even if $\widehat{\mathsf G}$ 
had a unique Nash equilibrium,
this approach could only help
proving the ${\mathcal{NP}}$-hardness
of {\sf $\exists$} $k+1$ {\sf NASH} with $k\geq 3$.
Instead,
to prove the ${\mathcal{NP}}$-hardness of
{\sf $\exists$ $k+1$ NASH}
for \textcolor{black}{all integers} $k \geq 1$,
we ``embed'' the win-lose diagonal game
${\widehat{{\mathsf{G}}}}_{5}[k]$
(Section~\ref{diagonal game})
as a subgame of ${\widetilde{{\mathsf{G}}}}$.
Denote the resulting game as
${\widetilde{{\mathsf{G}}}}
  \parallel
  {\widehat{{\mathsf{G}}}}_{5}[k]$,
which is still \textcolor{black}{symmetric and win-lose}.
We shall establish that
when $\phi$ is unsatisfiable,
all Nash equilibria
of ${\widetilde{{\mathsf{G}}}}$
which are ``inherited'' from $\widehat{\mathsf G}$
are ``destroyed'' in
${\widetilde{{\mathsf{G}}}}
  \parallel
  {\widehat{{\mathsf{G}}}}_{5}[k]$
as long as $\widehat{\mathsf{G}}$
\textcolor{black}{has been chosen 
so that
the players' expected utilities 
are smaller than $1$
in a Nash equilibrium;}
thus,
in this case,
${\widetilde{{\mathsf{G}}}}
  \parallel
  {\widehat{{\mathsf{G}}}}_{5}[k]$
has exactly $k$ Nash equilibria,
which are ``inherited'' from
the win-lose diagonal game
${\widehat{{\mathsf{G}}}}_{5}[k]$;
instead,
when  $\phi$ is satisfiable,
all $k$ Nash equilibria
inherited from ${\widehat{{\mathsf{G}}}}_{5}[k]$ survive,
and a new Nash equilibrium
is created for each satisfying assignment of ${\mathsf{\phi}}$.
Besides \textcolor{black}{the decision problem}
{\sf $\exists$} $k+1$ {\sf NASH}, with $k\geq 1$,
the same idea goes through
for the last three decision problems
in {\it Group IV}.
We now continue with the details
of the formal proof.

\noindent
Fix
${\widehat{{\mathsf{G}}}}
  :=
  {\widehat{{\mathsf{G}}}}_{1}[2]$.
\textcolor{black}{Since $\kappa ( {\widehat{{\mathsf{G}}}}_{1}[2]
                       )
           =
           2$,            
it follows that
each of 
${\mathsf{G}}$
and
${\widetilde{{\mathsf G}}}$
has size polynomial
in the size of ${\mathsf{\phi}}$.}
Denote as
${\widetilde{{\mathsf{G}}}}
  \parallel
  {\widehat{{\mathsf{G}}}}_{5}[k]$
the win-lose game
resulting from
${\widetilde{{\mathsf{G}}}}$
by \textcolor{black}{adding the
$k$ strategies} from
${\mathsf{T}}
  :=
  \{ t_{1}, \ldots, t_{k} \}$
to the strategy set of each player,
so that \textcolor{black}{${\mathsf{\Sigma}}(  {\widetilde{{\mathsf{G}}}}
                                                      \parallel
                                                      {\widehat{{\mathsf{G}}}}_{5}[k]
                                            )
              =
             {\mathsf{\Sigma}}( {\widetilde{{\mathsf{G}}}}
                                          )
             \cup
             {\mathsf{T}}$,}
and setting:
\begin{itemize}

\item[$(\mathsf{C.1})$]
${\mathsf{U}}(\langle s, t\rangle)
 :=
 \langle 0,1 \rangle$
and
${\mathsf{U}}(\langle t, s \rangle)
 :=
 \langle 1, 0 \rangle$,
when
\textcolor{black}{
$s \in {\mathsf{\Sigma}}({\widetilde{{\mathsf{G}}}})
          \setminus
          ({\mathsf{L}} 
            \cup
            {\mathsf{L}}^{\prime})$} 
and $t \in {\mathsf{T}}$.

\item[$(\mathsf{C.2})$]
${\mathsf{U}}(\langle s, t\rangle)
  :=
  {\mathsf{U}}(\langle t, s\rangle)
  :=
  \langle 0,0 \rangle$,
when $s \in {\mathsf{L}}
                    \cup
                    {\mathsf{L}}^{\prime}$
and
$t \in {\mathsf{T}}$.

\item[$(\mathsf{C.3})$]
${\mathsf{U}}(\langle t_{j}, t_{l}
                        \rangle)
 :=
 \langle {\mathsf{D}}_{k}[j, l],
             {\sf D}^{{\rm T}}_{k}[j, l]
 \rangle$,
when $j, l \in [k]$;
 thus,
 ${\widehat{{\mathsf{G}}}}_{5}[k]$
 is ``embedded'' into
 $\widetilde{\mathsf{G}}$
as a subgame.

\end{itemize}

\noindent
Note that
${\widetilde{{\mathsf{G}}}}
  \parallel
  {\widehat{{\mathsf{G}}}}_{5}[k]$
is a symmetric game,
which has size polynomial
in the size of ${\mathsf{\phi}}$.
Furthermore,  
by Proposition~\ref{very recent},
for each Nash equilibrium
\textcolor{black}{${\bm{\sigma}}
  \in
  {\mathcal{NE}}( {\widehat{{\mathsf{G}}}}_{5}[k]
                           )$,}
for each player $i \in [2]$,
${\mathsf{U}}_{i}({\bm{\sigma}})
  =
  1$.
Since ${\widetilde{{\mathsf{G}}}}
           \parallel
           {\widehat{{\mathsf{G}}}}_{5}[k]$
is win-lose and
\textcolor{black}{${\mathsf T}
  \subseteq
  {\mathsf{\Sigma}}(  {\widetilde{{\mathsf{G}}}}
                                          \parallel
                                          {\widehat{{\mathsf{G}}}}_{5}[k]
                               )$,}
it follows that
${\bm{\sigma}}$
is a Nash equilibrium for
${\widetilde{{\mathsf{G}}}}
  \parallel
  {\widehat{{\mathsf{G}}}}_{5}[k]$.
Thus,
\textcolor{black}{${\mathcal{NE}}( {\widehat{{\mathsf{G}}}}_{5}[k]
                          )
  \subseteq
  {\mathcal{NE}}(  {\widetilde{{\mathsf{G}}}}
                                     \parallel
                                     {\widehat{{\mathsf{G}}}}_{5}[k]
                          )$}.
We continue to prove
two simple properties
of the Nash equilibria for
${\widetilde{\mathsf{G}}}
  \parallel
  {\widehat{{\mathsf{G}}}}_{5}[k]$.

\begin{claim}
\label{claim1}
Fix a Nash equilibrium
\textcolor{black}{${\bm{\sigma}}
  \in
  {\mathcal{NE}}( {\widetilde{\mathsf{G}}}
                             \parallel
                             {\widehat{{\mathsf{G}}}}_{5}[k]
                           )$}
with
${\mathsf{Supp}}(\sigma_{i})
  \subseteq
  {\mathsf{T}}$
for some player $i \in [2]$.
Then,
\textcolor{black}{${\bm{\sigma}}
  \in
  {\mathcal{NE}}( {\widehat{{\mathsf{G}}}}_{5}[k]
                          )$}.
\end{claim}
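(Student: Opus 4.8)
The plan is to argue that the hypothesis ``some player is playing only strategies from $\mathsf{T}$'' forces the \emph{other} player into $\mathsf{T}$ as well, after which $\bm\sigma$ is simply a mixed profile of the embedded subgame $\widehat{\mathsf{G}}_{5}[k]$, and being a Nash equilibrium of $\widetilde{\mathsf{G}}\parallel\widehat{\mathsf{G}}_{5}[k]$ it must in particular be a Nash equilibrium of that subgame. I would first reduce to the case $i=1$: since $\widetilde{\mathsf{G}}\parallel\widehat{\mathsf{G}}_{5}[k]$ is a symmetric bimatrix game, the mixed exchangeability property gives that $\langle\sigma_{1},\sigma_{2}\rangle\in{\mathcal{NE}}(\widetilde{\mathsf{G}}\parallel\widehat{\mathsf{G}}_{5}[k])$ iff $\langle\sigma_{2},\sigma_{1}\rangle$ is; and $\widehat{\mathsf{G}}_{5}[k]$ is symmetric too, so it suffices to treat the case ${\mathsf{Supp}}(\sigma_{1})\subseteq{\mathsf{T}}$.

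Next I would show ${\mathsf{Supp}}(\sigma_{2})\subseteq{\mathsf{T}}$. Consider the column player switching to a strategy $s\in{\mathsf{\Sigma}}(\widetilde{\mathsf{G}})$. Since ${\mathsf{Supp}}(\sigma_{1})\subseteq{\mathsf{T}}$, every profile supported in ${\bm\sigma}_{-2}\diamond s$ has the form $\langle t,s\rangle$ with $t\in{\mathsf{T}}$; by $(\mathsf{C.1})$ when $s\notin{\mathsf{L}}\cup{\mathsf{L}}^{\prime}$, and by $(\mathsf{C.2})$ when $s\in{\mathsf{L}}\cup{\mathsf{L}}^{\prime}$, the column player gets utility $0$, so ${\mathsf{U}}_{2}({\bm\sigma}_{-2}\diamond s)=0$. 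On the other hand, switching to $t_{1}\in{\mathsf{T}}$ yields, by $(\mathsf{C.3})$ and the definition of ${\mathsf{D}}_{k}^{{\rm T}}$, utility $\sum_{j\in[k]}\sigma_{1}(t_{j})\,{\mathsf{D}}_{k}^{{\rm T}}[j,1]=\sum_{j\in[k]}\sigma_{1}(t_{j})=1>0$. Hence the best-response value of the column player against $\sigma_{1}$ is strictly positive, and Lemma~\ref{basic property of mixed nash equilibria} (Condition~{\sf (2)}) forces every strategy in ${\mathsf{Supp}}(\sigma_{2})$ to have positive utility, hence to lie in ${\mathsf{T}}$.

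With both supports inside ${\mathsf{T}}$, the profile $\bm\sigma$ is a mixed profile of the subgame on ${\mathsf{T}}\times{\mathsf{T}}$, whose utilities are by $(\mathsf{C.3})$ exactly those of $\widehat{\mathsf{G}}_{5}[k]$ (under the identification $t_{j}\leftrightarrow j$). All expected utilities entering the Nash conditions for $\bm\sigma$ — in $\bm\sigma$ itself and in all deviations to strategies of ${\mathsf{T}}$ — depend only on this ${\mathsf{T}}\times{\mathsf{T}}$ block; since $\bm\sigma\in{\mathcal{NE}}(\widetilde{\mathsf{G}}\parallel\widehat{\mathsf{G}}_{5}[k])$, no player can improve by deviating within ${\mathsf{T}}$, which are precisely the Nash conditions for $\widehat{\mathsf{G}}_{5}[k]$. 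Therefore $\bm\sigma\in{\mathcal{NE}}(\widehat{\mathsf{G}}_{5}[k])$.

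The argument is routine, so there is no substantial obstacle; the one place demanding care is the bookkeeping in the second step, because $(\mathsf{C.1})$ is \emph{not} symmetric in its two coordinates: one must keep track that in a profile $\langle t,s\rangle$ the column player occupies the second coordinate and so receives the $0$ component of $\langle 1,0\rangle$, and likewise that the symmetry reduction to $i=1$ correctly chains the symmetry of $\widetilde{\mathsf{G}}\parallel\widehat{\mathsf{G}}_{5}[k]$ with that of $\widehat{\mathsf{G}}_{5}[k]$.
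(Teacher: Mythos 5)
Your proof is correct and follows essentially the same route as the paper's: cases $(\mathsf{C.1})$ and $(\mathsf{C.2})$ give the other player utility $0$ on every strategy of ${\mathsf{\Sigma}}(\widetilde{\mathsf{G}})$, a strictly positive option inside ${\mathsf{T}}$ forces ${\mathsf{Supp}}(\sigma_{\overline{i}})\subseteq{\mathsf{T}}$, and then $\bm\sigma$ is a mixed profile of the embedded copy of $\widehat{\mathsf{G}}_{5}[k]$ and inherits the equilibrium property. The only cosmetic differences are your symmetry reduction to $i=1$ (the paper argues directly with a generic $i$ and $\overline{i}$) and your explicit use of $t_{1}$ in place of invoking the positive utility property of $\widehat{\mathsf{G}}_{5}[k]$.
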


\begin{proof}
By Cases $(\mathsf{C.1})$ and $(\mathsf{C.2})$
from the definition of the utility functions of
${\widetilde{\mathsf{G}}}
  \parallel
  {\widehat{{\mathsf{G}}}}_{5}[k]$,
${\mathsf{U}}_{\overline i}
                 ({\bm{\sigma}}_{-\overline i}\diamond s)=0$
for each strategy
\textcolor{black}{$s \in {\mathsf{\Sigma}}({\widetilde{{\mathsf{G}}}}
                                       )$}.
Since ${\widehat{{\mathsf{G}}}}_{5}[k]$
has the positive utility property,
there is a strategy $t \in {\mathsf{T}}$
such that
${\mathsf{U}}_{\overline i}
                 ({\bm{\sigma}}_{-\overline i} \diamond t)
  >
  0$.
Thus,
by Lemma~\ref{basic property of mixed nash equilibria},
every best-response strategy for player $\overline i$
is contained in ${\mathsf{T}}$, 
which
implies that 
$\mathsf{Supp}(\sigma_{\overline{i}})\subseteq{\mathsf T}$.
Hence, 
${\bm{\sigma}}$
maps to a mixed profile for 
${\widehat{{\mathsf{G}}}}_{5}[k]$,
and the claim follows.
\end{proof}

\noindent
Fix now a mixed profile
${\bm\sigma}$ for
${\widetilde{\mathsf{G}}}
  \parallel
  {\widehat{{\mathsf{G}}}}_{5}[k]$
such that
for each player $i \in [2]$,
$\sigma_{i}({\mathsf{T}}) < 1$;
so,
for each player $i \in [2]$,
${\mathsf{Supp}}(\sigma_{i})
  \not\subseteq
  {\mathsf{T}}$.
Construct 
from ${\bm{\sigma}}$
the mixed profile
$\bm{\varsigma}
 =
 \bm{\varsigma}(\bm{\sigma})$
for ${\widetilde{{\mathsf{G}}}}$
such that
for each player $i \in [2]$
and strategy
\textcolor{black}{$s \in {\mathsf{\Sigma}}( {\widetilde{{\mathsf{G}}}}
                                       )$,}
{
\small
\begin{eqnarray*}
          \varsigma_{i}(s)
&:=& \frac{\textstyle \sigma_{i} (s)}
                  {1 - \sigma_{i}({\mathsf{T}})}\, ;
\end{eqnarray*}
}
\noindent
that is, 
${\bm{\varsigma}}$
is the projection of 
${\bm{\sigma}}$ to
${\mathsf{\Sigma}}({\widetilde{\mathsf{G}}})$.
Note that,
by the construction,
for each player $i \in [2]$,
${\mathsf{Supp}}(\varsigma_{i})
  \subseteq
  {\mathsf{Supp}}(\sigma_{i})$.
We now prove:

\begin{claim}
\label{claim2}
Fix a Nash equilibrium
\textcolor{black}{${\bm{\sigma}}
   \in
   {\mathcal{NE}}({\widetilde{\mathsf{G}}}
                             \parallel
                             {\widehat{{\mathsf{G}}}}_{5}[k])
   \setminus
   {\mathcal{NE}}({\widehat{{\mathsf{G}}}}_{5}[k])$}.
Then,
${\bm{\varsigma}}({\bm{\sigma}})$
is a Nash equilibrium for ${\widetilde{{\mathsf{G}}}}$.
\end{claim}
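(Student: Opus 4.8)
\textbf{Proof plan for Claim~\ref{claim2}.} The goal is to show that the projection ${\bm{\varsigma}} = {\bm{\varsigma}}({\bm{\sigma}})$ of the Nash equilibrium ${\bm{\sigma}}$ for ${\widetilde{{\mathsf{G}}}} \parallel {\widehat{{\mathsf{G}}}}_{5}[k]$ onto ${\mathsf{\Sigma}}({\widetilde{{\mathsf{G}}}})$ is itself a Nash equilibrium for ${\widetilde{{\mathsf{G}}}}$. The plan is to verify the best-response condition of Lemma~\ref{basic property of mixed nash equilibria} directly. First I would observe that since ${\bm{\sigma}} \notin {\mathcal{NE}}({\widehat{{\mathsf{G}}}}_{5}[k])$, Claim~\ref{claim1} guarantees that for each player $i \in [2]$ we have ${\mathsf{Supp}}(\sigma_i) \not\subseteq {\mathsf{T}}$, so $\sigma_i({\mathsf{T}}) < 1$ and the projection ${\bm{\varsigma}}$ is well-defined, with ${\mathsf{Supp}}(\varsigma_i) \subseteq {\mathsf{Supp}}(\sigma_i) \cap {\mathsf{\Sigma}}({\widetilde{{\mathsf{G}}}})$.

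The core computation is to relate the conditional expected utilities in the two games. Fix a player $i \in [2]$ and a strategy $s \in {\mathsf{\Sigma}}({\widetilde{{\mathsf{G}}}})$. Because the added strategies in ${\mathsf{T}}$ for player $\overline{i}$ interact with strategies $s \in {\mathsf{\Sigma}}({\widetilde{{\mathsf{G}}}})$ only through Cases $(\mathsf{C.1})$ and $(\mathsf{C.2})$ of the construction, and in all those cases player $i$ receives utility $0$ (utility $1$ goes to player $i$ only when $i$ plays from ${\mathsf{T}}$ against $s \notin {\mathsf{L}} \cup {\mathsf{L}}'$, which is the opposite role), I would argue that the probability mass $\sigma_{\overline{i}}({\mathsf{T}})$ contributes nothing to player $i$'s utility when $i$ plays $s \in {\mathsf{\Sigma}}({\widetilde{{\mathsf{G}}}})$. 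Hence
\[
{\mathsf{U}}_{i}\bigl( {\bm{\sigma}}_{-i} \diamond s \bigr)
 =
 \bigl( 1 - \sigma_{\overline{i}}({\mathsf{T}}) \bigr)
 \cdot
 {\widetilde{{\mathsf{U}}}}_{i}\bigl( {\bm{\varsigma}}_{-i} \diamond s \bigr)\, ,
\]
and likewise ${\mathsf{U}}_{i}({\bm{\sigma}}) = \bigl( 1 - \sigma_{\overline{i}}({\mathsf{T}}) \bigr) \cdot {\widetilde{{\mathsf{U}}}}_{i}({\bm{\varsigma}})$ by averaging over ${\mathsf{Supp}}(\sigma_i)$ and noting that when $i$ plays $t \in {\mathsf{T}}$ against $s \notin {\mathsf{L}} \cup {\mathsf{L}}'$ the utility flows to $i$, so I must be slightly careful: I would instead write ${\mathsf{U}}_i({\bm{\sigma}})$ as a convex combination over ${\mathsf{Supp}}(\sigma_i)$ of the ${\mathsf{U}}_i({\bm{\sigma}}_{-i} \diamond s)$, split the support into the part in ${\mathsf{\Sigma}}({\widetilde{{\mathsf{G}}}})$ (weight $1 - \sigma_i({\mathsf{T}})$) and the part in ${\mathsf{T}}$, and compare with the best-response value. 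Since $1 - \sigma_{\overline{i}}({\mathsf{T}}) > 0$, the factor cancels, and the best-response inequality ${\widetilde{{\mathsf{U}}}}_i({\bm{\varsigma}}) \geq {\widetilde{{\mathsf{U}}}}_i({\bm{\varsigma}}_{-i} \diamond s)$ for all $s \in {\mathsf{\Sigma}}({\widetilde{{\mathsf{G}}}})$, together with equality on ${\mathsf{Supp}}(\varsigma_i)$, follows from the corresponding conditions for ${\bm{\sigma}}$ restricted to deviations into ${\mathsf{\Sigma}}({\widetilde{{\mathsf{G}}}})$. By Lemma~\ref{basic property of mixed nash equilibria}, ${\bm{\varsigma}}$ is a Nash equilibrium for ${\widetilde{{\mathsf{G}}}}$.

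The main obstacle I anticipate is getting the bookkeeping exactly right when player $i$ itself places probability on ${\mathsf{T}}$: the strategies in ${\mathsf{T}}$ do yield utility $1$ to $i$ against many strategies $s$ of $\overline{i}$ (via Case $(\mathsf{C.1})$), so ${\mathsf{U}}_i({\bm{\sigma}})$ is not simply proportional to ${\widetilde{{\mathsf{U}}}}_i({\bm{\varsigma}})$ — the proportionality holds only for the conditional utilities ${\mathsf{U}}_i({\bm{\sigma}}_{-i} \diamond s)$ at a \emph{fixed} $s \in {\mathsf{\Sigma}}({\widetilde{{\mathsf{G}}}})$, not for the equilibrium payoff. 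The clean way around this is to observe that every $s \in {\mathsf{Supp}}(\varsigma_i) \subseteq {\mathsf{Supp}}(\sigma_i) \cap {\mathsf{\Sigma}}({\widetilde{{\mathsf{G}}}})$ is a best response in ${\bm{\sigma}}$, hence ${\mathsf{U}}_i({\bm{\sigma}}_{-i} \diamond s) = {\mathsf{U}}_i({\bm{\sigma}})$ for all such $s$, and also ${\mathsf{U}}_i({\bm{\sigma}}_{-i} \diamond s') \leq {\mathsf{U}}_i({\bm{\sigma}})$ for every other $s' \in {\mathsf{\Sigma}}({\widetilde{{\mathsf{G}}}})$; dividing through by the positive factor $1 - \sigma_{\overline{i}}({\mathsf{T}})$ transfers exactly these two facts to ${\bm{\varsigma}}$ and ${\widetilde{{\mathsf{G}}}}$. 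This sidesteps the need to compute ${\mathsf{U}}_i({\bm{\sigma}})$ in terms of ${\widetilde{{\mathsf{U}}}}_i({\bm{\varsigma}})$ altogether.
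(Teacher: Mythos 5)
Your proof is correct and follows essentially the same route as the paper's: the crucial identity ${\mathsf{U}}_{i}({\bm{\sigma}}_{-i}\diamond s)=(1-\sigma_{\overline{i}}({\mathsf{T}}))\cdot{\widetilde{{\mathsf{U}}}}_{i}({\bm{\varsigma}}_{-i}\diamond s)$ for each fixed $s\in{\mathsf{\Sigma}}({\widetilde{{\mathsf{G}}}})$ --- valid because a player choosing a strategy in ${\mathsf{\Sigma}}({\widetilde{{\mathsf{G}}}})$ receives utility $0$ against any opponent strategy in ${\mathsf{T}}$ (Cases $(\mathsf{C.1})$ and $(\mathsf{C.2})$) --- is exactly the computation the paper performs. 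The only difference is presentational: the paper argues by contradiction (a profitable deviation for ${\bm{\varsigma}}$ would lift, via the same positive scaling factor, to a profitable deviation for ${\bm{\sigma}}$ against a supported strategy), whereas you transfer the best-response conditions of Lemma~\ref{basic property of mixed nash equilibria} directly from ${\bm{\sigma}}$ to ${\bm{\varsigma}}$; the two are equivalent.
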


\begin{proof}
Assume,
by way of contradiction,
that
${\bm{\varsigma}}({\bm{\sigma}})$
is not a Nash equilibrium
for ${\widetilde{{\mathsf{G}}}}$.
Then,
by Lemma~\ref{basic property of mixed nash equilibria},
there is a player $i \in [2]$
and a pair of strategies
\textcolor{black}{$s^{\prime}
 \in
 {\mathsf{\Sigma}}( {\widetilde{{\mathsf{G}}}}
                              )$}
and 
$s \in {\mathsf{Supp}}(\varsigma_{i})$
such that
{
\small
\begin{equation}
\label{equa1}
{\mathsf{ U}}_{i}\left( {\bm\varsigma}_{-i}
                                        \diamond
                                        s'
                             \right)\ \
  >\ \
  {\mathsf{U}}_{i}\left( {\bm{\varsigma}}_{-i}
                                       \diamond
                                       s
                            \right).
\end{equation}
}
\noindent
Since
${\mathsf{Supp}}(\varsigma_{i})
  \subseteq
  {\mathsf{Supp}}(\sigma_{i})$,
it follows that
$s \in {\mathsf{Supp}}(\sigma_{i})$.
Since 
${\widetilde{\mathsf{G}}}
  \parallel
  {\widehat{{\mathsf{G}}}}_{5}[k]$ is a symmetric game, 
\textcolor{black}{we may} assume, without loss of generality, 
that $i=1$.
Clearly,
{
\small
\begin{eqnarray*}
           {\mathsf{ U}}_{1}\left( {\bm\varsigma}_{-1}
                                                  \diamond
                                                   s^{\prime}
                                        \right)
& = & \sum_{t
                     \in
                     {\mathsf{\Sigma}}\left( {\widetilde{{\mathsf{G}}}}
                                                  \right)}
             {\mathsf{ U}}_{1}
                              \left( \langle s^{\prime},
                                                   t
                                       \rangle
                              \right)
             \cdot
             \varsigma_{2}\left( t
                                    \right)                                                                                                                                                                 \\
& = & \frac{\textstyle 1}
                  {\textstyle 1-\sigma_{2}({\mathsf{T}})}\,
          \sum_{t
                     \in
                     {\mathsf{\Sigma}}\left( {\widetilde{{\mathsf{G}}}}
                                                   \right)}
            {\mathsf{U}}_{1}\left( \langle s^{\prime},
                                                              t
                                                  \rangle
                                        \right)
            \cdot
            \sigma_{2}\left( t
                                                 \right)\, ,
\end{eqnarray*}
}
and
{
\small
\begin{eqnarray*}
          {\mathsf{U}}_{1}({\bm{\varsigma}}_{-1}
                                       \diamond
                                       s)
& = & \sum_{t \in {\mathsf{\Sigma}}\left({\widetilde{{\mathsf{G}}}}\right)}
             {\mathsf{U}}_{1}(\langle s, t
                                          \rangle)
             \cdot
             \varsigma_{2}(t)                                                                                                  \\
& = & \frac{\textstyle 1}
                  {\textstyle 1 - \sigma_{2}({\mathsf T})}
          \sum_{t \in {\mathsf{\Sigma}}\left({\widehat{{\mathsf{G}}}}\right)}
          {\mathsf{U}}_{1}\left( \langle  s, t
                                              \rangle
                                     \right)
          \cdot
          \sigma_{2}(t)\, .
\end{eqnarray*}
}
Hence, 
by (\ref{equa1}),
it follows that
{
\small
\begin{equation}
\label{equa2}
          \sum_{t \in {\mathsf{\Sigma}}\left({\widehat{{\mathsf{G}}}}\right)}
              {\mathsf{U}}_{1}(\langle s', t
                                           \rangle)
              \cdot
              \sigma_{2}(t)\ \
>\ \
 \sum_{t \in {\mathsf{\Sigma}}\left({\widehat{\mathsf{G}}}\right)}
              {\mathsf{U}}_{1}(\langle s,
                                                      t
                                          \rangle)
              \cdot
              \sigma_{2}(t)\, .
\end{equation}
}
\noindent
Thus,
{
\small
\begin{eqnarray*}
          \lefteqn{{\mathsf{U}}_{1}\left( {\bm{\sigma}}_{-1}
                                                              \diamond
                                                              s^{\prime}
                                                    \right)}                                                                                                                                                       \\
= & \sum_{t \in {\mathsf{\Sigma}}\left( {\widetilde{{\mathsf{G}}}}
                                                                    \parallel
                                                                    {\widehat{{\mathsf{G}}}}_{5}[k]
                                                          \right)}
             {\mathsf{U}}_{1}\left( \langle s', t
                                                  \rangle
                                        \right)
             \cdot
             \sigma_{2}(t)
    &                                                                                                                                                                                                             \\
= & \sum_{t \in {{\mathsf{\Sigma}}}\left( {\widetilde{{\mathsf{G}}}}
                                                              \right)}
             {\mathsf{U}}_{1}\left( \langle s', t
                                                 \rangle
                                        \right)
             \cdot
             \sigma_{2}(t)   
    & \mbox{(since
                    \textcolor{black}{${\mathsf{\Sigma}}( {\widetilde{{\mathsf{G}}}}
                                                             \parallel
                                                             {\widehat{{\mathsf{G}}}}_{5}[k]
                                                  )
                      =
                      {\mathsf{\Sigma}}({\widetilde{{\mathsf{G}}}}
                                                   )
                      \cup
                     {\mathsf{T}}$}
                     \&
                     ${\mathsf{U}}_{1}(s, t)
                            =
                            0$
                    for \textcolor{black}{$s \in {\mathsf{\Sigma}}(\widetilde{{\sf G}})$,}
                    $t \in {\mathsf{T}}$)}                                                                                                                                                     \\
> & \sum_{t \in {{\mathsf{\Sigma}}}\left( {\widetilde{{\mathsf{G}}}}
                                                              \right)}
             {\mathsf{U}}_{1}\left( \langle s, t
                                                 \rangle
                                        \right)
             \cdot
             \sigma_{2}(t)
    & \mbox{(by (\ref{equa2}))}                                                                                                                                                        \\
= & \sum_{t \in {\mathsf{\Sigma}}\left( {\widetilde{{\mathsf{G}}}}
                                                                    \parallel
                                                                    {\widehat{{\mathsf{G}}}}_{5}[k]
                                                          \right)}
             {\mathsf{U}}_{1}\left( \langle s, t
                                                 \rangle
                                        \right)
             \cdot
             \sigma_{2}(t)                                                              
    &  \mbox{(since
                    \textcolor{black}{${\mathsf{\Sigma}}( {\widetilde{{\mathsf{G}}}}
                                                             \parallel
                                                             {\widehat{{\mathsf{G}}}}_{5}[k]
                                                  )
                      =
                      {\mathsf{\Sigma}}({\widetilde{{\mathsf{G}}}}
                                                   )
                      \cup
                     {\mathsf{T}}$}
                     \&
                     ${\mathsf{U}}_{1}(s, t)
                            =
                            0$
                    for \textcolor{black}{$s \in {\mathsf{\Sigma}}(\widetilde{{\sf G}})$,}
                    $t \in {\mathsf{T}}$)}                                                                                                                                              \\
= & {\mathsf{U}}_{1}\left( {\bm{\sigma}}_{-1}
                                              \diamond
                                              s
                                   \right)\, .
    &                               
\end{eqnarray*}
}
Since ${\bm{\sigma}}$
is a Nash equilibrium for
${\widetilde{{\mathsf{G}}}}
  \parallel
  {\widehat{{\mathsf{G}}}}_{5}[k]$
and $s \in{\mathsf{Supp}}(\sigma_{1})$, 
Lemma~\ref{basic property of mixed nash equilibria} (Condition (1))
implies that
${\mathsf{U}}_{1}\left( {\bm{\sigma}}_{-1}
                                               \diamond
                                               s
                                     \right)\geq
                                     {\mathsf{U}}_{1}\left( {\bm{\sigma}}_{-1}
                                               \diamond
                                               s^{\prime}
                                     \right)$. 
A contradiction.
\end{proof}

\noindent
We now prove:

\begin{lemma}
\label{group3.1}
Assume that $\phi$ is unsatisfiable.
Then,
\textcolor{black}{${\mathcal{NE}}( {\widetilde{\mathsf{G}}}
                                    \parallel
                                   {\widehat{{\mathsf{G}}}}_{5}[k]
                          )
  =
  {\mathcal{NE}}({\widehat{{\mathsf{G}}}}_{5}[k])$}.
\end{lemma}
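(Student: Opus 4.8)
The inclusion ${\mathcal{NE}}( {\widehat{{\mathsf{G}}}}_{5}[k] ) \subseteq {\mathcal{NE}}( {\widetilde{{\mathsf{G}}}} \parallel {\widehat{{\mathsf{G}}}}_{5}[k] )$ was already established just before Claim~\ref{claim1}, so the task is to prove the reverse inclusion. I would fix a Nash equilibrium ${\bm{\sigma}} \in {\mathcal{NE}}( {\widetilde{{\mathsf{G}}}} \parallel {\widehat{{\mathsf{G}}}}_{5}[k] )$. If ${\mathsf{Supp}}(\sigma_{i}) \subseteq {\mathsf{T}}$ for some player $i \in [2]$, then Claim~\ref{claim1} gives ${\bm{\sigma}} \in {\mathcal{NE}}( {\widehat{{\mathsf{G}}}}_{5}[k] )$ and we are done. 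Otherwise I would argue by contradiction, assuming ${\bm{\sigma}} \notin {\mathcal{NE}}( {\widehat{{\mathsf{G}}}}_{5}[k] )$; Claim~\ref{claim1} (in contrapositive) then forces $\sigma_{i}({\mathsf{T}}) < 1$ for \emph{each} player $i \in [2]$, so the projection ${\bm{\varsigma}} = {\bm{\varsigma}}({\bm{\sigma}})$ onto ${\mathsf{\Sigma}}({\widetilde{{\mathsf{G}}}})$ is well-defined and, by Claim~\ref{claim2}, lies in ${\mathcal{NE}}({\widetilde{{\mathsf{G}}}})$, with ${\mathsf{Supp}}(\varsigma_{i}) = {\mathsf{Supp}}(\sigma_{i}) \setminus {\mathsf{T}} \neq \emptyset$.

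Next I would pin down ${\mathcal{NE}}({\widetilde{{\mathsf{G}}}})$ exactly. Since $\phi$ is unsatisfiable, Proposition~\ref{if unsatisfied} gives ${\mathcal{NE}}({\mathsf{G}}) = {\mathcal{NE}}({\widehat{{\mathsf{G}}}}_{1}[2])$, and Proposition~\ref{gadget1} identifies this with the single gadget equilibrium ${\widehat{{\bm{\sigma}}}}$, which has support inside ${\widehat{{\mathsf{\Sigma}}}}$ and ${\mathsf{U}}_{1}({\widehat{{\bm{\sigma}}}}) = {\mathsf{U}}_{2}({\widehat{{\bm{\sigma}}}}) = \frac{1}{2}$. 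Theorems~\ref{frombasictosymmetric} and~\ref{fromsymmetrictobasic2} then make ${\widetilde{{\mathsf{G}}}}$ have exactly the three Nash equilibria ${\bm{\sigma}}^{1} = {\widehat{{\bm{\sigma}}}} \ast {\widehat{{\bm{\sigma}}}}$, ${\bm{\sigma}}^{2} = {\widehat{{\bm{\sigma}}}} \ast \left\langle 0^{\kappa}, 0^{\kappa} \right\rangle$ and ${\bm{\sigma}}^{3} = \left\langle 0^{\kappa}, 0^{\kappa} \right\rangle \ast {\widehat{{\bm{\sigma}}}}$. From the definition of the balanced mixture, each of these has support confined to ${\widehat{{\mathsf{\Sigma}}}} \cup {\widehat{{\mathsf{\Sigma}}}}^{\prime}$, hence disjoint from ${\mathsf{L}} \cup {\mathsf{L}}^{\prime}$; and each assigns both players expected utility at most $\frac{1}{2} < 1$ --- for ${\bm{\sigma}}^{1}$ this is Lemma~\ref{vittorio simplification for group iv}, and for ${\bm{\sigma}}^{2}$ and ${\bm{\sigma}}^{3}$ (which are balanced mixtures involving the null vector and thus not covered by that lemma) it follows by a one-line evaluation of Observation~\ref{vittorio 1} together with ${\mathsf{U}}_{i}({\widehat{{\bm{\sigma}}}}) = \frac{1}{2}$. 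In particular ${\bm{\varsigma}} \in \{{\bm{\sigma}}^{1},{\bm{\sigma}}^{2},{\bm{\sigma}}^{3}\}$, so $\varsigma_{2}({\mathsf{L}} \cup {\mathsf{L}}^{\prime}) = 0$ and ${\widetilde{{\mathsf{U}}}}_{i}({\bm{\varsigma}}) \leq \frac{1}{2}$ for each $i \in [2]$.

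The contradiction then comes from a deviation into the embedded diagonal game. Whenever player~$1$ plays a strategy $s \in {\mathsf{\Sigma}}({\widetilde{{\mathsf{G}}}})$ against a strategy $t \in {\mathsf{T}}$ she gets utility $0$, by Cases $(\mathsf{C.1})$ and $(\mathsf{C.2})$ of the definition of ${\widetilde{{\mathsf{G}}}} \parallel {\widehat{{\mathsf{G}}}}_{5}[k]$. Picking any $s \in {\mathsf{Supp}}(\varsigma_{1})$, this gives ${\mathsf{U}}_{1}({\bm{\sigma}}_{-1} \diamond s) = (1 - \sigma_{2}({\mathsf{T}})) \cdot {\widetilde{{\mathsf{U}}}}_{1}({\bm{\varsigma}}_{-1} \diamond s) = (1 - \sigma_{2}({\mathsf{T}})) \cdot {\widetilde{{\mathsf{U}}}}_{1}({\bm{\varsigma}}) \leq \frac{1}{2}$, and since $s \in {\mathsf{Supp}}(\sigma_{1})$, Lemma~\ref{basic property of mixed nash equilibria} (Condition~{\sf (1)}) yields ${\mathsf{U}}_{1}({\bm{\sigma}}) \leq \frac{1}{2}$. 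On the other hand, since the first row of ${\mathsf{D}}_{k}$ is all-ones, playing $t_{1}$ gives player~$1$ utility $1$ against every $t_{l} \in {\mathsf{T}}$ and against every $s \in {\mathsf{\Sigma}}({\widetilde{{\mathsf{G}}}}) \setminus ({\mathsf{L}} \cup {\mathsf{L}}^{\prime})$, and $0$ only against $s \in {\mathsf{L}} \cup {\mathsf{L}}^{\prime}$; hence ${\mathsf{U}}_{1}({\bm{\sigma}}_{-1} \diamond t_{1}) = 1 - \sigma_{2}({\mathsf{L}} \cup {\mathsf{L}}^{\prime}) = 1 - (1 - \sigma_{2}({\mathsf{T}}))\,\varsigma_{2}({\mathsf{L}} \cup {\mathsf{L}}^{\prime}) = 1 > \frac{1}{2} \geq {\mathsf{U}}_{1}({\bm{\sigma}})$, contradicting that ${\bm{\sigma}}$ is a Nash equilibrium (Lemma~\ref{basic property of mixed nash equilibria}). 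Together with Claim~\ref{claim1}, this gives ${\mathcal{NE}}( {\widetilde{{\mathsf{G}}}} \parallel {\widehat{{\mathsf{G}}}}_{5}[k] ) \subseteq {\mathcal{NE}}( {\widehat{{\mathsf{G}}}}_{5}[k] )$, and the claimed equality follows.

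I expect the only genuinely delicate points to be bookkeeping ones: verifying that the supports of ${\bm{\sigma}}^{1},{\bm{\sigma}}^{2},{\bm{\sigma}}^{3}$ --- and hence of ${\bm{\varsigma}}$ --- avoid ${\mathsf{L}} \cup {\mathsf{L}}^{\prime}$ (so that the deviation to $t_{1}$ collects probability $1$), and closing the small gap for ${\bm{\sigma}}^{2},{\bm{\sigma}}^{3}$ where Lemma~\ref{vittorio simplification for group iv} does not directly apply but Observation~\ref{vittorio 1} does. Everything else is a routine combination of Claims~\ref{claim1} and~\ref{claim2} with the all-ones-row structure of ${\mathsf{D}}_{k}$ and the sub-unit utility of the gadget ${\widehat{{\mathsf{G}}}}_{1}[2]$.
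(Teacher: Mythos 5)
Your proposal is correct, and it shares the paper's scaffolding (Claims~\ref{claim1} and~\ref{claim2}, Proposition~\ref{if unsatisfied}, and the support characterization from Theorem~\ref{fromsymmetrictobasic2} forcing ${\mathsf{Supp}}(\sigma_{i}) \subseteq {\widehat{{\mathsf{\Sigma}}}}_{i} \cup {\widehat{{\mathsf{\Sigma}}}}_{i}^{\prime} \cup {\mathsf{T}}$), but the endgame is genuinely different. The paper splits into two cases --- (i) both supports lie entirely in ${\widehat{{\mathsf{\Sigma}}}} \cup {\widehat{{\mathsf{\Sigma}}}}^{\prime}$, where the $1$-sum-like structure of ${\widehat{{\mathsf{G}}}}_{1}[2]$ under the ${\mathsf{GHR}}$-symmetrization gives some player utility strictly below $1$ and a deviation to any $t \in {\mathsf{T}}$ yields $1$; and (ii) some player puts mass on ${\mathsf{T}}$, where the positive utility property of ${\widehat{{\mathsf{G}}}}_{5}[k]$ supplies a strictly improving deviation inside ${\mathsf{T}}$ for the other player. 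You instead give a single unified argument: you quantify the equilibrium utility of player~$1$ through the projection factor $(1-\sigma_{2}({\mathsf{T}}))$ and the exact value ${\widetilde{{\mathsf{U}}}}_{1}({\bm{\varsigma}}) \leq \frac{1}{2}$ (via Lemma~\ref{vittorio simplification for group iv} for ${\bm{\sigma}}^{1}$ and Observation~\ref{vittorio 1} for ${\bm{\sigma}}^{2},{\bm{\sigma}}^{3}$), and then use the all-ones first row of ${\mathsf{D}}_{k}$ together with $\sigma_{2}({\mathsf{L}} \cup {\mathsf{L}}^{\prime})=0$ to get a utility-$1$ deviation to $t_{1}$ regardless of how much mass player~$2$ places on ${\mathsf{T}}$; all the steps check out (in particular the factorization ${\mathsf{U}}_{1}({\bm{\sigma}}_{-1}\diamond s)=(1-\sigma_{2}({\mathsf{T}}))\,{\widetilde{{\mathsf{U}}}}_{1}({\bm{\varsigma}}_{-1}\diamond s)$ is valid since Cases $({\mathsf{C.1}})$--$({\mathsf{C.2}})$ give player~$1$ utility $0$ against ${\mathsf{T}}$). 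What each buys: your version is shorter and avoids the case split, at the price of being tied to the specific gadgets --- it uses the numerical value $\frac{1}{2}$ from Proposition~\ref{gadget1} and the explicit all-ones row $t_{1}$ of ${\mathsf{D}}_{k}$ --- whereas the paper's two-case argument uses only the abstract properties ``some player gets utility $<1$ in every gadget equilibrium'' and the positive utility property of ${\widehat{{\mathsf{G}}}}_{5}[k]$, which is exactly what lets the paper remark (in its footnote) that any gadget with the former property would do.
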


\begin{proof}
Since
\textcolor{black}{${\mathcal{NE}}( {\widehat{{\mathsf{G}}}}_{5}[k]
                          )
  \subseteq
  {\mathcal{NE}}(  {\widetilde{{\mathsf{G}}}}
                                     \parallel
                                     {\widehat{{\mathsf{G}}}}_{5}[k]
                          )$,}
it remains to prove that
\textcolor{black}{${\mathcal{NE}}(  {\widetilde{{\mathsf{G}}}}
                                     \parallel
                                     {\widehat{{\mathsf{G}}}}_{5}[k]
                          )
  \setminus
  {\mathcal{NE}}( {\widehat{{\mathsf{G}}}}_{5}[k]
                          )
  =
  \emptyset$}.
Assume,
by way of contradiction,
that there is
a Nash equilibrium
\textcolor{black}{${\bm{\sigma}}
  \in
  {\mathcal{NE}}( {\widetilde{\mathsf{G}}}
                                    \parallel
                                    {\widehat{{\mathsf{G}}}}_{5}[k]
                          )
  \setminus
  {\mathcal{NE}}({\widehat{{\mathsf{G}}}}_{5}[k])$}.

Since $\phi$ is unsatisfiable, 
it follows,
by Proposition~\ref{if unsatisfied},
that
\textcolor{black}{${\mathcal{NE}}({\mathsf{G}})
  =
  {\mathcal{NE}}( {\widehat{{\mathsf{G}}}}_{1}[2]
                          )$}.
Thus,
by Proposition~\ref{fromsymmetrictobasic2},
for each 
\textcolor{black}{${\bm{\tau}}
   \in
   {\mathcal{NE}}( {\widetilde{{\mathsf{G}}}}
                           )$},
for each player $i \in [2]$,
${\mathsf{Supp}}(\tau_i)
  \subseteq
  {\widehat{{\mathsf{\Sigma}}}}_{i}
  \cup
  {\widehat{{\mathsf{\Sigma}}}}_{i}^{\prime}$.
Since, 
by Claim \ref{claim2}, 
${\bm{\varsigma}}({\bm{\sigma}})$
is a Nash equilibrium for
${\widetilde{{\mathsf{G}}}}$,
it follows that
for each player $i \in [2]$,
${\mathsf{Supp}}(\sigma_{i})
  \subseteq
  {\widehat{{\mathsf{\Sigma}}}}_{i}
  \cup
  {\widehat{{\mathsf{\Sigma}}}}_{i}^{\prime}
  \cup
  {\mathsf{{T}}}$.
Since
\textcolor{black}{${\bm{\sigma}}
  \not\in
  {\mathcal{NE}}( {\widehat{{\mathsf{G}}}}_{5}[k]
                          )$}, 
Claim~\ref{claim1}
implies that
for each player $i \in [2]$,
${\mathsf{Supp}}(\sigma_{i})
  \not\subseteq
  {\mathsf{T}}$.
Hence,
it follows that
for each player $i \in [2]$,  
\textcolor{black}{${\mathsf{Supp}}(\sigma_{i})
  \cap
  ( {\widehat{{\mathsf{\Sigma}}}}_{i}
           \cup
           {\widehat{{\mathsf{\Sigma}}}}_{i}^{\prime}
  )         
  \neq
  \emptyset$}.
There are two cases:
\begin{enumerate}

\item
Assume first
that 
for each player $i \in [2]$,
${\mathsf{Supp}}(\sigma_{i})
  \subseteq
  {\widehat{{\mathsf{\Sigma}}}}_{i}
  \cup
  {\widehat{{\mathsf{\Sigma}}}}_{i}^{\prime}$.
By the utility functions of 
${\widehat{{\mathsf{G}}}}_{1}[2]$
and by the definition of the {\sf GHR}-symmetrization, 
it follows that
for any profile
${\bf s}$ supported in ${\bm{\sigma}}$,
${\mathsf U}_{1}({\bf s})
 +
 {\mathsf U}_{2}({\bf s})
 \leq
 1$.
This implies that
there is a player $i \in [2]$ 
with
${\mathsf U}_{i}({\bf s}) < 1$.\footnote{We prepare the reader
                that this is the only property 
                of the gadget game 
                ${\widehat{{\mathsf{G}}}}_{1}[2]$
                that we shall use for the \textcolor{black}{proof}
                regarding the decision problems in \textcolor{black}{{\it Group IV}}.
                Thus,
                there could be used,
                as a gadget game,
                any win-lose game 
                other than $\widehat{\mathsf G}_1[2]$,
                such that
                there is,
                for each Nash equilibrium of it,
                a player receiving utility less than $1$.}
This implies that
${\mathsf{U}}_{i}({\bm{\sigma}})
  <
  1$.
Consider now
player $i$
switching to a pure strategy
$t \in {\mathsf{T}}$.
Note that only profiles falling into case ({\sf C.1})
are supported in
${\bm{\sigma}}_{-i} \diamond t$.
Hence,
by the utility functions of
${\widetilde{\mathsf{G}}}
  \parallel
  {\widehat{{\mathsf{G}}}}_{5}[k]$,
${\mathsf U}_{i}\left( {\bm\sigma}_{-i}
                                    \diamond
                                    t
                          \right)
  = 1$.
Since ${\mathsf U}_i({\bm\sigma})<1$, 
Lemma \ref{basic property of mixed nash equilibria} (Condition (1))
yields a contradiction.

\item
Assume now
that there is a player 
$i \in [2]$ with
${\mathsf{Supp}}(\sigma_{i})
  \cap
  {\mathsf{T}}
  \neq
  \emptyset$.
Since
${\widetilde{\mathsf{G}}}
  \parallel
  {\widehat{{\mathsf{G}}}}_{5}[k]$  
is a symmetric game, 
we may assume,
without loss of generality,
that 
$i=2$. 
Since
${\mathsf{Supp}}(\sigma_{i})
  \cap
  \left( {\widehat{{\mathsf{\Sigma}}}}_{i}
            \cup
            {\widehat{{\mathsf{\Sigma}}}}_{i}^{\prime}
  \right)          
  \neq
  \emptyset$
for each player $i \in [2]$,  
there is a strategy 
\textcolor{black}{$s \in {\mathsf{Supp}}(\sigma_{1})
         \cap
         ( {\widehat{{\mathsf{\Sigma}}}}_{1}
                  \cup
                  {\widehat{{\mathsf{\Sigma}}}}_{1}^{\prime}
         )$}.
Thus,
{
\small
\begin{eqnarray*}
         \lefteqn{{\sf U}_1({\bm\sigma}_{-1}\diamond s)}                                                                                                                                               \\
=   & \sum_{s' \in {\widehat{{\mathsf{\Sigma}}}}_{2}
                             \cup
                             {\widehat{{\mathsf{\Sigma}}}}_{2}^{\prime}}
         {\sf U}_{1}(\langle s, s'
                            \rangle)
         \cdot
         \sigma_{2}(s')
         +
         \sum_{t' \in\mathsf{T}}{\sf U}_1(\langle s,t'\rangle)\cdot\sigma_2(t')
      &                                                                                                                                                                                                                              \\
=   & \sum_{s' \in {\widehat{{\mathsf{\Sigma}}}}_{2}
                             \cup
                              {\widehat{{\mathsf{\Sigma}}}}_{2}^{\prime}}
         {\sf U}_1(\langle s,s'\rangle)\cdot\sigma_2(s')
      & \mbox{(from Case {\sf (C.1)} 
                      in the utility functions)}                                                                                                                                                                           \\
\leq& \sum_{s' \in {\widehat{{\mathsf{\Sigma}}}}_{2}
                              \cup
                              {\widehat{{\mathsf{\Sigma}}}}_{2}^{\prime}}
            \sigma_{2}(s')
      & \mbox{(since ${\widetilde{\mathsf{G}}}
                                 \parallel
                                 {\widehat{{\mathsf{G}}}}_{5}[k]$ is win-lose)}\, .
\end{eqnarray*}
}
\noindent
Since $\widehat{\sf G}_5[k]$ has the positive utility property
and
${\mathsf{Supp}}(\sigma_{2})
  \cap
  {\mathsf{T}}
  \neq
  \emptyset$,
it follows that
there is a strategy
$t \in {\mathsf{T}}$
such that 
{
\small
\begin{equation}
\label{equa3}
\sum_{t' \in \mathsf{T}}
   {\mathsf{U}}_{1}(\langle t, t'
                                 \rangle)
   \cdot
   \sigma_{2}(t')\ \
   >\ \
   0\, .
\end{equation}
}
\noindent
So consider player $1$
switching to
the pure strategy $t$. 
Then,
{
\small
\begin{eqnarray*}
       \lefteqn{{\sf U}_1({\bm{\sigma}}_{-1}
                                       \diamond 
                                       t)}                                                                                                                                  \\ 
= & \sum_{s' \in {\widehat{{\mathsf{\Sigma}}}}_{2}
                           \cup
                           {\widehat{{\mathsf{\Sigma}}}}_{2}^{\prime}}
          {\mathsf{U}}_{1}\left( \langle t, s'
                                                \rangle
                                      \right)
          \cdot
          \sigma_{2}(s')
       +
       \sum_{t'  \in {\mathsf{T}}}
           {\mathsf{U}}_{1}\left( \langle t, t'
                                                 \rangle
                                       \right)
           \cdot
           \sigma_{2}(t')
    &                                                                                                                                                                      \\  
> & \sum_{s' \in {\widehat{{\mathsf{\Sigma}}}}_{2}
                           \cup
                           {\widehat{{\mathsf{\Sigma}}}}_{2}^{\prime}}
          {\sf U}_1\left( \langle t, s'
                                  \rangle
                        \right)          
          \cdot
          \sigma_{2}(s')
    &  \mbox{(from (\ref{equa3}))}                                                                                                                   \\
= & \sum_{s' \in {\widehat{{\mathsf{\Sigma}}}}_{2}
                            \cup
                            {\widehat{{\mathsf{\Sigma}}}}_{2}^{\prime}}
          \sigma_{2}(s')
    & \mbox{(from Case ({\sf C.1}) in the utility functions)}\, .     
\end{eqnarray*}
}
It follows that
{
\small
\begin{eqnarray*}
           {\mathsf{U}}_{1}\left( {\bm{\sigma}}_{-1}
                                                 \diamond 
                                                 s
                                       \right)
& < & {\mathsf{U}}_{1}\left( {\bm{\sigma}}_{-1}
                                                \diamond 
                                                t
                                       \right)\, .
\end{eqnarray*}
}
Since $s \in {\mathsf{Supp}}(\sigma_{1})$,
Lemma~\ref{basic property of mixed nash equilibria} (Condition (1))
implies that
{
\small
\begin{eqnarray*}
           {\mathsf{U}}_{1}\left( {\bm{\sigma}}_{-1}
                                                 \diamond 
                                                 s
                                       \right)
& \geq  & {\mathsf{U}}_{1}\left( {\bm{\sigma}}_{-1}
                                                \diamond 
                                                t
                                       \right)\, .
\end{eqnarray*}
}
A contradiction.

\end{enumerate}
The claim now follows.
\end{proof}

\noindent
\textcolor{black}{By Proposition~\ref{very recent},
Lemma~\ref{group3.1}
immediately implies:}

\begin{corollary}
\label{xouany}
\textcolor{black}{Assume that $\phi$ is unsatisfiable.
Then:
{\sf (P.0)}
${\widetilde{{\mathsf{G}}}}
  \parallel
  {\widehat{{\mathsf{G}}}}_{5}[k]$
has exactly $k$ Nash equilibria,
and each is
{\sf (P.1)}
Pareto-Optimal, 
{\sf (P.2)}
Strongly Pareto-Optimal,
and
{\sf (P.3)}
symmetric.}
\end{corollary}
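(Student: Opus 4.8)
The plan is to compose Lemma~\ref{group3.1} with Proposition~\ref{very recent}. Lemma~\ref{group3.1} gives, for unsatisfiable $\phi$, that ${\mathcal{NE}}(\widetilde{\mathsf{G}} \parallel \widehat{\mathsf{G}}_{5}[k]) = {\mathcal{NE}}(\widehat{\mathsf{G}}_{5}[k])$; Proposition~\ref{very recent} then says the right-hand side has exactly $k$ elements, which yields {\sf (P.0)}. The same proposition states that each of these Nash equilibria is symmetric, and since symmetry is a property of the mixed profile alone (all mixed strategies identical), it is unaffected by passing from $\widehat{\mathsf{G}}_{5}[k]$ to the larger game; this gives {\sf (P.3)}.

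For {\sf (P.1)} and {\sf (P.2)} a short additional step is needed, because Pareto-Optimality and Strong Pareto-Optimality quantify over \emph{all} mixed profiles of the ambient game, and $\widetilde{\mathsf{G}} \parallel \widehat{\mathsf{G}}_{5}[k]$ has strictly larger strategy sets than $\widehat{\mathsf{G}}_{5}[k]$, so optimality need not be inherited verbatim. I would argue instead from maximality of the utilities: by Proposition~\ref{very recent}, each such Nash equilibrium $\bm{\sigma}$ yields utility $1$ to each player, and since $\widetilde{\mathsf{G}} \parallel \widehat{\mathsf{G}}_{5}[k]$ is win-lose, no mixed profile $\widehat{\bm{\sigma}}$ can satisfy ${\mathsf{U}}_{i}(\widehat{\bm{\sigma}}) > {\mathsf{U}}_{i}(\bm{\sigma})$ for any player $i$; hence the defining conditions of both Pareto-Optimality and Strong Pareto-Optimality are vacuously met, giving {\sf (P.1)} and {\sf (P.2)}.

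Since the corollary is a direct consequence of two results proved above together with this vacuous-optimality observation, there is no substantive obstacle; the only point that must not be glossed over is that the Pareto-type properties depend on the ambient strategy sets and therefore have to be re-established inside $\widetilde{\mathsf{G}} \parallel \widehat{\mathsf{G}}_{5}[k]$ rather than merely quoted from Proposition~\ref{very recent}, which is exactly what the maximal-utility remark accomplishes.
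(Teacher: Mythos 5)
Your proposal is correct and follows essentially the same route as the paper, which derives the corollary immediately by combining Lemma~\ref{group3.1} (equality of the equilibrium sets when $\phi$ is unsatisfiable) with Proposition~\ref{very recent} (the count and the properties of the $k$ equilibria of ${\widehat{{\mathsf{G}}}}_{5}[k]$). Your additional remark -- that Pareto-Optimality and Strong Pareto-Optimality are relative to the ambient game and must be re-established in ${\widetilde{{\mathsf{G}}}} \parallel {\widehat{{\mathsf{G}}}}_{5}[k]$, which holds vacuously since each player already receives the maximal win-lose utility $1$ at these equilibria -- is precisely the step the paper leaves implicit, and it is sound.
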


\noindent
We continue to prove:

\begin{lemma}
\label{group3.2}
\textcolor{black}{Assume that $\phi$ is satisfiable.
Then,
{\sf (Q.1)}
$\# {\mathsf{\phi}}
  \cdot
  \left(  \# {\mathsf{\phi}}
            +
            4
  \right)$
of the additional
Nash equilibria for
${\widetilde{{\mathsf{G}}}}
  \parallel
  {\widehat{{\mathsf{G}}}}_{5}[k]$
are neither Pareto-Optimal
nor Strongly Pareto-Optimal,
{\sf (Q.2)}
$\# {\mathsf{\phi}} \cdot (\# {\mathsf{\phi}} + 1)$
of them
are non-symmetric,
and
{\sf (Q.3)}
$\# {\mathsf{\phi}}$
of them 
are symmetric.}
\end{lemma}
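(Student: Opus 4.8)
The plan is to determine $\mathcal{NE}(\widetilde{\mathsf{G}}\parallel\widehat{\mathsf{G}}_{5}[k])$ explicitly when $\phi$ is satisfiable and then classify the equilibria not inherited from $\widehat{\mathsf{G}}_{5}[k]$. Two ingredients are already available. By Propositions~\ref{if unsatisfied} and~\ref{final lemma} with Theorems~\ref{frombasictosymmetric} and~\ref{fromsymmetrictobasic2}, the game $\mathsf{G}=\mathsf{G}(\widehat{\mathsf{G}}_{1}[2],\phi)$ has the single gadget equilibrium $\widehat{\bm{\sigma}}$ and the $\#\phi$ literal equilibria $\bm{\sigma}_{\gamma}$, one per satisfying assignment $\gamma$; hence $\mathcal{NE}(\widetilde{\mathsf{G}})$ consists of the three base equilibria assembled from $\widehat{\bm{\sigma}}$ and $\langle {\mathsf{0}}^{\kappa},{\mathsf{0}}^{\kappa}\rangle$, together with the additional balanced mixtures: the $(\#\phi)^{2}$ mixtures $\bm{\sigma}_{\gamma}\ast\bm{\sigma}_{\gamma'}$, the $2\#\phi$ mixtures $\bm{\sigma}_{\gamma}\ast\langle {\mathsf{0}}^{\kappa},{\mathsf{0}}^{\kappa}\rangle$ and $\langle {\mathsf{0}}^{\kappa},{\mathsf{0}}^{\kappa}\rangle\ast\bm{\sigma}_{\gamma}$, and the $2\#\phi$ mixtures $\widehat{\bm{\sigma}}\ast\bm{\sigma}_{\gamma}$ and $\bm{\sigma}_{\gamma}\ast\widehat{\bm{\sigma}}$. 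By Claims~\ref{claim1} and~\ref{claim2}, every Nash equilibrium of $\widetilde{\mathsf{G}}\parallel\widehat{\mathsf{G}}_{5}[k]$ either belongs to $\mathcal{NE}(\widehat{\mathsf{G}}_{5}[k])$ or, via the projection $\bm{\varsigma}$, corresponds to a Nash equilibrium of $\widetilde{\mathsf{G}}$.

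I would then carry out a survival analysis. The only deviations in $\widetilde{\mathsf{G}}\parallel\widehat{\mathsf{G}}_{5}[k]$ not already present in $\widetilde{\mathsf{G}}$ are to strategies in $\mathsf{T}$, governed by Cases~$(\mathsf{C.1})$–$(\mathsf{C.2})$. A balanced mixture whose two components are supported inside $\mathsf{L}\cup\mathsf{L}^{\prime}$ survives: a deviation to any $t\in\mathsf{T}$ meets only Case~$(\mathsf{C.2})$, hence yields $0$, strictly below the equilibrium utility by the positive utility property (Lemmas~\ref{park kafe} and~\ref{little park kafe}). A balanced mixture whose support meets the gadget strategies $\widehat{\mathsf{\Sigma}}_{i}\cup\widehat{\mathsf{\Sigma}}_{i}^{\prime}$ — namely the three base equilibria and the $2\#\phi$ mixtures involving $\widehat{\bm{\sigma}}$ — is destroyed: by Case~$(\mathsf{C.1})$ a deviation to some $t\in\mathsf{T}$ fetches utility equal to the opponent's total weight on $\mathsf{\Sigma}(\widetilde{\mathsf{G}})\setminus(\mathsf{L}\cup\mathsf{L}^{\prime})$, and a short computation of that weight from the explicit balanced-mixture probabilities, together with Lemma~\ref{vittorio simplification for group iv} and the defining property of $\widehat{\mathsf{G}}_{1}[2]$ that some player has utility $<1$ in every gadget equilibrium, shows it exceeds the equilibrium utility; this is exactly the mechanism of the proof of Lemma~\ref{group3.1}, Case~1. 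Conversely I would show, using the diagonal structure of $\widehat{\mathsf{G}}_{5}[k]$ (Case~$(\mathsf{C.3})$) with Lemma~\ref{basic property of mixed nash equilibria}, that any equilibrium of $\widetilde{\mathsf{G}}\parallel\widehat{\mathsf{G}}_{5}[k]$ not inherited from $\widehat{\mathsf{G}}_{5}[k]$ and placing positive weight on $\mathsf{T}$ forces the opponent's $\mathsf{T}$-support to collapse to a single index, and then combining Cases~$(\mathsf{C.1})$–$(\mathsf{C.3})$ with the structure of its projection via $\bm{\varsigma}$ pins such equilibria down exactly; this step fixes the precise count of additional equilibria asserted in (Q.1)–(Q.3).

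Finally, the classification. Every additional equilibrium assigns both players expected utility strictly less than $1$ (for instance $2/n$ for the mixtures involving $\langle {\mathsf{0}}^{\kappa},{\mathsf{0}}^{\kappa}\rangle$ and, by Lemma~\ref{vittorio simplification for group iv}, below $2/n$ for $\bm{\sigma}_{\gamma}\ast\bm{\sigma}_{\gamma'}$, since $n\geq 5$), whereas the pure profile $\langle t_{1},t_{1}\rangle$ yields both players utility $1$ by Case~$(\mathsf{C.3})$. Thus $\langle t_{1},t_{1}\rangle$ makes both players strictly better off, so no additional equilibrium is Pareto-Optimal, and since in every additional equilibrium both players' mixed strategies differ from the pure strategy $t_{1}$, none is Strongly Pareto-Optimal either, giving (Q.1). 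For (Q.2) and (Q.3), Lemma~\ref{symmetry observation} says a balanced mixture is symmetric precisely when it equals $\bm{\tau}\ast\bm{\tau}$ for some $\bm{\tau}\in\mathcal{NE}(\mathsf{G})$; among the additional equilibria these are exactly the $\#\phi$ diagonal mixtures $\bm{\sigma}_{\gamma}\ast\bm{\sigma}_{\gamma}$, while the mixtures with $\langle {\mathsf{0}}^{\kappa},{\mathsf{0}}^{\kappa}\rangle$ cannot be symmetric (their two components would have to coincide, impossible for mixed strategies), so $\#\phi$ of them are symmetric and the rest non-symmetric. The step I expect to be the main obstacle is the converse half of the survival analysis — the exact treatment of equilibria of $\widetilde{\mathsf{G}}\parallel\widehat{\mathsf{G}}_{5}[k]$ carrying partial weight on $\mathsf{T}$ — since it demands a careful case analysis interleaving the reduction table (Cases (1)–(11)), the symmetrization blocks, and the diagonal block $(\mathsf{C.1})$–$(\mathsf{C.3})$; by contrast the survival/destruction dichotomy for the remaining equilibria is a mild adaptation of existing arguments, and the Pareto and symmetry bookkeeping is routine.
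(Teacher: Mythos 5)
Your proposal parts company with the paper exactly at the survival analysis, and that is where it fails to prove the statement as given. The paper's proof of Lemma~\ref{group3.2} keeps \emph{all} $\# \phi \cdot (\# \phi + 4)$ additional equilibria of ${\widetilde{{\mathsf{G}}}}$ --- the $(\# \phi)^{2}$ literal--literal mixtures, the $2\,\# \phi$ mixtures with $\left\langle {\mathsf{0}}^{\kappa}, {\mathsf{0}}^{\kappa} \right\rangle$, \emph{and} the $2\,\# \phi$ mixtures ${\widehat{{\bm{\sigma}}}} \ast {\bm{\sigma}}_{\gamma}$, ${\bm{\sigma}}_{\gamma} \ast {\widehat{{\bm{\sigma}}}}$ of a gadget and a literal equilibrium --- as Nash equilibria of ${\widetilde{{\mathsf{G}}}} \parallel {\widehat{{\mathsf{G}}}}_{5}[k]$, arguing that the only new deviations, to ${\mathsf{T}}$, earn utility $0$; this is precisely how it reaches the count $(\# \phi)^{2} + 2\,\# \phi + 2\,\# \phi = \# \phi \cdot (\# \phi + 4)$ in {\sf (Q.1)}, and then {\sf (Q.1)} follows because every inherited equilibrium of ${\widehat{{\mathsf{G}}}}_{5}[k]$ gives both players utility $1$ (Proposition~\ref{very recent}) while the additional equilibria give utilities $<1$. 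You instead declare the $2\,\# \phi$ gadget--literal mixtures \emph{destroyed}, on the grounds that a deviation to $t \in {\mathsf{T}}$ earns the opponent's weight on ${\mathsf{\Sigma}}({\widetilde{{\mathsf{G}}}}) \setminus ({\mathsf{L}} \cup {\mathsf{L}}^{\prime})$, which for ${\widehat{{\bm{\sigma}}}} \ast {\bm{\sigma}}_{\gamma}$ is $4/(n+4)$ against an equilibrium utility of $2/(n+4)$. Your worry is not baseless --- the paper's ``deviation earns $0$'' step invokes only Case {\sf (C.2)} and so tacitly presumes the opponent's support lies in ${\mathsf{L}} \cup {\mathsf{L}}^{\prime}$, which holds for the literal--literal and literal--zero mixtures but not for the gadget--literal ones --- but you cannot simultaneously destroy those $2\,\# \phi$ mixtures and claim your analysis ``fixes the precise count asserted in {\sf (Q.1)}--{\sf (Q.3)}'': under your survival analysis only $\# \phi \cdot (\# \phi + 2)$ additional equilibria remain, so {\sf (Q.1)} with its count $\# \phi \cdot (\# \phi + 4)$ does not follow. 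As written the proposal is internally inconsistent; you must either show that the ${\mathsf{T}}$-deviation is in fact non-profitable for ${\widehat{{\bm{\sigma}}}} \ast {\bm{\sigma}}_{\gamma}$ and ${\bm{\sigma}}_{\gamma} \ast {\widehat{{\bm{\sigma}}}}$, as the paper asserts, or accept that you are proving a statement with a different count in {\sf (Q.1)}.

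The remaining ingredients do align with the paper's proof: survival of the mixtures supported in ${\mathsf{L}} \cup {\mathsf{L}}^{\prime}$ via Case {\sf (C.2)} together with Lemmas~\ref{park kafe} and~\ref{little park kafe}; the failure of Pareto-Optimality and Strong Pareto-Optimality for every surviving additional equilibrium, obtained by comparing its utilities (all $<1$) with an inherited equilibrium of ${\widehat{{\mathsf{G}}}}_{5}[k]$ where both players get $1$; and the symmetry bookkeeping via Lemma~\ref{symmetry observation}, yielding the $\# \phi$ symmetric diagonal mixtures of {\sf (Q.3)} and the $\# \phi \cdot (\# \phi + 1)$ non-symmetric ones of {\sf (Q.2)}. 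Note also that the exhaustiveness step you single out as the main obstacle (equilibria placing partial weight on ${\mathsf{T}}$) is not carried out in the paper's proof of Lemma~\ref{group3.2} either --- it relies only on Claims~\ref{claim1} and~\ref{claim2} and on verifying the listed profiles --- so postponing that analysis is not, by itself, a departure; the decisive issue is the treatment of the gadget--literal mixtures above.
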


\begin{proof}
Since $\phi$ is satisfiable, 
it follows,
by Proposition~\ref{final lemma}, 
that 
\textcolor{black}{for each satisfying assignment ${\mathsf{\gamma}}$
of ${\mathsf{\phi}}$,}
there is
a Nash equilibrium
${\bm{\sigma}}$
for ${\mathsf{G}}$
such that
for each player $i \in [2]$,
${\mathsf{U}}_{i}({\bm{\sigma}})
  =
  \frac{\textstyle 2}
          {\textstyle n}
  < 1$.
\textcolor{black}{Here are the additional Nash equilibria for ${\widetilde{{\mathsf{G}}}}$
and their properties:}  
\begin{itemize}

\item
\underline{\textcolor{black}{The $\left( {\mathsf{\phi}} \right)^{2}$
balanced mixtures
of literal equilibria for ${\mathsf{G}}$:}} 
\textcolor{black}{By Lemma~\ref{symmetry observation},
${\bm{\sigma}}^{4}$
is symmetric
if and only if
${\bm{\sigma}} = {\bm{\tau}}$,
while
${\bm{\sigma}}^{5}$
and
${\bm{\sigma}}^{6}$
are non-symmetric.
Furthermore:}
\begin{quote}           
\textcolor{black}{For each player $i \in [2]$,
${\mathsf{U}}_{i}({\bm{\sigma}}^{4})
  =
  \frac{\textstyle 1}
          {\textstyle n}
  < 1$.}
\end{quote}
\begin{quote}  
\textcolor{black}{For each
${\bm{\sigma}}
  \in
  \left\{ {\bm{\sigma}}^{5},
             {\bm{\sigma}}^{6}
  \right\}$:    
$\sum_{i \in [2]}
   {\mathsf{U}}_{i}({\bm{\sigma}})
  =
  \frac{\textstyle 2}
          {\textstyle n}
  < 2$.}
\end{quote}

\item
\underline{\textcolor{black}{The $2 \cdot \# {\mathsf{\phi}}$
          balanced mixtures
          of a literal equilibrium
          and a gadget equilibrium
          for ${\mathsf{G}}$:}}
\textcolor{black}{By Lemma~\ref{symmetry observation},
it follows that
${\bm{\sigma}}^{7}$
and
${\bm{\sigma}}^{8}$
are non-symmetric.}
\textcolor{black}{Furthermore,
by Lemma~\ref{vittorio simplification for group iv}:}
\begin{quote}
\textcolor{black}{For each
${\bm{\sigma}}
  \in
  \left\{ {\bm{\sigma}}^{5},
             {\bm{\sigma}}^{6}
  \right\}$:    
For each player $i \in [2]$:
${\mathsf{U}}_{i}({\bm{\sigma}}^{7})
  <
  \max \{ {\mathsf{U}}_{i}({\bm{\sigma}}),
               {\mathsf{U}}_{i}({\widehat{{\bm{\sigma}}}}) 
           \}
   <
   1$.}
\end{quote}

\end{itemize}
\noindent
\textcolor{black}{Hence,
there are $\# {\mathsf{\phi}}$ additional Nash equilibria
\textcolor{black}{for ${\widetilde{{\mathsf{G}}}}$}
which are symmetric
and 
\textcolor{black}{$\# {\mathsf{\phi}}
 \cdot 
 \left( \# {\mathsf{\phi}}
         + 2 
 \right)
 -
 \# {\mathsf{\phi}}
 =
 \# {\mathsf{\phi}}
 \cdot 
 \left( \# {\mathsf{\phi}} + 1
 \right)$}
additional Nash equilibria 
\textcolor{black}{for ${\widetilde{{\mathsf{G}}}}$ 
which are non-symmetric.}}

\textcolor{black}{Consider now any
\textcolor{black}{additional Nash equilibrium
${\bm{\sigma}}$
for ${\widetilde{{\mathsf{G}}}}$.}}  
Since
$\widetilde{{\mathsf{G}}}
  \parallel
  {\widehat{{\mathsf{G}}}}_{5}[k]$
is constructed 
from ${\widetilde{\mathsf{G}}}$
by adding ${\mathsf{T}}$
to the strategy set of each player,
it suffices,
in order to prove that
\textcolor{black}{${\bm{\sigma}}$
is a Nash equilibrium for
${\widetilde{{\mathsf{G}}}}
 \parallel
 {\widehat{{\mathsf{G}}}}_{5}[k]$,}
to check that no player can improve her utility
\textcolor{black}{in ${\bm{\sigma}}$}
by switching to a strategy in ${\mathsf{T}}$. 
By Case {\sf (C.2}) 
in the utility functions, 
it holds that
${\mathsf{U}}\left( \langle s_{1},
                                            s_{2}
                                \rangle
                       \right)
  =
  \langle 0, 0
  \rangle$
for $s_{1} \in {\mathsf{L}}
                       \cup
                       {\mathsf{L}}^{\prime}$ 
(resp., $s_{1} \in {\mathsf{T}}$) 
and
$s_{2} \in {\mathsf{T}}$ 
(resp., $s_{2} \in {\mathsf{L}}
                             \cup
                             {\mathsf{L}}^{\prime}$). 
Hence,
for each player $i \in [2]$,
\textcolor{black}{${\mathsf{U}}_{i}\left( {\bm{\sigma}}_{-i}
                                       \diamond
                                       t
                             \right)
  =0$} 
for each strategy
$t \in {\mathsf T}$. 
Thus,
\textcolor{black}{${\bm{\sigma}}
  \in
  {\mathcal{NE}}( {\widetilde{{\mathsf{G}}}}
                                    \parallel
                                    {\widehat{{\mathsf{G}}}}_{5}[k]
                           )$}
with
\textcolor{black}{${\mathsf{U}}_{i}({\bm{\sigma}})
  <1$}
for each player $i \in [2]$. 
\textcolor{black}{Hence,
${\widetilde{{\mathsf{G}}}}
  \parallel
  {\widehat{{\mathsf{G}}}}_{5}[k]$
has
$\# {\mathsf{\phi}}$ symmetric Nash equilibria
and
$\# {\mathsf{\phi}}
  \cdot
  \left( \# {\mathsf{\phi}} + 1
  \right)$
non-symmetric Nash equilibria.
Properties {\sf (Q.2)}
and {\sf (Q.3)}
follow.
}

By Proposition~\ref{very recent},
each player $i \in [2]$ has utility $1$
in any Nash equilibrium
for ${\widehat{{\mathsf{G}}}}_{5}[k]$.
Since
\textcolor{black}{${\mathcal{NE}}( {\widehat{{\mathsf{G}}}}_{5}[k]
                                                                  )
  \subseteq
  {\mathcal{NE}}(  {\widetilde{{\mathsf{G}}}}
                                     \parallel
                                     {\widehat{{\mathsf{G}}}}_{5}[k]
                          )$},
this implies that
${\bm{\sigma}}$
is neither Pareto-Otimal
nor Strongly Pareto-Optimal.
\textcolor{black}{So
there are
$\left( \# {\mathsf{\phi}}
  \right)^{2}
  +
  2 \cdot \# {\mathsf{\phi}}
  +
  2 \cdot \# {\mathsf{\phi}}
  \# {\mathsf{\phi}}
  \cdot
  \left( {\mathsf{\phi}} + 4
  \right)$
additional Nash equilibria
which are neither
Pareto-Otimal
nor Strongly Pareto-Optimal.
Property
${\mathsf{(Q.1)}}$ 
follows.}
\end{proof}

\noindent
\textcolor{black}{Hence,
using Lemma~\ref{group3.2},
we derive ${\mathcal{NP}}$-hardness
from the following table:}

\begin{center}
\begin{small}
\begin{tabular}{|l|l|l|l|}
\hline
\hline
Decision Problem is ${\mathcal{NP}}$-hard due to:            & \multicolumn{3}{l|} {Properties disentangling the satisfiability of ${\mathsf{\phi}}$:}                                                       \\
\cline{2-4}
                                                                                             & Unsat.:                                                       & \multicolumn{2}{l|}{Sat.:}                                                                              \\
\hline
\hline 
{\sf $\exists$ $k+1$ NASH}                                                   & {\sf (P.0)}                                               & ---  
                                                                                                                                                                  & ---                                                                                                                    \\                                                                                                            
\hline
{\sf $\exists$ $\neg$ PARETO-OPTIMAL NASH}                   & {\sf (P.1)}                                               & \textcolor{black}{{\sf (Q.1)}}  
                                                                                                                                                                 & $\# {\mathsf{\phi}}
                                                                                                                                                                      \cdot
                                                                                                                                                                      \left( \# {\mathsf{\phi}} + 4
                                                                                                                                                                      \right)$                                                                                                          \\
\hline
{\sf $\exists$ $\neg$ STRONGLY PARETO-OPTIMAL NASH} & {\sf (P.2)}                                              & \textcolor{black}{{\sf (Q.1)}}  
                                                                                                                                                                &  $\# {\mathsf{\phi}}
                                                                                                                                                                      \cdot
                                                                                                                                                                      \left( \# {\mathsf{\phi}} + 4
                                                                                                                                                                      \right)$                                                                                                         \\
\hline
{\sf $\exists$ $\neg$ SYMMETRIC NASH}                             &  {\sf (P.3)}                                             &  \textcolor{black}{{\sf (Q.2)}} 
                                                                                                                                                                &  $\# {\mathsf{\phi}}
                                                                                                                                                                      \cdot
                                                                                                                                                                      \left( \# {\mathsf{\phi}} + 1
                                                                                                                                                                      \right)$                                                                                                         \\ 
\hline
\hline
\end{tabular}
\end{small}
\end{center}
\noindent
\textcolor{black}{Both formulas in the rightmost column
can be solved for $\# {\mathsf{\phi}}$;
by the $\# {\mathcal{P}}$-hardness
of computing $\# {\mathsf{\phi}}$~\cite{V79},
this yields the $\# {\mathcal{P}}$-hardness
of the three counting problems.
Furthermore,
the formula $\# {\mathsf{\phi}} \cdot (\# {\mathsf{\phi}} + 1)$
preserves the parity 
$\oplus {\mathsf{\phi}}$;
by the $\oplus {\mathcal{P}}$-hardness
of computing $\oplus {\mathsf{\phi}}$~\cite{PZ83},
this yields the $\oplus {\mathcal{P}}$-hardness
of the parity problems
{\sf $\oplus$ $\neg$ PARETO-OPTIMAL NASH}
and
{\sf $\oplus$ $\neg$ STRONGLY PARETO-OPTIMAL NASH}.    
}

\noindent
\textcolor{black}{By 
Corollary~\ref{xouany}
and Lemma~\ref{group3.2}
(Condition {\sf (Q.3)}),
$\mbox{{\sf $\#$ SYMMETRIC NASH}}
 = 
 k + \# {\mathsf{\phi}}$.
This formula can be solved for $\# {\mathsf{\phi}}$;
by the $\# {\mathcal{P}}$-hardness
of computing $\# {\mathsf{\phi}}$~\cite{V79},
this yields the $\# {\mathcal{P}}$-hardness
of 
{\sf $\#$ SYMMETRIC NASH}.
Furthermore,
this formula allows computing
$\oplus {\mathsf{\phi}}$
from
{\sf $\oplus$ SYMMETRIC NASH};
by the $\oplus {\mathcal{P}}$-hardness
of computing $\oplus {\mathsf{\phi}}$~\cite{PZ83},
this yields the $\oplus {\mathcal{P}}$-hardness
of
{\sf $\oplus$ SYMMETRIC NASH}.}
\end{proof}

\noindent
We now show:

\begin{theorem}
\label{maintheorem symmetric}
Fix a win-lose bimatrix game ${\widehat{{\mathsf{G}}}}$
with the positive utility property.
Then,
restricted to symmetric win-lose bimatrix games,
{\sf NASH-EQUIVALENCE} \textcolor{black}{$( {\mathsf{GHR}}(\widehat{\mathsf{G}})
                                            )$}
is co-${\mathcal{NP}}$-complete.
\end{theorem}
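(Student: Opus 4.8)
The plan is to chain together the co-$\mathcal{NP}$-hardness of {\sf NASH-EQUIVALENCE}$({\widehat{{\mathsf{G}}}})$ for win-lose games (established at the end of Section~\ref{winlose reduction} via Propositions~\ref{if unsatisfied} and~\ref{final lemma}) with the characterization of Nash equilibria for the win-lose ${\mathsf{GHR}}$-symmetrization (Theorems~\ref{frombasictosymmetric} and~\ref{fromsymmetrictobasic2}). First I would observe membership in co-$\mathcal{NP}$: for a given symmetric win-lose bimatrix game ${\mathsf{H}}$, a polynomial-length certificate of {\em non}-equivalence with ${\mathsf{GHR}}({\widehat{{\mathsf{G}}}})$ is a mixed profile that is a Nash equilibrium for exactly one of ${\mathsf{H}}$ and ${\mathsf{GHR}}({\widehat{{\mathsf{G}}}})$, verifiable in polynomial time since both games have size polynomial in the input. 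So the work is the co-$\mathcal{NP}$-hardness.

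For hardness, I would reduce from {\sf 3SAT}. Given a formula ${\mathsf{\phi}}$, run the win-lose reduction with parameter ${\widehat{{\mathsf{G}}}}$ to get ${\mathsf{G}} = {\mathsf{G}}({\widehat{{\mathsf{G}}}},{\mathsf{\phi}})$, then apply the win-lose ${\mathsf{GHR}}$-symmetrization to get ${\widetilde{{\mathsf{G}}}} = {\mathsf{GHR}}({\mathsf{G}})$, a symmetric win-lose bimatrix game of polynomial size. The claim is that ${\widetilde{{\mathsf{G}}}}$ and ${\mathsf{GHR}}({\widehat{{\mathsf{G}}}})$ are Nash-equivalent if and only if ${\mathsf{\phi}}$ is unsatisfiable; this gives a polynomial-time reduction from ${\mathsf{3SAT}}$ to the complement of {\sf NASH-EQUIVALENCE}$({\mathsf{GHR}}({\widehat{{\mathsf{G}}}}))$ restricted to symmetric win-lose bimatrix games, hence co-$\mathcal{NP}$-hardness. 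To prove the equivalence, I would argue both directions using the characterization theorems. If ${\mathsf{\phi}}$ is unsatisfiable, Proposition~\ref{if unsatisfied} gives ${\mathcal{NE}}({\mathsf{G}}) = {\mathcal{NE}}({\widehat{{\mathsf{G}}}})$; since the balanced-mixture map is a bijection between (pairs of, or single-with-null) Nash equilibria of the source game and Nash equilibria of its ${\mathsf{GHR}}$-symmetrization (Theorems~\ref{frombasictosymmetric} and~\ref{fromsymmetrictobasic2}), and this map depends only on the Nash equilibria of the source game, equal Nash equilibrium sets for ${\mathsf{G}}$ and ${\widehat{{\mathsf{G}}}}$ force equal Nash equilibrium sets for ${\mathsf{GHR}}({\mathsf{G}})$ and ${\mathsf{GHR}}({\widehat{{\mathsf{G}}}})$. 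Conversely, if ${\mathsf{\phi}}$ is satisfiable, Proposition~\ref{final lemma} produces a literal equilibrium ${\bm{\sigma}} \in {\mathcal{NE}}({\mathsf{G}}) \setminus {\mathcal{NE}}({\widehat{{\mathsf{G}}}})$; I would then exhibit a balanced mixture involving ${\bm{\sigma}}$ — e.g. ${\bm{\sigma}} \ast \langle {\mathsf{0}}^{n}, {\mathsf{0}}^{n} \rangle$ — which by Theorem~\ref{frombasictosymmetric} lies in ${\mathcal{NE}}({\widetilde{{\mathsf{G}}}})$, and argue it is not a Nash equilibrium for ${\mathsf{GHR}}({\widehat{{\mathsf{G}}}})$ because its support involves strategies (literals of ${\mathsf{\phi}}$) that are absent from the strategy set of ${\mathsf{GHR}}({\widehat{{\mathsf{G}}}})$, or, more carefully, because by Theorem~\ref{fromsymmetrictobasic2} every Nash equilibrium of ${\mathsf{GHR}}({\widehat{{\mathsf{G}}}})$ is a balanced mixture built from ${\mathcal{NE}}({\widehat{{\mathsf{G}}}})$ alone, which cannot yield ${\bm{\sigma}} \ast \langle {\mathsf{0}}^{n}, {\mathsf{0}}^{n} \rangle$ since ${\bm{\sigma}} \notin {\mathcal{NE}}({\widehat{{\mathsf{G}}}})$ and the balanced-mixture map is injective on supports.

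The main obstacle I anticipate is the bookkeeping needed to make the ``equal source Nash equilibrium sets imply equal symmetrized Nash equilibrium sets'' direction airtight: one must note that the balanced-mixture map ${\bm{\rho}} \ast {\bm{\tau}}$ is, as a function of the {\em supports}, determined entirely by the pair $({\bm{\rho}},{\bm{\tau}})$ and the matrices ${\mathsf{R}},{\mathsf{C}}$ — but the probabilities also involve $\overline{{\mathsf{U}}}_1,\overline{{\mathsf{U}}}_2$ evaluated at ${\bm{\rho}},{\bm{\tau}}$ in $\langle {\mathsf{R}},{\mathsf{C}}\rangle$ — so one should check that ${\mathsf{G}}$ and ${\widehat{{\mathsf{G}}}}$, when they have the same Nash equilibrium set, also agree on these conditional utility values for those equilibria. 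This holds because ${\widehat{{\mathsf{G}}}}$ is a subgame of ${\mathsf{G}}$ on the gadget strategies and Case~{\sf (7)} of the reduction reproduces $\widehat{{\mathsf{U}}}$ exactly, so for any ${\bm{\rho}},{\bm{\tau}} \in {\mathcal{NE}}({\widehat{{\mathsf{G}}}})$ the utilities coincide. With that observation the two instances ${\widetilde{{\mathsf{G}}}}$ and ${\mathsf{GHR}}({\widehat{{\mathsf{G}}}})$ genuinely have the same Nash equilibrium set when ${\mathsf{\phi}}$ is unsatisfiable, and the reduction goes through; combined with the positive-utility property being preserved by ${\mathsf{GHR}}$ (Lemma~\ref{little park kafe}), everything stays within the promised class. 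I would close by recording that this improves \cite[Theorem~1]{BM11} from general bimatrix games to symmetric win-lose bimatrix games, as stated in Section~\ref{related work and significance}.
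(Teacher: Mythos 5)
Your proposal is correct and follows essentially the same route as the paper: reduce {\sf 3SAT} to the complement by composing the win-lose reduction with the win-lose ${\mathsf{GHR}}$-symmetrization, and show via Propositions~\ref{if unsatisfied} and~\ref{final lemma} together with Theorems~\ref{frombasictosymmetric} and~\ref{fromsymmetrictobasic2} that ${\mathsf{GHR}}({\mathsf{G}}({\widehat{{\mathsf{G}}}},{\mathsf{\phi}}))$ and ${\mathsf{GHR}}({\widehat{{\mathsf{G}}}})$ are Nash-equivalent exactly when ${\mathsf{\phi}}$ is unsatisfiable, which is the content of the paper's Lemma~\ref{zousou}. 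The only difference is presentational: the paper argues both directions of that equivalence through the presence or absence of literal equilibria, whereas you argue the unsatisfiable direction directly from the bijective balanced-mixture characterization plus the (correct) observation that expected utilities of gadget equilibria agree in ${\mathsf{G}}$ and ${\widehat{{\mathsf{G}}}}$ via Case~{\sf (7)}.
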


\begin{proof}
Here is a polynomial time algorithm
to decide
the non-satisfiability
of an input formula ${\mathsf{\phi}}$,
with access to an oracle for
{\sf NASH-EQUIVALENCE} \textcolor{black}{$( {\mathsf{GHR}}(\widehat{\mathsf{G}})
                                           )$}:
\begin{quote}
\textcolor{black}{Construct the symmetric win-lose bimatrix games
${\mathsf{GHR}}({\widehat{{\mathsf{G}}}})$
and} 
\textcolor{forestgreen}{${\widetilde{{\mathsf{G}}}}
  =
  {\mathsf{GHR}}( {\mathsf{G}} ( {\widehat{{\mathsf{G}}}}, {\mathsf{\phi}}
                                                     )
                           )$}.
Query the oracle
for 
{\sf NASH-EQUIVALENCE} \textcolor{black}{$( {\mathsf{GHR}}(\widehat{\mathsf{G}})
                                                                                    )$}
on input ${\widetilde{{\mathsf{G}}}}$
and return the answer of the oracle.
\end{quote}
\noindent
To establish the correctness of the algorithm,
we prove:

\begin{lemma}
\label{zousou}
\textcolor{black}{${\mathcal{NE}}( {\mathsf{G}}\left( {\widehat{{\mathsf{G}}}}, {\mathsf{\phi}}
                                                         \right)
                         )                                
  =
  {\mathcal{NE}}( {\widehat{{\mathsf{G}}}}
                          )$}
if and only if
\textcolor{black}{${\mathcal{NE}}( {\mathsf{GHR}} ( {\mathsf{G}}( {\widehat{{\mathsf{G}}}}, {\mathsf{\phi}}
                                                                                )
                                                        )                              
                         )                                
  =
  {\mathcal{NE}}( {\mathsf{GHR}} ( {\widehat{{\mathsf{G}}}}
                                                        )
                          )$}.                          
\end{lemma}

\begin{proof}
Assume first that
\textcolor{black}{${\mathcal{NE}}( {\mathsf{GHR}} ( {\mathsf{G}}( {\widehat{{\mathsf{G}}}}, {\mathsf{\phi}}
                                                                                                                       )
                                                               )                              
                          )                                
  =
  {\mathcal{NE}}( {\mathsf{GHR}} ( {\widehat{{\mathsf{G}}}}
                                                        )
                          )$}.     
Since every Nash equilibrium for
${\widehat{{\mathsf{G}}}}$
is a gadget equilibrium,
the win-lose ${\mathsf{GHR}}$-symmetrization
implies that
every Nash equilibrium for
\textcolor{black}{${\mathsf{GHR}} ( {\widehat{{\mathsf{G}}}}
                             )$}
is also a gadget equilibrium.
Hence,
the Nash-equivalence of
\textcolor{black}{${\mathsf{GHR}} ( {\widehat{{\mathsf{G}}}}
                             )$}                             
and
\textcolor{black}{${\mathsf{GHR}} ( {\mathsf{G}}( {\widehat{{\mathsf{G}}}}, {\mathsf{\phi}}
                                                                     )
                             )$}
implies that
every Nash equilibrium for
\textcolor{black}{${\mathsf{GHR}} ({\mathsf{G}}( {\widehat{{\mathsf{G}}}}, {\mathsf{\phi}}
                                                  )
                             )$}                              
is a gadget equilibrium.
Hence,
the win-lose ${\mathsf{GHR}}$-symmetrization implies that
every Nash equilibrium for
\textcolor{black}{${\mathsf{G}}( {\widehat{{\mathsf{G}}}}, {\mathsf{\phi}}
                       )$}
is a gadget equilibrium.
By Propositions~\ref{if unsatisfied}
and~\ref{final lemma},
it follows that                       
${\mathsf{\phi}}$
is unsatisfiable
(since otherwise
\textcolor{black}{${\mathsf{G}}( {\widehat{{\mathsf{G}}}}, {\mathsf{\phi}}
                       )$}
has a literal equilibrium).
Hence,
Proposition~\ref{if unsatisfied}
implies that
\textcolor{black}{${\mathcal{NE}}( {\mathsf{G}}( {\widehat{{\mathsf{G}}}}, {\mathsf{\phi}}
                                                   )
                         )                                
  =
  {\mathcal{NE}}( {\widehat{{\mathsf{G}}}}
                          )$}.

Assume now that
\textcolor{black}{${\mathcal{NE}}( {\mathsf{GHR}} ( {\mathsf{G}}( {\widehat{{\mathsf{G}}}}, {\mathsf{\phi}}
                                                                                )
                                                               )                              
                         )                                
  \neq
  {\mathcal{NE}}( {\mathsf{GHR}} ( {\widehat{{\mathsf{G}}}}
                                                        )
                          )$}.   
Since
\textcolor{black}{${\mathcal{NE}}( {\widehat{{\mathsf{G}}}}
                          )
  \subseteq
  {\mathcal{NE}}( {\mathsf{G}}( {\widehat{{\mathsf{G}}}}, {\mathsf{\phi}}
                                                  )
                          )$,}
every gadget equilibrium
is a Nash equilibrium
for
\textcolor{black}{${\mathsf{G}}( {\widehat{{\mathsf{G}}}}, {\mathsf{\phi}}
                       )$}.                                                           
Hence,
Proposition~\ref{frombasictosymmetric}
implies that
\textcolor{black}{${\mathcal{NE}}( {\mathsf{GHR}} ( {\widehat{{\mathsf{G}}}}
                                                        )  
                          )
  \subseteq
  {\mathcal{NE}}( {\mathsf{GHR}} ( {\mathsf{G}} ( {\widehat{{\mathsf{G}}}}, {\mathsf{\phi}}
                                                                                 )
                                                        )
                          )$}.
Hence,
the assumption implies that
\textcolor{black}{${\mathsf{GHR}} ( {\mathsf{G}} ( {\widehat{{\mathsf{G}}}}, {\mathsf{\phi}}
                                                      )
                            )$}
has a literal equilibrium.
By the ${\mathsf{GHR}}$-symmetrization,
this implies that
\textcolor{black}{${\mathsf{G}} ( {\widehat{{\mathsf{G}}}}, {\mathsf{\phi}}
                        )$}
has a literal equilibrium.
By Propositions~\ref{if unsatisfied}
and~\ref{final lemma},
it follows that                       
${\mathsf{\phi}}$
is satisfiable
(since otherwise
\textcolor{black}{${\mathsf{G}}( {\widehat{{\mathsf{G}}}}, {\mathsf{\phi}}
                       )$}
has no literal equilibrium).                        
Hence,
Propositions~\ref{final lemma}
implies that
\textcolor{black}{
${\mathcal{NE}}( {\mathsf{G}}( {\widehat{{\mathsf{G}}}}, {\mathsf{\phi}}
                                                  )
                         )                                
  \neq
  {\mathcal{NE}}( {\widehat{{\mathsf{G}}}}
                          )$,}  
and this completes the proof.   
\end{proof}
\noindent
By Propositions~\ref{if unsatisfied}
and~\ref{final lemma},
${\mathsf{\phi}}$
is unsatisfiable
if and only if
\textcolor{black}{
${\mathcal{NE}}( {\mathsf{G}}( {\widehat{{\mathsf{G}}}}, {\mathsf{\phi}}
                                                  )
                         )                                
  =
  {\mathcal{NE}}( {\widehat{{\mathsf{G}}}}
                          )$}.
Hence,
by Lemma~\ref{zousou},
it follows that
${\mathsf{\phi}}$
is unsatisfiable
if and only if      
\textcolor{black}{                    
${\mathcal{NE}}( {\mathsf{GHR}} ( {\mathsf{G}}( {\widehat{{\mathsf{G}}}}, {\mathsf{\phi}}
                                                                                )
                                                       )                              
                         )                                
  =
  {\mathcal{NE}}( {\mathsf{GHR}} ( {\widehat{{\mathsf{G}}}}
                                                        )
                          )$}.   
Hence,
{\sf NASH-EQUIVALENCE}\textcolor{black}{$( {\mathsf{GHR}}(\widehat{\mathsf{G}})
                                             )$}
is co-${\mathcal{NP}}$-hard.
\end{proof}

\subsection{Win-Lose Bimatrix Games}
\label{winlose bimatrix games}

\noindent
We show:

\begin{theorem}
\label{second last minute theorem}
Restricted to win-lose bimatrix games,
{\sf $\exists$ SYMMETRIC NASH}
is ${\mathcal{NP}}$-complete.
\end{theorem}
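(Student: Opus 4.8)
The plan is to instantiate the win-lose reduction from Section~\ref{winlose reduction} with a single fixed win-lose gadget game all of whose Nash equilibria are non-symmetric, so that the existence of a symmetric Nash equilibrium in the constructed game is exactly equivalent to the satisfiability of the input formula. Since {\sf $\exists$ SYMMETRIC NASH} is trivial for symmetric games, the target of the reduction must be the class of (general) win-lose bimatrix games, so only \emph{Step 1} and \emph{Step 2} of the three-steps plan are needed; the win-lose ${\mathsf{GHR}}$-symmetrization of \emph{Step 3} is not invoked here.

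For membership in ${\mathcal{NP}}$, I would appeal to the remark in Section~\ref{framework decision problems}: for general bimatrix games each of the listed decision problems, {\sf $\exists$ SYMMETRIC NASH} included, lies in ${\mathcal{NP}}$, since a polynomial-length description of the supports of a mixed profile can be verified in polynomial time to be a Nash equilibrium with the required property. Win-lose bimatrix games form a subclass, so {\sf $\exists$ SYMMETRIC NASH} restricted to win-lose bimatrix games is in ${\mathcal{NP}}$.

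For ${\mathcal{NP}}$-hardness, fix the gadget game ${\widehat{{\mathsf{G}}}} := {\widehat{{\mathsf{G}}}}_{4}$ from Section~\ref{nonsymmetric game}: it is a win-lose bimatrix game with the positive utility property (noted before Proposition~\ref{gadget5}) and, by Proposition~\ref{gadget5}, it has no symmetric Nash equilibrium. Given a {\sf 3SAT} formula ${\mathsf{\phi}}$ --- which we may assume without loss of generality to satisfy $n = |{\sf Var}({\sf\phi})| \geq 4$ by padding with dummy variables --- apply the win-lose reduction to obtain the win-lose bimatrix game ${\mathsf{G}} := {\mathsf{G}}({\widehat{{\mathsf{G}}}}_{4}, {\mathsf{\phi}})$, which is polynomial-time constructible and has two players. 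If ${\mathsf{\phi}}$ is unsatisfiable, Proposition~\ref{if unsatisfied} gives ${\mathcal{NE}}({\mathsf{G}}) = {\mathcal{NE}}({\widehat{{\mathsf{G}}}}_{4})$, so every Nash equilibrium of ${\mathsf{G}}$ is a gadget equilibrium and hence, by Proposition~\ref{gadget5}, non-symmetric; thus ${\mathsf{G}}$ has no symmetric Nash equilibrium. If ${\mathsf{\phi}}$ is satisfiable, then for any satisfying assignment ${\mathsf{\gamma}}$ Proposition~\ref{final lemma} provides a Nash equilibrium ${\bm{\sigma}} = {\bm{\sigma}}({\mathsf{\gamma}}) \in {\mathcal{NE}}({\mathsf{G}}) \setminus {\mathcal{NE}}({\widehat{{\mathsf{G}}}}_{4})$ that is \emph{symmetric}. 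Hence ${\mathsf{G}}$ has a symmetric Nash equilibrium if and only if ${\mathsf{\phi}}$ is satisfiable, which exhibits a polynomial-time many-one reduction from {\sf 3SAT} to {\sf $\exists$ SYMMETRIC NASH} over win-lose bimatrix games, and ${\mathcal{NP}}$-hardness follows from the ${\mathcal{NP}}$-completeness of {\sf 3SAT}. Combining with membership yields ${\mathcal{NP}}$-completeness.

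The proof is essentially bookkeeping: all the substance is already contained in Propositions~\ref{if unsatisfied} and~\ref{final lemma} together with the analysis of the gadget ${\widehat{{\mathsf{G}}}}_{4}$. The only points requiring care are (i) confirming that ${\widehat{{\mathsf{G}}}}_{4}$ has the positive utility property, so the hypotheses of the win-lose reduction are met; (ii) observing that the literal equilibrium produced by Proposition~\ref{final lemma} is both genuinely new (it lies in ${\mathcal{NE}}({\mathsf{G}}) \setminus {\mathcal{NE}}({\widehat{{\mathsf{G}}}}_{4})$) and symmetric, both of which are already asserted in the statement of Proposition~\ref{final lemma}; and (iii) the harmless assumption $n \geq 4$ demanded by the reduction. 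I do not anticipate any real obstacle beyond assembling these ingredients in the right order.
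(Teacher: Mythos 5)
Your proposal is correct and follows essentially the same route as the paper: fix the gadget ${\widehat{{\mathsf{G}}}}_{4}$, apply the win-lose reduction, and use Proposition~\ref{if unsatisfied} together with Proposition~\ref{gadget5} for the unsatisfiable case and Proposition~\ref{final lemma} (symmetry of the literal equilibrium) for the satisfiable case. The additional bookkeeping you include (${\mathcal{NP}}$ membership, positive utility property of ${\widehat{{\mathsf{G}}}}_{4}$, the $n \geq 4$ assumption) is consistent with what the paper takes for granted.
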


\begin{proof}
\textcolor{black}{Fix 
${\widehat{{\mathsf{G}}}}
   :=
   {\widehat{\mathsf{G}}}_{4}$ 
(Section~\ref{nonsymmetric game}).}
Assume first that
${\mathsf{\phi}}$ is unsatisfiable.
Then,
by Proposition~\ref{if unsatisfied},
\textcolor{black}{
${\mathcal{NE}} ( {\mathsf{G}}( {\widehat{\mathsf{G}}}_{4},
                                                                   {\mathsf{\phi}}
                                                   )
                           )
   =
   {\mathcal{NE}} ( {\widehat{\mathsf{G}}}_{4}
                             )$}.
Hence,
by Proposition~\ref{gadget5},
\textcolor{black}{${\mathsf{G}}( {\widehat{\mathsf{G}}}_{4},
                                 {\mathsf{\phi}}
                       )$}
has no symmetric Nash equilibrium.
Assume now that
${\mathsf{\phi}}$ is satisfiable.
Then,
by Proposition~\ref{final lemma}
(Condition {\sf (4)}),
\textcolor{black}{${\mathsf{G}}( {\widehat{\mathsf{G}}}_{4},
                                 {\mathsf{\phi}}
                       )$}
has a symmetric Nash equilibrium. 
So
\textcolor{black}{${\mathsf{G}}( {\widehat{\mathsf{G}}}_{4},
                                 {\mathsf{\phi}}
                       )$}
has a symmetric Nash equilibrium
if and only if
$\phi$ is satisfiable,
\textcolor{black}{and the ${\mathcal{NP}}$-hardness 
of {\sf $\exists$ SYMMETRIC NASH}
follows.}
\end{proof}

\noindent
\textcolor{black}{We remark that
the $\# {\mathcal{P}}$-hardness
of {\sf $\#$ SYMMETRIC NASH}
for win-lose bimatrix games
is already implied by the $\# {\mathcal{P}}$-hardness
of {\sf $\#$ SYMMETRIC NASH}
for symmetric win-lose bimatrix games
(Theorem~\ref{mainextended}).}

\subsection{Win-Lose \textcolor{black}{3-Player} Games}
\label{three player games}

\noindent
We show:

\begin{theorem}
\label{last minute theorem}
Restricted to win-lose \textcolor{black}{3-player} games,
{\sf $\exists$ RATIONAL NASH}
is ${\mathcal{NP}}$-complete.
\textcolor{black}{Furthermore,
{\sf $\#$ RATIONAL NASH}
is $\# {\mathcal{P}}$-complete}
\textcolor{black}{and
{\sf $\oplus$ RATIONAL NASH}
is $\oplus {\mathcal{P}}$-complete.}
\end{theorem}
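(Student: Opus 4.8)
The plan is to run \emph{Step 1} and \emph{Step 2} of the three-steps proof plan from Section~\ref{introduction complexity results} with the irrational gadget game $\widehat{\mathsf{G}}_2$ from Section~\ref{irrational game} as the parameter; since $\widehat{\mathsf{G}}_2$ is a $3$-player win-lose game with the positive utility property, the win-lose reduction keeps us inside win-lose $3$-player games, and the (bimatrix-specific) ${\mathsf{GHR}}$-symmetrization of \emph{Step 3} is not needed. Membership is immediate from the remarks in Section~\ref{framework decision problems}: {\sf $\exists$ RATIONAL NASH} is in ${\mathcal{NP}}$ for any number of players, its counting version {\sf $\#$ RATIONAL NASH} is in $\#{\mathcal{P}}$, and its parity version {\sf $\oplus$ RATIONAL NASH} is in $\oplus{\mathcal{P}}$.

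For hardness I would reduce from {\sf 3SAT} (and, for the counting and parity versions, from $\#${\sf 3SAT} and $\oplus${\sf 3SAT}). Given a {\sf 3SAT} formula ${\mathsf{\phi}}$, I may assume $n := |{\mathsf{Var}}({\mathsf{\phi}})| \geq 4$, since formulas with fewer variables are decidable (and countable) in constant time and hence trivial for all three problems; then I form the $3$-player win-lose game ${\mathsf{G}} := {\mathsf{G}}(\widehat{\mathsf{G}}_2,{\mathsf{\phi}})$, which, because $\kappa(\widehat{\mathsf{G}}_2) = 3$ is a constant, has size polynomial in $|{\mathsf{\phi}}|$. When ${\mathsf{\phi}}$ is unsatisfiable, Proposition~\ref{if unsatisfied} gives ${\mathcal{NE}}({\mathsf{G}}) = {\mathcal{NE}}(\widehat{\mathsf{G}}_2)$, and by Proposition~\ref{gadget2} the unique element of ${\mathcal{NE}}(\widehat{\mathsf{G}}_2)$ is irrational; hence ${\mathsf{G}}$ has no rational Nash equilibrium. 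When ${\mathsf{\phi}}$ is satisfiable, Proposition~\ref{final lemma} supplies, for each satisfying assignment ${\mathsf{\gamma}}$, a Nash equilibrium ${\bm{\sigma}}({\mathsf{\gamma}})$ in which each special player puts probability $\frac{1}{n}$ on every literal of ${\mathsf{\gamma}}$ and each non-special player plays ${\mathsf{\delta}}$ with probability $1$; all of these probabilities are rational, so ${\bm{\sigma}}({\mathsf{\gamma}})$ is a rational Nash equilibrium. Thus ${\mathsf{G}}$ has a rational Nash equilibrium if and only if ${\mathsf{\phi}}$ is satisfiable, which yields the ${\mathcal{NP}}$-hardness of {\sf $\exists$ RATIONAL NASH}.

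For the counting and parity versions I would count exactly. By Lemma~\ref{f is PNE} (Condition {\sf (C.1)}) together with Propositions~\ref{if unsatisfied} and~\ref{final lemma}, ${\mathcal{NE}}({\mathsf{G}})$ consists precisely of the unique (irrational) gadget equilibrium of $\widehat{\mathsf{G}}_2$ and the $\#{\mathsf{\phi}}$ literal equilibria ${\bm{\sigma}}({\mathsf{\gamma}})$; since the gadget equilibrium is irrational and each literal equilibrium is rational, the number of rational Nash equilibria of ${\mathsf{G}}$ equals $\#{\mathsf{\phi}}$ exactly. Hence the value of {\sf $\#$ RATIONAL NASH} on input ${\mathsf{G}}$ equals $\#{\mathsf{\phi}}$, and by the $\#{\mathcal{P}}$-hardness of computing $\#{\mathsf{\phi}}$~\cite{V79} we get the $\#{\mathcal{P}}$-hardness of {\sf $\#$ RATIONAL NASH}; this identity is parity-preserving, so the $\oplus{\mathcal{P}}$-hardness of computing $\oplus{\mathsf{\phi}}$~\cite{PZ83} gives the $\oplus{\mathcal{P}}$-hardness of {\sf $\oplus$ RATIONAL NASH}.

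I do not expect a serious obstacle: the heavy lifting is already encapsulated in the win-lose reduction and in the analysis of $\widehat{\mathsf{G}}_2$. The only points requiring care are confirming that the equilibria produced by Proposition~\ref{final lemma} have entirely rational probabilities (immediate from the explicit construction in its proof), confirming that irrationality of the gadget equilibrium is inherited by ${\mathsf{G}}$ when ${\mathsf{\phi}}$ is unsatisfiable (immediate from Propositions~\ref{if unsatisfied} and~\ref{gadget2}), and the routine bookkeeping that ${\mathsf{G}}$ and the reduction are polynomial-time computable and that the $n < 4$ case is degenerate.
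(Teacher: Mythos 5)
Your proposal is correct and follows essentially the same route as the paper: fix the irrational gadget $\widehat{\mathsf{G}}_2$, apply the win-lose reduction, and invoke Propositions~\ref{if unsatisfied}, \ref{final lemma} and~\ref{gadget2} so that rational Nash equilibria exist exactly for satisfying assignments, with the count equal to $\#{\mathsf{\phi}}$ for the $\#{\mathcal{P}}$- and $\oplus{\mathcal{P}}$-hardness. The extra bookkeeping you add (membership, polynomial size, the $n<4$ degenerate case, rationality of the constructed equilibria) is consistent with, and implicit in, the paper's argument.
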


\begin{proof}
Fix 
${\widehat{{\mathsf{G}}}}
   :=
   {\widehat{\mathsf{G}}}_{2}$ 
(Section~\ref{irrational game}).
Assume first that
${\mathsf{\phi}}$ is unsatisfiable.
Then,
by Proposition~\ref{if unsatisfied},
\textcolor{black}{${\mathcal{NE}} ( {\mathsf{G}}( {\widehat{\mathsf{G}}}_{2},
                                                                   {\mathsf{\phi}}
                                                   )
                           )
   =
   {\mathcal{NE}} ( {\widehat{\mathsf{G}}}_{2}
                             )$}.
Hence,
by Proposition~\ref{gadget2},
\textcolor{black}{${\mathsf{G}}( {\widehat{\mathsf{G}}}_{2},
                                 {\mathsf{\phi}}
                       )$}
has a single Nash equilibrium,
which is irrational.
Assume now that
${\mathsf{\phi}}$ is satisfiable.
Then,
by Proposition~\ref{final lemma}
(Condition {\sf (4)}),
\textcolor{black}{${\mathsf{G}}( {\widehat{\mathsf{G}}}_{2},
                                 {\mathsf{\phi}}
                       )$}
has,
\textcolor{black}{for each satisfying assignment ${\mathsf{\gamma}}$
of ${\mathsf{\phi}}$,}
a rational Nash equilibrium
\textcolor{black}{${\bm{\sigma}} = {\bm{\sigma}}({\mathsf{\gamma}})$.} 
Thus,
\textcolor{black}{${\mathsf{G}}( {\widehat{\mathsf{G}}}_{2},
                                                                  {\mathsf{\phi}}
                       )$}
has a rational Nash equilibrium
if and only if
$\phi$ is satisfiable,
and the ${\mathcal{NP}}$-hardness
of {\sf $\exists$ RATIONAL NASH}
follows.
By Propositions~\ref{if unsatisfied}
and \ref{final lemma},
the number of rational Nash equilibria
for ${\mathsf{G}}$
is
$\# {\mathsf{\phi}}$.
\textcolor{black}{Since computing $\# {\mathsf{\phi}}$
is $\# {\mathcal{P}}$-hard~\cite{V79}, 
the $\# {\mathcal{P}}$-hardness
of {\sf $\#$ RATIONAL NASH}
follows.
Since computing $\oplus {\mathsf{\phi}}$
is $\oplus {\mathcal{P}}$-hard~\cite{PZ83},
$\oplus {\mathcal{P}}$-hardness
of {\sf $\oplus$ RATIONAL NASH}
follows.}
\end{proof}

\remove{
\begin{shaded}
\noindent
\underline{{\it Group III}:}
We start with an informal outline of the proof.
First note that
even the problem 
{\sf $\exists$ 2 NASH}
``escapes'' the technique
using the ${\mathsf{GHR}}$-symmetrization
of ${\mathsf{G}}$ since, 
by Proposition~\ref{fromsymmetrictobasic2}, 
each Nash equilibrium of ${\mathsf{G}}$
maps to exactly three, distinct Nash equilibria 
of ${\widetilde{{\mathsf{G}}}}$; 
thus,
this approach could only help
proving the $\cal NP$-completeness
of {\sf $\exists$ 4 NASH}. 
Instead,
to prove the ${\mathcal{NP}}$-completeness of
{\sf $\exists$ $k+1$ NASH},
with $k \geq 1$,
we ``embed'' the win-lose diagonal game
${\widehat{{\mathsf{G}}}}_{4}[k]$
(Section~\ref{diagonal game})
as a subgame of ${\widetilde{{\mathsf{G}}}}$.
Denote the resulting game as
${\widetilde{{\mathsf{G}}}}
  \parallel
  {\widehat{{\mathsf{G}}}}_{4}[k]$,
which is still win-lose and symmetric.
We shall establish that
the three Nash equilibria
of ${\widetilde{{\mathsf{G}}}}$
are ``destroyed'' in
${\widetilde{{\mathsf{G}}}}
  \parallel
  {\widehat{{\mathsf{G}}}}_{4}[k]$.
Instead,
when $\phi$ in unsatisfiable, 
${\widetilde{{\mathsf{G}}}}
  \parallel
  {\widehat{{\mathsf{G}}}}_{4}[k]$
has exactly $k$ Nash equilibria,
which are ``inherited'' from
${\widehat{{\mathsf{G}}}}_{4}[k]$;
when  $\phi$ is satisfiable,
all $k$ Nash equilibria 
from the diagonal game
${\widehat{{\mathsf{G}}}}_{4}[k]$ survive,
and a new Nash equilibrium
is created for each satisfying assignment of $\phi$.
We now continue with the details
of the formal proof. 

\noindent
Fix
${\widehat{{\mathsf{G}}}}
  :=
  {\widehat{{\mathsf{G}}}}_{1}[h]$.
Denote as
${\widetilde{{\mathsf{G}}}}
  \parallel
  {\widehat{{\mathsf{G}}}}_{4}[k]$
the symmetric win-lose game 
resulting from
${\widetilde{{\mathsf{G}}}}$
by including $k$ additional strategies from
${\mathsf{T}}
  :=
  \{ t_{1}, \ldots, t_{k} \}$ 
to the strategy set of each player
in ${\widetilde{{\mathsf{G}}}}$
and setting:
\begin{itemize}

\item
${\mathsf{U}}(\langle s, t\rangle)
 :=
 \langle 0,1 \rangle$
and
${\mathsf{U}}(\langle t, s \rangle)
 :=
 \langle 1, 0 \rangle$,
where 
$s \in {\mathsf{\Sigma}}({\widetilde{{\mathsf{G}}}})
          \setminus
          {\mathsf{L}}$ and $t \in {\mathsf{T}}$.

\item
${\mathsf{U}}(\langle s, t\rangle)
  :=
  \langle 0,0 \rangle$,
where $s \in {\mathsf{L}}$
and 
$t \in {\mathsf{T}}$.

\item
${\mathsf{U}}(\langle t_{i}, t_{j}
                        \rangle)
 :=
 \langle {\mathsf{D}}_{k}[i, j],
             {\sf D}^{{\rm T}}_{k}[i, j]
 \rangle$,
where $i, j \in [k]$;
 thus,
 ${\widehat{{\mathsf{G}}}}_{4}[k]$
 is ``embedded'' into 
 $\widetilde{\mathsf{G}}$.

\end{itemize}
\end{shaded}

\begin{shaded}
\noindent
By Proposition~\ref{very recent},
for each Nash equilibrium
${\bm{\sigma}}
  \in
  {\mathcal{NE}}\left( {\widehat{{\mathsf{G}}}}_{4}[k]
                           \right)$,
for each player $i \in [2]$,
${\mathsf{U}}_{i}({\bm{\sigma}})
  =
  1$.
Since ${\widetilde{{\mathsf{G}}}}
           \parallel
           {\widehat{{\mathsf{G}}}}_{4}[k]$
is win-lose and
${\mathsf{\Sigma}}\left( {\widehat{{\mathsf{G}}}}_{4}[k]
                               \right)
  \subseteq
  {\mathsf{\Sigma}}\left(  {\widetilde{{\mathsf{G}}}}
                                          \parallel
                                          {\widehat{{\mathsf{G}}}}_{4}[k]
                               \right)$,
it follows that
${\bm{\sigma}}$
is a Nash equilibrium for
${\widetilde{{\mathsf{G}}}}
  \parallel
  {\widehat{{\mathsf{G}}}}_{4}[k]$.
Thus,
${\mathcal{NE}}\left( {\widehat{{\mathsf{G}}}}_{4}[k]
                          \right)                                                                        
  \subseteq
  {\mathcal{NE}}\left(  {\widetilde{{\mathsf{G}}}}
                                     \parallel
                                     {\widehat{{\mathsf{G}}}}_{4}[k]
                          \right)$.            
We further prove
two simple properties 
of the Nash equilibria for
${\widetilde{\mathsf{G}}}
  \parallel
  {\widehat{{\mathsf{G}}}}_{4}[k]$.
\end{shaded}

\begin{shaded}
\begin{claim}
\label{claim1}
Fix a Nash equilibrium
${\bm{\sigma}}
  \in
  {\mathcal{NE}}({\widetilde{\mathsf{G}}}
                             \parallel
                             {\widehat{{\mathsf{G}}}}_{4}[k])$
with
${\mathsf{Supp}}(\sigma_{i})
  \subseteq
  {\mathsf{T}}$
for some player $i \in [2]$.    
Then,                             
${\bm{\sigma}}                                                                                                                    
  \in
  {\mathcal{NE}}({\widehat{{\mathsf{G}}}}_{4}[k])$.
\end{claim}

\begin{proof}
By the utility functions of
${\widetilde{\mathsf{G}}}
  \parallel
  {\widehat{{\mathsf{G}}}}_{4}[k]$,
${\mathsf{U}}_{\overline i}
                 ({\bm{\sigma}}_{-\overline i}\diamond s)=0$ 
for each strategy
$s \in {\mathsf{\Sigma}}({\widetilde{{\mathsf{G}}}})$. 
Since ${\widehat{{\mathsf{G}}}}_{4}[k]$
has the positive utility property, 
there is a strategy $t \in {\mathsf{T}}$
such that 
${\mathsf{U}}_{\overline i}
                 ({\bm{\sigma}}_{-\overline i} \diamond t)
  >
  0$.
So,
every best-response strategy for player $\overline i$ 
is contained in ${\mathsf{T}}$,
and the claim follows.
\end{proof}
\end{shaded}

\begin{shaded}
\noindent
Fix a mixed profile
${\bm\sigma}$ for 
${\widetilde{\mathsf{G}}}
  \parallel
  {\widehat{{\mathsf{G}}}}[k]$
such that
for each player $i \in [2]$,
$\sigma_{i}({\mathsf{T}}) < 1$.
Construct the mixed profile 
$\bm{\varsigma}
 =
 \bm{\varsigma}(\bm{\sigma})$
for ${\widetilde{{\mathsf{G}}}}$
such that that
for each player $i \in [2]$ 
and strategy
$t \in {\mathsf{\Sigma}}({\widetilde{{\mathsf{G}}}})$,
{
\small
\begin{eqnarray*}
          \varsigma_{i}(t)
&:=& \frac{\textstyle \sigma_{i} (t)}
                  {1 - \sigma_{i}({\mathsf{T}})}\, .
\end{eqnarray*}
}
\noindent
So,
by the construction,
for each player $i \in [2]$,
${\mathsf{Supp}}(\varsigma_{i})
  =
  {\mathsf{Supp}}(\sigma_{i})
  \cup
  {\mathsf{T}}_{i}$.
We now prove:

\begin{claim}
\label{claim2}
Fix a Nash equilibrium 
${\bm{\sigma}}
   \in
   {\mathcal{NE}}({\widetilde{\mathsf{G}}}
                             \parallel
                             {\widehat{{\mathsf{G}}}}[k])
   \setminus
   {\mathcal{NE}}({\widehat{{\mathsf{G}}}}[k])$.   
Then,
${\bm{\varsigma}}({\bm{\sigma}})$ 
is a Nash equilibrium for ${\widetilde{{\mathsf{G}}}}$.
\end{claim}

\begin{proof}
Assume, 
by way of contradiction, 
that
${\bm{\varsigma}}({\bm{\sigma}})$
is not a Nash equilibrium
for ${\widetilde{{\mathsf{G}}}}$.
Then,
by Lemma~\ref{basic property of mixed nash equilibria}, 
there is a pair of
a player $i \in [2]$
and a strategy
$s^{\prime} \in {\mathsf{\Sigma}}({\widetilde{{\mathsf{G}}}})$
such that
${\mathsf{ U}}_{i}\left( {\bm\varsigma}_{-i}
                                        \diamond
                                        s'
                             \right) 
  >
  {\mathsf{U}}_{i}\left( {\bm{\varsigma}}_{-i}
                                       \diamond 
                                       s
                            \right)$
for some strategy 
$s \in {\mathsf{Supp}}(\varsigma_{i})$.
By construction,
$s \in {\mathsf{Supp}}(\sigma_{i})$.
Clearly,
{
\small 
\begin{eqnarray*}
           {\mathsf{ U}}_{i}\left( {\bm\varsigma}_{-i}
                                                  \diamond 
                                                   s^{\prime}
                                        \right) 
& = & \sum_{t
                     \in
                     {\mathsf{\Sigma}}\left( {\widetilde{{\mathsf{G}}}}
                                                  \right)}
             {\mathsf{ U}}_{i}
                              \left( \langle s^{\prime},
                                                   t
                                       \rangle
                              \right)
             \cdot                         
             \varsigma_{{\overline{i}}}\left( t
                                                       \right)                                                                                                                                                                 \\
& = & \frac{\textstyle 1}
                  {\textstyle 1-\sigma_{{\overline{i}}}({\mathsf{T}})}\,
          \sum_{t
                     \in
                     {\mathsf{\Sigma}}\left( {\widetilde{{\mathsf{G}}}}
                                                   \right)}
            {\mathsf{U}}_{i}\left( \langle s^{\prime},
                                                              t
                                                  \rangle
                                        \right)          
            \cdot
            \sigma_{{\overline{i}}}\left( t
                                                 \right)\, ,
\end{eqnarray*}
}
and
{
\small
\begin{eqnarray*}
          {\mathsf{U}}_{i}({\bm{\varsigma}}_{-i}
                                       \diamond 
                                       s) 
& = & \sum_{t \in {\mathsf{\Sigma}}({\widetilde{{\mathsf{G}}}})}
             {\mathsf{U}}_{i}(\langle s, t
                                          \rangle)
             \cdot
             \varsigma_{{\overline{i}}}(t)                                                                                                  \\
& = & \frac{\textstyle 1}
                  {\textstyle 1 - \sigma_{\overline{i}}(T)}
          \sum_{t \in {\mathsf{\Sigma}}({\widehat{{\mathsf{G}}}})}
          {\mathsf{U}}_{i}\left( \langle  s, t
                                              \rangle
                                     \right)
          \cdot
          \sigma_{{\overline{i}}}(t)\, .
\end{eqnarray*}
} 
Hence,
it follows that
{
\small
\begin{eqnarray*}
          \sum_{t \in {\mathsf{\Sigma}}({\widehat{{\mathsf{G}}}})}
              {\mathsf{U}}_{i}(\langle s', t
                                           \rangle)
              \cdot                             
              \sigma_{{\overline{i}}}(t) 
& > & \sum_{t \in {\mathsf{\Sigma}}({\widehat{\mathsf{G}}})}
              {\mathsf{U}}_{i}(\langle s,
                                                      t
                                          \rangle)
              \cdot
              \sigma_{{\overline{i}}}(t)\, .
\end{eqnarray*}
}
By construction,
$ {\mathsf{\Sigma}}\left( {\widetilde{{\mathsf{G}}}}
                                         \parallel
                                         {\widehat{{\mathsf{G}}}}_{4}[k]
                                \right)  
  =
  {\mathsf{\Sigma}}({\widetilde{{\mathsf{G}}}})
  \cup
  {\mathsf{T}}$,
while for each strategy
$s \in {\mathsf{\Sigma}}( \widetilde{{\sf G}}
                                       )
          \setminus
         {\mathsf{T}}$
and
$t \in {\mathsf{T}}$,
${\mathsf{U}}_{i}(s, t)
  =0$.
It follows that
{
\small
\begin{eqnarray*}
          {\mathsf{U}}_{i}\left( {\bm{\sigma}}_{-i}
                                               \diamond 
                                               s^{\prime}
                                     \right)     
& = & \sum_{t \in {\mathsf{\Sigma}}\left( {\widetilde{{\mathsf{G}}}}
                                                                    \parallel
                                                                    {\widehat{{\mathsf{G}}}}_{4}[k]
                                                          \right)}          
             {\mathsf{U}}_{i}\left( \langle s', t
                                                  \rangle
                                        \right)
             \cdot
             \sigma_{2}(t)                                                                                                                                                                                    \\
& = & \sum_{t \in {{\mathsf{\Sigma}}}\left( {\widetilde{{\mathsf{G}}}}
                                                              \right)}
             {\mathsf{U}}_{i}\left( \langle s', t
                                                 \rangle
                                        \right)
             \cdot
             \sigma_{{\overline{i}}}(t)                                                                                                                                                               \\
& > & \sum_{t \in {{\mathsf{\Sigma}}}\left( {\widetilde{{\mathsf{G}}}}
                                                              \right)}
             {\mathsf{U}}_{i}\left( \langle s, t
                                                 \rangle
                                        \right)
             \cdot
             \sigma_{{\overline{i}}}(t)                                                                                                                                                              \\
& = & \sum_{t \in {\mathsf{\Sigma}}\left( {\widetilde{{\mathsf{G}}}}
                                                                    \parallel
                                                                    {\widehat{{\mathsf{G}}}}_{4}[k]
                                                          \right)} 
             {\mathsf{U}}_{i}\left( \langle s, t
                                                 \rangle
                                        \right)
             \cdot
             \sigma_{{\overline{i}}}(t)                                                                                                                                                            \\
& = &  {\mathsf{U}}_{i}\left( {\bm{\sigma}}_{-i}
                                               \diamond 
                                               s
                                     \right)\, .     
\end{eqnarray*}
}
\noindent
Hence,
since $s \in{\mathsf{Supp}}(\sigma_{i})$, 
${\bm{\sigma}}$ 
is not a Nash equilibrium for 
${\widetilde{{\mathsf{G}}}}
  \parallel
  {\widehat{{\mathsf{G}}}}_{4}[k]$.
A contradiction. 
\end{proof}
\end{shaded}

\begin{shaded}
\noindent
We now prove:
\end{shaded}

\begin{shaded}
\begin{lemma}
Assume that $\phi$ is unsatisfiable.
Then,
${\mathcal{NE}}({\widetilde{\mathsf{G}}}
                             \parallel
                             {\widehat{{\mathsf{G}}}}_{4}[k])
  =
  {\mathcal{NE}}({\widehat{{\mathsf{G}}}}_{4}[k])$.
\end{lemma}
\end{shaded}

\begin{proof}
We have:

\begin{shaded}
\noindent
Since
${\mathcal{NE}}\left( {\widehat{{\mathsf{G}}}}_{4}[k]
                          \right)                                                                        
  \subseteq
  {\mathcal{NE}}\left(  {\widetilde{{\mathsf{G}}}}
                                     \parallel
                                     {\widehat{{\mathsf{G}}}}_{4}[k]
                          \right)$,
it remains to prove that
${\mathcal{NE}}\left(  {\widetilde{{\mathsf{G}}}}
                                     \parallel
                                     {\widehat{{\mathsf{G}}}}_{4}[k]
                          \right)
  \setminus
  {\mathcal{NE}}\left( {\widehat{{\mathsf{G}}}}_{4}[k]
                          \right)
  =
  \emptyset$.
Assume,
by way of contradiction,
that there is
a Nash equilibrium
${\bm{\sigma}}
  \in
  {\mathcal{NE}}({\widetilde{\mathsf{G}}}
                             \parallel
                             {\widehat{{\mathsf{G}}}}_{4}[k])
  \setminus
  {\mathcal{NE}}({\widehat{{\mathsf{G}}}}_{4}[k])$.  
\end{shaded}

Since $\phi$ is unsatisfiable, 
${\cal NE}({\sf G}_h)={\cal NE}(\widehat{{\sf G}}_h)$. 
By Claim \ref{claim2}, 
${\sf Supp}(\sigma_i)\subseteq\widehat{{\sf\Sigma}}_i\cup T$ for each player $i\in [2]$. 
By Claim \ref{claim1}, 
${\sf Supp}(\sigma_i)\cap\widehat{{\sf\Sigma}}_i \neq\emptyset$ 
for each player $i\in [2]$. 
Fix $i:=1$ and consider a strategy $s\in{\sf Supp}(\sigma_1)\cap\widehat{{\sf\Sigma}}_1$.
Then,
{
\small
\begin{eqnarray*}
{\sf U}_1({\bm\sigma}_{-1}\diamond s) 
& = & \sum_{s'\in \widehat{{\sf\Sigma}}_2}{\sf U}_1(\langle s,s'\rangle)\sigma_2(s')+\sum_{t'\in T}{\sf U}_1(\langle s,t'\rangle)\sigma_2(t')\\
& = & \sum_{s'\in \widehat{{\sf\Sigma}}_2}{\sf U}_1(\langle s,s'\rangle)\sigma_2(s').
\end{eqnarray*}
}
Since ${\sf B}_k$ has the positive utility property, 
there is a strategy
$t\in T$ such that $\sum_{t'\in T}{\sf U}_1(\langle t,t'\rangle)\sigma_2(t')>0$. 
Thus,
{
\small 
\begin{eqnarray*}
{\sf U}_1({\bm\sigma}_{-1}\diamond t) & = & \sum_{s'\in \widehat{{\sf\Sigma}}_2}{\sf U}_1(\langle t,s'\rangle)\sigma_2(s')+\sum_{t'\in T}{\sf U}_1(\langle t,t'\rangle)\sigma_2(t')\\
& > & \sum_{s'\in \widehat{{\sf\Sigma}}_2}\sigma_2(s').
\end{eqnarray*}
}
Hence,
${\sf U}_1({\bm\sigma}_{-1}\diamond s)<{\sf U}_1({\bm\sigma}_{-1}\diamond t)$,
which, 
together with $s\in{\sf Supp}(\sigma_1)$, 
yields a contradiction. 
The case $i:=2$ is corresponding.
It follows that
${\cal NE}(\widetilde{\sf G})\setminus {\cal NE}({\sf B}_k)=\emptyset$,
as needed.
\end{proof}

\begin{shaded}
\noindent
We continue to prove:
\end{shaded}

\begin{shaded}
\begin{lemma}
Assume that $\phi$ is satisfiable. 
Then,
${\mathcal{NE}}\left( {\widehat{{\mathsf{G}}}}_{4}[k]
                          \right)
 \subset
 {\mathcal{NE}}({\widetilde{{\mathsf{G}}}}
                           \parallel
                           {\widehat{{\mathsf{G}}}}_{4}[k]$,
and there is
a Nash equilibrium
${\bm{\sigma}}
  \in
  {\mathcal{NE}} \left( {\widetilde{{\mathsf{G}}}}
                                     \parallel
                                     {\widehat{{\mathsf{G}}}}_{4}[k]
                            \right)
   \setminus
  {\mathcal{NE}}({\widehat{{\mathsf{G}}}}_{4}[k]$
with ${\mathsf{U}}_{i}({\bm\sigma})
         <1$ 
 for each player $i \in [2]$.
\end{lemma}
\end{shaded}

\begin{proof}
Since $\phi$ is satisfiable, there exists ${\bm\sigma}\in{\cal NE}({\sf G}_h)$ such that ${\sf Supp}(\sigma_i)={\sf L}$ and ${\sf U}_i({\bm\sigma})=2/n$ for each $i\in [2]$. Since $\widetilde{\sf G}$ is obtained from ${\sf G}_h$ by adding to both players the set of strategies $T$, in order to show that ${\bm\sigma}\in{\cal NE}(\widetilde{\sf G})$, we only need to check that none of the two players can improve her utility by switching to a strategy in $T$. By the fact that ${\sf U}(\langle s,s'\rangle)=(0,0)$ whenever $s\in{\sf L}$ and $s'\in T$ or viceversa, ${\sf U}_i({\bm\sigma}_{-i}\diamond t)=0$ for each $i\in [2]$ and $t\in T$ thus proving that ${\bm\sigma}\in{\cal NE}(\widetilde{\sf G})\setminus{\cal NE}({\sf D}_k)$ and ${\sf U}_i({\bm\sigma})<1$ for each $i\in [2]$. 
\end{proof}

\begin{shaded}
\noindent
By Proposition~\ref{very recent},
Claims~\ref{claim1} and~\ref{claim2} imply that
${\mathcal{NE}} \left( {\widetilde{{\mathsf{G}}}}
                                     \parallel
                                     {\widehat{{\mathsf{G}}}}_{4}[k]
                            \right)$
has at least $k+1$ Nash equilibrium 
if and only if $\phi$ is satisfiable; 
thus, 
the ${\cal NP}$-hardness 
of {\sf $\exists$ $k+1$ NASH} follows. 
By Proposition~\ref{very recent},
the $k$ additional Nash equilibria are 
neither Pareto-Optimal nor Strongly Pareto-Optimal; 
thus,
the ${\mathcal{NP}}$-hardness of 
{\sf $\exists$ NON-PARETO-OPTIMAL NASH} 
and {\sf $\exists$ NON-STRONGLY PARETO-OPTIMAL NASH}
follows.
\end{shaded}

\noindent
The proof is now complete.
\end{proof}
}



\section{Discussion and Open Problems}
\label{epilogue}

\noindent
We have established that symmetric win-lose bimatrix games
are as complex as general bimatrix games
with respect to a handful
of decision problems about Nash equilibria, 
which either were previously studied in~\cite{BM11,BDL08,CS05,CS08,GZ89,MT10,MT10a,MVY15} or are new
(Theorems~\ref{mainextended} \textcolor{black}{and~\ref{maintheorem symmetric}}).
\textcolor{black}{Furthermore,
deciding the existence of a symmetric Nash equilibrium
for win-lose bimatrix games
is as hard as for general bimatrix games
(Theorem~\ref{second last minute theorem});}
deciding the existence of a rational Nash equilibrium
for win-lose 3-player games
is as hard as
for general 3-player games
(Theorem~\ref{last minute theorem}).
Figure~\ref{table1} 
provides a tabular summary of 
these ${\mathcal{NP}}$-hardness results,
which improve and extend 
previous corresponding ${\mathcal{NP}}$-hardness results
from~\cite{BM11,BDL08,CS05,CS08,GZ89,MVY15}--- 
there \textcolor{black}{remain} just two
of those decision problems
whose ${\mathcal{NP}}$-hardness
is still open;
\textcolor{black}{we conjecture:}

\begin{conjecture}
\label{pareto open problem}
{\sf $\exists$ PARETO-OPTIMAL NASH}
and 
{\sf $\exists$ STRONGLY PARETO-OPTIMAL NASH} 
are ${\mathcal{NP}}$-hard
for symmetric win-lose bimatrix games. 
\end{conjecture}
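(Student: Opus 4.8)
The plan is to instantiate the three-steps scheme of Section~\ref{introduction complexity results}. The first thing to settle is \emph{Step~1}: exhibit a win-lose bimatrix gadget game ${\widehat{{\mathsf{G}}}}$ with the positive utility property and \emph{no} Pareto-optimal Nash equilibrium (and, separately, one with no strongly Pareto-optimal Nash equilibrium). A necessary condition on such a gadget is already transparent: if a win-lose game has a profile $\mathbf{s}$ with utility vector $\langle 1, 1 \rangle$, then $\mathbf{s}$ is a pure Nash equilibrium which is, vacuously, both Pareto-optimal and strongly Pareto-optimal; hence the gadget must satisfy $\sum_{i \in [2]} {\mathsf{U}}_i(\mathbf{s}) \leq 1$ for every profile $\mathbf{s}$. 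I would search for such a gadget among small win-lose games --- e.g.\ a win-lose analogue of a game whose unique (mixed) Nash equilibrium is Pareto-dominated by an off-equilibrium profile --- and verify the absence of Pareto-optimal (resp.\ strongly Pareto-optimal) Nash equilibria by a finite case analysis, in the spirit of Propositions~\ref{gadget2} and~\ref{barrage}. Proving that such a gadget exists at all is the first point where the argument could fail.

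\emph{Step~2} is then routine: apply the win-lose reduction to obtain ${\mathsf{G}} = {\mathsf{G}}({\widehat{{\mathsf{G}}}},{\mathsf{\phi}})$. By Proposition~\ref{if unsatisfied}, when ${\mathsf{\phi}}$ is unsatisfiable the Nash equilibria of ${\mathsf{G}}$ are exactly the gadget equilibria, which by the choice of ${\widehat{{\mathsf{G}}}}$ are all Pareto-dominated; by Proposition~\ref{final lemma} a satisfying assignment produces a literal equilibrium with utility vector $\langle \frac{2}{n}, \frac{2}{n} \rangle$. The trouble is that this literal equilibrium need not be Pareto-optimal --- it usually is not --- and the same holds, a fortiori by Lemma~\ref{vittorio simplification for group iv}, for the balanced mixtures produced by the win-lose ${\mathsf{GHR}}$-symmetrization in \emph{Step~3}. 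So the plain cascade does not deliver a ``yes''-witness in the satisfiable case, and this is the crux of the problem.

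The way around it, I expect, is to \emph{embed} an auxiliary win-lose block of strategies that is ``switched on'' exactly by the satisfying assignments --- dual to the embedding of ${\widehat{{\mathsf{G}}}}_{5}[k]$ used in {\it Group IV}, which was switched \emph{off} in the unsatisfiable case. One would like the block to create, in the satisfiable case, a symmetric Nash equilibrium whose utility vector lies on the Pareto frontier of the extended game --- ideally $\langle 1, 1 \rangle$, which is total-utility maximal and hence Pareto-optimal --- while the cross-utilities between the block and the rest of ${\widetilde{{\mathsf{G}}}}$ are $0$ so that: (i) the new equilibrium survives only when a consistent satisfying assignment is present; (ii) no profile using the block is a Nash equilibrium unconditionally (otherwise it would yield a Pareto-optimal equilibrium even for unsatisfiable ${\mathsf{\phi}}$); and (iii) no gadget equilibrium or balanced mixture of the extended game becomes Pareto-optimal. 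The delicate part, and the main obstacle, is to make (i)--(iii) simultaneously compatible with the Nash and win-lose constraints: awarding utility $1$ to a player on a block profile tends to make that profile an unconditional best response, and since the embedding perturbs the matrices ${\mathsf{R}}$ and ${\mathsf{C}}$ it forces a re-run of the balanced-mixture analysis of Theorems~\ref{frombasictosymmetric} and~\ref{fromsymmetrictobasic2} for the modified game. A plausible compromise is to settle for a block that produces, when the special players play a satisfying assignment, an equilibrium that is Pareto-optimal \emph{relative to the achievable utility vectors} rather than $\langle 1, 1 \rangle$ --- which amounts to pinning down a frontier point attained at a Nash equilibrium only under satisfiability.

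For {\sf $\exists$ STRONGLY PARETO-OPTIMAL NASH} the same architecture should apply, with the Step~1 gadget required instead to have no strongly Pareto-optimal Nash equilibrium, and the embedded block's new equilibrium checked directly against the (formally distinct, and in general incomparable) definition of Strong Pareto-Optimality from Section~\ref{framework decision problems}. Once the ${\mathcal{NP}}$-hardness equivalences are in place, the $\# {\mathcal{P}}$-hardness and $\oplus {\mathcal{P}}$-hardness of the counting and parity versions would follow by the bookkeeping of Theorem~\ref{mainextended}: expressing the number of Pareto-optimal (resp.\ strongly Pareto-optimal) Nash equilibria of the extended symmetric game as a parity-preserving function of $\# {\mathsf{\phi}}$, and inverting it.
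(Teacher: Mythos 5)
The statement you are trying to prove is not a theorem of the paper: it is stated there as Conjecture~\ref{pareto open problem}, and the authors explicitly list {\sf $\exists$ PARETO-OPTIMAL NASH} and {\sf $\exists$ STRONGLY PARETO-OPTIMAL NASH} (marked ``{\bf ?}'' in Figure~\ref{table1}) as the decision problems whose ${\mathcal{NP}}$-hardness remains open for (symmetric) win-lose bimatrix games. So there is no paper proof to compare against, and your text is, by your own admission, a plan rather than a proof. Your diagnosis of where the standard machinery breaks is accurate and matches why the problem is open: Proposition~\ref{final lemma} forces every literal equilibrium to have utility vector $\langle \frac{2}{n},\frac{2}{n}\rangle$, which is Pareto-dominated in ${\mathsf{G}}({\widehat{{\mathsf{G}}}},{\mathsf{\phi}})$, and Lemma~\ref{vittorio simplification for group iv} only makes things worse after the ${\mathsf{GHR}}$-symmetrization, so the satisfiable case produces no Pareto-optimal witness. (Incidentally, your ``first point where the argument could fail'' is not actually a problem: {\it Step~1} is easy, since ${\widehat{{\mathsf{G}}}}_{1}[h]$ with $h\geq 3$ has a unique Nash equilibrium with utilities $\frac{1}{h}$ each, Pareto-dominated e.g.\ by $\langle s_{0},\ \frac{1}{2}s_{0}+\frac{1}{2}s_{1}\rangle$, so negative gadget instances exist among the paper's own gadgets. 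Note also that the $1$-sum case $h=2$, and more generally any $c$-sum gadget, is useless here, since in a constant-sum game every Nash equilibrium is trivially Pareto-optimal.)

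The genuine gap is the one you flag yourself in {\it Step~3}: the embedded block satisfying your conditions (i)--(iii) is never constructed, and it is exactly the construction whose existence is in question. The {\it Group~IV} embedding of ${\widehat{{\mathsf{G}}}}_{5}[k]$ works because its $k$ equilibria, with utility vector $\langle 1,1\rangle$, are \emph{unconditionally} present (Claim~\ref{claim1} and Lemma~\ref{group3.1}), which is precisely what you must avoid: a block profile awarding $\langle 1,1\rangle$ with zero cross-utilities against the rest of the game is a pure Nash equilibrium of the extended game regardless of ${\mathsf{\phi}}$, and it is Pareto-optimal, so the answer to the decision problem would be ``yes'' even for unsatisfiable ${\mathsf{\phi}}$. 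Making the block's equilibrium conditional on satisfiability requires nonzero cross-utilities, which destroys the clean $0$-padding that lets Theorems~\ref{frombasictosymmetric} and~\ref{fromsymmetrictobasic2} (and Claims~\ref{claim1}--\ref{claim2}) go through, and within the win-lose constraint you have only the values $0$ and $1$ to tune. Until you exhibit a concrete block, verify the Nash-equilibrium structure of the modified game, and show that in the satisfiable case (and only then) some equilibrium sits on the Pareto frontier, the argument is a research programme, not a proof --- and the same caveat applies verbatim to the strongly Pareto-optimal variant and to the claimed $\#{\mathcal{P}}$- and $\oplus{\mathcal{P}}$-hardness corollaries, which all presuppose the missing equivalence.
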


\noindent
\textcolor{black}{Furthermore,
the counting versions 
of these ${\mathcal{NP}}$-hard problems
(Theorems~\ref{mainextended},~\ref{second last minute theorem} and~\ref{last minute theorem})
are all $\# {\mathcal{P}}$-hard.}
\textcolor{black}{Except for two
of the parity versions of
those ${\mathcal{NP}}$-hard decision problems,
namely
{\sf $\oplus \neg$ UNIFORM NASH}
and
{\sf $\oplus \neg$ SYMMETRIC NASH},
their parity versions
are all $\oplus {\mathcal{P}}$-hard.
Of these two,
recall that
{\sf $\oplus \neg$ SYMMETRIC NASH}
is in ${\mathcal{P}}$;
the $\oplus {\mathcal{P}}$-hardness 
of 
{\sf $\oplus \neg$ UNIFORM NASH}
remains open.
We conjecture:}

\begin{conjecture}
\label{parity open}
\textcolor{black}{
{\sf $\oplus$ $\neg$ UNIFORM NASH}
is $\oplus {\mathcal{P}}$-hard.}
\end{conjecture}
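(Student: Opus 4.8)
The final statement is a conjecture, so what follows is a proposed line of attack rather than a reconstruction of an existing proof. Membership of {\sf $\oplus$ $\neg$ UNIFORM NASH} in $\oplus{\mathcal{P}}$ is already noted in Section~\ref{framework decision problems}, so the whole content is $\oplus{\mathcal{P}}$-hardness, and the goal is a \emph{parity-preserving} (non-parsimonious) reduction from a $\oplus{\mathcal{P}}$-hard problem — say $\oplus$3SAT — to {\sf $\oplus$ $\neg$ UNIFORM NASH} over symmetric win-lose bimatrix games. The plan is to reuse the ``cascade'' of the win-lose reduction and the win-lose ${\mathsf{GHR}}$-symmetrization, but to repair the one place where it loses parity information.

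First I would isolate the obstruction precisely. In the ${\mathcal{NP}}$-hardness argument for {\sf $\exists$ $\neg$ UNIFORM NASH} ({\it Group I} in the proof of Theorem~\ref{mainextended}), with the gadget ${\widehat{{\mathsf{G}}}}_{1}[1]$, every non-uniform Nash equilibrium of ${\widetilde{{\mathsf{G}}}}$ is a balanced mixture ${\widehat{{\bm{\sigma}}}}\ast{\bm{\sigma}}$ or ${\bm{\sigma}}\ast{\widehat{{\bm{\sigma}}}}$ of the unique (uniform) gadget equilibrium with a literal equilibrium; by Lemma~\ref{symmetry observation} such a balanced mixture is never symmetric, so these equilibria occur in symmetric pairs, their number is $2\cdot\#{\mathsf{\phi}}$, and its parity is always $0$. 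By Lemma~\ref{symmetry observation} the only way to create, for a satisfying assignment ${\mathsf{\gamma}}$, a \emph{symmetric} non-uniform Nash equilibrium of ${\widetilde{{\mathsf{G}}}}$ is a diagonal balanced mixture ${\bm{\sigma}}({\mathsf{\gamma}})\ast{\bm{\sigma}}({\mathsf{\gamma}})$; by Observation~\ref{facebook observation} this is non-uniform exactly when the literal equilibrium ${\bm{\sigma}}({\mathsf{\gamma}})$ of the win-lose game ${\mathsf{G}}$ is non-uniform. But Corollary~\ref{nash is assignment} forces every literal equilibrium of ${\mathsf{G}}$ to put uniform probabilities on its support. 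Hence the crux is to \textbf{redesign the win-lose reduction so that the satisfying-assignment equilibria of ${\mathsf{G}}$ are non-uniform}, while retaining everything else: the equivalence ``additional Nash equilibria for ${\mathsf{G}}$ if and only if ${\mathsf{\phi}}$ is satisfiable'' (Propositions~\ref{if unsatisfied} and~\ref{final lemma}), the parsimony of the additional equilibria, and the use of the gadget ${\widehat{{\mathsf{G}}}}_{1}[1]$ (whose unique equilibrium is uniform, so that when ${\mathsf{\phi}}$ is unsatisfiable every Nash equilibrium of ${\widetilde{{\mathsf{G}}}}$ is a balanced mixture of that equilibrium and is uniform, giving count $0$). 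Concretely I would augment each special player's strategy set with a fixed ``witness'' strategy and tune the $0/1$ utilities — using the same device by which the irrational game ${\widehat{{\mathsf{G}}}}_{2}$ forces non-uniform (indeed irrational) probabilities with win-lose payoffs — so that in a satisfying-assignment equilibrium the witness strategy carries a probability different from the common probability of the $n$ literals (for instance roughly $\frac{n-1}{2n-1}$ versus $\frac{1}{2n-1}$), making ${\bm{\sigma}}({\mathsf{\gamma}})$ non-uniform.

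With that in hand, the parity bookkeeping of Theorem~\ref{mainextended} carries over almost verbatim, with ``uniform'' replaced by ``non-uniform'' for the diagonal balanced mixtures: when ${\mathsf{\phi}}$ is satisfiable the additional Nash equilibria of ${\widetilde{{\mathsf{G}}}}$ consist of the balanced mixtures of a literal with a gadget equilibrium, the balanced mixtures of a literal equilibrium with $\langle{\mathsf{0}},{\mathsf{0}}\rangle$, and the off-diagonal balanced mixtures of two \emph{distinct} literal equilibria — all of which contribute non-uniform equilibria only in symmetric pairs, hence an even total — together with the $\#{\mathsf{\phi}}$ diagonal balanced mixtures ${\bm{\sigma}}({\mathsf{\gamma}})\ast{\bm{\sigma}}({\mathsf{\gamma}})$, each symmetric and now non-uniform; when ${\mathsf{\phi}}$ is unsatisfiable the count is $0$. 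So the number of non-uniform Nash equilibria of ${\widetilde{{\mathsf{G}}}}$ equals $\#{\mathsf{\phi}}$ plus an even quantity, its parity is $\oplus{\mathsf{\phi}}$, and $\oplus{\mathcal{P}}$-hardness follows from the $\oplus{\mathcal{P}}$-hardness of computing $\oplus{\mathsf{\phi}}$~\cite{PZ83}; combined with membership in $\oplus{\mathcal{P}}$ this would give $\oplus{\mathcal{P}}$-completeness.

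The main obstacle I expect is re-establishing the analogue of Proposition~\ref{if unsatisfied} for the modified reduction: the extra witness strategy enlarges the set of deviations available to the special players, and one must again exploit the finer structure of a {\sf 3SAT} formula — as in the present proofs of Propositions~\ref{if unsatisfied} and~\ref{final lemma} — to exclude spurious Nash equilibria when ${\mathsf{\phi}}$ is unsatisfiable (e.g.\ ones where a special player mixes the witness strategy with literals inconsistently, or drops a pair of complementary literals), and symmetrically to check that in the satisfiable case the deviations to a clause, to a pair variable, or to the witness strategy remain non-profitable against the new non-uniform target probabilities. A secondary difficulty is keeping the reduction parsimonious on the additional equilibria — exactly one per satisfying assignment — so that the enumeration of the balanced mixtures in the proof of Theorem~\ref{mainextended} and the characterization of Theorems~\ref{frombasictosymmetric} and~\ref{fromsymmetrictobasic2} still yield a clean count whose parity is $\oplus{\mathsf{\phi}}$.
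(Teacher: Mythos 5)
There is no proof of this statement in the paper for you to be measured against: the paper states it explicitly as an open conjecture (Section~\ref{epilogue}), precisely because the only formula its machinery yields for the number of non-uniform Nash equilibria of ${\widetilde{{\mathsf{G}}}}$ is $2\cdot\#{\mathsf{\phi}}$ (the gadget--literal balanced mixtures in {\it Group I} of Theorem~\ref{mainextended}), whose parity is identically $0$. Your diagnosis of that obstruction is exactly right, and your conditional parity bookkeeping is sound: by Lemma~\ref{symmetry observation} all off-diagonal balanced mixtures come in swapped pairs of equal uniformity status, so only the diagonal mixtures ${\bm{\sigma}}({\mathsf{\gamma}})\ast{\bm{\sigma}}({\mathsf{\gamma}})$ can contribute an odd amount, and by Observation~\ref{facebook observation} these are non-uniform exactly when the literal equilibria of ${\mathsf{G}}$ are.

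The genuine gap is that the one step carrying all the technical weight — a redesigned win-lose reduction whose satisfying-assignment equilibria are \emph{non-uniform} while the analogues of Propositions~\ref{if unsatisfied} and~\ref{final lemma}, the positive utility property, and parsimony all survive — is only gestured at, and it is not a cosmetic modification. In the present reduction, uniformity of the literal equilibria is not a design choice one can simply undo: it is \emph{forced} by the equilibrium analysis (Lemma~\ref{same probability on pairs of literals} and Corollary~\ref{nash is assignment} derive the exact probability $\frac{1}{n}$ on each complementary pair from best-response comparisons against the pair variables and the cyclic literal payoffs), and the non-profitability of clause deviations in Proposition~\ref{final lemma} and the profitability of clause deviations in Proposition~\ref{if unsatisfied} both hinge on comparing $\frac{2}{n}$ against $\frac{3}{n}$ under exactly these probabilities. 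A ``witness'' strategy carrying a different probability must therefore be woven into all of these arguments simultaneously — it changes the common equilibrium utility, the value of every deviation to a literal, a pair variable, a clause, or a gadget strategy, and the uniqueness (parsimony) of the equilibrium induced by each satisfying assignment — and nothing in your sketch shows this can be done with $0/1$ utilities. Until that construction and its analogue of Section~\ref{winlose reduction} are carried out, the proposal is a plausible research plan, consistent with why the paper leaves the statement as Conjecture~\ref{parity open}, but not a proof.
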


\noindent
The following decision problem
inquires about the irrationality
of Nash equilibria:

\vspace{0.2cm}
\noindent
{\sf $\exists$ IRRATIONAL NASH}

\begin{tabular}{lp{13.7cm}l}
\hline
{\sc I.:} & A game ${\mathsf{G}}$.                                 \\
{\sc Q.:} & Does ${\mathsf{G}}$ have an irrational Nash equilibrium?  \\
\hline
\end{tabular}
\vspace{0.2cm}

\noindent
{\sf $\exists$ IRRATIONAL NASH} 
is a trivial problem for \textcolor{black}{{\it non-degenerate}} bimatrix games,\footnote{\textcolor{black}{For bimatrix games,
                                                                                                                                            the existence of an irrational Nash equilibrium
                                                                                                                                            is equivalent to the existence of infinitely many Nash equilibria.
                                                                                                                                            Non-degenerate bimatrix games
                                                                                                                                            have a finite number of Nash equilibria.
                                                                                                                                            Hence, they have no irrational Nash equilibrium.}}
while it is ${\mathcal{NP}}$-hard
for general 3-player games~\cite[Theorem 4]{BM11}.\footnote{The proof used an alternative approach,
\textcolor{black}{employing} a reduction from the decision problem
{\sf NASH-REDUCTION}~\cite[Section 2.3.2]{BM11},
which is not considered here.}

\begin{conjecture}
\label{nonuniform irrational}
{\sf $\exists$ IRRATIONAL NASH}
is ${\mathcal{NP}}$-hard
for win-lose 3-player games.
\end{conjecture}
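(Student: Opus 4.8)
The plan is to follow the now-familiar three-steps recipe of the paper, instantiated with a win-lose 3-player gadget game whose unique Nash equilibrium is irrational, namely the game ${\widehat{{\mathsf{G}}}}_{2}$ from Section~\ref{irrational game}. Note that ${\widehat{{\mathsf{G}}}}_{2}$ has the positive utility property and, by Proposition~\ref{gadget2}, has a single Nash equilibrium which is irrational; these are exactly the features the win-lose reduction consumes. First I would apply the win-lose reduction with parameter ${\widehat{{\mathsf{G}}}}_{2}$ to a given {\sf 3SAT} formula ${\mathsf{\phi}}$, producing the win-lose 3-player game ${\mathsf{G}} = {\mathsf{G}}({\widehat{{\mathsf{G}}}}_{2}, {\mathsf{\phi}})$. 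By Proposition~\ref{if unsatisfied}, when ${\mathsf{\phi}}$ is unsatisfiable, ${\mathcal{NE}}({\mathsf{G}}) = {\mathcal{NE}}({\widehat{{\mathsf{G}}}}_{2})$, so ${\mathsf{G}}$ has exactly one Nash equilibrium, which is irrational; in particular ${\mathsf{G}}$ \emph{does} have an irrational Nash equilibrium, but this is the wrong direction for the reduction as stated.

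This sign mismatch is the crux, and it forces a twist on the construction. The point is that ${\widehat{{\mathsf{G}}}}_{2}$ supplies an irrational equilibrium precisely in the unsatisfiable case, whereas Proposition~\ref{final lemma} supplies \emph{rational} additional equilibria in the satisfiable case (Corollary~\ref{nash is assignment}, Condition {\sf (C.2)}, makes the literal equilibria rational — indeed uniform). So the naive reduction gives ``${\mathsf{G}}$ has an irrational Nash equilibrium iff ${\mathsf{\phi}}$ is unsatisfiable'', which only yields co-${\mathcal{NP}}$-hardness. To repair this I would instead take the gadget game to be a \emph{rational} win-lose 3-player game with a unique (hence rational) Nash equilibrium — e.g.\ the 3-player analogue of ${\widehat{{\mathsf{G}}}}_{1}[1]$, or more precisely a trivial win-lose 3-player game with the positive utility property whose sole Nash equilibrium is rational — and then \emph{separately} arrange, in the style of {\it Step 3}'s Case {\sf (3)}, to inject an irrational equilibrium into the constructed game exactly when ${\mathsf{\phi}}$ is satisfiable. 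The mechanism for the injection is to embed ${\widehat{{\mathsf{G}}}}_{2}$ as a ``parallel'' subgame ${\mathsf{G}} \parallel {\widehat{{\mathsf{G}}}}_{2}$ — analogously to the embedding ${\widetilde{{\mathsf{G}}}} \parallel {\widehat{{\mathsf{G}}}}_{5}[k]$ used in {\it Group IV} of Theorem~\ref{mainextended} — with the utilities on the new strategies designed so that: (i) the irrational equilibrium of the embedded ${\widehat{{\mathsf{G}}}}_{2}$ survives in ${\mathsf{G}} \parallel {\widehat{{\mathsf{G}}}}_{2}$ if and only if there is an ``escape route'' that a player would otherwise take, which we gate on the presence of a literal equilibrium of ${\mathsf{G}}$ (i.e.\ on satisfiability); and (ii) conversely, when ${\mathsf{\phi}}$ is unsatisfiable, the players in the embedded copy can profitably deviate out to a cross strategy, destroying the irrational equilibrium, leaving only rational equilibria. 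This is exactly the ``destroy when unsatisfiable, preserve when satisfiable'' idea, applied with the roles adapted so that it is an \emph{irrational} equilibrium being preserved.

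The key steps, in order, are: (1) fix the rational-gadget game ${\widehat{{\mathsf{G}}}}_{0}$ (a trivial win-lose 3-player game, positive utility property, unique rational Nash equilibrium) and run the win-lose reduction to get ${\mathsf{G}} = {\mathsf{G}}({\widehat{{\mathsf{G}}}}_{0}, {\mathsf{\phi}})$, recording via Propositions~\ref{if unsatisfied} and~\ref{final lemma} that its extra equilibria are the rational literal equilibria, present iff ${\mathsf{\phi}}$ is satisfiable; (2) form the parallel game ${\mathsf{G}} \parallel {\widehat{{\mathsf{G}}}}_{2}$ with carefully chosen cross-utilities (mimicking Cases $(\mathsf{C.1})$–$(\mathsf{C.3})$ of the {\it Group IV} construction, but with ${\widehat{{\mathsf{G}}}}_{2}$ in the role of the embedded subgame); (3) prove the analogues of Claims~\ref{claim1} and~\ref{claim2} and of Lemma~\ref{group3.1}: when ${\mathsf{\phi}}$ is unsatisfiable, every Nash equilibrium of ${\mathsf{G}} \parallel {\widehat{{\mathsf{G}}}}_{2}$ is rational (the irrational one of the embedded copy gets destroyed because a player deviates to a cross strategy with no compensating gain), and when ${\mathsf{\phi}}$ is satisfiable, the embedded irrational equilibrium survives (because the literal equilibrium now provides the ``support'' that makes the cross-deviation non-profitable); (4) conclude that ${\mathsf{G}} \parallel {\widehat{{\mathsf{G}}}}_{2}$ has an irrational Nash equilibrium iff ${\mathsf{\phi}}$ is satisfiable, and that the whole construction is polynomial-time (the size of ${\widehat{{\mathsf{G}}}}_{2}$ is constant), yielding ${\mathcal{NP}}$-hardness; membership in ${\mathcal{NP}}$ for win-lose 3-player games follows from the fact that the unique Nash equilibrium of a win-lose game restricted to any fixed support system is computable in polynomial time, so the support system plus the certification that the resulting profile is irrational is a polynomial-size witness. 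The main obstacle I expect is step (3): getting the cross-utilities right so that the irrational equilibrium of the embedded ${\widehat{{\mathsf{G}}}}_{2}$ is destroyed precisely in the unsatisfiable case — the subtlety is that ${\widehat{{\mathsf{G}}}}_{2}$ is a 3-player game with a delicate equilibrium structure (see the long case analysis in the proof of Proposition~\ref{gadget2}), so the ``destruction'' argument must hold against all three players' possible deviations simultaneously, and the ``preservation'' argument must verify that the literal equilibrium of ${\mathsf{G}}$ genuinely blocks every such deviation. A secondary concern is verifying that no \emph{new} irrational equilibrium is accidentally created when ${\mathsf{\phi}}$ is unsatisfiable — this should follow from a projection argument like that in Claim~\ref{claim2}, combined with the fact that ${\mathcal{NE}}({\mathsf{G}}) = {\mathcal{NE}}({\widehat{{\mathsf{G}}}}_{0})$ consists of a single rational profile in that case.
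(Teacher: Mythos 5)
First, note that the statement you are proving is stated in the paper only as a conjecture (an open problem in Section~\ref{epilogue}); the paper offers no proof of it, so your proposal cannot be matched against one --- it has to stand on its own. It does not. The central device of your plan --- embedding ${\widehat{{\mathsf{G}}}}_{2}$ as a parallel subgame and arranging that its irrational equilibrium ``survives if and only if'' ${\mathsf{\phi}}$ is satisfiable, ``gated on the presence of a literal equilibrium of ${\mathsf{G}}$'' --- is not realizable in the form you describe. Whether the fixed profile in which all three players play the embedded copy's irrational equilibrium is a Nash equilibrium of ${\mathsf{G}} \parallel {\widehat{{\mathsf{G}}}}_{2}$ is a purely local condition: it depends only on that profile and on the (polynomial-time constructed) cross-utilities evaluated against it, never on which \emph{other} mixed profiles happen to be equilibria of the game. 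Satisfiability of ${\mathsf{\phi}}$ enters the constructed game only through the existence of equilibria whose supports involve the formula-dependent strategies (Propositions~\ref{if unsatisfied} and~\ref{final lemma}); it cannot switch on or off the equilibrium status of a profile supported entirely inside a formula-independent gadget block, unless the reduction itself already decided satisfiability when choosing the cross-utilities --- which it cannot do in polynomial time. This is exactly the asymmetry exploited in the {\it Group IV} construction you cite: there the embedded ${\widehat{{\mathsf{G}}}}_{5}[k]$ destroys the small-utility, formula-independent equilibria \emph{unconditionally}, and satisfiability is tracked only by the literal-supported equilibria; nothing in that construction conditions the survival of an embedded-gadget equilibrium on satisfiability, and your step (3) would need precisely that.

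Consequently, to prove the conjecture along the paper's lines you would need irrational equilibria that are themselves \emph{created} by satisfying assignments, i.e.\ supported (at least partly) on the literal strategies --- but the win-lose reduction only produces literal equilibria that are uniform and rational (Proposition~\ref{final lemma}, Corollary~\ref{nash is assignment}), and the only mixing device in the paper that manufactures new equilibria with different probability structure, the balanced mixture of the ${\mathsf{GHR}}$-symmetrization (which is how {\sf $\exists$ $\neg$ UNIFORM NASH} was handled), is specific to bimatrix games and in any case combines rational equilibria with rational weights, so it cannot produce irrationality. Two secondary issues: your preliminary remark that the naive reduction with ${\widehat{{\mathsf{G}}}}_{2}$ gives ``irrational equilibrium iff ${\mathsf{\phi}}$ unsatisfiable'' is also wrong --- by Lemma~\ref{f is PNE} (Condition {\sf (C.1)}) the irrational gadget equilibrium is a Nash equilibrium of ${\mathsf{G}}({\widehat{{\mathsf{G}}}}_{2},{\mathsf{\phi}})$ in \emph{both} cases, so that game is always a yes-instance; and your claimed ${\mathcal{NP}}$ membership argument (unique, polynomial-time computable equilibrium per support system) fails for 3-player games, where the equilibrium conditions on a fixed support are polynomial systems with possibly irrational, non-unique solutions --- which is why the paper's conjecture asks only for ${\mathcal{NP}}$-hardness and why the natural upper bound discussed there is $\exists {\mathbb{R}}$ rather than ${\mathcal{NP}}$.
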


\noindent
\textcolor{black}{Obtaining {\em tight} hardness results
for the complexity of 
the decision problems
about Nash equilibria 
(resp., symmetric Nash equilibria)
in 3-player win-lose games
(resp., symmetric 3-player win-lose games)
remains a tantalizing open problem.
In two very recent works~\cite{BM16,BM17}
extending~\cite{GMVY15},
the present authors
established tight $\exists {\mathbb{R}}$-hardness results
for this complexity
in} \textcolor{black}{general 3-player games
(resp., general symmetric 3-player games);
$\exists {\mathbb{R}}$
is the complexity class capturing the
{\it Existential Theory of the Reals}~\cite{SS15}.}

\begin{openproblem}
Determine whether or not
the $\exists {\mathbb{R}}$-complete
decision problems about Nash equilibria
(resp., symmetric Nash equilibria)
in 3-player games
(resp., symmetric 3-player games)
from~{\em \cite{BM16,BM17}}
remain $\exists {\mathbb{R}}$-hard
when the corresponding games
are restricted to win-lose.
\end{openproblem}

\begin{figure}[tp]
\begin{center}
\begin{small}
\begin{tabular}{||l|c|c|c|c||}
\hline
& \multicolumn{4}{c|}{{\sf $\cal NP$-hard for bimatrix games:}}                                                                                                                                                                           \\
\hline
{\sf Decision problem:}                                                       & {\bf G}        & {\bf S}                                                                   & {\bf WL}                                                  & {\bf S+WL}                                      \\
\hline
\hline
{\sf $\exists$ $k+1$ NASH} (with $k \geq 1$)                  & $\Leftarrow$  & \cite{CS08,GZ89} ($k=1$) \& $\Leftarrow$ & \cite{CS05} ($k=1$) \& $\Leftarrow$ &  (\ref{mainextended}), ${\widehat{{\mathsf{G}}}}_{1}[2]$
                                                                                                                                                                                                                                                                   \& ${\widehat{{\mathsf{G}}}}_{5}[k]$                                 \\
\hline
{\sf $\exists$ NASH WITH LARGE UTILITIES}                   & $\Leftarrow$  & \cite{CS08,GZ89} \& $\Leftarrow$               &  $\Leftarrow$                                  &  (\ref{mainextended}), ${\widehat{{\mathsf{G}}}}_{1}[h]$ \\
\hline
{\sf $\exists$ NASH WITH SMALL UTILITIES}                   & $\Leftarrow$  & $\Leftarrow$                                                 &  $\Leftarrow$                                  &  (\ref{mainextended}), ${\widehat{{\mathsf{G}}}}_{1}[1]$ \\
\hline
{\sf $\exists$ NASH WITH LARGE TOTAL UTILITY}           & $\Leftarrow$  & \cite{CS08} \& $\Leftarrow$                         &  $\Leftarrow$                                  &  (\ref{mainextended}), ${\widehat{{\mathsf{G}}}}_{1}[h]$ \\      
\hline
{\sf $\exists$ NASH WITH SMALL TOTAL UTILITY}           & $\Leftarrow$  & $\Leftarrow$                                                 &  $\Leftarrow$                                   &  (\ref{mainextended}), ${\widehat{{\mathsf{G}}}}_{1}[1]$ \\
\hline
{\sf $\exists$ NASH WITH LARGE SUPPORTS}                 & $\Leftarrow$  & \cite{CS08,GZ89} \& $\Leftarrow$                &  $\Leftarrow$                                  & (\ref{mainextended}), ${\widehat{{\mathsf{G}}}}_{1}[1]$ \\
\hline
{\sf $\exists$ NASH WITH SMALL SUPPORTS}                  & $\Leftarrow$  & \cite{GZ89} \& $\Leftarrow$                         &  $\Leftarrow$                                  & (\ref{mainextended}), ${\widehat{{\mathsf{G}}}}_{1}[h]$ \\
\hline
{\sf $\exists$ NASH WITH RESTRICTED SUPPORTS}        & $\Leftarrow$  & \cite{CS08,GZ89} \& $\Leftarrow$               &  $\Leftarrow$                                  & (\ref{mainextended}), ${\widehat{{\mathsf{G}}}}_{1}[1]$   \\
\hline
{\sf $\exists$ NASH WITH RESTRICTING SUPPORTS}      & $\Leftarrow$  & \cite{CS08,GZ89} \& $\Leftarrow$               &  $\Leftarrow$                                  & (\ref{mainextended}), ${\widehat{{\mathsf{G}}}}_{1}[1]$   \\
\hline
{\sf $\exists$ PARETO-OPTIMAL NASH}                            & $\Leftarrow$  & \cite{CS08}                                                   &  {\bf ?}                                            & {\bf ?}                                                                                       \\
\hline
{\sf $\exists$ $\neg$ PARETO-OPTIMAL NASH}                   & $\Leftarrow$   & $\Leftarrow$                                            &  $\Leftarrow$                                  & (\ref{mainextended}), ${\widehat{{\mathsf{G}}}}_{1}[2]$ 
                                                                                                                                                                                                                                                              \& ${\widehat{{\mathsf{G}}}}_{5}[k]$                                  \\
\hline
{\sf $\exists$ STRONGLY PARETO-OPTIMAL NASH}         & $\Leftarrow$  & \cite{CS08}                                                   & {\bf ?}                                              & {\bf ?}                                                                                      \\
\hline
{\sf $\exists$ $\neg$ STRONGLY PARETO-OPTIMAL NASH}& $\Leftarrow$ & $\Leftarrow$                                              &  $\Leftarrow$                                   &  (\ref{mainextended}), ${\widehat{{\mathsf{G}}}}_{1}[2]$
                                                                                                                                                                                                                                                              \&  ${\widehat{{\mathsf{G}}}}_{5}[k]$                                 \\
\hline
{\sf $\exists$ NASH WITH SMALL PROBABILITIES}           & $\Leftarrow$  & $\Leftarrow$                                               &  $\Leftarrow$                                   & (\ref{mainextended}), ${\widehat{{\mathsf{G}}}}_{1}[1]$  \\
\hline
{\sf $\exists$ UNIFORM NASH}                                         & $\Leftarrow$ & $\Leftarrow$                                                 & \cite{BDL08} \& $\Leftarrow$         & (\ref{mainextended}), ${\widehat{{\mathsf{G}}}}_{3}$       \\
\hline
\hline
{\sf $\exists$ $\neg$ UNIFORM NASH}                            & $\Leftarrow$      & $\Leftarrow$                                            &  $\Leftarrow$                                   & (\ref{mainextended}), ${\widehat{{\mathsf{G}}}}_{1}[1]$  \\
\hline
{\sf $\exists$ SYMMETRIC NASH}                                     & $\Leftarrow$      & $\bot$                                                      & (\ref{second last minute theorem}),
                                                                                                                                                                                                ${\widehat{{\mathsf{G}}}}_{4}$   & $\bot$                                                                                       \\
\hline
{\sf $\exists$ $\neg$ SYMMETRIC NASH }                        & $\Leftarrow$   & \cite{MVY15} \& $\Leftarrow$                    &  $\Leftarrow$                                  & (\ref{mainextended}), ${\widehat{{\mathsf{G}}}}_{1}[2]$ 
                                                                                                                                                                                                                                                            \&  ${\widehat{{\mathsf{G}}}}_{5}[k]$                                   \\
\hline
\hline
${\sf NASH\mbox{-}EQUIVALENCE(\widehat{\sf G})}$    & \cite{BM11}  & $\Leftarrow$                                                 &  $\Leftarrow$                                   & (\ref{maintheorem symmetric}), ${\widehat{{\mathsf{G}}}}$ \\
\hline
${\sf NASH\mbox{-}EQUIVALENCE}$                                & \cite{BM11}  & $\Downarrow$                                              &  $\Downarrow$                                & $\Downarrow$                                                                          \\
\hline
\hline
\end{tabular}
\end{small}
\end{center}
\begin{center}
\begin{small}
\begin{tabular}{||l|c|c||}
\hline
                                                        & \multicolumn{2}{c||}{{\sf $\cal NP$-hard for 3-player games}}                                \\
\hline
{\sf Decision problem:}                   & {\bf G}          & {\bf WL}                                                                                             \\
\hline
\hline
{\sf $\exists$ RATIONAL NASH}     & \cite{BM11}  & (\ref{last minute theorem}), ${\widehat{{\mathsf{G}}}}_{2}$       \\
\hline
{\sf $\exists$ IRRATIONAL NASH} & \cite{BM11}  & {\bf ?}                                                                                                \\
\hline
\hline
\end{tabular}
\caption{Summary of the ${\mathcal{NP}}$-hardness
              results for decision problems
              about Nash equilibria,
              and comparison to previous related work in~\cite{BM11,BDL08,CS05,CS08,GZ89,MVY15}.
              Columns correspond to classes of bimatrix games;
              specifically,
              the symbols {\bf G}, {\bf S}, {\bf WL}
              and {\bf S+WL} denote general, symmetric, win-lose 
              and simultaneously symmetric and win-lose bimatrix games,
              respectively. 
              The rightmost column gives the number
              of the corresponding theorem (in parenthesis),
              followed by the gadget used
              in the ${\mathcal{NP}}$-hardness proof.
              The symbol "{\bf ?}" 
              indicates 
              that the ${\mathcal{NP}}$-hardness
              of the decision problem
              remains open
              for the corresponding class. 
              The symbol $\bot$
              indicates
              that the decision problem is trivial
              for the corresponding class. 
              An arrow $\Leftarrow$ in some entry
              indicates that ${\mathcal{NP}}$-hardness
              follows directly
              from a corresponding ${\mathcal{NP}}$-hardness result
              for some entry in the same row and to its right.
              An arrow $\Downarrow$ in some entry
              indicates that ${\mathcal{NP}}$-hardness follows directly
              from the
              ${\mathcal{NP}}$-hardness result
              for the entry in the same column and just above it.
              The first table applies to bimatrix games,
              while the second table applies to 3-player games. 
              }
\label{table1}
\end{small}
\end{center}
\end{figure}

\subsubsection*{Acknowledgments:}
\textcolor{black}{We would like to thank Burkhard Monien
and Pino Persiano
for some helpful comments.}

\newpage



\end{document}